\newtheorem{theorem}{Theorem}[section]
\newtheorem{lemma}[theorem]{Lemma}
\newtheorem{claim}[theorem]{Claim}
\newtheorem{corollary}[theorem]{Corollary}
\newtheorem{observation}[theorem]{Observation}
\theoremstyle{definition}
\newtheorem{defin}[theorem]{Definition}
\def\cqedsymbol{\ifmmode$\lrcorner$\else{\unskip\nobreak\hfil
\penalty50\hskip1em\null\nobreak\hfil$\lrcorner$
\parfillskip=0pt\finalhyphendemerits=0\endgraf}\fi} 
\newcommand{\cqed}{\renewcommand{\qed}{\cqedsymbol}}
\newcommand{\executeiffilenewer}[3]{%
\ifnum\pdfstrcmp{\pdffilemoddate{#1}}%
{\pdffilemoddate{#2}}>0%
{\immediate\write18{#3}}\fi%
} 
\newcommand{%
\executeiffilenewer{.svg}{.pdf}%
{inkscape -z -D --file=.svg %
--export-pdf=.pdf --export-latex}%
{\input{.pdf_tex}}}[1]{%
\executeiffilenewer{#1.svg}{#1.pdf}%
{inkscape -z -D --file=#1.svg %
--export-pdf=#1.pdf --export-latex}%
{\input{#1.pdf_tex}}}%
\newcommand{\svg}[2]{\def\svgwidth{#1}%
\executeiffilenewer{#2.svg}{#2.pdf}%
{inkscape -z -D --file=#2.svg %
--export-pdf=#2.pdf --export-latex}%
{\input{#2.pdf_tex}}}
\DeclareMathAlphabet{\mathcal}{OMS}{cmsy}{m}{n}
\newcommand{\probshort}{\textsc{$k$-DPP}}
\newcommand{\group}{\Lambda}
\newcommand{\vnoose}{\vec{\nu}}
\newcommand{\snoose}{\vec{\mu}}
\newcommand{\alt}{\mathfrak{a}}
\newcommand{\Z}{\mathbb{Z}}
\newcommand{\Ff}{\mathcal{F}}
\newcommand{\Pp}{\mathcal{P}}
\newcommand{\Qq}{\mathcal{Q}}
\newcommand{\Rr}{\mathcal{R}}
\newcommand{\Aa}{\mathcal{A}}
\newcommand{\Cc}{\mathcal{C}}
\newcommand{\Ge}{\overline{G}}
\newcommand{\De}{\overline{D}}
\newcommand{\We}{\overline{W}}
\newcommand{\Pe}{\overline{P}}
\newcommand{\Ppe}{\overline{\mathcal{P}}}
\newcommand{\Gf}{\widehat{G}}
\newcommand{\Nf}{\widehat{N}}
\newcommand{\Ppf}{\widehat{\mathcal{P}}}
\newcommand{\Qqf}{\widehat{\mathcal{Q}}}
\newcommand{\Rrf}{\widehat{\mathcal{R}}}
\newcommand{\bwholes}{\mathfrak{p}}
\newcommand{\compgraph}{\ensuremath{G_{\textrm{comp}}}}
\newcommand{\bundlegraph}{\ensuremath{G_{\textrm{bundle}}}}
\newcommand{\barcs}{\ensuremath{\hat{E}}}
\newcommand{\bundleset}{\ensuremath{\mathbb{B}}}
\newcommand{\bundleword}{\mathtt{bw}}
\newcommand{\bundleprof}{\mathtt{bp}}
\newcommand{\refcurve}{\gamma_\mathrm{ref}}
\newcommand{\ringcl}{\mathrm{cl}}
\newcommand{\anycomp}{\ensuremath{H}}
\newcommand{\disccomp}{\ensuremath{\anycomp^{\textrm{disc}}}}
\newcommand{\ringcomp}{\ensuremath{\anycomp^{\textrm{ring}}}}
\newcommand{\decomp}{\ensuremath{\mathcal{D}}}
\newcommand{\incarcs}[1]{\ensuremath{\hat{E}(#1)}}
\newwrite\tempfile
\newcommand{\writeref}[1]{\immediate\write\tempfile{\unexpanded{#1}}}
\newcommand{\writerefe}[1]{\immediate\write\tempfile{\expandafter{#1}}}
\LetLtxMacro{\oldref}{\ref}
\LetLtxMacro{\oldsection}{\section}
\renewcommand{\ref}[1]{\oldref{#1}\writeref{\oldref{#1} (#1)}\writeref{}}
\title{The planar directed $k$-Vertex-Disjoint Paths problem is fixed-parameter tractable}
\author{
  Marek Cygan
  \thanks{
    Institute of Informatics, University of Warsaw, Poland,
      \texttt{cygan@mimuw.edu.pl}.
  }
  \and
  D\'aniel Marx
  \thanks{
    Computer and Automation Research Institute, Hungarian Academy of Sciences (MTA SZTAKI), Hungary,
      \texttt{dmarx@cs.bme.hu}.
  }
  \and
  Marcin Pilipczuk
  \thanks{
    Institute of Informatics, University of Warsaw, Poland,
      \texttt{malcin@mimuw.edu.pl}.
  }
  \and
  Micha\l{} Pilipczuk
  \thanks{
    Department of Informatics, University of Bergen, Norway, \texttt{michal.pilipczuk@ii.uib.no}.
  }
  }
\date{}
\begin{document}

\maketitle

\begin{abstract}
  Given a graph $G$ and $k$ pairs of vertices $(s_1,t_1)$, $\dots$,
  $(s_k,t_k)$, the $k$-Vertex-Disjoint Paths problem asks for pairwise
  vertex-disjoint paths $P_1$, $\dots$, $P_k$ such that $P_i$ goes
  from $s_i$ to $t_i$. Schrijver \cite{schrijver:xp} proved that the
  $k$-Vertex-Disjoint Paths problem on planar directed graphs can be
  solved in time $n^{O(k)}$. We give an algorithm with running time
  $2^{2^{O(k^2)}}\cdot n^{O(1)}$ for the problem, that is, we show the
  fixed-parameter tractability of the problem.
\end{abstract}
\clearpage
\tableofcontents
\clearpage
\renewcommand{\section}[1]{\oldsection{#1}%
\writeref{\bigskip}\writeref{}\writeref{\textbf{Section}}%
\writeref{\textbf}\writerefe{\thesection:}%
\writeref{\textbf{#1}\bigskip}%
\writeref{}}

\section{Introduction}

A classical problem of combinatorial optimization is finding disjoint paths with specified endpoints:

\begin{center}
\fbox{\parbox{0.7\linewidth}{
$k$-Vertex-Disjoint Paths Problem (\probshort)

\begin{tabularx}{\linewidth}{rX}
\textbf{Input:} &A graph $G$ and $k$ pairs of vertices $(s_1,t_1)$, $\dots$, $(s_k,t_k)$.\\
\textbf{Question:} & Do there exist $k$ pairwise vertex-disjoint paths $P_1$, $\dots$, $P_k$ such that $P_i$ goes from $s_i$ to $t_i$?
\end{tabularx}}}
\end{center}

We consider only the vertex-disjoint version of the problem in this
paper; disjoint means vertex disjoint if we do not specify otherwise.
If the number $k$ of paths is part of the input, then the problem is
NP-hard even on undirected planar graphs \cite{MR94m:05114}. However,
for every fixed $k$, Robertson and Seymour showed that there is a
cubic-time algorithm for the problem in general undirected graphs
\cite{MR97b:05088}. Their proof uses the structure theory of graphs
excluding a fixed minor and therefore extremely complicated. More
recently, a significantly simpler, but still very complex algorithm
was announced by Kawarabayashi and Wollan
\cite{DBLP:conf/stoc/KawarabayashiW10}.  Obtaining polynomial running
time for fixed $k$ is significantly simpler in the special case of
planar graphs \cite{DBLP:journals/jct/RobertsonS88}; see also the
self-contained presentations of Reed et
al.~\cite{DBLP:conf/gst/ReedRSS91} or Adler et al.~\cite{isolde}.

The problem  becomes dramatically harder for directed graphs: it is
NP-hard even for $k=2$ in general directed graphs
\cite{MR81e:68079}. Therefore, we cannot expect an analogue of the
undirected result of Robertson and Seymour \cite{MR97b:05088} saying
that the problem is polynomial-time solvable for fixed $k$. For
directed planar graphs, however, Schrijver gave an algorithm with
polynomial running time for fixed $k$:

\begin{theorem}[Schrijver \cite{schrijver:xp}]
The $k$-Vertex-Disjoint Paths Problem on directed planar graphs can be solved in time $n^{O(k)}$. 
\end{theorem}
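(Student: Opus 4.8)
The plan is to follow the classical route through the algebraic topology of the planar embedding, so that the combinatorial question becomes an enumeration over a polynomially-bounded (to the power $k$) set of \emph{homology classes}, each of which is then handled by a polynomial-time flow/matroid routine. First I would reduce the vertex-disjoint version to a statement about arc-disjoint directed cycles: after a standard local gadget replacement that preserves planarity, each vertex $v$ is split into an arc $v_{\mathrm{in}}\to v_{\mathrm{out}}$, so that ``at most one solution path through $v$'' becomes ``at most one solution cycle through the arc $v_{\mathrm{in}}v_{\mathrm{out}}$''. Adding, for each pair, a \emph{demand arc} $d_i$ from $t_i$ to $s_i$, the instance of \probshort\ becomes: is there a family $\Cc$ of pairwise arc-disjoint directed cycles in the resulting embedded digraph such that each $d_i$ lies on exactly one cycle of $\Cc$ (in particular, no single cycle may carry two demand arcs, which would correspond to a ``crossed'' routing)?

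Second, I would classify candidate solutions by a homological invariant. Regard the digraph as embedded on the sphere, fix a spanning tree $T$ of the dual, and view a family of arc-disjoint directed cycles as a $1$-chain that is a cycle of the corresponding chain complex. Such a chain is determined, up to a small efficiently-describable correction, by its class relative to $T$ and to the demand arcs — concretely by a bounded-range face potential together with winding data along the at most $k$ demand arcs. The key combinatorial lemma to isolate here is that over \emph{feasible} solutions only $n^{O(k)}$ distinct invariants occur: the ``interesting'' part of the invariant is captured by $O(k)$ integer coordinates, each ranging over $O(n)$ values since it is a winding/crossing count against a fixed curve in an $n$-vertex plane graph. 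Enumerating these $n^{O(k)}$ invariants is the outer loop of the algorithm.

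Third, for each fixed target invariant $\alpha$ I would reduce the existence of a feasible $\Cc$ realizing $\alpha$ to a polynomial-time-solvable problem. Fixing $\alpha$ pins down the potential jump across every arc, so the admissible $1$-chains form an affine translate of a lattice cut out by the conservation constraints, while the $0/1$ capacities (arc-disjointness and the ``one cycle per split-vertex arc'' condition) turn the search into a minimum-cost circulation (equivalently, a matroid intersection / box-TDI) instance on the planar digraph; planarity and the bounded range of the potential keep it of polynomial size. Running a standard flow or matroid-intersection algorithm and then checking that the returned chain decomposes into cycles correctly pairing $s_i$ with $t_i$ — which is precisely what realizing $\alpha$ guarantees, and which also lets one discard superfluous cycles — completes this step. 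The total running time is $n^{O(k)}$ (number of invariants) times a polynomial per invariant.

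The main obstacle is the second step: engineering the invariant so that \emph{simultaneously} (a) it is computable for feasible solutions and takes only $n^{O(k)}$ values on them; (b) it genuinely certifies the correct terminal pairing — i.e.\ forbids a cycle carrying two demand arcs — rather than merely an Eulerian/circulation condition that a crossed routing could also meet; and (c) within one class the feasible set is structured enough (a matroid intersection, or the integer points of a box-TDI polyhedron) to be searched in polynomial time. Choosing the right group in which to take homology, and proving that the resulting invariant is a \emph{complete} obstruction up to a polynomially-sized, efficiently-enumerable correction, is the technical heart of the argument.
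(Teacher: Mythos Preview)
Your high-level plan --- enumerate $n^{O(k)}$ homotopy classes of a hypothetical solution and test realizability of each in polynomial time --- is exactly Schrijver's strategy, and the paper summarizes it the same way. The outer loop and its size bound are fine.

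The gap is your third step. You propose to handle a fixed class by minimum-cost circulation, matroid intersection, or box-TDI arguments. All of these are abelian, and you have already identified why that is fatal: an abelian invariant cannot distinguish a correct routing from one in which two paths cross or a single cycle picks up two demand arcs --- the commutator $g_1g_2g_1^{-1}g_2^{-1}$ vanishes in any abelian group. Conservation constraints plus $0/1$ capacities return a circulation with the right \emph{boundary}, not the right \emph{homotopy}; decomposing it can yield the wrong pairing, and if the post-hoc check fails you have no way to continue searching within the class. Your closing paragraph correctly flags ``choosing the right group'' as the crux, but the concrete machinery you propose in step~3 has already committed to the wrong (commutative) one.

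Schrijver's actual inner step is the \emph{cohomology feasibility} algorithm over the \emph{free group} $\group$ on $k$ generators (Theorem~\ref{thm:coh-alg} in this paper): one labels the arcs of the extended dual by elements of $\group$ encoding the target homotopy class, and asks whether this labeling is cohomologous to one taking values in $\{1,g_1,\dots,g_k\}$ on each arc. The non-commutativity of $\group$ is exactly what enforces non-crossing and correct terminal pairing. That this problem is polynomial-time solvable is the technical heart of \cite{schrijver:xp}; it is a separate fixed-point/pre-order argument specific to free (partially commutative) groups, not a flow or matroid statement. Replace ``circulation/matroid intersection'' by ``free-group cohomology feasibility'' and your outline becomes Schrijver's proof.
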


The algorithm of Schrijver is based on enumerating all possible
homology types of the solution and checking in polynomial time whether
there is a solution for a fixed type. Therefore, the running time is
mainly dominated by the number $n^{O(k)}$ of homology types.  Our main
result is improving the running time by removing $k$ from the exponent
of $n$:

\begin{theorem}\label{th:main}
The $k$-Vertex-Disjoint Paths Problem on directed planar graphs can be solved in time $2^{2^{O(k^2)}}\cdot n^{O(1)}$.
\end{theorem}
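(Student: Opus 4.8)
The plan is to keep Schrijver's scheme --- classify a solution by its ``homology type'' and decide realizability of a fixed type in polynomial time --- but to arrange that only $f(k)$ types, rather than $n^{O(k)}$, ever need to be tried. I would achieve this by first reducing the instance, via an irrelevant-vertex argument, to one whose (underlying undirected) treewidth is bounded by a function of $k$, and then solving the bounded-treewidth instance by dynamic programming; the homology machinery of Schrijver is precisely what makes the irrelevant-vertex step go through in the directed setting. Concretely, I first recall the framework behind Schrijver's theorem: fix a planar embedding of $G$, add the demand pairs as virtual structure, and choose a bounded system of nooses (closed curves meeting $G$ only in vertices), including a reference curve $\refcurve$, so that the resulting cut surface separates the demand pairs in a controlled way. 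A solution $P_1,\dots,P_k$ is then described by its type: the pattern and multiplicities with which the paths cross the chosen nooses, and for a fixed type finding a realization reduces to a polynomially solvable flow/matching problem. So the whole task is to cut this count down to depend only on $k$, equivalently to shrink the instance.

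For the reduction I would use the planar grid-minor theorem: if the treewidth of $G$ exceeds a suitable $g(k)$, then $G$ contains a large flat wall, and since there are only $2k$ terminals this wall can be chosen disjoint from all of them and ``deep'' inside $G$. The key claim is that the centre vertex $v$ of such a wall is irrelevant, i.e.\ $(G,\mathcal T)$ and $(G-v,\mathcal T)$ have the same answer. The nontrivial direction is to take an arbitrary solution and reroute it off $v$: the subpaths of the solution that traverse the wall organise into a bounded number of parallel bundles of mutually homotopic arcs (homotopy taken in the punctured region); since the wall encloses no terminal, the portion of the solution inside it is null-homologous, and one can then push all bundles simultaneously away from the centre, using the grid structure of the wall and a laminar, ring-and-disc decomposition of its interior to carry out the rerouting while preserving simplicity, vertex-disjointness and --- crucially --- every arc orientation. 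Iterating, we reach an equivalent instance of bounded treewidth.

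On the bounded-treewidth instance I would run a dynamic program over a tree decomposition: a state at a bag records, for the boundedly many solution vertices it contains, which demand pair each belongs to, how they are matched up across the separator, and the local homology/bundle data needed because directedness forbids the usual ``just reverse a subpath'' manipulations. Composing the bound on the treewidth of the reduced instance with the cost of this dynamic program yields running time $2^{2^{O(k^2)}}\cdot n^{O(1)}$, as claimed.

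The main obstacle is the irrelevant-vertex lemma in the directed world. In undirected planar disjoint paths one reroutes around an irrelevant vertex by flipping and sliding subpaths along the wall; here every arc is oriented, so the rerouting must be performed entirely inside the homology framework: one must show that a null-homologous directed circulation supported on the wall can be added to a homologous modification of the solution so as to move it off the centre, that this can be done for all $k$ bundles at once, and that a sufficiently large flat wall actually leaves room to do so. Pinning down how large the wall must be in terms of $k$, and how that bound propagates through the grid-minor and dynamic-programming steps, is exactly what produces the double-exponential dependence on $k$, and is where I expect essentially all of the technical difficulty to lie.
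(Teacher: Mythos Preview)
Your plan has a genuine gap at the irrelevant-vertex step, and the paper's actual proof diverges from yours precisely there. You propose to obtain a large \emph{undirected} wall and argue that its centre is irrelevant via a ``null-homologous directed circulation'' rerouting. But the undirected rerouting of Adler et al.\ uses chords in whichever direction is convenient; in a directed graph every chord has a fixed orientation, and a large undirected wall gives you no control over those orientations. Concretely, a planar DAG (say an $n\times n$ grid with all arcs going right or down) has undirected treewidth $\Theta(n)$ yet contains no directed cycle at all, and the directed cylindrical grid of Johnson--Robertson--Seymour--Thomas has all its concentric cycles oriented the same way. In neither case do you get the two-sided ``push'' your circulation argument would need, and the paper explicitly states that it does not know how to prove irrelevancy inside such a one-directional directed grid. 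The paper's irrelevant-vertex theorem instead requires $2^{O(k^2)}$ concentric cycles with \emph{alternating} orientation; the alternation is exactly what guarantees that at every depth there is a chord going each way, so the Adler-et-al.\ shortcut argument can be salvaged.

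More importantly, once you adopt the correct (alternating-orientation) hypothesis, its failure does \emph{not} bound the undirected treewidth, so your second step collapses. The paper's decomposition theorem shows that when no such alternating system exists one can cut $G$ along curves of bounded alternation into $O(k^2)$ \emph{disc} and \emph{ring} components; the ring components may contain arbitrarily many concentric cycles and hence have unbounded undirected treewidth. The paper therefore never runs a tree-decomposition DP. Instead it describes each solution path by a bounded-length \emph{bundle word} over the $2^{O(k^2)}$ bundles between components, together with a winding number for each ring passage; it proves rerouting lemmas (via Ding--Schrijver--Seymour on the torus) showing that only $f(k)$ winding numbers need be tried, and finally feeds each candidate homology type to Schrijver's cohomology-feasibility algorithm. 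So the overall architecture---``reduce homotopy types to $f(k)$ many, then call Schrijver''---matches your first sentence, but the reduction is achieved by the disc/ring decomposition and the bundle-word/winding-number enumeration, not by bounded treewidth.
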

In other words, we show that the $k$-Disjoint Paths Problem is
fixed-parameter tractable on directed planar graphs. The
fixed-parameter tractability of this problem was asked as an open
question by Bodlaender, Fellows, and Hallett \cite{DBLP:conf/stoc/BodlaenderFH94} already in 1994, in one
of the earliest papers on parameterized complexity. The question was
reiterated in the open problem list of the classical monograph of
Downey and Fellows \cite{MR2001b:68042} in 1999. Note that, for
undirected planar graphs, the algorithm with best dependence on $k$ is
due to Adler et al.~\cite{isolde} and has running time
$2^{2^{O(k)}}\cdot n^{O(1)}$. Therefore, for the more general directed
version of the problem, we cannot expect at this point a running time
with better than double-exponential dependence on $k$.

For general undirected graphs, the algorithm of Robertson and
Seymour~\cite{MR97b:05088} relies heavily on the structure theory of
graphs excluding a fixed minor; in fact, this algorithm is one of the
core achievements of the Graph Minors series.  More recent results on
finding subdivisions \cite{grohe-stoc2011-topminor} or
parity-constrained disjoint paths
\cite{DBLP:conf/focs/KawarabayashiRW11} also build on this framework.
Even in the much simpler planar case, the algorithm presented by Adler
et al.~\cite{isolde} uses the concepts and tools developed in the
study of excluded minors. In a nutshell, their algorithm has three
main components. First, if treewidth (a measure that plays a crucial
role in graph structure theory) is bounded, then standard algorithmic
techniques can be used to solve the $k$-Vertex-Disjoint Paths
Problem. Second, if treewidth is large, then (planar version of) the
Excluded Grid Theorem \cite{DBLP:journals/jct/DiestelJGT99,DBLP:journals/jct/RobertsonST94,DBLP:journals/jct/RobertsonS86,DBLP:journals/algorithmica/GuT12} implies that the graph contains a subdivision of
a large wall, which further implies that there is a vertex enclosed by
a large number of disjoint concentric cycles, none of them enclosing
any terminals. Finally, Adler~\cite{isolde} et al.~show that such a
vertex is irrelevant, in the sense that it can be removed without
changing the answer to the problem. Thus by iteratively removing such
irrelevant vertices, one eventually arrives to a graph of bounded
treewidth.

Can we apply a similar deep and powerful theory in the directed version
of the problem? There is a notion of directed
treewidth~\cite{dirgrids} and an excluded grid theorem holds at least
for planar graphs \cite{dirgrids-planar} (and more generally, for
directed graphs whose underlying undirected graph excludes a fixed
minor \cite{dirgrids-excluded-minors}). However, the other two
algorithmic components are missing: it is not known how to solve the
$k$-Vertex Disjoint Paths problem in $f(k)\cdot n^{O(1)}$ time on directed
graphs having bounded directed treewidth and the directed grids
excluded by these theorems do not seem to be suitable for excluding
irrelevant vertices. There are other notions that try to generalize
treewidth to directed graphs, but the algorithmic applications are
typically quite limited \cite{DBLP:journals/jct/BerwangerDHKO12,DBLP:conf/iwpec/GanianHKMORS10,DBLP:conf/iwpec/GanianHKLOR09,DBLP:conf/wg/MeisterTV07,DBLP:conf/mfcs/Safari05,DBLP:journals/tcs/KreutzerO11}. In
particular, the $k$-Vertex-Disjoint Paths Problem is known to be
W[1]-hard on directed acyclic graphs
\cite{DBLP:journals/siamdm/Slivkins10}, which is strong evidence that
any directed graph measure that is small on acyclic graphs is not
likely to be of help.

Our algorithm does not use any tool from the structure theory of
undirected graphs, or any notion of treewidth for directed graphs.
The only previous results that we use are the results of Ding,
Schrijver, and Seymour
\cite{dss:green-line,DBLP:journals/siamdm/DingSS92} on various special
cases of the directed disjoint paths problem, the cohomology
feasibility algorithm of Schrijver~\cite{schrijver:xp}, and a
self-contained combinatorial argument from Adler et al.~\cite{isolde}.
Therefore, we have to develop our own tools and in particular a new
type of decomposition suitable for the problem. A concept that appears
over and over again in this paper is the notion of alternation: we are
dealing with sequences of paths and cycles having alternating
orientation (i.e., each one has an orientation that is the opposite of
the next one), we measure the ``width'' of a sequence of arcs by the
number of alternations in the sequence, and we measure ``distance''
between faces by the minimum alternation on any sequence of arcs
between them. Section~\ref{sec:overview} gives a high-level overview
of the algorithm. Let us highlight here the most important steps and
the main ideas:
\begin{itemize}
\item  \textbf{Irrelevant vertices.} Analogously to Adler et
  al.~\cite{isolde}, we prove that a vertex enclosed by a large set of
  concentric cycles having {\em alternating orientation} and not
  enclosing any terminals is irrelevant. As expected, the proof is
  more complicated and technical than in the undirected case
  (Section~\ref{sec:irr}).
\item \textbf{Duality of alternation.} We show that alternation has
  properties that are similar to the classical properties of undirected
  planar graphs (Section~\ref{sec:spiral}).  We prove an approximate duality between
  alternating paths and the minimum alternation size of a cut
  (reminiscent of max-flow min-cut duality), and between concentric cycles and
  alternation distance (reminiscent of the fact that two faces far
  away in a planar graph are separated by many disjoint cycles).
\item \textbf{Decomposition.} We present a novel kind of decomposition
  into ``disc'' and ``ring'' components (Section~\ref{sec:decomp}). The
  crucial property of the decomposition is that the set of arcs
  leaving a component has bounded alternation. That is, the components
  are connected by a bounded number of {\em bundles,} each containing
  a set of ``parallel'' arcs with the same orientation.
\item \textbf{Handling ring components.} Ring components pose a
  particular challenge: we have to understand how many turns a path
  of the solution does when connecting the inside and the outside. We
  prove a rerouting argument showing that only a bounded number of
  possibilities has to be taken into account for the winding numbers of
  these paths.
\item \textbf{Guessing bundle words.} Given a decomposition, a path of
  the solution can be described by a word consisting of a sequence of
  symbols representing the bundles visited by path, in the order they
  appear in the path. Note that a bundle can be used several times by
  a path of the solution, thus the word can be very long. Our goal is
  to enumerate a bounded number of possible bundle words for each path
  of the solution. These words, together with our understanding of
  what is going on inside the rings, allow us to guess the homology
  type of the solution, and then invoke Schrijver's cohomology
  feasibility algorithm to check if there is a solution with this
  homology type.
\end{itemize}
In the next section we present an informal overview of the algorithm;
all formal arguments follow in the rest of the paper.

The techniques introduced in this paper were developed specifically
with the $k$-Vertex-Disjoint Paths Problem in mind. It is likely that
some of the duality arguments or decomposition techniques can have
applications for other problems involving planar directed graphs.

In general directed graphs, vertex-disjoint and edge-disjoint versions
of the disjoint paths problems are equivalent: one can reduce the
problems to each other by simple local transformations (e.g.,
splitting a vertex into an in-vertex and an out-vertex). However, such
local transformations do not preserve planarity. Therefore, our result
has no implications for the edge-disjoint version of the problem on
planar directed graphs. Let us note that in planar graphs the
edge-disjoint version seems very different from the vertex-disjoint
version: as the paths can cross at vertices, the solution does not
have a topological structure of the type that is exploited by both
Schrijver's algorithm \cite{schrijver:xp} and our algorithm. The
complexity of the planar edge-disjoint version for fixed $k$ remains
an open problem; it is possible that, similarly to general graphs
\cite{MR81e:68079}, it is NP-hard even for $k=2$.

One can define a variant of the planar edge-disjoint problem where
crossings are not allowed. That is, in the noncrossing edge-disjoint
version paths are allowed to share vertices, but if edge $e_1$
entering $v$ is followed by $e_2$, and edge $f_1$ entering $v$ is
followed by $f_2$, then the cyclic order of these edges cannot be
$(e_1,f_1,e_2,f_2)$ or $(e_1,f_2,e_2,f_1)$ around $v$. It is easy to
see that this version can be reduced (in a planarity-preserving way)
to the vertex disjoint version by replacing each vertex by a large
bidirected grid. Therefore, our algorithm can solve the noncrossing
edge-disjoint version of the $k$-Disjoint Paths Problem as well.

%%% Local Variables: 
%%% mode: latex
%%% TeX-master: "dirplanarkpath"
%%% End: 

\section{Overview of the algorithm}\label{sec:overview}

The goal of this section is to give an informal overview of our main result --- the fixed-parameter
algorithm for finding $k$ disjoint paths in directed planar graphs.

\subsection{Irrelevant vertex rule}\label{ss:view:irr}

Let us first recall how to solve the $k$-disjoint paths problem in the undirected (even non-planar) case.
The algorithm of Robertson and Seymour \cite{MR97b:05088} considers two cases. If the treewidth of
the input graph $G$ is bounded by a function of the parameter ($k$, the number of terminal pairs),
then the problem can be solved by a standard dynamic programming techniques on a tree decomposition of small width
of $G$. Otherwise, by the Excluded Grid Theorem \cite{DBLP:journals/jct/DiestelJGT99,DBLP:journals/jct/RobertsonST94,DBLP:journals/jct/RobertsonS86,DBLP:journals/algorithmica/GuT12}, $G$ contains a large grid as a minor. 

The idea now is to distinguish a vertex $v$ of $G$, whose deletion does not change the answer of the problem;
that is, there exist the required $k$ disjoint paths in $G$ if and only if they exist in $G \setminus v$.
Note that the disjoint paths problem can become only harder if we delete a vertex; thus, to pronounce $v$ irrelevant,
one needs to prove that any solution using the vertex $v$ can be redirected to a similar one, omitting $v$.

In the case of planar graphs one may apply the following quite intuitive reasoning. Assume that $G$ contains a large grid as a minor;
as there are at most $2k$ terminals, a large part of this grid does not enclose any terminal.
% (note that a large grid can be embedded in the plane only in one way, up to the choice which face is the infinite face).
In such a part, a vertex $v$ hidden deep inside the grid seems irrelevant: any solution using $v$ needs to traverse a large part of the
grid to actually contain $v$, and it should be possible to ``shift'' the paths a little bit to omit $v$.
This reasoning can be made formal, and Adler et al. \cite{isolde} proved that, in undirected planar graphs,
the middle vertex of a grid of exponential (in $k$) size is irrelevant. In fact, they show a bit stronger statement:
if we have sufficiently many (around $2^k$) concentric cycles on the plane, such that the outermost cycle does not enclose any terminal,
then any vertex on the innermost cycle is irrelevant.

One of the main argument in the proof of Adler et al. \cite{isolde} is as follows. Assume that there are many pairwise disjoint 
segments of the solution that cross sufficiently many orthogonal paths (henceforth called {\em{chords}}) in the graph; see Figure \ref{fig:view:isolde}.
Assume moreover that the aforementioned segments are the only parts of the solution that appear in the area enclosed by the outermost segments and chords (i.e., in the part of the plane depicted on Figure \ref{fig:view:isolde}). Then, if the number of segments is more than $2^k$, one can
redirect some of them, using the chords, and shortening the solution. Thus, in a minimal (in some carefully chosen sense) solution, a set
of more than $2^k$ paths cannot go together for a longer period of time.  

\begin{figure}
\begin{center}
\includegraphics{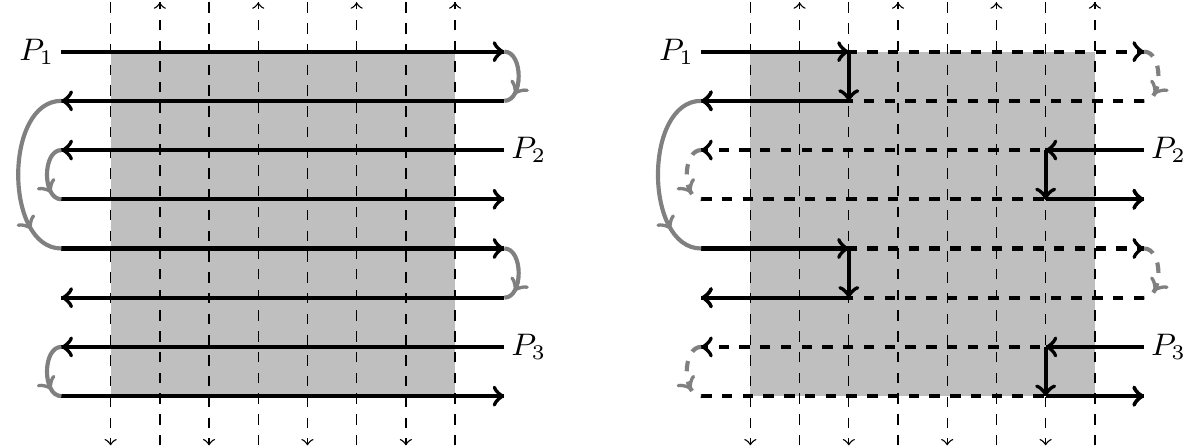}
\caption{A situation where a shortcut can be made and how it can be made. There are more than $2^k$ horizontal segments of the paths,
  crossed by sufficiently many vertical chords, that, in the directed setting, are required to be of alternating orientation.
  Moreover, it is assumed that no other path of any path intersects the gray area, so that the paths remain pairwise disjoint after rerouting.}
  \label{fig:view:isolde}
  \end{center}
  \end{figure}

  The argument of Adler et al.~\cite{isolde} described in the previous
  paragraph redirects the paths of the solution using the chords in an
  undirected way, and hence the direction in which a chord is used is
  unpredictable, depending on the order in the which the segments appear on the
  paths of the solution.  Hence, if we want to transfer this argument
  to the directed setting, then we need to make some assumption on the
  direction of the chords. It turns out that what we need is that the
  chords are directed paths with alternating orientation. This ensures
  that we always have a chord going in the right direction at any
  place we would possibly need it.

  If a set of paths intersect the innermost cycle, then they need to
  traverse all cycles. Adler et~al.~\cite{isolde} show how to find a
  subset of these paths and how to cut out chords from the cycles in a
  way that satisfies the conditions of the rerouting argument.  In the
  directed setting, in order to obtain chords of alternating
  orientation, we need to assume that the cycles have alternating
  orientation too.
\begin{defin}
We say that cycles $C_1$, $\dots$, $C_d$ form a {\em sequence of concentric cycles with alternating orientation} in a plane graph $G$ if
\begin{enumerate}
\item they are pairwise vertex disjoint,
\item for every $1\le i <d$, cycle $C_i$ encloses $C_{i+1}$, and
\item for every $1\le i <d$, exactly one of the cycles $C_i$ and $C_{i+1}$ is oriented clockwise.
\end{enumerate}
\end{defin}
Luckily, it turns out that such a sequence of cycles is sufficient for
the irrelevant vertex rule. In Section \ref{sec:irr} we prove the
following.
\begin{theorem}[Irrelevant vertex rule]\label{thm:view:irr}
For any integer $k$, there exists $d = d(k) = 2^{O(k^2)}$ such that the following holds.
Let $G$ be an instance of \probshort{} and let $C_1, C_2, C_3,\ldots,C_d$ bet a sequence
of concentric cycles in $G$ with alternating orientation, where $C_1$ is the outermost cycle.
Assume moreover that $C_1$ does not enclose any terminal. Then any vertex of $C_d$ is irrelevant.
\end{theorem}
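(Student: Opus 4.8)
The plan is to argue exactly in the spirit of the undirected irrelevant‑vertex rule of Adler et al.~\cite{isolde}, with a minimality argument driving a \emph{directed} rerouting that is made legal by the alternating orientations of the $C_i$. Assume $G$ has a solution (otherwise $v$ is trivially irrelevant, since deleting a vertex cannot create one), and fix a solution $\mathcal{P}=(P_1,\dots,P_k)$ minimizing a potential $\Phi(\mathcal{P})$; a natural choice is the lexicographic pair consisting of (i) the number of edges of $\mathcal{P}$ lying strictly inside the disc $\Delta_1$ bounded by $C_1$ and, as a tie‑breaker, (ii) the total number of crossings of $\mathcal{P}$ with $\bigcup_i C_i$. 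We claim such a $\Phi$‑minimal $\mathcal{P}$ cannot use $v$; as $\mathcal{P}$ is then a solution of $G-v$, this proves the theorem. Since all terminals lie outside $C_1$, the intersection of each $P_j$ with $\Delta_1$ splits into \emph{excursions}, each with both endpoints on $C_1$; a path using $v\in C_d$ has an excursion that, by concentricity (a curve from $C_1$ into $\Delta_{d-1}$ must cross every intermediate $C_i$), meets all of $C_2,\dots,C_{d-1}$ and touches $C_d$.

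First I would show that a $\Phi$‑minimal solution interacts with the cycles in a bounded way \emph{per level}: on each $C_i$ it has only $O(k)$ crossings, so its behaviour around $C_i$ is captured by a bounded‑size \emph{profile}, namely the cyclic sequence recording, for each crossing, which path it belongs to and whether that path is heading inward or outward. Indeed, if some $C_i$ carried many crossings one could shortcut along an arc of $C_i$ to reduce $\Phi$; this is precisely the situation of Figure~\ref{fig:view:isolde} with $C_i$ supplying the chords, and the alternating orientation of the cycles adjacent to $C_i$ is what guarantees that the connecting arc can be taken in the direction the reroute requires, so that the shortcut is a genuine directed reroute producing directed paths.

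Next comes the pigeonhole that forces the bound $d=2^{O(k^2)}$. There are only $2^{O(k\log k)}$ possible profiles, so among $d$ cycles many carry the same profile; exploiting the alternating orientations one must in addition insist that a long block of \emph{consecutive} cycles have profiles that are mutually \emph{compatible for splicing} in the directed sense, and it is this strengthening—morally a $2^{O(k)}$‑pigeonhole iterated $O(k)$ times, once per path (equivalently, per layer of the rerouting)—that consumes $2^{O(k^2)}$ cycles. The outcome is a sub‑block $C_a,C_{a+1},\dots,C_b$ with $b-a$ still exponentially large in $k$, such that inside the annulus between $C_a$ and $C_b$ the solution consists only of ``straight'' arcs, each crossing every intermediate cycle exactly once, each labelled by one of the $k$ paths, and with no path turning around or terminating there. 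Cutting $C_a,\dots,C_b$ at the crossing points of these straight arcs produces a grid‑like pattern of chords; because consecutive $C_i$'s are oriented oppositely, these chords come in alternating orientations. Now invoke the directed rerouting argument of Section~\ref{sec:irr} (the formal content behind Figure~\ref{fig:view:isolde}): once $b-a$ exceeds the appropriate $2^{O(k)}$ threshold, the straight arcs can be redirected along the alternating chords so as to ``contract'' the clean annulus, yielding a solution that still joins all $(s_i,t_i)$, is still vertex‑disjoint (nothing else lives in the clean annulus), but crosses strictly fewer cycles and in particular has strictly smaller $\Phi$ — contradicting minimality. Hence $\mathcal{P}$ does not use $v$.

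The main obstacle is the last step in the directed setting: after splicing solution arcs with pieces of the $C_i$'s one must certify that every resulting closed walk is a simple directed path and that the whole family stays pairwise disjoint, and this is exactly where ``alternating orientation'' is indispensable — at each place where a reroute needs a connector of a prescribed direction, alternation guarantees a suitably oriented cycle arc is available. Making this quantitatively tight, and choosing $\Phi$ so that \emph{every} admissible directed reroute provably decreases it (rerouting can locally complicate the picture, so robustness of the measure must be argued), is the technical heart of Section~\ref{sec:irr}. A secondary difficulty is setting up the profiles so that ``matching profiles'' genuinely license a directed splice, which is the reason the argument cannot be run with fewer than the stated $2^{O(k^2)}$ concentric cycles.
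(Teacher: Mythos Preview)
Your proposal has a genuine gap at its first substantive step. The claim that in a $\Phi$-minimal solution each cycle $C_i$ carries only $O(k)$ crossings is unjustified and, as stated, false: a single path $P_j$ can cross a single cycle $C_i$ an unbounded number of times even in a minimal solution, and your shortcut argument does not apply. You invoke Figure~\ref{fig:view:isolde}, but that rerouting lemma needs more than $2^k$ \emph{parallel segments} crossed by many orthogonal chords of alternating orientation, with nothing else in the enclosed region; many crossings of one path with one cycle do not by themselves produce such a configuration, and ``shortcut along an arc of $C_i$'' is blocked both by the fixed orientation of $C_i$ and by the possible presence of other paths on that arc. Without the $O(k)$-crossings bound, your profiles are unbounded objects and the pigeonhole collapses; the subsequent ``clean annulus of straight arcs'' is therefore never reached.

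The paper's proof avoids this obstacle by taking a genuinely different route. It does \emph{not} bound crossings per cycle. Instead it first reduces (via a minimal choice and contraction of arcs shared between the solution and the cycles) to a \emph{unique} solution with the cycles \emph{free}, and then introduces the notion of a $d$-\emph{bend of type} $t$: a subpath of some $P_b$ together with $d$ alternating chords, where $t$ counts how many of the $k$ paths enter the bend's region. Lemma~\ref{lem:bendbound} shows, by induction on $t$, that a unique solution admits no $f(k,t)$-bend with $f(k,t)=2^{O(kt)}$; the theorem follows since a path reaching $C_d$ yields a bend of type at most $k$ (Lemma~\ref{lem:concentricbend}). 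The inductive step is where the real work lies: one analyses the dual tree of the chord structure (the $P_b$-tree of Definition~\ref{def:bridgetree}) to locate, inside a large bend on $P_b$, either a sub-bend of strictly smaller type (so the induction hypothesis fires) or a sequence of more than $2^k$ nested segments traversing many alternating chords with nothing between them, whence Lemma~\ref{lem:nestedrerouting} (the Adler et al.\ rerouting combined with the Ding--Schrijver--Seymour disc criterion, Theorem~\ref{th:discrerouting}) gives a different solution, contradicting uniqueness. The bound $2^{O(k^2)}$ arises as $f(k,k)$ from this recursion in $t$, not from an iterated profile pigeonhole.
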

At the heart of the proof of Theorem \ref{thm:view:irr} lies the
rerouting argument described above, which states that a solution can
be rerouted and shortened if a set of more than $2^k$ paths travel
together through sufficiently many (exponential in $k$) chords cut out
from the alternating cycles $C_i$. However, it is much harder to prove
the existence of these paths and chords needed for the rerouting
argument than in the undirected case, and we now sketch how it could
be done.

\begin{figure}
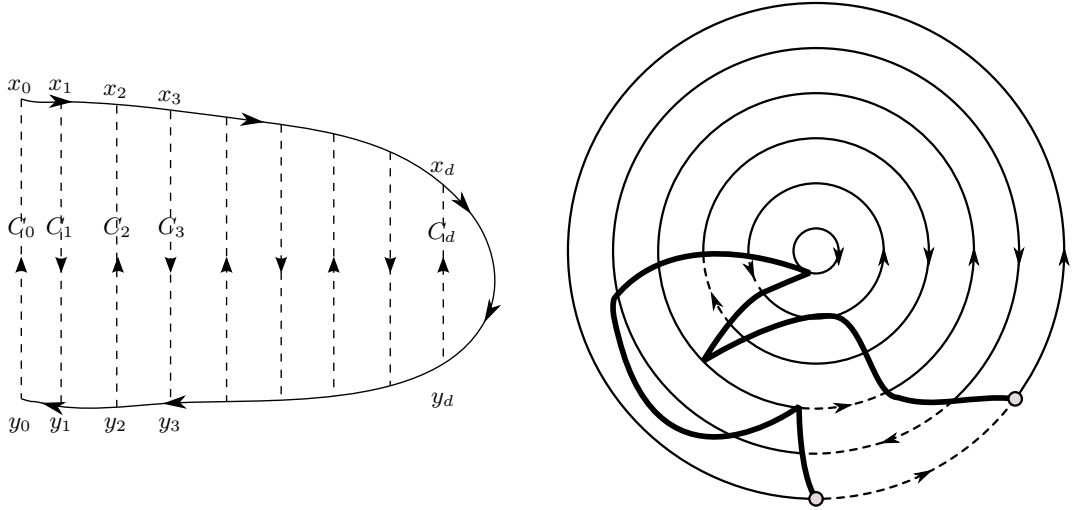

\centering

\begin{subfigure}{.45\textwidth}
{\footnotesize 
  \centering
  \svg{.9\linewidth}{bend}
}
\end{subfigure}
\begin{subfigure}{.45\textwidth}
{\footnotesize 
  \centering
  \svg{.9\linewidth}{concentricbend2}
}
\end{subfigure}
\caption{A $d$-bend $B$ with chords $C_0, C_1, \ldots, C_d$, and how it can be cut out from concentric cycles, using parts of the cycles as chords.}\label{fig:view:bend}
\end{figure}

Consider the situation assumed in Theorem \ref{thm:view:irr} and
assume we have a solution where one path, say $P$, intersects the
innermost cycle. On one side of $P$ we obtain a structure we call a
{\em{bend}}, depicted on Figure \ref{fig:view:bend}.  The parts of the
cycles are called {\em{chords}}, a bend with $d$ chords is a
$d$-{\em{bend}}.  Moreover, {\em{the type of the bend}} is the number
of different paths from the solution that intersect the interior or
the boundary of the bend; our initial bend is of type at most $k$. Our
main technical claim in the proof of Theorem \ref{thm:view:irr} is
that in a (somehow defined) minimal solution there do not exist
$d$-bends of type $t$, for $d > f(k,t)$ and some function $f(k,t) =
2^{O(kt)}$, that do not enclose any terminals.

\begin{figure}
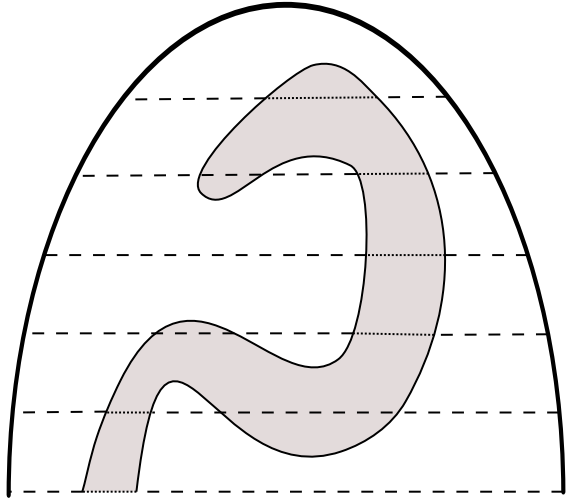

{\footnotesize
\begin{center}
\svg{0.45\linewidth}{bendinbend3}
\end{center}}
\caption{Part of a path creates a bend inside another bend.}\label{fig:view:bend-in-bend}
\end{figure}

Assume we have a $d$-bend $B$ of type $t$, for some large $d$,
enclosed by a part of a path $P_{i}$ of the solution. We analyze the
{\em segments} of the solution: the maximal subpaths of the paths
$P_1$, $\dots$, $P_k$ in the interior of the bend. If the interval
vertices of the last two chords are not intersected by any segment,
then one of these two chords has the right orientation to serve as a
shortcut for the path $P_i$, contradicting the minimality of the
solution. Therefore, we can assume that all but the last two chords
are intersected by segments.  If any segment of $P_{i'}$ intersects
the $j$-th chord of $B$, then it itself induces a $j'$-bend $B'$
inside $B$, for some $j' = j-O(1)$ (see Figure
\ref{fig:view:bend-in-bend}).  Hence, if the path $P_i$ itself does
not intersect the interior of the $d$-bend $B$, any bend inside $B$ is
of strictly smaller type, and the claim is proven by induction on $t$.

Otherwise, we can argue that several segments of $P_i$ enter the
interior of the $d$-bend $B$. Our goal is to prove that there is a
large set of segments of $P_i$ entering $B$ that form a nested
sequence and they travel together through a large number of chords
deep inside the bend, with no other segment of $P_i$ between
them. Then we can argue that any other segment of some $P_{i'}$ with
$i'\neq i$ intersecting these chords is also nested with these
segments, otherwise they would create a large bend of strictly smaller
type, and induction could be applied. Therefore, we get a large set of
paths travelling together through a large number of chords, and the
rerouting argument described above can be invoked.

To find this nested sequence of segments of $P_i$, we analyze how
$P_i$ intersects the chords of $B$.  We construct the following
auxiliary graph $H$: start with a subgraph of $G$ consisting of $P$
and the chords of $B$ and suppress all vertices of degree $2$. Let
$H^*$ be the dual of $H$ and $T^*$ be subgraph of $H^*$ consisting
only of chord arcs.  It is not hard to see that $T^*$ is a tree, with
at least $d-O(1)$ vertices; see Figure \ref{fig:view:dual} for an
illustration.

  \begin{figure}
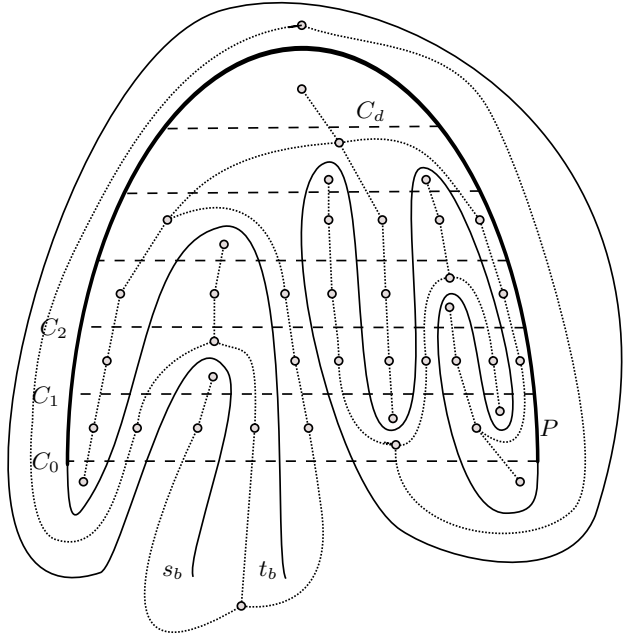

{\footnotesize 
\begin{center}
\svg{0.5\linewidth}{dual}
\end{center}
}
\caption{A $d$-bend with chords $C_0, C_1, \ldots, C_d$
appearing on a path $P_b$. The dotted lines show the edges of the tree $T^*$.}\label{fig:view:dual}
\end{figure}

Roughly speaking, if the segments of $P_i$ reaching deep inside the
bend are not nested, then we can find a face $f$, hidden deeply inside the $d$-bend $B$, such that at least three of the
connected components of $T^* \setminus f$ cross many chords. If
one of these components has the property that none of the faces
appearing in this component contains a terminal, then the part of the
path $P_i$ that encloses this component is, by the definition of
$T^*$, encloses a bend of type at most $t-1$. Thus, by the induction
hypothesis, it cannot cross more than $f(k,t-1)$ chords of the
$d$-bend $B$. However, if all such connected components of $T^* \setminus
f$ contain faces with terminals inside, we cannot argue anything about
$f$: the path $P$ may need to do travel in such strange manner in
order to go around some terminals. The crucial observation is that
there are $O(k)$ faces for which this situation can arise: as there are $2k$
terminals, there are only $O(k)$ vertices of the tree $T^*$ such that
at least three components of $T^*\setminus f$ contain faces with
terminals.  Therefore, if we avoid these $O(k)$ special faces, then we
can find the required set of nested segments and we can find a place
to apply the rerouting argument.  This finishes the sketch of the
proof of the irrelevant vertex rule (Theorem \ref{thm:view:irr}).

We would like to note that we can test in polynomial time if the
irrelevant vertex rule applies: if we guess one faces enclosed
by $C_r$ and the orientation of $C_r$, we can construct the cycles in
a greedy manner, packing the next cycle as close as possible to the
previously constructed one. However, we do not use this property in
our algorithm: the decomposition algorithm, described in the next
subsection, returns an irrelevant vertex situation if it fails to
produce a suitable decomposition.

We would also like to compare the assumptions of Theorem
\ref{thm:view:irr} with the conjectured canonical obstruction for
small directed treewidth, depicted on Figure
\ref{fig:view:dirgrid}. It has been shown that a planar graph
\cite{dirgrids-planar}, or, more generally, a graph excluding a fixed
undirected minor \cite{dirgrids-excluded-minors}, has small directed
treewidth unless it contains a large directed grid (as in Figure
\ref{fig:view:dirgrid}), in some minor-like fashion, and this
statement is conjectured to be true for general graphs
\cite{dirgrids}.  Although the assumption of bounded directed
treewidth may be easier to use than the bounded-alternation
decomposition presented in the next subsection, we do not know how to
argue about irrelevancy of some vertex or arc in the directed
grid. Thus, we need to stick with our irrelevant vertex rule with
relatively strong assumptions (a large number of alternating cycles),
and see in the rest of the proof what can be deduced if such a
situation does not occur.

\begin{figure}
\begin{center}
\includegraphics{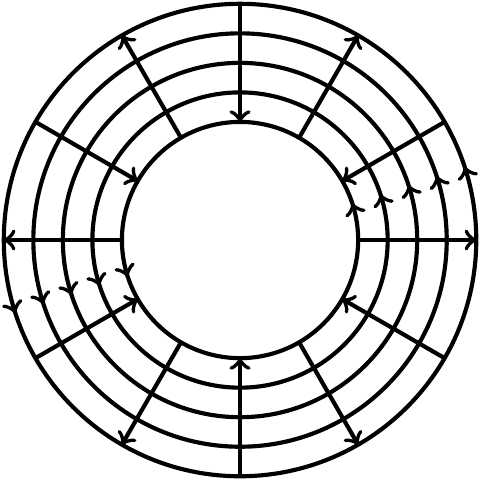}
\caption{A directed grid --- a conjectured canonical obstacle for small directed treewidth.}
  \label{fig:view:dirgrid}
  \end{center}
  \end{figure}

\subsection{Decomposition and duality theorems}\label{ss:view:decomp}

Once we have proven the irrelevant vertex rule (Theorem \ref{thm:view:irr}), we may see what can be deduced about the structure of the graph
if the irrelevant vertex rule does not apply. Recall that in the undirected case the absence of an irrelevant vertex implied a bound on the 
treewidth of the graph, and hence the problem can be solved by a standard dynamic programming algorithm.

In our case the situation is significantly different. As we shall see, the assumptions in Theorem \ref{thm:view:irr} are rather strong,
and, if the irrelevant vertex rule is not applicable, the problem does not become as easy as in the bounded-treewidth case.
Recall that Theorem \ref{thm:view:irr} assumed a large number of cycles of alternating orientation, and these alternations were
crucial for the rerouting argument.
It turns out that, if such cycles cannot be found, we can decompose the graph into relatively simple pieces using cuts of bounded alternation.

Consider a directed curve $\gamma$ on the plane that intersects the plane graph $G$ only in a finite number of points
(i.e., $\gamma$ does not ``slide'' along any arc of $G$). 
For any point $p \in \gamma \cap G$ we define $S(\gamma,p) \subseteq \{-1,+1\}$ as follows: $-1 \in S(\gamma,p)$ if it is possible for a path in $G$
to cross $\gamma$ in $p$ from left to right, and $+1 \in S(\gamma,p)$
if it possible to cross $\gamma$ from right to left (see Figure \ref{fig:view:Sp}). The {\em{alternation}} of $\gamma$ is the length of the longest
sequence of alternating $+1$ and $-1$s that is embeddable (in a natural way) into the sequence
$S(\gamma,p)_{p \in \gamma \cap G}$.

\begin{figure}
\begin{center}
\includegraphics{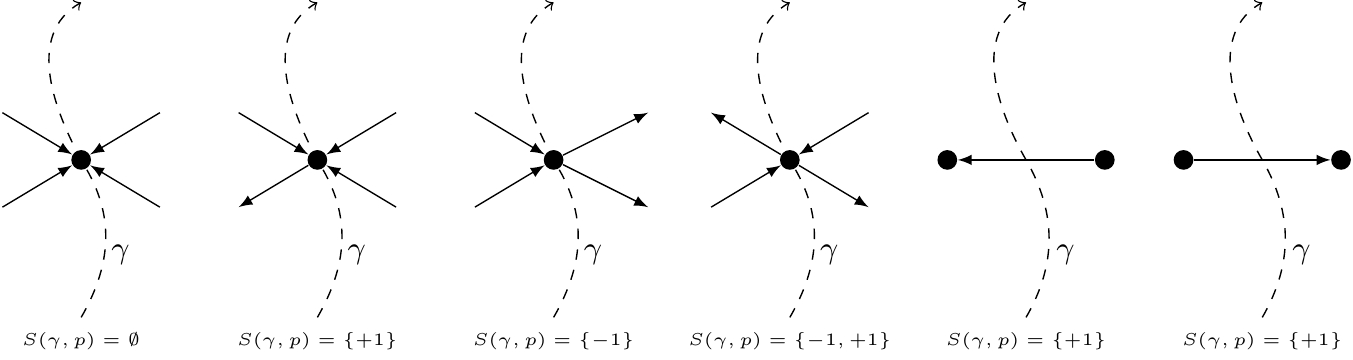}
\caption{An illustration of the definition of $S(\gamma,p)$ for $p \in \gamma \cap G$.}
  \label{fig:view:Sp}
  \end{center}
  \end{figure}

  Note that the existence of a curve $\gamma$ with alternation $\alt$
  connecting faces $f_1$ and $f_2$ proves that $f_1$ and $f_2$ cannot
  be separated by a sequence of more than $\alt$ concentric cycles of
  alternating orientation. Thus, a curve of bounded alternation is in
  some sense dual to the notion of concentric cycles of bounded
  alternation. It turns out that this duality is tight: such a curve
  of bounded alternation is the only obstacle that prevents the
  existence of these concentric cycles. One can also formulate a
  duality statement similar to the classical max-flow min-cut duality,
  with a set of paths of alternating orientation playing the role of
  the flow and a curve of bounded alternation playing the role of a
  cut. The following lemma states both types of duality in an informal
  way (see Figures \ref{fig:view:dualities1} and \ref{fig:view:dualities2} for illustration).

\begin{lemma}[Alternation dualities, informal statement.]\label{lem:view:duality}
Let $G$ be a graph embedded in a subset of a plane homeomorphic to a ring, and let $f_{in}$ and $f_{out}$ be the two faces of $G$
that contain the inside and the outside of the ring, respectively. Let $r$ be an even integer. Then, in polynomial time, one can in $G$:
\begin{enumerate}
\item either find a sequence of $r$ cycles of alternating orientation, separating $f_{in}$ from $f_{out}$, or
find a curve connecting $f_{in}$ with $f_{out}$ with alternation at most $r$ (Figure~\ref{fig:view:dualities1}); and
\item either find a sequence of $r$ paths, connecting $f_{in}$ and $f_{out}$, with alternating orientation, or
find a closed curve separating $f_{in}$ from $f_{out}$ with alternation at most $r+4$ (Figure~\ref{fig:view:dualities2}).
\end{enumerate}
\end{lemma}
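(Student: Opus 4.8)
Both statements are ``planar min--max'' dualities, close cousins of the classical fact that between two faces of a plane graph the maximum number of pairwise disjoint separating cycles equals the minimum number of edges crossed by a curve joining the two faces. I would prove each of them by reducing to a shortest-path computation (for the first) or a maximum-flow computation (for the second) in a small auxiliary planar graph, and then read off the cycles, paths, or separating curve from a breadth-first layering. The only genuinely new ingredient over the classical statements is that we must track \emph{alternation} and \emph{orientation} instead of mere counts; bookkeeping this is where the difficulty lies.

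For part~1: after perturbing the curves under consideration so that each meets $G$ only in interiors of arcs, the sign sequence $S(\gamma,p)_{p}$ becomes an honest $\pm1$-sequence whose alternation equals its number of maximal constant blocks. A curve issuing from $f_{out}$ is then a walk in the planar dual $G^*$ starting at $f_{out}$, each step of which crosses one arc of $G$ with a definite sign (opposite for the two directions of the step). Build the auxiliary graph on states $(f,\sigma)$, $f$ a face of $G$ and $\sigma\in\{-1,+1\}$, with a $0$-cost step from $(f,\sigma)$ to $(f',\sigma)$ whenever $f,f'$ share an arc crossed with sign $\sigma$, and a $1$-cost step to $(f',-\sigma)$ when the shared arc is crossed with sign $-\sigma$. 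The cheapest walk from $\{(f_{out},\pm1)\}$ to $\{(f_{in},\pm1)\}$, computable in polynomial time, has cost equal --- up to a small additive constant --- to the minimum alternation of a curve joining $f_{in}$ and $f_{out}$. If this cost is at most $r$, trace the walk back and output the curve. Otherwise, for $j=0,1,\dots$ let $R_j$ be the component containing $f_{out}$ of the union of the closures of all faces $f$ with $(f,\cdot)$ reachable at cost at most $j$; these regions are nested, ``annular'', and (for $j<r$) avoid $f_{in}$, so the inner boundary of each $R_j$ contains a cycle of $G$ separating $f_{in}$ from $f_{out}$, and the resulting cycles are concentric.

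The step I expect to be the main obstacle is showing that consecutive boundary cycles are \emph{oppositely oriented} --- and, hand in hand with that, that each such boundary really is a single directed cycle rather than a closed walk with mixed local orientations, and that the layers $R_j$ genuinely advance, so that one obtains $r$ distinct, correctly alternating cycles and not fewer. The heuristic is that the clockwise/counterclockwise orientation of the inner boundary cycle of $R_j$ is dictated by the sign with which an inward-going path crosses it, and this sign is forced to be the sign of the $(j{+}1)$-st block of an optimal curve reaching just past the cycle --- otherwise that cycle could be crossed without opening a new block and the region beyond it would already lie in $R_j$; since successive blocks of an optimal curve have opposite signs, consecutive cycles alternate. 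Upgrading ``otherwise one could cross without a new block'' to a proof that \emph{every} arc of the boundary has the same inward sign requires an uncrossing/shifting argument on the reachable regions, together with care for crossings through vertices and for stalling layers $R_j=R_{j+1}$; this is the technical core, in the same spirit as the planar bookkeeping behind Theorem~\ref{thm:view:irr}.

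For part~2, I would phrase the existence of $r$ directed paths connecting $f_{in}$ and $f_{out}$ with alternating orientation as a flow of value $r$ --- roughly half the routes running inward and half outward --- in an auxiliary graph built from a ``forward'' and a ``reverse'' copy of $G$ glued along the boundary cycles of $f_{in}$ and $f_{out}$; a minimum cut of this auxiliary graph is a closed curve separating $f_{in}$ from $f_{out}$, and planar max-flow/min-cut duality, in the style of Ding, Schrijver, and Seymour~\cite{dss:green-line,DBLP:journals/siamdm/DingSS92}, equates the two and keeps everything polynomial-time. Pulling the min cut back to the original ring costs at most $4$ extra alternation, two coming from each of the two boundary cycles where the cut has to close up around them. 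An appealing alternative is to derive part~2 from part~1 applied after contracting the boundary cycles of $f_{in}$ and $f_{out}$ to single vertices (or passing to $G^*$), exploiting that ``separating closed curve versus packing of radial paths'' is, in the plane, dual to ``connecting curve versus packing of separating cycles'', with the same additive slack accounting for the contraction. Either way the delicate point is not just to bound the alternation of the produced closed curve but to check that the forward/reverse layer structure pins its crossing-sign sequence to the alternating pattern that makes the bound tight.
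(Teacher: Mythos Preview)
Your approach is genuinely different from the paper's, and the gaps you flag are real and, I believe, not easily closed.

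The paper does \emph{not} attempt an elementary BFS/layering argument. For part~1 (the formal Lemma~\ref{lem:dualcycles}) it closes the ring into a torus by identifying the two boundary curves and directly applies the Ding--Schrijver--Seymour theorem (Theorem~\ref{thm:dss}): either the $r$ alternating concentric cycles exist as cycles of homotopy $(0,\pm1)$, or there is an obstructing noose $N$ of homotopy $(p,q)$ with $p>0$ such that $(-1,+1)^{pr/2}$ is not embeddable in $\snoose(N)$; splitting $N$ into $p$ radial segments, one of them must have alternation $\le r$, and Observations~\ref{obs:circumventing}--\ref{obs:subcurve} turn it into a simple curve. For part~2 (Lemma~\ref{lem:dualaltcut}) the paper encodes the existence of an alternating join as an instance of Schrijver's cohomology feasibility problem (Lemma~\ref{lem:coh-formulation}); if infeasible, the obstacle characterization (Theorem~\ref{thm:coh-obstacles}) produces closed walks $P,Q$ in the extended dual, one of which has alternation $<w_P\cdot r$ with $w_P>0$, and an iterative loop-cutting procedure extracts from it a simple separating noose of alternation $\le r+4$. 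In both cases the hard combinatorics is outsourced to a black box.

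Your layering idea for part~1 runs into exactly the obstacle you name: a face in layer $R_j$ may be reachable with last sign $+1$ and with last sign $-1$ at the same cost, so the boundary of $R_j$ has no reason to consist of arcs all crossed with the same sign, hence no reason to be a \emph{directed} cycle. Pinning the sign by parity of $j$ requires knowing that optimal curves have strictly alternating blocks with no slack, which is circular. I do not see how to make the uncrossing argument you sketch work without essentially reproving the DSS obstruction. For part~2, your max-flow proposal is too vague: ordinary planar max-flow/min-cut bounds the \emph{number} of arcs a separating curve crosses, not its alternation, and the $+4$ slack arises in the paper from a specific loop-extraction bookkeeping (losses at cut points and cyclic shifts), not from boundary contraction. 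If you want to avoid the cohomology machinery, the paper notes it is also possible to prove part~2 via Theorem~\ref{thm:dss} directly, but warns this is technically heavier.
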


\begin{figure}[t]
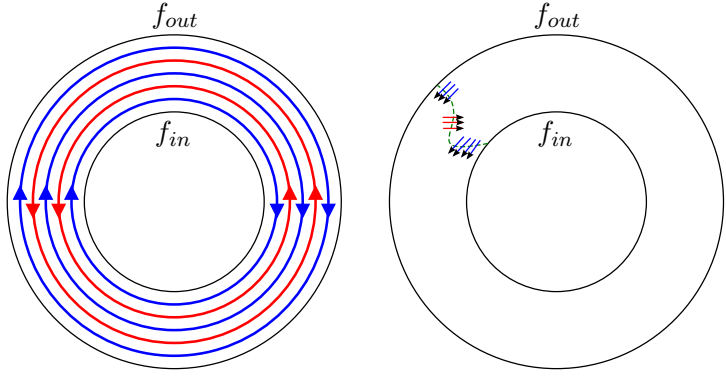

\begin{center}
\begin{subfigure}{0.3\linewidth}
\svg{0.9\linewidth}{dual1a} 
\end{subfigure}
\begin{subfigure}{0.3\linewidth}
\svg{0.9\linewidth}{dual1b} 
\end{subfigure}

\caption{Two cases in Lemma \ref{lem:view:duality}(1): cycles of alternating orientation between $f_{in}$ and $f_{out}$ or a curve of bounded alternation connecting $f_{in}$ and $f_{out}$.}
  \label{fig:view:dualities1}
  \end{center}
  \end{figure}

\begin{figure}[t]
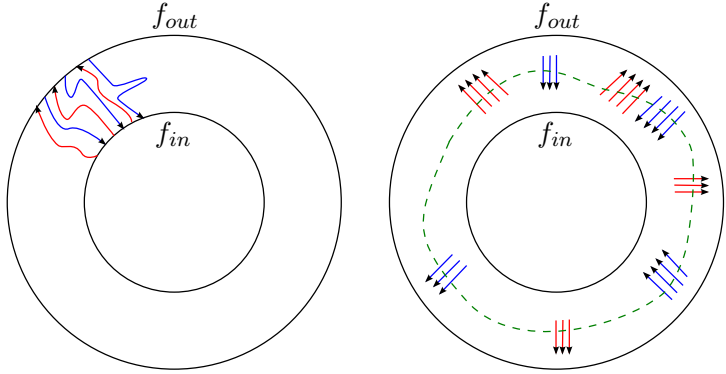

\begin{center}
\begin{subfigure}{0.3\linewidth}
\svg{0.9\linewidth}{dual2a} 
\end{subfigure}
\begin{subfigure}{0.3\linewidth}
\svg{0.9\linewidth}{dual2b} 
\end{subfigure}

\caption{Two cases in Lemma \ref{lem:view:duality}(2): paths of alternating orientation connecting $f_{in}$ and $f_{out}$ or a curve of bounded alternation separating $f_{in}$ and $f_{out}$.}
  \label{fig:view:dualities2}
  \end{center}
  \end{figure}

  Let us now give intuition on how to prove statements like Lemma
  \ref{lem:view:duality}.  If we identify $f_{in}$ and $f_{out}$, or
  more intuitively, extend the surface with a handle connecting $f_{in}$
  and $f_{out}$, we can perceive $G$ as a graph on a torus.  After
  some gadgeteering, we may use the following result of Ding, Schrijver, and Seymour~\cite{dss:green-line}: if one wants (in a graph $G$ on a torus)
  to route a set of vertex-disjoint cycles with prescribed homotopy
  class and directions, a canonical obstacle is a face-vertex curve
  $\gamma$ (of some other homotopy class), where the sequence
  $S(\gamma,p)_{p \in \gamma \cap G}$ does not contain the expected
  subpattern of $+1$ and $-1$s.  Note that such a curve is not far
  from the curves promised by Lemma~\ref{lem:view:duality}.

  Equipped with this understanding of alternation, in Section
  \ref{sec:decomp} we prove a decomposition theorem that is crucial
  for our algorithm.  We state this theorem here informally (see
  Figure \ref{fig:view:decomp} for an illustration); the precise
  statement appears in Theorem~\ref{th:finddecomp}.

\begin{figure}
\begin{center}
\includegraphics{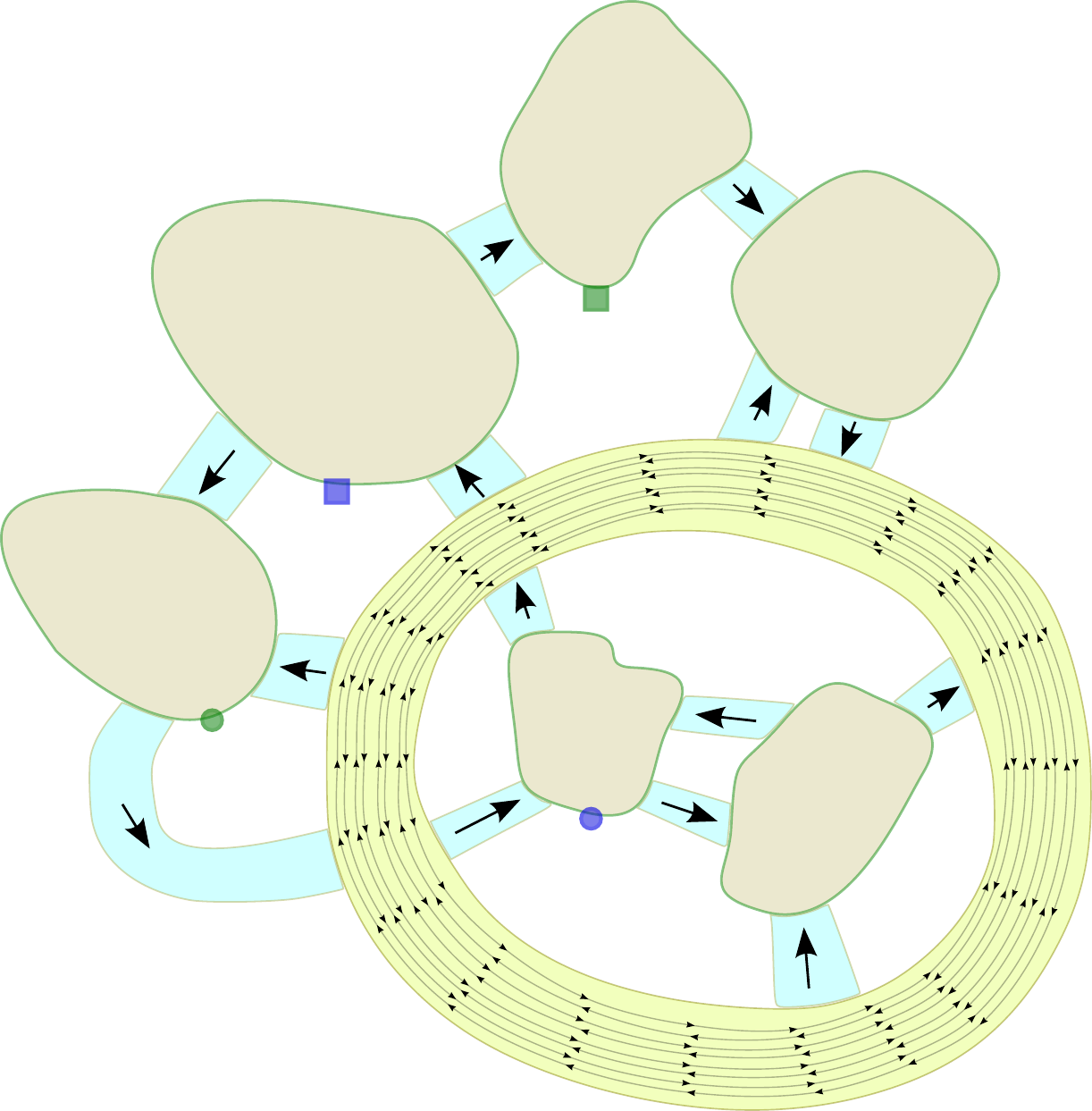}
\caption{An example of a decomposition. The disc components are red, the ring component is yellow.}
  \label{fig:view:decomp}
  \end{center}
  \end{figure}

\begin{theorem}[Decomposition theorem, informal statement]\label{thm:view:decomp}
Assume that $G$ is a plane graph with $k$ terminal pairs to which the irrelevant vertex rule is not applicable.
Then one can  partition the graph $G$
into a bounded (in $k$) number of {\em{disc}} and {\em{ring components}},
using cuts of bounded total alternation;
a disc (resp., ring) component occupies a subset of the plane that is isomorphic
to a disc (resp., ring). Moreover, each terminal lives on the border of a
disc component, and each ring component contains many concentric
cycles of alternating orientation, separating the inside from the outside.
\end{theorem}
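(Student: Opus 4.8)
The plan is to construct the decomposition by a recursive carving procedure. We draw $G$ inside a closed disc and regard the boundary of that disc as a distinguished face, so that we may start from the single piece consisting of all of $G$. Throughout the procedure we maintain a family of pairwise non-crossing face-vertex curves of bounded alternation that cut the plane into \emph{pieces}, each homeomorphic to a disc or a ring; the boundary curves of the pieces will become the bundles of the decomposition. Each step picks a piece $H$ that is not yet finalized and feeds an appropriate ring-shaped region derived from $H$ (for a disc $H$, the complement of a small open disc around a chosen face) into the alternation-duality Lemma~\ref{lem:view:duality}.

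Suppose first that some terminal of $H$ does not yet lie on the boundary of $H$. We aim to cut toward it: part~(1) of the lemma, run with $r=d(k)$, either returns a curve of bounded alternation from the boundary of $H$ to a face incident to that terminal --- which we add to the cut, so that the terminal comes onto a boundary --- or it returns $d(k)$ concentric cycles of alternating orientation separating the terminal from the boundary of $H$. In the latter case we carve the maximal such concentric stack out as a new ring component and recurse on the regions inside and outside it; this is legitimate precisely because the outermost cycle of the stack encloses the terminal and hence cannot trigger the irrelevant vertex rule, and by maximality of the stack a further attempt to cut toward the same terminal from inside the innermost cycle must succeed. A terminal-free ringlike piece is handled by part~(1) applied to the ring itself: either it returns a bounded-alternation crosscut that turns the piece into a disc, or it returns $d(k)$ concentric alternating cycles --- but then, no terminal being enclosed, the irrelevant vertex rule of Theorem~\ref{thm:view:irr} would apply, contradicting the hypothesis, so this outcome cannot occur. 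Part~(2) of the lemma is used symmetrically when we must instead \emph{separate} two boundary faces of a disc by a closed curve rather than connect faces by a curve. A disc all of whose terminals already lie on its boundary (in particular a terminal-free disc) is finalized as a disc component, and a ring certified to carry $d(k)$ concentric alternating cycles is finalized as a ring component. Every non-final step adds one curve of alternation $O(d(k))=2^{O(k^2)}$ and makes irreversible progress: it either brings a terminal onto a boundary, or peels off a maximal concentric stack whose region is thereafter closed off, or converts a ring into a disc.

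The step I expect to be the main obstacle --- and the one in which the failure of the irrelevant vertex rule is used essentially --- is proving that the procedure halts with a number of pieces bounded in terms of $k$ (in fact $O(k)$), together with the topological bookkeeping that every piece stays disc- or ring-shaped and that the curves remain pairwise non-crossing. The count is driven by the $2k$ terminals: each terminal requires only $O(1)$ ``cut toward a terminal'' operations, since after a \emph{maximal} concentric stack is removed the region inside contains no further such stack; the only delicate point is that such an operation can create terminal-free ringlike pieces, and a chain of nested terminal-free concentric alternating stacks would assemble into a single sequence of $d(k)$ concentric alternating cycles whose outermost cycle encloses no terminal, which Theorem~\ref{thm:view:irr} forbids --- so each terminal-free ring is crosscut into a disc after $O(1)$ steps. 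Since each of the $O(k)$ curves carries alternation $2^{O(k^2)}$, and two curves that meet may be spliced together at a bounded additive alternation cost, the total alternation is $2^{O(k^2)}$, hence bounded. Finally, each bounded-alternation curve is converted into the bundle structure by grouping, at its crossing points with $G$, the crossing arcs into maximal blocks of equal direction --- at most the alternation plus one of them --- each block being a bundle of parallel equally oriented arcs joining two adjacent components; and by construction every terminal ends up on the border of a disc component. The precise statement, with explicit bounds, is Theorem~\ref{th:finddecomp}.
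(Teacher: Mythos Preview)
Your sketch has the right overall shape --- iteratively carve with bounded-alternation curves, pushing one terminal to the boundary per round --- but there is a genuine gap at the key step where a ring component is created.

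When part~(1) of Lemma~\ref{lem:view:duality} returns $d(k)$ concentric alternating cycles around a terminal, you say ``carve the maximal such concentric stack out as a new ring component.'' But what are the \emph{boundaries} of this ring component? A ring component must be separated from the rest of the graph by two closed nooses of bounded alternation; the concentric cycles themselves are cycles of $G$, not cuts, and the set of arcs crossing any one of them can have unbounded alternation. You need a bounded-alternation \emph{closed} curve somewhere inside the stack, and part~(1) does not give you one. This is exactly where the paper uses part~(2), and where the irrelevant-vertex hypothesis does its real work: inside the band of concentric cycles one applies the alternating-paths/circular-cut duality. Either there are many alternating paths crossing the cycles, in which case paths and cycles together form a grid in which one can exhibit a vertex surrounded by $d(k)$ concentric alternating cycles enclosing \emph{no} terminal (Lemma~\ref{lem:largering}), contradicting the hypothesis; or there is a closed noose of bounded alternation separating the two sides of the band, and \emph{that} noose is what cuts off the ring component. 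Your only invocation of the irrelevant-vertex rule is for terminal-free rings, which is a side case; the essential use is the one just described, and your vague ``Part~(2) is used symmetrically'' does not cover it.

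A secondary consequence is that your termination and counting argument (``by maximality of the stack a further attempt \dots\ must succeed'', ``a chain of nested terminal-free stacks would assemble into a single sequence'') rests on the same missing step: without the bounded-alternation noose you cannot bound the alternation added per round, nor guarantee that the inner piece is again a disc with strictly fewer terminals far from its boundary. The paper's formal proof (Theorem~\ref{th:finddecomp}) handles all of this by finding, for each large gap between terminal levels, a short circular cut via Lemma~\ref{lem:largering}, and then cutting the original curve $M$ into pieces that together have bounded total alternation.
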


The decomposition of Theorem~\ref{thm:view:decomp} is obtained by
iteratively refining a decomposition, moving a terminal to the
boundary of a component in each step. If a disc component contains a
terminal such that there is a curve of bounded alternation from the
terminal to the boundary of the component, then the terminal can be
moved to the boundary by removing the arcs intersected by the
curve. This operation increases the alternation of the cut separating
the component from the rest of graph only by a bounded number, thus we
can afford to perform one such step for each terminal. Otherwise, if
there is no such curve, then Lemma~\ref{lem:view:duality}(1) implies
that there is a large set of concentric cycles of alternating
orientations separating all the terminals in the component from the
boundary. We again consider two cases. If there is large set of paths
with alternating orientations crossing these cycles, then the paths
and cycles together form some kind of grid, and we can easily identify
a vertex that is separated from all the terminals by a large set of
concentric cycles with alternating orientation. Such a vertex is
irrelevant by Theorem~\ref{thm:view:irr}, and hence can be removed. On
the other hand, if there is no such set of paths, then
Lemma~\ref{lem:view:duality}(2) implies that there is a curve of
bounded alternation separating the terminals of the component from the
boundary of the component. We can use this curve to cut away a ring
component and we can do this in such a way that after removing the
ring component, one of the terminals is close to the boundary of the
remaining part of the disc component (in the sense that there is a
curve of bounded alternation connecting it to the
boundary). Therefore, we can apply the argument described above to
move this terminal to the boundary. Iteratively applying these steps
until all the terminals are on the boundary of its component produces
the required decomposition.

How can we use the decomposition of Theorem~\ref{thm:view:decomp} to
solve \probshort?  The disc components are promising to work with, as
the \probshort{} problem is polynomial if all terminals lie on the
outer face of the graph \cite{DBLP:journals/siamdm/DingSS92}.  In a
topological sense, if the terminals are on the outer face, then the
solutions are equivalent, whereas if the terminals are on the inside
and outside boundaries of a ring, then the solutions can differ in how
many ``turns'' they do along the ring. This possible difference in the
number of turns create particular challenges when we are trying to
apply the techniques of Schrijver \cite{schrijver:xp} to find a
solution based on a fixed homotopy class.

Theorem~\ref{thm:view:decomp} would be nicer and more powerful if we
could always obtain a decomposition using only disc components, but as
Figure \ref{fig:view:ringcut} indicates, this does not seem to be
possible in general.  Assume that one part of the graph (inside) is
separated from the outside by a large number of concentric cycles of
alternating orientation, but with cuts of bounded alternation between
each consecutive cycles. Suppose that there are terminals inside the
innermost cycle and outside the outermost cycle. Then the irrelevant
vertex rule of Theorem~\ref{thm:view:irr} is not applicable, as we
cannot find suitable set of cycles without enclosing some terminals
inside the innermost cycle. Moreover, if we aim for bounded
alternation cuts, we cannot cut through too many cycles.  Thus, this set of concentric cycles
need to be embedded in a separate ring component.

\begin{figure}
\begin{center}
\includegraphics{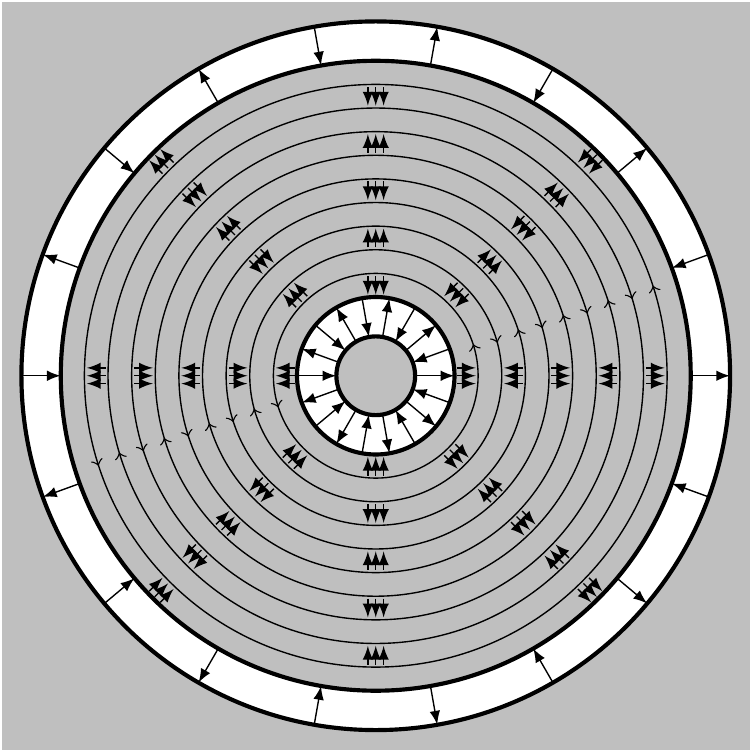}
\caption{A situation, where a ring component is necessary: there are many concentric cycles of alternating orientation,
  but the arcs between the cycles have bounded alternation.}
  \label{fig:view:ringcut}
  \end{center}
  \end{figure}

The formal statements of Lemma \ref{lem:view:duality} are proven in Section \ref{sec:spiral},
whereas the decomposition theorem is proven in Section \ref{sec:decomp}.

\subsection{Bundles and bundle words}\label{ss:view:bundles}

From the previous subsection we know that, if the irrelevant vertex rule is not applicable,
one may decompose the graph into a bounded number of disc and ring components, using cuts of bounded alternation.
Let us reformulate this statement: we can decompose the graph into a bounded number of disc and ring components,
connected by a bounded number of {\em{bundles}}; a {\em{bundle}} is a set of arcs of $G$ that form a directed path
in the dual of $G$, such that no other arc nor vertex of $G$ is drawn between the consecutive arcs of the bundle.
Thus, we obtain something we call {\em{bundled instance}} $(G,\decomp,\bundleset)$: a graph $G$ with terminals,
a family of components $\decomp$ and a family of bundles $\bundleset$. On Figure \ref{fig:view:decomp}
one can see a partition of arcs between components into bundles. With any path $P$ in a bundled instance $(G,\decomp,\bundleset)$
we can associate its {\em{bundle word}}, denoted $\bundleword(P)$: we follow the path $P$ from start to end
and append a symbol $B \in \bundleset$ whenever we traverse an arc belonging to a bundle $B$. That is,
$\bundleword(P)$ is a word over alphabet $\bundleset$; see Figure \ref{fig:view:decomp2} for an example.

\begin{figure}
\begin{center}
\includegraphics{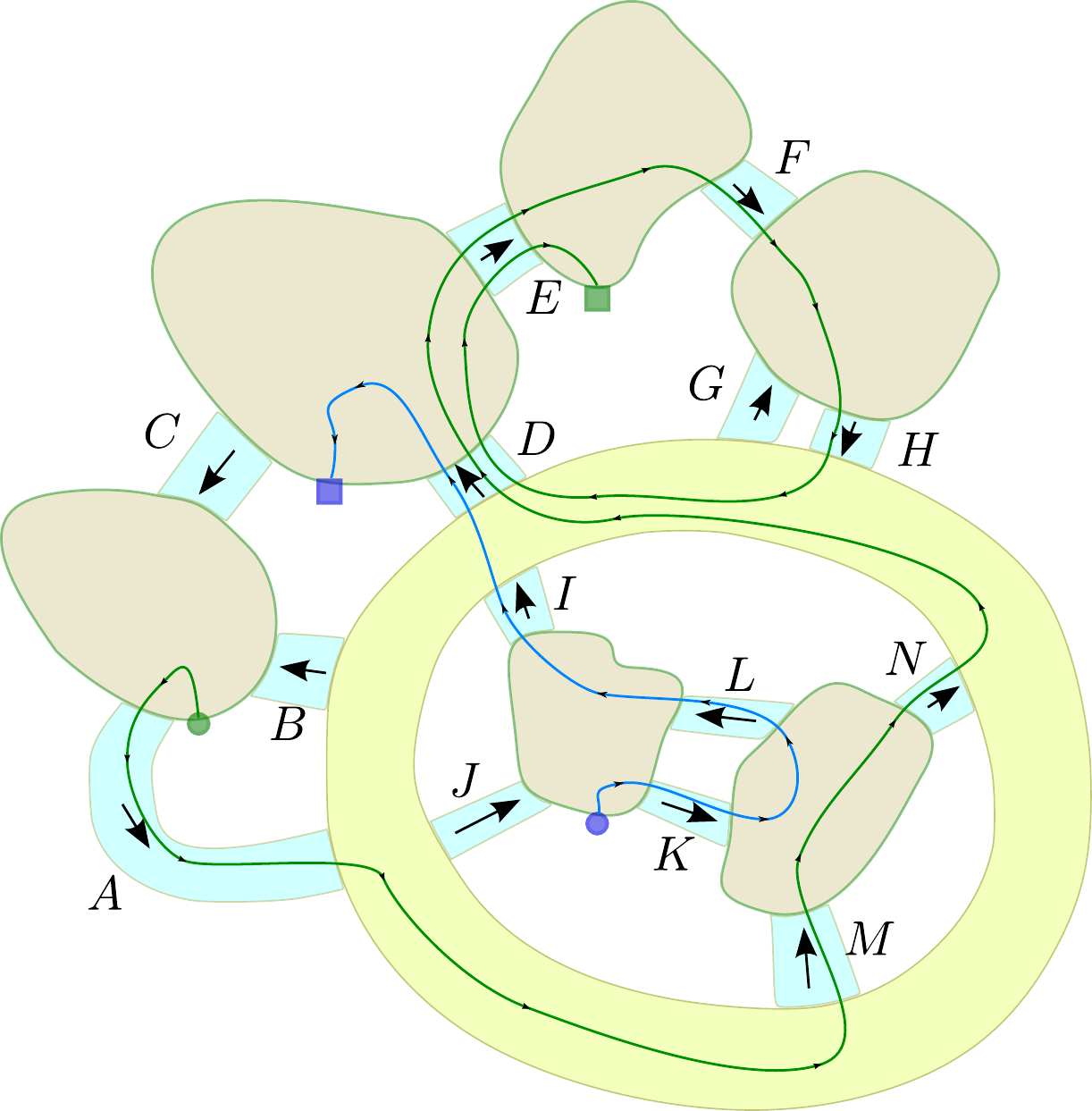}
\caption{An example of a solution. The green path has bundle word $ANMDEFHDE$ and the blue path has bundle word $KLID$.}
  \label{fig:view:decomp2}
  \end{center}
  \end{figure}

Assume for a moment that there are no ring components in the decomposition; ring components present their own challenges requiring an additional
layer of technical work, but they do not alter the main line of reasoning.
Assume moreover that we have computed somehow bundle words $\bundleword(P_i)$, $i=1,2,\ldots,k$ for some solution $(P_i)_{i=1}^k$
for \probshort{} on $G$. The important observation is that, given the bundle words, the cohomology feasibility algorithm
of Schrijver \cite{schrijver:xp} is able to extract (an approximation of) the paths $P_i$ in polynomial time.

To show this, let us recall the algorithm of Schrijver \cite{schrijver:xp} that solves \probshort{} in polynomial time
for every fixed $k$. The heart of the result of Schrijver lies in the proof that \probshort{} is polynomial-time solvable if we are given a 
homotopy class of the solution. In simpler words, given a ``pre-solution'', where many paths can traverse the same arc, even in wrong direction
(but they cannot cross), one can in polynomial time check if the paths can be ``shifted'' (i.e., modified by a homotopy) so that they
become a feasible solution. In such a ``shift'' (homotopy) one can move a path over a face, but not over a face that contains a terminal.\footnote{Note that we can assume that each terminal is of degree one, and the notion of ``face containing a terminal''  is well-defined.}
See Figure \ref{fig:view:homotopies} for an illustration of different homotopy classes of a solution.

\begin{figure}
\begin{center}
\includegraphics{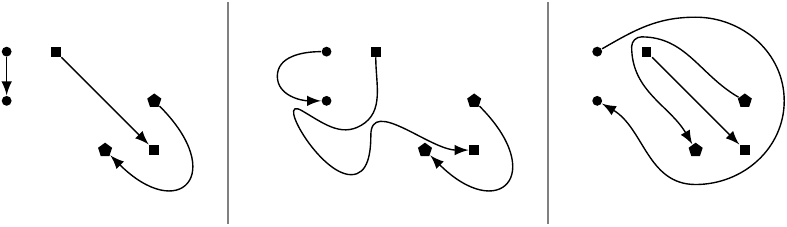}
\caption{Different homotopy classes of a solution: in the first two figures, the solutions are of the same class,
 whereas on the third figure the homotopy class is different.}
  \label{fig:view:homotopies}
  \end{center}
  \end{figure}

In our setting, we note that, in the absence of ring components, two solutions with the same set of bundle words of each paths are homotopical;
thus, given bundle words of a solution, one can use the algorithm of Schrijver to verify if there is a solution consistent with the given
set of bundle words. However, one should note that the homotopies are allowed to do much more than to only move paths within a bundle; formally,
using the Schrijver's algorithm we can either (i) correctly conclude that there is no solution with given set of bundle words $(p_i)_{i=1}^k$, or (ii)
compute a solution $(P_i)_{i=1}^k$ such that the bundles of $\bundleword(P_i)$ is a subset (as a multiset) of the bundles of $p_i$.

%Recall that our decomposition may contain ring components as well, so a natural question is:
Unfortunately, if the decomposition contains ring components as well,
then the bundle words of a solution does not describe the homotopy
class of the solution.  What do we need to learn, apart from bundle
words of the solution, to identify the homotopy class of the solution
if ring components are present?  The answer is not very hard to see:
for any subpath of a path in a solution that crosses some ring
component (i.e., goes from the inside to the outside of vice versa) we
need to know how many times it ``turns'' inside the ring component; we
call it a {\em{winding number}} inside a component.\footnote{Formally,
  for a part of a path that starts and ends on the same side of the
  ring component we need to know also on which side it leaves the
  other side of the ring component; however, it turns out that this
  knowledge is quite easy to guess or deduce and we ignore this issue
  in the overview.}

\begin{figure}
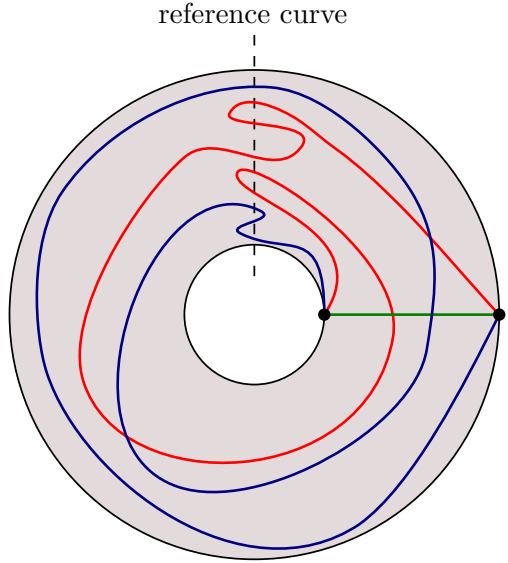

\begin{center}
\svg{0.4\linewidth}{fig-view-winding2}
\caption{Winding numbers inside a ring component; formally defined as the signed number of crosses of some fixed chosen reference curve.
  The green line has winding number $0$, the blue $+2$ and the red $-1$.}
  \label{fig:view:winding}
  \end{center}
  \end{figure}

Thus, our goal is to compute a small family of possible bundle words and winding numbers such that, if there exists a solution to \probshort{}
on $G$, there exists a solution consistent with one of the elements of the family.
In fact, our main goal in the rest of the proof is to show that one can compute
such family of size bounded in the parameter $k$.

\subsection{Guessing bundle words}\label{ss:view:spirals}

Assume again that there are no ring components; we are to guess the bundle words of one of the solutions.
Recall that the number of bundles, $|\bundleset|$, is bounded in $k$. Thus, if a bundle word of some path $P_i$ from a solution $(P_i)_{i=1}^k$
is long, it needs to contain many repetitions of the same bundles.

Let us look at one such repetition: let $uB$ be a subword of $\bundleword(P_i)$, where $B$ is the first symbol of $u$ and
$u$ contains each symbol of $\bundleset$ at most once. We call such place a {\em{spiral}}. Note that this spiral separates the graph
into two parts, the inside and the outside, where any other path can cross the spiral only in a narrow place inside bundle $B$ (see Figure~\ref{fig:view:spiraling-cut}).
As the arcs of $B$ go in one direction, any path $P_j$, $j \neq i$ can cross the spiral only once, in the same direction as $P_i$, and the
path $P_i$ cannot cross the spiral $uB$ again. Note that we know exactly which paths cross the spiral $uB$: the paths that have terminals
on different sides of the spiral $uB$.

\begin{figure}
\begin{center}
\includegraphics[width=0.7\textwidth]{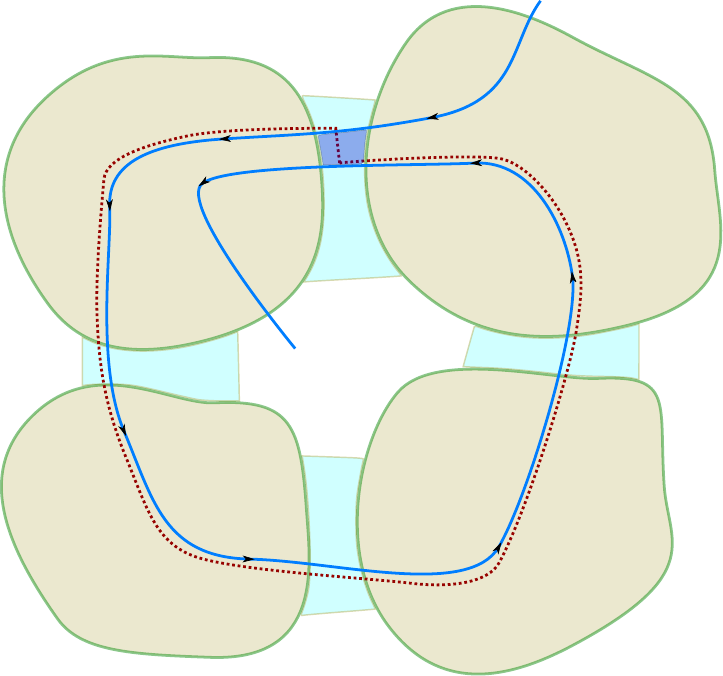}
\caption{A spiral. Any other path may cross the dashed curve only in a narrow place in the top bundle (highlighted with darker blue colour).}
  \label{fig:view:spiraling-cut}
  \end{center}
  \end{figure}

  There is also one important corollary of this observation on
  spirals. If $\bundleword(P_i) = x u B y$ for some words $x,y$ and
  spiral $uB$, then, for any bundle $B'$ that does not appear in $u$,
  only one of the words $x$ or $y$ may contain $B'$: the bundle $B'$
  is either contained inside the spiral $uB$ or outside it. By some
  quite simple word combinatorics, we infer that $\bundleword(P_i)$
  can be decomposed as $u_1^{r_1} u_2^{r_2} u_3^{r_3} \ldots
  u_s^{r_s}$, where each word $u_i$ contains each symbol of
  $\bundleset$ at most once, each $r_i$ is an integer and $s \leq
  2|\bundleset|$.  Note that if we aim to guess bundle words of some
  solution $(P_i)_{i=1}^k$, there is only a bounded number of choices
  for the length $s$ and the words $u_i$; the difficult part is to
  guess the exponents $r_i$, if they turn out to be big (unbounded in
  $k$).  That is, we can easily guess the global structure of the
  spirals (how they are nested, etc.), but we cannot easily guess the
  ``width'' of the spirals (how many turns of the same type they
  do). We need further analysis and insight in order to be able to
  guess these numbers as well.

Let us focus on a place in a graph where a path $P_i$ in the solution $(P_i)_{i=1}^k$ contains a subword $u^rB$ of $\bundleword(P_i)$
for some large integer $r$, where $B$ is the first symbol of $u$.
The situation, depicted on Figure \ref{fig:view:spiraling-ring}, looks like a large spiral;
the {\em{spiraling ring}} is the area between the first and last spiral $uB$ in the subword $u^rB$.
Note that any path $P_j$ that enters this area, actually needs to traverse all $r$ spirals $uB$ and $\bundleword(P_j)$ contains a subword $u^{r-1}B$;
let $I \subseteq \{1,2,\ldots,k\}$ be the set of indices $j$ such that $P_j$ traverses $uB$.
Moreover, note that, since $B$ contains arcs going in only one direction, these parts of paths $(P_j)_{j \in I}$
are the only intersections of the solution $(P_i)_{i=1}^k$
with the spiraling ring.

\begin{figure}
\begin{center}
\includegraphics[width=0.7\textwidth]{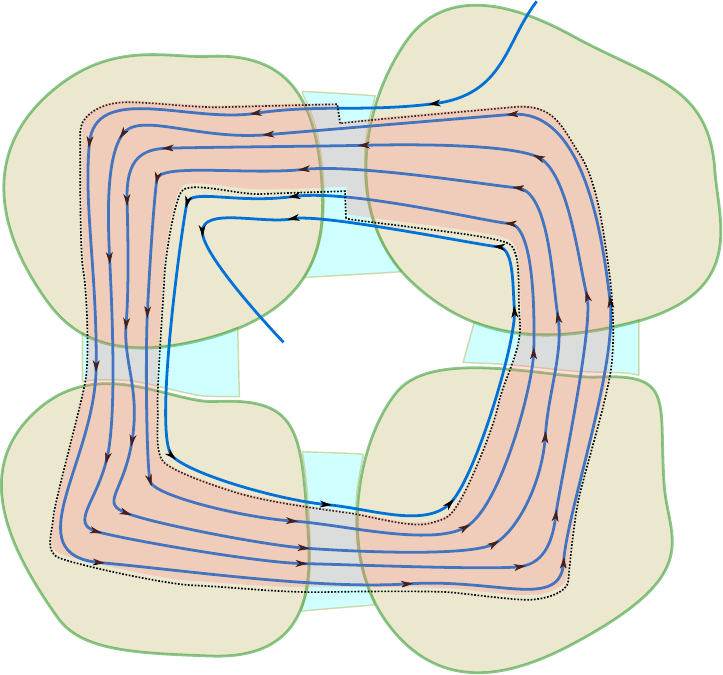}
\caption{A spiraling ring; the dashed lines are its borders.}
  \label{fig:view:spiraling-ring}
  \end{center}
  \end{figure}

Our main claim is that we can choose $r$ to be as small as possible, just to be able to route the desired paths through the spiraling ring in $G$.
More formally,
we prove that if we can route $|I|$ directed paths through the spiraling ring, such that each path traverses $B$ roughly $r^\ast$ times
(but they may start and end in different places than the parts of the solution $(P_i)_{i=1}^k$ traversing the spiraling ring),
then we can modify the solution $(P_i)_{i=1}^k$ inside the spiraling ring such that $r \leq r^\ast + O(1)$.
If we choose $r^\ast$ to be minimal possible, we have $|r-r^\ast| = O(1)$ and we can guess $r$.

To prove that the paths can be rerouted, we show the following in Section \ref{sec:spiral}.
\begin{theorem}[rerouting in a ring, informal statement]\label{thm:view:ring-routing}
Let $G$ be a plane graph embedded in a ring, with outer face $f_{out}$
and inner face $f_{in}$. Assume that there exist two sequences of
vertex-disjoint paths $(P_i)_{i=1}^s$ and $(Q_i)_{i=1}^s$
connecting $f_{in}$ with $f_{out}$, such that $P_i$ goes in the same direction
(from $f_{in}$ to $f_{out}$ or vice versa) as $Q_i$, and the endpoints of $(P_i)_{i=1}^s$
lie in the same order on $f_{in}$ as the endpoints of $(Q_i)_{i=1}^s$.
Then one can reroute $(P_i)_{i=1}^s$ inside $G$, keeping the endpoints,
such that the winding number of $P_1$ differs from
the winding number of $Q_1$ by no more than $6$.
\end{theorem}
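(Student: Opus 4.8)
The plan is to prove Theorem~\ref{thm:view:ring-routing} by a direct topological rerouting argument, exploiting the fact that both path systems $(P_i)$ and $(Q_i)$ induce the same cyclic combinatorial structure on the ring, so that $(Q_i)$ can serve as a ``template'' along which to reroute $(P_i)$. First I would set up normalized coordinates on the ring: fix a homeomorphism of the ring to a standard annulus, choose a reference curve $\refcurve$ connecting $f_{in}$ and $f_{out}$ crossing $G$ minimally, and recall that the winding number of a path is the signed count of crossings of $\refcurve$. Since the endpoints of $(P_i)$ and $(Q_i)$ lie in the same cyclic order on $f_{in}$ (and correspondingly on $f_{out}$, by planarity and disjointness), there is a homeomorphism $\phi$ of the ring carrying the endpoints of $P_i$ to those of $Q_i$ for all $i$; the obstruction to simply setting $P_i := \phi^{-1}(Q_i)$ is that $\phi^{-1}(Q_i)$ need not be a path \emph{in $G$}. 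So the real content is: the winding numbers $w(P_i)$ can differ wildly from $w(Q_i)$, but the \emph{relative} structure is constrained, and we only need to control $w(P_1)$ up to an additive constant.

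The key steps, in order, are: (1) \textbf{Reduce to $s=1$ conceptually, but keep all paths for blocking.} The paths $P_2,\dots,P_s$ together with $f_{in},f_{out}$ cut the ring into a cyclic sequence of ``cells''; $P_1$ and $Q_1$ each live in, and cross, this cell structure. Because the cyclic order of endpoints agrees, $Q_1$ visits the same cyclic pattern of cells as $P_1$ \emph{modulo full turns}, and in each cell a path connecting two prescribed boundary arcs is unique up to homotopy rel endpoints within that cell (cells are discs). (2) \textbf{Count turns via the cell sequence.} Reading off how many times $P_1$ (resp.\ $Q_1$) wraps around the ring reduces to counting how many times its cell-sequence cycles through the full list of cells between consecutive $P_i$'s; call these $a$ and $b$. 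Since both path systems are realizable in $G$ with the same blocking paths $P_2,\dots,P_s$ for $(P_i)$ and $Q_2,\dots,Q_s$ for $(Q_i)$, a homotopy/shifting argument (move $P_1$ across faces not containing terminals — here the ring has no terminals in its interior, so all faces are free) lets us push $P_1$ to any homotopy class that is ``compatible'' with the blocking paths, in particular to the class realized by $Q_1$ relative to $\phi$-images of the blockers. (3) \textbf{Handle the discrepancy between the $P$-blockers and the $Q$-blockers.} Since $\phi$ maps $P_i$'s endpoints to $Q_i$'s, and each $P_i$ is unique up to homotopy within the ``lane'' between $P_{i-1}$ and $P_{i+1}$ (again a disc-like region with no interior terminals), the blocking systems $\{P_i\}_{i\ge 2}$ and $\phi^{-1}(\{Q_i\}_{i\ge2})$ differ only by a bounded number of elementary face-moves per lane near $f_{in}$ and $f_{out}$; each such move changes the admissible winding number of $P_1$ by at most a constant. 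Carefully accounting for the two ends ($f_{in}$ and $f_{out}$) and for the interactions at the two boundary circles, these constants sum to at most $6$. (4) Conclude that $P_1$ can be rerouted so that $w(P_1)$ equals $w(Q_1)$ up to $\pm 6$.

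The main obstacle I expect is step (3): precisely bookkeeping the additive constant. It is easy to see \emph{some} constant works; getting it down to $6$ requires being careful about what exactly a ``turn'' costs when the endpoints on $f_{in}$ and on $f_{out}$ are reconciled with those of $Q$. The source of the $6$ is presumably: up to $\pm 1$ for aligning the relevant endpoint on $f_{in}$ past the reference curve, $\pm 1$ for the same on $f_{out}$, times a factor of roughly $2$ or $3$ coming from the fact that rerouting $P_1$ may force rerouting of an adjacent blocking path which feeds back one more turn. A clean way to organize this is to first prove the statement with the worse constant via a crude ``unwrap the annulus to its universal cover'' argument — lift $(P_i)$ and $(Q_i)$ to $\R \times [0,1]$, observe that lifted endpoints match up to the appropriate deck transformation, and use that lifted paths in a simply connected planar graph with matching endpoint orders can be rerouted onto each other — and then revisit the boundary interface to extract the sharp bound $6$. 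The no-terminals-in-the-ring hypothesis is what makes every face a legal destination for a shift, so the only real constraints are the blocking paths and the two boundary circles, which is exactly what keeps the constant bounded.
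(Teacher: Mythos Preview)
Your proposal has a genuine gap: you are treating the problem as a purely topological one about curves on an annulus, but the theorem asks for actual vertex-disjoint \emph{directed} paths in the graph $G$. Your step (2) asserts that ``a homotopy/shifting argument \dots\ lets us push $P_1$ to any homotopy class that is compatible with the blocking paths,'' but pushing a curve across a face is an operation on curves, not on paths in $G$; nothing in your argument shows that $G$ contains a directed path of the new homotopy class, let alone one vertex-disjoint from the (also rerouted) $P_2,\dots,P_s$. Similarly, in step (3) the objects $\phi^{-1}(Q_i)$ are not paths in $G$ at all, so comparing them to the $P_i$ via ``elementary face-moves'' does not yield anything about realizability in $G$. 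The absence of terminals in the ring only removes homotopy obstructions; it does not guarantee that every homotopy class of curves is realized by a path in the underlying directed graph.

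The actual content of the theorem is a \emph{convexity} statement about realizable winding numbers: if two families of disjoint directed paths with the same direction pattern exist, one with winding $\approx w_P$ and one with winding $\approx w_Q$, then every intermediate winding number (up to an additive loss of $6$) is also realizable. This is not obvious and is not a consequence of surface topology. The paper proves it by closing the ring to a torus, turning the path families into cycle families of prescribed homotopy, and invoking the Ding--Schrijver--Seymour obstruction theorem for disjoint directed cycles on a torus: if no family of the intermediate homotopy type existed, there would be an obstruction noose, and one then shows (via the unravelled infinite cover --- your universal-cover idea is in the right spirit here) that such a noose would also obstruct either the $P$-family or the $Q$-family, a contradiction. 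The constant $6$ falls out of the bookkeeping in that argument (Claims~\ref{clm:samelevel} and~\ref{clm:jumplevel}), not from endpoint reconciliation on the two boundary circles.
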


How to prove such a rerouting statement?
We again make use of the 
results of Ding et al. \cite{dss:green-line} on a canonical obstacle for routing a set of directed cycles on a torus
(as in the proof of Lemma \ref{lem:view:duality}).
We connect $f_{in}$ and $f_{out}$ with a handle, and perceive the paths
$P_i$ and $Q_i$ as a cycles on a torus. The winding number $w_P$ of $P_1$ determines
the homotopy class of the cycles $(P_i)_{i=1}^s$, and the winding number $w_Q$ of $Q_1$
determines the homotopy
class of the cycles $(Q_i)_{i=1}^s$.
Now we observe that an obstacle for routing
the same set of cycles with ``homotopy'' $w$ for $w_Q + O(1) < w < w_P - O(1)$ (or, symmetrically, $w_P + O(1) < w < w_Q - O(1)$)
would be an obstacle for ``homotopy'' either $w_Q$ or $w_P$,
a contradiction. Hence, almost all ``homotopies'' between $w_Q$ and $w_P$ are realizable.
By some gadgeteering, we may force
the cycles to use the same endpoints as the paths $(P_i)_{i=1}^s$, at the cost of $O(1)$ loss in the ``homotopy'' class.

We would like to note that Theorem \ref{thm:view:ring-routing} is
a cornerstone of our result. The exponential time complexity of
the algorithm of Schrijver \cite{schrijver:xp} comes from the fact that the number of homotopy classes
of a solution solution cannot be bounded by a function of $k$, because the number of possibilities for the number of turns of the solution 
in some ring-like parts of the graph cannot be bounded by a function of $k$. Theorem \ref{thm:view:ring-routing} overcomes this obstacle by showing that that for each such ring, one can choose a canonical number of turns
(that depends only on the ring, not how it is connected to the outside)
and the solution can be assumed to spiral a similar number of turns than the canonical
pass. In other words, if one would try to construct a $W[1]$-hardness proof
of \probshort{} by a reduction from, say, $k$-\textsc{Clique},
one cannot expect to obtain a gadget that encodes a choice of a vertex or edge of the clique
by a number of turns a solution path makes in some ring-like part of the graph
--- and such an encoding seems natural, taking into account the source
of the exponential-time complexity of the algorithm of Schrijver \cite{schrijver:xp}.

However, it still requires significant work to make use of Theorem \ref{thm:view:ring-routing}.
In the case of spiraling rings, the question that remains is how to get minimal exponent $r^\ast$ such that $|I|$ paths can be routed through a spiraling ring
with $r^\ast$ turns. The idea is to isolate a part of the graph and parts of the bundle words of the solution
where only one exponent is unknown, and then apply Schrijver's algorithm for 
different choices of exponent; the desired value $r^\ast$ is the smallest exponent for which Schrijver's algorithm returns a solution.

However, in this approach we need to cope with two difficulties. First, the Schrijver's algorithm may find a solution that 
follows our guidelines (bundle words) in a very relaxed manner. However, as at each step we choose the exponent $r$ to be very close
to the minimum possible number of turns in a spiraling ring, the bundle words of the solution found by the algorithm cannot differ much
from the given ones, as they need to spiral at least the number of times we have asked them to (up to an additive constant).

Second, it is not always easy to isolate a part of the graph where one spiraling occurs. A natural thing to do is to cut the graph
along bundles not used in the spiral and attach auxiliary terminals; however, in a situation on Figure \ref{fig:view:nontrivial-spiral}
we cannot separate the middle spiraling ring from the two shorter ones outside and inside it. Luckily, it turns out that here
the additional spirals use always {\em{strictly smaller}} number of bundles, and we can guess exponents $r$ in terms $u^r$
in the increasing order of $|u|$.

\begin{figure}[ht]
\begin{center}
\includegraphics[width=0.6\textwidth]{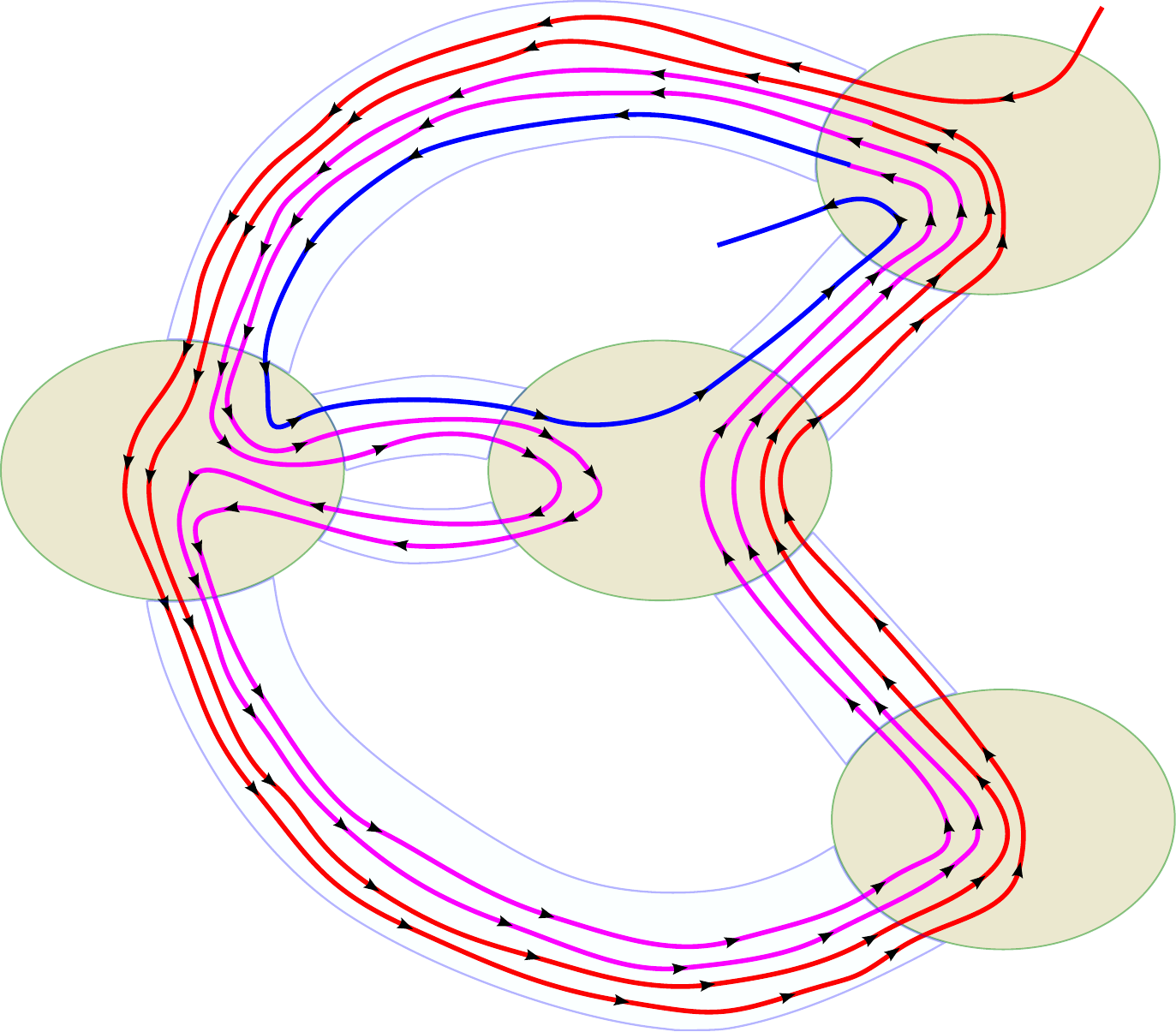}
\caption{A nontrivial situation around a spiral on $P_i$. Path $P_i$ first spirals two times using four bundles (red spiral), then makes the spiral we want to investigate by spiraling two times using six bundles (pink spiral), and finally makes one turn of a spiral using three bundles (blue spiral). Note that the set of bundles used in the red spiral and in the blue spiral is a subset of the set of bundles used in the pink spiral, which we aim to measure. The crucial observation is that these sets of bundles must be in fact {\bf{proper}} subsets of the set used by the pink spiral.}
\label{fig:view:nontrivial-spiral}
\end{center}
\end{figure}

\subsection{Handling a ring component}\label{ss:view:ring}

In the previous subsection we have sketched how to guess bundle words of the solution in absence of ring components.
Recall that, for a ring component, and for any part of a path that traverses a ring component (henceforth called {\em{ring passage}};
note that ring passages are visible in bundle words of paths) we need to know its {\em{winding number}}: the number of times it turns
inside the ring component.

As we have learnt already how to route paths in rings, it is tempting to use the aforementioned techniques to guess winding numbers:
guess how many ring passages there are, and find one winding number $w^\ast$ for which routing is possible (using Schrijver's algorithm)\footnote{Note that all winding numbers of passages in one ring component do not differ by more than one, and in this overview we assume they are equal.};
the actual solution should be reroutable to a winding number $w$ close to $w^\ast$.

However, there are two major problems with this approach. 
First, not only the ring passages of the solution use the arcs and vertices of a ring component, but 
parts of paths from the solution that start and end on the same side of the ring component (henceforth called {\em{ring visitors}}) may also be present.
Luckily, we may assume that the ring components contain many concentric cycles of alternating orientation, as otherwise the decomposition
algorithm would cut it though to obtain a disc component. If a ring visitor goes too deeply into the ring component, it creates a $d$-bend
for too large $d$ and we can reroute it.
Thus, the ring visitors use only a thin layer of the ring component, and we can argue that we can still conduct the rerouting argument
in the ring component without bigger loss on the bound on $|w-w^\ast|$; see Figure \ref{fig:view:passage} for an illustration.

\begin{figure}[ht]
\begin{center}
\includegraphics[width=0.6\textwidth]{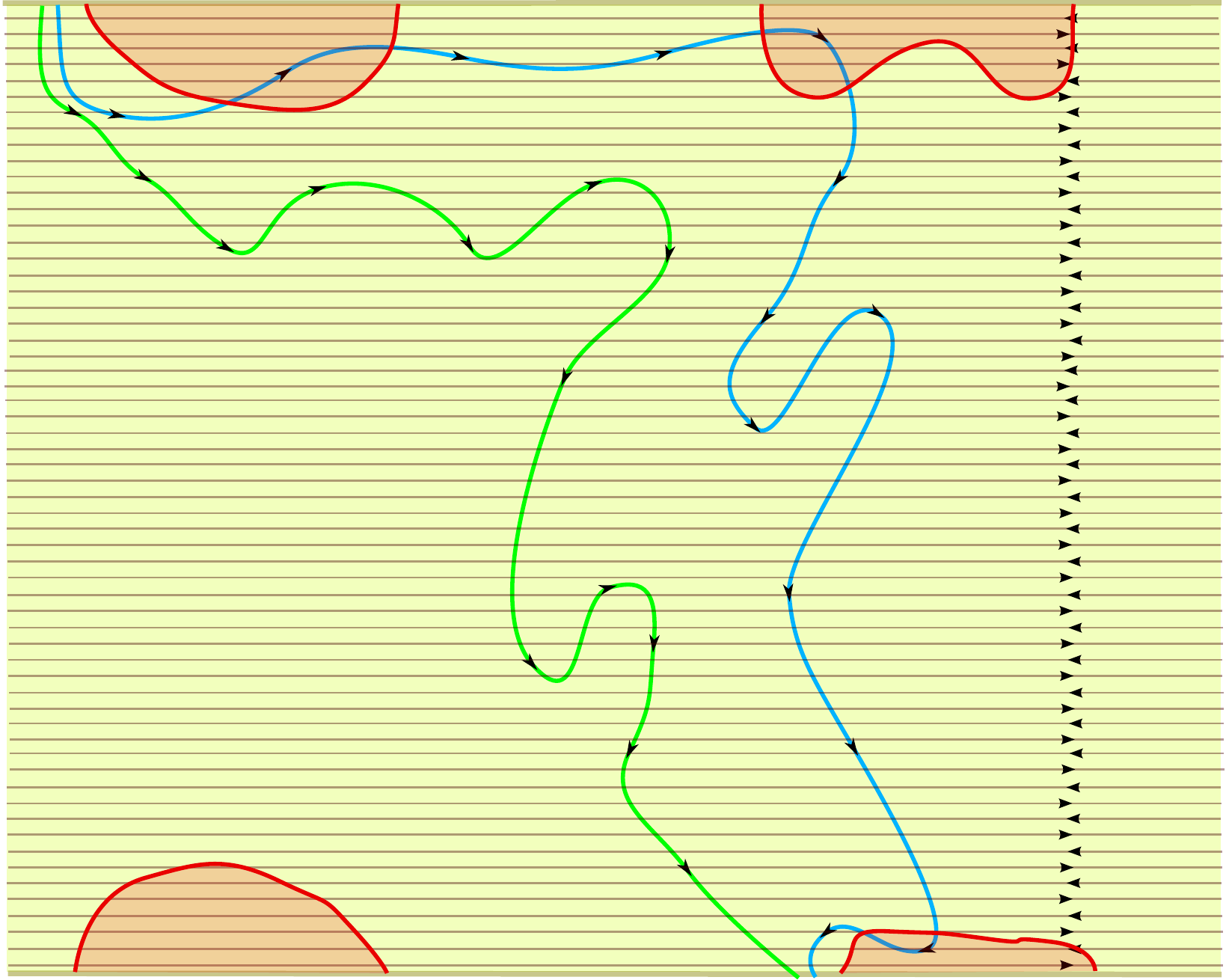}
\caption{Guessing winding number in a (flattened) ring component.
  The green pass is the actual solution we seek for, and the blue pass is the path we found.
  The red parts are used by ring visitors in the solution, thus we are not able to use the rerouting argument
  to the whole ring component. However, we are able to use it for the ring component with the red parts removed;
  we then argue, basing on the shallowness of ring visitors,
  that the winding numbers of the entire blue pass and the subpath of the blue pass that connects the red parts on the top
  with the red parts on the bottom do not differ much.
}
\label{fig:view:passage}
\end{center}
\end{figure}

Second, we do not have yet any means to control the number of ring passages,
and the previous techniques of guessing bundle words have significant technical problems
if we try to handle spiraling rings involving ring components.
Hence, it is not trivial to guess the set of ring passages traversing a ring component.
Here again we may use the concentric cycles hidden inside a ring component, as well as bounded alternation cuts that can be found repeatedly
inside the ring component if the irrelevant vertex rule is not applicable.
We argue that, if we have many ring passages, a large number of them need to travel together via a large number of concentric cycles
of alternating orientation and we can use a rerouting argument in the spirit of the one used by Adler et al.~\cite{isolde}.\footnote{It is worth noticing that the bound on directions make it possible to use a simple flow argument instead of the techniques of Adler et al.~\cite{isolde}.}
Overall, we obtain that there exists a solution with a bounded number of ring passages, and we are able to guess a good candidate for a winding number inside a ring component.
To merge the techniques of the previous and this subsection, we need to handle ring visitors when guessing bundle words:
these visitors may take part in some large spiraling ring. Luckily, as ring visitors are shallow in ring components, we can ``peel'' the ring components:
separate a thin layer that may contain ring visitors, cut it through and treat is as disc component for the sake of bundle word guessing
algorithm.

The analysis of bundle words, ring visitors and ring passages is done in Section \ref{sec:bundles-and-bundle-words}.
The Schrijver's algorithm is recalled in Section \ref{s:words-to-paths}.
The final guessing arguments --- both for bundle words and winding numbers --- are described in Section \ref{s:word-guessing}.

\subsection{Summary}

We conclude with a summary of the structure of the algorithm.

First, we invoke decomposition algorithm of Theorem \ref{thm:view:decomp}. If it fails, it exhibits a place where the irrelevant vertex rule is applicable; apply the rule and restart the algorithm. Otherwise, compute bundled instance $(G,\decomp,\bundleset)$, with $|\decomp|$ and $|\bundleset|$
bounded in $k$.

Given the bundled instance $(G,\decomp,\bundleset)$, we aim to branch into subcases whose number is bounded by a function of $k$, in each subcase
guessing a set of bundle words for the solution, as well as winding numbers of each ring passage. Our branching will be exhaustive
in the following sense: if $G$ is a YES-instance to \probshort{}, there will be a guess consistent with some solution
(but not {\em{all}} solutions will have their consistent branches).

We branch in two phases. First, we guess the bundle words;
the hard part is to guess exponents in spiraling rings $u^r$, where we argue that we can choose an exponent close to minimal possible
number of turns in a spiraling ring, and detect this number using an application of Schrijver's algorithm for a carefully chosen
subgraph of $G$. Second, we guess the winding numbers; here we argue that the winding numbers of the solution can be assumed to be close
to a winding number of an arbitrarily chosen way to route ring passages through the ring component, ignoring the ring visitors.

Finally, given bundle words and winding numbers, we deduce the homotopy type of the solution and invoke Schrijver's algorithm
on the entire graph to verify whether our guess is a correct one.

%%% Local Variables: 
%%% mode: latex
%%% TeX-master: "dirplanarkpath"
%%% End: 

%\input{r-alternations}

\newcommand{\cC}{{E_C}}
\newcommand{\cP}{{E_P}}

\section{Irrelevant vertex rule}
\label{sec:irr}
% \textbf{Note:} If $P$ is a directed path, then $P[x,y]$ is the subpath with endpoints $x$ and $y$. We do not specify which of $x$ and $y$ is the start vertex, i.e., $P[x,y]=P[y,x]$.

We prove the validity of the irrelevant vertex rule in this section.

\begin{defin}
Consider an instance of the $k$-path problem with graph $G$ and terminals $T$. Let $v$ be a nonterminal vertex. We say that $v$ is {\em irrelevant} if the instance on $G$ has a solution if and only if the instance on $G\setminus v$ has.
\end{defin}

The main result of the section is the following:
\begin{theorem}[Irrelevant Vertex Rule]\label{th:irrelevant}
  Consider an instance of the $k$-path problem with a graph $G$
  embedded in the plane and a set $T$ of terminals. There is a $d:=d(k)=2^{O(k^2)}$
  such that the following holds. Let $C_0$, $\dots$, $C_d$ be an alternating sequence of concentric cycles ($C_0$ is outside) such that
  there is no terminal enclosed by $C_0$. If $v$ is a vertex of $C_d$,
  then $v$ is irrelevant.
\end{theorem}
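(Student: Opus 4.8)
The plan is to prove the nontrivial implication: if the instance on $G$ has a solution, then so does the instance on $G\setminus v$; the converse is immediate, since deleting a vertex cannot turn a NO-instance into a YES-instance. Fix a solution $\Pp=(P_1,\dots,P_k)$ in $G$ that is \emph{minimal} with respect to a suitable measure (for concreteness, the total number of arcs used, possibly refined by a secondary measure controlling how the $P_i$ wind around the cycles, so that the rerouting moves below genuinely decrease it). If $v$ is not used by $\Pp$ we are done, so suppose $v\in V(P)$ for some $P\in\Pp$. Since no terminal is enclosed by $C_0$ while $v$ lies on the innermost cycle $C_d$, the path $P$ must cross each of $C_0,\dots,C_{d-1}$ on its way in to $v$ and again on its way back out. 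Cutting the concentric cycles along $P$ and discarding, on each cycle, the portion lying on one fixed side of $P$, we obtain a \emph{$d$-bend} $B$ in the sense of Figure~\ref{fig:view:bend}: its chords $C_0,\dots,C_d$ inherit the alternating orientation of the original cycles, and $B$ encloses no terminal. Define the \emph{type} of a bend to be the number of distinct paths of $\Pp$ that meet its closed interior; the bend $B$ just produced has type at most $k$.

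The core of the argument is the following claim, proved by induction on $t$: \emph{in a minimal solution there is no $d'$-bend of type $t$ enclosing no terminal with $d'>f(k,t)$}, for a suitable $f(k,t)=2^{O(kt)}$. Taking $d=f(k,k)+O(1)=2^{O(k^2)}$, the existence of $B$ then contradicts the claim, so $v$ is in fact unused by $\Pp$ and $\Pp$ is already a solution of the instance on $G\setminus v$, as required.

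To prove the claim, consider a $d'$-bend $B$ of type $t$ with $d'$ very large and study the \emph{segments} of $\Pp$ in $B$, i.e.\ the maximal subpaths of the $P_i$ lying in the interior of $B$. First, if the two innermost chords of $B$ are untouched by any segment, then, because the chords alternate in orientation, one of them points the ``right'' way to reroute the path bounding $B$ into a shortcut, contradicting minimality; hence all but $O(1)$ of the chords are met by segments. A segment of some $P_{i'}$ meeting the $j$-th chord of $B$ itself bounds a $(j-O(1))$-bend inside $B$. If the path $P$ bounding $B$ does not re-enter the interior of $B$, then every bend strictly inside $B$ has type at most $t-1$, so by the induction hypothesis no segment reaches a chord of index exceeding $f(k,t-1)+O(1)$, contradicting that $d'$ is large. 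Thus $P$ re-enters $B$ many times, and the remaining task is to extract from the many re-entering segments of $P$ a large \emph{nested} sub-family that travels together through many chords deep in $B$ with no other segment of $P$ between consecutive members. Once this is achieved, any segment of a different path $P_{i'}$ crossing these chords must also be nested with the family (otherwise it would bound a bend of type at most $t-1$ crossing too many chords, contradicting induction), so we obtain more than $2^k$ pairwise-disjoint alternating chords crossed by a large family of disjoint path-pieces with no other part of $\Pp$ in the enclosed region; the rerouting argument of Adler et al.~\cite{isolde}, adapted to the directed setting by using the alternating orientation of the chords so that a chord of the required direction is always available, then yields a strictly shorter solution, a contradiction.

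To find the nested family, build the auxiliary graph $H$ from $P$ together with the chords of $B$ by suppressing all degree-$2$ vertices, pass to the dual, and let $T^*$ be the subgraph carried by the chord-arcs; one checks $T^*$ is a tree on at least $d'-O(1)$ vertices (Figure~\ref{fig:view:dual}). If no large nested family existed, some face $f$ deep inside $B$ would have at least three components of $T^*\setminus f$ each crossing many chords; if one such component carried no terminal on any of its faces, then the portion of $P$ enclosing it would bound a bend of type at most $t-1$ reaching too many chords, contradicting induction. Since there are only $2k$ terminals, at most $O(k)$ faces $f$ can have three or more terminal-carrying components of $T^*\setminus f$, so after discarding these $O(k)$ ``bad'' faces (this discard is exactly what the extra factor of $f(k,t)$ over $f(k,t-1)$ pays for) we locate the required nested family, completing the induction. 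I expect the main obstacle to be this last step: pinning down the right minimality measure so that both the shortcut and the Adler-style rerouting strictly decrease it, and organizing the dual-tree analysis so that the $O(k)$ terminal-carrying branch vertices can be bypassed while keeping all the accumulated $O(1)$ losses along the chords under control.
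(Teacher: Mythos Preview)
Your proposal tracks the paper's approach closely at the high level---the induction on bend type, the dual-tree analysis, and the rerouting via alternating chords are all present. The substantive gap is precisely the one you flag at the end: the minimality measure. The trouble is that for bends to exist at all, the chords must be \emph{free} (arc-disjoint from the solution); if $\Pp$ shares arcs with the cycles $C_i$ you cannot carve out a bend as you describe, and minimizing total arc-count does not help, since a shortcut along a chord may well be longer in arcs than the segment it replaces, so neither rerouting move is guaranteed to decrease your measure.

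The paper's fix is a two-step reduction you have not anticipated. First, choose $\Pp$ minimizing the number of arcs \emph{not} on $\bigcup_i C_i$. Second, delete every arc outside $\Pp\cup\bigcup_i C_i$ and contract every arc in $\Pp\cap\bigcup_i C_i$; each cycle survives (no $P_j$ can use all of a $C_i$) and is now free, and $\Pp$ descends to a solution $\Pp'$ of the contracted instance. The entire bend analysis is then run in this contracted graph with ``minimal'' replaced by ``unique'': one shows that a \emph{unique} solution cannot touch the innermost of a sequence of free alternating concentric cycles. If $\Pp'$ were not unique, any other solution would lift back---because each contracted arc was, after the deletion step, the sole out-arc of its tail and sole in-arc of its head---to a solution of $G$ using strictly fewer non-cycle arcs, contradicting the choice of $\Pp$. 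With uniqueness in hand both rerouting moves are clean: each merely produces a \emph{different} solution, which already contradicts uniqueness. (The Adler-et-al.\ step also needs the Ding--Schrijver--Seymour disc criterion to realize the smaller noncrossing matching by actual disjoint directed paths; this is a further nontrivial ingredient that does not come for free from the undirected argument.)
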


We prove Theorem~\ref{th:irrelevant} by formulating a statement about unique solutions.
\begin{defin}\label{def:expmin}
  Let $G$ be a graph with a set $T$ of terminals. We say that a
  solution $P_1$, $\dots$, $P_k$ is {\em unique} if for every solution
  $P'_1$, $\dots$, $P'_k$ we have $P_i=P'_i$ for every $i$.  Given a
  solution $P_1$, $\dots$, $P_k$, we say that an arc is {\em free} if
  it is not used by any $P_i$.
\end{defin}
The main technical statement that we prove is that a unique solution cannot enter a large set of concentric free cycles:
\begin{lemma}\label{lem:unique}
  There is a function $d(k)=2^{O(k^2)}$ such that if an instance has a
  unique solution such that there is an alternating sequence of $d(k)$
  free concentric cycles not enclosing any terminals, then the
  solution does not intersect the innermost cycle.
\end{lemma}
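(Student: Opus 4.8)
The plan is a proof by contradiction that reduces to an induction on the \emph{type} of a ``bend''. Suppose the unique solution $\mathcal{P}=(P_1,\dots,P_k)$ intersects the innermost cycle. Since the $d(k)$ cycles are pairwise vertex-disjoint, concentric, and terminal-free, the path $P_i$ that reaches the innermost cycle must pass through a vertex of every cycle; following $P_i$ along one side of this crossing sequence and using arcs of the cycles as ``chords'' (as in Figure~\ref{fig:view:bend}), one obtains a bend $B$ with $D=d(k)-O(1)$ chords that encloses no terminal, whose type --- the number of distinct $P_j$ meeting $B$ --- is at most $k$, and whose chords have alternating orientation because consecutive cycles are oppositely oriented. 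It therefore suffices to prove the claim that there is $f(k,t)=2^{O(kt)}$ such that a unique solution admits no terminal-free bend of type $t$ with more than $f(k,t)$ chords: applying this with $t\le k$ and choosing $d(k)>f(k,k)+O(1)=2^{O(k^2)}$ contradicts the existence of $B$.

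I would prove the claim by induction on $t$. Fix a terminal-free bend $B$ of type $t$ traced by a subpath of $P_i$ with $D$ chords, $D$ large, and look at the \emph{segments} of $\mathcal{P}$: the maximal subpaths of the $P_j$'s lying in the interior of $B$. If the two deepest chords meet no segment, then by alternation one of them is oriented so as to short-circuit $P_i$, producing a solution different from --- indeed strictly shorter than --- $\mathcal{P}$, contradicting uniqueness; hence some segment $\sigma$ reaches a chord at depth $\ge D-O(1)$, and $\sigma$ spans a bend of $\ge D-O(1)$ chords inside $B$ (Figure~\ref{fig:view:bend-in-bend}). If $\sigma\subseteq P_j$ for some $j\ne i$, this sub-bend has type $\le t-1$ and too many chords, contradicting the induction hypothesis. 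So we may assume $P_i$ itself re-enters $B$ and several of its segments go deep.

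The core of the argument is to extract from the deep segments of $P_i$ a large \emph{nested} family that travels together through $2^{O(k)}$ chords with no other segment of $P_i$ between them. I would build the auxiliary planar graph $H$ from $P_i$ and the chords of $B$, suppress degree-two vertices, take its dual, and study the tree $T^{*}$ spanned by the dual chord-arcs (which has $\ge D-O(1)$ vertices, Figure~\ref{fig:view:dual}). If the deep segments of $P_i$ are not suitably nested, some face $f$ deep inside $B$ has three components of $T^{*}\setminus f$ each crossing many chords; for any such component whose faces contain no terminal, the subpath of $P_i$ enclosing it traces a bend of type $\le t-1$ with many chords, contradicting induction --- and there are only $O(k)$ faces $f$ for which \emph{all three} branch-components carry a terminal (as there are $2k$ terminals), so avoiding these $O(k)$ faces yields the required nested family of equally oriented segments of $P_i$; any segment of another path crossing them deep must also be nested with them. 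This produces more than $2^{k}$ pairwise nested, equally oriented subpaths travelling together through $2^{O(k)}$ alternating chords with no other part of $\mathcal{P}$ in the enclosed region --- precisely the hypothesis of the rerouting argument of Adler et al.~\cite{isolde}: a pigeonhole over the $2^{k}$ possible patterns of terminal pairs on the two sides lets us redirect some of these subpaths along the alternating chords to a feasible solution different from $\mathcal{P}$, contradicting uniqueness. Tracking additive constants through the recursion $f(k,t)\le 2^{O(k)}f(k,t-1)+2^{O(k)}$ gives $f(k,t)=2^{O(kt)}$, closing the induction.

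The step I expect to be the main obstacle is this last case: making the bend/segment formalism precise enough that the dual tree $T^{*}$ really has the claimed structure, correctly isolating and circumventing the $O(k)$ ``bad'' faces where $T^{*}$ branches around terminals, and carrying out the final rerouting so that it yields a genuinely different (strictly shorter) solution while preserving vertex-disjointness and all prescribed endpoints --- all with small enough per-level loss that the recursion closes at $d(k)=2^{O(k^2)}$.
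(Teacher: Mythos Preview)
Your proposal is correct and follows essentially the same approach as the paper: reduce to a bound $f(k,t)=2^{O(kt)}$ on terminal-free bends of type $t$ via Lemma~\ref{lem:concentricbend}, prove this bound by induction on $t$ using the dual tree $T^*$ (with the $O(k)$ ``bad'' branching faces carrying terminals) to extract a large nested family of segments, and finish with the rerouting argument of Adler et al.\ (Lemmas~\ref{lem:nestedrerouting}--\ref{lem:discrerouting}). One minor remark: the nested segments need not be equally oriented --- the rerouting lemma works for arbitrary directions, and this is important since the segments of $P_i$ inside the bend may alternate in orientation.
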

The setting of Lemma~\ref{lem:unique} somewhat simplifies the
proof, as we have to argue about a solution that is disjoint from the
concentric cycles. Theorem~\ref{th:irrelevant} follows from
Lemma~\ref{lem:unique} by the combination of a minimal choice
argument and contracting arcs that are shared by the solution and the
cycles.
\begin{proof}[Proof (or Theorem~\ref{th:irrelevant})] Let $\cC$ be the
  union of the arc sets of every cycle $C_i$.  If the instance has no
  solution, then $v$ is irrelevant by definition. Otherwise, let $P_1$,
  $\dots$, $P_k$ be a solution using the minimum number of arcs {\em
    not} in $\cC$; let $\cP$ be union of the arc sets of every path
  $P_i$.  We may assume that this solution uses vertex $v$, otherwise
  we are done. We create a new instance and a corresponding solution
  the following way.  First, let us remove every arc not in $\cP\cup
  \cC$. Next, we contract every arc of $\cP\cap \cC$. Note that it is
  not possible that the solution uses every arc of a cycle $C_j$,
  thus each cycle $C_j$ is contracted into a cycle $C'_j$. (The length
  of the cycle $C'_j$ might be 1 or 2; in order to avoid dealing with
  loops and parallel arcs, we may subdivide such arcs without
  changing the problem.)  Let $G'$ be the graph obtained this way.
  Clearly, each path $P_i$ becomes a (possibly shorter) path $P'_i$
  with the same endpoint and the paths remain disjoint, hence
  $P'_1$, $\dots$, $P'_k$ is a solution of the new instance that is
  disjoint from the cycles $C'_1$, $\dots$, $C'_d$. The arcs of $\cC\setminus \cP$ are free, and hence the cycles $C'_1$, $\dots$, $C'_d$ are free.

  As one of the paths $P_i$ used vertex $v$ of $C_d$, there is a path
  $P'_i$ intersecting $C'_d$.  Therefore, by
  Lemma~\ref{lem:unique}, $P'_1$, $\dots$, $P'_k$ is not the unique
  solution. Let $Q'_1$, $\dots$, $Q'_k$ be a solution different from
  $P'_1$, $\dots$, $P'_k$. As the arcs of $\cP\setminus \cC$ form
  disjoint paths connecting the terminals, there has to be an arc
  $e^*\in \cP\setminus \cC$ not used by any $Q'_i$. We construct a
  solution $Q_1$, $\dots$, $Q_k$ of the original instance from $Q'_1$,
  $\dots$, $Q'_k$ by uncontracting the arcs contracted during the
  construction of the new instance. We have to verify that each $Q_i$
  is a directed path. In general, if we contract an arc
  $\overrightarrow{xy}$ into a single vertex $v_{xy}$ in a directed
  graph, then a path $Q$ going through $v_{xy}$ in the contracted
  graph might not correspond to any directed path in the original
  graph: it is possible that $Q$ enters $v_{xy}$ on an arc that
  corresponds to an arc entering $y$ and it leaves $v_{xy}$ on an
  arc that leaves $x$, hence replacing $v_{xy}$ with the arc
  $\overrightarrow{xy}$ in $Q$ does not result in a directed
  path. However, in our case, when we contracted an arc
  $\overrightarrow{xy}$, then $\overrightarrow{xy}$ is the only arc
  leaving $x$ and it is the only arc entering $y$: this follows from
  the fact that we remove every arc not on any path $P_i$ or on any
  cycle $C_j$. Therefore, the paths $Q_1$, $\dots$, $Q_k$ obtained by
  reversing the contractions form a solution of the original
  instance. Let us observe that, as no $Q'_i$ uses the arc  
  $e^*\in \cP\setminus \cC$ in $G'$, no $Q_i$ uses the corresponding arc in $G$, which is
  an arc not in $\cC$. Moreover, if some $Q_i$ uses an arc that is
  not in $\cC$, then this arc is in $\cP$ and hence used by some $P_{i'}$ (otherwise we
  would have removed it in the construction of $G'$). Thus $Q_1$,
  $\dots$, $Q_k$ is a solution that uses strictly fewer arcs not in
  $\cC$ than $P_1$, $\dots$, $P_k$, which contradicts the minimality
  of the choice of $P_1$, $\dots$, $P_k$.
\end{proof}

\subsection{Bends}
\begin{defin}
Let $G$ be an embedded planar graph, $T$ a set of terminals, and $P_1$, $\dots$, $P_k$ a solution.
A {\em $d$-bend} $B=(P;C_0,\dots,C_d)$ consists of
\begin{itemize}
\item A subpath $P$ of some $P_i$ with endpoints $x_0$ and $y_0$, and
  distinct vertices $x_0$, $\dots$, $x_d$, $y_d$, $\dots$, $y_0$
  appearing on it in this order or in the reverse of this order.
\item For every every $0\le i \le d$, a free path $C_i$ with endpoints $\{x_i,y_i\}$ (the {\em chords})
such that
\begin{itemize}
\item The paths $C_i$ are pairwise vertex disjoint.
\item The orientations of the paths are alternating, i.e., for every $0\le i < d$, vertex $x_i$ is the start vertex of $C_i$ if and only if $x_{i+1}$ is the end vertex of $C_{i+1}$.
\item The internal vertices of $C_i$ are not on $P$.
\item The undirected cycle formed by $C_i$ and $P[x_i,y_i]$ (see footnote\footnote{
If $P$ is a directed path, then $P[x,y]$ is the subpath with endpoints $x$ and $y$. We do not specify which of $x$ and $y$ is the start vertex, i.e., $P[x,y]=P[y,x]$.})
encloses $C_j$ for every $j>i$.
\end{itemize}
\end{itemize}
We say that $d$-bend $B$ {\em (strictly) encloses} a vertex/face/path if the undirected cycle formed by $P$ and $C_0$ (strictly) encloses it. We say that $B$ is {\em terminal free} if it does not enclose any terminals.
We say that $d$-bend $B$ {\em appears} on a path $Q$ if $P$ is a subpath of $Q$.
\end{defin}
Note that we do not specify the orientation of the paths $P$ and $C_0$,
(only that the orientations of $C_0$, $\dots$, $C_d$  are alternating).
\begin{figure}
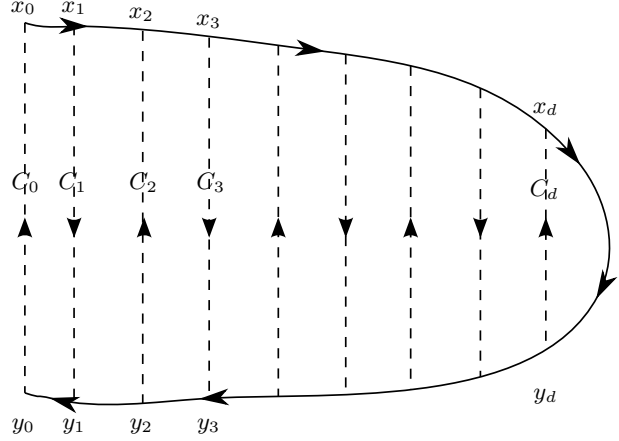

{\footnotesize 
\begin{center}
\svg{0.5\linewidth}{bend}
\end{center}
}
\caption{A $d$-bend $B=(P;C_0,\dots,C_d)$.}\label{fig:bend}
\end{figure}

Observe that if $B=(P;C_0,\dots,C_d)$ is a $d$-bend of $G$, then it is
a $d$-bend of every (embedded) supergraph of $G$: adding new arcs in
an embedding-preserving way does not ruin any of the properties. It
also follows that if $B$ is a $d$-bend of a subgraph $G$, then it is a
$d$-bend in $G$ as well.

\begin{defin}
  Given a solution $P_1$, $\dots$, $P_k$ and $d$-bend
  $B=(P;C_0,\dots,C_d)$, the {\em type} of a $d$-bend is the number of
  paths $P_i$ that have a vertex enclosed by $B$.
\end{defin}
Note that if $B$ appears on some $P_i$, then a subpath of $P_i$ is
enclosed by $B$ and hence counted in the type of $B$.
We prove the following statement by induction on type $t$. 
\begin{lemma}\label{lem:bendbound}
There is a function $f(k,t)$ such that if $P_1$, $\dots$, $P_k$ is a unique solution, then no $f(k,t)$-bend of type $t$.
\end{lemma}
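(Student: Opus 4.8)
I would prove the (necessarily terminal-free) version of the statement by induction on the type $t$, choosing $f(k,t)$ to obey a recurrence that grows by a factor $2^{O(k)}$ per level, so that $f(k,t)=2^{O(kt)}$ and hence $d(k):=f(k,k)=2^{O(k^2)}$, as Theorem~\ref{th:irrelevant} requires. Every contradiction reached will be a violation of uniqueness: from a solution $P_1,\dots,P_k$ together with a too-large terminal-free $d$-bend $B=(P;C_0,\dots,C_d)$ appearing on some $P_i$ we will produce a second, distinct solution by replacing a subpath of $P_i$ by a free chord, or by simultaneously rerouting a whole nested family of subpaths along free chords. Two observations are used throughout. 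First, a \emph{segment} --- a maximal subpath of some $P_j$ inside the open interior of $B$ --- is disjoint from $P$ by vertex-disjointness of the solution, so it enters and leaves the interior only through $C_0$; since the chords are nested, a segment touching $C_j$ touches every $C_{j'}$ with $j'<j$. Second (``sub-bend extraction''), from a segment touching $C_j$ one builds, by cutting it at its last intersection with each chord it crosses and splicing in the corresponding arcs of those chords, a $(j-O(1))$-bend $B'$ on the same path as the segment, strictly enclosed by $B$; such $B'$ is again terminal free and has free chords. This $B'$ drives the induction.

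The easy case is that $P_i$ does not re-enter the interior of $B$; then all segments belong to paths $P_j$ with $j\neq i$. If the internal vertices of both $C_{d-1}$ and $C_d$ are untouched by segments, then one of these two chords is, by the alternation of orientations, oriented so as to be a directed shortcut for $P_i$, and being free and untouched it may be used to reroute $P_i$ into a distinct solution. Otherwise a segment of some $P_j\neq P_i$ reaches $C_{d-1}$, and sub-bend extraction produces a terminal-free $(d-O(1))$-bend $B'$ on $P_j$ of type at most $t-1$ --- strictly less than $t$, because $P_i$ is enclosed by $B$ but not by the smaller $B'$ --- so the induction hypothesis forces $d-O(1)\le f(k,t-1)$, contradicting $d=f(k,t)$. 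The base case $t=1$ is subsumed here: then no path but $P_i$ has a vertex inside $B$, so only the rerouting subcase can occur.

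The substantial case is that $P_i$ re-enters the interior of $B$. The target is a family of more than $2^k$ pairwise nested segments of $P_i$ that run together through $2^{O(k)}$ consecutive chords deep inside $B$, with no other segment of $P_i$ between consecutive members. Given such a family, any segment of a different $P_j$ crossing those common chords is forced to be nested with the family as well --- otherwise, together with the chords and the ``trunk'' of $P_i$, it forms a terminal-free $2^{O(k)}$-bend of type at most $t-1$, contradicting induction --- so all relevant subpaths are nested, and then the rerouting argument of Adler et al.~\cite{isolde} applies and yields a distinct solution; it applies in our directed setting precisely because the chords cut from the alternating cycles $C_j$ are directed paths of alternating orientation, so a chord of the required direction is always at hand. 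To produce the nested family I would pass from the plane graph $H$ consisting of $P$ and $C_0,\dots,C_d$ with degree-$2$ vertices suppressed to its dual, and let $T^*$ be the subtree spanned by the duals of the chords (a path-like tree on $d-O(1)$ nodes, one per region between consecutive chords); a segment of $P_i$ then corresponds to a path in $T^*$, and nestedness of segments matches nestedness of their $T^*$-paths. If no large nested family existed, some node $f$ of $T^*$ would have three branches, each crossed by $2^{O(k)}$ chords and each carrying segments of $P_i$; but $B$ is terminal free, so no interior face of $B$ holds a terminal, and each such branch would then be enclosed by a subpath of $P_i$ forming a terminal-free $2^{O(k)}$-bend whose type is strictly below $t$, a contradiction. (When this lemma is invoked recursively on a bend enclosing terminals, the same dual-tree argument works after discarding the at most $O(k)$ ``bad'' nodes $f$ whose three branches all carry terminals; that discarding, together with the sizes $2^k$ and $2^{O(k)}$ of the family and the common chord-run, is exactly what costs a factor $2^{O(k)}$ per induction level.)

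The hard part is this last case. Making ``no large nested family implies a bend of smaller type'' precise requires careful bookkeeping of which interior faces of $B$ each segment encloses and of how vertex-suppression and cutting along chords interact with directedness and with the alternating orientations; and the final use of the Adler et al.~rerouting must be verified to remain vertex-disjoint from the segments of the paths that do \emph{not} join the nested family. A second, more routine concern is tuning all the $2^{O(k)}$ thresholds so that each induction level costs only a factor $2^{O(k)}$, which is what makes $f(k,t)=2^{O(kt)}$ and thus $d(k)=2^{O(k^2)}$.
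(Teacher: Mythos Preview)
Your overall architecture matches the paper's: induction on $t$, with the ``easy case'' being essentially the paper's Lemma~\ref{lem:bendinbendx} and the ``hard case'' handled via a dual tree, a nested family of segments, and a directed version of the Adler et al.\ rerouting (the paper's Lemma~\ref{lem:nestedrerouting}). But the hard case contains a genuine gap.

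The key unjustified step is the sentence ``each such branch would then be enclosed by a subpath of $P_i$ forming a terminal-free $2^{O(k)}$-bend \emph{whose type is strictly below $t$}.'' Terminal-freeness of $B$ gives you terminal-freeness of the sub-bend, but it says nothing about the \emph{type}: the sub-bend is bounded by a subpath of $P_i$ itself, so $P_i$ is still counted, and nothing prevents all the other $t-1$ paths that enter $B$ from entering the sub-bend as well. The paper does \emph{not} claim the sub-bend has smaller type. Instead it observes that the edge of $T^*$ separating the branch corresponds to a $P_b$-bridge $C_{\hat\jmath}[\hat\alpha,\hat\beta]$, so the new chord $C'_0=C_{\hat\jmath}[\hat\alpha,\hat\beta]$ has \emph{no internal vertex on $P_b$}; then Lemma~\ref{lem:bendinbendx} applies to the sub-bend and in its proof produces a further bend on some $P_{b'}$ with $b'\neq b$ --- it is only that further bend that has type $\le t-1$. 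You need this two-step mechanism; the one-step claim you wrote is false in general.

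A second, related confusion concerns the tree and the role of terminals. You build $H$ from $P$ and the chords; with that choice $T^*$ is literally a path and no node has three branches, so the argument cannot even start. The paper builds $H$ from the \emph{entire} path $P_b$ and the chords, and it is the re-entering segments of $P_b$ that create the branching. But then faces of $H$ outside $B$ can contain terminals (since $P_b$ runs out to its endpoints), and the set $T^*_0$ of terminal-carrying faces and its $\le 2k+1$ branch vertices $X$ must be defined and avoided \emph{even though $B$ itself is terminal-free}. Your parenthetical suggests the $O(k)$ bad nodes appear only when the bend encloses terminals; in fact they appear always, and choosing a chord window that avoids $X$ is what guarantees that at least one of the three branches at $f$ is disjoint from $T^*_0$, which in turn is what makes the sub-bend terminal-free and (crucially) disjoint from $P$, so that Lemmas~\ref{lem:bendinbendx}--\ref{lem:bendinbendxx} can be invoked.
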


The following simple parity argument will be used several times:

\begin{lemma}\label{lem:pathparity0}
  Let $C$ be an undirected cycle, $Q$ a subpath of $C$, and $L$ a path
  with endpoints not enclosed by $C$ such that $L$ does not share any
  arcs with $C$, does not go through the endpoints of $Q$ and
  crosses $Q$ an odd number of times. Then there is a vertex $w_1$ of
  $Q\cap L$ and a vertex $w_2$ of $(C\setminus Q)\cap L$ such that the
  subpath $L[w_1,w_2]$ is enclosed by $C$ and the internal vertices of
  $L[w_1,w_2]$ are disjoint from $C$.
\end{lemma}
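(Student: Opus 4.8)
The plan is to treat $C$ as a Jordan curve and track on which side of $C$ the path $L$ travels. Since $L$ shares no arcs with $C$, the intersection $L\cap C$ is a finite set of vertices; list them as $p_1,\dots,p_N$ in the order in which they occur along $L$. Deleting these points splits $L$ into elementary open pieces, and each such piece, being connected and disjoint from $C$, lies entirely inside or entirely outside the region enclosed by $C$. I would first reduce to the case that both endpoints of $L$ lie strictly outside $C$: an endpoint that is ``not enclosed'' but happens to lie on $C$ can be pushed to a nearby point strictly outside by appending a short arc to $L$, which affects neither the hypotheses (the appended part stays near an endpoint of $L$, hence away from the endpoints of $Q$, and shares no arc with $C$) nor the conclusion (the eventual $w_1,w_2$ are interior vertices of the extended path). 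With both endpoints of $L$ strictly outside, the first and last elementary pieces of $L$ lie outside $C$, so as we traverse $L$ the ``side'' changes exactly at those $p_i$ where $L$ crosses $C$ transversally, and the total number of such transversal crossings of $C$ is even.

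Next I would classify each $p_i$ by the local picture: call $p_i$ a \emph{crossing} if the two elementary pieces adjacent to it (along $L$) lie on opposite sides of $C$, and a \emph{touching} otherwise; a touching is \emph{from the inside} (resp.\ \emph{from the outside}) if both adjacent pieces lie strictly inside (resp.\ strictly outside). Because $L$ passes through no endpoint of $Q$, every $p_i$ lies in the relative interior of exactly one of $Q$ and $C\setminus Q$, giving a clean two-coloring of $L\cap C$. The key local observation is that the number of \emph{strictly-inside} elementary segments $L[p_{j},p_{j+1}]$ incident to a given $p_i$ equals $1$ if $p_i$ is a crossing, $2$ if it is a touching from the inside, and $0$ if it is a touching from the outside (here every $p_i$ genuinely has a before- and an after-segment, since the endpoints of $L$ are not on $C$). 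Summing this quantity over all $p_i$ that lie on $Q$, the resulting incidence count equals $(\text{number of crossings on } Q)+2\cdot(\text{number of touchings-from-inside on } Q)$, which is \emph{odd}, because by hypothesis $L$ crosses $Q$ an odd number of times.

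Finally I would finish by a double-counting contradiction. Suppose no strictly-inside elementary segment $L[p_j,p_{j+1}]$ has one endpoint on $Q$ and the other on $C\setminus Q$. Then every strictly-inside elementary segment has both of its endpoints (which necessarily lie in $L\cap C$) on $Q$, or both on $C\setminus Q$, so the number of incidences between strictly-inside elementary segments and points of $Q$ equals twice the number of such segments having both endpoints on $Q$ --- an even number, contradicting the odd value computed above. Hence some strictly-inside elementary segment $L[w_1,w_2]$ has $w_1\in Q\cap L$ and $w_2\in(C\setminus Q)\cap L$. Being a strictly-inside elementary segment, its interior lies in the open region enclosed by $C$ and it contains no point of $L\cap C$ in its interior; thus $L[w_1,w_2]$ is enclosed by $C$ and its internal vertices are disjoint from $C$, which is exactly the assertion of the lemma.

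The only real subtlety I anticipate is the bookkeeping around touching points as opposed to transversal crossings: one must keep straight that the hypothesis ``$L$ crosses $Q$ an odd number of times'' counts transversal crossings only, that touchings-from-inside contribute $2$ to the incidence count while touchings-from-outside contribute $0$, and that the two terminal pieces of $L$ sit outside $C$ --- which is precisely what the initial reduction to strictly exterior endpoints secures. Once this is set up, the rest is the standard Jordan-curve parity argument.
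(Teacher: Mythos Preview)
Your proof is correct. Both your argument and the paper's are parity arguments on the intersection points of $L$ with $C$, but they are organized differently. The paper first lists only the \emph{transversal} crossings $p_1,\dots,p_{2f}$, uses parity of $Q$-crossings to find a maximal enclosed subpath $L[p_{2j-1},p_{2j}]$ whose endpoints lie on opposite sides of the partition $Q$ versus $C\setminus Q$, and then, within that subpath, lists the (touching) intersections $q_1,\dots,q_s$ with $C$ and applies a simple sign-change argument to locate consecutive $q_i,q_{i+1}$ on opposite sides. Your version instead treats crossings and touchings uniformly from the start and reaches the conclusion by a single double-counting step on incidences between strictly-inside elementary segments and their endpoints on $Q$. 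Your organization is arguably cleaner, as it avoids the two-stage localization; the paper's version is slightly more explicit about where the desired segment sits. The reduction you perform (pushing endpoints of $L$ strictly outside $C$) is a minor extra step not needed in the paper's write-up, but it is harmless and makes your bookkeeping honest.
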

\begin{proof}
  Clearly, $L$ crosses $C$ an even number of times; let $p_1$,
  $\dots$, $p_{2f}$ be these crossing points in the order they appear
  on $L$ (see Figure~\ref{fig:parity}). As the endpoints of $L$ are not enclosed by $C$, the subpath
  $L[p_{2j-1},p_{2j}]$ is enclosed by $C$ for every $1\le j \le
  f$. Path $L$ crosses $Q$ an odd number of times, which means that
  there is a $1\le j \le f$ such that exactly one of the crossing
  points $p_{2j-1}$ and $p_{2j}$ is on $Q$. If $L[p_{2j-1},p_{2j}]$
  has no internal vertex on $Q$, then we are done. Otherwise, as there
  are no crossing points between $p_{2j-1}$ and $p_{2j}$, path $L$
  touches $C$ at every such internal vertex. Suppose that there are
  $s$ such touching points (possibly $s=0$); let they be $q_1$,
  $\dots$, $q_s$ in the order they appear on $L$ from $p_{2j-1}$ and
  $p_{2j}$. Let $q_0=p_{2j-1}$ and $q_{s+1}=p_{2j}$. As exactly one of
  $q_0$ and $q_{s+1}$ is on $Q$, there has to be an $0\le i \le s$
  such that exactly one of $q_i$ and $q_{i+1}$ is on $Q$.  Let $w_1\in
  \{q_s,q_{s+1}\}$ be the vertex on $Q$ and let $w_2$ be the other
  one; the subpath $L[w_1,w_2]$ now has no internal vertex on $C$ and
  is enclosed by $C$ (as $L[q_0,q_{s+1}]$ is enclosed by $C$), thus
  $L[w_1,w_2]$ satisfies the requirements.
\end{proof}
\begin{figure}
{\footnotesize 
\begin{center}
\svg{0.7\linewidth}{parity}
\end{center}
}
\caption{Lemma~\ref{lem:pathparity0}.}\label{fig:parity}
\end{figure}

The following lemma shows that if a solution uses the innermost cycle
of an alternating sequence of concentric cycles, then there is a
$d$-bend. This allows us to use the bound on $d$-bends to show that a
unique solution cannot use the innermost cycle.  That is, the following
lemmas show that Lemma~\ref{lem:bendbound} implies
Lemma~\ref{lem:unique}.

\begin{lemma}\label{lem:concentricbend2}
Let $C_0$, $\dots$, $C_{d+1}$ be an alternating sequence of free
concentric cycles. Assume that a path $P_b$ of the solution
contains a subpath $P := P_b[x_0,y_0]$ such that $x_0,y_0 \in C_0$,
$P$ does not contain any other vertex of $C_0$ except for $x_0$ and $y_0$,
and $P$ intersects $C_{d+1}$.
Moreover, assume that for one of the two subpaths $C_0'$ of $C_0$ between $x_0$
and $y_0$, the cycle $P \cup C_0'$ does not enclose any terminals. Then there
exists a terminal free $d$-bend $(P,C_0',C_1',\ldots,C_d')$, where $C_i'$ is a subpath of $C_i$
for $1 \leq i \leq d$.
\end{lemma}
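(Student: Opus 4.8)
The plan is to carve the chords $C_1',\dots,C_d'$ out of the cycles one at a time, working inward, while maintaining a nested family of disks that certifies both the enclosing condition and terminal-freeness. First I would clean up the picture. Fix a vertex $z\in P\cap C_{d+1}$ and write $\Delta_i$ for the closed disk bounded by $C_i$; then $z\in\operatorname{int}\Delta_i$ for every $i\le d$. Since $P$ meets $C_0$ only at $x_0$ and $y_0$, the interior of $P$ is connected and disjoint from $C_0$ and contains $z\in\operatorname{int}\Delta_0$, so $P$ lies inside $\Delta_0$. Hence $C_0'\cup P$ bounds a disk $D_0\subseteq\Delta_0$, and $D_0$ is terminal-free by hypothesis; since everything the eventual bend encloses is contained in $D_0$, terminal-freeness of the bend will be immediate.

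Next I would analyse how $P$ meets each disk. For $1\le i\le d$, $P\cap\Delta_i$ is a disjoint union of subpaths of $P$ (the ``$C_i$-excursions''), each meeting $C_i$ exactly in its two endpoints; since $x_0,y_0\notin\Delta_i$ these endpoints are never endpoints of $P$, and since $P$ is simple the excursions form a non-crossing family of chords of $\Delta_i$. Exactly one excursion $Q_i=P[x_i,y_i]$ contains $z$, and by maximality/minimality $x_i$ is the last vertex of $P[x_0,z]$ on $C_i$ while $y_i$ is the first vertex of $P[z,y_0]$ on $C_i$. Using $\Delta_{i+1}\subseteq\operatorname{int}\Delta_i$ one sees that $P(x_{i+1},y_{i+1})\subseteq\operatorname{int}\Delta_{i+1}$, hence $Q_{i+1}\subseteq Q_i$; chaining this yields the required order $x_0,x_1,\dots,x_d,y_d,\dots,y_1,y_0$ along $P$, and the $x_i$ (resp.\ $y_i$) are pairwise distinct because the cycles are disjoint.

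The core step is to build disks $D_d\subseteq D_{d-1}\subseteq\dots\subseteq D_1\subseteq D_0$ with $\partial D_i=C_i'\cup P[x_i,y_i]$, each $C_i'$ an arc of $C_i$ with interior disjoint from $P$, and $\operatorname{int}D_i\cap P=\emptyset$. Given $D_{i+1}$, cleanness of $C_{i+1}'$ forces $\operatorname{int}D_{i+1}$ to avoid $P$, so $D_{i+1}$ is glued to the rest of $\Delta_i$ along $P[x_{i+1},y_{i+1}]\subseteq Q_i$ and lies on one side of the chord $Q_i$ of $\Delta_i$; let $A_i^{+}$ be the arc of $C_i$ between $x_i$ and $y_i$ bounding that side, and let $D_i$ be the half-disk bounded by $A_i^{+}\cup Q_i$. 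When $P$ enters $\Delta_i$ only through $Q_i$, the arc $A_i^{+}$ is automatically clean and we take $C_i':=A_i^{+}$; then $D_i\supseteq D_{i+1}\supseteq\dots$ gives $C_j'\subseteq D_i$ for $j>i$ (the enclosing condition), and $D_i\subseteq D_0$ gives terminal-freeness. Alternation of the chord orientations follows from that of the cycle orientations: $C_i'$ is a directed subpath of the directed cycle $C_i$, consecutive cycles carry opposite orientations, and $C_{i+1}'$ is the inward arc nested inside $C_i'$, which one checks makes $x_i$ the tail of $C_i'$ exactly when $x_{i+1}$ is the head of $C_{i+1}'$.

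The main obstacle is precisely the phrase ``when $P$ enters $\Delta_i$ only through $Q_i$'': in general $P$ makes several excursions into $\Delta_i$, and the non-crossing property may place endpoints of other excursions in the interior of $A_i^{+}$, so that neither arc of $C_i$ between $x_i$ and $y_i$ is clean — yet the ordering constraints forbid simply re-anchoring $x_i,y_i$ at the endpoints of a deeper excursion (such an excursion lies entirely before $x_{i+1}$ or entirely after $y_{i+1}$ along $P$). Resolving this requires showing that one can always descend to a suitable sub-arc of $A_i^{+}$ between two consecutive $P$-vertices of $C_i$ whose endpoints are still correctly ordered along $P$ and whose region still contains $D_{i+1}$; handling this bookkeeping — most naturally via an innermost-region argument applied to the non-crossing family of excursions, keeping careful track of the disks $D_i$ — is where the proof becomes substantially more technical than the undirected analogue, and I expect it to be the crux.
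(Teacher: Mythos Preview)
Your approach is not the paper's, and the obstacle you flag is a genuine gap that your outline does not close. Once you pin $x_i,y_i$ to the endpoints of the $z$-excursion, neither arc of $C_i$ between them need be clean, and ``descending to a sub-arc of $A_i^{+}$'' is more delicate than you suggest: the new endpoints must lie one on $P[x_0,z]$ and one on $P[z,y_0]$ to preserve the order, they must bound a region containing $D_{i+1}$, and --- since the replacement destroys the property $\operatorname{int}D_i\cap P=\emptyset$ that your induction uses at the next level --- you would have to rebuild the inductive hypothesis from scratch. Your orientation argument (``one checks\dots'') is also a hand-wave: having $C_{i+1}'$ nested inside $C_i'$ is not by itself enough to read off alternation of the chords from alternation of the ambient cycles.

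The paper proceeds quite differently and never commits to excursion endpoints. It first strips the graph down to $P$ and the cycles $C_0,\dots,C_{d+1}$; this reduction is essential for the face argument below. With $Q:=P[x_0,z]$, each $C_i$ (turned into a path $L_i$ by deleting an arc not enclosed by $C:=C_0'\cup P$) crosses $Q$ an odd number of times, and a parity lemma (Lemma~\ref{lem:pathparity0}) then produces a clean subarc of $C_i$ enclosed by $C$ with one endpoint on $Q$ and the other on $P\setminus Q$ --- an ``$i$-chord''. These chords are internally disjoint and hence linearly ordered along $Q$; the key observation is that, in the cleaned graph, two consecutive chords share a face, which forces their indices to differ by at most one. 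Taking $C_i'$ to be the \emph{first} $i$-chord in this order gives the nesting and the $x_i$-ordering automatically (the $y_i$-ordering then follows from non-crossing of the chords inside $C$), and alternation of the chord orientations is read off from the orientation of $C_i'$ along the boundary of the face it shares with its predecessor. In short, the parity lemma together with the consecutive-index face argument replace your innermost-region bookkeeping entirely.
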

\begin{proof}
  Let $z$ be an arbitrary vertex of $C_{d+1}\cap P$.
  We may also assume that the
  graph has no other arc than the arcs of $P$ and the arcs of the
  $C_i$'s: if we can find a $d$-bend after removing the additional
  arcs, then there is a $d$-bend in the original graph as well.

  Consider the cycle $C:=C'_0\cup P$. As $C$ intersects $C_{d+1}$, it
  cannot enclose any $C_i$ with $i\le d$. Therefore, every $C_i$ with
  $0\le i \le d$ has an arc $e_i$ not enclosed by $C$. For every
  $0\le i \le d$, let us subdivide $e_i$ with two new vertices and let
  us remove the arc between these two new vertices.  Now these two
  new vertices are the endpoints of a path $L_i$; it is clear that
  the endpoints of $L_i$ are not enclosed by $C$.

  Let $Q$ be the subpath $P[x_0,z]$. Clearly, $Q$ crosses each $C_i$
  with $1 \le i \le d$ an odd number of times. As $Q$ does not use the
  arc $e_i$, this implies that $L_i$ crosses $Q$ an odd number
  times. Applying Lemma~\ref{lem:pathparity0} on the path $L_i$ and
  the subpath $Q$ of cycle $C$, we get that for every $1\le i \le d$,
  there is at least one subpath $L_i[w_1,w_2]$ enclosed by $C$ such
  that $w_1\in Q$, $w_2\not\in Q$, and the path has no internal vertex
  on $P$. We will call such a subpath of $L'_i$ an {\em $i$-chord} (see Figure~\ref{fig:chords}). In
  particular, $C'_0$ is the unique 0-chord. Note that these chords are
  internally vertex disjoint (but two $i$-chords can share an
  endpoint), thus they have a natural ordering along $Q$, starting
  with $C'_0$. Let us observe that if an $i$-chord and a $j$-chord are
  consecutive in this ordering, then $|i-j|\le 1$: this is because
  then there is a face whose boundary contains both chords and no such
  face is possible if $|i-j|>1$ (recall that we assumed that there is
  no other arc in the graph than the arcs of $P$ and the arcs of
  the $C_i$'s).
\begin{figure}
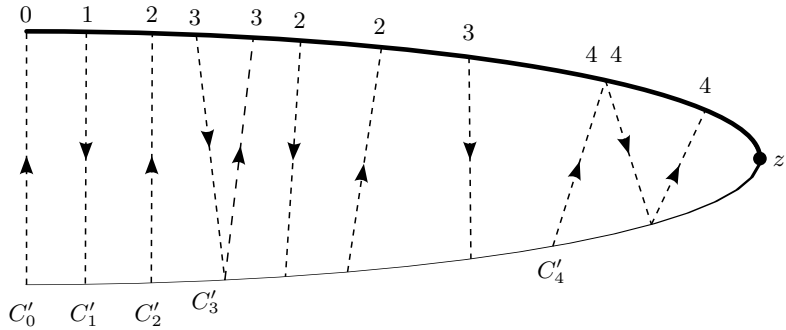

{\footnotesize 
\begin{center}
\svg{0.65\linewidth}{chords}
\end{center}
}
\caption{A possible layout of the $i$-chords and the selection of the chords $C'_i$. Note that each chord is enclosed by the cycle $C$.}\label{fig:chords}
\end{figure}

  Let $C'_i=L_i[x_i,y_i]$ be the first $i$-chord in this ordering. We
  claim that for every $0\le i \le d$, the orientation of $C'_i$ in
  the cycle $C'_i\cup P[x_i,y_i]$ is the same as the orientation of
  the cycle $C_i$.  For $i=0$, this follows from the fact that $P$ is
  enclosed by $C_0$. For some $1 \le i \le d$, let $K$ be the last
  chord before $C'_i$. We claim that $K$ is an $(i-1)$-chord. By our
  observation in the previous paragraph, it is either an $(i-1)$-chord,
  $i$-chord, or $(i+1)$-chord. As $C'_i$ is the first $i$-chord, $K$
  cannot be an $i$-chord. If $K$ is an $(i+1)$-chord, then there has
  to be another $i$-chord between $C'_0$ and $K$ (as the difference is
  at most one between consecutive chords), again contradicting the
  choice of $C'_i$. Thus $K$ is an $(i-1)$-chord. Consider the face
  $F$ whose boundary contains both $K$ and $C'_i$: this face is
  between $C_{i-1}$ and $C_i$.  If $C_i$ is oriented clockwise (the
  other case follows by symmetry), then $C'_i$ goes counterclockwise
  on the the boundary of $F$.  The undirected cycle $C'_0\cup
  P[x_0,x_i]\cup C'_i \cup P[y_0,y_i]$ encloses $C'_{i-1}$, hence it
  encloses $F$ as well, which means that the undirected cycle
  $C'_i\cup P[x_i,y_i]$ does not enclose $F$. It follows that if the
  orientation of $C'_i$ is counterclockwise on the boundary of $F$,
  then it is clockwise on the undirected cycle $C'_i\cup P[x_i,y_i]$,
  i.e., same as the orientation of $C_i$.  Therefore, we have shown
  that the orientations of the chords $C'_0$, $\dots$, $C'_d$ are
  alternating, hence they form a $d$-bend.
  Moreover, as $P \cup C_0'$ does not enclose any terminals, the bend
  is terminal free.
\end{proof}

\begin{lemma}\label{lem:concentricbend}
  Let $C_0$, $\dots$, $C_{d+1}$ be an alternating sequence of free
  concentric cycles such that the outermost
  cycle $C_0$ does not enclose any terminals. If a path $P_b$ of the solution intersects $C_{d+1}$,
  then there is a terminal-free $d$-bend appearing on $P_b$.
\end{lemma}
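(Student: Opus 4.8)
The plan is to reduce Lemma~\ref{lem:concentricbend} to Lemma~\ref{lem:concentricbend2} by isolating a suitable subpath of $P_b$ that uses the concentric cycles in the clean way required by the hypothesis of Lemma~\ref{lem:concentricbend2}. The path $P_b$ may enter and leave the disc enclosed by $C_0$ several times, and each time it does so it may or may not reach $C_{d+1}$; I need to extract one ``excursion'' that does reach deep enough, and moreover choose the two subpaths of $C_0$ bounding it so that the enclosed region is terminal free.

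First I would look at $P_b \cap C_0$. Since $P_b$ intersects $C_{d+1}$, which is enclosed by $C_0$, and since a path has two endpoints (which may or may not lie inside $C_0$), there is a maximal subpath $P := P_b[x_0, y_0]$ with $x_0, y_0 \in C_0$, no internal vertex of $P$ on $C_0$, and $P$ containing a vertex of $C_{d+1}$: indeed, walk along $P_b$ from the point where it hits $C_{d+1}$ in both directions until $C_0$ is first met (this must happen on each side, since $C_0$ is a cycle enclosing $C_{d+1}$ and the terminal endpoints of $P_b$ are either outside $C_0$ or, if inside, we can still argue that the relevant portion between two consecutive visits to $C_0$ that straddles $C_{d+1}$ exists; the degenerate case where an endpoint of $P_b$ lies strictly between $C_0$ and $C_{d+1}$ is handled by noting the walk still terminates at $C_0$ on that side). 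This yields $P$ with $x_0, y_0 \in C_0$ and $P$ internally disjoint from $C_0$.

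Second, $x_0$ and $y_0$ split $C_0$ into two arcs; let the two candidate cycles be $P \cup C_0'$ and $P \cup C_0''$. Together these two cycles enclose exactly the region bounded by $P$ and all of $C_0$, i.e. one of them is contained in the other's complement within the disc of $C_0$, and every point enclosed by $C_0$ is enclosed by exactly one of them (points on $P$ itself aside). Since $C_0$ encloses no terminals, neither does the disc bounded by $C_0$, so in particular at least one — in fact both — of $P \cup C_0'$ and $P \cup C_0''$ is terminal free. Pick $C_0'$ to be a terminal-free choice; then all hypotheses of Lemma~\ref{lem:concentricbend2} are met for the sequence $C_0, \dots, C_{d+1}$ with this $P$ and $C_0'$, and that lemma directly produces the desired terminal-free $d$-bend $(P, C_0', C_1', \dots, C_d')$ appearing on $P_b$ (it appears on $P_b$ because $P$ is a subpath of $P_b$).

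The main obstacle is the bookkeeping in the first step: making sure the maximal subpath straddling $C_{d+1}$ with both endpoints on $C_0$ and no internal $C_0$-vertex actually exists, including the boundary cases where an endpoint of $P_b$ is a terminal lying between $C_0$ and $C_{d+1}$, or even on $C_{d+1}$ itself. Since $C_0$ encloses no terminal, no terminal lies on $C_0$, and any endpoint of $P_b$ that is enclosed by $C_0$ lies strictly inside; walking from such an endpoint outward along $P_b$ still reaches $C_0$ (as $P_b$ must, since it is finite and we can take the first exit from the disc — if it never exits, both endpoints are inside, but then since $P_b$ is a path and $C_{d+1}$ is enclosed by $C_0$, we can still find a subpath between the two points where $P_b$ last touches $C_0$ on either side of a $C_{d+1}$-vertex, or if $P_b$ touches $C_0$ at most once we get a contradiction with $C_0$ being a cycle separating the plane). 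Once existence is secured, everything else is an immediate appeal to Lemma~\ref{lem:concentricbend2}.
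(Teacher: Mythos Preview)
Your approach is the same as the paper's: extract a subpath $P=P_b[x_0,y_0]$ between two consecutive visits to $C_0$ that reaches $C_{d+1}$, pick one arc $C_0'$ of $C_0$ between $x_0$ and $y_0$, and invoke Lemma~\ref{lem:concentricbend2}.

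However, you manufacture an obstacle that does not exist. The path $P_b$ is a path \emph{of the solution}, so its two endpoints are the terminals $s_b$ and $t_b$. Since $C_0$ encloses no terminals, both endpoints of $P_b$ lie strictly outside $C_0$. Hence, starting from any vertex $z\in P_b\cap C_{d+1}$ and walking along $P_b$ toward either endpoint, you are guaranteed to meet $C_0$; take $x_0,y_0$ to be the first such meetings on each side. All of your ``degenerate cases'' (an endpoint of $P_b$ lying between $C_0$ and $C_{d+1}$, or $P_b$ never exiting the disc, or $P_b$ touching $C_0$ at most once) simply cannot occur, and your hand-wavy handling of them (the last one in particular is not a valid argument as stated) is unnecessary. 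Once you use that the endpoints are terminals, the first step is a one-line observation, exactly as in the paper.
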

\begin{figure}
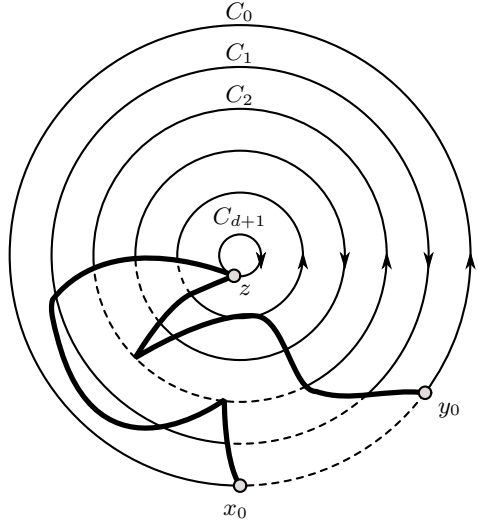

{\footnotesize 
\begin{center}
\svg{0.4\linewidth}{concentricbend}
\end{center}
}
\caption{Proof of Lemma~\ref{lem:concentricbend} via Lemma~\ref{lem:concentricbend2}.
  The dashed segments
  are the $i$-chords (note that that there are three 2-chords, sharing
  some endpoints).}\label{fig:concentricbend}
\end{figure}
\begin{proof}
  Let $z$ be an arbitrary vertex of $C_{d+1}\cap P_i$. As the
  endpoints of $P_b$ are not enclosed by $C_0$, both subpaths of $P$
  from $z$ to its endpoints have to intersect $C_0$. Let $x_0$ and
  $y_0$ be the intersections on these two subpaths closest to $z$. Let
  $P=P_b[x_0,y_0]$. Note that $P$ is enclosed by $C_0$ and $x_0,y_0$
  are the only vertices of $P$ on $C_0$.  Let $C'_0$ be one of the two
  subpaths of $C_0$ between $x_0$ and $y_0$ chosen arbitrarily; by our assumption, no
  internal vertex of $C'_0$ is on $P$. Moreover, as $C_0$ does not enclose any terminals,
  neither does $C_0' \cup P$. The claim follows from Lemma \ref{lem:concentricbend2}
  on $P$, $C_0'$ and cycles $C_0,C_1,\ldots,C_{d+1}$.
\end{proof}

\subsection{Paths and segments in bends}
In this section, we discuss how new bends can be formed from paths
appearing in a bend and how this can be used to obtain bounds on
bends. From now on, we fix $t$ and assume that
Lemma~\ref{lem:bendbound} holds for $t-1$, i.e., $f(k,t-1)$ is
defined.

The following lemma will be our main tool in constructing a new bend
whenever there is a path that starts on some chord, visits many
chords, and returns to the same chord. The proof is very similar to
the proof of Lemma~\ref{lem:concentricbend}, but note that here we
cannot assume that the path is enclosed by the bend $B$ (see
Figure~\ref{fig:bendinbend}).

\begin{lemma}\label{lem:bendinbendmain}
  Consider a solution $P_1$, $\dots$, $P_k$ and let
  $B=(P,C_0, \dots,C_{d})$ be a terminal-free $d$-bend appearing on
  some $P_{b}$. Let $P'$ be a subpath of $P_{b'}$ (possibly $b=b'$)
  that is vertex-disjoint from $P$ and going from $v_1\in C_x$ to
  $v_2\in C_x$ and intersecting $C_y$. Assume furthermore that the internal vertices of
  $C_x[v_1,v_2]$ are not on $P'\cup P_b$ and the cycle
  $C_x[v_1,v_2]\cup P'$ does not enclose any terminals. If $|x-y|\ge
  d'+2$, then there is a terminal-free $d'$-bend
  $B'=(P';C'_0,\dots,C'_{d'})$ with $C'_0=C_x[v_1,v_2]$.
\end{lemma}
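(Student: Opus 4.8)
The plan is to mimic the proof of Lemma~\ref{lem:concentricbend2}, using a parity argument (Lemma~\ref{lem:pathparity0}) to carve out the new chords $C'_1, \dots, C'_{d'}$ from the old chords $C_{x+1}, \dots, C_{x+d'}$ (or $C_{x-1}, \dots, C_{x-d'}$, depending on the sign of $y-x$), but with $P'$ playing the role that $P$ played before. First I would discard from the graph every arc not lying on $P'$, on $P_b$, or on one of the chords $C_0, \dots, C_d$; as in the earlier proof, finding the desired $d'$-bend in the reduced graph suffices, since a $d'$-bend in a subgraph is a $d'$-bend in the original graph. Without loss of generality assume $y > x$ (so we will use chords $C_{x+1}, \dots$), and set $z$ to be an arbitrary vertex of $P' \cap C_y$.

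Next I would set up the parity argument. Let $C := C_x[v_1,v_2] \cup P'$, an undirected cycle in the reduced graph. The key observation is that $C$ intersects $C_y$ (via $z$), and $C_y$ is enclosed by $C_x[v_1,v_2] \cup P_b[v_1,v_2]$-type cycles only up to index $x$; more carefully, because the chords $C_{x+1}, \dots, C_y$ all lie strictly inside the bend $B$ while $C$ reaches $C_y$, the cycle $C$ cannot enclose any $C_i$ with $x < i \le y$, so each such $C_i$ has an arc $e_i$ not enclosed by $C$. Subdivide each $e_i$ and delete the middle arc to obtain a path $L_i$ with both endpoints outside $C$. Now let $Q := P'[v_1,z]$, a subpath of $C$. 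Then $Q$ crosses each $C_i$ with $x < i \le y$ an odd number of times (it goes from $v_1$, which is on or outside $C_i$, to $z$, which — for $i<y$ — is on the far side; for $i=y$ it ends on $C_y$, but we only need $x < i \le x+d'+1 \le y-1$, so $z$ is strictly inside $C_i$). Since $Q$ avoids $e_i$, the path $L_i$ crosses $Q$ an odd number of times, and Lemma~\ref{lem:pathparity0} applied to $L_i$ and the subpath $Q$ of the cycle $C$ yields an $i$-chord: a subpath of $L_i$ enclosed by $C$, with one endpoint on $Q$, the other off $Q$, and no internal vertex on $P'$. Choosing the first $i$-chord along $Q$ for each $i \in \{x+1, \dots, x+d'+1\}$, relabeling these as $C'_1, \dots, C'_{d'}$ (dropping the last one, or using it only to fix orientations), and taking $C'_0 = C_x[v_1,v_2]$, I get internally disjoint chords with a natural ordering along $Q$.

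The remaining work is exactly parallel to Lemma~\ref{lem:concentricbend2}: consecutive chords in the ordering have indices differing by at most one (because a single face of the reduced graph borders both, and no face borders $C_i$ and $C_j$ with $|i-j|>1$), and then an inductive argument on the face between the last $(i-1)$-chord and the first $i$-chord shows that the orientation of $C'_i$ in the cycle $C'_i \cup P'[x_i,y_i]$ agrees with the orientation of $C_i$ — hence the chords alternate and form a $d'$-bend on $P'$. The enclosure/nesting condition follows as before from the fact that the relevant undirected cycles enclose $C'_{i-1}$ and therefore enclose the intermediate face. Finally, terminal-freeness: the bend $B'$ is enclosed by $C'_0 \cup P' = C_x[v_1,v_2] \cup P'$, which by hypothesis encloses no terminal, so $B'$ is terminal-free. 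The main obstacle I anticipate is bookkeeping the orientations and the relationship between ``enclosed by $C$'' and ``enclosed by the bend $B$'': unlike in Lemma~\ref{lem:concentricbend2}, $P'$ need not be enclosed by $B$ (see Figure~\ref{fig:bendinbend}), so I must argue carefully that the $i$-chords I extract genuinely lie in the region swept out between $C'_0$ and the deeper chords, and that the base case of the orientation induction (the relation between $C'_0 = C_x[v_1,v_2]$ and the first deeper chord) is handled correctly — this is where the condition ``internal vertices of $C_x[v_1,v_2]$ are not on $P' \cup P_b$'' is used, and where I should double-check the precise indices to be sure that $|x-y| \ge d'+2$ is exactly what is needed.
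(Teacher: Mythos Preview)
Your proposal follows the same template as the paper's proof, and you correctly anticipate both the orientation subtlety at $C'_0$ and the need for the extra index (explaining why $|x-y|\ge d'+2$). However, there is a genuine gap in the step where you claim ``the cycle $C$ cannot enclose any $C_i$ with $x<i\le y$''. Your justification---that the chords lie inside $B$ while $C$ reaches $C_y$---does not go through: the $C_i$ here are \emph{paths} (chords of $B$), not concentric cycles, so the nesting argument from Lemma~\ref{lem:concentricbend2} does not apply. Nothing in the bend structure prevents $C=C_x[v_1,v_2]\cup P'$ from wrapping around and swallowing a chord $C_i$ while still touching $C_y$ elsewhere.

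The paper closes this gap with a different, short argument that you are missing: it shows directly that $C$ encloses \emph{no vertex of $P$}. If $C$ enclosed some $v\in P$, then since the endpoints of $P_b$ are terminals (hence outside the terminal-free cycle $C$), each of the two subpaths of $P_b$ from $v$ to a terminal must meet $C$. But $P'$ is vertex-disjoint from $P$ and the internal vertices of $C_x[v_1,v_2]$ avoid $P_b$, so these intersections can only occur at $v_1$ and $v_2$; this forces $b=b'$ and produces a cycle in the path $P_{b'}$, a contradiction. Once you know this, the endpoints $x_i,y_i\in P$ of each chord $C_i$ are already outside $C$, so you can apply Lemma~\ref{lem:pathparity0} directly with $L=C_i$---the subdivision trick is unnecessary. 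For the orientation, the paper confirms your suspicion: the alternation of $C'_1,\dots,C'_{d'+1}$ is proved as you outline, but the base case $C'_0$ versus $C'_1$ may fail, in which case one takes $(P';C'_0,C'_2,C'_3,\dots,C'_{d'+1})$ instead.
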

\begin{figure}
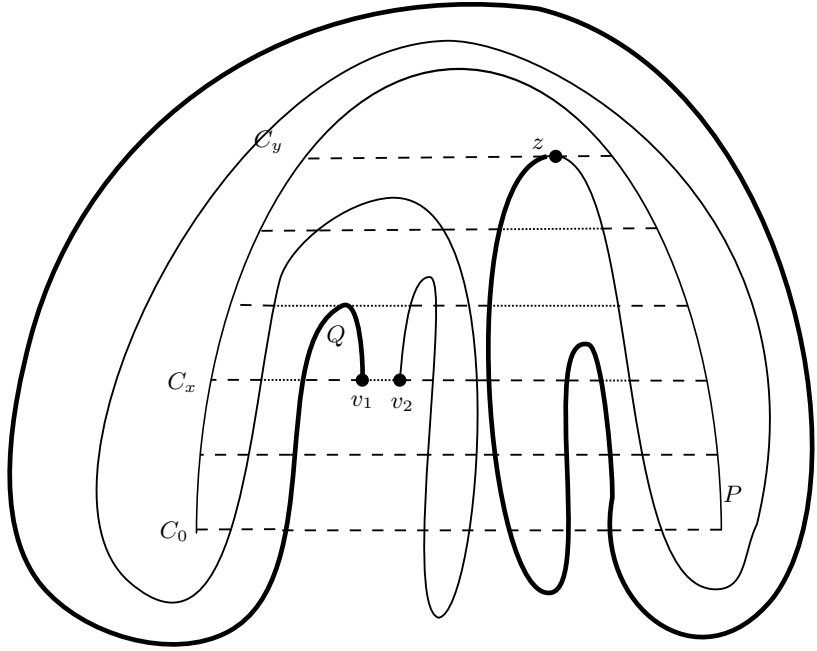

{\footnotesize 
\begin{center}
\svg{0.65\linewidth}{bendinbend}
\end{center}
}
\caption{Proof of Lemma~\ref{lem:bendinbendmain}. The dotted segments are the $i$-chords, and the strong line is $Q$.}\label{fig:bendinbend}
\end{figure}
\begin{proof}
  We claim that the cycle $C:=C_x[v_1,v_2]\cup P'$ does not enclose
  any vertex $v$ of $P$. Since the endpoints of $P_b$ (which are
  terminals) are not enclosed by $C$, the two subpaths of $P_b$ from
  $v$ to the endpoints of $P_b$ intersect the cycle at two distinct
  points. This is only possible if $P_b=P_{b'}$ and $v_1$, $v_2$ are
  the two intersection points. However, then path $P'$ and the subpath
  of $P_b$ going from $v_1$ to $v_2$ via $v$ forms a cycle, a
  contradiction. Therefore, no vertex of $P$ is enclosed by $C$; in
  particular, the endpoints of $C_i$ are not enclosed by $C$ for any
  $0 \le i \le d$.

  In the rest of the proof, we will consider only the subgraph
  containing only the $d$-bend $B$ and the path $P'$. It is clear that
  if we find the required $d'$-bend $B'$ in this subgraph, then it is
  a $d'$-bend of the original graph as well.

   Let $C^*_i=C_{x+i}$ if $y>x$ and let $C^*_i=C_{x-i}$ if $y<x$.  Let
   $z$ be an arbitrary vertex of $C_{y}\cap P'$ and let $Q$ be the
   subpath $P'[v_1,z]$. Clearly, $Q$ crosses each $C^*_i$ with $1 \le i
   \le d'+1$ an odd number of times. We have seen that $C$ does not
   enclose the endpoints of $C'_i$. Applying
   Lemma~\ref{lem:pathparity0} on the path $C^*_i$ and the subpath $Q$
   of cycle $C$, we get that for every $1\le i \le d'+1$, there is at
   least one subpath $C^*_i[w_1,w_2]$ with $w_1\in Q$, $w_2\not\in Q$,
   and no internal vertex on $P$. We will call such a subpath of
   $C^*_i$ an {\em $i$-chord.} In particular, $C^*_0[v_1,v_2]=C_x[v_1,v_2]$ is a 0-chord, but
   there could be other 0-chords as well. Note that these chords are
   internally vertex disjoint (but two $i$-chords can share an
   endpoint), thus they have a natural ordering along $Q$. Let us
   observe that if an $i$-chord and a $j$-chord are consecutive in
   this ordering, then $|i-j|\le 1$: this is because then there is a
   face whose boundary contains both chords and no such face is
   possible if $|i-j|>1$.

   Let $C'_i=C^*_i[x'_i,y'_i]$ be the first $i$-chord in this
   ordering. We claim that for every $1\le i \le d'$, the orientation
   of $C'_i$ (i.e., whether it goes from $x'_i$ to $y'_i$ or the other
   way around) and the orientation of $C'_{i+1}$ are opposite.  Let
   $K$ be the last chord before $C'_i$. We claim that $K$ is an
   $(i-1)$-chord. By our observation in the previous paragraph, it is
   either an $(i-1)$-chord, $i$-chord, or $(i+1)$-chord. As $C'_i$ is
   the first $i$-chord, $K$ cannot be an $i$-chord. If $K$ is an
   $(i+1)$-chord, then there has to be another $i$-chord between
   $C'_0$ and $K$ (as the difference is at most one between
   consecutive chords), again contradicting the choice of $C'_i$. Thus
   $K$ is an $(i-1)$-chord. Recall that we are considering only the
   subgraph consisting of the $d$-bend $B$ and the path
   $P'$. Therefore, there is a face $F$ whose boundary contains both
   $K$ and $C'_i$.  Observe that $C'_i$ and $C'_{i+1}$ have the same
   orientation along the the boundary of $F$ and this orientation
   depends on the parity of $i$. Note furthermore that  face $F$ is
   enclosed by the undirected cycle $C'_0\cup P'[x'_0,x'_{i}]\cup C'_{i}
   \cup P'[y'_i,y'_{0}]$ and it is {\em not} enclosed by the cycle
   $C'_i\cup P[x'_i,y'_i]$. Therefore, the orientation of $C'_i$ along
   the cycle $C'_i\cup P[x'_i,y'_i]$ is the opposite of its orientation
   along the boundary of the face $F$, that is, this orientation also
   depends on the parity of $i$.  This proves that the orientation of
   $C'_i$ and $C'_{i+1}$ are opposite for every $1\le i \le d'$.  If
   the orientation of $C'_0$ and $C'_1$ are also opposite (see Figure~\ref{fig:bendinbend2} for an example), then we get
   a $d'$-bend $B'=(P;C'_0,C'_1,\dots,C'_{d'})$ and we are done. If
   the orientation of $C'_0$ is the same as the orientation of $C'_1$
   (implying that it is the opposite of $C'_2$), then we get a
   $d'$-bend $B'=(P;C'_0,C'_2,C'_3,\dots,C'_{d'+1})$.
\end{proof}

\begin{figure}
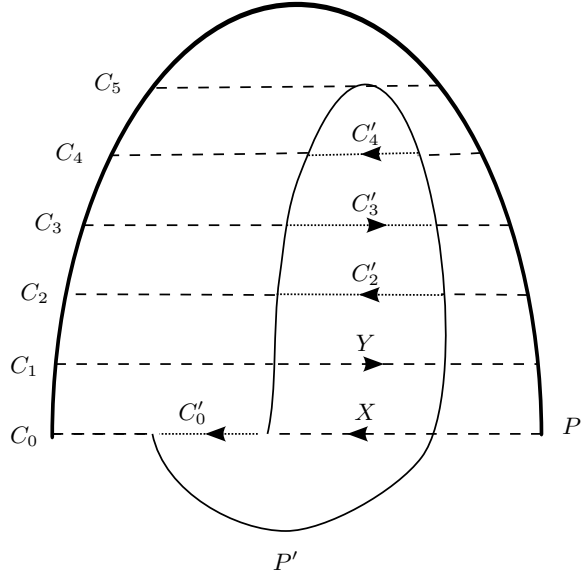

{\footnotesize 
\begin{center}
\svg{0.5\linewidth}{bendinbend2}
\end{center}
}
\caption{Lemma~\ref{lem:bendinbendmain}: An example where path $P'$ starts on $C_0$, intersects $C_5$, and there is a 3-bend $(P;C'_0,C'_2,C'_3,C'_4)$. Note that $X$ cannot be a chord as it intersects $C'_0$ and $Y$ cannot be the chord after $C'_0$ as it has the wrong orientation.}\label{fig:bendinbend2}
\end{figure}

Note that the example on Figure~\ref{fig:bendinbend2} that the
requirement $|x-y|\ge d'+2$ in Lemma~\ref{lem:bendinbendmain} is
tight: path $P'$ intersects $C_0$, $\dots$, $C_5$, but there is only a
3-bend enclosed by $P'$.

\begin{defin}
  Let $B=(P;C_0, \dots,C_d)$ be a $d$-bend in a solution $P_1$, $\dots$, $P_k$. A {\em segment} (with
  respect to $B$) is a subpath $S$ of some $P_b$ with endpoints on $C_0$, enclosed by
  $B$, and no internal vertex on $C_0$. A segment of path $Q$ is a
  segment $S$ with respect to $B$ that is a subpath of $Q$.  We say
  that a segment $S$ with endpoints $q_1,q_2\in C_0$ (strictly)
  encloses $X$ if the undirected cycle $S\cup C_x[q_1,q_2]$ (strictly) encloses
  $X$.  Two segments are {\em nested} if one encloses the other; a set
  of segments is nested if they are pairwise nested.  If segment
  $S_1$ encloses segment $S_2$, then we say that $X$ is {\em between}
  $S_1$ and $S_2$ if $X$ is enclosed by $S_1$ and no vertex of $X$ is
  strictly enclosed by $S_2$.
\end{defin}

In particular, $P$ is a segment of bend $B=(P;C_0, \dots,C_d)$. As the
chords are free, every segment is arc disjoint from every chord. Note
that two distinct segments of a path $P_b$ are not necessarily
disjoint: they can share endpoints. However, they cannot share both
endpoints (otherwise they would form a cycle in the path $P_b$) and
cannot share any internal vertices (since every internal vertex of the
path $P_b$ has exactly two arcs of $P_b$ incident to it).

The parity argument of Lemma~\ref{lem:pathparity0} can be stated with
a chord $C_i$ playing the role of $L$ and a segment playing the role of
the cycle $C$:
\begin{lemma}\label{lem:pathparity}
  Let $B=(P;C_0, \dots, C_d)$ be a $d$-bend, let $S$ be a segment, and
  let $Q$ be a subpath of $S$ with endpoints on $C_x$ and $C_y$. For
  every $x< i <y$, there are vertices $w_1,w_2\in S\cap C_i$ such that $w_1\in
  Q$, $w_2\not\in Q$, $C_i[w_1,w_2]$ is enclosed by $S$ and has no internal vertex on
  $S$, and $S[w_1,w_2]$ includes an endpoint of $Q$.
\end{lemma}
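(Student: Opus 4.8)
The plan is to derive this from a single application of Lemma~\ref{lem:pathparity0}, exactly as the parity arguments in the proofs of Lemmas~\ref{lem:concentricbend2} and~\ref{lem:bendinbendmain} do. Let $q_1,q_2\in C_0$ be the endpoints of the segment $S$, let $C$ denote the undirected cycle $S\cup C_0[q_1,q_2]$, and recall that ``enclosed by $S$'' is by definition ``enclosed by $C$''. I would invoke Lemma~\ref{lem:pathparity0} with $C$ in the role of the cycle, the given subpath $Q$ in the role of $Q$ (it is a subpath of $S$, hence of $C$), and the chord $C_i$ in the role of the path $L$. The produced vertices $w_1\in Q\cap C_i$ and $w_2\in(C\setminus Q)\cap C_i$ then already lie on $S$, since $C_i$, being a chord with $i\ge 1$, is vertex-disjoint from $C_0$; and $C_i[w_1,w_2]$ is by construction enclosed by $C$, equivalently by $S$, with no internal vertex on $C$, hence none on $S$. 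Finally, $w_1$ is an interior vertex of $Q$ (an endpoint of $Q$ lies on $C_x$ or $C_y$, which $C_i$ avoids) while $w_2\notin Q$, so travelling along $S$ from $w_1$ to $w_2$ one must leave $Q$ through one of its endpoints, giving $S[w_1,w_2]$ an endpoint of $Q$ as the statement demands.

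What remains is to check the three hypotheses of Lemma~\ref{lem:pathparity0}. First, $C_i$ is a free chord, so it shares no arc with $S$, and it is vertex-disjoint from every other chord, hence shares no arc with $C_0$ and avoids the endpoints of $Q$ (which lie on $C_x$ and $C_y$). Second, the endpoints $x_i,y_i$ of $C_i$ lie on $P$, and I would argue they are not enclosed by $C$: since $S$ is enclosed by the bend $B$, the whole cycle $C$ lies in the closed region bounded by $C_0\cup P$, so the region it encloses is contained in that region, and as $P$ lies on the boundary of that region no vertex of $P$ is strictly enclosed by $C$. Third --- the only substantial point --- $C_i$ must cross $Q$ an odd number of times: here I would use the nesting structure of $B$, namely that the undirected cycle $D_i:=C_i\cup P[x_i,y_i]$ separates $C_x$ from $C_y$. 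Indeed, $x<i$ places $C_x$ strictly outside $D_i$ (the endpoints of $C_x$ on $P$ precede $x_i$ and succeed $y_i$ in the order $x_0,\dots,x_d,y_d,\dots,y_0$, so they miss $P[x_i,y_i]$, and the enclosure relations of the bend put $C_x$ outside $D_i$), while $i<y$ places $C_y$ strictly inside $D_i$. Hence the two endpoints of $Q$ lie on opposite sides of $D_i$, so $Q$ crosses $D_i$ an odd number of times; since $Q$ is a subpath of a path of the solution it cannot cross $P$ (it is either disjoint from $P$ or a subpath of the same solution path, and a path does not cross itself), so all of these crossings are crossings of $Q$ with $C_i$.

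The step I expect to be the main obstacle is this last hypothesis: turning ``$D_i$ separates $C_x$ from $C_y$'' into a clean argument requires carefully combining the nesting of the cycles $C_j\cup P[x_j,y_j]$ with the order of chord endpoints along $P$, and one also has to dispose of a few degenerate configurations --- an endpoint of $Q$ lying on $P$, the segment $S$ coinciding with $P$ (so that $C$ becomes the outer cycle $C_0\cup P$ of the bend and $x_i,y_i$ sit on $C$ rather than outside it), and the usual distinction between transversal crossings and mere touchings, which is precisely the bookkeeping that Lemma~\ref{lem:pathparity0} is designed to absorb.
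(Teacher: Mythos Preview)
Your proposal is correct and follows exactly the paper's approach: apply Lemma~\ref{lem:pathparity0} with $L=C_i$ and $C=S\cup C_0[q_1,q_2]$, then read off the conclusions. The paper's proof is in fact a terse three-line sketch that asserts ``Clearly, $Q$ crosses $C_i$ an odd number of times'' without justification and does not verify the hypotheses of Lemma~\ref{lem:pathparity0} at all; your write-up supplies precisely the details (odd crossing count via the separating cycle $D_i$, endpoints of $C_i$ not enclosed, the $S=P$ boundary case) that the paper omits.
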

\begin{proof}
  Clearly, $Q$ crosses $C_i$ an odd number of times.  Let $v_1$ and
  $v_2$ be the endpoints of $S$ on $C_0$.  We apply
  Lemma~\ref{lem:pathparity0} on the line $L=C_i$ and the cycle
  $C=S\cup C_0[v_1,v_2]$.  As $w_1\in Q$ and $w_2\not\in Q$, the path
  $S[w_1,w_2]$ includes an endpoint of $Q$.
\end{proof}

In the following lemmas, we prove certain bounds on $d$-bends under
the assumption that the solution is unique.  Note that the following
proof and the proof of Lemma~\ref{lem:nestedrerouting} are the only
places where we directly use the fact that a solution is unique; all
the other arguments build on these two proofs.
\begin{lemma}\label{lem:bendinbendx}
  Consider a unique solution $P_1$, $\dots$, $P_k$ and let
  $(P;C_0, \dots,C_d)$ be a terminal-free $d$-bend of type $t$
  appearing on $P_b$. If no internal vertex of $C_0$ is on $P_b$, then
  $d\le f(k,t-1)+2$.
\end{lemma}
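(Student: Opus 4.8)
The statement says: if $(P;C_0,\dots,C_d)$ is a terminal-free $d$-bend of type $t$ appearing on $P_b$, and $P_b$ has no internal vertex on $C_0$, then $d\le f(k,t-1)+2$. The idea is to suppose $d > f(k,t-1)+2$ and derive a contradiction by producing a shorter solution (or a different solution, violating uniqueness). Since the bend appears on $P_b$, the subpath $P$ is a part of $P_b$ and is enclosed by the bend; the hypothesis that no internal vertex of $C_0$ is on $P_b$ means $P_b$ meets $C_0$ only at $x_0$ and $y_0$, so $P_b$ enters the region bounded by $C_0\cup P$ through $x_0$, stays inside (as the subpath $P$), and leaves through $y_0$. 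So the portion of $P_b$ strictly inside the bend is exactly the segment $P$ itself. The plan is to argue that the last few chords $C_{d-1}, C_d$ are not crossed by any part of the solution, and then one of these two chords gives a shortcut for $P_b$.

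First I would examine which parts of the solution can lie inside the bend. By the definition of type $t$ and the fact that $B$ appears on $P_b$, at most $t-1$ paths other than $P_b$ have a vertex enclosed by $B$ (counting $P_b$ gives $t$). Consider the innermost chords $C_{d-1}$ and $C_d$. If some segment $S$ of some path $P_{b'}$ (with $b'\neq b$, or a segment of $P_b$ other than $P$ — but there are none since $P_b$ meets $C_0$ only at $x_0,y_0$) crosses $C_{d-1}$, then by Lemma~\ref{lem:bendinbendmain} applied with $C_x=C_0$ and $C_y=C_{d-1}$ (noting $|x-y|=d-1$), that segment induces a terminal-free $(d-3)$-bend $B'$ appearing on $P_{b'}$. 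The key point is that this bend $B'$ has type at most $t-1$: any path enclosed by $B'$ is enclosed by $B$, but $P_b$ itself is not enclosed by $B'$ because $P$ is vertex-disjoint from $P'=S$ and $P$ reaches $C_0$, so $P$ is not enclosed by the cycle $C_0[v_1,v_2]\cup S$ that defines $B'$. Hence by the induction hypothesis (Lemma~\ref{lem:bendbound} for $t-1$), we would need $d-3 \le f(k,t-1)$, i.e., $d\le f(k,t-1)+3$; refining the counting (the bend $B'$ actually reaches all the way to $C_{d-1}$, giving a slightly better bound, or using that it must reach $C_{d}$ if $S$ crosses $C_{d-1}$ and the chords alternate) gives $d\le f(k,t-1)+2$. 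The harder case is that no segment of the solution crosses $C_{d-1}$ or $C_d$: then the region between the undirected cycle $C_{d-1}\cup P[x_{d-1},y_{d-1}]$ and its interior contains no part of the solution except the subpath $P[x_{d-1},y_{d-1}]$ of $P_b$. Since $C_{d-1}$ and $C_d$ have opposite orientations, exactly one of them, say $C_j$ with $j\in\{d-1,d\}$, is oriented so that it can replace the detour $P[x_j,y_j]$ of $P_b$: the concatenation of $P_b$ up to $x_j$, then $C_j$ in the appropriate direction, then $P_b$ from $y_j$ onward is again a directed walk, hence contains a directed path $P_b'$ from $s_b$ to $t_b$. Because $C_j$ is a free chord and its interior and the region it cuts off contain no other solution path, $P_b'$ together with the unchanged $P_i$ ($i\neq b$) is still a feasible solution. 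It differs from the original on at least one arc (the arcs of $P[x_j,y_j]$, which is nonempty since $x_j\neq y_j$), so the solution is not unique — contradicting the hypothesis.

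The main obstacle I anticipate is the bookkeeping in the first case: verifying that the $(d')$-bend $B'$ produced by Lemma~\ref{lem:bendinbendmain} genuinely has type strictly less than $t$, and getting the additive constant down to exactly $+2$ rather than $+3$ or $+4$. The type bound requires arguing carefully that $P_b$ is not among the paths enclosed by $B'$; this uses the vertex-disjointness of $P$ and the segment $S$, plus the fact (established inside the proof of Lemma~\ref{lem:bendinbendmain}) that the cycle $C_0[v_1,v_2]\cup S$ does not enclose any vertex of $P$. For the constant, one should note that if a segment crosses $C_{d-1}$ then by the alternating-chord structure it must also interact with $C_d$, or alternatively one applies Lemma~\ref{lem:bendinbendmain} with $C_y=C_d$ directly ($|x-y|=d$) to get a $(d-2)$-bend, yielding $d-2\le f(k,t-1)$. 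The second (shortcut) case is clean once one checks the directedness of the rerouted walk, which is exactly where the alternating orientation of the chords is used: one of the two innermost chords always points the right way.
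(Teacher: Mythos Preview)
Your proposal is essentially the same argument as the paper's, and the main ideas are correct: either one of the two innermost chords is unused by the solution and serves as a shortcut (contradicting uniqueness), or some segment of a different path $P_{b'}$ hits that chord and Lemma~\ref{lem:bendinbendmain} produces a bend of strictly smaller type, to which the induction hypothesis applies.

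There is one small but real slip. You write that the induction hypothesis gives $d-3 \le f(k,t-1)$, yielding $d\le f(k,t-1)+3$, and then go looking for refinements. But Lemma~\ref{lem:bendbound} says there is \emph{no} $f(k,t-1)$-bend of type $t-1$, so the bound is strict: $d' < f(k,t-1)$, hence $d' \le f(k,t-1)-1$. With $d'\ge d-3$ this already gives $d\le f(k,t-1)+2$; no refinement is needed, and your suggested refinements (that a segment hitting $C_{d-1}$ must also hit $C_d$, or that the bend ``reaches all the way'') are either unnecessary or not generally true.

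The paper also organizes the two cases slightly more cleanly than you do: rather than first asking whether any segment meets $C_{d-1}$, it first picks the chord $C_i\in\{C_{d-1},C_d\}$ whose orientation allows the shortcut, and then asks whether \emph{that} chord is intersected by a segment of some $P_{b'}$ with $b'\neq b$. This avoids any need to argue topologically that hitting $C_d$ forces hitting $C_{d-1}$, and it makes the arithmetic come out to exactly $+2$ with no fuss (since $i\ge d-1$ gives $d'\ge i-2\ge d-3$). Your type-drop argument (that $B'$ cannot enclose $P_b$ because $S$ and $C_0[v_1,v_2]$ are disjoint from $P_b$) is exactly what the paper does.
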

\begin{proof}
  Observe that $P$ is the only segment of $P_b$ with respect to $B$:
  as $C_0$ has no internal vertex on $P_b$, the endpoints of a segment
  $S$ of $P_b$ have to be $x_0$ and $y_0$, and therefore if $S$ and
  $P$ are different, they would form a cycle in $P_b$, a
  contradiction. This means that $C_d$ and $C_{d-1}$ have no internal
  vertex on $P_b$. Now for some $i\in \{d,d-1\}$, the orientation of
  $C_i$ is such that paths $P[x_0,x_i]$, $C_i$, $P[y_0,y_i]$ can be
  concatenated to obtain a directed path $P'$. If no internal vertex
  of $C_i$ is used by the solution, then we can replace $P$ by $P'$ in
  $P_b$ to obtain a new solution, contradicting the assumption that
  $P_1$, $\dots$, $P_k$ is unique. Thus $C_i$ is intersected by a
  segment $S$ of some $P_{b'}$ with $b\neq b'$. Now
  Lemma~\ref{lem:bendinbendmain} implies that there is a $d'$-bend
  $B'=(S;C'_0,\dots,C'_{d'})$ with $d'=i-2\ge d-3$ and $C'_0$ being
  the subpath of $C_0$ connecting the endpoints of $S$. Observe that
  $S\cup C'_0$ is enclosed by $B$, thus the type of $B'$ is at most
  the type of $B$.  In fact, the type of $B'$ is strictly smaller: as
  $S$ and $C'_0$ are disjoint from $P_b$, no vertex of $P_b$ is
  enclosed by $B$. By the induction hypothesis of
  Lemma~\ref{lem:bendbound}, this implies $d'<f(k,t-1)$ and therefore
  $d\le f(k,t-1)+2$ follows from $d'\ge d-3$.
\end{proof}

\begin{lemma}\label{lem:bendinbendxx}
  Consider a unique solution $P_1$, $\dots$, $P_k$ and let $B=(P,C_0,
  \dots,C_d)$ be a terminal-free $d$-bend of type $t$ appearing on some
  $P_{b}$. Let $Q$ be an subpath of  $P_{b}$ with endpoints $v_1,v_2\in
  C_x$ that intersects $C_y$. If $C_x[v_1,v_2]$
  has no internal vertex on $P_b$ and the cycle $C_x[v_1,v_2]\cup Q$
  does not enclose any terminals, then $|x-y|\le f(k,t-1)+4$ holds.
\end{lemma}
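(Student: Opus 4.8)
The plan is to mirror the proof of Lemma~\ref{lem:bendinbendx}, now using Lemma~\ref{lem:bendinbendmain} twice together with the induction hypothesis for type $t-1$. I would argue by contradiction, assuming $|x-y|\ge f(k,t-1)+5$; by the symmetry of a bend under reversing the indexing $C_0,\dots,C_d$ I may assume $y>x$.

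First I would dispose of the degenerate case $Q\subseteq P$. Then the endpoints $v_1,v_2$ of $Q$ lie on $C_x\cap P=\{x_x,y_x\}$, so $Q=P[x_x,y_x]$, and $(P[x_x,y_x];C_x,C_{x+1},\dots,C_d)$ is a terminal-free $(d-x)$-bend appearing on $P_b$ whose outer chord $C_x$ (which here equals $C_x[v_1,v_2]$) has no internal vertex on $P_b$; Lemma~\ref{lem:bendinbendx} gives $d-x\le f(k,t-1)+2$, hence $|x-y|=y-x\le d-x\le f(k,t-1)+2$, a contradiction. If $Q$ and $P$ overlap only partially, then one of $v_1,v_2$ is forced onto $\{x_x,y_x\}$ (as $C_x$ meets $P$ nowhere else), and truncating $Q$ at its first and last vertices on $P$ reduces to this case and to the main case below. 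So from now on I assume $Q$ is vertex-disjoint from $P$.

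Now I would apply Lemma~\ref{lem:bendinbendmain} with $P'=Q$: its hypotheses hold because the internal vertices of $C_x[v_1,v_2]$ avoid $P_b\supseteq Q$, the cycle $C_x[v_1,v_2]\cup Q$ is terminal-free by assumption, and $|x-y|\ge d'+2$ for $d':=|x-y|-2\ge f(k,t-1)+3$. This yields a terminal-free $d'$-bend $B'=(Q;C'_0,\dots,C'_{d'})$ with $C'_0=C_x[v_1,v_2]$. Since the internal vertices of $C_x[v_1,v_2]$ avoid $P_b$ and $v_1,v_2$ are the endpoints of $Q$, the path $Q$ is the \emph{only} segment of $P_b$ with respect to $B'$ --- any other would be a second subpath of the simple path $P_b$ with the same endpoints $v_1,v_2$. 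I would then run the uniqueness step of Lemma~\ref{lem:bendinbendx}: for one of $i\in\{d',d'-1\}$ the orientation of $C'_i$ is compatible with splicing it into $P_b$ in place of $Q$ to form a directed path; if no internal vertex of $C'_i$ were used by the solution, this reroute would be a different solution, contradicting uniqueness. Hence some segment $S$ with respect to $B'$ meets an internal vertex of $C'_i$, and since $Q$ is the only segment of $P_b$ on $B'$, this $S$ is a subpath of some $P_{b'}$ with $b'\ne b$, so vertex-disjoint from $P_b$.

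Finally I would apply Lemma~\ref{lem:bendinbendmain} a second time, inside $B'$, to $S$ (which runs between two points of $C'_0$ and meets $C'_i$ with $i\ge d'-1$): using $S\cap Q=\emptyset$, the relevant subpath of the free chord $C'_0\subseteq C_x$ is clean on $S\cup Q$, and the cycle it forms with $S$ lies inside $B'$ and is hence terminal-free. This gives a terminal-free $d''$-bend $B''=(S;C''_0,\dots,C''_{d''})$ with $d''\ge(d'-1)-2=d'-3\ge f(k,t-1)$ and $C''_0\subseteq C_x[v_1,v_2]$. I would then check that $B''$ is enclosed by $B$ --- using that $C_x$ lies inside $B$ and that the disc bounded by $C_x[v_1,v_2]\cup Q$ stays inside $B$, which is precisely where the hypotheses ``$C_x[v_1,v_2]$ avoids $P_b$'' and ``$C_x[v_1,v_2]\cup Q$ is terminal-free'' are needed --- while $P_b$ is disjoint from $C''_0\cup S$ and its (terminal) endpoints lie outside this cycle, so no vertex of $P_b$ is enclosed by $B''$. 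Therefore the type of $B''$ is at most $t-1$, so the induction hypothesis of Lemma~\ref{lem:bendbound} gives $d''<f(k,t-1)$, contradicting $d''\ge f(k,t-1)$; hence $|x-y|\le f(k,t-1)+4$. The main obstacle is exactly this last bookkeeping --- verifying that the two successively extracted sub-bends stay enclosed by $B$, so the type is non-increasing and strictly drops once $P_b$ is removed --- and the partial-overlap case needs a similar short separate check.
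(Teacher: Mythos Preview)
Your proposal is correct and follows essentially the same route as the paper, but you are working harder than necessary. The paper's proof is two lines: apply Lemma~\ref{lem:bendinbendmain} once to obtain a terminal-free $d'$-bend $B'=(Q;C'_0,\dots,C'_{d'})$ with $d'\ge |x-y|-2$ and $C'_0=C_x[v_1,v_2]$, and then invoke Lemma~\ref{lem:bendinbendx} directly on $B'$ (since $C'_0$ has no internal vertex on $P_b$) to get $d'\le f(k,t-1)+2$. What you do in your third and fourth paragraphs---the uniqueness reroute, the second segment $S$ of some $P_{b'}$ with $b'\neq b$, and the second application of Lemma~\ref{lem:bendinbendmain}---is exactly the proof of Lemma~\ref{lem:bendinbendx}, unpacked inside $B'$. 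So you can replace all of that by a single citation.

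Your extra care about the case where $Q$ overlaps $P$ and about checking that the nested bends stay enclosed by $B$ (so the type does not increase) is more explicit than the paper, which silently relies on the fact that in every use of Lemma~\ref{lem:bendinbendxx} the path $Q$ is vertex-disjoint from $P$ and enclosed by $B$. Your bookkeeping is not wrong, but note that once you cite Lemma~\ref{lem:bendinbendx} as a black box, most of it is already packaged there.
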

\begin{proof}
  If $C_x[v_1,v_2]$ has no internal vertex on $P_b$, then
  Lemma~\ref{lem:bendinbendmain} implies the existence of a
  terminal-free $d'$-bend $B'=(Q;C'_0,\dots,C'_{d'})$ with $d'\ge
  |x-y|-2$ and $C'_0=C_x[v_1,v_2]$. However, since $C'_0$ has no
  internal vertex on $P_b$, Lemma~\ref{lem:bendinbendx} implies $d'\le
  f(k,t-1)+2$ and the claim follows.
\end{proof}

\begin{lemma}\label{lem:bendinbend5}
  Consider a unique solution $P_1$, $\dots$, $P_k$ and let $(P;C_0,
  \dots,C_d)$ be a terminal-free $d$-bend of type $t$ appearing on
  some $P_b$. Let $S$ be a segment of $P_b$ that intersects $C_y$ for
  some $y\ge f(k,t-1)+6$.  Then $S$ encloses another segment of $P_b$
  that intersects $C_{y'}$ for $y'=y-f(k,t-1)-5 \ge 1$.
\end{lemma}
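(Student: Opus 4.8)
The plan is to consider the segment $S$ of $P_b$, which by assumption reaches $C_y$, and to look at the nested structure of segments of $P_b$ enclosed by $S$. The key idea is that $S$ cannot ``turn around'' and come back to $C_0$ without first going deep into the bend, but it also cannot avoid interacting with the chords it crosses. Since $S$ is a segment of $P_b$, it is a subpath of the same path $P_b$ on which $P$ lives; the rest of $P_b$ between the endpoints of $S$ and the vertices $x_0, y_0$ must eventually return to $C_0$, which forces further segments of $P_b$ to be present inside $S$.

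First I would fix a vertex $z$ of $S \cap C_y$ and split $S$ at $z$ into two subpaths, each running from an endpoint of $S$ on $C_0$ to $z$; each of these crosses every $C_i$ with $1 \le i \le y$ an odd number of times. I would then trace $P_b$ forward from $z$: since $P_b$ must eventually leave the region enclosed by $B$ (its terminal endpoints are not enclosed by $B$, because $B$ is terminal-free), after passing through $z$ the path $P_b$ comes back out to $C_0$, and the maximal subpath of $P_b$ containing $z$ with endpoints on $C_0$ and no internal vertex on $C_0$ is itself a segment $S'$ enclosed by $B$ --- but this could be $S$ itself. The real work is to produce a \emph{different} segment, strictly enclosed by $S$, that still reaches nearly as deep. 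For this I would apply Lemma~\ref{lem:pathparity} (or Lemma~\ref{lem:bendinbendmain}) to a suitable subpath $Q$ of $S$: if $S$ itself, viewed together with a subpath $C_x[v_1,v_2]$, does not enclose any terminals and $C_x[v_1,v_2]$ has no internal vertex on $P_b$, then Lemma~\ref{lem:bendinbendxx} bounds how deep a subpath of $P_b$ with both endpoints on the \emph{same} $C_x$ can reach --- at most $f(k,t-1)+4$ chords past $C_x$. The contrapositive of this is precisely what gives us progress: since $S$ reaches all the way to $C_y$ with $y \ge f(k,t-1)+6$, it cannot be that $S$ simply dips to depth $y$ and returns to $C_0$ while staying ``alone''; something inside it must carry the depth.

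Concretely, I would argue as follows. Consider the innermost chord $C_y$ that $S$ meets and the first point $z$ on $S$ lying on it. Walk along $P_b$ just past $z$ until $P_b$ first returns to some chord $C_j$; by Lemma~\ref{lem:pathparity} applied to the subpath of $S$ from its entry point into the bend up to $z$, the returns of $P_b$ to the chords $C_i$ with $i$ slightly less than $y$ are forced, and among the resulting subpaths one gets a segment (maximal piece with endpoints on $C_0$) that is strictly enclosed by $S$ and whose deepest chord has index at least $y - f(k,t-1) - 5$; the loss of $f(k,t-1)+5$ comes from combining the ``$+2$'' slack in the bend-in-bend construction of Lemma~\ref{lem:bendinbendmain} with the ``$f(k,t-1)+4$'' bound of Lemma~\ref{lem:bendinbendxx} and one further unit for passing from a subpath with endpoints on an inner $C_x$ to a genuine segment with endpoints on $C_0$. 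The inequality $y \ge f(k,t-1)+6$ guarantees $y' = y - f(k,t-1) - 5 \ge 1$, so the produced segment genuinely intersects a chord other than $C_0$.

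The main obstacle I anticipate is \textbf{bookkeeping the additive constant} $f(k,t-1)+5$ precisely and making sure the newly produced segment is distinct from $S$ and strictly enclosed, rather than merely enclosed. The distinctness must be extracted from the fact that two segments of the same path $P_b$ cannot share both endpoints (else they form a cycle in $P_b$) together with the orientation/parity arguments already developed; and the ``does not enclose any terminals'' hypotheses required to invoke Lemmas~\ref{lem:bendinbendxx} and~\ref{lem:bendinbendmain} must be verified, which should follow from $B$ being terminal-free since everything enclosed by $S$ is enclosed by $B$. A secondary subtlety is handling the case $b' = b$ in Lemma~\ref{lem:bendinbendmain}, i.e.\ a segment of $P_b$ inducing a bend on $P_b$ itself --- but this is exactly the situation Lemma~\ref{lem:bendinbendmain} was stated to cover, so it should go through directly.
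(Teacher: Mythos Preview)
Your high-level intuition is correct --- the proof does use Lemma~\ref{lem:pathparity} on a subpath of $S$ reaching $z\in C_y$, and Lemma~\ref{lem:bendinbendxx} is the engine that forces another segment to appear --- but the concrete mechanism you describe does not work, and two pieces are missing.

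First, the argument ``walk along $P_b$ just past $z$ until it first returns to some chord $C_j$'' does not produce an enclosed segment; nothing prevents that return from being part of $S$ itself. The paper's argument is cleaner: apply Lemma~\ref{lem:pathparity} \emph{at level $y'$} (not near $y$) to the subpath $Q=S[s_1,z]$, obtaining vertices $w_1\in Q$, $w_2\in S\setminus Q$ on $C_{y'}$ with $C_{y'}[w_1,w_2]$ enclosed by $S$ and internally disjoint from $S$. Now there is a dichotomy. If some internal vertex of $C_{y'}[w_1,w_2]$ lies on $P_b$, it must lie on a segment $S'\neq S$ of $P_b$ enclosed by $S$, and $S'$ intersects $C_{y'}$ --- done. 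If not, then $C_{y'}[w_1,w_2]$ has no internal vertex on $P_b$, and since $S[w_1,w_2]$ contains $z\in C_y$, Lemma~\ref{lem:bendinbendxx} gives $|y-y'|\le f(k,t-1)+4$, contradicting $y-y'=f(k,t-1)+5$. Your accounting of the constant is therefore off: the ``$+5$'' is simply one more than the $f(k,t-1)+4$ bound of Lemma~\ref{lem:bendinbendxx}, not a combination with the ``$+2$'' from Lemma~\ref{lem:bendinbendmain}.

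Second, you are missing the case $S=P$. Here Lemma~\ref{lem:pathparity} is not available, because the endpoints $x_{y'},y_{y'}$ of $C_{y'}$ lie on $P=S$ and are thus enclosed by $S$. The paper handles this separately: if no segment of $P_b$ meets the interior of $C_{y'}$, then $(P[x_{y'},y_{y'}];C_{y'},\dots,C_d)$ is a $(d-y')$-bend with $d-y'\ge f(k,t-1)+5$ whose first chord has no internal vertex on $P_b$, contradicting Lemma~\ref{lem:bendinbendx}.
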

\begin{proof}
  Let us first verify the claim for the case when $P=S$. If no segment
  of $P_b$ intersects the internal vertices of $C_{y'}$, then
  $B'=(P[x_{y'},y_{y'}];C_{y'},\dots,C_d)$ is a $d'$-bend with $d'=d-y'\ge f(k,t-1)+5$
  such that no internal vertex of $C_{y'}$ is on $P_b$, which would
  contradict Lemma~\ref{lem:bendinbendx}. Thus in the following, we
  assume that $P\neq S$.

  Let $z$ be a vertex of $S\cap C_y$ and let $s_1,s_2$ be the
  endpoints of $S$ on $C_0$. Let $Q$ be the subpath $S[s_1,z]$. By
  Lemma~\ref{lem:pathparity}, there is a subpath $C_{y'}[w_1,w_2]$
  enclosed by $S$ and $w_1\in Q$, $w_2\in S\setminus Q$ and having no
  internal vertex on $S$ (note that $y'\ge 1$ and the endpoints of
  $C_{y'}$ are not enclosed by $S$ as $S\neq P$). This implies that
  $S[w_1,w_2]$ contains $z$ and hence intersects $C_{y}$.  If the
  internal vertices of $C_{y'}[w_1,w_2]$ are intersected by $P_b$,
  then we are done, as every such vertex is on a segment $S'$
  (different from $S$) enclosed by $S$. Otherwise, we apply
  Lemma~\ref{lem:bendinbendxx} on the segment $S[w_1,w_2]$ (note that
  $P\neq S$ implies that $S[w_1,w_2]$ is disjoint from $P$), and
  $y-y'\ge f(k,t-1)+5$ gives a contradiction.
\end{proof}

\subsection{Rerouting nested segments}
Our goal is to show that if we have a large nested set of segments
crossing many chords and having no other segment between them, then we
can simplify the solution by rerouting. However, it seems hard to
ensure the requirement that there are no other segments in between;
therefore, we weaken the requirement by asking that there is a large
area (intersecting many chords) such that there are no other
segments between our nested segments in this area (see
Figure~\ref{fig:rerouting}).
\begin{figure}
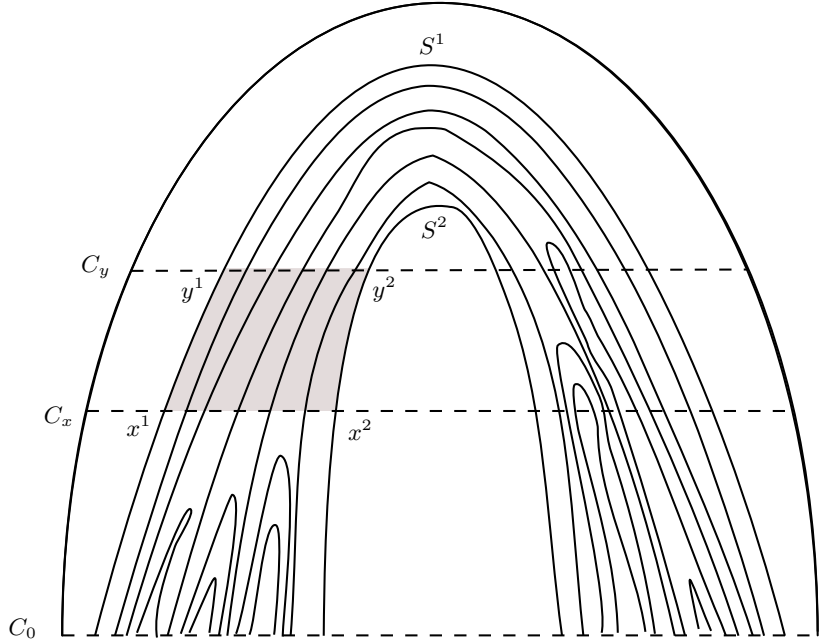

{\footnotesize 
\begin{center}
\svg{0.7\linewidth}{rerouting}
\end{center}
}
\caption{Condition of Lemma~\ref{lem:nestedrerouting}: the segments
intersecting $C_x[x_1,x_2]$ or $C_y[y_1,y_2]$ (i.e., segments entering the gray area) are nested.}\label{fig:rerouting}
\end{figure}

\begin{lemma}\label{lem:nestedrerouting}
  Consider a unique solution $P_1$, $\dots$, $P_k$. Let
  $B=(P;C_0, \dots,C_d)$ be a $d$-bend and let $S^1$ and $S^2$ be two
  nested segments with respect to $B$. Let $x^1,y^1\in S^1$ and
  $x^2,y^2\in S^2$ be vertices with $x^1,x^2\in C_{x}$ and $y^1,y^2\in
  C_y$ for some $y\ge x+2(2^{k}+1)$ such that subpaths $C^*_x:=C_x[x^1,x^2]$ and
  $C^*_y:=C_{y}[y^1,y^2]$ are between $S^1$ and $S^2$. If the set of segments
  intersecting $C^*_{x}$ or $C^*_{y}$ (including $S^1$
  and $S^2$) is nested between $S^1$ and $S^2$, then this set has size
  at most $2^k$.
\end{lemma}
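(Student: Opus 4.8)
The plan is to argue by contradiction: suppose the nested set of segments between $S^1$ and $S^2$ entering the gray area has size more than $2^k$. The key observation is that each segment in this set crosses both $C^*_x$ and $C^*_y$, hence (being a directed path and crossing the alternating chords $C_x,C_{x+1},\dots,C_y$) it actually passes through all the chords $C_i$ with $x \le i \le y$ in the narrow strip between $S^1$ and $S^2$. So we obtain a grid-like structure: more than $2^k$ pairwise disjoint directed segments, each crossing $y-x \ge 2(2^k+1)$ alternating chords, with no other segment of the solution present in this strip. This is exactly the configuration called for by the Adler et al.\ rerouting argument, adapted to the directed setting as sketched in Section~\ref{ss:view:irr}: because the chords alternate in orientation, at each crossing location we have a chord available with whichever orientation we need.

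First I would formalize the strip. The segments, being nested, have a natural linear order $S^1 = T_0, T_1, \dots, T_m = S^2$ by enclosure, with $m > 2^k$. Between consecutive segments $T_j$ and $T_{j+1}$ and between the chords $C_i$ and $C_{i+1}$ there is a well-defined ``grid cell''. Next I would use Lemma~\ref{lem:pathparity} (the parity argument) to locate, on each $C_i$ with $x \le i \le y$, a sub-chord $C_i[a_i,b_i]$ that lies between $S^1$ and $S^2$, is enclosed by $S^1$, and whose endpoints sit on $T_0$ and $T_m$ respectively — so that it genuinely crosses every intermediate segment $T_j$. These sub-chords inherit the alternating orientation from the original bend.

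The rerouting itself: consider the $m+1 > 2^k$ segments $T_0,\dots,T_m$. Each $T_j$ enters the strip through some chord near $C_x$ and leaves near $C_y$. Colour each segment by the vector in $\{1,\dots,k\}$ — wait, rather: each $T_j$ is a subpath of some $P_{b(j)}$; assign to $T_j$ the ``crossing pattern'' recording, for each of the $k$ paths of the solution, whether that path's segment — hmm, more carefully, I would assign to each of the $2^k+1$ relevant chords $C_i$ a subset $A_i \subseteq \{1,\dots,k\}$ of indices of paths whose segments cross $C_i$ inside the strip; since chords alternate, the question is which chord to use to shortcut. The cleanest version: take the innermost chord $C_y[a_y,b_y]$; it is crossed by all $m$ segments $T_1,\dots,T_{m-1}$ plus bounded the two bounding ones. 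For two segments $T_j$, $T_{j'}$ of the \emph{same} path $P_b$ that both cross this chord, we can reroute $P_b$ along the chord, shortcutting the portion of $P_b$ between them — this is possible because the chord orientation can be chosen (by taking $C_y$ or $C_{y-1}$, which differ) to match the direction $P_b$ needs, and because the strip contains no other segments, the rerouted path stays disjoint from all $P_{b'}$, $b'\ne b$. This produces a solution different from $P_1,\dots,P_k$, contradicting uniqueness. So no two of $T_1,\dots,T_{m-1}$ belong to the same path. But then $m-1 \le k$, so $m \le k+1 \le 2^k$, contradicting $m > 2^k$ (and the slack $2(2^k+1)$ in $y-x$ is there precisely to give room for the $O(1)$ loss in choosing which of two consecutive chords to use and for the parity-argument endpoints).

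The main obstacle will be the last step — making the directed rerouting rigorous: showing that when $T_j$ and $T_{j'}$ lie on the same $P_b$, one can genuinely splice $P_b$ along a chord between them to get a \emph{shorter} (or otherwise ``smaller'') solution, that this new walk is a simple directed path, and that it remains disjoint from the other $k-1$ paths. The disjointness uses the hypothesis that the only segments in the gray area are the nested ones $T_0,\dots,T_m$; the directedness needs the alternating orientation of the chords so that whichever way $P_b$ traverses $T_j$ and $T_{j'}$, some chord $C_i[a_i,b_i]$ or $C_{i+1}[a_{i+1},b_{i+1}]$ between them has a compatible orientation; and ``strictly smaller'' is what contradicts uniqueness — here I would actually need to be careful whether ``unique'' gives us a contradiction directly (any new solution contradicts uniqueness) so in fact we only need the rerouted object to be \emph{a valid solution}, not a smaller one, which simplifies matters considerably and is why the lemma is stated for unique solutions rather than for minimal ones.
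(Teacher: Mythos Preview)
Your argument has a genuine gap at the rerouting step. You propose: if two of the nested segments $T_j$ and $T_{j'}$ belong to the same path $P_b$, then shortcut $P_b$ along a chord between them. But the sub-chord from $T_j$ to $T_{j'}$ passes through every intermediate segment $T_{j+1},\dots,T_{j'-1}$, and those may belong to \emph{other} paths $P_{b'}$. So the rerouted $P_b$ is no longer vertex-disjoint from the rest of the solution, and you do not get a valid alternative solution. Your sentence ``because the strip contains no other segments, the rerouted path stays disjoint from all $P_{b'}$'' is exactly where this fails: the strip contains precisely the other nested segments. Consequently the conclusion ``no two $T_j$ are on the same $P_b$, hence $m\le k+1$'' is unsupported; indeed, if such a simple pigeonhole worked the bound would be $k$, not $2^k$.

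The paper's proof is substantially different and explains why the bound really is $2^k$. It first carefully builds a disc $D$ (an enlargement of the obvious region $D_0$ bounded by $S^1,S^2,C^*_x,C^*_y$, needed because the $S_i$ can weave in and out of $D_0$) so that the only parts of the solution inside $D$ are $\ell$ parallel subpaths $Q_1,\dots,Q_\ell$ with endpoints on two opposite boundary arcs. It then invokes the Adler et al.\ rerouting lemma (Lemma~\ref{lem:discrerouting}): if $\ell>2^k$, there is a \emph{different} noncrossing matching of the $2\ell$ endpoints, of size strictly less than $\ell$, that still yields a global solution once realised by paths. Finally it uses the Ding--Schrijver--Seymour cut criterion (Theorem~\ref{th:discrerouting}) to show that this smaller matching \emph{can} be realised by vertex-disjoint directed paths inside $D$; here the hypothesis $y\ge x+2(2^k+1)$ is what guarantees the cut condition for any curve crossing $D$ from $B_x$ to $B_y$. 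The resulting alternative solution contradicts uniqueness. The $2^k$ is intrinsic to the combinatorics of how $k$ paths can thread through $\ell$ parallel strands; it is not an artefact that a sharper local shortcut could remove.
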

The proof of Lemma~\ref{lem:nestedrerouting} requires two tools.  As
an important step in the proof of the undirected irrelevant vertex
rule of Adler et al.~\cite[Lemma 6]{isolde}, it is proved that if
there is a disc in the plane such that a solution of the $k$-disjoint
paths problem that contains nothing else than a set of $\ell > 2^k$
parallel paths, then one can ``redraw'' the solution by replacing the
part of the solution in the disc by introducing a set of strictly less
than $\ell$ new noncrossing edges inside the disc. Then if these new
``virtual'' edges can be actually realized by the edges of the graph
inside the disc, then it follows that we can modify the solution,
and hence it is not unique.

The following lemma is a directed analogue of the statement of Adler
et al.~\cite[Lemma 6]{isolde}. The same proof\footnote{The proof
  appears in the full version, which can be accessed at
  http://www.ii.uib.no/\textasciitilde
  daniello/papers/planarUniqueLinkage.}  works for the directed case;
in fact, the proof in \cite{isolde} first assigns an arbitrary
orientation to each path, thus it is clear that there is a correct
directed solution in the modified graph.
\begin{lemma}\label{lem:discrerouting}
  Let $G$ be a graph embedded in the plane with $k$ pairs of terminals
  with a solution $P_1$, $\dots$, $P_k$. Let $D$ be a closed disc not
  containing any terminals and suppose that $x_1$, $\dots$, $x_\ell$,
  $y_\ell$, $\dots$, $y_1$ appear on the boundary of $D$ (in this
  order) for some $\ell>2^k$. For every $1\le i \le \ell$, let $Q_i$
  be a directed path with endpoints $\{x_i,y_i\}$ (going in arbitrary
  direction) such that these paths are pairwise disjoint and contained
  in $D$. Suppose that every $Q_i$ appears as a subpath of the
  solution and the solution uses no other vertex in $D$. Let $V_S$ and
  $V_E$ be the starting points and end points of all the $Q_i$'s
  (clearly, $|V_S|=|V_E|=\ell$). Then there is a noncrossing matching
  of size strictly less than $\ell$ between $V_S$ and $V_E$ such that
  if $G'$ is the planar graph obtained by removing every arc in $D$
  and adding every $e_j$ oriented from $V_S$ to $V_E$, then there is a
  solution in $G'$.
\end{lemma}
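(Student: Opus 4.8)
We prove Lemma~\ref{lem:discrerouting} by reducing it to its undirected counterpart, namely Lemma~6 of Adler et al.~\cite[Lemma 6]{isolde}, and then adding a small amount of orientation bookkeeping. First I would discard all arc orientations: $G$ becomes an undirected plane graph, the solution becomes an undirected linkage, and inside $D$ there is nothing but the $\ell$ pairwise disjoint noncrossing paths $Q_1,\dots,Q_\ell$ whose endpoints occur as $x_1,\dots,x_\ell,y_\ell,\dots,y_1$ around $\partial D$, so the $Q_i$ are nested. For $\ell>2^k$, \cite{isolde} shows that one may delete every arc drawn inside $D$ and add strictly fewer than $\ell$ noncrossing edges among the endpoints of the $Q_i$ so that the resulting undirected plane graph still has a linkage for the same terminal pairs. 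What remains is (a) to check that the new edges can be taken to join a ``start'' endpoint to an ``end'' endpoint, i.e.\ to respect the bipartition $V_S\cup V_E$, and (b) to orient them from $V_S$ to $V_E$ and verify that the rewired paths are directed; these two points are the only genuinely new content.

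For context I would first recall why $\ell>2^k$ is the right threshold, which is the heart of \cite[Lemma 6]{isolde}. Colour each strand $Q_i$ by the index $c(i)\in[k]$ of the solution path it belongs to. For $m\in\{0,1,\dots,\ell\}$ draw a noncrossing arc inside $D$ between $Q_m$ and $Q_{m+1}$ (separating the nested strands $Q_1,\dots,Q_m$ from $Q_{m+1},\dots,Q_\ell$), close it up outside $D$ into a closed curve $\gamma_m$, and record a subset $A_m\subseteq[k]$ capturing how the solution interacts with $\gamma_m$ --- roughly, the colours $j$ for which $P_j$ has strands on both sides of $\gamma_m$. There are at most $2^k$ possible values of $A_m$ but $\ell+1>2^k$ indices, so $A_m=A_{m'}$ for some $m<m'$; the strands and the outside solution fragments lying strictly between $\gamma_m$ and $\gamma_{m'}$ then close up into substructures that are separate from the rest of the solution and can be deleted, after which $Q_1,\dots,Q_m$ is spliced directly to $Q_{m'+1},\dots,Q_\ell$. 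This yields a noncrossing connection of the strand endpoints with strictly fewer than $\ell$ edges that keeps a solution. The main obstacle is making this splice airtight --- that the removed substructures are genuinely closed and separate, and that the surviving connection is simultaneously noncrossing, of size $<\ell$, and compatible with the fixed outside part of the solution --- but this is exactly what is proved in \cite[Lemma 6]{isolde}, so in the final write-up I would only need to verify that nothing there uses arc directions.

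For the orientation bookkeeping I would follow the remark already present in the proof of \cite{isolde}: the direction in which each $P_i$ is traversed is already determined, since $P_i$ is a directed path. Then every boundary vertex $x_i$ or $y_i$ carries a canonical label: the solution at that vertex either leaves $\partial D$ into the strand in $D$ or arrives from it, and by definition these are precisely the vertices of $V_S$ and of $V_E$ respectively. The re-linking of \cite[Lemma 6]{isolde} ignores orientations and only reconnects boundary vertices along $\partial D$ in a noncrossing fashion; I would check that, carried out on the (already oriented) solution, it joins each $V_S$-vertex only to a $V_E$-vertex, because at a $V_S$-vertex the outside part of the solution arrives and the rewired path must continue by \emph{leaving} along a new edge, and symmetrically at a $V_E$-vertex. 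Hence every new edge $e_j$ has the form $(u,v)$ with $u\in V_S$ and $v\in V_E$, and orienting it from $u$ to $v$ makes each rewired path a directed $s_i$--$t_i$ path in $G'$, which is exactly the statement. Since this last step is routine once the undirected lemma is in hand, the whole proof amounts to citing \cite[Lemma 6]{isolde}, noting its orientation-obliviousness, and adding the short argument just described.
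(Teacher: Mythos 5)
Your proposal is correct and matches the paper's treatment: the paper likewise proves this lemma by citing Adler et al.~\cite[Lemma 6]{isolde} and observing that their proof (which assigns an arbitrary orientation to each path before rerouting) carries over verbatim to the directed setting, so the added edges can be oriented from $V_S$ to $V_E$. Your extra detail on the pigeonhole argument over the $2^k$ subsets and on the $V_S$/$V_E$ bipartition is consistent with, and slightly more explicit than, what the paper records.
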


To use Lemma~\ref{lem:discrerouting} to modify a solution, we have to
show that there are pairwise disjoint directed paths inside the disc
$D$ that correspond to the noncrossing matching. The existence of such
directed paths connecting vertices on the boundary of the disc can be
conveniently shown by the sufficient and necessary condition given by
Ding, Schrijver, and Seymour \cite{DBLP:journals/siamdm/DingSS92}, which will be our second tool
in the proof of Lemma~\ref{lem:nestedrerouting}. We review this
condition next.

Let $G$ be a directed graph embedded in the plane; for simplicity we
will treat only the case when the boundary of the infinite face is a
simple cycle, as it is already sufficient for our purposes. Let
$(r_1,s_1)$, $\dots$, $(r_k,s_k)$ be pairs of terminals on the
boundary such that all $2k$ terminals are different. We say that
$(r_i,s_i)$ and $(r_j,s_j)$ {\em cross,} i.e., they appear in the order
$r_i$, $r_j$, $s_i$, $s_j$ or in the $r_i$, $s_j$, $s_i$, $r_j$ on the
boundary of the infinite face. We say that the {\em noncrossing
  condition} holds if no two pairs cross. We say that the {\em cut
  condition} holds if the following is true for every curve $C$ of the
disc $D$ going from a point on the boundary of $D$ to another point on
the boundary of $D$, intersecting $G$ only in the vertices, and not
intersecting any terminal:
\begin{quote}
If $C$ separates $(r_{i_1},s_{i_1})$, $\dots$, $(r_{i_n},s_{i_n})$ in this order, then $C$ contains vertices $p_1$, $\dots$, $p_n$, in this order so that for each $j=1,\dots,n$:
\begin{itemize}
\item if $r_{i_j}$ is at the left-hand side of $C$, then at least one
  arc of $G$ is entering $C$ at $p_j$ from the left and at least one
  arc of $G$ is leaving $C$ at $p_j$ from the right.
\item if $r_{i_j}$ is at the right-hand side of $C$, then at least one arc of $G$ is entering $C$ at $p_j$ from the right and at least one arc of $G$ is leaving $C$ at $p_j$ from the left.
\end{itemize}
\end{quote}
If is easy to see that both the noncrossing and the cut conditions are necessary for the existence of pairwise vertex-disjoint paths connecting the terminal pairs. The result of Ding, Schrijver, and Seymour states that these two conditions are sufficient:
\begin{theorem}[\cite{DBLP:journals/siamdm/DingSS92}]\label{th:discrerouting}
  Let $(r_1,s_1)$, $\dots$, $(r_k,s_k)$ be pairs of terminals on the
  infinite face of an embedded planar graph. There exist $k$ pairwise
  vertex-disjoint directed paths $P_1$, $\dots$, $P_k$ such that $P_i$
  goes from $r_i$ to $s_i$ if and only if the noncrossing and the cut
  conditions hold.
\end{theorem}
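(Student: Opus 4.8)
The plan is to prove the nontrivial (``if'') direction of Theorem~\ref{th:discrerouting}; necessity of both conditions is immediate, as already noted, since a family of vertex-disjoint dipaths that crossed on the boundary is impossible by planarity, and each dipath must cross any separating curve in the prescribed direction. For sufficiency I would argue by induction on $|V(G)|+|E(G)|$, first performing clean-up reductions: delete any arc or nonterminal vertex that lies on no directed $r_i$--$s_i$ walk avoiding the other terminals (the cut condition cannot be destroyed by such a deletion, since a curve witnessing its failure in the reduced graph would, pushed slightly, witness it in the original), and observe that the noncrossing condition makes the pairs laminar along the boundary cycle. The base case $k=0$ is trivial.

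For the inductive step I would use the laminar order to select an \emph{innermost} pair, say $(r_1,s_1)$: one of the two boundary arcs between $r_1$ and $s_1$, call it $A$, contains no other terminal. Applying the cut condition to curves close to $A$ shows first that there is a directed $r_1$--$s_1$ path avoiding all other terminals (otherwise a Menger-type separating curve would have the wrong arc pattern at some point), and then among all such paths I would fix a canonical one $P_1$, namely the one hugging the arc $A$ as tightly as possible --- the ``$A$-leftmost'' directed path. The point of this choice is that nothing of the remaining instance is forced to cross $P_1$: deleting $V(P_1)$ from $G$ leaves a planar graph $G'$ in which the remaining $k-1$ pairs still lie on one face in the same laminar order, so the noncrossing condition survives; one may need to suppress degree-two boundary vertices created by the deletion, checking each time that neither condition is affected.

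The crux is to show that the cut condition also survives the deletion of $V(P_1)$. Suppose not, and let $C'$ be a curve in $G'$ witnessing the failure for some sub-family of the remaining pairs. I would ``uncross'' $C'$ with $P_1$: wherever $C'$ wants to pass through a vertex of $P_1$ (present in $G$ but absent in $G'$), reroute $C'$ to run alongside $P_1$ toward the arc $A$. Because $P_1$ is a directed path chosen $A$-leftmost, one checks that at every such place there is an arc of $G$ crossing the rerouted curve in the required direction --- this is exactly where the leftmost choice is used, and it is the technical heart of the argument. The result is a curve $C$ in $G$ separating the same sub-family and still failing the arc pattern, contradicting the cut condition for $G$. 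Hence $G'$ with the pairs $(r_2,s_2),\dots,(r_k,s_k)$ satisfies both conditions, so by induction it has vertex-disjoint dipaths $P_2,\dots,P_k$; since these live inside $G'=G-V(P_1)$, the whole family $P_1,\dots,P_k$ is vertex-disjoint, completing the proof.

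The main obstacle, as indicated, is the uncrossing step: making precise the sense in which the $A$-leftmost directed path cannot be ``needlessly'' crossed, and verifying that every detour of $C'$ around $P_1$ still meets an arc of the correct orientation. An alternative organization that avoids explicitly exhibiting $P_1$ is to take a counterexample minimizing $|V(G)|+|E(G)|$ and derive a contradiction by a local reduction (contracting a carefully chosen arc incident to a terminal, or deleting an arc on a tight cut), but the planar uncrossing of cuts remains the essential ingredient either way.
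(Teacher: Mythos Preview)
The paper does not prove Theorem~\ref{th:discrerouting}; it is quoted as a black-box result of Ding, Schrijver, and Seymour~\cite{DBLP:journals/siamdm/DingSS92}, so there is no in-paper argument to compare your proposal against. Your sketch follows a plausible induction-on-innermost-pair strategy and you correctly flag the uncrossing step (showing the cut condition survives deletion of the leftmost $r_1$--$s_1$ path) as the technical crux, which is indeed where the work lies in proofs of this type; but since the present paper only \emph{uses} the statement, it neither confirms nor contradicts your approach.
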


We are now ready to prove Lemma~\ref{lem:nestedrerouting}. Let us
remark that this proof is the second and last place (after
Lemma~\ref{lem:bendinbendx}) where the uniqueness of the solution is
directly used.
\begin{proof}[Proof (of Lemma~\ref{lem:nestedrerouting})]
  Let $S_1$, $\dots$, $S_\ell$ be the segments nested between $S^1$
  and $S^2$ intersecting $C^*_x$ or $C^*_y$ in the order they are
  nested ($S_1=S^1$, $S_{\ell}=S^2$). If $\ell\le 2^k$, then we are
  done. Otherwise, we argue that it can be assumed that
  $\ell=2^k+1$. Let $S':=S_{2^k+1}$. Let $x'$ be the vertex of $S'$ on
  $C^*_x$ closest to $x^1$ one $C^*_x$ (such a vertex exists, as $S'$
  is nested between $S^1$ and $S^2$). Similarly, let $y'$ be the
  vertex of $S'$ on $C^*_y$ closest to $y^1$. Now we can replace
  $S^2$, $x^2$, $y^2$, with $S'$, $x'$, $y'$, respectively. Note that
  $C_x[x^1,x']$ is between $S^1$ and $S'$, hence only the segments
  $S_1$, $\dots$, $S_{2^k+1}=S'$ can intersect it, satisfying the
  conditions of the lemma being proved. Thus if we can arrive to a
  contradiction for $\ell=2^k+1$, the lemma follows.

\begin{figure}
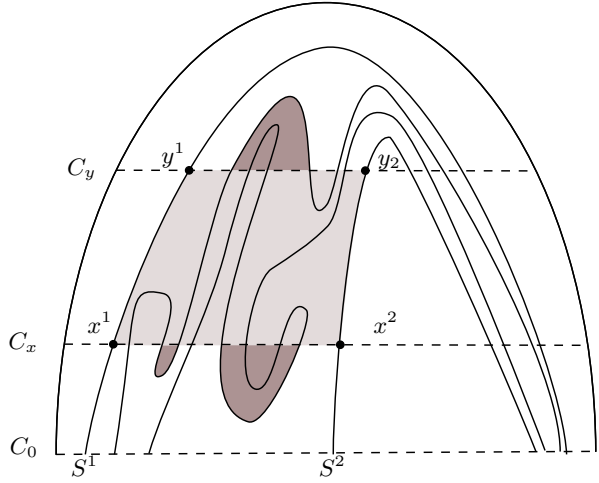

{\footnotesize 
\begin{center}
\svg{0.5\linewidth}{rerouting2}
\end{center}
}
\caption{Proof of Lemma~\ref{lem:nestedrerouting}. The figure shows
  two nested segments between $S^1$ and $S^2$. The light gray shows
  the disc $D_0$, which is extended to $D$ by the dark gray
  area.}\label{fig:rerouting2}
\end{figure}

  Let $D_0$ be the disc enclosed by the cycle $S^1[x^1,y^1]\cup C^*_x
  \cup S^2[x^2,y^2]\cup C^*_y$. It is tempting to try to apply
  Lemma~\ref{lem:discrerouting} on the disc $D_0$ and the subpaths of
  the $S_i$'s inside $D_0$, but as Figure~\ref{fig:rerouting2} shows, these subpaths are
  not necessarily parallel (and therefore Figure~\ref{fig:rerouting}
  gives only a simplified picture of what is happening inside the gray
  area). To avoid this difficulty, we extend $D_0$ to a disc $D$ the
  following way: for every subpath $Q$ of every $S_i$, if $Q$ has both
  endpoints on $C^*_x$ or both endpoints on $C^*_y$ and no other
  vertex in $D_0$, then we add to $D$ the disc enclosed by $Q$ and the
  subpath of $C^*_x$ or $C^*_y$ between the endpoints of $Q$. This
  process creates a disc $D$ whose boundary consists of the subpath
  $S^1[x^1,y^1]$, a path $B_x$ from $x^1$ to $x^2$, the subpath
  $S^2[x^2,y^2]$, and a path $B_y$ from $y^1$ to $y^2$.

\begin{claim}
Only the segments $S_1$, $\dots$, $S_\ell$ intersect
  $D$.
\end{claim}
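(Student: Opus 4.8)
The plan is to show that every arc and vertex of a solution path meeting the interior of $D$ lies on a segment that intersects $C^*_x$ or $C^*_y$, and then to invoke the hypothesis of Lemma~\ref{lem:nestedrerouting} that every such segment is among $S_1,\dots,S_\ell$. First I would record the relevant topology: by construction $D$ is contained in the union of $D_0$ with the extension bumps, all of which lie in the part of the plane strictly between $S^1$ and $S^2$, hence inside the region enclosed by the bend $B$ (the undirected cycle $P\cup C_0$); moreover $D$ contains no vertex of $C_0$, since $C_0$ touches the region between $S^1$ and $S^2$ only along the two $C_0$-arcs of its boundary, which are disjoint from $D$. Consequently $D$ meets no chord $C_i$ with $i\ge 1$ in its interior, and no terminal lies in the interior of $D$ --- the latter because in the situations where the lemma is applied the pertinent bend region is terminal free (equivalently, no terminal is enclosed by the cycle built from $S^1$ and a subpath of $C_x$). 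Pinning down this terminal-freeness is the one point that needs care; I would either quote it from the surrounding setup or read it off the hypotheses.

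Next, take an arc of a solution path $P_b$ in the open disc $D$ and let $R$ be the maximal subpath of $P_b$ contained in the closed disc $D$ that contains it. The boundary $\partial D=S^1[x^1,y^1]\cup B_x\cup S^2[x^2,y^2]\cup B_y$ is assembled from pieces of $S^1$ and $S^2$, pieces of the free chords $C_x$ and $C_y$ (the parts of $B_x,B_y$ inherited from $C^*_x,C^*_y$), and pieces of some of the $S_i$ (the parts of $B_x,B_y$ contributed by the bumps). No solution path can cross $S^1$, $S^2$, or any segment piece of $\partial D$: if $P_b$ is the path carrying that segment it would have to cross itself, and otherwise vertex-disjointness forbids it; a count of incident arcs shows similarly that $R$ cannot terminate at an interior vertex of such a segment piece. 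Since $P_b$ may touch the free chords only at vertices, both endpoints of $R$ are interior vertices of $C^*_x\cup C^*_y$ (the case $R=P_b$ is excluded, as it would place both terminal endpoints of $P_b$ in the interior of $D$, contradicting the previous paragraph). Now continue $P_b$ past each endpoint of $R$: having just left $D$ through $C^*_x$ or $C^*_y$, the path is in the part of the region between $S^1$ and $S^2$ lying outside $D$, which is bounded by subpaths of $S^1$ and $S^2$, a subpath of a chord, and an arc of $C_0$; hence $P_b$ must reach $C_0$ before it could hit a terminal or leave the bend. Following $P_b$ in both directions up to its first encounter with $C_0$ therefore yields a subpath $S'$ that is a segment with respect to $B$ (enclosed by $B$, endpoints on $C_0$, no interior vertex on $C_0$), with $R$ a subpath of $S'$ and $S'$ intersecting $C^*_x$ or $C^*_y$. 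By hypothesis every such $S'$ is nested between $S^1$ and $S^2$, that is, $S'\in\{S_1,\dots,S_\ell\}$, so $R$ lies on one of the listed segments.

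For the extension bumps the same reasoning applies almost verbatim: a bump is the disc bounded by a subpath $Q$ of some $S_i$ whose two endpoints lie on $C^*_x$ (or $C^*_y$) and whose interior avoids $D_0$, together with the corresponding subarc of $C^*_x$; a solution path entering the bump cannot cross $Q$, so it can only leave through $C^*_x$, and thus (as above) completes to a segment on the list, while the subpath $Q\subseteq S_i$ bounding the bump is on the list by definition. Combining the two cases, every solution arc and vertex inside $D$ lies on some $S_j$ with $1\le j\le\ell$, which proves the Claim.

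I expect the main obstacle to be the ``complete $R$ to a segment of $B$'' step: one must argue carefully that after leaving $D$ through a free chord a solution path necessarily reaches $C_0$ --- so that it genuinely closes up to a segment with respect to $B$, rather than first running into a terminal or escaping the region enclosed by the bend --- and this is exactly the place where the terminal-freeness of the pertinent region is needed. A further nuisance is handling degenerate coincidences ($x=0$, so that $C_x=C_0$; $x^1=y^1$ or $x^2=y^2$; chords sharing endpoints; a segment meeting $C^*_x$ and $C^*_y$ at a common vertex) and checking that the passage from $D_0$ to $D$ is well defined --- the bumps pairwise disjoint and $B_x,B_y$ simple paths --- which follows from using, for each offending $Q$, only the maximal ``innermost'' subpaths of the $S_i$ with endpoints on $C^*_x$ (respectively $C^*_y$) and interiors disjoint from $D_0$.
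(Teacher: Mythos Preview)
Your core idea matches the paper's: anything in $D$ that belongs to a solution path must leave $D$ through the chord portions $C^*_x$ or $C^*_y$ of the boundary (since it cannot cross the segment portions), and therefore lies on a segment that meets $C^*_x$ or $C^*_y$, which by hypothesis is one of $S_1,\dots,S_\ell$. So the argument is correct.

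However, you are working much harder than necessary. The Claim is stated for \emph{segments}, and a segment already has both endpoints on $C_0$, which lies outside $D$. So the paper simply takes a segment $S$ with a vertex $x\in D$: if $x\in D_0$, then $S$ (on its way to $C_0$) must cross $\partial D_0$, and since it cannot cross $S^1$ or $S^2$ it hits $C^*_x$ or $C^*_y$; if $x$ lies in one of the bumps, bounded by a subpath $Q\subseteq S_i$ and a subpath of $C^*_x$ (or $C^*_y$), then $S$ cannot cross $Q$ and so again must hit $C^*_x$ (or $C^*_y$). Two sentences, and there is no need to take a maximal subpath $R$, argue about terminals, or ``complete $R$ to a segment'' --- the segment is handed to you. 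What your longer route buys is the slightly stronger statement that every solution vertex in $D$ lies on one of the $S_j$, which the paper uses a few lines later; but even that follows immediately once you note that $D$ is enclosed by the bend, so any solution vertex in $D$ already lies on some segment.

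One small correction: your assertion that ``$D$ meets no chord $C_i$ with $i\ge 1$ in its interior'' is false --- the chords $C_i$ with $x<i<y$ typically pass right through $D_0$. Fortunately you never use this, since chords are free and hence irrelevant to which \emph{segments} meet $D$.
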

\begin{proof}
Suppose that a segment $S$ contains a vertex $x$ in $D$.  If
  $x$ is in $D_0$, then it is clear that $S$ is part of the nested
  sequence of segments. Otherwise, $x$ is in $D$ because it is
  enclosed by a cycle formed by a subpath $Q$ of some $S_i$ and a
  subpath of $C^*_x$ or $C^*_y$.  Since every segment
  intersecting $C^*_x$ or $C^*_y$ is in the nested
  sequence by assumption, we can conclude that $S$ is also one of
  these nested segments.
\cqed\end{proof}

Among all intersections of $S_i$ with $C^*_x$ or $C^*_y$ consider
those two that are closest to the endpoints of $S_i$; one of these
intersections (denote it by $x_i$) has to be on $C^*_x$, the other
(denote it by $y_i$) has to be on $C^*_y$. 
Let $Q_i:=S_i[x_i,y_i]$.
\begin{claim}
For every $1\le i \le \ell$,
\begin{enumerate}
\item vertex $x_i$ is on $B_x$,
\item vertex $y_i$ is on $B_y$, and
\item $Q_i$ is enclosed by $D$.
\end{enumerate}
\end{claim}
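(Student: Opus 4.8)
The plan is to prove the three assertions (1)--(3) simultaneously, after fixing the following description of $\partial D$. By construction $B_x$ is the path from $x^1$ to $x^2$ obtained from $C^*_x$ by replacing, for each \emph{pocket} --- that is, each maximal subpath $W$ of some $S_j$ having both endpoints on $C^*_x$ and interior strictly outside $\overline{D_0}$ --- the subpath of $C^*_x$ joining the endpoints of $W$ (its \emph{mouth}) by $W$ itself (its \emph{wall}); symmetrically for $B_y$. Thus the points of $C^*_x$ that are \emph{not} on $B_x$ are exactly the interiors of the mouths, and every wall is a subpath of some $S_j$. For each $i$ let $u_i,v_i$ be the two endpoints of $S_i$ (on $C_0$), named so that $x_i$ is the first vertex of $C^*_x\cup C^*_y$ met when traversing $S_i$ from $u_i$ and $y_i$ the first one met from $v_i$; put $R_i:=S_i[u_i,x_i]$. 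We use throughout that segments of the solution are pairwise noncrossing and that two of them are either vertex-disjoint or subpaths of a common simple path, so they can meet only in a shared endpoint on $C_0$; and that $C_0$ is disjoint from $\partial D_0$ and from the boundary of every pocket disc, so $C_0$ --- hence every $u_i$ and $v_i$ --- lies outside $\overline{D_0}$ and outside every pocket disc. (The degenerate subcases where $S_i$ is $S^1$ or $S^2$ itself, so that parts of $Q_i$ already lie on $\partial D$, are routine and will be checked separately; below we treat the general case.)

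For item (1), I would first show that $R_i$ meets $\overline{D_0}$ only in $x_i$: by the choice of $x_i$ the path $R_i$ is internally disjoint from $C^*_x\cup C^*_y$ and meets $C^*_x$ only in $x_i$; it meets neither $S^1[x^1,y^1]$ nor $S^2[x^2,y^2]$ except possibly at $u_i\notin\partial D_0$; and it cannot enter $\mathrm{int}(D_0)$, since the only way to cross $\partial D_0$ into it would be at $x_i$, which is the terminal endpoint of $R_i$. Hence $R_i$ reaches $x_i$ from the side of the chord $C_x$ that does not border $D_0$. Now suppose toward a contradiction that $x_i$ lies in the interior of the mouth $C_x[a,b]$ of a pocket with wall $W=S_j[a,b]$ and pocket disc $\Delta$ (bounded by $W\cup C_x[a,b]$); near $x_i$ the non-$D_0$ side of $C_x$ is exactly $\mathrm{int}(\Delta)$, so $R_i$ enters $\Delta$, and since $u_i\notin\Delta$ it must cross $\partial\Delta=W\cup C_x[a,b]$. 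It cannot cross $C_x[a,b]\subseteq C^*_x$ (it meets $C^*_x$ only at its endpoint $x_i$, where it stops rather than crosses), and it cannot cross $W\subseteq S_j$ (noncrossing, and $R_i\subseteq S_i$ can meet $S_j$ only at a common $C_0$-endpoint while $W$ avoids $C_0$). This contradiction gives $x_i\in B_x$, and item (2) follows by the mirror-image argument with $v_i,y_i,C^*_y,B_y$.

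For item (3), suppose $Q_i\not\subseteq\overline D$ and take a maximal subpath $Q'$ of $Q_i$ whose interior is disjoint from $\overline D$; its endpoints $w_1,w_2$ lie on $\partial D$, and $Q'\subseteq Q_i$ contains no vertex of $C_0$. As in item (1), $Q'$ cannot meet $S^1[x^1,y^1]$ or $S^2[x^2,y^2]$, so $w_1,w_2\in B_x\cup B_y$; and $Q'\subseteq S_i$ can meet a wall $W\subseteq S_j$ with $j\ne i$ only at a common $C_0$-endpoint, which is impossible since $W$ avoids $C_0$. Moreover, by the extremal choice of $x_i,y_i$ every intersection of $S_i$ with $C^*_x\cup C^*_y$ lies on $Q_i$, so a wall contained in $S_i$ is itself a subpath of $Q_i$; hence $w_1,w_2$ lie on $C^*_x\cup C^*_y$. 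A short side/separation argument (the interior of $Q'$ stays on the non-$D_0$ side of $C_x$, resp.\ $C_y$, near each end, and cannot reach both $C^*_x$ and $C^*_y$ without meeting $\overline{D_0}$) then forces $w_1,w_2$ onto the same one of $C^*_x,C^*_y$. But then $Q'$ is a subpath of some $S_j$ with both endpoints on $C^*_x$ (or both on $C^*_y$) and interior outside $\overline{D_0}$ --- i.e.\ a pocket wall --- and was therefore absorbed into $D$, contradicting that $\mathrm{int}(Q')$ misses $\overline D$. Thus $Q_i\subseteq\overline D$, so $Q_i$ is enclosed by $D$. The main obstacle I anticipate is exactly this last item: pinning down which portions of $\partial D$ the endpoints $w_1,w_2$ of an escaping excursion can lie on --- in particular ruling out the ``mixed'' case $w_1\in C^*_x$, $w_2\in C^*_y$ --- and then recognizing such an excursion as a pocket already glued into $D$, which needs a careful planarity/side analysis rather than a one-line argument.
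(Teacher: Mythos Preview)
Your argument is correct and follows essentially the same plan as the paper's, with one organizational difference worth noting for item~(3). The paper takes the excursion $Q'$ relative to $D_0$ rather than $D$: that is, it chooses $Q'$ to be a maximal subpath of $Q_i$ with both endpoints on $\partial D_0$ and no internal vertex enclosed by $D_0$. Since $\partial D_0 = S^1[x^1,y^1]\cup C^*_x\cup S^2[x^2,y^2]\cup C^*_y$ contains no walls, the endpoints land on $C^*_x\cup C^*_y$ immediately from the noncrossing property of segments, with no need for your intermediate step of ruling out that $w_1,w_2$ lie on a wall of $B_x$ or $B_y$. After that the paper simply asserts ``either both endpoints are on $C^*_x$ or both on $C^*_y$'' and moves on---so the separation argument you flag as the main obstacle is glossed over in the paper too; you are not missing an idea there, the paper is just terse. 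The payoff of the paper's choice is purely economy: working relative to $D_0$ keeps the boundary simple and avoids the wall case analysis, but your version with $\partial D$ also goes through as written.
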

\begin{proof}
If $x_i$ is not on the
boundary of $D$, then it is enclosed by a subpath $Q$ of some $S_j$
and a subpath of $C^*_x$, which contradicts the assumption that there
is a subpath of $S_i$ from $x_i$ to an endpoint of $S_i$ and not
intersecting $C^*_x\cup C^*_y$. Similarly, every $y_i$ is on the
boundary segment $B_y$ of $D$.

Suppose that a vertex $z$ of $Q_i$ is not enclosed by $D$ (and hence
by $D_0$). Let $Q'$ be the subpath of $Q_i$ containing $z$ with both
endpoints on the boundary of $D_0$ and no internal vertex enclosed by
$D_0$.  Then either both endpoints are on $C^*_x$ or both endpoints
are on $C^*_y$. In either case, the definition of $D$ would add the
disc enclosed by $Q'$ to $D$, contradiction that $z$ is not enclosed
by $D$.  \cqed
\end{proof}
It is clear that the solution uses no other
vertex of $D$ than the vertices of the $Q_i$'s: we have seen that
every vertex in $D$ belongs to some $S_i$ and this vertex of $S_i$ has
to be part of $Q_i$. Therefore, the conditions of
Lemma~\ref{lem:discrerouting} hold for $Q_1$, $\dots$, $Q_{\ell}$ and
we may assume the existence of the matching $M=\{e_1,\dots,
e_{\ell'}\}$ of size $\ell'<2^k+1$. We consider the arcs
$e_1$, $\dots$, $e_{\ell'}$ to be directed, with orientation as given
by Lemma~\ref{lem:discrerouting}.

\begin{claim}
There exist pairwise vertex-disjoint paths $R_1$,
  $\dots$, $R_{\ell'}$ enclosed by $D$ such that $R_j$ has the same
  start vertex $r_j$ and end vertex $s_j$ as the start and end vertex of $e_j$,
  respectively.
\end{claim}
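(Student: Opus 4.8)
The plan is to realize the matching $M$ by pairwise vertex-disjoint directed paths inside $D$ using the Ding--Schrijver--Seymour criterion (Theorem~\ref{th:discrerouting}), applied to the plane graph $G_D$ obtained by restricting $G$ to the closed disc $D$ (all arcs of $G$ drawn inside $D$, together with the boundary walk $\partial D$), with the $2\ell'$ distinct terminals $r_1,s_1,\dots,r_{\ell'},s_{\ell'}$ placed on the outer boundary $\partial D$. Once both hypotheses of Theorem~\ref{th:discrerouting} are checked, the directed paths $R_1,\dots,R_{\ell'}$ it produces live in $G_D$, hence are paths in $G$ enclosed by $D$ with $R_j$ going from $r_j$ to $s_j$, which is exactly the assertion of the claim. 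The noncrossing condition comes for free: Lemma~\ref{lem:discrerouting} guarantees that $M$ is a noncrossing matching on $\partial D$, so the pairs $(r_j,s_j)$ are pairwise noncrossing.

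The substance is the cut condition. Fix a curve $C$ from $\partial D$ to $\partial D$ that meets $G_D$ only in vertices and avoids all terminals, and say it separates the pairs $(r_{i_1},s_{i_1}),\dots,(r_{i_n},s_{i_n})$ in this order along $C$. For each such pair, $r_{i_j}$ is an endpoint of some $Q_a$ and $s_{i_j}$ is an endpoint of some $Q_b$; since $r_{i_j}\in V_S$ it is the \emph{source} endpoint of $Q_a$, and since $s_{i_j}\in V_E$ it is the \emph{sink} endpoint of $Q_b$, and these two boundary points lie on opposite sides of $C$. The key point is that the disjoint ``parallel'' cable $Q_1,\dots,Q_\ell$ of directed paths — parallel because the endpoints appear on $\partial D$ in the cyclic order $x_1,\dots,x_\ell,y_\ell,\dots,y_1$, so the chords $\{x_i,y_i\}$ are pairwise noncrossing and compatible with the noncrossing matching $M$ — forces $C$ to cross this cable at least $n$ times in the ``useful'' direction, with the crossing points occurring in precisely the order dictated by the separated pairs. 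At a point $p_j$ where $C$ crosses a path $Q_m$ from the side containing $r_{i_j}$ to the side containing $s_{i_j}$, the directed path $Q_m$ enters $C$ at $p_j$ through an arc on the $r_{i_j}$-side and leaves through an arc on the $s_{i_j}$-side, which is exactly the arc configuration the cut condition demands at $p_j$; selecting one such $p_j$ per separated pair, in order, verifies the condition. (Having additional arcs of $G_D$ off the cable only helps here, so we never need the cable to be \emph{all} of $G_D$.)

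The main obstacle is the bookkeeping of the previous paragraph: one has to (i) translate ``$r_{i_j}$ lies on the left resp.\ right of $C$'' into the correct statement about which way the witnessing $Q$-path crosses $C$, so that its entering and leaving arcs fall on the prescribed sides, and (ii) show that the $n$ witnessing crossings can be chosen to appear along $C$ in the prescribed order — this is the place where one genuinely uses that the $Q_i$ are pairwise vertex-disjoint and noncrossing and that $M$ is noncrossing, so that no single curve can cross the cable without accounting for exactly one separated pair. Everything else is routine: the final appeal to Theorem~\ref{th:discrerouting} produces the paths $R_1,\dots,R_{\ell'}$, they lie inside $D$ by construction, and together with the matching $M$ of size $\ell'<\ell$ already supplied by Lemma~\ref{lem:discrerouting} this yields a modified solution, contradicting uniqueness in the remainder of the proof of Lemma~\ref{lem:nestedrerouting}.
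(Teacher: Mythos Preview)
Your high-level plan is exactly the paper's: apply Theorem~\ref{th:discrerouting} inside $D$, get the noncrossing condition for free from the matching $M$, and verify the cut condition. But your verification of the cut condition has a genuine gap, and it is not just bookkeeping.

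You propose to witness every curve $C$ using only the cable $Q_1,\dots,Q_\ell$. This fails when $C$ runs ``transversally'', with one endpoint on $B_x$ and one on $B_y$. In that situation $C$ separates the $x_i$'s into two blocks and the $y_i$'s into two blocks, and only those $Q_i$ whose endpoints land in different blocks cross $C$ at all. That number can be strictly smaller than the number of separated pairs of $M$, because $M$ matches the same $2\ell$ points differently from the $Q_i$'s. Concretely: take all $Q_i$ oriented $x_i\to y_i$ and $M=\{(x_1,y_2),(x_2,y_3)\}$; a curve $C$ from the arc of $B_x$ between $x_2,x_3$ to the arc of $B_y$ between $y_1,y_2$ separates both pairs of $M$, yet only $Q_2$ crosses $C$, giving one usable witness where two are required. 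So the cable alone cannot certify the cut condition.

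The paper closes this gap by a case split you omit. If $C$ meets both $B_x$ and $B_y$, then $C$ has a subcurve from $C^*_x$ to $C^*_y$ and therefore crosses every chord $C_{x+1},\dots,C_{y-1}$ of the ambient bend; since these have alternating orientation and there are at least $2(2^{k}+1)$ of them, one can select $n\le 2^{k}+1$ crossing points realising any prescribed left/right pattern in order. This is exactly where the hypothesis $y\ge x+2(2^{k}+1)$ of Lemma~\ref{lem:nestedrerouting} is spent --- a hypothesis your argument never invokes, which is the tell. Only in the complementary case, say $C$ misses $B_y$, does the paper fall back on the paths $Q_i$: then every separated endpoint on the near side of $C$ is some $x_{i_j}\in B_x$, the corresponding $Q_{i_j}$ must cross $C$ to reach $y_{i_j}\in B_y$, and taking the first such crossing yields witnesses whose order along $C$ matches the order of the $x_{i_j}$ on $B_x$, which in turn is the separation order.
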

\begin{proof}
We use Theorem~\ref{th:discrerouting} to prove the
  existence of these paths. The fact that the matching is noncrossing
  implies that the noncrossing condition holds. Suppose now that a
  curve $C$ in $D$ with endpoints on the boundary of $D$ violates the
  cut condition. If $C$ intersects both $B_x$ and $B_y$, then it has a subpath with an endpoint on $C^*_x$ and an
  endpoint on $C^*_y$. Therefore, $C$ crosses all the chords between
  $C_x$ and $C_y$, thus we can find the required vertices $p_1$,
  $\dots$, $p_{2^k+1}$ (note that $y\ge x+2(2^{k}+1)$).  Therefore, we
  may assume that $C$ is disjoint from $B_y$ (the case when $C$ has
  no endpoint on $B_x$ is similar).

  Without loss of generality, we may assume that $C$ separates the
  pairs $(r_1,s_1)$, $\dots$, $(r_q,s_q)$ only and in this order.  As
  $C$ is disjoint from $B_y$, there is a subpath $B'$ of the boundary
  of $D$ that connects the endpoints of $C$ and is disjoint from
  $B_y$; let $b_1,b_2$ be the endpoints of $B'$. Now $C$ separates the
  pair $(r_j,s_j)$ if and only if exactly one of $r_j$ and $s_j$ is on
  $B'$; denote this vertex by $\beta_j\in \{r_j,s_j\}$. Moreover, the
  order in which $C$ separates the pairs correspond to the order in
  which the vertices $\beta_j$ corresponding to the separated pairs
  appear on $B'$, that is, vertices $b_1$, $\beta_1$, $\dots$,
  $\beta_z$, $b_2$ appear in this order on $B'$. Each vertex $\beta_j$
  is a vertex $x_{i_j}$ for some $1 \le i_j \le 2^k+1$ and there is a
  corresponding path $Q_{i_j}$ that enters $\beta_j$ if $\beta_j=r_j$
  and leaves $\beta_j$ if $\beta_j=s_j$. Let $p_j$ to be the first
  intersection of $C$ with $Q_{i_j}$ (note that this cannot be
  $x_{i_j}$ or $y_{i_j}$, as $C$ does not intersect the
  terminals). The path $Q_{i_j}$ shows that $p_j$ has the required
  arcs entering and leaving, and therefore this sequence witnesses
  that $C$ does not violate the cut condition.  \cqed\end{proof}

As the matching $M$ was given by Lemma~\ref{lem:discrerouting}, if we
remove every arc enclosed by $D$ and add the arcs $e_1$, $\dots$,
$e_{\ell'}$, then there is a solution. Therefore, if we remove every
arc enclosed by $D$ except those that are on some $R_i$, then there
is still a solution. This solution is different from $P_1$, $\dots$,
$P_k$: the paths in the solution have at most $\ell'<2^k+1$ maximal
subpaths enclosed by $D$. This contradicts the assumption that $P_1$,
$\dots$, $P_k$ is a unique solution.
\end{proof}

\subsection{Rerouting in a bend}
Before beginning the main part of the inductive proof of
Lemma~\ref{lem:unique}, we need to introduce one more technical tool. We have to
be careful to avoid certain subpaths of the solution, as they enclose terminals and hence the inductive argument cannot be applied on them. The following definitions will be helpful in analyzing this situation:
\begin{defin}\label{def:bridgetree}
  Let $P_1$, $\dots$, $P_b$ be a solution, let $B=(P',C_0,\dots, C_d)$ be a $d$-bend appearing on a path $P_b$. Let
  $H$ be the undirected {\em $P_b$-graph} formed by the paths $P_b$ and $C_0$,
  $\dots$, $C_d$, with every degree-2 vertex suppressed. We call an edge of
  $H$ a {\em $P_b$-arc} if it corresponds to a subpath of $P_b$ or a {\em
    chord arc} if it corresponds to a subpath of some $C_i$.  A
  subpath of $C_i$ that corresponds to a chord arc (i.e., the
  endpoints of the subpath is on $P_b$ and the internal vertices are
  disjoint form $P_b$) is called a {\em $P_b$-bridge.} The {\em dual
    $P_b$-graph} is the dual $H^*$ of $H$. The subgraph $T^*$ of $H^*$
  containing the chord arcs is the {\em $P_b$-tree} of $B$.
\end{defin}
\begin{figure}
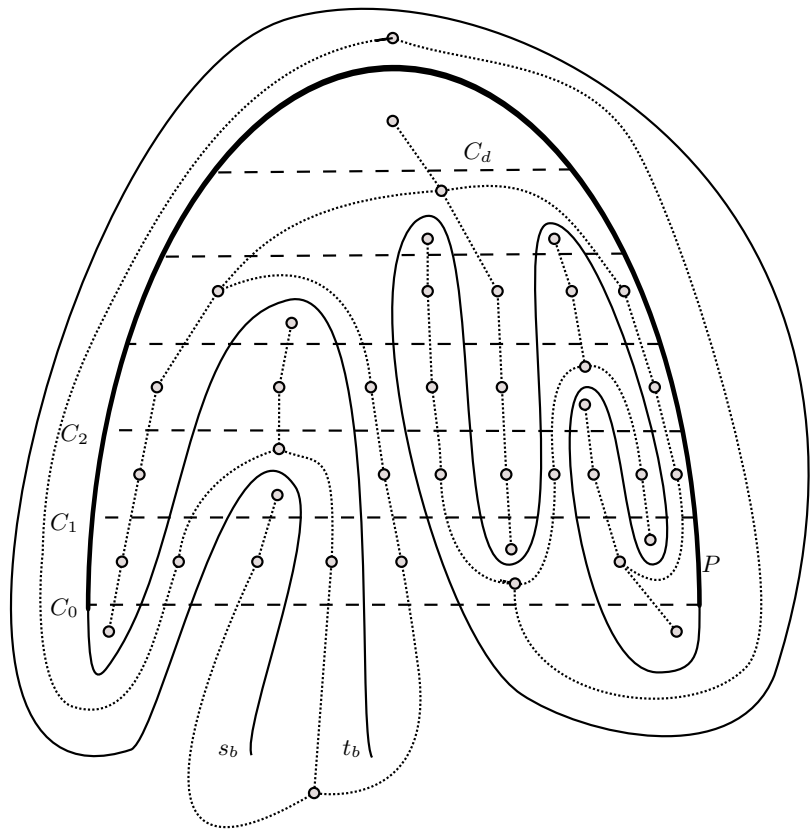

{\footnotesize 
\begin{center}
\svg{0.65\linewidth}{dual}
\end{center}
}
\caption{A $d$-bend $(P;C_0,\dots,C_d)$ appearing on a path $P_b$. The dotted lines show the arcs of the $P_b$-tree.}\label{fig:dual}
\end{figure}
Note that the graphs defined in Definition~\ref{def:bridgetree} are all undirected. The following lemma justifies the name $P_b$-tree:
\begin{lemma}\label{lem:bendtree}
  Let $B=(P',C_0,\dots, C_d)$ be a $d$-bend appearing on an a path $P_b$ of the solution.
   The $P_b$-tree $T^*$ of $B$ is a spanning tree of the dual $P_b$-graph.
\end{lemma}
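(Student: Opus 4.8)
The plan is to deduce the statement from the classical spanning-tree/cotree duality for connected plane graphs: if $H$ is a connected plane graph (possibly with multiple edges) with dual $H^*$, then a set $F\subseteq E(H)$ is the edge set of a spanning tree of $H$ if and only if $\{e^*\colon e\in E(H)\setminus F\}$ is the edge set of a spanning tree of $H^*$. By Definition~\ref{def:bridgetree} the edges of $T^*$ are exactly the duals of the chord arcs, and the chord arcs are precisely the edges of $H$ that are not $P_b$-arcs; so it suffices to show that $H$ is connected and that the set of $P_b$-arcs is the edge set of a spanning tree of $H$.

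First I would check that $H$ is connected: it is the union of the path $P_b$ with the paths $C_0,\dots,C_d$, each $C_i$ has both endpoints $x_i,y_i$ on $P'\subseteq P_b$, so the union is connected, and suppressing degree-$2$ vertices preserves connectivity. Next I would determine $V(H)$. In the graph before suppression, an internal vertex of a chord $C_i$ is incident to exactly the two edges of $C_i$ at it: it is not on $P_b$ by the definition of a $d$-bend, and it is not on any $C_j$ with $j\ne i$ since the chords are pairwise vertex-disjoint. Hence every internal chord vertex has degree $2$ and is suppressed, so every vertex of $H$ lies on $P_b$. Consequently the $P_b$-arcs are obtained from $P_b$ by suppressing exactly those of its vertices that are internal to $P_b$ and not an endpoint of any chord; the surviving vertices are the endpoints of $P_b$ together with the chord endpoints $x_0,\dots,x_d,y_d,\dots,y_0$, and they appear in a fixed linear order along $P_b$. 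Thus the subgraph of $H$ formed by the $P_b$-arcs is a path spanning $V(H)$, hence a spanning tree of $H$.

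Applying the duality fact with $F$ equal to the set of $P_b$-arcs now yields that $\{e^*\colon e\text{ a chord arc}\}=E(T^*)$ is the edge set of a spanning tree of $H^*$, which is the claim. I do not expect a genuine obstacle; the only points needing care are a few degenerate configurations --- for instance when $P'=P_b$, so that $x_0$ or $y_0$ is simultaneously an endpoint of $P_b$ and of a chord (hence has degree $2$ before suppression and produces a ``hybrid'' suppressed edge), or when a chord is very short. These are handled exactly as in the proof of Theorem~\ref{th:irrelevant}: the convention that the endpoints of $P_b$ are degree-$1$ terminals forces $P'$ to be a proper subpath of $P_b$ near each of its endpoints, so no such hybrid edge arises, and short chords may be subdivided without affecting anything.
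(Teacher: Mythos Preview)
Your proof is correct and is essentially the same argument as the paper's, just packaged via the tree--cotree duality rather than checking acyclicity and spanning-connectedness of $T^*$ separately through cycle--cut duality; both hinge on the observation that the $P_b$-arcs form a spanning tree (indeed a spanning path) of $H$. Your handling of the degenerate cases is a bit more careful than the paper's, which simply asserts that $P_b$ is a connected spanning subgraph of $H$ without discussing possible hybrid edges.
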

\begin{proof}
  Suppose that there is a cycle $C^*$ in $T^*$. This cycle $C^*$
  corresponds to a cut $C$ in the primal graph $H$, thus removing the
  edges of $C$ disconnects the graph $H$. However, $P_b$ is a connected
  spanning subgraph of $H$ not containing any edge of the cut $C$, a
  contradiction. To see that $T^*$ is connected and spanning, suppose
  that the dual graph $H^*$ has a a minimal cut consisting only of
  $P_b$-edges. Then the primal graph $H$ contains a cycle consisting
  only of $P_b$ edges, meaning that there is cycle in $P_b$, a
  contradiction.
\end{proof}

A variant of the segment is the $j$-segment, which has its endpoints on $C_j$:

\begin{defin}
  Let $P_1$, $\dots$, $P_k$ be a solution and let $B=(P;C_0, \dots,C_d)$ be a $d$-bend.  A {\em 
    $j$-segment} is a subpath $Q$ of some $P_b$ with endpoints on $C_j$, no internal vertex on $C_j$, and enclosed by the cycle $C_j\cup P[x_j,y_j]$.
\end{defin}
Note that every $j$-segment $S_j$ is on a unique segment $S$, but
segment $S$ can contain multiple $j$-segments. Observe that if
$j$-segment $S_j$ is on segment $S$, then the subpath of $C_j$
connecting the endpoints of $S_j$ are not necessarily enclosed by $S$
(see Figure~\ref{fig:jsegment}). Therefore, to avoid confusion, we do not define
the notion of ``enclosing'' for $j$-segments.
\begin{figure}
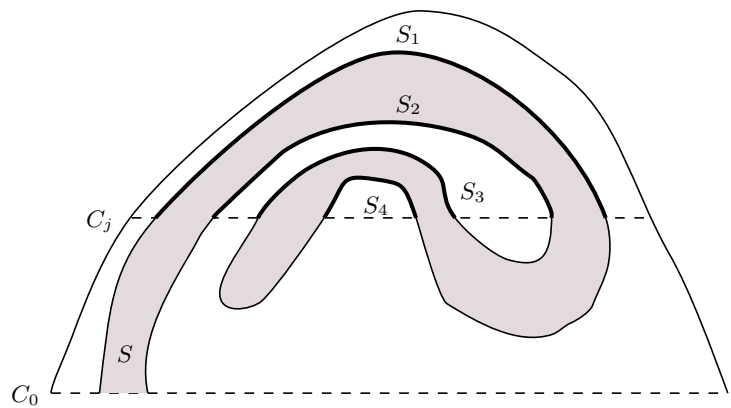

{\footnotesize
\begin{center}
\svg{0.6\linewidth}{jsegment}
\end{center}}
\caption{A segment $S$ in a bend and four $j$-segments $S_1$, $\dots$, $S_4$ on $S$.}\label{fig:jsegment}
\end{figure}

We need the following technical lemma in the proof of
Claim~\ref{cl:seeingsegments}. Intuitively, if $S$ is a $j$-segment
and $e$ is a $P_b$-bridge having an endpoint on $S$, then the
$P_b$-bridges having an endpoint on $S$ give two paths from $e$ to
$C_j$ in the $P_b$-tree $T^*$. For example, in
Figure~\ref{fig:twodisjoint}, one can see the two paths from
$C_i[\alpha_1,\beta_1]$ to $C_j$.  However, these two paths are not
necessarily edge disjoint: for example, in
Figure~\ref{fig:twodisjoint}, there are no two disjoint paths from
$C_i[\alpha_2,\beta_2]$ to $C_j$.  The following lemma shows that the
two disjoint paths always exist if the $P_b$-bridge $e$ has {\em
  exactly} one endpoint on $S$ (as it is the case with
$C_i[\alpha_1,\beta_1]$, but not with $C_i[\alpha_2,\beta_2]$).
\begin{figure}
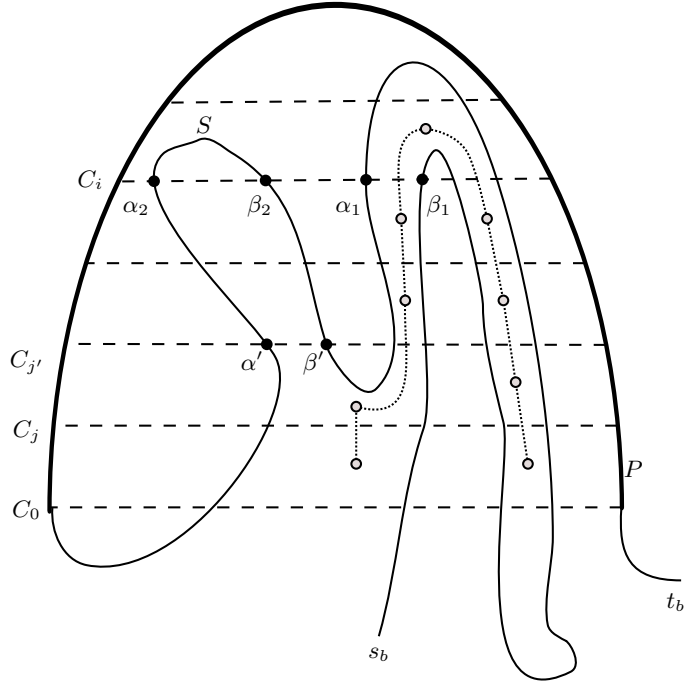

{\footnotesize 
\begin{center}
\svg{0.6\linewidth}{twodisjoint}
\end{center}
}
\caption{In $T^*_b$, there are two disjoint paths from
  $C_{i}[\alpha_1,\beta_1]$ to the faces between $C_{j}$ and $C_{j-1}$,
  but there are no two disjoint paths from $C_{i}[\alpha_2,\alpha_2]$, as
  $C_{i'}[\alpha',\beta']$
 is a separator.}
\label{fig:twodisjoint}
\end{figure}

\begin{lemma}\label{lem:segmenttwopaths}
  Let $B=(P;C_0, \dots, C_d)$ be a $d$-bend appearing on, let $S$ be a $j$-segment
  of some $P_b$, and let edge $e$ of the $P_b$-tree $T^*_b$ correspond
  to $P_b$-bridge $C_i[\alpha,\beta]$ having exactly one endpoint on
  $S$.  Let $T^*_S$ be the subgraph of $T^*_b$ containing edges
  corresponding to $P_b$-bridges having an endpoint on $S$. Then there
  are two edge-disjoint paths $Q_1,Q_2$ in $T^*_S$ such that the first
  vertex of each $Q_i$ is an endpoint of $e$ and its last vertex
  corresponds to a face of the $P_b$-graph between $C_j$ and
  $C_{j-1}$.
\end{lemma}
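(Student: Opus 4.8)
The plan is to reduce the statement to a reachability claim about the forest $T^*_S$ and then establish that claim using the fundamental-cut structure of the spanning tree $T^*_b$. By Lemma~\ref{lem:bendtree} the $P_b$-tree $T^*_b$ is a spanning tree of the dual $P_b$-graph, so $T^*_S\subseteq T^*_b$ is a forest, and the hypothesis $\alpha\in S$ gives $e\in T^*_S$. Let $a$ and $b$ be the two faces of the $P_b$-graph incident to $e$; deleting $e$ from $T^*_S$ separates them into a component $T'_a\ni a$ and a component $T'_b\ni b$. It suffices to prove that each of $T'_a$ and $T'_b$ contains a face lying between $C_{j-1}$ and $C_j$, that is, enclosed by the undirected cycle $C_{j-1}\cup P[x_{j-1},y_{j-1}]$ but not by $C_j\cup P[x_j,y_j]$: the path of $T'_a$ from $a$ to such a face and the path of $T'_b$ from $b$ to such a face then form two edge-disjoint paths of $T^*_S$, each starting at an endpoint of $e$ and ending at the desired kind of face.

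The tool for producing these paths is the standard fact that, for a $P_b$-arc $\gamma$ of the $P_b$-graph, the unique path of $T^*_b$ joining the two faces incident to $\gamma$ consists exactly of those chord arcs whose $P_b$-bridge $C_{i'}[\alpha',\beta']$ has $\gamma$ on $P_b[\alpha',\beta']$ (``spans over'' $\gamma$); this is the fundamental-cycle/fundamental-cut correspondence for the cotree edge $\gamma$ of $T^*_b$ in the planar dual. Two remarks localise this to our setting. Since the internal vertices of $S$ avoid $C_j$, the point $\alpha$ lies strictly inside $C_j\cup P[x_j,y_j]$, whence $\alpha\in C_i$ forces $i\ge j+1$; and a chord arc that spans over a $P_b$-arc of $S$ cannot have both of its endpoints off $S$, for then it would, together with a subpath of $P_b$ containing all of $S$, bound a disc that $C_j$ would have to cross (because $s_1,s_2\in C_j$), contradicting the pairwise disjointness of the chords. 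Hence any $T^*_b$-path arising from the fundamental cut of a $P_b$-arc of $S$ already lies in $T^*_S$. Finally, because $\beta\notin S$, the subpath $P_b[\alpha,\beta]$ leaves $S$ through exactly one endpoint, say $s_1$; thus $e$ spans over the $P_b$-arc at $\alpha$ on the $s_1$-side but not over the one, call it $\gamma^{+}$, on the $s_2$-side.

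For the face of $\{a,b\}$ incident to $\gamma^{+}$, the fundamental cut of $\gamma^{+}$ gives a path of $T^*_S$ from it to the other face incident to $\gamma^{+}$ that avoids $e$; a parity/enclosure argument as in the proof of Lemma~\ref{lem:concentricbend2} (each chord $C_{j'}$ with $j<j'\le i$ separates $\alpha$ from $s_2$, so the chord-arc indices met along this cut step down from $i$ towards $j$ by at most one at a time) shows the cut passes through a $C_j$-bridge incident to $s_2$, whose face on the $C_{j-1}$-side lies between $C_{j-1}$ and $C_j$. For the remaining face of $\{a,b\}$ one cannot use the $s_1$-side arc directly, since $e$ spans over it; instead one traces the boundary of that face outward, at each step crossing the chord arc lying on its $C_{j-1}$-side and incident to a vertex of $S$, again reaching a $C_j$-bridge near $s_1$. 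This is exactly the point where the assumption that $e$ has \emph{only one} endpoint on $S$ enters: if both $\alpha$ and $\beta$ lay on $S$, this outward route could be cut off by a single chord arc --- the separator $C_{i'}[\alpha',\beta']$ of Figure~\ref{fig:twodisjoint} --- and the two paths need not exist.

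The main obstacle is the planar bookkeeping in the last paragraph: deciding which of the two faces incident to $e$ (and to $\gamma^{+}$) is the ``outer'' one; verifying that the fundamental cut, respectively the boundary trace, really proceeds monotonically through $C_i,C_{i-1},\dots,C_j$ rather than wandering deeper into the bend; and disposing of degeneracies such as $\alpha$ being incident to several chord arcs or $s_1,s_2$ coinciding with chord endpoints. As in Lemmas~\ref{lem:concentricbend2} and~\ref{lem:bendinbendmain}, I expect all of these to succumb to the enclosure-and-parity technique, but they account for essentially all the length of a full proof.
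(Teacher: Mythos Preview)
Your reduction (both components of $T^*_S\setminus e$ must meet the target faces $L$) is fine and is equivalent to the Menger statement. The gap is the claim that the $T^*_b$-path realising the fundamental cut of $\gamma^{+}$ lies inside $T^*_S$. Your justification is that a chord arc $C_{i'}[\alpha',\beta']$ spanning a $P_b$-arc of $S$ with both $\alpha',\beta'\notin S$ would force $C_j$ to cross the disc bounded by $C_{i'}[\alpha',\beta']\cup P_b[\alpha',\beta']$, contradicting chord disjointness. But $C_j$ \emph{is} allowed to meet that cycle---it does so on $P_b[\alpha',\beta']$, at $s_1$, $s_2$, and possibly other $P_b$-vertices, not on $C_{i'}$---so there is no contradiction. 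And the configuration can occur: a bridge on a deep chord $C_{i'}$ with $i'>j$ may join a point $\alpha'$ on some $j$-segment $S'$ preceding $S$ along $P_b$ to a point $\beta'$ on a $j$-segment $S''$ following $S$, thus spanning all of $S$ while having neither endpoint on it. That edge then lies on the $T^*_b$-path from $a'$ to $b'$ but not in $T^*_S$, and your construction for that component collapses. The ``trace outward'' argument for the other component is only a sketch and has the same vulnerability: nothing guarantees the chord arcs you cross have an endpoint on $S$.

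The paper's proof avoids this by never leaving $T^*_S$. By Menger, if the two paths fail there is a single edge $e'\in T^*_S$, corresponding to $C_{j'}[\alpha',\beta']$, separating the endpoints of $e$ from $L$. One first shows $\alpha',\beta'\in S$ with $\alpha\in S[\alpha',\beta']$: otherwise walk along $S$ from $\alpha$ to whichever endpoint $s_1$ or $s_2$ is not blocked by $\{\alpha',\beta'\}$; the bridges incident to that subpath of $S$ are in $T^*_S$ \emph{by definition}, avoid $e'$, and reach a face of $L$ adjacent to that endpoint---contradiction. (This is the correct version of your ``walk along $S$'' idea: you only ever use bridges touching $S$, so you never need the global claim about fundamental cuts.) Finally, with $\alpha',\beta'\in S$ and $\alpha$ between them, the cycle $C_{j'}[\alpha',\beta']\cup S[\alpha',\beta']$ encloses $\alpha$ but not $\beta$ (which lies on a different $j$-segment), forcing $C_{j'}[\alpha',\beta']$ to one side of the cycle $C_S=S\cup C_j[s_1,s_2]$; the bridges on the other side of $C_S$ then give a path in $T^*_S$ from $e$ to $L$ avoiding $e'$, the desired contradiction.
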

\begin{proof}
  Suppose without loss of generality that $\alpha\in S$.  Note that
  the endpoints of $e$ correspond to to faces of the $P_b$-graph, one
  between $C_i$ and $C_{i+1}$, the other between $C_i$ and $C_{i-1}$.
  By Menger's Theorem, if there are no such paths, then there is an
  edge $e'$ of $T^*_S$ separating the endpoints of $e$ from the faces
  between $C_j$ and $C_{j-1}$. Suppose that $e'$ corresponds to
  $P_b$-bridge $C_{j'}[\alpha',\beta']$ for some $\alpha',\beta'\in
  P_b$. Observe that both $\alpha'$ and $\beta'$ have to be on $S$,
  otherwise $S$ has a subpath from $\alpha$ to one of its endpoints on
  $C_j$ that is disjoint from $\{\alpha',\beta'\}$, and the
  $P_b$-bridges along this subpath of $S$ give a path in $T^*_S$
  avoiding $e'$, contradicting the assumption that $e'$ is a
  separator. By the same reason, $\alpha$ has to appear between
  $\alpha'$ and $\beta'$ on $S$, that is, $S[\alpha',\beta']$ contains
  $\alpha$.

  Let $C_S$ be the cycle formed by $S$ and the subpath of $C_j$
  connecting the endpoints of $S$.  Suppose first that $C_S$ encloses
  $\beta$.  Consider the cycle $C=C_{j'}[\alpha',\beta']\cup
  S[\alpha',\beta']$. This cycle encloses $\alpha$ (as $\alpha\in
  S[\alpha',\beta']$) but cannot enclose $\beta$, which is on a
  $j$-segment of $P_b$ different from $S$: $C_{j'}[\alpha',\beta']$ has
  no internal vertex on $P_b$. This is only possible if
  $C_{j'}[\alpha',\beta']$ is not enclosed by $C_S$. Thus the edges of
  $T^*_S$ corresponding to $P_b$-bridges enclosed by $C_S$ are
  disjoint from $e'$ and connect $e$ to the faces between $C_j$ and
  $C_{j-1}$.

  The argument is similar if $C_S$ does not enclose $\beta$. Again,
  $C=C_{j'}[\alpha',\beta']\cup S[\alpha',\beta']$ encloses $\alpha$
  but does not enclose $\beta$. This is only possible if
  $C_{j'}[\alpha',\beta']$ is enclosed by $C_S$. Thus the edges of
  $T^*_S$ corresponding to $P_b$-bridges not enclosed by $C_S$ are
  disjoint from $e'$ and connect $e$ to the faces between $C_j$ and
  $C_{j-1}$.
\end{proof}

We are now ready to start the main part of the proof.
\begin{lemma}
If $f(k,t-1)$ is defined, then $f(k,t)$ is defined.
\end{lemma}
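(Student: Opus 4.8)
The plan is to carry out the inductive step of Lemma~\ref{lem:bendbound}: assuming $f(k,t-1)$ is defined, fix $f(k,t)$ to be a suitable polynomial in $f(k,t-1)$, $k$ and $2^k$ (so the recursion unrolls to $f(k,t)=2^{O(kt)}$, which is what is needed for $d(k)=2^{O(k^2)}$), and show that a unique solution $P_1,\dots,P_k$ admits no terminal-free $f(k,t)$-bend of type $t$. So suppose $B=(P;C_0,\dots,C_d)$ is such a bend, appearing on some $P_b$, with $d$ larger than this value; the goal is a contradiction.

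If $P$ is the only segment of $P_b$ with respect to $B$ --- equivalently, $C_0$ has no internal vertex on $P_b$ --- then Lemma~\ref{lem:bendinbendx} already gives $d\le f(k,t-1)+2$, a contradiction. So $P_b$ itself enters the interior of $B$, and the real task is to produce a large \emph{nested} family of segments of $P_b$ that travel together through many consecutive chords of $B$ with nothing of $P_b$ between consecutive members. Given such a family, I would finish as in the overview: any segment of another path $P_{b'}$ ($b'\ne b$) meeting these chords must fit into the nested structure there, since otherwise Lemma~\ref{lem:bendinbendmain} produces from it a terminal-free bend of type strictly less than $t$ crossing more than $f(k,t-1)$ chords, contradicting the induction hypothesis; hence the hypotheses of Lemma~\ref{lem:nestedrerouting} hold, and its conclusion --- at most $2^k$ such nested segments --- contradicts the size of the family, violating uniqueness.

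The crux is extracting the nested family of segments of $P_b$, and for this I would analyze how $P_b$ meets the chords via the $P_b$-tree $T^*$ of $B$ (Definition~\ref{def:bridgetree}, Lemma~\ref{lem:bendtree}): $T^*$ is a spanning tree of the dual $P_b$-graph with $\Theta(d)$ chord-arc edges, and depth inside $B$ is mirrored by distance from the root in $T^*$. (That the deep chords are ``interesting'' at all --- met by segments --- follows from the shortcut observation used in Lemma~\ref{lem:bendinbendx}: a chord of the right orientation not met by any segment would let us reroute $P_b$ and break uniqueness.) Now either the segments of $P_b$ reaching deep into $B$ are nested along a single root-to-leaf branch of $T^*$, in which case the required family and the ``nothing between'' property are read off $T^*$; or there is a face $f$ deep inside $B$ at which $T^*\setminus f$ has three components, each crossing many chords. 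In the latter case, if one such component were terminal-free, the subpath of $P_b$ wrapping around it would enclose (by the definition of $T^*$, and via Lemmas~\ref{lem:bendinbendx}/\ref{lem:bendinbendxx}) a terminal-free bend whose base chord lies on a single $C_j$, of type at most $t-1$, crossing more than $f(k,t-1)$ chords --- contradicting the induction hypothesis. Since $d$ is large this cannot happen, so every such $f$ carries a terminal in each of its three deep components; as there are only $2k$ terminals, there are only $O(k)$ such ``obstructed'' faces. Deleting them leaves the ``heavy'' part of $T^*$ as a tree with $O(k)$ branch vertices, hence containing a corridor of length $\Omega(d/k)$ along which the deep segments of $P_b$ are nested with nothing of $P_b$ between them; a pigeonhole on this corridor, whose length we arrange to exceed $(2^k+1)\cdot 2(2^k+1)$, yields the desired more than $2^k$ segments of $P_b$ running together through at least $2(2^k+1)$ consecutive chords.

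I expect the main obstacle to be exactly this $T^*$ analysis: turning ``the deep segments of $P_b$ are nested unless a branch face of $T^*$ intervenes'' into a rigorous statement (this is presumably where Lemma~\ref{lem:segmenttwopaths} enters, controlling how the $P_b$-bridges of a single segment sit in $T^*$), correctly certifying the subpath of $P_b$ enclosing a component of $T^*\setminus f$ as an honest terminal-free bend of strictly smaller type with a single-chord base so that the induction hypothesis applies, and bookkeeping the $O(1)$ losses from Lemma~\ref{lem:bendinbendmain} together with the $O(k)$ obstructed faces so that the surviving corridor is still long enough for the pigeonhole. By contrast the two rerouting ingredients, Lemmas~\ref{lem:bendinbendmain} and~\ref{lem:nestedrerouting}, are already available, so no further topological work is needed once the nested family is in hand.
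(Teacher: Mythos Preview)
Your plan is correct and matches the paper's proof closely: the paper defines $X$ as the $\le 2k+1$ degree-$\ge 3$ vertices of the minimal subtree $T^*_0$ of $T^*_b$ spanning the terminal faces, pigeonholes to find a band of $M=\Theta(2^{2k}\cdot f(k,t-1))$ consecutive chords avoiding $X$, and then carries out exactly the nested-segments-plus-rerouting argument you describe. The one point where the paper is more delicate than your sketch is the extraction of the nested family $S_1,\dots,S_{2^k+1}$ inside the clean band: rather than reading it off $T^*$ directly, the paper builds it iteratively via a ``sees from inside'' relation (each $S_{i+1}$ is the unique segment whose vertices see a designated $(h^*-im)$-subsegment $S'_i$ of $S_i$ from inside), with uniqueness coming from precisely the three-edge-disjoint-paths-in-$T^*$ argument you anticipate (this is where Lemma~\ref{lem:segmenttwopaths} enters), and then spends three further claims turning this weak ``only $S_{i+1}$ sees $S'_i$'' property into the strong ``no other segment of $P_b$ between them on these chords'' needed for Lemma~\ref{lem:nestedrerouting}.
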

\begin{proof}
We define the following constants:
\begin{align*}
s:=2^k+1 &&& m:=f(k,t-1)+5 &&& M:=40sm, &&& f(k,t):=M(2k+4).
\end{align*}
Note that this recursive definition of $f(k,t)$ implies that $f(k,t)=2^{O(kt)}$.
  Suppose that a terminal-free $d$-bend $B=(P;C_0,\dots,C_d)$ of type $t$ appears on path $P_b$ in a unique solution for some $d\ge f(k,t)$.
  Let $T^*_b$ be the $P_b$-tree as in Definition~\ref{def:bridgetree}. We define a set $F$ of special faces of the $P_b$-graph $H$ containing
\begin{itemize}
\item the infinite face,
\item faces strictly enclosing a terminal $s_{b'}$ or $t_{b'}$ for some $b'\neq b$,
\item the at most two faces whose boundaries contain   $s_b$ and $t_b$ (which are degree-1 vertices), and
\item the two faces whose boundary contains the arc of $P$ incident to $x_0$.
\end{itemize}
Note that $|F|\le 2k+3$. Let
  $T^*_0$ be the minimal subtree of $T^*_b$ containing every vertex that
  corresponds to a face in $F$. Let $X$ be the set of vertices of
  $T^*_b$ that have degree at least 3 in $T^*_0$. As $T^*_0$ has at most
  $2k+3$ leaves, we have $|X|\le 2k+1$.

Consider the faces of $H_b$ enclosed by the $d$-bend $B$. At most $2k+1$
of these faces correspond to elements of $X$, thus $d\ge f(k,t)$
implies that there is an $h>M$ such that no face corresponding to $X$
appears between $C_h$ and $C_{h-M}$ in the $d$-bend $B$. Let us fix
such an $h$ and let $h^*=h-30sm$.

  Let $\alpha$ and $\beta$ be two vertices of $P_b$ enclosed by $B$. We say
  that $\alpha$ {\em sees $\beta$ from the inside} if $\alpha$ and $\beta$ are both on the same $C_i$, vertex $\alpha$ is strictly
  enclosed by the segment of $P_b$ containing $\beta$, and the 
  subpath $C_i[\alpha,\beta]$ does not have any internal vertex on
  $P_b$. Note that this subpath $C_i[\alpha,\beta]$ may intersect some $P_{b'}$
  with $b'\neq b$.

\begin{figure}
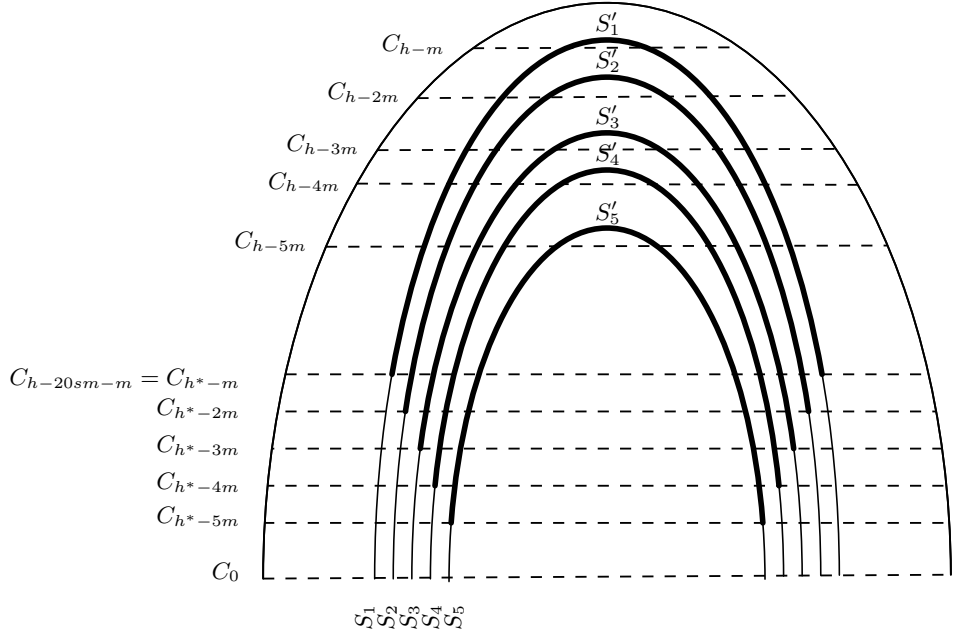

{\footnotesize 
\begin{center}
\svg{0.8\linewidth}{si-segments}
\end{center}
}
\caption{The segments in Claim~\ref{cl:seeingsegments}.}\label{fig:si-segments}
\end{figure}

First we find a sequence of nested segments $S_1$, $\dots$, $S_s$ of
$P_b$ (see Figure~\ref{fig:si-segments}) such that, in a weak but
precise technical sense, there are no further segments of $P_b$ between
them. What we show is that each segment has a subpath that is seen
from the inside only by vertices of the next segment in the sequence.
\begin{claim}\label{cl:seeingsegments}
  There are distinct nested segments $S_1$, $\dots$, $S_s$ of $P_b$ with respect to $B$ and an
  $(h^*-im)$-segment $S'_i$ of each $S_i$ such that
\begin{enumerate}
\item $S'_i$ does not intersect $C_h$.
\item $S'_i$ intersects $C_{h-im}$,
\item every vertex of $P_b$ seeing a vertex of $S'_i$ from the inside is on $S'_{i+1}$.
\end{enumerate}
\end{claim}
\begin{proof}
  There is at least one segment of $P_b$ intersecting $C_{h-m}$: $P$
  itself is such a segment.  Let $S_1$ be a segment of $P_b$
  intersecting $C_{h-m}$, but not enclosing any other segment of $P_b$
  intersecting $C_{h-m}$. It follows that $S_1$ does not intersect
  $C_h$: otherwise Lemma~\ref{lem:bendinbend5} implies that it
  encloses another segment intersecting
  $C_{h-f(k,t-1)-5}=C_{h-m}$. The segments $S_2$, $\dots$, $S_s$ we
  construct in the rest of the proof are all enclosed by $S_1$, thus
  they do not intersect $C_h$ either. Let $S'_1$ be an arbitrary
  $(h^*-m)$-segment of $S_1$ intersecting $C_{h-m}$.

  Suppose that we have constructed such a sequence up to $S_{i-1}$ and
  $S'_{i-1}$. We find $S_i$ and $S'_i$ the following way.  Again by
  Lemma~\ref{lem:bendinbend5},  there is a segment of $P_b$ enclosed by
  $S_{i-1}$ that intersects $C_{h-im}$, thus there is at least one
  $S_i$ enclosed by $S_{i-1}$. Therefore, there is at least one $(h^*-im)$-segment
  $S'_i$ intersecting $C_{h-im}$. We show that there is at
  most one such $(h^*-im)$-segment that contains vertices seeing
  $S'_{i-1}$ from inside, thus we can define $S'_{i}$ satisfying property (3).

  Suppose $(h^*-im)$-segments $S'$ and $S''$ of $P_b$ contain vertices
  $\alpha'$ and $\alpha''$ that see $\beta',\beta''\in S'_{i-1}$ from
  the inside, respectively. Let $S^*$ be the $(h^*-im)$-segment of
  $S'_{i-1}$ (note that $S'_{i-1}$ is a $(h^*-(i-1)m)$-segment, thus
  its $(h^*-im)$-segment is unique).  Let $L$ be the vertices of
  $T^*_b$ corresponding to faces between $C_{h^*-im}$ and
  $C_{h^*-im-1}$.

  Let $Z$ be the subtree of $T^*_b$ that corresponds to $P_b$-bridges
  enclosed by $S_{i-1}$ and having an endpoint on $S'_{i-1}$; clearly,
  $Z$ is connected.  The edges corresponding to $P_b$-bridges with an
  endpoint on $S^*$ contain a path $Q$ from $Z$ to $L$ (see Figure~\ref{fig:3way}).  By
  Lemma~\ref{lem:segmenttwopaths} applied on the $P_b$-bridge
  $\alpha'\beta'$ and the segment $S'$ (resp., $\alpha''\beta''$ and
  $S''$), we get two edge-disjoint paths $Q'_1,Q'_2$ (resp.,
  $Q''_1,Q''_2$) that go between $Z$ and $L$ and each edge of the
  paths corresponds to a $P_b$-bridge having an endpoint on $S'$
  (resp., $S''$).
\begin{figure}
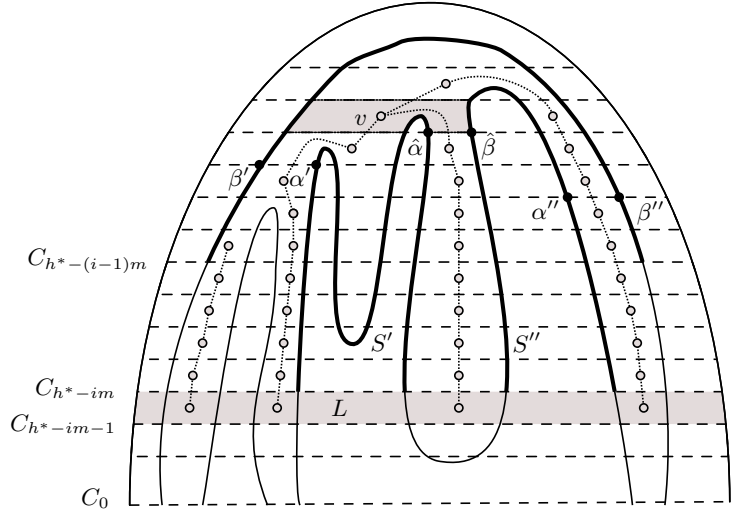

{\footnotesize 
\begin{center}
\svg{0.6\linewidth}{3way}
\end{center}
}
\caption{Proof of Claim~\ref{cl:seeingsegments}. In $T^*_b$, there are
  two disjoint paths from $\alpha'\beta'$ to $L$, and
two disjoint paths from  $\alpha''\beta''$ to $L$ using the edges shown on the figure. There
  is also a path from $Z$ to $L$ using edges that correspond to
  $P_b$-bridges having an endpoint on $S_{i-1}$. There are three
  edge-disjoint paths from vertex $v$ to $L$ in $T^*_b$, and we arrive to a
  contradiction using the path $P_b[\hat\alpha,\hat\beta]$ intersecting many chords but not
  enclosing any terminals.}
\label{fig:3way}
\end{figure}
  We show that in $T^*_b$ there are three edge-disjoint paths between $Z$
  and $L$. If there are no such paths, then Menger's Theorem implies
  that there are two edges $e_1$ and $e_2$ covering all such
  paths. Observe that each edge can be contained in at most two out of
  the five paths $Q$, $Q'_1$, $Q'_2$, $Q''_1$, $Q''_2$:
  otherwise the $P_b$-bridge corresponding to the edge would have an
  endpoint on all three of the $(h^*-im)$-segments $S^*$, $S'$, and
  $S''$, which is impossible. Therefore, one of these paths is
  disjoint from $e_1$ and $e_2$, a contradiction.  

  Let $Q_1$, $Q_2$, $Q_3$ be three edge-disjoint paths from $Z$ to
  $L$; we can assume that they start at (possibly not distinct)
  vertices $v_1$, $v_2$, $v_3$ of $Z$ and they contain no other
  vertices of $Z$. Let $Z'$ be the minimal subtree of $Z$ containing
  $v_1$, $v_2$, and $v_3$. This subtree $Z'$ has a vertex $v$
  (possibly $v\in \{v_1,v_2,v_3\}$) and three edge-disjoint paths
  $Z_1$, $Z_2$, $Z_3$ (possibly of length 0) where $Z_i$ goes from $v$
  to $v_i$. Now the concatenation of $Z_i$ and $Q_i$ for $i=1,2,3$
  gives three edge-disjoint paths from $v$ to $L$. As $v\in Z$, the length
  of the paths is at least $m= f(k,t-1)+5$. Let $\hat T_1,\hat
  T_2,\hat T_3$ be the components of $T^*_b\setminus v$ that contain these
  paths (minus $v$). We have chosen $h$ such that no face of $X$
  appears between $C_h$ and $C_{h-M}$; in particular, as $h^*-im>h-M$,
  vertex $v$ is not in $X$.  Therefore, it cannot happen that all
  three of $\hat T_1,\hat T_2,\hat T_3$ contain vertices from $T^*_0$:
  this would imply that $v\in T^*_0$ and has degree at least 3 in
  $T^*_0$, i.e., $v\in X$ by the definition of $X$. Suppose that $\hat
  T\in \{\hat T_1,\hat T_2,\hat T_3\}$ is disjoint from $T^*_0$ and
  let $\hat e$ be the edge connecting $\hat T$ and $v$.  Suppose that
  $\hat e$ corresponds to $P_b$-bridge $C_{\hat j}[\hat \alpha,\hat \beta]$. Note
  that $\hat \alpha, \hat \beta\not\in P$, as they are enclosed by
  $S_{i-1}$ (which is different from $P$).

  We would like to invoke Lemma~\ref{lem:bendinbendxx} with
  $Q=P_b[\hat \alpha,\hat \beta]$ to arrive to a contradiction, but we
  need to verify the conditions that no terminal is enclosed and this
  path is disjoint from $P$. This is the part of the proof where the
  definition of $T^*_0$ and the fact that $\hat T$ is disjoint from
  $T^*_0$ comes into play.  Consider the cycle $C$ formed by $\hat
  \alpha \hat \beta$ and $P_b[\hat \alpha, \hat \beta]$. Removing the
  edge $\hat e$ from $T^*$ splits $T^*$ into two parts, one of which
  is $\hat T$. The cycle $C$ encloses the faces corresponding to one
  of these two parts; more precisely, it encloses the part that does
  not contain the infinite face.  As $\hat T$ is disjoint from
  $T^*_0$, it cannot contain the infinite face, thus $C$ encloses
  exactly the faces of $\hat T$. This means that $C$ does not enclose
  any face of $T^*_0$ and hence does not enclose any
  terminals. Moreover, we claim that $C$ is disjoint from $P$. As
  $\hat \alpha,\hat \beta\not\in P$, if $P_b[\hat \alpha,\hat \beta]$
  contains a vertex of $P$, then it fully contains $P$, including the
  arc of $P$ incident to $x_0$. Both faces incident to this arc is in
  $T^*_0$, thus if $C$ contains this arc, then $C$ encloses a face of
  $T^*_0$, a contradiction. Thus $P_b[\hat \alpha,\hat \beta]$ is
  disjoint from $P$. The tree $\hat T$ contains a vertex of $L$ (since
  it contains one of the three paths $Q_1$, $Q_2$, $Q_3$, minus $v$),
  which means that $C$ encloses an arc of $C_{h^*-im}$. Therefore,
  $P_b[\hat \alpha,\hat \beta]$ intersects $C_{h^*-im}$ and $\hat
  \alpha,\hat \beta$ has no internal vertex on $P_b$. Thus we arrive
  to a contradiction by Lemma~\ref{lem:bendinbendxx}.
  \cqed\end{proof} Note that the statement of
Claim~\ref{cl:seeingsegments} is somewhat delicate. It
does not claim that every vertex of $S_i$ on $C_j$ for some $j\ge
h^*-im$ is only seen from inside only by $S'_{i+1}$; it claims this
only for a one specific $(h^*-im)$-segment $S'_i$ of $S_i$. Also, the
$(h^*-(i+1)m)$-segment $S'_{i+1}$ could contain more than one
$(h^*-im)$-segments (see Figure~\ref{fig:si-segments2}).
\begin{figure}
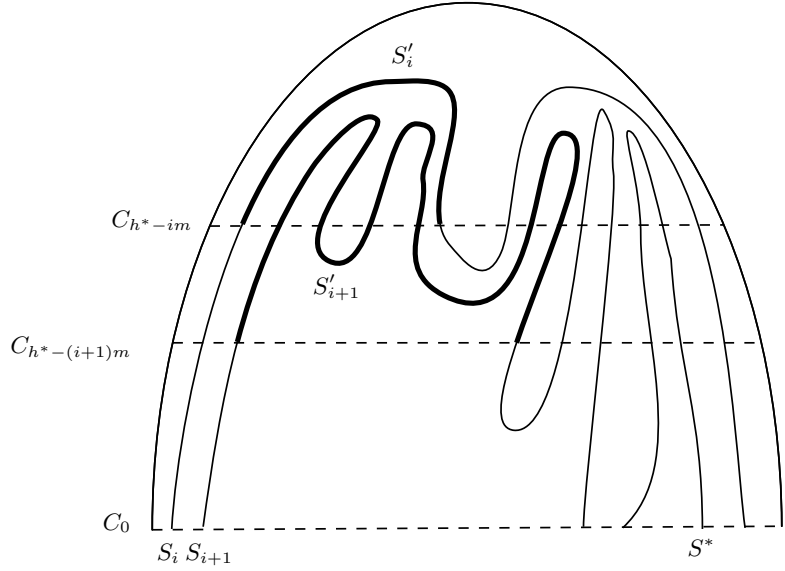

{\footnotesize 
\begin{center}
\svg{0.65\linewidth}{si-segments2}
\end{center}
}
\caption{A more complicated example of the segments in Claim~\ref{cl:seeingsegments}. Segment $S_i$ has an $(h^*-im)$-segment different from $S'_i$, which is seen from inside by vertices on a segment $S^*$ different from $S_{i+1}$, and also by vertices on a $(h^*-(i+1)m)$-segment of $S_{i+1}$ different from $S'_{i+1}$.}\label{fig:si-segments2}
\end{figure}

Not all vertices of $S'_i$ are seen from inside by the vertices of
$S'_{i+1}$: for example, if for some $v_1,v_2\in S'_i\cap C_j$, the
path $C_j[v_1,v_2]$ is enclosed by $S_i$ and does not have any
internal vertex on $P_b$, then $v_1$ and $v_2$ are not seen from
inside by any vertex of $S_{i+1}$. Nevertheless, the following claim shows
that if a subpath of $S'_i$ intersects many chords, then it contains a
vertex seen from inside by a vertex of $S'_{i+1}$.
\begin{claim}\label{cl:longseen}
  Let $Q$ be a subpath of $S'_i$ having an endpoint on $C_x$ and an
  endpoint on $C_y$. If $x+f(k,t-1)+5 \le j \le y-f(k,t-1)-5$ holds
  for some $j$, then there exists at least one vertex of $Q$ on $C_j$ that is seen from inside by a
  vertex of $S'_{i+1}$.
\end{claim}
\begin{proof}
  By Lemma~\ref{lem:pathparity}, there are vertices $w_1,w_2\in
  C_j\cap S_i$ such that subpath $C_j[w_1,w_2]$ is enclosed by $S_i$
  and $S_i[w_1,w_2]$ contains an endpoint of $Q$, which implies that
  $S_i[w_1,w_2]$ intersects $C_x$ or $C_y$. Moreover, the internal
  vertices of $C_j[w_1,w_2]$ are not on $S_i$. Note that
  $S_i[w_1,w_2]$ is disjoint from $P$. As $|x-i|,|y-i|\ge f(k,t-1)+5$,
  Lemma~\ref{lem:bendinbendxx} implies that a segment of $P_b$
  different from $S_i$ intersects $C_j[w_1,w_2]$. Let $z$ be the
  vertex of $C_j[w_1,w_2]$ closest to $w_1$ (but different from $w_1$)
  that is in $P_b$. As the internal vertices of $C_j[w_1,w_2]$ are not
  on $S_i$, vertex $z$ is not on $S_i$. Now $z$ sees $w_1$ from inside,
  hence $z$ is on $S'_{i+1}$.  \cqed\end{proof}

We would like to say that a subpath of some $C_j$ between $S'_1$ and
$S'_s$ intersects no other segment of $P_b$ than $S_1$, $\dots$,
$S_s$. This is not completely trivial, as we can use the third
property of Claim~\ref{cl:seeingsegments} for a vertex of $S_i$ only
if we show that it is on $S'_i$ as
well. Claims~\ref{cl:crosspath}--\ref{cl:pathexists} provide such
paths.

\begin{claim}\label{cl:crosspath}
  Let $x$ be a vertex of $S'_1\cap C_j$ for some $j\ge h^*+2sm$. Then
  for every $1\le i \le s$, there is an undirected path $W_i$ from $x$
  to a vertex $y\in S'_i$ such that
\begin{itemize}
\item $W_i$ is between $S_1$ and $S_i$,
\item $W_i$ is between $C_{j-2im}$ and $C_{j+2im}$,
\item $W_i$ does not intersect any segment of $P_b$ different from $S_1$, $\dots$, $S_s$.
\end{itemize}
\end{claim}
\begin{proof}
  The proof is by induction on $i$.  A path $W_1$ consisting of only
  $x=y_1$ shows that the statement is true for $i=1$.  Suppose that
  $y_i$ is on $C_{j_i}$. Let $q$ be a vertex of $S'_i$ either on
  $C_{j_i-2m}$ or $C_{j_i+2m}$ such that $S'_i[y_i,q]$ has no internal
  vertex on either $C_{j_i-2m}$ or $C_{j_i+2m}$ (as the endpoints of
  $S'_i$ are on $C_{h^*-im}$ and $j_i-2m \ge j-2im-2m \ge
  h^*+2sm-2im-2m \ge h^*-im$ holds, such a vertex $q$ has to
  exists). Suppose therefore that $q$ is on $C_{j_{i+1}}$, where
  $j_{i+1}$ is either $j_{i}-2m$ or $j_{i}+2m$.  Let
  $Q_i=S'_i[y_i,q]$.  By Claim~\ref{cl:longseen}, there is a vertex
  $z$ of $Q_i$ on $C_{j_{i+1}}$ seen from inside by a vertex
  $y_{i+1}\in S'_{i+1}$. Appending $Q_i[y_i,z]$ and the subpath of
  $C_{j_{i+1}}$ between $z$ and $y_{i+1}$ to the path $W_i$ gives the
  required undirected path $W_{i+1}$ ending at $y_{i+1}$. If path
  $W_i$ is between $C_{j-2im}$ and $C_{j+2im}$, then $W_{i+1}$ is
  between $C_{j-2(i+1)m}$ and $C_{j+2(i+1)m}$.  \cqed\end{proof}

\begin{claim}\label{cl:pathallnested}
  For some $j\ge h^*+4sm+1$, let $C^*=C_j[v_1,v_2]$ be a subpath
  between $S_1$ and $S_s$ with $v_1\in S'_1$, $v_2\in S_s$, and having
  no internal vertex on $S_1$ or $S_s$. Then $S_1$, $\dots$, $S_s$ are
  the only segments of $P_b$ intersecting $C^*$.
\end{claim}
\begin{proof}
  Let $S$ be a segment intersecting $C^*$, which means that
  $S$ is enclosed by $S_1$. 
  The path $C^*$ splits the area between $S_1$ and $S_s$ into two
  regions. At least one of these two regions contains an endpoint of
  segment $S$; let $R^*$ be such a region and let $S'$ be a subpath of
  $S$ in this region between $C_0$ and $C^*$.
Let $R^*$ be enclosed by subpaths $S^*_1$ of $S_1$, subpath $S^*_s$ of
$S_s$, subpath $C^*_0$ of $C_0$, and $C^*$.

Let $u$ be a vertex of $S^*_1$ on $C_{j-2sm-1}$ closest to $v_1$; as
$v_1\in S'_1$ and $j-2sm-1\ge h^*+2sm$, we have $u\in S'_1$. Let $W_s$
be the undirected path given by Claim~\ref{cl:crosspath}. As path
$W_s$ connects $u$ to $S_s$, contained between $S_1$ and $S_s$, and
contained also between $C_{j-4sm-1}$ and $C_{j-1}$, it cannot
intersect $C^*$ and hence it is in the region $R^*$. Now path $W_s$
separates $C^*$ and $C^*_0$ in $R^*$, thus $W_s$ intersects
$S'$. Since $S_1$, $\dots$, $S_s$ are the only segments of $P_s$ that
$W_s$ intersects, it follows that $S$ is one of these segments.
\cqed\end{proof}

\begin{claim}\label{cl:pathexists}
  Let $Q$ be a subpath of $S'_1$ from a vertex of $C_x$ to a vertex of
  $C_y$.  Suppose that $x+2sm < j < y-2sm$ and $j\ge h^*+4sm+1$ holds for some $j$. Then there
  is a vertex $\alpha\in Q\cap C_j$ and vertex $\beta\in S_s\cap C_j$
  such that the subpath $C_j[\alpha,\beta]$ is between $S_1$ and
  $S_s$, has no internal vertex on $S_1$ and $S_s$, and intersects no
  segment of $P_b$ other than $S_1$, $\dots$, $S_s$.
\end{claim}
\begin{proof}
  Consider the subpath $C_j[w_1,w_2]$ given by
  Lemma~\ref{lem:pathparity} with $w_1,w_2\in S'_1$. If it intersects
  $S_s$, then let $\beta$ be the vertex of $S_s$ closest to
  $\alpha:=w_1$ on this subpath and we are done by
  Claim~\ref{cl:pathallnested}. Otherwise, we arrive to a
  contradiction as follows. The path $S_1[w_1,w_2]$ contains an
  endpoint of $Q$, hence it contains a vertex $z$ that is either on
  $C_x$ or on $C_y$. Applying Claim~\ref{cl:crosspath} on this vertex
  $z$ gives a path $W_s$ from $z$ to $S_s$ and enclosed by $S_1$.
  Path $W_s$ cannot intersect $C_j[w_1,w_2]$ as $|x-j|,|y-j|>
  2sm$. Thus $W_s$ is enclosed by the cycle $S_1[w_1,w_2]\cup
  C_j[w_1,w_2]$. As $W$ has an endpoint on $S_s$, this contradicts the
  assumption that $C_j[w_1,w_2]$ does not intersect $S_s$.
  \cqed\end{proof}

We are now ready to find the area required by Lemma~\ref{lem:nestedrerouting}, where all the segments are nested.
Let $j_i=(h^*+4sm+1)+2smi$ for $i=0,\dots, 8$ (note that $j_i\le h^*+20sm+1<h-m$ for every
such $j_i$). As $S'_1$ is an $(h^*-m)$-segment and intersects
$C_{h-m}$, we can choose vertices $v_0$, $\dots$, $v_8$ appearing on
$S'_1$ in this order such that $v_j$ is on $C_{j_i}$. For $q=1,3,5,7$,
Claim~\ref{cl:pathexists} gives a subpath
$C^*_q:=C_{j_q}[\alpha_q,\beta_q]$ between $S_1$ and $S_s$ with
$\alpha_q$ being an internal vertex of $S'_1[v_{q-1},v_{q+1}]$. This
means that $C^*_1$, $C^*_3$, $C^*_5$, $C^*_7$ are distinct and connect
$S_1$ and $S_s$ in this order.

\begin{claim}\label{cl:middleintersect}
Every segment $S$ intersecting $C^*_3$ or $C^*_5$ is nested
between $S_1$ and $S_s$.
\end{claim}
\begin{proof}
  By Claim~\ref{cl:pathexists}, if $S$ is a segment of $P_b$, then it
  is one of $S_1$, $\dots$, $S_s$, hence the claim is certainly true.
  Consider now a segment $S$ of $P_{b'}$ for some $b'\neq b$ and let
  $Q$ be a subpath of $S$ from one of its endpoints on $C_0$ to a
  vertex $z$ of $C^*_3$ or $C^*_5$. Clearly, $Q$ has to intersect
  either $C^*_1$ or $C^*_7$ (see Figure~\ref{fig:final}). Let us assume that $S$ intersects $C^*_1$
  (the case when $S$ intersects $C^*_7$ is similar).

\begin{figure}
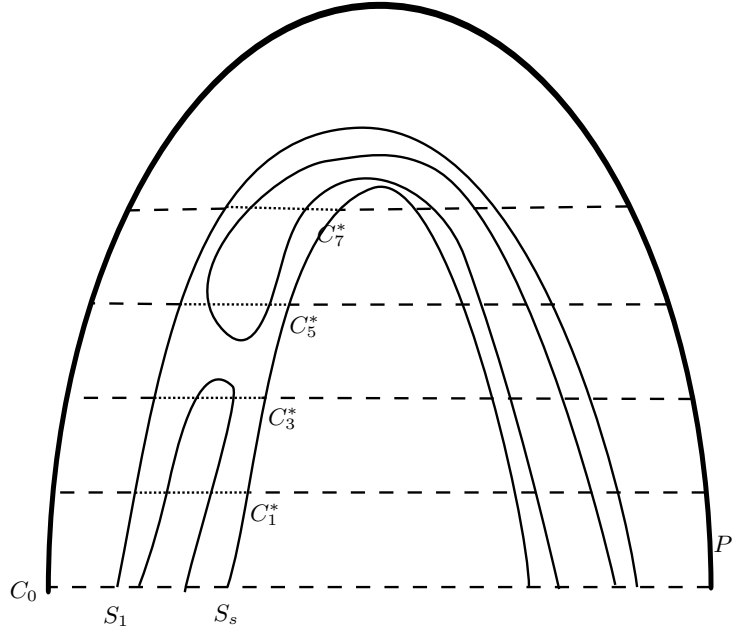

{\footnotesize 
\begin{center}
\svg{0.65\linewidth}{final}
\end{center}
}
\caption{Claim \ref{cl:middleintersect}: The segments $C^*_1$, $C^*_3$, $C^*_5$, $C^*_7$ connecting $S_1$ and $S_s$, and two additional segments 
that intersect $C^*_3$ or $C^*_5$.}\label{fig:final}
\end{figure}

Assume by contradiction that $S$ is not nested between $S_1$ and $S_s$.
Let $w_1$ and $w_2$ be the endpoints of $C^*_1$; 
  clearly, $S$ does not enclose $w_1$ and $w_2$ (as they are on
  $S_1$ and $S_s$, respectively).  Let us use Lemma~\ref{lem:pathparity0},
  on the line $C^*_1[w_1,w_2]$, on the cycle formed by $S$ and the subpath
  of $C_0$ connecting the endpoints of $S$, and on the subpath $Q$ of
  $S$ connecting $C_0$ and $z$ while intersecting $C^*_1$. We get a
  subpath $S[w_1,w_2]$ containing $z$ such that $w_1,w_2\in C^*_1$ and
  $C^*_1[w_1,w_2]$ contains no vertex of $S$. Let us observe that the
  internal vertices of $C^*_1[w_1,w_2]$ are not on any segment of
  $P_b$: any such segment would be enclosed by $S$, thus it is in
  contradiction with Claim~\ref{cl:pathexists}, which states that this
  segment has to be one of $S_1$, $\dots$, $S_s$.  Since $S[w_1,w_2]$
  contains $z$, which is on $C^*_3$ or $C^*_5$, the path $S[w_1,w_2]$
  intersects both $C^*_1$ and $C^*_3$. By
  Lemma~\ref{lem:bendinbendmain}, there is an $m$-bend
  $(S[w_1,w_2],C'_0,\dots,C'_m)$, with $C'_0=C^*_1[w_1,w_2]$.  The
  type of this bend is at most $t-1$ (as both $S[w_1,w_2]$ and
  $C^*[w_1,w_2]$ are disjoint from $P_b$), which contradicts $m>
  f(k,t-1)$.  \cqed \end{proof}

We have shown that the conditions of Lemma~\ref{lem:nestedrerouting}
hold for $S^1:=S_1$, $S^2:=S_s$, $C_x[x^1,x^2]:=C^*_3$, and
$C_y[y^1,y^2]:=C^*_5$. Thus there are less than $2^k=s$ segments
nested between $S_1$ and $S_s$, contradicting the existence of the
sequence $S_1$, $\dots$, $S_s$.
\end{proof}

%%% Local Variables: 
%%% mode: latex
%%% TeX-master: "dirplanarkpath"
%%% End: 

\section{Min-max theorems for paths, cycles, and cuts}\label{sec:spiral}

\subsection{Framework}

In this section we consider graphs embedded on surfaces. By abusing the notation, we identify the graph with its image in the embedding.

\begin{defin}
\label{def:non-degenerate}
Let $G$ be a digraph embedded on a compact surface $\Sigma$. A directed curve $N$ on $\Sigma$ is called {\em{non-degenerate}}, if $N$ intersects embedding of $G$ in a finite number of points and for each intersection point $x$ of $N$ and $G$ there exists a neighbourhood $U_x$ of $x$ such that $N\cap U_x$ separates some non-empty subsets of $(G\cap U_x)\setminus N$ in $U_x$. For a non-degenerate curve $N$, let 
\begin{itemize}
\item $\vnoose(N)=(x_1,x_2,\ldots,x_p)$ be the sequence of vertices and edges through which $N$ passes, in their order of appearance on $N$; 
\item $\snoose(N)=(S_1,S_2,\ldots,S_p)$ be the sequence of subsets of $\{-1,+1\}$, defined as follows:
\begin{itemize} 
\item if $x_i$ is a vertex, then $-1\in S_i$ if there exists an arc entering $x_i$ from the left of $N$ and an arc leaving $x_i$ to the right of $N$, and $+1\in S_i$ if there exists an arc entering $x_i$ from the right of $N$ and an arc leaving $x_i$ to the left of $N$;
\item if $x_i$ is an edge, then $S_i=\{-1\}$ if the $x_i$ traverses $N$ from the left to right, and $S_i=\{+1\}$ if $x_i$ traverses $N$ from right to left.
\end{itemize}
\end{itemize}
\end{defin}

Intuitively a directed curve is non-degenerate if it does not touch an edge (or vertex) and return to the same face.
We point out that a non-degenerate curve is not necessarily non-self-crossing, it is just a smooth, regular image of an interval $[0,1]$. A curve is called {\emph{simple}} if it visits every vertex, arc, and face of $G$ at most once; observe that every simple curve is also non-degenerate. We often consider closed curves, that is, smooth and regular images of a circle, and call such a curve a {\emph{noose}}. The sequences $\vnoose$ and $\snoose$ are defined in the same manner in this situation; note that they are unique modulo cyclic shifts.

From now on we assume that all the considered curves are non-degenerate
(with a single exception of spiral cuts defined in Section~\ref{sec:bundles-and-bundle-words});
hence we ignore stating this attribute explicitely.

When we consider a curve or a noose in our algorithms, we may represent it as a sequence consisting of alternately vertices or edges and faces which the curve traverses. However, for some proofs it will be useful to imagine the curve as an actual topological object being an image of a circle or a closed interval.

\begin{defin}
For a sequence $(S_1,S_2,\ldots,S_p)$ where $S_i\subseteq \{-1,+1\}$ we say that a sequence $(s_1,\ldots,s_q)$ is {\emph{embeddable}} into $(S_1,S_2,\ldots,S_p)$ if there exists an increasing function $\iota:[q]\to [p]$ such that $s_i\in S_{\iota(i)}$. In this case, function $\iota$ is called an {\emph{embedding}}.
\end{defin}

We are now ready to state the following Theorem of Ding, Schrijver, and Seymour~\cite{dss:green-line}.

\begin{theorem}[\cite{dss:green-line}]\label{thm:dss}
Let $G$ be a digraph embedded on a torus $\Sigma$, and let $C_1,C_2,\ldots,C_k$ be closed, non-crossing directed curves on $\Sigma$ of homotopies $\{(0,c_i)\}$, where $c_i=\pm 1$, located in this order on the torus. Then one can find vertex-disjoint directed cycles $D_1,D_2,\ldots,D_k$ in $G$ homotopic to $C_1,C_2,\ldots,C_k$ if and only if there does not exist a noose $N$ with the following property: if $(p,q)$ is the homotopy of $N$, then $p\geq 0$ and no cyclic shift of $(c_1,c_2,\ldots,c_k)^p$ is embeddable into $\snoose(N)$.
\end{theorem}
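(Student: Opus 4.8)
This direction is straightforward. Suppose vertex-disjoint directed cycles $D_1,\dots,D_k$ homotopic to $C_1,\dots,C_k$ exist, and let $N$ be an arbitrary non-degenerate noose; orient $N$ so that the first coordinate $p$ of its homotopy $(p,q)$ is nonnegative. If $p=0$ there is nothing to check, so assume $p\ge 1$. The curves $C_1,\dots,C_k$, being pairwise non-crossing of homotopy $(0,\pm1)$ and placed in this cyclic order, cut the torus into $k$ annular regions $R_1,\dots,R_k$ with $D_i$ separating $R_i$ from $R_{i+1}$, and the $D_i$ inherit the same cyclic arrangement. Traversing $N$ once describes a closed walk through $R_1,\dots,R_k$ of net winding number $p$ around the region cycle; after deleting immediate backtracks (a $D_i$-crossing followed at once by the opposite $D_i$-crossing) this walk reduces to $p$ full turns, so reading the signs of the surviving crossings of $\bigcup_i D_i$ along $N$ yields, with the sign conventions built into the definition of $\snoose$, a cyclic shift of $(c_1,\dots,c_k)^p$ as a subsequence. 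Each surviving crossing sits at a vertex or edge $x$ of $G$ on $N$ and contributes its sign to $S_x$: for a vertex $x$, the in-arc and out-arc of $D_i$ at $x$ are exactly the required arcs entering and leaving $N$ on opposite sides. Hence some cyclic shift of $(c_1,\dots,c_k)^p$ embeds into $\snoose(N)$, so no noose of the forbidden type can exist.

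\textbf{Sufficiency: the plan.} The substantive direction is to show that the absence of a forbidden noose forces the cycles to exist. The plan is to reduce to a planar routing problem. One cuts the torus along a suitable non-self-crossing reference noose $N_0$ of homotopy $(1,0)$, turning $G$ into a planar graph drawn in a cylinder with the two copies of $N_0$ as boundary circles; after some routine gadgeteering (splitting the vertices and edges met by $N_0$, attaching auxiliary terminals on the two boundary circles, and a crossing-minimal choice of the sought cycles against $N_0$), a hypothetical solution becomes $k$ pairwise vertex-disjoint directed paths $R_1,\dots,R_k$ in this planar graph $G'$, each joining the two boundary circles, in the prescribed cyclic order and with prescribed directions $c_1,\dots,c_k$. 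Whether such disjoint paths exist is decided by the planar disjoint-paths theorem of Ding, Schrijver and Seymour (Theorem~\ref{th:discrerouting}): the noncrossing condition holds by construction, so the only possible obstruction is a curve $C$ in the cylinder violating the cut condition. It then remains to reglue: closing $C$ up across $N_0$ produces a noose $N$ of $\Sigma$ whose homotopy has first coordinate equal to the winding number $p$ of $C$ around the cylinder — which may be taken $\ge 1$, since $p=0$ makes the obstruction vacuous — and the cut-condition violation of $C$ translates exactly into the failure of every cyclic shift of $(c_1,\dots,c_k)^p$ to embed into $\snoose(N)$. This is the forbidden noose, contradicting our assumption.

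\textbf{The main obstacle.} I expect the technical heart — and the hardest part to get right — to be the uncrossing arguments underlying this reduction together with the passage from the abstract cut-violating curve $C$ back to a genuine noose of the stated form. On one side, one must control how a solution interacts with $N_0$ (and how the auxiliary terminals are attached) so that the cut really does turn the cycles into the desired disjoint paths; because we deal with \emph{directed} cycles, every rerouting move must respect orientations, which is the delicate point. On the other side, the curve $C$ produced by the planar theorem lives in $G'$, may wind several times around the cylinder, and may cross itself; a minimality argument is needed to take it simple and of face-vertex type, so that after regluing it is a bona fide non-degenerate noose whose left/right crossing data with $G$ are precisely the sequence $\snoose(N)$ and precisely encode the failure of $(c_1,\dots,c_k)^p$ to embed. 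Once both points are in place, an inductive/augmenting scheme — try to route one more round of the pattern $c_1,\dots,c_k$, uncross, and either succeed or read off the obstruction — completes the argument.
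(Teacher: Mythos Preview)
The paper does not prove Theorem~\ref{thm:dss}; it is quoted from Ding, Schrijver and Seymour~\cite{dss:green-line} and used as a black box (the remarks following the statement only adjust the noose convention). So there is no ``paper's proof'' to compare your attempt against.

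On your sketch itself: the necessity direction is fine. The sufficiency plan has a structural gap, not merely a technical one. You propose to cut along a noose $N_0$ of homotopy $(1,0)$ and invoke Theorem~\ref{th:discrerouting}. Two problems. First, Theorem~\ref{th:discrerouting} concerns terminals on the boundary of a \emph{disc}; cutting the torus gives a \emph{cylinder}, and the required paths must in addition satisfy an identification constraint between the two boundary copies of $N_0$ so that they close up into cycles on~$\Sigma$. That is not a noncrossing disjoint-paths instance in a disc, and a cut-condition failure for some fixed terminal placement on the two circles says nothing about other placements. Second, and relatedly, the phrase ``a crossing-minimal choice of the sought cycles against $N_0$'' is circular: you are trying to establish the existence of the $D_i$, so you cannot first choose them to fix where the terminals go. Even granting that each $D_i$ may be taken to meet $N_0$ exactly once, the location of that crossing is not determined, so there is no single planar instance to hand to Theorem~\ref{th:discrerouting}. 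The regluing step (turning a cut-violating curve $C$ into a noose of homotopy $(p,q)$ with the asserted non-embeddability) inherits the same ambiguity: $C$ lives in an instance whose terminals you have not canonically specified, and the winding parameter $p$ is not forced by $C$ alone.

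In short, the cut-and-reduce strategy does not reach Theorem~\ref{th:discrerouting}; the torus statement genuinely requires an argument on the torus, as in~\cite{dss:green-line}.
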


Let us remark that by {\emph{homotopic}} we mean that there exists a {\emph{continuous shift}} of $\Sigma$ that simultaneously shifts all the cycles $C_1,C_2,\ldots,C_k$ to $D_1,D_2,\ldots,D_k$, i.e., a continuous map $h:\Sigma\times [0,1]\to \Sigma$ such that $\Sigma(\cdot,0)$ is identity and $\sigma(\cdot,1)$ maps every cycles $C_i$ to the corresponding cycle $D_i$.

In the original paper of Ding, Schrijver, and Seymour~\cite{dss:green-line}, the noose $N$ was required only to traverse vertices and faces, that is, it was a face-vertex curve. However, we may relax the requirement on the curve to just being non-degenerate, as after relaxation the assumed object still remains a counterexample, and this relaxation simplifies notation in the further parts of the article.

Let $G$ be a digraph embedded into a ring $R$ with outer face $C_1$ and inner face $C_2$. We choose an arbitrary curve $W$ in $R$ that (i) avoids $V(G)$, (ii) crosses $E(G)$ in a finite number of points, (iii) connects $C_1$ and $C_2$, and (iv) is directed from $C_1$ to $C_2$, as the {\emph{reference curve}}. For every path $P$ connecting $C_1$ and $C_2$ in $G$, by $W(P)$, the {\emph{winding number}} of $P$, we denote the number of signed crossings of $W$ defined in the following manner: we traverse $P$ in the direction from $C_1$ to $C_2$ and each time we intersect $W$, if $W$ crosses from left to right (with respect to the directed curve $W$) then this contributes $+1$ to the number of crossings, and from right to left we count $-1$ crossing. Note that even if $P$ is in fact a directed path directed from $C_2$ to $C_1$, we count the crossings by traversing from $C_1$ to $C_2$.

If a ring is equipped with a reference curve, we call it a {\emph{rooted ring}}. We would like to stress here that the reference curve is required to be fully embedded into the ring, i.e., between cycles $C_1$ and $C_2$. This property will be (implicitly) important to many claims, so we will always make sure that the constructed reference curves have this property.

Let $\Pp=(P_1,P_2,\ldots,P_k)$ be a family of directed paths connecting $C_1$ and $C_2$. For a sequence $\Cc=(c_1,c_2,\ldots,c_k)$, where each $c_i$ is $\pm 1$, we say that $\Pp$ is {\emph{compatible}} with $\Cc$ if for each index $i$, path $P_i$ is directed from $C_1$ to $C_2$ if $c_i=+1$, and from $C_2$ to $C_1$ otherwise. Let us note the following observation.

\begin{observation}\label{obs:pm1}
Let $G$ be a digraph embedded into a rooted ring with outer face $C_1$ and inner face $C_2$. Let $s_1,s_2,\ldots,s_k$ be terminals lying on the outer face in clockwise order, and let $t_1,t_2,\ldots,t_k$ be terminals lying on the inner face in clockwise order. Let us fix a sequence $\Cc=(c_1,c_2,\ldots,c_k)$, consisting of entries $\pm 1$. Let $\Pp=(P_1,P_2,\ldots,P_k)$ be a family of directed vertex-disjoint paths connecting corresponding $s_i$ with $t_i$, compatible with $\Cc$. Then the winding numbers of paths $P_i$ differ by at most $1$.
\end{observation}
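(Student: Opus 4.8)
The plan is to compare the winding numbers of two consecutive paths $P_i$ and $P_{i+1}$ and show that $|W(P_i) - W(P_{i+1})| \le 1$; since the winding numbers of all the paths then pairwise differ by at most... wait, that does not immediately follow, so I will instead directly compare an arbitrary pair $P_i$ and $P_j$. Actually the cleanest approach is the following. First I would observe that, since the paths are pairwise vertex-disjoint and all connect the outer face to the inner face of the ring, they are naturally linearly (cyclically) ordered across the ring. Concretely, cutting the ring along $P_i$ produces a disc, and any other path $P_j$ lies entirely inside this disc; the winding number $W(P_j)$ can then be compared to $W(P_i)$ by looking at how $P_j$ ``wraps'' relative to $P_i$. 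The key claim is that two disjoint paths in a ring, both joining the two boundary circles, have winding numbers differing by at most $1$: intuitively, $P_i \cup P_j$ together with arcs of $C_1$ and $C_2$ bound one or two discs in the ring, and a reference curve $W$ crossing $P_i$ with net signed count $W(P_i)$ must cross $P_j$ with net signed count within $1$ of that, because $W$ enters and exits these bounded discs, and each excursion of $W$ between $P_i$ and $P_j$ contributes the same sign to both counts except possibly for the first and last partial excursions near the endpoints on $C_1$ and $C_2$.

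The key steps, in order, would be: (1) Fix $i \ne j$ and consider the closed region $R_{ij} \subseteq R$ ``between'' $P_i$ and $P_j$ — more precisely, cut $R$ along $P_i$ to get a disc $D$; inside $D$, the path $P_j$ separates $D$ into two pieces, exactly one of which, call it $R_{ij}$, does not touch (the images of) $P_i$ in its interior. (2) Observe that the reference curve $W$, being a curve from $C_1$ to $C_2$ avoiding vertices, enters and leaves $R_{ij}$ finitely often; between a consecutive entry through $P_i$ and exit through $P_j$ (or vice versa), the contribution of that sub-arc of $W$ to $W(P_i)$ and to $W(P_j)$ is by elementary planarity equal and opposite in the appropriate bookkeeping, so the two signed counts cancel in pairs. (3) Handle the boundary terms: the sub-arcs of $W$ near its own endpoints on $C_1$ and on $C_2$, which may cross $P_i$ or $P_j$ without a matching crossing of the other, contribute at most $1$ in total to the difference $W(P_i) - W(P_j)$, because $W$ has exactly one endpoint on each of $C_1, C_2$, and at each of those two endpoints at most one of the two paths is crossed an ``unmatched'' number of times — in fact a careful parity/homotopy argument gives that the net unmatched contribution is at most $1$. (4) Conclude $|W(P_i) - W(P_j)| \le 1$ for all $i,j$, which is the statement.

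An alternative, and perhaps cleaner, route is purely homotopy-theoretic: in the ring $R \cong S^1 \times [0,1]$, a path from $C_1$ to $C_2$ has a well-defined class in $\pi_1$ relative to the boundary, measured by the winding number; two disjoint such paths are ambient-isotopic to ``parallel'' paths (slide $P_j$ toward $P_i$ through the region $R_{ij}$, which contains no vertex of $G$ relevant to the constraint — but here $R_{ij}$ may contain vertices of $G$, so this isotopy is only topological, not graph-preserving, which is fine since winding number is a purely topological quantity of the embedded curves). Parallel paths in a ring have equal winding numbers, but the isotopy crossing the boundary circles can change the winding number by at most $1$ (one full extra loop can be absorbed at an endpoint on $C_1$ or $C_2$), giving the bound. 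I would present whichever of these is shorter; I lean toward the homotopy version with the remark that $R_{ij}$ is an honest sub-disc of the ring.

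The main obstacle I expect is step (3)/the endpoint bookkeeping: getting the constant exactly $1$ (rather than $2$) requires care about what happens at the single endpoint of $W$ on $C_1$ and the single endpoint on $C_2$, and about the convention (stated in the paragraph before the observation) that crossings of $P_j$ are always counted by traversing $P_j$ from $C_1$ to $C_2$ regardless of $P_j$'s actual direction. I would make this rigorous by choosing $W$ (or an isotopic copy of it) to run very close to $P_i$, so that $W$ crosses $P_j$ essentially the same number of times and with the same signs as it crosses $P_i$, with a controlled discrepancy of at most one crossing localized near $C_1 \cup C_2$.
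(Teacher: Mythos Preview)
The paper states this as an Observation without proof, so there is nothing to compare against; your task is to supply the argument the paper omits.

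Your overall strategy is correct: for any two disjoint paths $P_i,P_j$ in the annulus, each connecting $C_1$ to $C_2$, one has $|W(P_i)-W(P_j)|\le 1$. Both routes you sketch can be made to work, but the crossing-bookkeeping version (your steps (1)--(4)) is unnecessarily delicate, and your proposed fix in the last paragraph---isotoping $W$ to run close to $P_i$---has a wrinkle: the signed intersection number $W(P_i)$ is preserved under isotopy of $W$ rel endpoints, so if $W(P_i)\ne 0$ you \emph{cannot} isotope $W$ to a curve disjoint from $P_i$. What you can do is isotope $W$ to cross $P_i$ exactly $|W(P_i)|$ times, all with the same sign, and then argue about $P_j$; this works but is still fiddly.

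The clean way to nail the constant $1$ is to pass explicitly to the universal cover $\widetilde R\cong\mathbb{R}\times[0,1]$. Choose coordinates so that the chosen lift of $W$ is the segment $\{0\}\times[0,1]$. The lift of $P_i$ starting at $(\sigma_i,0)$ with $\sigma_i\in(0,2\pi)$ ends at $(\tau_i+2\pi n_i,1)$ with $\tau_i\in(0,2\pi)$, and one checks directly that the signed crossing number of $P_i$ with $W$ equals $n_i$, so $W(P_i)=n_i$. Now disjointness of $P_i$ and $P_j$ in $R$ lifts to disjointness of all their lifts in $\widetilde R$; if $\sigma_i<\sigma_j$, the lift of $P_j$ starting at $\sigma_j\in(\sigma_i,\sigma_i+2\pi)$ is trapped (Jordan curve theorem in the plane) between the lifts of $P_i$ starting at $\sigma_i$ and at $\sigma_i+2\pi$, hence its endpoint lies in $(\tau_i+2\pi n_i,\;\tau_i+2\pi(n_i+1))$. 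This forces $n_j\in\{n_i,n_i+1\}$, giving $|W(P_i)-W(P_j)|\le 1$ with no boundary bookkeeping. I would recommend replacing your steps (2)--(3) with this two-line universal-cover computation; it also makes transparent that the clockwise-order hypothesis on the $s_i$ and $t_i$ is not actually needed for the bound (it is needed for such disjoint paths to exist, but once they exist the bound is automatic).
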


For such family of paths $\Pp=(P_1,P_2,\ldots,P_k)$, we denote $W(\Pp):=W(P_1)$. Note that for every $i$ we have that $|W(P_i)-W(\Pp)|\leq 1$.

\subsection{Understanding homotopies in a ring}

We use Theorem~\ref{thm:dss} to prove the following result. Intuitively, it says that the possible homotopies of families of paths crossing a ring behave roughly in a convex way.

\begin{theorem}\label{thm:spiral}
Let $G$ be a digraph embedded into a rooted ring with outer face $C_1$ and inner face $C_2$. Let $s_1,s_2,\ldots,s_k$ be terminals lying on the outer face in clockwise order and $t_1,t_2,\ldots,t_k$ be terminals lying on the inner face in clockwise order; assume further that the reference curve $W$ connects the interval between $s_k$ and $s_1$ on the boundary of $C_1$ and the interval between $t_k$ and $t_1$ on the boundary of $C_2$. Let us fix a sequence $\Cc=(c_1,c_2,\ldots,c_k)$, consisting of entries $\pm 1$. Let $\Pp=(P_1,P_2,\ldots,P_k)$ be a family of directed vertex-disjoint paths connecting corresponding $s_i$ with $t_i$, compatible with $\Cc$. Assume further that there are some families $\Qq=(Q_1,Q_2,\ldots,Q_k)$ and $\Rr=(R_1,R_2,\ldots,R_k)$ of directed vertex-disjoint paths connecting $C_1$ and $C_2$ that are also compatible with $\Cc$. Then, for every number $\alpha$ such that $W(\Qq)+6\leq \alpha \leq W(\Rr)-6$, there exists a family $\Pp'=(P_1',P_2',\ldots,P_k')$ such that $\Pp'$ is compatible with $\Cc$, $P_i'$ connects $s_i$ and $t_i$, and $W(\Pp')=\alpha$.
\end{theorem}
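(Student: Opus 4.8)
The plan is to transfer the ring to a torus and invoke the green-line theorem of Ding, Schrijver, and Seymour (Theorem~\ref{thm:dss}). First I would glue the outer face (around $C_1$) to the inner face (around $C_2$), turning $G$ into a graph on a torus $\Sigma$ with one distinguished face $f_\infty$ carrying all $2k$ terminals, and extend the reference curve $W$ to a noose $W^+$ on $\Sigma$; the hypothesis that $W$ meets the $s_ks_1$-interval and the $t_kt_1$-interval is exactly what makes the extension and all later index matchings consistent. Inside $f_\infty$ I would add pairwise disjoint arcs $a_1,\dots,a_k$, with $a_i$ joining $s_i$ to $t_i$ and oriented so that $P_i\cup a_i$ is a directed cycle whenever $P_i$ is directed according to $c_i$; let $G'$ be the resulting graph on $\Sigma$. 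Since pairwise disjoint essential simple closed curves on a torus are isotopic, for any family $\mathcal F$ of vertex-disjoint directed paths joining $s_i$ to $t_i$ compatible with $\Cc$ the cycles $F_i\cup a_i$ all share one homotopy class, depending only on $W(\mathcal F)$ and $\Cc$; write it $H_w$ for $w=W(\mathcal F)$. Conversely, a standard amount of gadgeteering (subdividing the $a_i$ enough, and choosing $W^+$ together with a dual noose in $f_\infty$ crossing each $a_i$ exactly once and $G$ nowhere) ensures that any $k$ vertex-disjoint directed cycles of class $H_w$ in $G'$ traverse each $a_i$ exactly once, in the direction forced by $c_i$, so that deleting the $a_i$ recovers a family $\Pp'$ joining $s_i$ to $t_i$, compatible with $\Cc$, with $W(\Pp')=w$. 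Finally, an $SL_2(\Z)$ change of coordinates on $\Sigma$ brings $H_w$ to the form $\{(0,c_i)\}$ demanded by Theorem~\ref{thm:dss}.

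Next I would feed the auxiliary families into this dictionary. Connecting the $C_1$- and $C_2$-endpoints of $\Qq$ (resp.\ $\Rr$) to the terminals by disjoint arcs inside $f_\infty$, matched in the cyclic order in which they occur, produces families $\Qq^\sharp,\Rr^\sharp$ joining the terminals, compatible with $\Cc$, whose winding numbers differ from $W(\Qq)$, $W(\Rr)$ by an absolute constant (by an argument analogous to Observation~\ref{obs:pm1}); this, together with the minor losses elsewhere, is what the slack $6$ in the statement absorbs, so that $q_0\le\alpha\le r_0$ where $q_0=W(\Qq^\sharp)$, $r_0=W(\Rr^\sharp)$. Via the dictionary, $\Pp$, $\Qq^\sharp$, $\Rr^\sharp$ witness $k$ vertex-disjoint directed cycles of homotopy $H_{w_0}$, $H_{q_0}$, $H_{r_0}$ in $G'$ (with $w_0=W(\Pp)$); hence by Theorem~\ref{thm:dss} there is no obstruction noose for homotopy type $H_{w_0}$, $H_{q_0}$, or $H_{r_0}$.

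For the contradiction, suppose no family $\Pp'$ has $W(\Pp')=\alpha$. Then no $k$ vertex-disjoint directed cycles of class $H_\alpha$ exist in $G'$, so Theorem~\ref{thm:dss} hands us an obstruction noose $N$: in the coordinates where $H_\alpha=\{(0,c_i)\}$, $N$ has homotopy $(p,q)$ with $p\ge 0$ and no cyclic shift of $(c_1,\dots,c_k)^p$ embeddable into $\snoose(N)$. The pivotal observation is that passing from the $H_\alpha$-frame to the $H_w$-frame is the shear $(p,q)\mapsto(p+(w-\alpha)q,\,q)$, so in the $H_w$-frame $N$ has first coordinate $p+(w-\alpha)q$. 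If $q=0$, then $N$ is an obstruction for $H_w$ for every $w$ (both its first coordinate and the word $(c_1,\dots,c_k)^p$ are unchanged), contradicting the existence of $\Pp$. If $q>0$, then for $w=r_0\ge\alpha$ the first coordinate is $\ge p\ge 0$, and since every cyclic shift of $(c_1,\dots,c_k)^{p'}$ is a power $(c')^{p'}$ of a cyclic shift $c'$ of $(c_1,\dots,c_k)$ and contains $(c')^{p}$ as a factor for $p'\ge p$, no cyclic shift of $(c_1,\dots,c_k)^{p+(r_0-\alpha)q}$ embeds into $\snoose(N)$ either; thus $N$ obstructs $H_{r_0}$, a contradiction. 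The case $q<0$ is symmetric with $w=q_0\le\alpha$. This proves the theorem.

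I expect the genuine difficulty to lie entirely in the first paragraph: arranging $\Sigma$, the arcs $a_i$, the noose $W^+$, and the dual noose so that the dictionary between terminal-joining families of winding $w$ and homotopy-$H_w$ cycle systems is an exact correspondence — in particular that no solution produced by Theorem~\ref{thm:dss} can misroute through the $a_i$ — and so that the recovered family really joins $s_i$ to $t_i$ (the index matching being the identity, not a rotation), which is where the hypothesis on the endpoints of the reference curve is consumed. Once this scaffolding is in place, the shear computation and the monotonicity of embeddability under taking longer powers are routine.
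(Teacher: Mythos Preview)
Your architecture—glue to a torus, add arcs $a_i$ closing each $P_i$ to a directed cycle, invoke Theorem~\ref{thm:dss} for an obstruction noose $N$—matches the paper's, and your $q=0$ case (the shear is trivial, so $N$ obstructs every $H_w$, contradicting the $\Pp$-cycles, which genuinely live in $G'$) is correct. The shear observation for $q\neq 0$ is also sound: in the $H_w$-frame $N$ has first coordinate $p\pm(w-\alpha)q$, and for $w$ on the appropriate side of $\alpha$ this is $\geq p$, so by monotonicity of embeddability $N$ still obstructs.

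The gap is that you then need a cycle family of class $H_{q_0}$ or $H_{r_0}$ \emph{in $G'$} to contradict this, and you do not have one. Your $\Qq^\sharp$ is built by adding arcs in $f_\infty$ joining the $\Qq$-endpoints to the terminals; those arcs are not in $G'$, so the cycles $\Qq^\sharp_i\cup a_i$ live only in a strictly larger graph. That $N$ obstructs $H_{q_0}$ in $G'$ is perfectly compatible with $\Qq^\sharp$-cycles existing in that larger graph, where $\snoose(N)$ acquires extra $\pm 1$ terms from the new arcs. When $W(\Pp)$ happens to sit on the wrong side of $\alpha$ relative to the sign of $q$, you have nothing in $G'$ to play against $N$, and no amount of subdividing the $a_i$ repairs this: $\Qq,\Rr$ simply have the wrong endpoints, and there is no graph-level move inside $G'$ that fixes them.

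The paper sidesteps the issue by never closing $\Qq,\Rr$ into cycles at all. It lifts to the universal cover, where $N$ becomes a periodic curve $\Nf$, the $\Pp$-cycles become an infinite family $\Ppf$ of bi-infinite paths, and $\Qq$ becomes a family $\Qqf$ of finite paths, one copy per horizontal strip. A fundamental domain $\Nf[x]$ is chopped into pieces $K_i$ running along a line $\{y=c\}$ and pieces $L_i$ jumping from $\{y=c\}$ to $\{y=c+1\}$; Claims~\ref{clm:samelevel} and~\ref{clm:jumplevel} count how many paths of $\Ppf$ (for $K_i$) and of $\Qqf$ (for $L_i$) each piece must cross, and the telescoping sum yields an embedding of $\Cc^p$ into $\snoose(N)$, the desired contradiction. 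The hypothesis $W(\Qq)\leq\alpha-6$ is spent inside Claim~\ref{clm:jumplevel}—it forces each level-jump to cross at least $5k$ paths of $\Qqf$, which is what makes the bookkeeping close—not in endpoint adjustment as you guessed.
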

\begin{proof}
By the assumed property of the reference curve $W$ we have that for every sequence $\Aa$ of vertex-disjoint paths connecting $s_1,s_2,\ldots,s_k$ with corresponding $t_1,t_2,\ldots,t_k$, all the paths of $\Aa$ have the same winding number, equal to $W(\Aa)$. Let us change the curve $W$ to any curve $W'$ that has the same endpoints as $W$, but has winding number $\alpha$ with respect to $W$. Then, winding numbers with respect to $W'$ are exactly the same as with respect to $W$, but with $-\alpha$ additive constant. Therefore, without loss of generality we may assume that $\alpha=0$.

Compactify the plane using one point in the infinity. Take any disk $D_1$ outside $C_1$, any disk $D_2$ inside $C_2$, and cut them out of the compacted plane. Then identify boundaries of $D_1$ and $D_2$, thus obtaining a torus $\Sigma$. For every index $i$, insert an arc $(t_i,s_i)$ if $c_i=+1$, and an arc $(s_i,t_i)$ if $c_i=-1$. Observe that these new arcs can be realized on $\Sigma$ without introducing crossings by drawing them through the identified boundaries of $D_1$ and $D_2$. Let us denote the new graph by $\Ge$; we will also refer to it as to {\emph{extended}} $G$. For $i=1,2,\ldots,k$, let the face $F_i$ be the face of $\Ge$ enclosed by arcs between $s_i,t_i$, between $s_{i+1},t_{i+1}$, and fragments of boundaries of $C_1$, $C_2$ between terminals $s_i,s_{i+1}$ and $t_i,t_{i+1}$, respectively, where $s_{k+1}=s_1$ and $t_{k+1}=t_1$. Moreover, we can extend the curve $W$ to a closed curve $\We$ on $\Sigma$ by closing it through the face $F_k$. We can set the homotopy group on $\Sigma$ so that $\We$ has homotopy $(0,1)$ and $\De:=\partial D_1=\partial D_2$ has homotopy $(1,0)$. 

\begin{figure}[h!]
\begin{center}
\includegraphics[scale=0.6]{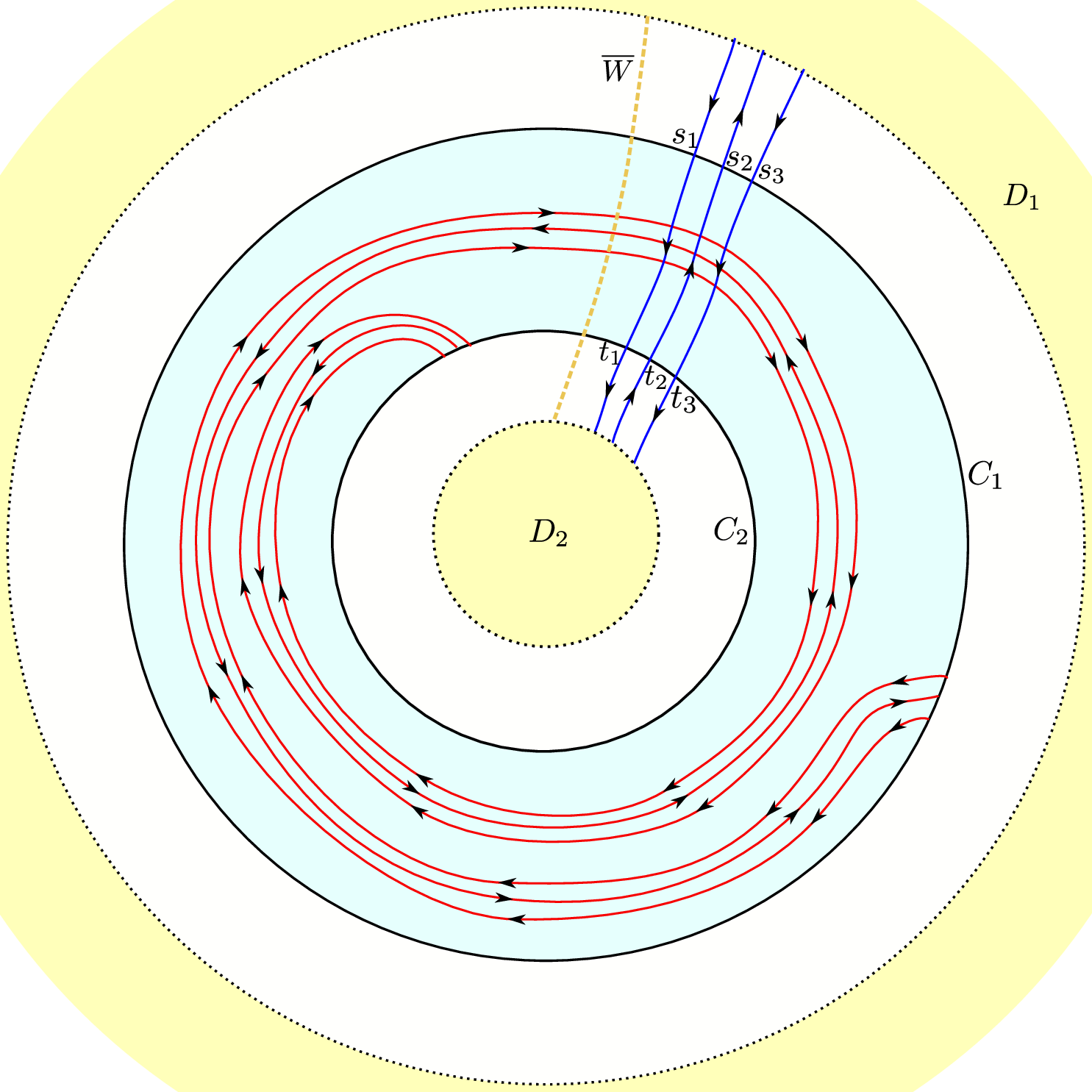}
\caption{Initial situation in the proof of Theorem~\ref{thm:spiral}. Paths of the family $\Ppe$ are coloured blue, and paths of the family $\Qq$ are coloured red. Family $\Rr$ is not depicted in order not to overcomplicate the picture. The cycles in the family $\Ppe$ already have homotopies $(0,c_i)$ (i.e., $\beta=0$), hence they may serve as $\Ppe'$.}
\label{fig:ring1}
\end{center}
\end{figure}

Similarly, using new arcs connecting $s_i$ and $t_i$ we can close each $P_i$ into a directed cycle $\Pe_i$, thus obtaining a family of vertex-disjoint cycles $\Ppe$. Note that a cycle $\Pe_i$ has homotopy $(c_i\cdot \beta,c_i)$ for $\beta=W(\Pp)$. Also, the postulated family of paths $\Pp'$ exists if and only if there exists the same family closed by the new arcs connecting $s_i$ and $t_i$; that is, in the theorem statement we postulate existence of a family $\Ppe'=(\Pe_1',\Pe_2',\ldots,\Pe_k')$ of vertex-disjoint cycles on $\Sigma$, located in this order, such that $\Pe_i'$ has homotopy $(0,c_i)$. 

For the sake of contradiction assume that the assumed family $\Ppe'$ does not exist. By Theorem~\ref{thm:dss} there exists a noose $N$ of some homotopy $(p,q)$, where $p\geq 0$, such that $(c_1,c_2,\ldots,c_k)^p$ is not embeddable in $\snoose(N)$.

We now construct an infinite graph $\Gf$, called also {\emph{unravelled}} $G$, by taking the torus $\Sigma$, cutting it along $\We$ and $\De$ thus obtaining a rectangle, and filling the plane with copies of this rectangle in a grid-like manner. The halves of arcs cut when cutting $\We$ and $\De$ are joined with the corresponding halves from a neighbouring rectangle. The rectangles are called {\emph{cells}}; we index them naturally by $\Z\times \Z$, where the first coordinate corresponds to the direction of $\De$, and the second to the direction of $\We$. Let us locate the cells on the plane in such a manner that a cell $(a,b)$ is the square with coordinates of corners $(a-1,b-1),(a-1,b),(a,b),(a,b-1)$. By $F_i^{(a,b)}$ we denote the face of $\Gf$ originating in the face $F_i$ of $\Ge$ that lies between cells $(a,b)$ and $(a,b+1)$; the copy of the face $F_k$ that lies on the meet of cells $(a,b)$, $(a+1,b)$, $(a+1,b+1)$ and $(a,b+1)$ has index $(a,b)$. Let $\Ff$ be the set of all the faces of the form $F_i^{(a,b)}$. Note that then a line $y=c$ for $c\in \Z$ traverses all the faces of $\Ff$ with the second coordinate of the upper index equal to $c$; moreover, no vertex of $
\Gf$ has integer second coordinate.

Let us define a family $\Ppf$ as an infinite family of parallel vertex-disjoint infinite paths that are results of unravelling the cycles of $\Ppe$ on the plane. Also, we can define unravelled families $\Qqf$ and $\Rrf$ as infinite families of vertex-disjoint paths originating in unravelling of $\Qq$ and $\Rr$, respectively. Note that each path from $\Qqf$ or $\Rrf$ is finite and contained in one row of the grid.

Let us elaborate a bit on the family $\Ppf$. Paths of $\Ppf$ divide the plane into infinite number of strips, each with two ends and bounded by two paths of $\Ppf$. The strips can be labeled by integer numbers so that each strip is neighbouring only with the strips with numbers differing by one. Observe that every path of $\Ppf$ traverses every line $y=c$ for $c\in \Z$ exactly once, using one of the arcs between neighbouring faces of $\Ff$. Thus, $\Ppf$ divides each line $y=c$ for $c\in \Z$ into intervals, corresponding to intersections of this line with strips. We call these intervals {\emph{strip intervals}}.

Moreover, observe that the family $\Ppf$ is invariant with respect to integer shifts, i.e., for any $(a,b)\in \Z\times \Z$ we have that $\Ppf+(a,b)=\Ppf$. Observe also that a shift $(a,0)$ acts on a line $y=c$ for $c\in \Z$ by shifting the strip intervals $k\cdot a$ to the right. As strip intervals are intersections of strips with this line, shift $(a,0)$ acts on the plane divided into strips by mapping each strip into the strip $k\cdot a$ to the right, in the order of strips.

We now move to families $\Qqf$ and $\Rrf$. Take each path $Q$ from $\Qqf$ contained in $\{y\in [c,c+1]\}$ and extend it by short curves contained in single faces of $\Ff$ at which $Q$ starts and ends, so that $Q$ connects the line $\{y=c\}$ with the line $\{y=c+1\}$ (recall that each path in $\Qqf$ connects $C_1$ with $C_2$, not $D_1$ with $D_2$). The original part of $Q$, being its intersection with $\Gf$, is called the {\emph{core}} of $Q$, and the two small curves at the ends are called {\emph{extensions}} of $Q$. Clearly, we may add the extensions in such a manner that (i) curves originating in the same path from $\Qq$ are extended in identical manner, and (ii) all the curves from $\Qqf$ are still pairwise disjoint. Perform the same construction for the family $\Rrf$.

Observe that families $\Qqf$ and $\Rrf$ have a similar periodic behaviour inside parts of the plane $\{y\in [c,c+1]\}$ for $c\in \Z$. Curves from $\Qqf$ ($\Rrf$) divide each $\{y\in [c,c+1]\}$ into small strips, naturally linearly ordered along $\{y\in [c,c+1]\}$. The partition is invariant with respect to shifts by vectors $(a,0)$ for $a\in \Z$: each such shift maps every strip to the strip that is $a\cdot k$ to the right of it, counting in the natural order of the strips.

Let us now examine what happens with the noose $N$. Clearly, $N$ also unravels into an infinite family of infinite-length parallel curves (see Fig.~\ref{fig:unraveling}). We take one of them and declare it $\Nf$, the unravelled noose $N$. By slightly perturbing $N$ we may assume that intersections of $\Nf$ with lines $\{y=c\}$ for $c\in \Z$ are not contained in $\Gf$. 

Note that $\Nf$ is periodic in the following sense: $\Nf+(p,q)=\Nf$. If we take a fragment of $\Nf$ between any $x\in \Nf\setminus \Gf$ and $x+(p,q)\in \Nf\setminus \Gf$, denote it by $\Nf[x]$, then this corresponds to the closed curve $N$ on $\Sigma$ cut in the vertex $x$. In particular, no cyclic shift of $\Cc^p$ can be embedded into $\snoose(\Nf[x])$.

Assume now that $q=0$, that is, $N$ does not wind in the direction of $\We$. Take any $x\in \Nf\setminus \Gf$ and consider the curve $\Nf[x]$. By the previous observations, $x+(p,0)$ is located in the strip $k\cdot p$ to the right with respect to the strip containing $x$. It follows that any noose travelling from $x$ to $x+(p,0)$ must cross the $k\cdot p$ consecutive paths from $\Ppf$ between these strips; this applies in particular to $\Nf[x]$. We may now define an embedding of a cyclic shift of $\Cc^p$ into $\snoose(\Nf[x])$, by embedding a $\pm 1$ into each crossing of these paths with $\Nf[x]$, where we choose the sign depending on the direction in which a corresponding path is directed. This is a contradiction with the assumed properties of $N$.

Hence, we infer that $q\neq 0$. For the rest of the proof we assume that $q>0$ and we will use only the family $\Qqf$. The proof for $q<0$ is symmetric and uses only family $\Rrf$.

Let us perform another slight modification to $\Nf$. By slightly perturbing $\Nf$ within faces it travels through, we may assume that whenever $\Nf$ travels through a face $F\in \Ff$ (say, lying on a line $\{y=c\}$), it firstly touches the line $\{y=c\}$, then crosses all the extensions of paths from $\Qqf$, then again touches the line $\{y=c\}$, and then finally leaves the face $F$. In other words, we may assume that before the first intersection and after the last intersection with the line $\{y=c\}$, no extension of a path from $\Qqf$ is crossed. This purely technical property will be used later to avoid counting crossings of extensions of paths from $\Qqf$ as true crossings contributing to $\vnoose(N)$.

\begin{figure}[h!]
\begin{center}
\includegraphics[scale=0.8]{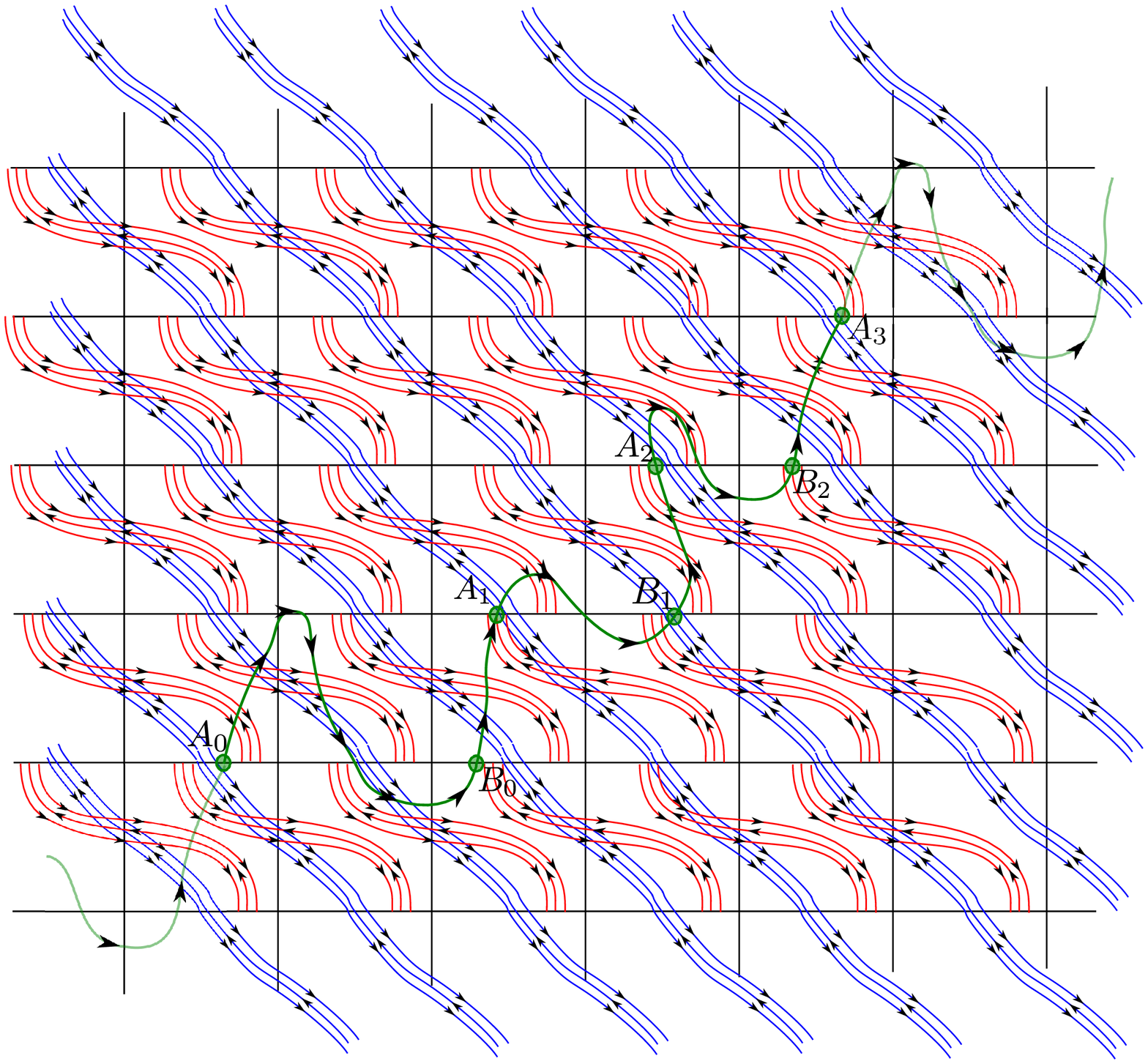}
\caption{The unravelled graph $\Gf$ with families $\Ppf$ and $\Qqf$ depicted (blue and red, respectively), where $\beta=-1$ and $(p,q)=(4,3)$. On the curve $\Nf$, the interval $\Nf[x]$ between $x$ and $x+(4,3)$ is highlighted, and partitioned into parts $K_i$, $L_i$ by points $x=A_0,B_0,A_1,B_1,A_2,B_2,A_3=x+(4,3)$. For the sake of simplicity, paths from the family $\Qq$ have winding numbers $-1$ instead of required at most $-6$.}
\label{fig:unraveling}
\end{center}
\end{figure}

We now need the following two claims.

\begin{claim}\label{clm:samelevel}
Let $M$ be a curve with an origin in a point $(x_1,c)\notin \Gf$ and end in a point $(x_2,c)\notin \Gf$, where $c\in \Z$. If $r=\max(0,\lfloor x_2\rfloor -\lceil x_1\rceil)$, then $\Cc^{r}$ is embeddable in $\snoose(M)$. 
\end{claim}
\begin{proof}
If $r=0$ then the claim is trivial, so assume that $r>0$. As $(x_1,c)$ is located in the interval $(\lceil x_1\rceil-1,\lceil x_1\rceil]$ on the line $y=c$ and $(x_2,c)$ is located in the interval $[\lfloor x_2\rfloor,\lfloor x_2\rfloor+1)$ on the same line, it follows that one can find $r$ consecutive bundles of paths from $\Ppf$ corresponding to cycles $\Pe_1,\Pe_2,\ldots,\Pe_k$, separating $(x_1,c)$ from $(x_2,c)$. Each such bundle is formed by paths of $\Ppf$ that intersect an interval $[\lambda,\lambda+1)$ on the line $\{y=c\}$ for $\lceil x_1\rceil \leq \lambda < \lfloor x_2\rfloor$, $\lambda\in \Z$. Similarly to the proof for the case $q=0$, crossings of $M$ with these paths define an embedding of $\Cc^r$ into $\snoose(M)$.
\end{proof}

\begin{claim}\label{clm:jumplevel}
Let $M$ be a curve contained in the part of the plane $\{y\in [c,c+1]\}$, where the origin is of the form $(x_1,c)$ and the end is of the form $(x_2,c+1)$. Moreover, assume that $M$ does not cross any extension of any path from $\Qqf$. If $r=\max(-4,\lfloor x_2\rfloor -\lceil x_1\rceil)$, then $\Cc^{r+4}$ is embeddable in $\snoose(M)$.
\end{claim}
\begin{proof}
If $r=-4$ then the claim is trivial, so assume that $r>-4$. We examine the mutual location of $(x_1,c)$ and $(x_2,c+1)$ with respect to the strips into which $\Qqf$ divides $\{y\in [c,c+1]\}$. We claim that the strip containing $(x_2,c)$ is at least $(r+5)\cdot k$ strips to the right with respect to the one containing $(x_1,c)$. By the properties of $\Qqf$ with respect to shifts, we have that $(x_1+r,c)$ is $r\cdot k$ strips to the right when compared with $(x_1,c)$; note that possibly $r<0$ --- then it means $-r\cdot k$ strips to the left. Now note that $\lfloor x_2\rfloor\geq x_1+r$, so $(\lfloor x_2\rfloor,c)$ is at least $r\cdot k$ strips to the right when compared with $(x_1,c)$. Observe that $(\lfloor x_2\rfloor,c+1)$ is at least $5\cdot k$ strips to the right when compared to $(\lfloor x_2\rfloor,c)$. This follows from the fact that the the segment between points $(\lfloor x_2\rfloor,c)$ and $(\lfloor x_2\rfloor,c+1)$ is exactly the line $W$, and as $W(\Qq)\leq -6$, then by Observation~\ref{obs:pm1} each curve of $\Qq$ has signed crossing number at least $5$ with respect to $W$. This means that during its travel from $(\lfloor x_2\rfloor,c)$ to $(\lfloor x_2\rfloor,c+1)$ we travel at least $5\cdot k$ strips to the right in the partition of $\{y\in [c,c+1]\}$ into the strips by $\Qqf$. As $\lfloor x_2\rfloor \leq x_2$, point $(x_2,c+1)$ can be only even further to the right when compared with $(\lfloor x_2\rfloor,c+1)$

Hence, in total $(x_2,c+1)$ lies at least $(r+5)\cdot k$ strips to the right with respect to $(x_1,c)$. As in the previous proofs, it follows that every curve $M$ travelling from $(x_1,c)$ to $(x_2,c+1)$, contained in $\{y\in [c,c+1]\}$ and not crossing any extensions of paths from $\Qqf$, must admit an embedding of a cyclic shift of $\Cc^{(r+5)\cdot k}$ into $\vnoose(M)$. As every cyclic shift of $\Cc^{(r+5)\cdot k}$ has $\Cc^{(r+4)\cdot k}$ as a subword, we have that $\Cc^{(r+4)\cdot k}$ is embeddable into $\snoose(M)$.
\end{proof}

Now observe that 
\begin{eqnarray*}
\max(0,\lfloor x_2\rfloor -\lceil x_1\rceil)\geq x_2-x_1-2, \\
\max(-4,\lfloor x_2\rfloor -\lceil x_1\rceil)+4 \geq x_2-x_1+2.
\end{eqnarray*}

Let us now examine the curve $\Nf$. As $\Nf$ is invariant under the shift $(p,q)$ and $q\neq 0$, it follows that the set of intersections of $\Nf$ with $\{y=0\}$ is nonempty and contained in a compact interval on $\Nf$. Let then $x$ be the first intersection of $\Nf$ with $\{y=0\}$. Consider the curve $\Nf[x]$; similarly as in the case $q=0$, we have that $\snoose(\Nf[x])$ does not admit embedding of any cyclic shift of $\Cc^p$. By the choice of $x$ and invariance of $\Nf$ under the shift $(p,q)$ we have that $x+(p,q)$ is the first intersection of $\Nf$ with the line $\{y=q\}$.

We would like now to divide $\Nf[x]$ into smaller curves. Let $A_0=x$. We inductively define points $B_0,A_1,B_1,\ldots,B_{q-1},A_q$ as follows: $B_i$ is the last intersection of the part of $\Nf[x]$ after $A_i$ with line $\{y=i\}$, while $A_{i+1}$ is the first intersection of the part of $\Nf[x]$ after $B_i$ with the line $\{y=i+1\}$. As $x+(p,q)$ is the first intersection of $\Nf$ with $\{y=q\}$, it follows that $A_q=x+(p,q)$. Let $A_i=(a_i,i)$ and $B_i=(b_i,i)$. For $i=0,1,\ldots,q-1$ let $K_i$ be the part of $\Nf[x]$ between $A_i$ and $B_i$, and let $L_i$ be the part of $\Nf[x]$ between $A_i$ and $B_{i+1}$.

Therefore, $\Nf[x]$ consists of curves $K_0,L_0,K_1,L_1,\ldots,K_{q-1},L_{q-1}$ concatenated in this order. Each curve $K_i$ satisfies the conditions of Claim~\ref{clm:samelevel}, while each curve $L_i$ satisfies the conditions of Claim~\ref{clm:jumplevel}; note that the claim for curves $L_i$ follows from the technical property about visits of $\Nf$ in faces of $\Ff$ that we ensured before introducing Claims~\ref{clm:samelevel} and~\ref{clm:jumplevel}. By applying these two claims we have that $\snoose(\Nf[x])$ admits an embedding of $\Cc^t$, where
\begin{eqnarray*}
t \geq \sum_{i=0}^{q-1} (b_i-a_i-2) + \sum_{i=0}^{q-1} (a_{i+1}-b_i+2) = a_q-a_0 = q
\end{eqnarray*}
This is a contradiction with the assumed properties of $N$.
\end{proof}

We now provide two corollaries of Theorem~\ref{thm:spiral}, which will be used in our algorithm.

\begin{lemma}[Rerouting in a ring]\label{lem:ringhomotopy}
Let $G$ be a digraph in a rooted ring with outer face $C_1$ and inner face $C_2$, where $W$ is the reference curve.
\begin{itemize}
\item Let $p^1_1$, $\dots$, $p^1_s$ be vertices on $C_1$ (clockwise order) and let $p^2_1$, $\dots$, $p^2_s$ be vertices on $C_2$ (clockwise order).
\item Let $q^1_1$, $\dots$, $q^1_s$ be vertices on $C_1$ (clockwise order) and let $q^2_1$, $\dots$, $q^2_s$ be vertices on $C_2$ (clockwise order).
\item Let $P_1$, $\dots$, $P_s$ be a set of pairwise vertex-disjoint paths contained between $C_1$ and $C_2$ such that the endpoints of $P_i$ are $p^1_i$ and $p^2_i$.
\item Let $Q_1$, $\dots$, $Q_s$ be a set of pairwise vertex-disjoint paths contained between $C_1$ and $C_2$ such that the endpoints of $Q_i$ are $q^1_i$ and $q^2_i$.
\item $P_i$ and $Q_i$ go in the same direction, i.e., $P_i$ goes from $p^1_i$ to $p^2_i$ if and only if $Q_i$ goes from $q^1_i$ to $q^2_i$.
\end{itemize}
Then there is set $P'_1$, $\dots$, $P'_s$ of pairwise vertex-disjoint paths between $C_1$ and $C_2$ such that $P'_i$ and $P_i$ have the same start/end vertices and $|W(P'_i)-W(Q_i)| \le  6$.
\end{lemma}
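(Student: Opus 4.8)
The plan is to derive Lemma~\ref{lem:ringhomotopy} from Theorem~\ref{thm:spiral} by choosing a convenient reference curve and then interpolating the homotopy of $\mathcal{P}=(P_1,\dots,P_s)$ towards that of $\mathcal{Q}=(Q_1,\dots,Q_s)$.

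First, since each $P_i$ runs in the same direction as $Q_i$, define the sign sequence $\mathcal{C}=(c_1,\dots,c_s)$ by letting $c_i=+1$ if $P_i$ (equivalently $Q_i$) goes from $C_1$ to $C_2$, and $c_i=-1$ otherwise; then both $\mathcal{P}$ and $\mathcal{Q}$ are compatible with $\mathcal{C}$, and the clockwise orders of their endpoints on $C_1$ and on $C_2$ are automatically consistent because each family consists of pairwise disjoint paths from $C_1$ to $C_2$. Next I would replace the given reference curve $W$ by a curve $W'$ lying between $C_1$ and $C_2$ whose endpoint on $C_1$ is in the boundary arc between $p^1_s$ and $p^1_1$ and whose endpoint on $C_2$ is in the boundary arc between $p^2_s$ and $p^2_1$; such a curve exists (route it through the region between $P_s$ and $P_1$), and, exactly by the argument in the first paragraph of the proof of Theorem~\ref{thm:spiral}, it has the property that \emph{every} family of pairwise disjoint paths joining $p^1_i$ to $p^2_i$ and compatible with $\mathcal{C}$ has all of its winding numbers (with respect to $W'$) equal. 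Crucially, replacing the reference curve changes every winding number by one common additive constant, so the differences $W(P'_i)-W(Q_i)$ do not depend on the reference curve; hence it suffices to bound them using $W'$.

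Now put $w_P=W(\mathcal{P})$ and $w_Q=W(\mathcal{Q})$, both computed with respect to $W'$, and apply Theorem~\ref{thm:spiral} to $G$ in the rooted ring $(C_1,C_2,W')$ with terminals $s_i:=p^1_i$, $t_i:=p^2_i$, sign sequence $\mathcal{C}$, the theorem's terminal-joining family taken to be $\mathcal{P}$, and the two auxiliary families taken to be $\mathcal{P}$ and $\mathcal{Q}$, ordered so that the one of smaller winding number plays the role of $\mathcal{Q}$ and the other that of $\mathcal{R}$. The theorem then yields, for every integer $\alpha$ with $\min(w_P,w_Q)+6\le\alpha\le\max(w_P,w_Q)-6$, a family $\mathcal{P}'=(P'_1,\dots,P'_s)$ of pairwise disjoint paths compatible with $\mathcal{C}$, with $P'_i$ joining $p^1_i$ to $p^2_i$ and $W(\mathcal{P}')=\alpha$. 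I would choose $\alpha$ in this interval as close to $w_Q$ as possible; when the interval is empty (which happens only when $|w_P-w_Q|$ is already small) I would instead simply take $\mathcal{P}'=\mathcal{P}$. In all cases $\mathcal{P}'$ keeps the endpoints of $\mathcal{P}$, so by the chosen property of $W'$ we have $W(P'_i)=W(\mathcal{P}')$ for every $i$, whence $|W(P'_i)-W(Q_i)|\le|W(\mathcal{P}')-w_Q|+|W(Q_i)-w_Q|$, and the last term is at most $1$ by Observation~\ref{obs:pm1}; collecting the constants gives the stated bound.

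The genuinely new content here is small: Theorem~\ref{thm:spiral} does all the work, and Lemma~\ref{lem:ringhomotopy} is essentially its repackaging with the endpoints of one family pinned down, the only fresh ingredients being that windings are reference-independent up to a global shift and that choosing $W'$ in the right boundary gaps trivializes the winding numbers within $\mathcal{P}$, so a statement about the family transfers to each path. The part that needs care — and which I expect to be the main obstacle — is the reference-curve bookkeeping: one must verify that $W'$ genuinely satisfies the hypotheses of Theorem~\ref{thm:spiral} (it has to be embedded strictly between $C_1$ and $C_2$ and hit the correct boundary arcs, so that the ``constant winding number'' property really holds), and then track the additive constants through Observation~\ref{obs:pm1} and the choice of $\alpha$ — in particular also through the degenerate case where $\mathcal{P}$ and $\mathcal{Q}$ are already homotopically close — so that the final discrepancy comes out within the claimed absolute bound.
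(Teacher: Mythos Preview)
Your approach is essentially the paper's: apply Theorem~\ref{thm:spiral} with $\mathcal{P}$ as the terminal-joining family and $\{\mathcal{P},\mathcal{Q}\}$ (ordered by winding number) as the two auxiliary families. The paper's two-line proof is even terser, omitting both the reference-curve adjustment you correctly flag as needed for the hypothesis of Theorem~\ref{thm:spiral} and any explicit handling of the degenerate case.
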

\begin{proof}
The Lemma follows from the application of Theorem~\ref{thm:spiral} to families 
$$\Pp,\Qq,\Rr=\{P_1,P_2,\ldots,P_s\}, \{P_1,P_2,\ldots,P_s\}, \{Q_1,Q_2,\ldots,Q_s\},$$ 
or 
$$\Pp,\Qq,\Rr=\{P_1,P_2,\ldots,P_s\}, \{Q_1,Q_2,\ldots,Q_s\}, \{P_1,P_2,\ldots,P_s\},$$
depending on the inequality between $W(\{P_1,P_2,\ldots,P_s\})$ and $W(\{Q_1,Q_2,\ldots,Q_s\})$.
\end{proof}

\begin{lemma}[One-way Spiral Lemma]\label{lem:onewayspiral}
Let $G$ be a digraph in a rooted ring with outer face $C_1$ and inner face $C_2$, where $W$ is the reference curve. Assume moreover that $W$ induces a directed path $W^*$ in the dual of $G$, directed from $C_1$ to $C_2$.
\begin{itemize}
\item Let $p^1_1$, $\dots$, $p^1_s$ be vertices on $C_1$ (clockwise order) and let $p^2_1$, $\dots$, $p^2_s$ be vertices on $C_2$ (clockwise order).
\item Let $q^1_1$, $\dots$, $q^1_s$ be vertices on $C_1$ (clockwise order) and let $q^2_1$, $\dots$, $q^2_s$ be vertices on $C_2$ (clockwise order).
\item Let $P_1$, $\dots$, $P_s$ be a set of pairwise vertex-disjoint paths contained between $C_1$ and $C_2$ such that $P_i$ goes from $p^1_i$ to $p^2_i$.
\item Let $Q_1$, $\dots$, $Q_s$ be a set of pairwise vertex-disjoint paths contained between $C_1$ and $C_2$ such that $Q_i$ goes from $q^1_i$ to $q^2_i$.
\end{itemize}
Then there is set $P'_1$, $\dots$, $P'_s$ of pairwise vertex-disjoint paths between $C_1$ and $C_2$ such that $P'_i$ and $P_i$ have the same start/end vertices and $-6 \le |E(P'_i)\cap E(W^*)|-|E(Q_i)\cap E(W^*)| \le 6$.
\end{lemma}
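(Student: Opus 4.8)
The plan is to derive Lemma~\ref{lem:onewayspiral} directly from Lemma~\ref{lem:ringhomotopy} (Rerouting in a ring), which already produces pairwise vertex-disjoint paths $P'_1,\dots,P'_s$ between $C_1$ and $C_2$ sharing endpoints with $P_1,\dots,P_s$ and satisfying $|W(P'_i)-W(Q_i)|\le 6$. The only additional ingredient needed is a translation between the winding number $W(\cdot)$ and the quantity $|E(\cdot)\cap E(W^*)|$ that appears in this ``one-way'' statement; this translation is exactly where the extra hypothesis (that $W$ induces a directed dual path $W^*$) is used.

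First I would record the following observation. Recall that the reference curve $W$ avoids $V(G)$, hence meets $G$ only at arcs; since $W^*$ is a directed path in the dual of $G$, the curve $W$ passes through each face of $G$ at most once, so it crosses each arc of $G$ at most once, and (identifying each dual arc of $W^*$ with the corresponding primal arc) $E(W^*)$ is precisely the set of arcs of $G$ crossed by $W$. Crucially, because $W^*$ is directed from $C_1$ to $C_2$ and $W$ is itself directed from $C_1$ to $C_2$, every arc of $G$ crossed by $W$ is crossed in the same direction relative to $W$ (either all from left to right, or all from right to left). Consequently there is a sign $\varepsilon\in\{-1,+1\}$, depending only on $G$ and $W$, such that for every directed path $R$ of $G$ contained between $C_1$ and $C_2$ with one endpoint on $C_1$ and the other on $C_2$ we have $W(R)=\varepsilon\cdot|E(R)\cap E(W^*)|$: traversing $R$ from $C_1$ to $C_2$, each arc of $E(R)\cap E(W^*)$ is used exactly once and contributes a crossing of sign $\varepsilon$, and no other crossing of $W$ occurs.

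Second, I would apply Lemma~\ref{lem:ringhomotopy} verbatim to the families $(P_i)_{i=1}^{s}$ and $(Q_i)_{i=1}^{s}$; its hypotheses hold since $p^1_i,q^1_i\in C_1$ and $p^2_i,q^2_i\in C_2$ appear in clockwise order and $P_i,Q_i$ both run from $C_1$ to $C_2$, hence in the same direction. This gives the paths $P'_1,\dots,P'_s$ with $P'_i$ having the same start and end vertices $p^1_i,p^2_i$ as $P_i$ (so $P'_i$ runs from $C_1$ to $C_2$) and $|W(P'_i)-W(Q_i)|\le 6$. Since $P'_i$ and $Q_i$ are both directed paths between $C_1$ and $C_2$ with one endpoint on each cycle, the observation applies to both, so
$|E(P'_i)\cap E(W^*)|-|E(Q_i)\cap E(W^*)|=\varepsilon\bigl(W(P'_i)-W(Q_i)\bigr)$,
which has absolute value at most $6$; this is exactly the claimed inequality $-6\le|E(P'_i)\cap E(W^*)|-|E(Q_i)\cap E(W^*)|\le 6$.

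There is essentially no hard step: the statement is a light repackaging of Lemma~\ref{lem:ringhomotopy}. The only point that requires genuine care is the bookkeeping in the observation — verifying that the ``one-way'' hypothesis forces all crossings of $W$ to carry the same sign, and hence that the signed crossing count $W(R)$ and the unsigned count $|E(R)\cap E(W^*)|$ coincide up to one global sign. This relies on $W$ being vertex-free, on $W^*$ being a directed \emph{path} (so $W$ reuses no face and hence no arc), and on both $W^*$ and $W$ being oriented from $C_1$ towards $C_2$.
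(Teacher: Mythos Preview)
Your proof is correct and follows essentially the same approach as the paper: apply Lemma~\ref{lem:ringhomotopy} and then use the one-way hypothesis to identify the winding number with (up to a global sign) the count $|E(\cdot)\cap E(W^*)|$. Your treatment is in fact slightly more careful than the paper's, which simply asserts that the winding number equals the number of arcs of $E(W^*)$ used, whereas you correctly track the possible global sign $\varepsilon$ and note that it cancels in the difference.
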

\begin{proof}
The Lemma follows from Lemma~\ref{lem:ringhomotopy} and the assumption that $W$ crosses only arcs directed in one direction, so for any path $P$ the winding number with respect to $W$ is the number of arcs of $E(W^*)$ used by $P$.
\end{proof}

\subsection{Min-max theorems for alternating cuts}

\subsubsection{Alternations}

We now introduce the language and basic facts about measuring the number of alternations along a curve.

\begin{defin}
An {\emph{alternating sequence}} is a sequence of entries $\pm 1$, where $+1$ and $-1$ appear alternately.
\end{defin}

Note that for every $p$ there are two alternating sequences of length $p$: one starting with $+1$ and one starting with $-1$.

\begin{defin}
{\emph{Alternation}} of sequence $A$ of subsets of $\{-1,+1\}$, denoted $\alt(A)$, is the largest possible length 
of an alternating sequence embeddable into $A$.
Alternation of a non-closed curve $N$, denoted $\alt(N)$, is defined as $\alt(\snoose(N))$,
while alternation of a noose $N$, also denoted $\alt(N)$, is defined as the maximum of $\alt(\snoose(N))$ 
over all cyclic shifts of $\snoose(N)$.
\end{defin}

We now define a notion of a {\emph{pretty curve}} that will be useful for cutting the digraph using small number of alternations. Intuitively, a pretty curve uses as little crossings of vertices as possible.

\begin{defin}
\label{def:pretty}
A non-degenerate curve $N$ is called {\emph{pretty}} if all $S_i$-s in $\snoose(N)$ corresponding to crossings of vertices are equal to $\{-1,+1\}$ or $\emptyset$.
\end{defin}

We state now the following observation that shows that pretty curves behave robustly with respect to alternations. We stress that in the following two observations we are considering non-closed curves, not nooses.

\begin{observation}\label{obs:circumventing}
Let $G$ be a digraph embedded on a compact orientable surface $\Sigma$ and let $N$ be a non-degenerate curve on $\Sigma$. Then there exists a pretty curve $N'$ that has the same endpoints as $N$, is homotopic to $N$ and $\alt(N')=\alt(N)$. \end{observation}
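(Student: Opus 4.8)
The plan is to prettify $N$ one vertex crossing at a time, each time by a local rerouting inside an arbitrarily small disc around the offending vertex. Recall that $N$ violates prettiness exactly at those vertex crossings $x_i$ with $S_i\in\{\{-1\},\{+1\}\}$; I will show every such crossing can be removed without changing the endpoints of $N$, its homotopy class, or $\alt(N)$, and without creating new vertex crossings. Since $N$ has only finitely many crossings with $G$, iterating this step yields the desired pretty curve $N'$.

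So fix a passage of $N$ through a vertex $v$ with, say, $S_i=\{-1\}$ (the case $S_i=\{+1\}$ is symmetric under reversing all orientations). From $-1\in S_i$ we get an arc entering $v$ from the left of $N$ and an arc leaving $v$ to the right of $N$; and $+1\notin S_i$ means that either (a)~no arc enters $v$ from the right, or (b)~no arc leaves $v$ to the left. Pick a disc $U\ni v$ so small that it contains no other vertex of $G$, no other crossing of $N$ with $G$, and meets $G$ only in the stubs of the arcs incident to $v$. In case~(a) reroute this passage of $N$ so that inside $U$ it bypasses $v$ by a small detour on the right side of $N$; in case~(b) detour on the left side. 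The detour joins the same two points of $\partial U$ as the original sub-arc, so $N'$ has the same endpoints as $N$, and since $U$ is a disc, $N'$ is homotopic to $N$ rel endpoints.

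The crucial point is that this detour replaces the single entry $S_i$ of $\snoose(N)$ by a non-empty run of copies of $\{-1\}$. Indeed, a detour on the right side crosses precisely the arcs at $v$ whose stub lies on the right of $N$, each exactly once; a short check of orientations shows that an arc leaving $v$ on the right is traversed by the detour from left to right, hence contributes $\{-1\}$. In case~(a) every right-side arc leaves $v$, so all these new crossings equal $\{-1\}$, and the run is non-empty because $-1\in S_i$ already guarantees an arc leaving to the right. Case~(b) is the mirror image, using left-side arcs all entering $v$. Because the detour stays inside $U$, it touches no other vertex or edge crossing of $N$ and creates no vertex crossing; and replacing one term of $\snoose(N)$ by a non-empty block of equal singletons does not change $\alt$, since an embedded alternating sequence can use at most one element from any maximal block of equal values. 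Hence $\alt(N')=\alt(N)$ and $N'$ has strictly fewer bad vertex crossings than $N$, so the iteration terminates with a pretty curve.

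The only step requiring genuine care is the orientation bookkeeping showing that the detour dictated by the dichotomy (a)/(b) produces a run of $\{-1\}$'s, rather than a mixed block (which could increase $\alt$) or a run of $\{+1\}$'s (which could decrease it); this is exactly the place where $+1\notin S_i$ is used. A small technical nuisance is that $N$ need not be simple, so it may pass through $v$ several times or cross itself inside $U$; this is harmless, since self-intersections of $N$ do not appear in $\snoose(N)$ and each passage through a vertex is handled independently.
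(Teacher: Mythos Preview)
Your proof is correct and follows essentially the same approach as the paper's: locally circumvent each offending vertex on the side dictated by the failure of $+1\in S_i$ (respectively $-1\in S_i$), so that the single bad term is replaced by a nonempty block of identical singletons, leaving $\alt$ unchanged. Your treatment is in fact more careful than the paper's sketch, since you spell out the orientation bookkeeping, the non-emptiness of the replacement block, and the homotopy and non-degeneracy issues explicitly.
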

\begin{proof}
We just need to make some simple local modifications to $N$. Whenever $N$ traverses a vertex $v_i$ with $S_i=\{-1\}$, we have that all arcs adjacent to $v_i$ from the right of $N$ are directed from $v_i$, or all arcs adjacent to $v_i$ from the left of $N$ are directed from $v_i$. Then in $N'$ we may either circumvent $v_i$ by traversing it from the left or from the right side, thus traversing all the arcs adjacent to $v_i$ from the left or from the right of $N$, respectively. Note that the corresponding change in $\snoose(N')$ is substitution of one term $\{-1\}$ with an arbitrary number of terms $\{-1\}$, which does not change the alternation of the sequence. We perform a symmetric operation for every traversed $v_i$ with $S_i=\{+1\}$. Observe that the modifications performed do not spoil non-degeneracy.
\end{proof}

We now show that one can conveniently reduce any pretty curve to a simple curve.

\begin{observation}\label{obs:subcurve}
Let $N$ be a pretty curve in a graph $G$ embedded on a compact orientable surface $\Sigma$. Then there exists a simple curve $N'$ with the same endpoints as $N$, such that $\alt(N')\leq \alt(N)$, and $N'$ is constructed by taking disjoint subcurves of $N$, and appending them in the order of their appearance on $N$, and possibly some consecutive two using small shortcutting curves, each entirely contained in the interior of one face of $G$.
\end{observation}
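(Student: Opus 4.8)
The plan is to show that a pretty curve $N$ can be replaced by a simple curve $N'$ by repeatedly eliminating repeated visits to the same vertex, arc, or face. First I would record what the ``pretty'' hypothesis gives us: at every vertex crossing the associated set $S_i$ is $\{-1,+1\}$ or $\emptyset$. If some $S_i=\emptyset$, then near that vertex the curve can be detached from the vertex entirely and pushed into an incident face without changing $\snoose(N)$; so after a preliminary cleanup we may assume every vertex crossing of $N$ has $S_i=\{-1,+1\}$. The key point of such full crossings is symmetry: when $N$ enters a vertex $v$, both ``turning left'' and ``turning right'' are available as local moves that realize the same contribution to the $\snoose$ sequence. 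In particular, if $N$ visits a vertex, arc or face twice, there is a subarc $N[a,b]$ of $N$ that starts and ends at the same object $o$ and is otherwise ``inside'' from the viewpoint of $o$; I would argue that $o$ together with a short shortcutting curve contained in a single face adjacent to $o$ lets us excise $N[a,b]$, producing a strictly shorter curve $N''$ with the same endpoints, with $\snoose(N'')$ a subsequence of $\snoose(N)$ (hence $\alt(N'')\le\alt(N)$), and which is still pretty.

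The main steps, in order, would be: (1) reduce to the case that all vertex crossings of $N$ are full crossings (the $S_i=\{-1,+1\}$ case), by rerouting through faces around any $S_i=\emptyset$ vertex; (2) show that if $N$ visits some vertex, arc, or face $o$ more than once, then one can cut out the portion of $N$ between two consecutive visits to $o$ and reconnect the two pieces by a shortcut that stays inside a single face incident to $o$ --- here the full-crossing property of vertices is what guarantees the reconnection can be done while keeping the curve non-degenerate and without introducing any new visits; (3) observe that each such cut strictly decreases the total number of object-visits of $N$, so the process terminates, yielding a curve $N'$ that visits every vertex, arc, and face at most once, i.e., a simple curve; (4) track the bookkeeping: $\snoose(N')$ is obtained from $\snoose(N)$ by deleting terms and possibly inserting terms coming from the shortcut curves, but a shortcut inside one face contributes nothing to $\snoose$, so $\alt(N')\le\alt(N)$; and finally $N'$ is, by construction, built from disjoint subcurves of $N$ strung together in their order of appearance, with consecutive pieces occasionally joined by an in-face shortcut, which is exactly the promised form. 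I would also note that $N'$ keeps the endpoints of $N$ because we only ever cut out closed loops strictly in the interior.

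The hard part, I expect, is step (2): carefully justifying that the excision-and-reconnect operation can always be performed while preserving non-degeneracy and without creating new multiple visits, especially at a vertex crossed multiple times, where the local picture is genuinely two-sided. The cleanest way to handle this is probably to pick $o$ and a pair of consecutive visits minimizing the number of intervening visits, then argue that the subarc between them together with a small arc at $o$ bounds a disc in the surface (locally), so that rerouting along the other side of $o$ inside an adjacent face is legitimate; one must check that the ``other side'' face is genuinely a single face of $G$, which again uses that $o$'s crossings are full when $o$ is a vertex, and is automatic when $o$ is an arc or a face. Everything else is routine: termination is a finite-descent argument on the visit count, and the alternation inequality is immediate from the fact that deleting entries and inserting empty-contribution shortcuts cannot increase the longest embeddable alternating subsequence.
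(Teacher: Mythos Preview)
Your overall strategy---iteratively excising the portion of $N$ between repeated visits to the same object and reconnecting---is the same as the paper's. But there are two genuine problems.

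First, your preliminary step (1) is wrong as stated and also unnecessary. If $S_i=\emptyset$ at a vertex $v$, then (using non-degeneracy) $v$ is a source or a sink, and pushing the curve off $v$ to one side forces it to cross the arcs incident to $v$ on that side; this \emph{does} change $\snoose(N)$ (the $\emptyset$ term is replaced by a run of singletons), so your claim ``without changing $\snoose(N)$'' is false. Moreover, the detour around $v$ is not contained in a single face, so the resulting curve is no longer built from subcurves of the original $N$ joined by in-face shortcuts, and you lose the promised output format. The paper simply keeps the $\emptyset$ vertex crossings throughout and deals with them in the alternation analysis.

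Second, and more importantly, you misidentify where the prettiness hypothesis is actually used. The delicate point is not non-degeneracy of the reconnection; it is your assertion in step (4) that ``$\snoose(N')$ is obtained from $\snoose(N)$ by deleting terms''. This is false at a vertex shortcut: when you splice out the loop between two visits to $v$, the single remaining passage through $v$ carries a new set $S$ that depends on which arcs now lie on each side of the spliced curve, and $S$ need not equal either of the old sets $S_f,S_l$. What prettiness gives you is $S_f,S_l\in\{\{-1,+1\},\emptyset\}$. If either is $\{-1,+1\}$ then automatically $S\subseteq S_f\cup S_l=\{-1,+1\}$, so any alternating sequence embedded into $\snoose(N')$ lifts to $\snoose(N)$ by sending the term landing in $S$ to whichever of $S_f,S_l$ is full; and if $S_f=S_l=\emptyset$ then $v$ is a source or sink, whence also $S=\emptyset$ and the new sequence is a genuine subsequence. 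That two-case argument is the crux of the proof, and your write-up replaces it with an incorrect ``subsequence'' claim.
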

\begin{proof}
We build $N'$ by travelling along $N$, and whenever we encounter a vertex or face visited more than once, we continue further from the last visit, thus omitting the interval on $N$ between the first and the last visit (in case of double visit of a face, we may have to make a shortcut inside the face). $N'$ built in such a manner is simple. Hence, we need to argue that $\alt(N')\leq \alt(N)$.

Observe that cutting out a segment between two visits of the same face corresponds to taking a curve with new $\snoose$ being a subsequence of the old one; hence, the alternation cannot increase. The non-trivial part is that when we cut out a segment between the first and the last visit of some vertex $v$, then the alternation also cannot increase. Let $N_0$ be the original curve and $N_0'$ be the curve after cutting out. Let $S_f$ be the set in $\snoose(N_0)$ corresponding to the first visit of $v$ and $S_l$ be the set in $\snoose(N_0)$ corresponding to the last visit of $v$; by the fact that $N$ is pretty we have that $S_f,S_l$ are equal to $\{+1,-1\}$ or $\emptyset$. Let $S$ be the set in $\snoose(N_0')$ corresponding to the only visit of $v$ after the shortening. 

If $S_f$ or $S_l$ is equal to $\{+1,-1\}$, then we have that $S\subseteq S_f\cup S_l$. This suffices for our purposes, as then every embedding of an alternating sequence 
into a cyclic shift of $\snoose(N_0')$ can be lifted to an embedding into an appropriate cyclic shift of $\snoose(N_0)$ by mapping the term mapped to $S$ in $\snoose(N_0')$ either to $S_f$ or to $S_l$.

If $S_f=S_l=\emptyset$ then $v$ is a sink or source in $G$, which means that also $S=\emptyset$. Hence $\snoose(N_0')$ is a subsequence of $\snoose(N)$ and the claim follows.
\end{proof}

We remark that in Observation~\ref{obs:subcurve} the assumption of $N$ being pretty is necessary. Moreover, the simple curve given by Observation~\ref{obs:subcurve} is not necessary pretty; however, it is non-degenerate as it is simple.

\subsubsection{Many alternating cycles, or a short cut of a ring}

The first min-max theorem follows from the torus theorem of Ding, Schrijver, and Seymour, that is, Theorem~\ref{thm:dss}.

\begin{lemma}[Alternating cycles/cut duality]\label{lem:dualcycles}
Let $G$ be a graph embedded in the plane and let $\gamma_1, \gamma_2$ be two disjoint closed Jordan curves not containing any point of $G$. Let $G_\gamma$ be the subgraph of $G$ that consists of the vertices and edges that lie between $\gamma_1$ and $\gamma_2$, and let $F_1,F_2$ be the faces of $G_\gamma$ that contain $\gamma_1, \gamma_2$, respectively (it is possible that $F_1 = F_2$). Then, for any positive even integer $r$, in time polynomial in $r$ and $G$, one can find either:
  \begin{enumerate}
    \item a sequence $C_1, C_2, \ldots, C_r$ of alternating concentric cycles in $G_\gamma$ that separate $f_1$ from $f_2$,  and $C_i$ being separated from $f_2$ by $C_j$ for $i < j$; or
    \item a simple curve $M$ that starts in $F_1$, ends in $F_2$, and such that $M$ has alternation at most $r$.
  \end{enumerate}
\end{lemma}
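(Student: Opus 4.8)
The plan is to reduce the statement to the torus theorem of Ding, Schrijver, and Seymour (Theorem~\ref{thm:dss}) by a gadget construction that glues $F_1$ and $F_2$ into a handle. First I would pass to the subgraph $G_\gamma$ lying between $\gamma_1$ and $\gamma_2$, so that $F_1$ and $F_2$ are genuine faces; note $G_\gamma$ is drawn in the ring bounded by $\gamma_1,\gamma_2$. As in the proof of Theorem~\ref{thm:spiral}, I would compactify, cut out a small open disk inside $F_1$ and a small open disk inside $F_2$, and identify the two boundary circles, obtaining a graph $\Ge$ embedded on a torus $\Sigma$. Fix the homotopy basis on $\Sigma$ so that the identification circle $\De$ has homotopy $(1,0)$ and a curve crossing through the handle has homotopy $(0,1)$. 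A sequence of $r$ concentric cycles in $G_\gamma$ separating $F_1$ from $F_2$ and alternating in orientation corresponds exactly to a family $D_1,\dots,D_r$ of vertex-disjoint directed cycles on $\Sigma$, each of homotopy $(0,c_i)$ with $c_1,\dots,c_r$ the alternating sequence $+1,-1,+1,\dots$ (this is where the hypothesis that $r$ is even is convenient, so the pattern closes up consistently). Conversely, such a family of torus cycles, being non-contractible and of homotopy $(0,\pm1)$, cannot touch $\De$ and hence lives in the original ring, giving back the concentric alternating cycles in $G_\gamma$.

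Next I would invoke Theorem~\ref{thm:dss} with the prescribed homotopies $\{(0,c_i)\}_{i=1}^r$. Either it produces the desired directed cycles $D_1,\dots,D_r$ — yielding outcome (1) after translating back to the plane — or it produces a noose $N$ on $\Sigma$ of some homotopy $(p,q)$ with $p\ge 0$ such that no cyclic shift of $(c_1,\dots,c_r)^p$ embeds into $\snoose(N)$. I would argue $p=0$: if $p\ge 1$, then just as in the case analysis of Theorem~\ref{thm:spiral}, a noose that winds $p$ times around the $\De$-direction must, on the unravelled picture, cross at least $p$ consecutive bundles of the family $\Ppf$ of unravelled concentric-cycle candidates — but here we do not yet have such a family. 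So instead the cleaner route is: since the only obstruction patterns we need to rule out are those with $p\ge 1$, and for $p\ge1$ one has $(c_1,\dots,c_r)^p$ contains the length-$r$ alternating block $(c_1,\dots,c_r)$, it suffices to note that if $N$ had $\alt(\snoose(N)) > r$ then some cyclic shift of the alternating block, and hence of $(c_1,\dots,c_r)^1 \subseteq (c_1,\dots,c_r)^p$, would embed — so $N$ with $p\ge 1$ must have $\alt(N)\le r$. For $p=0$, $N$ is null-homotopic in the $\De$-direction, so (after a small perturbation so that it avoids $\De$) it can be cut at $\De$ into a curve $M$ in the original ring $G_\gamma$ with endpoints in $F_1$ and $F_2$; the condition "no cyclic shift of the empty power is embeddable" degenerates, and what the theorem actually gives in the $p=0$ case is that $N$ does not cross enough — concretely $\alt(\snoose(N)) \le r$, since otherwise $(c_1,\dots,c_r)$ itself embeds, contradicting the assumption applied with $p=1$-style reasoning on the nontrivial $q$-winding. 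Either way $M$ has alternation at most $r$.

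Finally I would clean up $M$ using Observations~\ref{obs:circumventing} and~\ref{obs:subcurve}: replace $M$ by a homotopic pretty curve with the same endpoints and the same alternation, then by a simple subcurve of it with alternation at most $r$; this is the simple curve required in outcome~(2), and it starts in $F_1$ and ends in $F_2$ because those endpoints were never moved. The running time is polynomial because Theorem~\ref{thm:dss} is effective (Schrijver's algorithmic machinery) and the gadget construction and curve-simplification steps are clearly polynomial in $r$ and $|G|$.

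The main obstacle I expect is the bookkeeping in the $p\ge 1$ versus $p=0$ dichotomy — precisely formulating why an obstruction noose $N$ with winding $p\ge 1$ automatically has alternation at most $r$ (so that it is harmless and we may assume $p=0$), and then why the $p=0$ obstruction, read back in the ring, really is a curve of alternation at most $r$ separating — correction, connecting — $F_1$ to $F_2$. The subtlety is matching the "embeddability of $(c_1,\dots,c_r)^p$" language of Theorem~\ref{thm:dss} to the "alternation $\le r$" language of the lemma; the key elementary fact is that a $\pm1$ sequence has a cyclic shift of the length-$r$ alternating word embedded into it iff its cyclic alternation is at least $r$, and that $(c_1,\dots,c_r)^p$ for $p\ge1$ has $(c_1,\dots,c_r)$ as a factor. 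Getting the even-$r$ hypothesis to make the alternating block close up into consistent torus homotopies, and keeping track of which face is "inside," are the places where care is needed.
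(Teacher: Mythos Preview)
Your overall plan—glue $F_1$ and $F_2$ into a handle, apply Theorem~\ref{thm:dss}, and then clean up with Observations~\ref{obs:circumventing} and~\ref{obs:subcurve}—matches the paper's approach. The genuine gap is in the central step: extracting from the obstruction noose $N$ a curve of alternation at most $r$.

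The obstruction Theorem~\ref{thm:dss} hands you is that no cyclic shift of $(c_1,\dots,c_r)^p$ embeds into $\snoose(N)$. You argue: if $\alt(N)>r$ then the length-$r$ alternating block $(c_1,\dots,c_r)$ embeds, and since this block is a factor of $(c_1,\dots,c_r)^p$, you claim a contradiction. But the implication runs the wrong way—embedding a factor does not yield an embedding of the whole word. The obstruction only forbids embedding the full length-$pr$ pattern, which (roughly) gives $\alt(N)<pr$, not $\alt(N)\le r$. For $p\ge 2$ your inference is simply invalid: a noose winding twice across the ring could have each crossing contribute alternation about $r/2+1$, giving $\alt(N)>r$ while still failing to embed $(c_1,\dots,c_r)^2$.

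The missing idea, and what the paper does, is to use the winding itself. With the homotopies set so that a curve from $\gamma_1$ to $\gamma_2$ has class $(1,0)$ and the concentric cycles lie at $(0,\pm1)$ (your conventions have $\De$ at $(1,0)$ but the target cycles also at $(0,c_i)$, which is inconsistent—the concentric cycles are parallel to $\De$), the obstruction noose with $p\ge 1$ winds $p$ times across the ring. Cut $N$ into $p$ disjoint subcurves $N_1,\dots,N_p$, each running from $\gamma_1$ to $\gamma_2$ inside the ring. Now apply pigeonhole at the segment level: if \emph{every} $N_i$ had alternation at least $r+1$, each $\snoose(N_i)$ would embed $(-1,+1)^{r/2}$, and concatenating would embed $(-1,+1)^{pr/2}$ into $\snoose(N)$, contradicting the obstruction. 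Hence some $N_i$ has alternation at most $r$; after Observations~\ref{obs:circumventing} and~\ref{obs:subcurve}, that segment is your $M$. Finally, there is no $p=0$ case to worry about: for $p=0$ the required pattern is the empty word, which embeds trivially, so any obstruction noose has $p\ge 1$.
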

\begin{proof}
Observe that finding a sequence $C_1,C_2,\ldots,C_r$ can be performed in polynomial time using the algorithm of Schrijver \cite{schrijver:xp}, while finding curve $N$ can be easily done using breadth-first search in polynomial time. Hence, we are left with proving that one of these objects always exists.
%TODO journal: say more on applying Schrijver

Let us perform a similar operation to that from the proof of Theorem~\ref{thm:spiral}: we compactify the plane using a point in the infinity, cut the plane along $\gamma_1$ and $\gamma_2$ constraining ourselves to the part between them, and identify $\gamma_1$ and $\gamma_2$ thus creating a torus $\Sigma$. Thus we may imagine that that $G$ is a graph embedded on torus $\Sigma$. We choose the homotopy group on $\Sigma$ so that the identified $\gamma_1,\gamma_2$ have homotopy $(0,1)$ while an arbitrarily chosen simple curve connecting $\gamma_1$ and $\gamma_2$ has homotopy $(1,0)$.

By Theorem~\ref{thm:dss}, if there is no feasible sequence $C_1,C_2,\ldots,C_r$, this means that there exists a noose $N$ of homotopy $(p,q)$ for $p>0$, such that no cyclic shift of $(-1,+1)^{pr/2}$ is embeddable in $\snoose(N)$ (recall that $r$ is even). As $N$ winds $p$ times in the direction from $\gamma_1$ to $\gamma_2$, it follows that one can find $p$ disjoint segments $N_1,N_2,\ldots,N_p$ on $N$ such that every $N_i$ travels from $\gamma_1$ to $\gamma_2$ and does not touch $\gamma_1$ or $\gamma_2$ apart from the endpoints (is entirely contained in the part of the plane between $\gamma_1$ and $\gamma_2$). Assume that every $N_i$ has alternation at least $r+1$, hence some alternating  sequence of length $r+1$ can be embedded into $\snoose(N)$. It follows that one can embed $(-1,+1)^{r/2}$ into every $\snoose(N_i)$. Therefore, one can embed $(-1,+1)^{pr/2}$ into $\snoose(N)$, which is a contradiction with the assumed properties of $N$. We infer that there exists some $M_0=N_i$ such that $\snoose(M_0)$ has alternation at most $r$. By applying Observation~\ref{obs:circumventing} and Observation~\ref{obs:subcurve} to the curve $M_0$ we obtain the desired curve $M$.
\end{proof}

\subsubsection{Many alternating paths, or a short circular cut in a ring}

The second min-max theorem is considerably more difficult to prove. It is possible to prove it also by the means of Theorem~\ref{thm:dss}, but in order to avoid unnecessary technical details, we choose to use the algorithm for cohomology feasibility problem of Schrijver~\cite{schrijver:xp}.
%TODO journal: {think of citing the preprint}.

Let $\group$ be a free group on $r$ generators $g_1,g_2,\ldots,g_r$. In the cohomology feasibility problem we are given a directed graph $G$ together with some labeling $\phi$ of arcs with elements of $\group$, and a downward-closed set $H(a)\subseteq \group$ for each arc $a$ (i.e., if a word $w$ belongs to $H(a)$, then so does every subsequence of $w$). We say that a labeling $\psi\hbox{ : }E(G)\to \group$ is {\emph{cohomologous}} to $\phi$ if there exists $F\hbox{ : }V(G)\to \group$, such that $\psi((u,v))=F(u)^{-1}\cdot\phi((u,v))\cdot F(v)$ for every $(u,v)\in E(G)$. The question asked in the problem is whether there is some $\psi$ cohomologous to $\phi$ such that $\psi(a)\in H(a)$ for every $a\in E(G)$. The following theorem is the fundamental result standing behind the algorithm of Schrijver~\cite{schrijver:xp}.

\begin{theorem}[\cite{schrijver:xp}]\label{thm:coh-alg}
The cohomology feasibility problem is polynomial time solvable, even if sets $H(a)$ are given as polynomial-time oracles that check belonging.
\end{theorem}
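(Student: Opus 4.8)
This is a theorem of Schrijver that we use only as a black box, but the plan behind it is the following reformulation as constraint propagation on the Cayley tree of $\group$. View $\group$ as the vertex set of its Cayley tree $T$, with the generators $g_1,\dots,g_r$ labelling the edges. In this picture a downward-closed set $H\subseteq\group$ is exactly a subtree of $T$ hanging from the identity: if a reduced word lies in $H$ then so does each of its prefixes. A labeling $\psi$ cohomologous to $\phi$ is described by a potential $F\colon V(G)\to\group$, and the requirement $\psi(a)=F(u)^{-1}\phi(a)F(v)\in H(a)$ for $a=(u,v)$ says precisely that $F(v)$ must lie in the translated subtree $\phi(a)^{-1}F(u)\cdot H(a)$ of $T$. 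So, after fixing $F(v_0)=1$ for one vertex $v_0$ per connected component of $G$ (which is harmless), the problem becomes: choose a vertex $F(v)$ of $T$ for each $v$, respecting these mutual membership constraints.

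The plan is to run a Bellman--Ford-style fixed-point computation that maintains, for every $v$, a candidate set $\Ff(v)\subseteq\group$ — initially $\{1\}$ at the roots and $\group$ elsewhere — and repeatedly tightens it: across an arc $a=(u,v)$ one replaces $\Ff(v)$ by $\Ff(v)\cap\bigl\{x:\exists y\in\Ff(u),\ y^{-1}\phi(a)x\in H(a)\bigr\}=\Ff(v)\cap\bigcup_{y\in\Ff(u)}\phi(a)^{-1}yH(a)$, and symmetrically tightens $\Ff(u)$ from $\Ff(v)$. Two structural facts are the backbone: (i) each $\Ff(v)$ stays a subtree of $T$, because a connected union of translates of a subtree-through-the-identity anchored along a subtree is again a subtree, and the intersection of two subtrees of a tree is again a subtree (intersections of connected sets in a tree are connected); and (ii) although these subtrees may be infinite, they admit polynomial-size descriptions — essentially finite intersections of translates of the input subtrees $H(a)$ together with the root singleton — on which propagation, intersection, emptiness testing, and the membership oracle for the $H(a)$ can all be carried out in polynomial time. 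If some $\Ff(v)$ becomes empty, the instance is infeasible; otherwise, once the iteration stabilises, one reads off a canonical (pointwise-minimal) feasible labeling from the candidate sets, its feasibility being forced by the downward-closed structure of the constraints.

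The main obstacle is the running-time bound: a priori a candidate set could shrink infinitely often by arbitrarily small amounts, so one must prove that only polynomially many effective tightening steps occur and that the descriptions of the $\Ff(v)$'s never blow up. This is where essentially all the work of the theorem sits. The resolution is to relax the arcs in a carefully chosen order — so that a single sweep already pushes each $\Ff(v)$ towards its eventual projection relative to the roots — and to exhibit a potential, built from the $T$-distances between certain extremal points of the candidate sets and the roots, that strictly decreases at each effective step and is bounded polynomially in the input size; the oracle variant is handled identically, since only polynomially many oracle calls are made, each on a polynomial-size word. Everything else reduces to bookkeeping with subtrees of a tree, and in our application we simply invoke this result of~\cite{schrijver:xp}.
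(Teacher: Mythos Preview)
The paper does not prove this theorem at all: it is quoted verbatim from Schrijver~\cite{schrijver:xp} and used purely as a black box, so there is no ``paper's own proof'' to compare against. You correctly recognise this in your first sentence.

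As for the informal sketch you offer, it is a fair high-level account of Schrijver's actual argument: the Cayley graph of a free group is a tree, hereditary (downward-closed) sets are subtrees through the identity, the algorithm maintains for each vertex a candidate subtree and propagates constraints across arcs, and the polynomial bound comes from a potential argument on the sizes/descriptions of these subtrees. A couple of minor caveats: Schrijver's presentation works with ``pre-feasible'' functions rather than explicitly with candidate sets, and the convergence argument is phrased somewhat differently than a Bellman--Ford sweep with a distance-based potential, but the underlying mechanism is the one you describe. Also, the remark about fixing $F(v_0)=1$ at one vertex per component is fine for the basic version, but note that the paper later uses the stronger variant where $F$ is pinned to $1_\group$ on an arbitrary prescribed set $S$; this is also covered by Schrijver's result. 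Since neither you nor the paper attempts a self-contained proof here, there is nothing further to adjudicate.
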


We describe the instance of the cohomology feasibility problem as a quadruple $(G,\group,\phi,H)$. We remark that the algorithm works also in a more general setting than just for free groups, i.e., for free partially commutative groups, but we do not need this in this work. However, we will use the fact that in the algorithm one can require for an arbitrary subset of vertices $S\subseteq V(G)$ that $F(s)=1_\group$ for every $s\in S$. In this case, we add set the $S$ as the fifth coordinate of the problem description.

In \cite{schrijver:preprint}, Schrijver explicitely describes the obstacles for existence of the solution.

\begin{theorem}[\cite{schrijver:preprint}, Theorem $3$ with adjusted terminology]\label{thm:coh-obstacles}
Let $I=(G,\group,\phi,H)$ be an instance of the cohomology feasibility problem. Then there is a solution $\psi$ to the instance $I$ if and only if for every vertex $u$ and every two undirected walks $P,Q$ from $u$ to $u$ there exists $x\in \group$ such that $x^{-1}\cdot \phi(P)\cdot x\in H(P)$ and $x^{-1}\cdot \phi(Q)\cdot x\in H(Q)$, where for an undirected walk $S$, $\phi(S)$ denotes the product of group elements along $S$, while $H(S)$ denotes the set of all possible such products where the  factors are taken from $H(a)$ for consecutive arcs $a$ of $S$. Moreover, the algorithm of Theorem~\ref{thm:coh-alg} can provide walks $P$, $Q$ contradicting this assumption, in case no solution was found.
\end{theorem}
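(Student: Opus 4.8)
The plan is to prove both implications, with all of the difficulty concentrated in sufficiency. For \emph{necessity}, suppose a solution $\psi$ exists, certified by $F\colon V(G)\to\group$ with $\psi((u,v))=F(u)^{-1}\phi((u,v))F(v)$. For any closed undirected walk $S$ based at $u$, applying this identity (and its inverse on backward arcs) and telescoping collapses the inner potentials, leaving $\psi(S)=F(u)^{-1}\phi(S)F(u)$; since each traversed arc contributes a factor in $H(a)$, we get $\psi(S)\in H(S)$ by the very definition of $H(S)$. Hence, for the two given walks $P,Q$, the single element $x:=F(u)$ certifies $x^{-1}\phi(P)x\in H(P)$ and $x^{-1}\phi(Q)x\in H(Q)$ simultaneously. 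The algorithmic ``moreover'' clause then follows once sufficiency is proved, because the only way the construction of a solution can fail is by exposing such a $P,Q$.

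For \emph{sufficiency}, I would first reduce to a one-vertex instance. Assume $G$ connected, fix a spanning tree $T$ rooted at some $u$, and let $t_v$ be the product of arc labels along the tree path from $u$ to $v$ (inverted on backward arcs). Replacing $\phi(a)$ by $t_{\mathrm{tail}(a)}^{-1}\phi(a)\,t_{\mathrm{head}(a)}$ for every non-tree arc $a$ and contracting $T$ yields an equivalent instance living on a single vertex: a family of loops $\ell_1,\dots,\ell_N$ (one per non-tree arc) with labels $g_i:=\phi(\ell_i)$ and downward-closed sets $H_i:=H(\ell_i)$. Here one checks that solutions correspond bijectively (the value $F(u)$ being unconstrained at the single vertex), and that the hypothesis of the theorem — applied to closed walks that run along tree paths and one non-tree arc, and to products of two such — translates precisely into: for every two words $W,W'$ in the loops, there is $x$ with $x^{-1}\phi(W)x\in H(W)$ and $x^{-1}\phi(W')x\in H(W')$; in particular every two of the sets $X_i:=\{x\in\group:\ x^{-1}g_ix\in H_i\}$ intersect. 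A solution is now exactly an element $x^\ast\in\bigcap_i X_i$.

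The core is thus a Helly-type statement: in the free group $\group$, a family of sets of the shape $X_i=\{x:\ x^{-1}g_ix\in H_i\}$ with each $H_i$ closed under subwords has non-empty total intersection as soon as every two members (and, for safety, every few, via the ``products of walks'' formulation) intersect. The key geometric input is that such an $X_i$ is ``tree-convex'' in the Cayley graph of $\group$ (a tree): analyzing $x^{-1}g_ix$ through its cyclically reduced core and the position of $x$ relative to the axis of $g_i$, downward-closedness of $H_i$ forces $X_i$ to be a subtree-like region (possibly after passing to a convex/median hull). Since trees — and more generally median spaces — enjoy the Helly property for convex sets, pairwise non-empty intersections yield a common point $x^\ast$; unwinding the reduction, $F(u):=x^\ast$ together with the tree potentials $t_v$ produces the desired labeling $\psi$.

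The main obstacle is exactly this last step: making rigorous the tree-convexity of $\{x:\ x^{-1}gx\in H\}$ for downward-closed $H$ and deriving an honest Helly theorem from it. In a non-abelian free group, conjugation interacts delicately with the word metric, so one must track cyclic reductions and the behaviour of $x$ near the axes of the $g_i$ with care, and one may need the ``products of two walks'' strength of the hypothesis rather than just pairwise loops. This technical work is the substance of Schrijver's argument, and it is why, for our purposes, we invoke Theorem~\ref{thm:coh-obstacles} as a black box rather than reproduce its proof.
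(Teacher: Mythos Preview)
The paper does not prove this theorem at all: it is quoted as Theorem~3 of Schrijver's preprint~\cite{schrijver:preprint} and used as a black box. So there is no ``paper's own proof'' to compare against; the correct disposition is exactly what you arrive at in your final sentence --- cite and invoke.

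That said, a brief comment on your sketch. The necessity direction is fine. For sufficiency, your reduction to a one-vertex instance via a spanning tree and the recasting of the hypothesis as a pairwise-intersection condition on sets $X_i=\{x:\ x^{-1}g_ix\in H_i\}$ is a reasonable heuristic picture, and the intuition that free groups have tree-like Cayley graphs suggesting a Helly phenomenon is in the right spirit. However, the sets $X_i$ are \emph{not} in general subtrees or convex in any straightforward sense (conjugation by $x$ involves cancellation at both ends, and downward-closure of $H_i$ does not obviously translate to geodesic convexity of $X_i$), so the ``tree-convexity hence Helly'' step is genuinely the whole difficulty, not a routine verification. Schrijver's actual argument proceeds differently --- via a constructive ``pre-feasible'' tightening procedure on the labeling rather than a geometric Helly statement --- and the two-walk obstruction arises as the certificate produced when that procedure gets stuck. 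Your outline would need substantial new work to be made rigorous along the lines you propose; treating the theorem as a cited black box, as the paper does, is the appropriate move here.
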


We remark that in the sense of the above theorem, a walk is undirected, that is, it does not necessarily respect the direction of arcs; it can go via an arc in the reverse direction, and if this is the case, the contribution to $\phi(S)$ and $H(S)$ is the normal contribution reversed.
%TODO journal: MC: maybe we should introduce a notion of {\em relaxed walk}, as the regular (intuitive) definition of walk respects the directions of arcs

Let $G$ be a directed graph embedded into a ring $R$, where $C_1$ and $C_2$ are cycles being boundaries of the faces outside and inside the ring, respectively. Contrary to Lemma~\ref{lem:dualcycles}, in the following we assume that these two faces are different, and in fact $C_1$ and $C_2$ are disjoint. Let $r$ be an even number. We say that a family of vertex-disjoint paths $\Pp=(P_1,P_2,\ldots,P_r)$ is an {\emph{alternating join}} of $C_1$ and $C_2$ of size $r$, if every path $P_i$ connects $C_1$ and $C_2$ and is directed from $C_1$ to $C_2$ if $i$ is odd, and from $C_2$ to $C_1$ if $i$ is even, and paths $P_i$ are located in clockwise order in the ring. 

We now show how to, given a digraph $G$, construct an instance of the cohomology feasibility problem that encodes existence of an alternating join; the construction closely follows the lines of constructions of Schrijver for various other problems, but we include it for the sake of completeness. 

We define an {\emph{extended dual}} $G^+$ of $G$ as follows: we construct the classical dual $G^*$, and for every vertex $v$ and every two faces sharing $v$ and not sharing an edge adjacent to $v$,  we add an additional arc between the corresponding pair of vertices in the dual, in an arbitrary direction; we call these arcs {\emph{added}}. We delete the vertices in the dual corresponding to faces with boundaries $C_1,C_2$, i.e., we delete the inner and outer of the ring. Note that $G^+$ is not necessarily planar.

We take $\group$ to be a free group on $r$ generators $g_1,g_2,\ldots,g_r$. For every original arc $a$ of the dual we define $H(a)=\{1,g_1,g_2,\ldots,g_r\}$, and for every added arc $a^+$ we define $H(a^+)=\{1,g_1,g_2,\ldots,g_r,g_1^{-1},g_2^{-1},\ldots,g_r^{-1}\}$. Take any path $P$ in $G$ connecting $C_1$ and $C_2$ and put $\phi(a^*)=g_1\cdot g_2^{-1}\cdot g_3\cdot \ldots\cdot g_{r-1}\cdot g_r^{-1}$ for all arcs $a$ of $P$, where $a^*$ is the arc in the dual corresponding to $a$. Moreover, if $a^+$ is an added arc connecting two faces $F_1,F_2$ sharing a vertex $v$, take the boundary of the face in the dual corresponding to the vertex $v$, define $R$ to be any path on this boundary connecting $F_1$ and $F_2$, and put $\phi(a^+)=\phi(R)$. Note that after deleting the outer and inner face, at least one such path exists. Put $\phi(a)=1$ for all the other arcs.

\begin{figure}
\begin{center}
\includegraphics{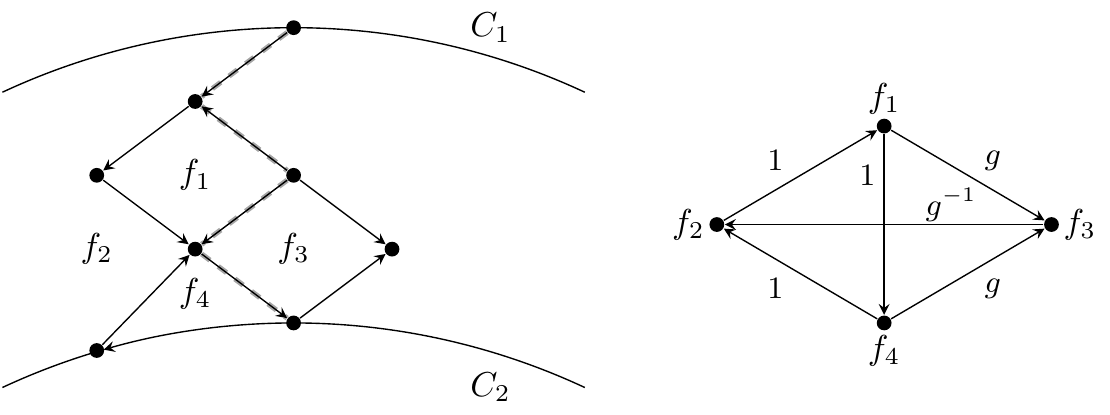}
\caption{The marked edges belong to the path $P$. On the right we have a subgraph of the extended dual depicting the values $\phi(a)$, where $g=g_1\cdot g_2^{-1}\cdot g_3\cdot \ldots\cdot g_{r-1}\cdot g_r^{-1}$.}
\label{fig:construction-s3}
\end{center}
\end{figure}

Having defined the instance, we prove the following claim.

\begin{lemma}\label{lem:coh-formulation}
Let $I=(G^+,\group,\phi,H)$ be the defined instance of the cohomology feasibility problem. Then $I$ has a solution $\psi$ if and only if there exists an alternating join of $C_1$ and $C_2$ of size $r$. Moreover, given a solution $\psi$ one can construct the alternating join in polynomial time.
\end{lemma}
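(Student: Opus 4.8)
The statement is an instance of the standard correspondence, due to Schrijver, between feasibility of a dual cohomology instance and the existence of a family of disjoint paths realising a prescribed crossing pattern; we outline the two directions and point out the delicate steps rather than carry out the bookkeeping. The conceptual content is that a labelling $\psi$ is cohomologous to $\phi$ precisely when the coboundary of some potential $F\colon V(G^+)\to\group$ turns $\phi$ into $\psi$, and this can be tested by comparing the integrals of $\psi$ and $\phi$ (up to a common conjugation) around a basis of fundamental cycles of $G^+$. Since $G^+$ is essentially the dual of a graph in a ring with the inner and outer vertices removed, such a basis consists of the boundaries of the faces dual to the vertices of $G$, together with one cycle winding once around the ring. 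The values of $\phi$ are engineered so that $\phi$ integrates to $1_\group$ around the face dual to every vertex $v$ (this is exactly what the values $\phi(a^+)=\phi(R)$ on the added arcs achieve) and to $g:=g_1 g_2^{-1}\cdots g_{r-1}g_r^{-1}$ around the ring cycle, which crosses the reference path $P$ exactly once. Thus $\psi$ is a solution precisely when it encodes a routing that is locally consistent at every vertex and whose net crossing pattern around the ring is $g$, and the alternating, clockwise structure of an alternating join is exactly what produces the pattern $g$.

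\emph{From an alternating join to a solution.} Given an alternating join $\Pp=(P_1,\dots,P_r)$, for an original dual arc $a^*$ set $\psi(a^*):=g_i^{\pm 1}$ if the primal arc $a$ lies on $P_i$ (the sign fixed by a convention relating the direction of $a^*$, the direction of $a$ along $P_i$, and the parity of $i$) and $\psi(a^*):=1$ otherwise; for an added arc around a vertex $v$ set $\psi(a^+)$ to the corresponding generator if some (necessarily unique, by vertex-disjointness) $P_i$ passes through $v$ and separates its two incident faces, and to $1$ otherwise. Then $\psi(a)\in H(a)$ for every arc, since each arc and vertex is used by at most one path. To see $\psi$ cohomologous to $\phi$ we check that the two integrals agree around the basis above: around the face dual to a vertex $v$ the integral of $\psi$ is trivial because a path through $v$ enters and leaves it, and around the ring cycle it is exactly $g$ by the alternating orientations and clockwise order of $\Pp$ — the same values as for $\phi$. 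Equivalently, one produces the potential $F$ by integrating $\psi\phi^{-1}$ along a curve from a fixed reference face, the only point to verify being that this integral around the ring cycle is trivial, which is again where ``alternating'' is used. All of this is constructive, so $F$ is obtained from $\psi$ in polynomial time.

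\emph{From a solution to an alternating join.} Conversely, let $\psi$ be a solution witnessed by $F$. Every value of $\psi$ is $1$ or a single generator $g_i^{\pm 1}$, and, as $\psi$ and $\phi$ are cohomologous, the integral of $\psi$ around the face dual to any vertex $v$ is trivial. In a free group a product of single generators equal to $1$ reduces by successive cancellation of adjacent inverse pairs; applying this around each vertex shows that for every fixed $i$ the arcs and added arcs whose $\psi$-value lies in $\{g_i,g_i^{-1}\}$ describe in $G$ a disjoint union of $C_1$--$C_2$ paths and closed curves that meets each vertex zero times or passes through it without self-crossing. The integral of $\psi$ around the ring cycle equals $g$, which forces exactly one such strand to run from $C_1$ to $C_2$ for each $i$ and fixes the direction in which it is crossed; let $P_i$ be that strand, discarding the closed curves. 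The $H$-constraint that each arc carries a single generator, together with the rotation system at each vertex and planarity of $G$, gives that the $P_i$ are vertex-disjoint and appear in clockwise order, and the forced crossing directions give the prescribed alternating orientations, so $(P_1,\dots,P_r)$ is an alternating join of size $r$; since everything is read off $\psi$ directly, it is produced in polynomial time.

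The main obstacle is making the bookkeeping in the last two paragraphs airtight: fixing one orientation convention for dual arcs and one sign convention for crossings under which the integral of $\psi$ around the ring cycle is literally $g$ for an alternating join, and under which the strands extracted from an arbitrary solution are genuinely vertex-disjoint paths with the correct cyclic order and orientations. This is the point where the special values of $\phi$ on the added arcs, the precise form of the sets $H(a)$, and planarity of $G$ must all be used at once.
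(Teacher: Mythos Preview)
Your overall approach is the same as the paper's: define $\psi$ from the paths, verify cohomology by checking products around vertex-faces and around one ring cycle; in the other direction extract the $g_i$-subgraphs, decompose them, and read off the alternating pattern from the ring-cycle product. Two specific steps are wrong as written, though.

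In the forward direction you set $\psi(a^*)=g_i^{\pm 1}$ on original dual arcs. This violates the instance: for original arcs $H(a)=\{1,g_1,\dots,g_r\}$, so inverses are not allowed. The paper puts $\psi(a^*)=g_i$ whenever $a\in P_i$, full stop; the alternation surfaces only when you compute $\psi$ along a cycle winding once around the ring, where arcs traversed against their orientation contribute an inverse. No parity-dependent sign belongs on the label itself.

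In the backward direction, the claim that the ring integral ``forces exactly one such strand to run from $C_1$ to $C_2$ for each $i$'' is not correct and is not how the argument goes. One first uses the constraints on the \emph{added} arcs (not just free-group cancellation around a vertex) to conclude that around any vertex at most one generator appears and with compatible in/out directions; this is what yields vertex-disjointness of the $G_i$. After decomposing each $G_i$ into non-crossing paths and cycles, one constructs a cycle $C$ in $G^+$ winding once around the ring so that $\psi(C)$ equals the cyclic word $\mathcal C$ of the connector symbols. Since $\psi(C)$ is conjugate to $\phi(C)=g_1g_2^{-1}\cdots g_{r-1}g_r^{-1}$, a cyclically reduced word with all generators distinct, $\mathcal C$ contains a cyclic shift of that pattern as a subsequence; one then \emph{selects} those $r$ connectors. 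There may well be additional connectors that are discarded, so ``exactly one'' is false in general.
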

\begin{proof}
Assume first that an alternating join $\Pp=(P_1,P_2,\ldots,P_r)$ exists. Put $\psi(a^*)=g_i$ if $a\in P_i$, $\psi(a^*)=1$ for every other arc of the dual, and $\psi(a^+)=\psi(R)$ for every added arc, where $R$ is defined as in the definition of the instance. It is easy to verify that $\psi$ is indeed a solution: satisfaction of constraints imposed by the function $H$ follows from vertex-disjointness, while being cohomologous can be easily seen by drawing paths aside one by one, modifying each path by a sequence of operations of jumping over a single face at a time.

Assume then that we are given a solution $\psi$ to the instance $I$. Take any vertex $v\in V(G)$. Let $a_1,a_2,\ldots,a_t$ be arcs adjacent to $v$ that do not lie on $C_1$ or $C_2$, in this order on the plane. As constraints imposed by the function $H$ are satisfied, each of the corresponding arcs in the dual can accommodate only a symbol $g_i$ or $1$. The added arcs ensure that there are no two arcs accommodating two different generators and that every two consecutive arcs accommodating a generator have different directions. Hence, if by $G_i$ we denote the subgraph of $G$ consisting of arcs $a$ such that $\psi(a^*)=g_i$ and vertices adjacent to them, then subgraphs $G_i$ are vertex-disjoint.

Consider the graph $G_i$. Partition arcs of $G_i$ into cycles and paths as follows: if we have an arc $(u,v)\in E(G_i)$, then we say that the next arc on the path or cycle is the arc $(v,w)\in E(G_i)$ that is the next arc incident to $v$ accommodating a nontrivial symbol in clockwise order. Thus, $G_i$ is partitioned into paths and cycles, where each path begins and ends on $C_1$ or $C_2$. These paths and cycles are edge-disjoint and noncrossing by their construction; moreover, the paths can begin and finish only on $C_1$ or $C_2$. We say that a path $P$ is a {\emph{connector}} if it connects from $C_1$ to $C_2$; the symbol associated with a connector from $G_i$ is $g_i$ if it goes from $C_1$ to $C_2$, and $g_i^{-1}$ if it goes from $C_2$ to $C_1$. Let $\Pp$ be the family of all connectors for all $G_i$-s. As connectors from $\Pp$ are non-crossing, they can be naturally ordered along the cycle. Let $\Cc$ be the sequence of symbols associated with connectors in this order; note that $\Cc$ is defined uniquely up to a cyclic shift. As the other paths and cycles from partitionings of $G_i$-s are edge-non-crossing, one can find a cycle $C$ in $G^+$ winding one time around the ring such that $\psi(C)=\Cc$.

As $\psi$ is cohomological to $\phi$, we have that $\psi(C)$ is conjugate to $\phi(C)$. It follows that $\Cc$ must admit a cyclic shift of the sequence $(g_1,g_2^{-1},g_3,\ldots,g_{r-1},g_r^{-1})$ as an embedding. Therefore, we can find $r$ connectors, that is, paths connecting $C_1$ and $C_2$, such that every two clockwise consecutive traverse the ring in different directions. Such a family is an alternating join of size $r$. Careful inspection of the proof shows that all the steps of the construction of this family can be performed in polynomial time, given a solution $\psi$.
\end{proof}

Now, using obstacle characterization of Theorem~\ref{thm:coh-obstacles} and cohomological formulation of Lemma~\ref{lem:coh-formulation}, we are able to prove the following min-max theorem.

\begin{lemma}[Alternating paths/circular cut duality]\label{lem:dualaltcut}
Let $G$ be a graph embedded into a ring with $C_1$ being the boundary of the outer face and $C_2$ the boundary of the inner face. Assume moreover that $C_1$ and $C_2$ are disjoint. Let $r$ be an even integer. Then there exists a polynomial-time algorithm that returns either
\begin{itemize}
\item an alternating join of $C_1$ and $C_2$ of size $r$, or
\item a simple noose inside the ring, separating $C_1$ and $C_2$ and having at most $r+4$ alternations.
\end{itemize}
\end{lemma}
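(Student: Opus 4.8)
The plan is to reduce the statement to the cohomology-feasibility framework set up above. First I would invoke Lemma~\ref{lem:coh-formulation}: construct the instance $I=(G^+,\group,\phi,H)$ encoding existence of an alternating join of $C_1$ and $C_2$ of size $r$, and run the algorithm of Theorem~\ref{thm:coh-alg}. If it reports a solution $\psi$, then by Lemma~\ref{lem:coh-formulation} we can extract in polynomial time an alternating join of size $r$, and we output that; this handles the first outcome. So the real work is in the second outcome: assume no solution $\psi$ exists, so Theorem~\ref{thm:coh-alg} provides (via Theorem~\ref{thm:coh-obstacles}) a vertex $u$ and two undirected closed walks $P,Q$ from $u$ to $u$ in $G^+$ such that no single conjugating element $x\in\group$ simultaneously sends $\phi(P)$ into $H(P)$ and $\phi(Q)$ into $H(Q)$. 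The task is to turn this algebraic obstruction into a geometric one: a simple noose inside the ring, separating $C_1$ from $C_2$, of alternation at most $r+4$.

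The key step is to interpret the closed walks $P,Q$ in the extended dual $G^+$ as closed curves on the plane (equivalently, on the ring): each arc of $G^+$ is drawn either crossing an arc of $G$ (a dual edge) or jumping over a vertex of $G$ within the small region around that vertex (an added arc), so a walk in $G^+$ traces out a non-degenerate closed curve $\widehat P$ (resp.\ $\widehat Q$) on the ring that crosses $G$ only in arcs and vertices; the value $\phi(\widehat P)$ along such a curve records, as a word in $\group$, the pattern of crossings relative to the fixed path $P$ appearing in the definition of $\phi$, and $H(\widehat P)$ is exactly the set of words that ``fit'' into the sign-pattern $\snoose$ of the curve (after passing to the appropriate subword corresponding to the $g_i$'s occurring). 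The point is: embeddability of a cyclic shift of $(g_1,g_2^{-1},\dots,g_{r-1},g_r^{-1})$ into $\snoose$ is exactly what was used in the proof of Lemma~\ref{lem:coh-formulation} to certify an alternating join, and here its failure must be visible along $P$ or along $Q$. Because the alphabet is a \emph{free} group on $r$ generators and each $H(a)$ only contains single generators or their inverses, the conjugation-by-$x$ freedom collapses: I expect that after reducing each of $\phi(P),\phi(Q)$ to a cyclically reduced word, the condition ``$x^{-1}\phi(P)x\in H(P)$ and $x^{-1}\phi(Q)x\in H(Q)$ for some common $x$'' forces both $\widehat P$ and $\widehat Q$ to be freely homotopic (as closed curves separating $C_1$ from $C_2$) to a curve winding once around the ring, and its failure means one of them, say $\widehat P$, winds around the ring and yet $\snoose(\widehat P)$ does \emph{not} contain the required alternating pattern of length $r$ as a cyclic subword — i.e.\ $\alt(\widehat P)\le r$ — OR $\widehat P$ does not separate $C_1$ from $C_2$ at all (winds zero times) but then so does $\widehat Q$, and one can argue the obstruction degenerates, contradicting that $P$ was actually produced as a genuine obstacle. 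The cleanest route is probably: among $\widehat P,\widehat Q$, select one that is nontrivial, i.e.\ separates $C_1$ from $C_2$ (winds an odd number of times), and show $\alt$ of that one is at most $r$; the extra ``$+4$'' budget is absorbed when we clean up the curve.

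Once we have a non-degenerate closed curve $N_0$ on the ring separating $C_1$ from $C_2$ with $\alt(N_0)\le r$ (possibly $\le r$, with a small additive slack of at most $4$ coming from the added-arc jumps around vertices where the sign-pattern may pick up a bounded correction — this is the source of the ``$r+4$''), I would finish by cleaning it up: apply Observation~\ref{obs:circumventing} to replace $N_0$ by a homotopic \emph{pretty} curve of the same alternation, then apply (the closed-curve analogue of) Observation~\ref{obs:subcurve} to replace it by a \emph{simple} noose, taking disjoint subcurves and short in-face shortcuts, with alternation not increasing; the result still separates $C_1$ from $C_2$ because homotopy and the shortcutting preserve the winding parity. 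This yields a simple separating noose of alternation at most $r+4$, as required, and all steps are polynomial.

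\medskip

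\textbf{Main obstacle.} I expect the hard part to be the middle paragraph: extracting from the two \emph{undirected} closed walks $P,Q$ in the possibly-nonplanar extended dual $G^+$ a single closed curve on the ring whose $\snoose$-sequence provably lacks the required alternating pattern, and controlling precisely how much the added arcs (the ``vertex-jumps'') can distort the sign pattern, so as to land exactly on the bound $r+4$ rather than some larger constant. One has to be careful that the walk in $G^+$ may traverse added arcs in either direction and may revisit vertices of $G$, so translating $\phi(\cdot)\notin H(\cdot)$ (for all conjugators simultaneously) into a clean geometric ``missing pattern'' statement about $\alt$ requires a careful reduction to cyclically reduced words in the free group $\group$ and a matching between occurrences of each generator $g_i$ and the two directions of crossing. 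The path/noose cleanup via Observations~\ref{obs:circumventing} and~\ref{obs:subcurve} is routine by comparison.
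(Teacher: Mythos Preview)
Your overall strategy is right and matches the paper: build the cohomology instance $I=(G^+,\group,\phi,H)$ of Lemma~\ref{lem:coh-formulation}, run Theorem~\ref{thm:coh-alg}, and if it fails use the obstructing pair of closed walks $P,Q$ from Theorem~\ref{thm:coh-obstacles} to produce a separating noose. The paper also makes your ``one of them has small alternation'' step precise: if for both $R\in\{P,Q\}$ either $w_R=0$ or $\alt(R)\ge w_R\cdot r$, then $x=1$ already witnesses $\phi(R)\in H(R)$ (embed $(+1,-1)^{w_R r/2}$ into $\snoose(R)$), contradicting that $(P,Q)$ is an obstacle. So after possibly swapping, $w_P>0$ and $\alt(P)<w_P\cdot r$.

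The genuine gap is your cleanup step. You propose to pass directly from the obstructing curve to a simple separating noose via Observation~\ref{obs:circumventing} and ``the closed-curve analogue of'' Observation~\ref{obs:subcurve}. This does not work as stated, for two reasons. First, the bound you have is $\alt(P)<w_P\cdot r$, not $\alt(P)\le r$; if $P$ winds several times around the ring you have no useful bound yet. Second, Observation~\ref{obs:subcurve} is only for non-closed curves, and a naive shortcutting of a closed curve can kill the winding number, so you cannot just invoke it. The paper resolves this with an \emph{iterative} extraction: repeatedly take a shortest interval of $P$ between two visits of the same face or vertex, split off the resulting simple noose $N$ (which has $|w_N|\le 1$), and do a careful alternation accounting. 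Cutting at a repeated face gives $\alt(P')\le\alt(P)-\alt(N)+3$; cutting at a repeated vertex gives $\alt(P')\le\alt(P)-\alt(N)+5$ (and one uses prettiness to argue $\alt(P')\le\alt(P)$ when $w_N=0$). Hence either $\alt(N)\le r+4$ and $w_N\neq 0$, in which case you output $N$, or you can shorten $P$ while maintaining the invariant $\alt(P)<w_P\cdot r$. This loop is where the ``$+4$'' actually comes from, not from the added arcs of $G^+$ as you conjecture; your attribution of the slack is incorrect, and without this iterative argument the bound does not go through.
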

\begin{proof}
Before we start the proof, without loss of generality we assume that $C_1$ and $C_2$ are in fact directed cycles going clockwise. Indeed, we can redirect the arcs of $C_1$ and $C_2$ in any manner, as we can safely assume that the alternating join will not use any of these arcs, and also redirecting these arcs do not influence alternation of any noose contained inside the ring.

We construct the instance of the cohomology feasibility problem $I=(G^+,\group,\phi,H)$, as in Lemma~\ref{lem:coh-formulation}, and apply the algorithm of Theorem~\ref{thm:coh-alg}. If the algorithm returns a solution, by Lemma~\ref{lem:coh-formulation} we can extract a sufficiently large alternating join and return it. Otherwise, by Theorem~\ref{thm:coh-obstacles} we are left with two closed undirected walks $P,Q$ in $G^+$ rooted in some vertex $u\in V(G^+)$, such that for no element $x\in \group$ we have that $x^{-1}\phi(P)x\in H(P)$ and $x^{-1}\phi(Q)x\in H(Q)$. Fix an arbitrary reference curve and let $w_P$ and $w_Q$ be the winding numbers of $P$ and $Q$ in the ring; by reversing each cycle if necessary, we may assume that $w_P,w_Q\geq 0$. Note that then $\phi(R)=\left(g_1\cdot g_2^{-1}\cdot g_3\cdot \ldots\cdot g_{r-1}\cdot g_r^{-1}\right)^{w_R}$ for $R\in \{P,Q\}$.

For clarity, in the following, whenever considering alternation, we think of $P,Q$ as non-closed undirected walks that only by coincidence begin and end in the same point. That is, we consider embeddings of alternating sequences into sequences $\snoose(P),\snoose(Q)$, where these sequences begin and end in $v$.

We now claim that if ($w_R\cdot r\leq\alt(R)$ or $w_R=0$) for both $R=P,Q$, then for $x=1$ we have that $x^{-1}\phi(P)x\in H(P)$ and $x^{-1}\phi(Q)x\in H(Q)$, which is a contradiction. Indeed, if $w_P=0$ then $\phi(P)=1\in H(P)$, and if $\alt(P)\geq w_P\cdot r$, then we can embed a sequence $(+1,-1)^{w_P\cdot r/2}$ in $\snoose(P)$; by taking consecutive $g_i^{\pm 1}$ from the sets $H(a)$ for images of the embedding, and $1$ from all sets $H(a)$ for all the other arcs of $P$, we see that $\phi(P)\in H(P)$. The same argument holds for $Q$.

Without loss of generality assume then that $w_P\cdot r > \alt(P)$ and $w_P>0$. By somehow abusing notation, from now on we identify the undirected walk $P$ in $G^+$ with a naturally corresponding non-degenerated noose in $G$; by Observation~\ref{obs:circumventing} we may assume that $P$ is pretty. We consecutively extract nooses from $P$ keeping the invariant that $w_P\cdot r> \alt(P)$ and that $P$ is pretty. At each step we either find a simple noose with winding number $1$ and alternation at most $r+4$, which can be output by the algorithm, or shorten (with respect to some measure to be defined) noose $P$ keeping the invariant. Moreover, if the extraction step cannot be applied, then $P$ is already simple, has winding number $1$ and (by the invariant) at most $r$ alternations when treated as a noose, hence it can be output by the algorithm.

We proceed with a similar cutting scheme as in the proof of Observation~\ref{obs:subcurve}. Assume then that the noose $P$ is not simple, hence some vertex or face is visited more than once. Let us take a shortest interval on $P$ between two consecutive visits of the same face or vertex; by minimality it follows that we may partition $P$ into a simple noose $N$ traversing this interval and the resulting noose $P'$ that is $P$ with the interval cut out (in case of visiting the same face twice, we may need to add small connections within this face). Note that the winding number of $N$ is of absolute value at most $1$, as it is simple.

If the cutting was performed due to visiting the same face twice, we have a simple situation: $\snoose(P')$ is $\snoose(P)$ with $\snoose(N)$ carved out, so $\alt(P') \le \alt(P)$ and $\alt(P')\leq \alt(P)-\alt(N)+3$ (we may lose at most two alternations on the cut points and potentially one
    alternation on a cyclic shift of $\snoose(N)$). Hence if the winding number of $N$ equals zero or $\alt(N)\geq r+3$, we are keeping the invariant, and otherwise we may output $N$ (or $N$ reversed if its winding number is $-1$). Note that thus $P'$ is still pretty.

Assume now that cutting was performed due to visiting the same vertex $w$ twice. Note that $\snoose(P')$ is basically $\snoose(P)$ with $\snoose(N)$ carved out, only with possible manipulations on the term corresponding to passing through the vertex $w$. Again, it follows that $\alt(P')\leq \alt(P)-\alt(N)+5$: if we can embed some alternating sequence into a cyclic shift of $\snoose(P')$ and some other alternating sequence into a cyclic shift of $\snoose(N)$, then after removing terms corresponding to passing through $w$ from both embeddings, shifting the alternating sequence embedded into a cyclic shift of $\snoose(N)$ so that it corresponds to the shift rooted at $w$, and gluing the embedding together, we obtain an embedding into $\snoose(P)$ of an alternating sequence of length at least $\alt(N)+\alt(P')-5$ (we may lose additional $2$ alternations on gluing and one on shifting $\snoose(N)$). Hence, if $N$ has nonzero winding number, then either we may reduce $P$ by shortening it to $P'$ (in case $\alt(N)\geq r+5$) or output $N$ (in case $\alt(N)\leq r+4$). In order to ensure that $P'$ is still pretty, we may need to apply Observation~\ref{obs:circumventing} to it. Note that this application can only make a small circumvent of the vertex $w$, and by the assumptions that $C_1$ and $C_2$ are directed cycles, this circumvent will not make $P'$ go outside the ring.

Now consider the case when $N$ has zero winding number; we need to argue that $\alt(P')\leq \alt(P)$, as then we can shortcut $P$ to $P'$, possibly again applying Observation~\ref{obs:circumventing} to it. This, however, follows from the same argumentation as in the proof of Observation~\ref{obs:subcurve}: from the fact that $P$ is pretty we can argue that the term corresponding to passing $w$ in $\snoose(P')$ is contained in the union of terms corresponding to passing $w$ in $\snoose(P)$, which means that every sequence embeddable into $\snoose(P')$ is also embeddable into $\snoose(P)$.

We are left with arguing that the presented procedure will terminate in polynomial number of steps. Note that in every cutting step we either decrease the number of vertices visited by $P$, or do not increase the number of vertices visited by $P$ but decrease the number of faces and arcs visited by $P$. Hence, the maximum number of steps performed is at most the number of vertices visited by $P$ times the size of the graph.
\end{proof}

The following corollary will be used in the algorithm.

\begin{lemma}[Handling a large ring]\label{lem:largering}
Let $G$ be a directed graph on the plane with some terminals. Let $r$ be an integer and let $C_1$, $\dots$, $C_{2r+3}$ be an alternating sequence of
concentric cycles in $G$ with no terminal between $C_1$ and $C_{2r+3}$. Then there exists a polynomial-time algorithm that outputs either:
\begin{itemize}
\item a simple noose separating $C_1$ and $C_{2r+3}$ having at most $2r+8$ alternations, or
\item a vertex $v$ surrounded by a sequence of $r$ alternating concentric cycles with no terminals inside them.
\end{itemize}
\end{lemma}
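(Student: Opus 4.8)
The plan is to apply Lemma~\ref{lem:dualaltcut} (alternating paths/circular cut duality) to the ring region bounded by $C_1$ and $C_{2r+3}$, and then interpret the two possible outcomes. First I would take $G_\gamma$ to be the subgraph of $G$ living in the closed ring between $C_1$ and $C_{2r+3}$, with $C_1$ the boundary of the outer face and $C_{2r+3}$ the boundary of the inner face; since the $C_i$ form an alternating sequence of concentric cycles, these two faces are disjoint, so the hypotheses of Lemma~\ref{lem:dualaltcut} are met. Apply that lemma with parameter $r$ (even; if $r$ is odd we may simply run it with $r+1$, which only weakens the conclusion by a constant and keeps the bounds as stated, or we phrase the whole statement for even $r$ and note the odd case is analogous). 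The algorithm either returns an alternating join $\Pp = (P_1,\dots,P_r)$ of $C_1$ and $C_{2r+3}$ of size $r$, or a simple noose inside the ring separating $C_1$ from $C_{2r+3}$ with at most $r+4$ alternations.

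In the second case we are essentially done with a little care about which cycles are separated. The noose $N$ returned separates the outer face of $G_\gamma$ (which contains $C_1$ and everything outside it) from the inner face (which contains $C_{2r+3}$ and everything inside it), so it separates $C_1$ from $C_{2r+3}$ in $G$; it has at most $r+4 \le 2r+8$ alternations, and it is simple, hence non-degenerate. This is the first output of the lemma. (If we used $r+1$ in place of $r$ for parity, the bound becomes $r+5 \le 2r+8$ for $r \ge 1$, still fine.)

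In the first case we must extract an irrelevant-vertex configuration from the alternating join together with the concentric cycles. Here is the idea: the $r$ paths $P_1,\dots,P_r$ of the join each cross every one of the cycles $C_1,\dots,C_{2r+3}$, since each $P_i$ goes from $C_1$ to $C_{2r+3}$ and consecutive $C_j$ are concentric with $C_1$ outermost. Consecutive paths in the join have alternating orientation, so locally they look like the vertical chords of Figure~\ref{fig:view:isolde}, while the cycles give the horizontal strands. Concretely, I would look at a ``cell'' of this grid-like structure deep inside: pick a cycle $C_j$ with, say, $j$ around the middle of the range (so that there are at least $r$ cycles inside $C_j$ and the paths of the join supply the needed alternating chords crossing those inner cycles). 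Between two consecutive chords $P_i, P_{i+1}$ and inside $C_j$, the arcs of $P_i$, $P_{i+1}$ together with arcs of the cycles $C_{j}, C_{j+1},\dots$ enclose a face (or a bounded piece of the plane). The cycles $C_{j}, C_{j+1}, \dots, C_{j+r-1}$, cut appropriately by the chords $P_i$ and $P_{i+1}$ — but more simply, the already-existing cycles $C_{j+1},\ldots,C_{j+r}$ themselves — form $r$ concentric alternating cycles, none of which encloses any terminal (no terminal lies between $C_1$ and $C_{2r+3}$, and these are all in that range). Any vertex $v$ enclosed by the innermost of these $r$ cycles is then surrounded by $r$ alternating concentric cycles with no terminal inside. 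Such a $v$ exists as long as the innermost cycle encloses at least one vertex, which we may assume (otherwise the cycle has length describing a bounded region we can handle directly, or we subdivide so that some vertex is strictly enclosed). The main technical point — and the step I expect to require the most care — is verifying that the alternating join actually forces such a strictly-enclosed vertex to exist in a terminal-free nest of $r$ cycles: one has to argue that the join guarantees the ring is ``thick'' in the combinatorial sense and that the $r$ cycles $C_{j+1},\dots,C_{j+r}$ together with a point strictly inside $C_{j+r}$ realize the configuration. This is where one invokes that each $P_i$ must cross all $2r+3$ cycles, so $C_{j+r}$ genuinely encloses part of $P_i$ and hence at least one vertex. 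Once $v$ is found, output it as the second alternative. All steps are polynomial: Lemma~\ref{lem:dualaltcut} is polynomial, and locating a suitable middle cycle and a vertex strictly inside it is a straightforward scan of the embedding.
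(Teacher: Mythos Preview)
Your approach has a genuine gap in the second case. You propose to output the already-present cycles $C_{j+1},\dots,C_{j+r}$ as the nest around $v$ and justify ``none of which encloses any terminal'' by noting that these cycles lie in the ring between $C_1$ and $C_{2r+3}$. But what matters is what these cycles \emph{enclose}, not where they lie: each $C_{j+\ell}$ encloses $C_{2r+3}$ and hence everything inside it, and the hypothesis only forbids terminals \emph{between} $C_1$ and $C_{2r+3}$, not inside $C_{2r+3}$. So the nest you build may well enclose terminals, and the construction fails. (Your passing mention of ``cut appropriately by the chords $P_i$ and $P_{i+1}$'' hints at the right direction, but you immediately retreat from it; and in any case pieces of a $C_j$ glued to pieces of $P_i,P_{i+1}$ need not form directed cycles at all, let alone an alternating concentric family.)

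The paper's proof proceeds quite differently. It applies Lemma~\ref{lem:dualaltcut} with parameter $2r+4$ (not $r$), so that the noose bound is exactly $2r+8$ and the alternating join has $2r+4$ paths. It then picks $v$ at the intersection of the middle cycle $C_{r+2}$ and the middle path $P_{r+2}$, \emph{cuts the ring open} along the strip between $P_{2r+3}$ and $P_1$, deletes $v$, and applies the other duality (Lemma~\ref{lem:dualcycles}) to the resulting disc-with-hole. If that lemma returns $r$ alternating concentric cycles around $v$, they live inside the terminal-free ring and we are done. Otherwise it returns a curve $M$ of alternation at most $r$ from $v$ to the outer boundary; but that boundary has four sides ($C_1$, $C_{2r+3}$, $P_1$, $P_{2r+3}$), and reaching any of them forces $M$ to cross $r+1$ members of an alternating family (either $C_1,\dots,C_{r+1}$, or $C_{r+3},\dots,C_{2r+3}$, or $P_1,\dots,P_{r+1}$, or $P_{r+3},\dots,P_{2r+3}$), contradicting $\alt(M)\le r$. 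The need for $r+1$ objects on each of the four sides is precisely why the join must have size $2r+4$ and why there are $2r+3$ cycles.
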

\begin{proof}
Let $R$ be a ring with $C_1$ being the outer cycle and $C_{2r+3}$ being the inner cycle. We use the algorithm of Lemma~\ref{lem:dualaltcut} to either find an simple noose having at most $2r+8$ alternations that separate $C_1$ from $C_{2r+3}$, which can be returned by the algorithm, or an alternating join of size $2r+4$. Assume then that the join $P_1,P_2,\ldots,P_{2r+4}$ has been found, where the indices reflect clockwise order of the paths on the ring.

Let $v$ be any vertex that is on the intersection of cycle $C_{r+2}$ and $P_{r+2}$. We modify the ring $R$ using with the following operations. We delete the part of the ring between $P_{2r+3}$ and $P_1$, thus cutting through the ring and creating a graph with outer face boundary consisting of paths $P_1$, $P_{2r+3}$ and subpaths of cycles $C_1$ and $C_{2r+3}$ (see Fig.~\ref{fig:concentric-cycles}). Moreover, we delete the vertex $v$ from the graph and declare the face in which it was embedded the inner face. Thus, the resulting graph may be viewed as embedded into a ring $R'$ with the aforementioned inner and outer face.

\begin{figure}
\begin{center}
\includegraphics{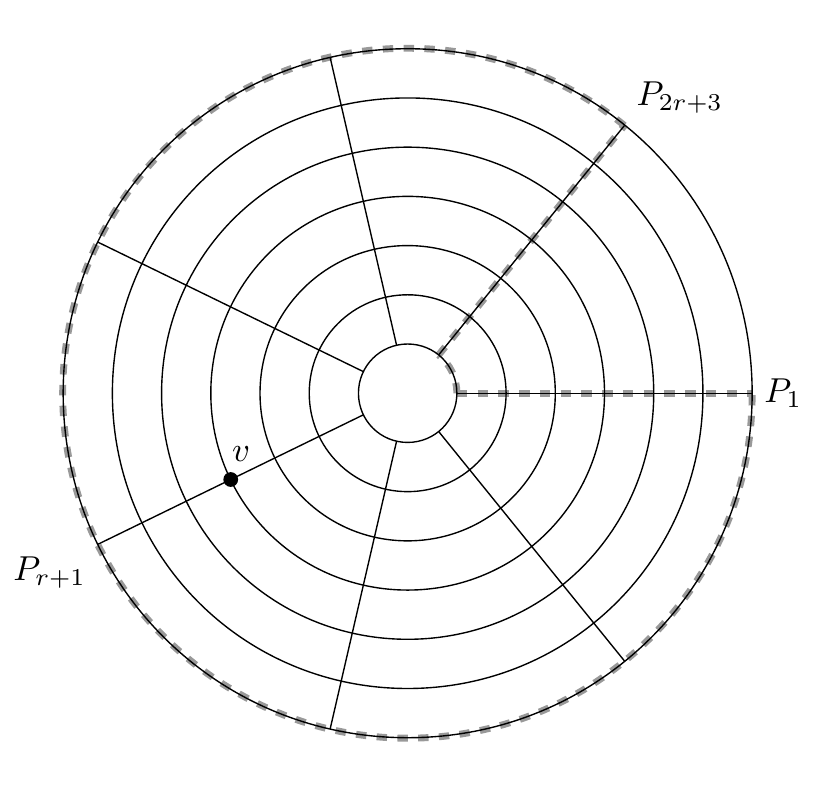}
\caption{The outer face of the ring $R'$ is marked with dashed gray.}
  \label{fig:concentric-cycles}
  \end{center}
  \end{figure}

We now apply Lemma~\ref{lem:dualcycles} to the ring $R'$. We either find $r$ alternating cycles around the inner face, which constitute $r$ alternating cycles around $v$ with no terminals embedded that can be returned by the algorithm, or a simple curve $M$ starting in the inner face and ending in the outer face, having at most $r$ alternations. Assume then that curve $M$ was found.

Assume first that $M$ reaches the part of the boundary of the outer face of $R'$ that is on $C_1$. Then $M$ must have passed through cycles $C_1,C_2,\ldots,C_{r+1}$; these passages define an embedding of an alternating sequence of size $r+1$ into $\snoose(M)$, which contradicts the fact that $\alt(M)\leq r$. Similarly, if $M$ reaches $C_{2r+3}$, $P_1$, or $P_{2r+3}$, then it must have passed through sequences of paths $(C_{r+3},C_{r+4},\ldots,C_{2r+3})$, $(P_{1},P_{2},\ldots,P_{r+1})$, or $(P_{r+3},P_{r+4},\ldots,P_{2r+3})$, respectively, and in each case we get a contradiction. Hence, the curve $M$ could not be found and we are done.
\end{proof}

\section{Decomposition}\label{sec:decomp}

In this section we show how to decompose the graph $G$ into a bounded number of weakly connected subgraphs (called henceforth {\em{components}}), such that
the interaction between the components is somehow limited, and the terminals are kept outside of the components.
We start with defining a notion of an alternation suitable for weakly connected subgraphs of $G$; the main property of our components
is that we control their alternation.
\begin{defin}[incident arcs]
Let $G$ be a graph and let $\anycomp$ be its subgraph.
By $\incarcs{\anycomp}$ we denote the set of arcs of $G$ incident to at least one vertex of $\anycomp$,
   but not belonging to $\anycomp$; that is, $\incarcs{\anycomp} = E(G) \setminus (E(\anycomp) \cup E(G[V(G) \setminus V(\anycomp)]))$.
\end{defin}
\begin{defin}[alternation of a face of a subgraph]
Let $G$ be a plane graph, let $\anycomp$ be a weakly connected subgraph of $G$ and let $f$ be a face of $\anycomp$.
Consider an undirected walk $P$ in $\anycomp$ that goes around the face $f$ in such a direction that it leaves the face $f$ to the right,
and the subgraph $\anycomp$ to the left; that is, $P$ goes counter-clockwise if $f$ is the outer face of $\anycomp$, and clockwise otherwise.
Consider a cyclic sequence $P(\incarcs{\anycomp})$ of elements of $\{+1,-1\}$ constructed as follows: we go along $\incarcs{\anycomp}$
and insert into $P(\incarcs{\anycomp})$ an element $+1$ or $-1$ for each endpoint of an arc of $\incarcs{\anycomp}$
we encounter along the walk $P$, depending on whether this is a starting or ending point of the arc.
If multiple arcs are encountered
at one vertex $v \in V(\anycomp)$, consider them in the counter-clockwise order as they appear on the face $f$.
The {\em{alternation}} of the face $f$ in $\anycomp$ is the alternation of the sequence $P(\incarcs{\anycomp})$.
\end{defin}
Note that each arc $e \in \incarcs{\anycomp}$ corresponds to exactly one entry in $P(\incarcs{\anycomp})$ if it has exactly one endpoint in $V(\anycomp)$,
and to two entries if it has both endpoints in $P(\incarcs{\anycomp})$.

We also note that
for any weakly connected subgraph $\anycomp$ and its face $f$, there exists a noose $N(f,\anycomp)$ that goes parallely and very closely to the walk $P$, is contained in the face $f$, does
not visit any vertex of $G$ and $\snoose(N(f,\anycomp)) = P(\incarcs{\anycomp})$ (up to a cyclic shift).
We call such a noose a {\em{border noose}} of $\anycomp$.
Moreover, we may assume that a border noose of a $\anycomp$
is sufficiently close to $\anycomp$, so that for any two disjoint weakly connected
subgraphs of $G$, their border nooses and the areas enclosed by them are disjoint.

We are now ready to define components in our decomposition.

\begin{defin}[disc component]
Let $G$ be a plane graph and let $\anycomp$ be a weakly connected subgraph of $G$.
We call $\anycomp$ a {\em{disc component}} of $G$ if every arc in $\incarcs{\anycomp}$ is contained in the outer face of $\anycomp$.
The {\em{alternation}} of a disc component $\anycomp$ is the alternation of its outer face.
\end{defin}

\begin{defin}[ring component]
Let $G$ be a plane graph and let $\anycomp$ be a weakly connected subgraph of $G$.
We call $\anycomp$ a {\em{ring component}} of $G$ if there exists a face $f_{IN}$ of $\anycomp$ different than its outer face (called the {\em{inner face}} of the component)
such that every arc in $\incarcs{\anycomp}$ is contained either in the outer face of $\anycomp$ or in $f_{IN}$.
The {\em{alternation}} of a ring component $\anycomp$ is the maximum of alternations of its outer face and the inner face $f_{IN}$.
\end{defin}
By $\alt(\anycomp)$ we denote an alternation of a component $\anycomp$.

Note in both component definitions, we do not require that the graph $\anycomp$ is an {\em{induced}} subgraph of $G$.
In other words, we allow arcs in $\incarcs{\anycomp}$ that have both endpoints in $\anycomp$. 

\begin{defin}[decomposition]
  Let $G$ be a plane graph having a set $T\subseteq  V(G)$ of terminals. 
  Then a set $\decomp$ of (disc and ring) components of $G$ is a {\em{decomposition}} of $G$ iff
  \begin{itemize}
   \item every vertex of $G$ is in exactly one component of $\decomp$;
   \item for each terminal $t \in T$ there exists a disc component $\disccomp_t \in \decomp$ that consists of the vertex $t$ only.
  \end{itemize}

   The {\em{disc (ring) alternation}} of the decomposition $\decomp$ is the maximum alternation of a disc (ring) component in $\decomp$.
\end{defin}
We will control two natural measures of a quality of a decomposition: its alternation and the number of its components.
Moreover, we will require that ring components are embedded into a decomposition in a special way.

\begin{defin}[isolating component]
Consider a decomposition $\decomp$ of a plane graph $G$ with terminals $T$. We say that a disc component $\disccomp \in \decomp$
is a {\em{$d$-isolating}} component if the subgraph of $G$ induced by the vertices of $\disccomp$ (i.e., the subgraph $\disccomp$ together
    with all the arcs of $\incarcs{\disccomp}$ that have both endpoints in $\disccomp$)
contains a sequence of $d$ alternating concentric cycles and each edge of $\incarcs{\disccomp}$
that has exactly one endpoint in $\disccomp$ either lies inside the innermost of these cycles (an {\em{inner edge}})
or lies outside of the outermost of these cycles (an {\em{outer edge}}).
\end{defin}
We note that, if the graph $G$ is weakly connected, in the space enclosed between the innermost and the outermost cycle from the definition
of the isolating component $\disccomp$ there are only arcs and vertices of $\disccomp$ and arcs of $\incarcs{\disccomp}$ that
have both endpoints in $\disccomp$.

\begin{defin}[ring isolation]\label{def:isolation}
Consider a decomposition $\decomp$ of a plane graph $G$ with terminals $T$. We say that the decomposition has {\em ring isolation $(\Lambda,d)$}
if for every ring component $\ringcomp \in \decomp$ with outer face $f_{OUT}$ inner face $f_{IN}$ there exist $2\Lambda$
  disc components $\disccomp_{IN,\lambda}$, $\disccomp_{OUT,\lambda} \in \decomp$, $1 \leq \lambda \leq \Lambda$, such that each of these components is $d$-isolating and:
 \begin{itemize}
   \item the components $\disccomp_{IN,\lambda}$, $1 \leq \lambda \leq \Lambda$, are contained inside $f_{IN}$ and the components
   $\disccomp_{OUT,\lambda}$, $1 \leq \lambda \leq \Lambda$ are contained in $f_{OUT}$;
   \item for each $1 \leq \lambda < \Lambda$, each inner arc of $\disccomp_{IN,\lambda}$ has the second endpoint in $\disccomp_{IN,\lambda+1}$,
   and each outer arc of $\disccomp_{OUT,\lambda}$ has the second endpoint in $\disccomp_{OUT,\lambda+1}$;
   \item for each $1 <\lambda \leq \Lambda$, each outer arc of $\disccomp_{IN,\lambda}$ has the second endpoint in $\disccomp_{IN,\lambda-1}$,
   and each inner arc of $\disccomp_{OUT,\lambda}$ has the second endpoint in $\disccomp_{OUT,\lambda-1}$;
   \item each outer arc of $\disccomp_{IN,1}$ and each inner arc of $\disccomp_{OUT,1}$ has the second endpoint in $\ringcomp$.
 \end{itemize}
 Moreover, we require that no disc component serves as an isolation to two ring components,
 that is, the components $\disccomp_{IN,\lambda}$ and $\disccomp_{OUT,\lambda}$, $1 \leq \lambda \leq \Lambda$ are pairwise distinct
 for different ring components of $\decomp$.
\end{defin}
Note that each of the $d$ alternating concentric cycles of $\disccomp_{IN,\lambda}$ (forming a $d$-isolation) is contained in the inner face $f_{IN}$,
while each of the $d$ alternating concentric cycles of $\disccomp_{OUT,\lambda}$ encloses $\ringcomp$.

\begin{figure}
\begin{center}
\includegraphics[width=0.5\textwidth]{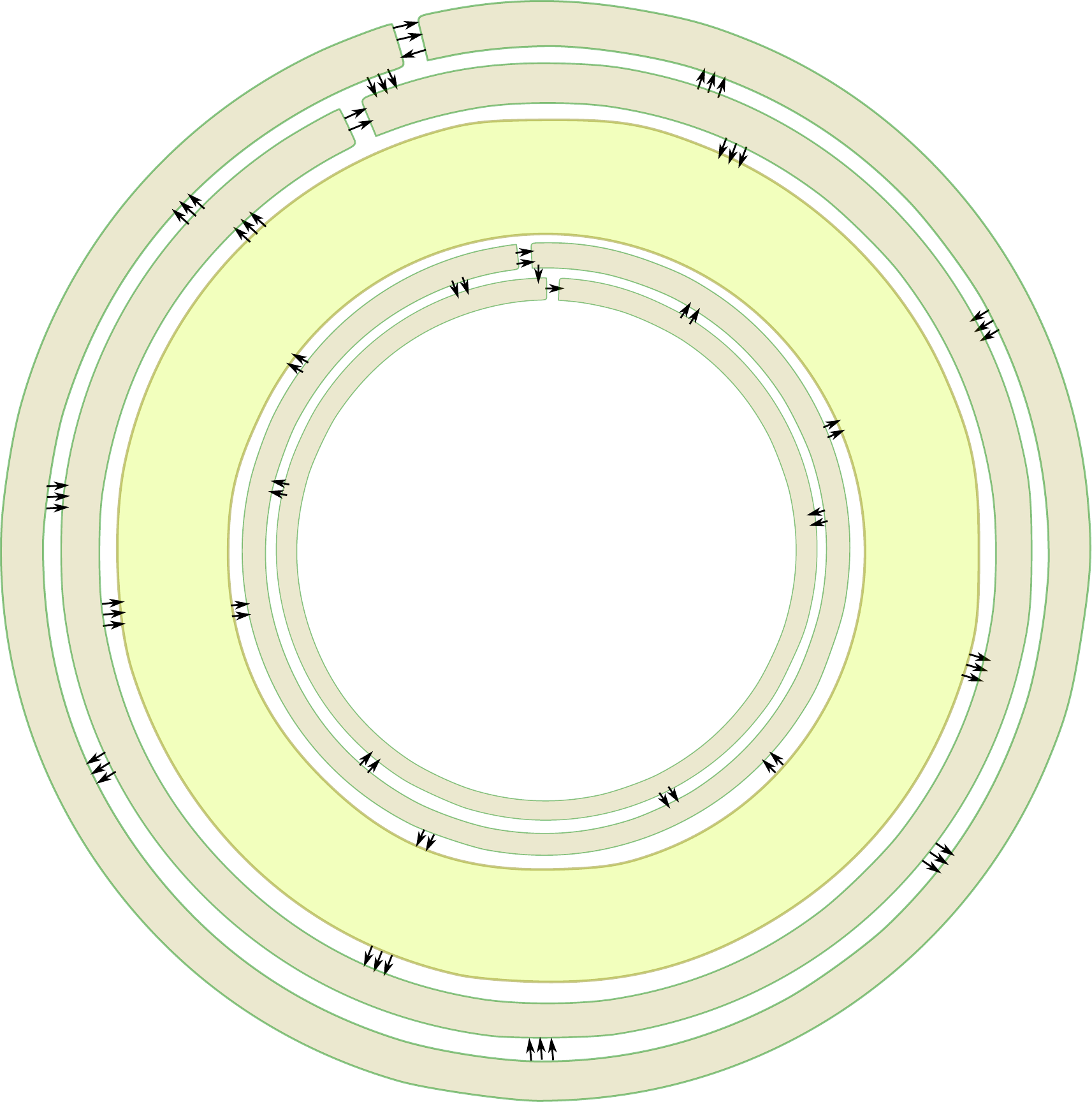}
\caption{A ring component with two levels of isolation.}
  \label{fig:spiraling-ring}
  \end{center}
  \end{figure}

We say that a decomposition has {\em{positive isolation}}
if it has isolation $(\Lambda,d)$ for some $\Lambda,d > 0$.

We are now ready to state our decomposition theorem.

\begin{theorem}[Main Decomposition Theorem]\label{th:finddecomp}
Let $G$ be a plane graph with $k$ terminal pairs, each of them having degree 1.
Let $\Lambda$, $d$ and $r$ be positive integers.
Then in $O^*(2^{O(\Lambda(d+r)k^2)})$ time we can either find a set
of $r$ alternating concentric cycles with no terminal inside the outermost
cycle, or compute a set of at most $2^{O(\Lambda(d+r)k^2)}$ pairs $(G_i, \decomp_i)$ where:
\begin{enumerate}
\item each $G_i$ is a plane graph of size polynomially bounded
in the size of $G$, with $k$ terminal pairs, where each terminal is of degree one
in $G_i$;
\item $\decomp_i$ is a decomposition of the graph $G_i$ with $O(\Lambda k^2)$ components, ring isolation $d$ and alternation $O(\Lambda(d+r)k^2)$;
\item $G$ is a YES-instance to \probshort{}, if and only if there exists $i$ such that $G_i$ is a YES-instance to \probshort{}.
\end{enumerate}
\end{theorem}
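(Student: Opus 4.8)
The plan is to construct the decomposition incrementally, maintaining a \emph{relaxed decomposition} --- one in which a disc component may still contain terminals in its interior --- together with a set of \emph{active} terminals (those not yet pushed to the boundary of their component), and repeatedly performing a refinement step that turns one active terminal inactive, occasionally at the price of spawning a ring component together with its $2\Lambda$ isolating disc components. Whenever any step uncovers $r$ alternating concentric cycles with no terminal inside the outermost one, we return them and stop, so from now on we may assume this never happens. After the usual normalisation (we may assume $G$ is weakly connected, otherwise we process weakly connected components separately and take the Cartesian product of the resulting families; each terminal has degree $1$ and hence a well-defined hosting face), we fix an auxiliary parameter $r'=\Theta(r+d)$ large enough to feed the min--max lemmas of Section~\ref{sec:spiral}, and we start from the trivial relaxed decomposition whose only nontrivial component is $G$ itself, a disc component holding all $2k$ terminals in its interior, all active.

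A refinement step is applied to an active terminal $t$ lying in the interior of a disc component $\anycomp$. We run Lemma~\ref{lem:dualcycles} on the annular region of $\anycomp$ between a small circle inside the face hosting $t$'s incident arc and a circle just inside the outer face of $\anycomp$, with parameter $r'$ (first removing the finitely many other terminals of $\anycomp$ from the workspace by an auxiliary cut, which is itself part of the recursion). In \textbf{Case A}, the lemma returns a curve $M$ of alternation at most $r'$ reaching the outer face of $\anycomp$; we slit $\anycomp$ open along $M$, which, since $M$ ends on the boundary, keeps $\anycomp$ a disc and only re-routes the arcs crossed by $M$ --- grouped into at most $r'+1$ bundles --- through the enlarged outer face, after which $t$ sits on the new boundary and is split off as its own disc component and declared inactive; the alternation of $\anycomp$ grows by $O(r')$ and no net component is created. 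In \textbf{Case B}, the lemma returns $r'$ alternating concentric cycles $C_1,\dots,C_{r'}$ separating $t$ from the outer face; we pass the ring between $C_1$ and $C_{r'}$ to Lemma~\ref{lem:largering}, which either reports a vertex surrounded by $r$ alternating concentric cycles enclosing no terminal (we stop), or a simple noose $N$ of bounded alternation separating $t$ from $C_1$. In the latter subcase we extract the annulus between $N$ and $C_1$ as a ring component $\ringcomp$, and by iterating Lemma~\ref{lem:dualcycles}/Lemma~\ref{lem:largering} on both sides of $\ringcomp$ (each iteration again either stopping with the output cycles or producing one more layer) we carve $\Lambda$ nested $d$-isolating disc components on each side, so that $\ringcomp$ acquires ring isolation $(\Lambda,d)$ as in Definition~\ref{def:isolation}; the sub-disc bounded by $N$, enlarged to absorb the innermost isolation layer, is then a disc component in whose interior $t$ lies close to the boundary in the sense of Lemma~\ref{lem:dualcycles}, so one application of Case~A finishes $t$ off, leaving the remaining active terminals confined to a strictly smaller nested subregion on which we recurse. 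Finally, each time we cut along a curve we additionally branch over an $O(r+d)$-bit description of how a hypothetical solution crosses the newly created bundles, splitting $G$ along the curve accordingly (turning the relevant crossing arcs into pendant arcs that play, on each side, the role of the terminals the cut separates); a global sequence of such choices yields one pair $(G_i,\decomp_i)$.

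For the invariants: each refinement step makes one active terminal inactive, so there are $O(k)$ of them; a Case-B step spawns one ring plus $O(\Lambda)$ isolating disc components, and since the active terminals live in a laminar family of nested subregions on $2k$ leaves whose auxiliary cleanup cuts each themselves cost $O(\Lambda k)$ components, the final decomposition has $O(\Lambda k^2)$ components. All cut curves have alternation $O(r+d)$, and any fixed component is crossed only by the $O(\Lambda k^2)$ cuts ``on the way to it'' in the nesting (cuts performed in disjoint subregions never accumulate on a common component), so every component keeps alternation $O(\Lambda(d+r)k^2)$; ring isolation $(\Lambda,d)$ holds by construction together with the disjointness of isolation layers of distinct rings. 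Equivalence holds branchwise: the branch guessing the actual solution's crossings correctly reproduces that solution after the splits, and conversely a solution of a split $G_i$ reassembles along the committed crossings into a solution of $G$; hence $G$ is a YES-instance of \probshort{} iff some $G_i$ is. The running time is the number of branches, $2^{O(\Lambda(d+r)k^2)}$, times a polynomial absorbing the calls to Lemmas~\ref{lem:dualcycles}, \ref{lem:dualaltcut}, \ref{lem:largering} and the cohomology-feasibility subroutine they invoke.

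The hard part is twofold. First, Case~B has to be made rigorous: one must choose $N$ and the $\Lambda$ layers of $d$-isolating disc components on both sides of $\ringcomp$ so that they are pairwise disjoint, carry the alternating orientations required by the definitions of isolating components and of ring isolation, and genuinely leave the escaping terminal close to the new boundary --- all inside a region possibly already threaded by bundles from earlier steps, and keeping every workspace passed to the min--max lemmas free of the \emph{other} terminals (this is what forces the extra recursion and, ultimately, the $k^2$ rather than $k$ in the component count). Second, and the real technical heart, is the \probshort{}-preserving surgery along a cut curve that may cross \emph{unboundedly many} arcs: the scheme works only because a bounded-alternation curve organises those arcs into $O(r+d)$ bundles, and one must prove that branching over merely $O(r+d)$ bits per cut --- all the $2^{O(\Lambda(d+r)k^2)}$ budget allows --- already splits the instance faithfully, while simultaneously verifying that the accumulated alternation of every component never exceeds $O(\Lambda(d+r)k^2)$.
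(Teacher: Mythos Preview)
Your proposal has the right overall architecture---iterative refinement driven by the duality lemmas of Section~\ref{sec:spiral}---but it misses or misdescribes two load-bearing ingredients of the actual proof.

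First, your Case~B does not work as stated when $\anycomp$ contains several terminals. The cycles $C_1,\dots,C_{r'}$ returned by Lemma~\ref{lem:dualcycles} around $t$ may well have \emph{other} terminals of $\anycomp$ sandwiched between them, so Lemma~\ref{lem:largering} (which requires no terminal between its input cycles) is not applicable, and any ring you extract would illegally contain a terminal. Your line ``first removing the finitely many other terminals of $\anycomp$ from the workspace by an auxiliary cut'' does not help: a terminal stuck between $C_j$ and $C_{j+1}$ cannot be connected to the outer face of $\anycomp$ by any curve of alternation $<j$. The paper handles all terminals of $\anycomp$ in a single sweep: it finds as many alternating cycles $C_1,\dots,C_a$ around $t$ as possible, defines $i(s)$ as the last index whose cycle encloses at most $s$ terminals, and creates ring components (with their $2\Lambda$ isolation layers, carved by repeated calls to Lemma~\ref{lem:largering}) only in the \emph{large gaps} $i(s)-i(s-1)\gg \Lambda(d+r)$---gaps that by definition contain no terminal. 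The single bounded-alternation curve $M$ from $t$ to the boundary (which exists precisely because $a$ was maximal) is then clipped to the complement of these gaps; since the clipped pieces together still cross only $O(\Lambda(d+r)\kappa)$ of the cycles $C_i$, their total alternation stays bounded. This $i(s)$ construction is the missing idea, and it is also why a single step can spawn up to $\kappa$ rings, giving the $O(\Lambda k^2)$ component count.

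Second, the branching and the graphs $G_i$ are misdescribed. The paper does \emph{not} ``split $G$ along the curve'' or turn crossed arcs into pendants; a decomposition is a partition of $V(G)$, and bundle arcs remain intact in $G_i$. The branching is purely local: a cut curve may pass through \emph{vertices}, and at each such vertex with $S_v=\{-1,+1\}$ one branches on which side to reroute the curve around and which side's in- or out-arcs to delete---at most four choices per vertex, hence at most $4^{\alt}$ per curve. The different $G_i$ differ from $G$ only by these deleted arcs (plus the harmless ``connectification'' sinks added to keep components weakly connected). There is no guessing of how a hypothetical solution crosses bundles at this stage; that information is recovered only much later, in Section~\ref{s:word-guessing}.
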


\begin{proof}
We decompose the graph in an iterative manner.
At $i$-th iteration, we are given a graph $G_0$ with $k$ terminal pairs
and family $\mathcal{C}$ of pairwise disjoint disc components of $G_0$
that are to be partitioned further.
Moreover, we assume that each terminal in $G_0$ has degree one.
Each iteration decreases the total number of terminals in the components
of $\mathcal{C}$, thus the iteration ends after at most $2k$ steps.

In each iteration, we first filter out any component $H \in \mathcal{C}$
that does not contain any terminals. Such a component may be output
as a disc component in the final decomposition.
If $\mathcal{C}$ becomes empty, we finish the algorithm.

Otherwise, we pick any component $\anycomp_0 \in \mathcal{C}$ to decompose
it further (and remove it from $\mathcal{C}$).
By $\kappa$ denote the number of terminals contained in $\anycomp_0$.
If there is a terminal on the outer face of $\anycomp$, we create a new disc component that contains it, delete the terminal and its incident arc from $\anycomp_0$ and add back
$\anycomp_0$ to $\mathcal{C}$.
Each such step produces a new disc component, decreases $\kappa$ and increases $\alt(\anycomp_0)$ by a constant.
Moreover, it maintains the connectivity of $\anycomp_0$.

Thus, from this point we may assume that $\anycomp_0$ contains $\kappa > 0$ terminals, but none of these terminals lies in the outer face of $\anycomp_0$.
Let us pick one terminal $t \in T \cap V(\anycomp_0)$ and let $F_t$ be the face that contains $t$. Let $\gamma_t$ be an arbitrary circle with centre $t$ and with sufficiently
small radius such that $\gamma_t \cap G_0$ consists of a single point which is an intersection of $\gamma_t$ with the arc incident to $t$.
Invoke Lemma \ref{lem:dualcycles} on the graph $G_0 \setminus \{t\}$, curves $\gamma_t$ and a border noose
$N=N(f_0, \anycomp_0)$, where $f_0$ is the outer face of $\anycomp_0$.
Find the largest even $a$ for which we obtain a sequence $C_1,C_2, \ldots, C_a$
of alternating concentric cycles in $\anycomp_0$. Moreover, invoke
Lemma \ref{lem:dualcycles} for $a+2$ and obtain a simple curve $M$ connecting $\gamma_t$ and $N$ with at most $a+2$
alternations.

Assume $C_1$ is the innermost and $C_a$ is the outermost
of the constructed cycles. For $s=1,2,\ldots,\kappa$
let $i(s)$ be the largest integer $0 \leq i(s) \leq a$ such that
there are at most $s$ terminals contained in the area
enclosed by the cycle $C_{i(s)}$ (note that no terminal lies on any of the cycles $C_1,C_2, \ldots, C_a$, as terminals are of degree one in $G_0$). Note that $i(\kappa) = a$; we
denote $i(0) = 0$.

Let $q = 2r+1$ and let $S^{\textrm{big}} \subseteq \{1,2,\ldots,\kappa\}$ be the set of such
integers that $s \in S^{\textrm{big}}$ whenever $i(s) - i(s-1) \geq 2\Lambda d + 2(\Lambda + 1)q + 2$.
For each $s \in S^{\textrm{big}}$, apply Lemma \ref{lem:largering} to the following
$2(\Lambda + 1)$ sets of $q+2=2r+3$ concentric cycles:
\begin{itemize}
\item $C_{i(s-1)+(\Lambda-\lambda)(d+q) + 1}, C_{i(s-1)+(\Lambda-\lambda)(d+q)+2}, \ldots, C_{i(s-1)+(\Lambda-\lambda)(d+q) + q+2}$ for $0 \leq \lambda \leq \Lambda$;
\item $C_{i(s)-(\Lambda-\lambda)(d+q)-q-1}, C_{i(s)-(\Lambda-\lambda)(d+q)-q}, \ldots, C_{i(s)-(\Lambda-\lambda)(d+q)}$ for $0 \leq \lambda \leq \Lambda$.
\end{itemize}
If any of the applications of Lemma \ref{lem:largering} returns
  a sequence of $r$ concentric cycles, we return it as an outcome of the algorithm.
Otherwise, we obtain $2\Lambda+2$ circular cuts $N_s^{IN,\lambda}$ and $N_s^{OUT,\lambda}$, $0 \leq \lambda \leq \Lambda$:
the cut $N_s^{IN,\lambda}$ separates
$C_{i(s-1)+(\Lambda-\lambda)(d+q) + 1}$ from $C_{i(s-1)+(\Lambda-\lambda)(d+q) + q+2}$
and the cut $N_s^{OUT,\lambda}$ separates
$C_{i(s)-(\Lambda-\lambda)(d+q)-q-1}$ from $C_{i(s)-(\Lambda-\lambda)(d+q)}$.
As $i(s) - i(s-1) \geq 2\Lambda d + 2(\Lambda + 1)q + 2$, we have that
$$i(s-1) + \Lambda(d+q) + q+1 \leq i(s) - \Lambda (d+q) -q -1$$
and the cycle $C_{i(s-1)+\Lambda(d+q)+q+2}$ lies between $N_s^{IN,0}$ and $N_s^{OUT,0}$.

Note that, for any $1 \leq \lambda \leq \Lambda$:
\begin{itemize}
\item the $d$ cycles $C_{i(s-1)+(\Lambda-\lambda)(d+q)+q+2}, C_{i(s-1)+(\Lambda-\lambda)(d+q)+q+3}, \ldots, C_{i(s-1)+(\Lambda-\lambda+1)(d+q)+1}$ are contained between $N_s^{IN,\lambda}$ and $N_s^{IN,\lambda-1}$,
\item the $d$ cycles $C_{i(s)-(\Lambda-\lambda-1)(d+q)}, C_{i(s)-(\Lambda-\lambda-1)(d+q)+1}, \ldots, C_{i(s)-(\Lambda-\lambda)(d+q)-q-1}$ are contained between $N_s^{OUT,\lambda-1}$ and $N_s^{OUT,\lambda}$.
\end{itemize}

As the simple curve $M$ connects $\gamma_t$ with $N$, it crosses all cycles $C_i$
as well as all circular cuts $N_s^{IN,\lambda}$, $N_s^{OUT,\lambda}$ for $s \in S^{\textrm{big}}$,
$1 \leq \lambda \leq \Lambda$. By slightly perturbing $M$, we may assume that
$M$ crosses these circular cuts either in a vertex of $G$ or inside a face of $G$,
  never on an edge of $G$.
For each $s \in S^{\textrm{big}}$, remove from
$M$ the subcurve between the intersection with $N_s^{IN,0}$ that is closest to $\gamma_t$
on $M$ and the intersection with $N_s^{OUT,0}$ that is closest to $N$ on $M$;
if the intersection happens in the vertex of $G$, we remove a slightly shorter
part of $M$, so that the remaining curve still traverses the intersection vertex
and ends in a face of $G$.
Denote by $\mathcal{M}$ the set of remaining subcurves of $M$ (note that
the are at most $|S^{\textrm{big}}|+1 \leq \kappa+1$ of them).

We now note that all curves in $\mathcal{M}$ intersect $O(\Lambda(d+r)\kappa)$ alternating
cycles from the sequence $C_1,C_2,\ldots, C_a$, as they intersect
$O(\Lambda(d+r))$ cycles between $C_{i(s-1)+1}$ and $C_{i(s)}$ for each $1 \leq s \leq \kappa$.
Recall that the alternation of $M$ is at most $a+2$, while $M$ intersects
all cycles $C_1,C_2,\ldots, C_a$. Therefore the sum of alternations of all curves
in $\mathcal{M}$ is $O(\Lambda(d+r)\kappa)$.

%\footnote{MP: Do we want to prove it more formally? The claim is obvious up to lower order terms. But, as $|\mathcal{M}|$ is bounded, we may even afford to lose something on alternation on each curve.}
%MC: I would add a few sentences to justify the resoning above in the journal version.

We are now going to cut the graph along the circular cuts $N_s^{IN,\lambda}$ and $N_s^{OUT,\lambda}$ for $s \in S^\textrm{big}$, $1 \leq \lambda \leq \Lambda$, as well as along the curves of $\mathcal{M}$.
However, both circular cuts and the curves in $\mathcal{M}$ may traverse a vertex, whereas
we want to cut arcs only.
To cope with this, we introduce the following bounded search tree strategy.

Recall that any curve of $\mathcal{M}$ as well as any cut $N_s^{IN,\lambda}$, $N_s^{OUT,\lambda}$ is
a simple curve. 
Therefore we can apply Observation~\ref{obs:circumventing} to each of
those curves, making them pretty.
Let $N_0$ be any of these curves, and let $v$ be a vertex on $N_0$.
By the definition of a pretty curve (Definition~\ref{def:pretty}), the set $S_v \in \snoose(N_0)$ that corresponds
to $v$ equals $\{+1,-1\}$ or $\emptyset$.

For a given curve $N_0$, we first resolve all vertices with $S_v = \{+1,-1\}$.
For each such vertex, we branch into four cases. We choose one side of
the curve $N_0$ (left or right) around the vertex $v$ and one type of arcs incident to $v$
(incoming or outgoing arcs). In each branch, we delete from the graph $G_0$ the arcs
of the chosen type from the chosen side of the curve
and perturb $N_0$ slightly to omit the vertex $v$ from the
chosen side. In this way, we do not increase $\alt(N_0)$, as
in $\snoose(N_0)$ we exchange the $S_v = \{+1,-1\}$ 
for a sequence (of arbitrary length) of equal one-element sets, corresponding
to the remaining arcs from the chosen side of $N_0$. 
Moreover, for any solution to the \probshort{} problem,
there exists a choice where we do not delete any arcs of the solution,
as the vertex $v$ may lie only on one path of the solution, and this path
cannot enter and leave the vertex $v$ from both sides of $N_0$.

Note that the number of these vertices is at most $\alt(N_0)$, thus we create 
at most $4^{\alt(N_0)}$ branches.

Once we have resolved all vertices with $S_v = \{+1,-1\}$, we move to the second
case. For a vertex $v$ we may have $S_v = \emptyset$ in two situations.
First, $N_0$ may visit the vertex $v$, but 
leaves all incident edges on one of its sides. 
However this would contradict the assumption that $N_0$ is non-degenerate.
%In this situation, we may simply move $N_0$ a bit, in order not to visit $v$.

Second, $v$ may be a source or a sink. We consider here two options: either we modify
the curve $N_0$ to omit the vertex $v$ from the left or from the right.
Let us investigate how such a change will influence $\snoose(N_0)$ and $\alt(N_0)$.
We replace $S_v = \emptyset$ with a sequence (of arbitrary length) of equal one-element
sets corresponding to the traversed arcs of $G$; such a situation may increase $\alt(N_0)$
by one. However, we have a choice of whether we insert a sequence of sets $\{+1\}$
or $\{-1\}$: if we omit the vertex $v$ from the left, we cross the arcs incident
to it in a different orientation than if we omit the vertex $v$ from the right.

If in $\snoose(N_0)$ there exists a vertex $v$ with $S_v = \emptyset$, but one neighbouring
set $S$ with $|S|=1$, we may modify $N_0$ around $v$ so that we replace $S_v$
with a sequence of sets equal to $S$. It may be easily seen that if we apply this operation to all the sets $S_v$ equal to $\emptyset$, then the alternation of $N_0$ does not increase in case $N_0$ is a non-closed curve (from $\mathcal{M}$) or increases by at most $1$ in case $N_0$ is a noose (of form $N_s^{IN,\lambda}$ or $N_s^{OUT,\lambda}$). As we have removed all two-element sets from $\snoose(N_0)$, we may not perform
the above operation only if $\snoose(N_0)$ consists only of empty sets. In this case,
we modify $N_0$ around each vertex, so that $\snoose(N_0)$ is a sequence of sets $\{+1\}$;
note that the alternation of $N_0$ changes from $0$ to $1$ in this case.

Let us now summarize. Recall that:
\begin{itemize}
\item $|\mathcal{M}| \leq \kappa+1$;
\item there are at most $2(\Lambda+1)\kappa$ curves
  $N_s^{IN,\lambda}$, $N_s^{OUT,\lambda}$, $s \in S^{\textrm{big}}$, $0 \leq \lambda \leq \Lambda$;
\item the sum of alternations of all curves of $\mathcal{M}$ is $O(\Lambda(d+r)\kappa)$;
\item each curve $N_s^{IN,\lambda}$ and $N_s^{OUT,\lambda}$ has alternation at most $2r+8$ (by Lemma \ref{lem:largering}).
\end{itemize}
Therefore the aforementioned procedure generates
$4^{O(\Lambda(d+r)\kappa)}$ subcases, and increases alternation
of each curve by at most one.

Before we describe the outcome of the partitioning of $\anycomp_0$,
let us define the notion of {\em{connectifying}} a component.
Assume we are given an open connected subset $A$ of the plane, isomorphic to a disc or to a ring,
such that no vertex of $G_0$ lies on the border of $A$ (i.e., not in $A$, but in the closure of $A$).
For each of the (one or two) borders of $A$ that are homeomorphic to a circle, travel along the border
and, for each arc that it intersects, subdivide it, inserting a new vertex inside $A$. For each two consecutive
newly added vertices, connect them with a length-two path inside $A$, where the middle vertex of the path is a sink (i.e.,
both arcs from the path point from the subdivided arcs of $G_0$ towards the vertex in the middle).
As $G_0$ is weakly connected, after this operation, the subgraph of $G_0$ consisting of all arcs and vertices completely contained in $A$
is weakly connected, whereas the answer to \probshort{} on $G_0$ does not change, as the added arcs
are useless from the point of view of constructing directed paths.

We are now ready to partition the graph into the following components.
\begin{itemize}
\item We create a disc component containing the terminal $t$ only.
\item For each $s \in S^\textrm{big}$, we create the following $2\Lambda+1$ components.
  \begin{itemize}
  \item A ring component $\ringcomp_s$ that is a connectification
  of a subgraph consisting all vertices and edges
  of the graph $G_0$ contained between the noose $N_s^{IN,0}$ and the noose $N_s^{OUT,0}$.
  The face of $\ringcomp_s$ that contains $N_s^{IN,0}$ is the inner face of $\ringcomp_s$,
  and the face that contains $N_s^{OUT,0}$ is the outer face.
  Note that, as $\ringcomp_s$ contains the cycle $C_{i(s-1)+\Lambda(d+q)+q+2}$,
  these faces are distinct.
  \item For any $1 \leq \lambda \leq \Lambda$, 
  a disc component $\disccomp_{s,IN,\lambda}$ that is a connectification
  of a subgraph of $G_0$ enclosed in the area
  with its border being a concatenation of the following four
  curves: a minimal segment of a curve of $\mathcal{M}$ connecting $N_s^{IN,\lambda-1}$ with $N_s^{IN,\lambda}$,
  the curve $N_s^{IN,\lambda}$, the same minimal segment of a curve of $\mathcal{M}$,
  but now traversed backwards, and the curve $N_s^{IN,\lambda-1}$.
  Note that, as the chosen subset of the plane is homeomorphic to a disc, 
  the $\disccomp_{s,IN,\lambda}$ is in fact a disc component.
  \item For any $1 \leq \lambda \leq \Lambda$,
  a disc component $\disccomp_{s,OUT,\lambda}$
  created in the same manner as $\disccomp_{s,IN,\lambda}$,
  but between curves $N_s^{OUT,\lambda-1}$ and $N_s^{OUT,\lambda}$.
  \end{itemize}
  We note that the $d$ alternating cycles $C_{i(s-1)+(\Lambda-\lambda)(d+q)+q+2}, \ldots, C_{i(s-1)+(\Lambda-\lambda+1)(d+q)+1}$ are 
  contained in the subgraph of $G$ induced by the vertices of $\disccomp_{s,IN,\lambda}$,
  fulfilling all the requirements to make $\disccomp_{s,IN,\lambda}$ a $d$-isolating component.
  Similarly, the components $\disccomp_{s,OUT,\lambda}$ are $d$-isolating as well.
  Therefore $\ringcomp_s$ has $(\Lambda,d)$-isolation, as required.
\item For each $s \in S^\textrm{big} \cup \{\kappa+1\}$, we insert into
  $\mathcal{C}$ the connectification of a subgraph enclosed by the following noose.
  We denote $N_0^{OUT,\Lambda} = \gamma_t$ and $N_{\kappa+1}^{IN,\Lambda} = N$. We concatenate
  a minimal segment of the curve of $\mathcal{M}$ connecting $N_{{\rm pred}(s)}^{OUT,\Lambda}$ and
  $N_s^{IN,\Lambda}$, the curve $N_s^{IN,\Lambda}$, again the same minimal segment but traversed backwards,
  and the curve $N_{{\rm pred}(s)}^{OUT,\Lambda}$, where ${\rm pred}(s)$ is
  the maximum element of $S^\textrm{big}$ smaller than $s$, or $0$ if
  it does not exist.
  Note that, as in the case of previous two components, the aforementioned area
  in the plane is homeomorphic to a disc, thus we insert into $\mathcal{C}$
  disc components only.
\end{itemize}
We note that at each step, we invoke the connectification algorithm
a few times, but in pairwise disjoint subsets of the plane. Therefore,
the enhanced graph $G_0$ remains plane.

Note that, after this step, the total number of terminals in $\mathcal{C}$
decreased by one, as the terminal $t$ is put in its own disc component
and each other terminal is put into exactly one new recursive call.
Therefore, the number of iterations is at most $2k$.
Moreover, as the step creates at most $\kappa$ ring components
and $2\Lambda\kappa$ disc components that are not components with a terminal,
we obtain at most $O(\Lambda k^2)$ components in total
and at most $2^{O(\Lambda(d+r)k^2)}$ subcases.

Let us now bound the alternations of the constructed components.
Each constructed ring component $\ringcomp_s$
has two borders $N_s^{IN,0}$ and $N_s^{OUT,0}$ of alternation at most $2r+8$, with possible slight modifications due to the branching procedure that can add $1$ to the alternation,
 thus its alternation is at most $2r+9$.
Similarly, each isolating component $\disccomp_{s,IN,\lambda}$ and $\disccomp_{s,OUT,\lambda}$ is surrounded
by two circular cuts of alternation $2r+9$ each and two subcurves of a
curve of $\mathcal{M}$, again with possible slight modifications due to the branching procedure that can add $1$ to the alternation. Recall that the sum of alternations of all curves
in $\mathcal{M}$ is $O(\Lambda(d+r)\kappa)$, thus each isolating disc component
has alternation $O(\Lambda(d+r)k)$.
Moreover, $\gamma_t$ has alternation $1$.

At each iteration, we put into $\mathcal{C}$
components surrounded by two copies of a subcurve of a curve from $\mathcal{M}$
and two nooses being either circular cuts $N_s^{IN,\Lambda}$, $N_s^{OUT,\Lambda}$, the curve $\gamma_t$ or $N$.
An operation of cutting away a disc component with a terminal may increase
the alternation of a component by at most one per terminal, thus $2k$ in total.
We infer that at iteration $i$, any component in $\mathcal{C}$ has alternation
bounded by $O(i(\Lambda(d+r)k))$.
  Therefore any computed
  disc component has alternation $O(\Lambda(d+r)k^2)$.

This concludes the proof of the decomposition theorem.
\end{proof}

\section{Bundles and bundle words}
\label{sec:bundles-and-bundle-words}

Our decomposition theorem, Theorem \ref{th:finddecomp},
provides us either with a situation where an irrelevant vertex
can be found, or with a bounded number of subcases, each of the subcase
being a modified graph $G$ with a decomposition $\decomp$
of bounded number of components and with bounded alternation.
In this section we focus on solving one fixed subcase, 
that is, we focus on a single graph $G$ with a decomposition $\decomp$.

\subsection{Bundle arcs, bundles and component multigraph}

Consider a graph $G$ with its decomposition $\decomp$.
For any $v \in V(G)$, let $\anycomp(v)$ be the component of $\decomp$
that contains $v$.
Let $\barcs(G,\decomp)$ be the set of arcs of $G$ that are not contained
in any component of $\decomp$, that is, $\barcs(G,\decomp) = E(G) \setminus \bigcup_{\anycomp \in \decomp} E(\anycomp)$. Any element of $\barcs(G,\decomp)$ is called
a {\em{bundle arc}}.

The set of bundle arcs form a structure of a multigraph on the set of
components $\decomp$; we call this multigraph a {\em{component multigraph}}.
If the number of components and the alternation of $\decomp$ is bounded,
the set of arcs of the component multigraph can be decomposed into a bounded number of {\em{bundles}} of arcs that go parallely and in the same direction.
We now formalize these notions.

\begin{defin}[component multigraph]
Let $G$ be a plane graph and let $\decomp$ be its decomposition.
The {\em{component multigraph}} of $G$ and $\decomp$, denoted
$\compgraph(G,\decomp)$, is a multigraph with vertex set $\decomp$
and arcs set $\{(\anycomp(u),\anycomp(v)): (u,v) \in \barcs(G,\decomp)\}$.
\end{defin}
Note that the component multigraph is planar, and the embedding
of $G$ naturally imposes a non-standard embedding of $\compgraph(G,\decomp)$, where each
disc component is contracted into a single point and each ring component is contracted into a closed curve, separating
the part of the graph inside the ring component from the part outside it.

We sometimes abuse the notation and
identify the bundle arc being an arc in the multigraph with the vertex set $\decomp$
with the corresponding arc in $G$.

We now define a notion of a {\em{bundle}}, that gathers together bundle arcs that ``serve the same role''.
\begin{defin}[bundle]
Let $G$ be a graph and let $\decomp$ be its decomposition.
A sequence $B=(b_1,b_2,\ldots,b_s)$ of bundle arcs is called a {\em{bundle}} if
\begin{itemize}
  \item there exist two components $\anycomp_1,\anycomp_2$, such that each bundle arc $b_i$ leads from $\anycomp_1$ to $\anycomp_2$ (possibly $\anycomp_1=\anycomp_2$);
  \item for any $1 < i \leq s$,
  in the graph $\anycomp_1 \cup \anycomp_2 \cup \{b_{i-1}, b_i\}$
  the unique face to the left of $b_{i-1}$ and to the right of $b_i$
  is empty, that is, does not contain any point of the embedding of the graph $G$.
\end{itemize}
\end{defin}

Note that there may exist other arcs from $H_1$ to $H_2$ in $G$, that do not belong to $B$.

Consider faces of $G$. We say that a face is a {\emph{component face}} if it belongs to some component $\anycomp$; otherwise it is called a {\emph{mortar face}}. A mortar face $F$ can be either a {\emph{bundle face}} if all the bundle arcs on the boundary of $F$ belong to the same bundle (in which case there is at most $2$ of them), or an {\emph{end-face}} otherwise.

\begin{lemma}[bundle in the dual]\label{lem:bundle-dual}
Let $G$ be a graph, $\decomp$ be its decomposition and $B=(b_1,b_2,\ldots,b_s)$ be a bundle.
Then $B$ is a directed path or a directed cycle in the dual of $G$.
\end{lemma}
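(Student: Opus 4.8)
The plan is to analyze what the dual of $G$ looks like along the bundle $B=(b_1,\ldots,b_s)$, using the defining property that consecutive bundle arcs $b_{i-1},b_i$ bound an empty face of $G$ between them (empty meaning: no vertex, arc, or other face of $G$ lies strictly between them, on the side left of $b_{i-1}$ and right of $b_i$). First I would recall that the dual $G^\ast$ has one vertex per face of $G$ and one arc per arc of $G$: the dual arc $b_i^\ast$ of a bundle arc $b_i$ crosses $b_i$ and connects the two faces of $G$ incident to $b_i$ — say $f_i^L$ on the left of $b_i$ and $f_i^R$ on the right of $b_i$ — oriented consistently (e.g. from $f_i^L$ to $f_i^R$, rotating $90^\circ$). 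So the claim "$B$ is a directed path or directed cycle in $G^\ast$" amounts to showing that for every $1<i\le s$, the face to the right of $b_{i-1}$ equals the face to the left of $b_i$, i.e. $f_{i-1}^R = f_i^L$; then the dual arcs $b_1^\ast,\ldots,b_s^\ast$ chain head-to-tail, and the sequence either stays distinct (a path) or eventually closes up (a cycle).

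The key step is: the empty face between $b_{i-1}$ and $b_i$, call it $F_i$, is simultaneously incident to $b_{i-1}$ (on the left side of $b_{i-1}$) and to $b_i$ (on the right side of $b_i$). Here is where I would use the definition carefully. The bundle arcs $b_{i-1}$ and $b_i$ both go from $\anycomp_1$ to $\anycomp_2$, and the definition says that in the graph $\anycomp_1\cup\anycomp_2\cup\{b_{i-1},b_i\}$ the unique face lying to the left of $b_{i-1}$ and to the right of $b_i$ contains no point of $G$. Since that face is empty in $G$ as well (no point of $G$ inside it), it is a genuine face $F_i$ of $G$ bounded — among its boundary — by $b_{i-1}$ and $b_i$ (plus possibly pieces of $\anycomp_1$ and $\anycomp_2$ near the endpoints, but no other arcs of $G$). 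Thus $F_i$ is the face on the left of $b_{i-1}$, so $F_i = f_{i-1}^L$ or $f_{i-1}^R$ depending on orientation convention — I will fix the convention so that, traversing $b_{i-1}$ from $\anycomp_1$ to $\anycomp_2$, the face $F_i$ (lying "to the left" in the sense of the definition) is exactly $f_{i-1}^R$ in my dual-arc labeling; likewise $F_i$ is to the right of $b_i$, hence $F_i = f_i^L$. Therefore $f_{i-1}^R = F_i = f_i^L$, which is exactly the chaining condition. (The one bookkeeping point is to make sure "left of $b_{i-1}$ / right of $b_i$" in the bundle definition and "left/right" in the dual-arc orientation are matched up, possibly by replacing $G^\ast$'s arc orientation with its reverse; this does not affect whether we get a path or cycle.)

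Having established $f_{i-1}^R = f_i^L$ for all $i$, I would finish by observing that $b_1^\ast, b_2^\ast, \ldots, b_s^\ast$ is a walk in $G^\ast$ in which each arc's head is the next arc's tail. A walk in a graph in which, in addition, no arc is repeated (which holds here since the $b_i$ are distinct arcs of $G$, hence the $b_i^\ast$ are distinct arcs of $G^\ast$) decomposes as follows: if all the faces $f_1^L = f_1^R{}'\ldots$ visited are distinct it is a directed path; if a vertex (face) repeats, then since we also cannot repeat an arc and each visited face has in-degree and out-degree exactly $1$ along the walk, the walk must be a single directed closed cycle traversing a set of distinct faces. So $B$ is either a directed path or a directed cycle in the dual of $G$, as claimed.

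The main obstacle I anticipate is purely a matter of orientation/"left–right" bookkeeping: the bundle definition phrases emptiness in terms of "left of $b_{i-1}$ and right of $b_i$," and one must verify this is the consistent side so that the dual arcs actually concatenate rather than alternate direction. This is where the hypothesis that \emph{all} $b_i$ go from $\anycomp_1$ to $\anycomp_2$ (same direction) is essential — if directions alternated, the dual arcs would not chain. I would handle this by fixing, once and for all, the local picture near a single bundle arc (its tail on $\partial\anycomp_1$, head on $\partial\anycomp_2$, the two incident faces), and then noting that the empty face $F_i$ sits in the "wedge" between $b_{i-1}$ and $b_i$ as one sweeps counterclockwise from $b_{i-1}$ to $b_i$ around $\anycomp_1$ (equivalently clockwise around $\anycomp_2$), which pins down unambiguously that $F_i$ is on a fixed side of each, giving the required equality of faces. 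Everything else is routine.
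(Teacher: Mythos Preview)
Your proposal is correct and follows essentially the same approach as the paper: identify the empty face $F_i$ between consecutive bundle arcs $b_{i-1},b_i$ as a genuine face of $G$, observe that this face is on the appropriate side of each so that the dual arcs $b_{i-1}^\ast,b_i^\ast$ chain head-to-tail, and conclude that the sequence forms a directed path or cycle. Your treatment is in fact more careful than the paper's (which is only a few lines) about the left/right orientation bookkeeping and about why the resulting walk must be a simple path or cycle rather than a self-intersecting walk; the paper simply asserts the chaining and notes that no intermediate face $f_i$ coincides with the start of $b_1^\ast$ or the end of $b_s^\ast$.
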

\begin{proof}
Let $B$ connect $\anycomp_1$ with $\anycomp_2$ in $\decomp$.
By the definition of a bundle, in the graph $\anycomp_1 \cup \anycomp_2 \cup B$
for any $1 < i \leq s$ there exists a face $f_i$ whose border
consists of $b_{i-1}$, $b_i$ and parts of $\anycomp_1$ and $\anycomp_2$;
moreover, $f_i$ lies to the left of $b_{i-1}$ and to the right of $b_i$.
Therefore, in the dual of $G$, for $1 < i \leq s$, $b_i$ is an arc between $f_{i-1}$
and $f_i$. Moreover, no face $f_i$ is an endpoint of $b_s$ nor a starting point of
$b_1$. Therefore $B$ is indeed a directed path or a cycle in the dual of $G$.
\end{proof}

We now show that, given a graph $G$ with a decomposition $\decomp$ with bounded
alternation, we can efficiently
partition the bundle arcs into a bounded number of bundles.
\begin{lemma}[bundle recognition]\label{lem:bundle-recognition}
Given a graph $G$ embedded in a plane, together with its decomposition $\decomp$ of alternation $\alt(\decomp)$, one can in polynomial time
partition the bundle arcs into a set $\mathcal{B}$ of {\em{bundles}}, such that each component $\anycomp \in \decomp$ with
is incident to $O(\alt(\decomp)|\decomp| + |\decomp|^2)$ bundles.
\end{lemma}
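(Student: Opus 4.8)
The plan is to partition the bundle arcs in two nested stages: first group arcs by the ordered pair of components they connect together with their direction, and then, within each such group, split the arcs into maximal runs of ``parallel'' arcs that form bundles in the sense of the definition. First I would observe that since $|\decomp|$ is bounded, there are at most $O(|\decomp|^2)$ ordered pairs $(\anycomp_1,\anycomp_2)$ of components, and each pair contributes at most two directions of bundle arcs; but the same pair may host many arcs that are \emph{not} all parallel (they may be separated by other components or by end-faces), so we cannot simply declare all arcs between a fixed pair to be one bundle. Instead, for a fixed pair $(\anycomp_1,\anycomp_2)$ and a fixed direction, consider the arcs $e_1,\dots,e_m$ from $\anycomp_1$ to $\anycomp_2$ in the cyclic order in which their tails appear around the outer face (or relevant face) of $\anycomp_1$. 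Two consecutive arcs $e_j,e_{j+1}$ in this order belong to the same bundle exactly when the face of $\anycomp_1\cup\anycomp_2\cup\{e_j,e_{j+1}\}$ between them is empty in $G$; this is a local condition that can be checked in polynomial time by inspecting the embedding. So I would walk around this cyclic list and cut it into maximal contiguous runs satisfying the emptiness condition; each run is a bundle by definition (the pairwise emptiness for all consecutive pairs in a run is exactly what is required).

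The quantitative claim — that each component is incident to $O(\alt(\decomp)|\decomp| + |\decomp|^2)$ bundles — is where the alternation bound does the real work. For a fixed component $\anycomp_1$, the bundles incident to $\anycomp_1$ are obtained by cutting, for each of the $O(|\decomp|)$ other components $\anycomp_2$ and each of the two directions, the cyclic list of arcs into runs. The number of runs equals the number of ``cut points'' plus (roughly) one per nonempty list. A cut point between $e_j$ and $e_{j+1}$ means the face between them contains some point of $G$ — either a vertex/arc of some third component $\anycomp_3$, or a mortar end-face, or an arc of a \emph{different} bundle. The key point is that each such cut point can be charged to a place where the cyclic sequence $\snoose$ (or the alternation-counting sequence) around the outer face of $\anycomp_1$ witnesses either (i) a ``return'' to the same neighbour component $\anycomp_2$ after leaving it, which — because the arcs to $\anycomp_2$ all have the same direction within one bundle but can alternate direction \emph{between} bundles — contributes to the alternation of the face of $\anycomp_1$; or (ii) the appearance of a different neighbour, of which there are $O(|\decomp|)$. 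More precisely, I would fix the cyclic order of all incident bundle arcs around $\anycomp_1$ and note that each maximal block of consecutive arcs all going to the same $\anycomp_2$ in the same direction lies within one bundle, so the number of bundles at $\anycomp_1$ is at most the number of such maximal monochromatic blocks. The number of blocks is bounded by the number of ``color changes'' in the cyclic word over the alphabet of (neighbour, direction) pairs; changes of neighbour number $O(|\decomp|)$ times each $\anycomp_2$ can reappear, but reappearances of the same $\anycomp_2$ with alternating or even repeated direction can be bounded by $O(\alt(\anycomp_1))$ using that a long run returning to $\anycomp_2$ contributes proportionally to $\alt(\anycomp_1)$ — together giving $O(\alt(\decomp)|\decomp|+|\decomp|^2)$.

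The main obstacle I anticipate is making the charging argument in the previous paragraph precise: one must carefully relate the number of times the cyclic sequence of incident arcs ``switches neighbour'' or ``comes back to a neighbour already seen'' to the alternation $\alt(\anycomp_1)$ of the outer (and, for ring components, inner) face of $\anycomp_1$. The subtlety is that alternation counts sign changes in the sequence $\snoose$, whereas here we want to count changes in a richer labelling (which neighbour, plus direction); the claim is essentially that between two arcs going to the same neighbour $\anycomp_2$ in the \emph{opposite} directions there must be an alternation contribution, and that a neighbour can be ``revisited with the same direction'' only by crossing some third component or end-face, of which the total count around $\anycomp_1$ is controlled combinatorially by $|\decomp|$ (each third component $\anycomp_3$ can separate the arcs to $\anycomp_2$ into at most a bounded number of intervals along the boundary of $\anycomp_1$, because $\anycomp_3$ occupies a connected region). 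I would handle this by a planarity / Jordan-curve argument: the boundary walk of $\anycomp_1$'s outer face, together with the embedding, gives a cyclic sequence, and each ``interruption'' of the run of arcs to $\anycomp_2$ is witnessed by the boundary of $\anycomp_1$ passing a point of $\incarcs{\anycomp_1}$ that either changes sign (alternation) or belongs to a different neighbour. Summing over all $\anycomp_2$ and both directions, and over at most two faces of $\anycomp_1$, yields the stated bound; the polynomial-time claim is immediate since every test performed (cyclic orders, emptiness of faces) is a direct inspection of the planar embedding of $G$.
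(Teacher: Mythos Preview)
Your approach is essentially the same as the paper's: fix a component $\anycomp_1$, for each neighbour $\anycomp_2$ list the arcs between them in the cyclic order around the relevant face of $\anycomp_1$, and cut this list at the ``bad'' gaps. The paper makes the charging you sketch precise by a clean three-case analysis of why two consecutive arcs $b_{i-1},b_i$ fail to form a bundle: (1) opposite directions, which happens at most $O(\alt(\decomp))$ times; (2) some third component sits in the region between them, at most $|\decomp|$ times since each component is connected; (3) a self-loop arc of $\anycomp_1$ or $\anycomp_2$ sits in between, again at most $O(\alt(\decomp))$ times since both endpoints of such a loop lie in that region. This gives $O(\alt(\decomp)+|\decomp|)$ bundles per pair $(\anycomp_1,\anycomp_2)$, hence the stated bound after summing over $\anycomp_2$. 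Your ``color-change'' formulation is equivalent but you never quite nail down why revisiting $\anycomp_2$ in the \emph{same} direction is bounded; the paper's case (2)/(3) split is the missing crispness.

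There is one genuine gap: you only treat arcs going to \emph{other} components and never handle bundle arcs that are self-loops at $\anycomp_1$ in the component multigraph (both endpoints in $\anycomp_1$). These do not fit your (neighbour, direction) colouring, since each such arc contributes two endpoints to the cyclic walk around the face, and the ``face between two consecutive loops is empty'' test has different topology. The paper treats this case separately: the self-loop arcs in a fixed face $f$ of $\anycomp_1$ partition $f$ into regions forming a tree $T_{\mathcal A}$; the number of leaves and of high-degree vertices of $T_{\mathcal A}$ is bounded by $\alt(\decomp)$, and the remaining degree-$2$ vertices where the two incident arcs cannot be bundled (wrong orientations, or a third component inside) are again $O(\alt(\decomp)+|\decomp|)$. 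You should add this case to complete the argument.
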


\begin{proof}
Consider first any two components $\anycomp_1$ and $\anycomp_2$, $\anycomp_1 \neq \anycomp_2$.
We are to partition the bundle arcs with one endpoint in $\anycomp_1$ and second endpoint in $\anycomp_2$ into $O(\alt(\decomp)+|\decomp|)$ bundles.
If there is a constant number of them, the task is trivial, so let us assume there are at least three bundle arcs between these two components.

Note that, if any of the components $\anycomp_1,\anycomp_2$ is a ring component, the bundle arcs between $\anycomp_1$ and $\anycomp_2$ lie either in the inner face of 
the ring component, or in the outer face. Moreover, due the assumption of isolation
it may not happen that both $\anycomp_1$ and $\anycomp_2$ are ring components.
Let $b_1,b_2,\ldots,b_m$ be the bundle arcs with one endpoint in $\anycomp_1$ and the second endpoint in $\anycomp_2$,
 in the order of their appearance on a walk around the face of $\anycomp_1$ that contains $\anycomp_2$ (i.e., the outer face
    if $\anycomp_1$ is a disc component, and the outer or the inner face if $\anycomp_1$ is a ring component).
By planarity, this is also a reversed order of they appearance in a walk around the face of $\anycomp_2$
that contains $\anycomp_1$.

Denote $b_0=b_m$ and let $A_i$, $1 \leq i \leq m$, be the unique face
of the graph $\anycomp_1 \cup \anycomp_2 \cup \{b_{i-1}, b_i\}$ that does not contain
any arc $b_j$ for $j \not\in \{i-1,i\}$.
Note that $b_{i-1}$ and $b_i$ may not be grouped together in the same bundle for the following reasons.
\begin{enumerate}
\item $b_{i-1}$ and $b_i$ go in different directions, that is, one of this bundle arcs leads from $\anycomp_1$ to $\anycomp_2$ and the second one from $\anycomp_2$ to $\anycomp_1$.
However, this may happen at most $O(\alt(\decomp))$ times.
\item There is a component inside $A_i$; as the areas of the components are disjoint and disjoint with bundle arcs, this may happen at most $|\decomp|$ times.
\item There is a bundle arc that is a self-loop of $\anycomp_1$ or $\anycomp_2$ in the component multigraph contained in $A_i$. However, as both its endpoints are contained in $A_i$,
  this happens at most $O(\alt(\decomp))$ times.
\end{enumerate}
If none of the aforementioned events happens between the arcs $b_i, b_{i+1}, b_{i+2}, \ldots, b_j$, then all these arcs form a bundle (possibly in the reversed order).
As there are $O(\alt(\decomp)+|\decomp|)$ aforementioned
events, the claim is proven.

We are left with the task of partitioning self-loops (in the component multigraph) of an arbitrarily chosen component $\anycomp$ into $O(\alt(\decomp)+|\decomp|)$ bundles.
The arguments are similar to the previous case, but the topology of the situation is a bit different.

We focus on bundle arcs that lie in a single face $f$ of $\anycomp$; there is
one such face in the case of a disc component and two such faces in the case
of a ring component. Denote the set of bundle arcs in question by $\mathcal{B}$.
The bundle arcs of $\mathcal{B}$ partition $f$ into a number of areas; let us denote the set of this areas as $\mathcal{A}$.
Let us define a (natural) structure of a tree on the set $\mathcal{A}$: two areas are adjacent in the tree $T_\mathcal{A}$ if they share a bundle arc of $\mathcal{B}$
as a part of their borders.
Note that the edges of $T_\mathcal{A}$ are in a bijection with the set $\mathcal{B}$.

First, note that the edge of $T_\mathcal{A}$ incident to a leaf in the tree $T_\mathcal{A}$ corresponds to a bundle arc $b \in \mathcal{B}$ whose both endpoints are subsequent
intersections of $\mathcal{B}$ with a border noose of $\anycomp$.
Therefore, the number of leaves of $T_\mathcal{A}$ is bounded by
$\alt(\anycomp) \leq \alt(\decomp)$.
Consequently, the same bound holds for the number of vertices of $T_\mathcal{A}$ of degree at least three.

Second, as in the case of bundle arcs between two different components, there are:
\begin{enumerate}
\item at most $|\decomp|$ areas $A \in \mathcal{A}$ that contain a component inside;
\item $O(\alt(\decomp))$ areas $A \in \mathcal{A}$ that are of degree two in $T_\mathcal{A}$, but the two bundle arcs on the border of $A$
do not form a bundle, as they are both oriented clockwise or both oriented counter-clockwise around the area $A$;
\end{enumerate}
These $O(\alt(\decomp)+|\decomp|)$ vertices of $T_\mathcal{A}$, together with at most $\alt(\decomp)$ vertices of degree at least $3$, partition the set edges of $T_\mathcal{A}$
into a set of $O(\alt(\decomp)+|\decomp|)$ paths, and the edges on each path form a bundle. This concludes the proof of the lemma.
\end{proof}

The ability to partition the bundle arcs of a given decomposition
into bundles motivates the following definition.
\begin{defin}[bundle multigraph]
For a plane graph $G$, its decomposition $\decomp$
and partition $\bundleset$ of bundle arcs into bundles,
we define a {\em{bundle multigraph}} $\bundlegraph(G,\decomp,\bundleset)$
as a multigraph obtained from $\compgraph(G,\decomp)$ by identifying
bundle arcs of one bundle in $\bundleset$.
In other words, $\bundlegraph(G,\decomp,\bundleset)$ is a multigraph
with vertex set $\decomp$ and edge set $\bundleset$.
\end{defin}
Note that $\bundlegraph(G,\decomp,\bundleset)$ is planar
and the non-standard embedding of $\compgraph(G,\decomp)$ naturally imposes
a non-standard embedding of $\bundlegraph(G,\decomp,\bundleset)$
(i.e., where each disc component is represented by a point, and each ring component
 by a closed curve).

\begin{defin}[bundled instance]
A {\em{bundled instance}} is a tuple consisting of a \probshort{} input $G$,
  its decomposition $\decomp$ of positive isolation
  and partition into bundles $\bundleset$,
  equipped with an embedding of $G$ into a plane together with corresponding
  embeddings of $\compgraph(G,\decomp)$ and $\bundlegraph(G,\decomp,\bundleset)$.
\end{defin}
We somewhat abuse the notation and denote a bundled instance
as $(G,\decomp,\bundleset)$, making the embeddings implicit.

\subsection{Bundle words and spirals}

The following notion is crucial for the rest of this section.
\begin{defin}[bundle word]
Let $(G,\decomp,\bundleset)$ be a bundled instance.
Then, for a directed walk $P$ in $G$ we define a word $\bundleword(P)$
over the alphabet $\bundleset$, called the {\em{bundle word of $P$}},
as follows: we follow $P$ and for each bundle arc $e$
visited by $P$ we append the bundle $B \in \bundleset$ that contains $e$
to $\bundleword(P)$.
\end{defin}
Intuitively, $\bundleword(P)$ describes precisely how does the path $P$ circulate
around the component multigraph $\compgraph(G,\decomp)$. Note that it does not distinguish
different bundle arcs inside one bundle.

Assume for a moment that $\decomp$ does not contain any ring components.
Consider a solution $(P_i)_{i=1}^k$ to \probshort{} in $G$.
The core observation is that if we know $\bundleword(P_i)$ for each $1 \leq i \leq k$,
then the \probshort{} problem can be solved using the cohomology algorithm
of Schrijver \cite{schrijver:xp}; a generalization of this statement, including some technical difficulties around ring components,
   is proven in Section \ref{s:words-to-paths}.
The goal of Section \ref{s:word-guessing} is to show that there is only a bounded number
of {\em{reasonable}} choices for $\bundleword(P_i)$; in other words (but still ignoring the ring components) we show
how to construct an $f(k)$-sized family of sequences $(p_i)_{i=1}^k$
such that if $G$ is a yes instance to \probshort{}, then there exists
a solution $(P_i)_{i=1}^k$ and a sequence $(p_i)_{i=1}^k$ such that
$\bundleword(P_i) = p_i$ for each $1 \leq i \leq k$.
In this section we prepare some preliminary results for
both claims.

We start with an insight into properties of bundle words
$(\bundleword(P_i))_{i=1}^k$ for a solution to \probshort{}.
\begin{defin}[spiral in a bundle word]
Let $P$ be a directed path in a bundled instance $(G,\decomp,\bundleset)$.
A subword $w$ of $\bundleword(P)$ is called a {\em{spiral}}
if it starts and ends with the same bundle $B$, but all other letters in $w$
are pairwise distinct and different than $B$.
\end{defin}
In other words, a {\em{spiral}} corresponds to one `turn' of a path, between two arcs in the same bundle.

Note that a spiral $w$ in $\bundleword(P)$ corresponds to a closed walk
without self-intersections in $\bundlegraph(G,\decomp,\bundleset)$.
Moreover, no arc incident to a terminal may be a part of a spiral,
as terminals are of degree one in $G$, and each terminal is in a single-vertex disc component of $\decomp$.

We now make a key observation about the spirals.

\begin{defin}[spiral cut]
Let $P$ be a directed path in a bundled instance
$(G,\decomp,\bundleset)$ and let $w$ be a spiral on $P$.
Assume that the first and the last symbol of $w$ is
$B$ and let $b^1, b^2 \in B$ be two arcs on $P$
that correspond to these two occurrences of $B$ in $w$.
Then a {\em{spiral cut}} associated with the spiral $w$
of the path $P$ is a closed curve $\gamma$ that consists
of a subset of the drawing of $P$ between a midpoint of $b^1$ and
a midpoint of $b^2$, and a drawing of a directed
path in the dual of $G$ that connects $b^1$ and $b^2$
and uses only the arcs of $B$ (an existence of such path is guaranteed
by Lemma \ref{lem:bundle-dual}).
\end{defin}

We would like to note that a spiral cut is degenerate (see Definition~\ref{def:non-degenerate}),
as it goes along an arc.

\begin{lemma}\label{lem:spiral-split}
Let $(G,\decomp,\bundleset)$ be a bundled instance,
let $P$ be a path in $G$, and let $w$ be a spiral in $\bundleword(P)$
that starts and ends with a bundle $B$, and let
$b^1$ and $b^2$ be two arcs of $B$ that correspond to the two
occurrences of $B$ on $w$.
Let $\gamma$ be the spiral cut associated with the spiral $w$
on $P$. 
Then any path $P'$ that does not contain
any vertex that lies on $\gamma$ (i.e.,
does not contain any inner vertex of the subpath of $P$
that corresponds to the spiral $w$)
intersects $\gamma$ at most once.
Such an intersection, if exists, is contained in an arc of $B$ between $b^1$ and $b^2$
and $P'$ traverses $\gamma$ in the same direction (i.e., either from the inside to the outside or from the outside to the inside)
as the path $P$.

In particular, the path $P$ does not intersect $\gamma$ except
for the subpath that corresponds to the spiral $w$.
\end{lemma}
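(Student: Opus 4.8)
The plan is to argue purely topologically about the closed curve $\gamma$, which is the concatenation of (i) a sub-arc of the drawing of $P$ corresponding to the spiral $w$, running from the midpoint of $b^1$ to the midpoint of $b^2$, and (ii) a dual-path segment inside the bundle $B$ from the midpoint of $b^2$ back to the midpoint of $b^1$. First I would observe that $\gamma$ is a simple closed curve in the plane: part (i) is a simple curve because $P$ is a simple path, part (ii) lies in the union of the arcs of $B$ and the faces between consecutive arcs of $B$ (by Lemma~\ref{lem:bundle-dual}, $B$ is a directed path in the dual, so consecutive arcs of $B$ share a bundle face), and the only common points of the two parts are the two endpoints. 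Hence $\gamma$ separates the plane into an inside region and an outside region.

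Next I would analyze where another path $P'$ can cross $\gamma$. Since $P'$ is disjoint from all internal vertices of the $P$-subpath defining part (i), it cannot cross $\gamma$ along part (i) at all — a crossing there would force $P'$ to pass through a vertex of that subpath (it cannot "slide along" an arc, and the endpoints of part (i) are midpoints of $b^1,b^2$, which are not vertices). So every crossing of $P'$ with $\gamma$ occurs along part (ii), i.e. along the dual-path segment through $B$; by the structure of a bundle, the only primal arcs that part (ii) crosses transversally are the arcs of $B$ lying between $b^1$ and $b^2$ in the bundle ordering, and the intermediate bundle faces contain no vertices or arcs of $G$ in their interior. Therefore any intersection of $P'$ with $\gamma$ is a crossing of $P'$ with one of those bundle arcs of $B$, which gives the "contained in an arc of $B$ between $b^1$ and $b^2$" part of the statement. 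Since every arc of $B$ has a single orientation and all arcs of $B$ point the same way (from $\anycomp_1$ to $\anycomp_2$), crossing any such arc means crossing $\gamma$ in a fixed direction; I would fix the convention (inside vs.\ outside) by noting that $P$ itself enters the spiral on $b^1$ and leaves on $b^2$, so $P$ crosses $\gamma$ at its $b^1$-end going one way and at its $b^2$-end going the other way, and a consistent "same direction as $P$" is read off from the orientation of $B$ relative to inside/outside of $\gamma$.

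It then remains to show $P'$ crosses $\gamma$ \emph{at most once}. The key point is that $P'$ is a directed path whose endpoints are terminals, hence lie in single-vertex disc components, hence are not incident to any arc of a bundle participating in a spiral; in particular the endpoints of $P'$ are not on $\gamma$, so $P'$ crosses $\gamma$ an even number of times unless... — here I would instead use the orientation: each crossing of $P'$ with $\gamma$ is through an arc of $B$ and hence in the \emph{same} direction. A directed path that crosses a simple closed curve twice in the \emph{same} direction would have to leave the inside and come back, i.e.\ have an even number of crossings with opposite directions summing appropriately; more precisely, consecutive crossings of a simple closed curve by any arc alternate between "entering the inside" and "leaving the inside", so two crossings in the same direction is impossible. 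Hence $P'$ crosses $\gamma$ at most once. The same orientation argument applied to $P$, together with the already-noted fact that $P$ only meets part (i) of $\gamma$ along the spiral subpath itself, gives the final sentence: outside the spiral subpath, $P$ would meet $\gamma$ only through arcs of $B$, all in the same direction, and combined with the two "half-crossings" at $b^1,b^2$ this forces no further intersection. The main obstacle I anticipate is being careful about the degenerate nature of $\gamma$ (it runs along arcs of $B$ rather than through faces), so that "crossing" and "same direction" are defined sensibly; I would handle this by perturbing $\gamma$ slightly off the arcs of $B$ into the adjacent empty bundle faces, which is legitimate precisely because those faces are empty, reducing everything to the non-degenerate, standard Jordan-curve argument.
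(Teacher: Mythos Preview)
Your approach is correct and is essentially the same as the paper's, just spelled out in far more detail. The paper's entire proof is the single observation that $(\gamma \cap G)\setminus P$ consists only of arcs of $B$ between $b^1$ and $b^2$, all oriented the same way; from this the ``at most one crossing, in the same direction'' conclusion is immediate by exactly the alternating-direction Jordan-curve reasoning you describe.
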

\begin{proof}
The statement follows from the fact that $(\gamma \cap G) \setminus P$ is contained in the arcs of $B$ between $b^1$ and $b^2$;
moreover, all these arcs are directed in the same direction as $b^1$ and $b^2$.
\end{proof}

Let $P$ be a path in $G$ and let $w$ be a spiral in $\bundleword(P)$.
Assume moreover that $B$ contains only bundle arcs with both endpoints on disc components.
Then, knowing only the spiral $w$ (as a bundle word), for each bundle $B$ that is not in $w$,
we know on which side of $w$ it lies, and the notions of bundles {\em{inside $w$}} and {\em{outside $w$}}
are well-defined.
Note that this statement is not true if $w$ contains a subword $B_1B_2$, where $B_1$ leads to some
ring component $\ringcomp$ and $B_2$ leaves $\ringcomp$ on the same side as $B_1$: in this situation
from $w$ we cannot deduce on which side of the spiral cut associated with the spiral $w$ on $P$ lies the other
side of the ring component $\ringcomp$.

%Therefore a spiral $w$ partitions the terminals of $G$, as well as
%the set of bundles of $\bundleset$ that do not appear in $w$,
%into two parts: these inside the closed walk that corresponds to
%$w$ and these outside it. For brevity, we call these terminals
%or bundles {\em{outside}} or {\em{inside}} a spiral $w$.

Note the following.

\begin{corollary}\label{cor:bwrep}
Let $P$ be a directed path in a bundled instance $(G,\decomp,\bundleset)$,
let $w$ be a spiral in $P$ and let $\gamma$ be the spiral cut associated with the spiral $w$ on $P$. Assume $\bundleword(P) = u_1 w u_2$.
Then one of the following holds.
\begin{itemize}
\item The word $u_1$ contains only bundles from $w$ or outside $\gamma$ and the word $u_2$ contains
only bundles from $w$ or inside $\gamma$. Moreover,
for any path $P'$ that is vertex-disjoint with $P$ there exists a partition
$\bundleword(P') = u_1' u_2'$ (where $u_1'$ or $u_2'$ may be empty)
such that $u_1'$ contains only bundles from $w$ or outside $\gamma$
and $u_2'$ contains only bundles from $w$ or inside $\gamma$.
\item The same situation happens as in the previous point,
  but with the role of inside and outside exchanged:
  $u_1$ and $u_1'$ are allowed to contain only bundles from $w$
or inside $\gamma$, whereas $u_2$ and $u_2'$ are allowed to contain only bundles from 
  $w$ and outside $\gamma$.
\end{itemize}
If $w$ contains only bundles with endpoints in disc components, then the statements `inside/outside $\gamma$' can be replaced
with `inside/outside $w$'.
\end{corollary}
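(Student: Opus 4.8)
The plan is to use that the spiral cut $\gamma$ is a simple closed curve and to read off, via Lemma~\ref{lem:spiral-split}, on which of its two sides each piece of the picture lies.

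First I would check that $\gamma$ is a Jordan curve. Its two constituent pieces --- the subpath of $P$ corresponding to the spiral $w$ (run between the midpoints of $b^1$ and $b^2$) and a directed path in the dual of $G$ through the arcs of $B$ between $b^1$ and $b^2$, which exists by Lemma~\ref{lem:bundle-dual} --- are each simple: the first because $P$ is a directed path, the second because it is a subpath of the dual path of the bundle $B$. These two pieces meet only at the two junction midpoints: the dual piece is drawn inside the faces bounded by consecutive arcs of $B$ in the range from $b^1$ to $b^2$ and crosses $G$ only at the midpoints of those arcs, while the $P$-piece is drawn on the spiral subpath of $P$; since $w$ is a spiral, $P$ traverses the bundle $B$ within this subpath only at $b^1$ and $b^2$, so the pieces can coincide only at the midpoints of $b^1$ and $b^2$, which is exactly where they are glued. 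Hence $\gamma$ is a simple closed curve and it separates the plane into two open regions.

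Next I would locate the parts of $P$ that lie outside the spiral. Let $P_1$ be the subpath of $P$ from its start to the tail of $b^1$ and $P_2$ the subpath from the head of $b^2$ to the end of $P$, so that $\bundleword(P_1)=u_1$ and $\bundleword(P_2)=u_2$. By the last assertion of Lemma~\ref{lem:spiral-split}, $P$ meets $\gamma$ only along the spiral subpath, hence $P_1$ and $P_2$ are disjoint from $\gamma$; being connected, each is contained in one of the two complementary regions. The crucial claim --- which I expect to be the main obstacle --- is that $P_1$ and $P_2$ lie in \emph{different} regions. I would prove this by an orientation argument: the arcs $b^1,\dots,b^2$ of $B$ met by the dual piece of $\gamma$ are parallel and equally oriented, and consecutive ones bound an empty face, so near this ``ladder'' the curve $\gamma$ separates the side carrying the tails of these arcs from the side carrying their heads; as $P_1$ ends at the tail of $b^1$ and $P_2$ starts at the head of $b^2$, the two subpaths sit on opposite sides. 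Fixing the names so that $P_1$ lies in the ``outside'' region and $P_2$ in the ``inside'' region yields the first alternative; the opposite naming yields the second.

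Finally I would assemble the statement. A bundle $B'$ not occurring in $w$ has no arc on $\gamma$: its arcs are none of $b^1,\dots,b^2$, and they are not traversed by the spiral subpath of $P$, which uses only bundles of $w$; being a connected union of parallel arcs, $B'$ therefore lies entirely in one of the two regions, so ``$B'$ inside $\gamma$'' versus ``$B'$ outside $\gamma$'' is well defined. Every bundle arc counted in $u_1$ is traversed by $P_1$ and hence lies in the ``outside'' region (or is one of $b^1,\dots,b^2$ and hence belongs to the bundle $B$, which occurs in $w$), so every bundle of $u_1$ is either in $w$ or outside $\gamma$; symmetrically every bundle of $u_2$ is in $w$ or inside $\gamma$. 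For a path $P'$ vertex-disjoint from $P$: it avoids every vertex of $\gamma$, all of which lie on the spiral subpath of $P$, so by Lemma~\ref{lem:spiral-split} it crosses $\gamma$ at most once, and if it does, it does so by traversing an arc of $B$ between $b^1$ and $b^2$ in the same direction as $P$ traverses $b^1$, i.e.\ from the ``outside'' to the ``inside'' region. If $P'$ does not cross $\gamma$ it stays in one region and we take $u_1'$ or $u_2'$ to be all of $\bundleword(P')$ accordingly; if it crosses once we split $\bundleword(P')=u_1'u_2'$ at that crossing (the crossing arc lies in $B$, hence may go to either side), obtaining $u_1'$ with only bundles outside $\gamma$ or in $w$ and $u_2'$ with only bundles inside $\gamma$ or in $w$. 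For the last sentence of the corollary, when every bundle of $w$ has both endpoints in disc components the observation recorded just before the corollary identifies ``inside/outside $\gamma$'' with the purely combinatorial ``inside/outside $w$'', and the statement transfers unchanged.
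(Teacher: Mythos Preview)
Your proof is correct and follows the same route as the paper, which treats the corollary as an immediate consequence of Lemma~\ref{lem:spiral-split} without writing out a separate argument. You have simply spelled out the details that the paper leaves implicit: that $\gamma$ is a Jordan curve, that $P_1$ and $P_2$ fall on opposite sides because the parallel arcs of $B$ between $b^1$ and $b^2$ all point the same way, and that the single allowed crossing of $P'$ goes from the $P_1$-side to the $P_2$-side.
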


A similar statement as Corollary \ref{cor:bwrep} is also true if we consider paths traversing a single bundle.
Consider the following definitions.
\begin{defin}[bundle profile]
Let $(G,\decomp,\bundleset)$ be a bundled instance, let $B=(b_1,b_2,\ldots,b_s)$ be a bundle in $G$
and let $(P_i)_{i=1}^k$ be a solution to \probshort{} in $G$. 
Then a {\em{bundle profile}} of the bundle $B$, denoted $\bundleprof(B,(P_i)_{i=1}^k)$, is a word
over the alphabet $\{1,2,\ldots,k\}$ constructed in the following manner:
we inspect arcs $b_1,b_2,\ldots,b_s$ of the bundle $B$ in this order and whenever
$b_j \in P_i$, we append the symbol $i$ to the word $\bundleprof(B,(P_i)_{i=1}^k)$.
\end{defin}
We sometimes shorten $\bundleprof(B)$ if the solution $(P_i)_{i=1}^k$ is clear from the context.
\begin{defin}[spiral pack]
Let $(G,\decomp,\bundleset)$ be a bundled instance, let $B=(b_1,b_2,\ldots,b_s)$ be a bundle in $G$
and let $(P_i)_{i=1}^k$ be a solution to \probshort{} in $G$. 
A {\em{spiral pack}} is a subword $w$ of the word $\bundleprof(B)$ such
that $w$ starts and ends with the same symbol, but all other symbols of $w$ are pairwise
distinct and different than the starting symbol of $w$.
\end{defin}

\begin{lemma}\label{lem:bporder}
Let $(G,\decomp,\bundleset)$ be a bundled instance, let $B=(b_1,b_2,\ldots,b_s)$ be a bundle in $G$
and let $(P_i)_{i=1}^k$ be a solution to \probshort{} in $G$.
Let $1 \leq j(i,1), j(i,2), \ldots, j(i,J(i)) \leq s$ be such indices
that $b_{j(i,1)}, b_{j(i,2)}, \ldots, b_{j(i,J(i))}$ are precisely
the arcs of $B$ that appear on $P_i$, in the order of their appearance
on $P_i$. Then:
\begin{enumerate}
\item for any $1 \leq i \leq k$ and $1 \leq \iota < J(i)$,
  the subword of $\bundleword(P_i)$ between the 
  occurrences of $B$ that correspond to $j(i,\iota)$ and $j(i,\iota+1)$
  is a spiral;
\item $j(i,1) < j(i,2) < j(i,3) < \ldots < j(i,J(i))$
or $j(i,1) > j(i,2) > j(i,3) > \ldots > j(i,J(i))$;
\item if for some $i \neq i'$, the order from the previous
point is different
(i.e., $J(i),J(i') > 1$, but $j(i,1) < j(i,2)$ and $j(i',1) > j(i',2)$)
then either $j(i,\iota) > j(i',\iota')$ for any $\iota,\iota'$
or $j(i,\iota) < j(i',\iota')$ for any $\iota,\iota'$.
\end{enumerate}
\end{lemma}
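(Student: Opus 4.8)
The plan is to work entirely in the planar embedding, exploiting the fact that the bundle $B=(b_1,\dots,b_s)$ is a directed path in the dual (Lemma~\ref{lem:bundle-dual}), so the arcs $b_1,\dots,b_s$ are "parallel" and each one, when crossed by a path $P_i$, together with a dual subpath of $B$ and a piece of $P_i$, bounds a spiral cut in the sense of Definition~\ref{def:non-degenerate}/Lemma~\ref{lem:spiral-split}. For part (1): fix $i$ and two consecutive occurrences $b_{j(i,\iota)}, b_{j(i,\iota+1)}$ of $B$ on $P_i$. Consider the subword $w$ of $\bundleword(P_i)$ strictly between these two occurrences. If some bundle symbol $B'$ (possibly $B'=B$) appeared twice in $w$, then the sub-subword of $\bundleword(P_i)$ between those two copies of $B'$ would itself contain an innermost repeated bundle, i.e.\ a shorter spiral; iterating, there is a spiral $w'$ strictly inside $w$ whose endpoints are arcs of some bundle $B'\neq$ neither endpoint of $w$. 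Then the spiral cut $\gamma'$ associated with $w'$ is a closed curve; by Lemma~\ref{lem:spiral-split} the path $P_i$ does not meet $\gamma'$ outside the subpath realizing $w'$. But the two arcs $b_{j(i,\iota)}$ and $b_{j(i,\iota+1)}$ of $P_i$ lie on $\bundleword(P_i)$ strictly outside $w'$, hence strictly outside the subpath realizing $w'$, so they must lie on the same side of $\gamma'$. On the other hand $B$ does not occur in $w'$ (the letters of $w'$ other than $B'$ are distinct, and if $B$ occurred in $w'$ it would force $w$ to contain three copies of $B$, contradicting that $w$ lies between two \emph{consecutive} occurrences) — so all of $B$ lies on one fixed side of $\gamma'$, and in particular $b_{j(i,\iota)}$ and $b_{j(i,\iota+1)}$ are on that side, consistent with the above, giving no contradiction directly. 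I would instead derive the contradiction from the fact that $P_i$ itself must pass from $b_{j(i,\iota)}$ to $b_{j(i,\iota+1)}$ through the subpath realizing $w$, which contains $w'$ nested inside it, forcing $P_i$ to cross $\gamma'$ an even number of times while entering/leaving the region it bounds — but between $b_{j(i,\iota)}$ and $b_{j(i,\iota+1)}$ the arcs of $P_i$ on $\gamma'$ all belong to $B'$ and point the same way, so $P_i$ crosses $\gamma'$ only in one direction, forcing the crossing count to be $0$; that kills the occurrence of $B'$ inside $w$, so $w$ has all distinct letters, none equal to $B$, i.e.\ $w$ together with its two bracketing $B$'s is a spiral.

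For part (2): suppose, for contradiction, that the sequence $j(i,1),j(i,2),\dots$ is not monotone, so there is a consecutive triple with $j(i,\iota-1)<j(i,\iota)>j(i,\iota+1)$ (or the mirror image). Apply part (1) to the spiral between $b_{j(i,\iota-1)}$ and $b_{j(i,\iota)}$, with spiral cut $\gamma_1$, and to the spiral between $b_{j(i,\iota)}$ and $b_{j(i,\iota+1)}$, with spiral cut $\gamma_2$. These two spiral cuts share the arc $b_{j(i,\iota)}$ of $B$ and their dual-path portions run along consecutive arcs of $B$ (the dual of $B$ being a directed path, so its arcs are linearly ordered); the ordering $j(i,\iota-1)<j(i,\iota)>j(i,\iota+1)$ means the dual portions of $\gamma_1$ and $\gamma_2$ leave $b_{j(i,\iota)}$ on the \emph{same} side along $B$. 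Then one of $\gamma_1,\gamma_2$ is nested inside the region bounded by the other, and $P_i$ is forced to cross the outer one twice to reach the inner spiral and come back — but again all such crossings land on arcs of $B$ pointing the same direction, so the signed crossing number of $P_i$ with that spiral cut is nonzero in absolute value yet must be zero because $P_i$ starts and ends outside it. Contradiction; so the indices are monotone. The direction (increasing vs.\ decreasing) is a global parity/side feature, which is fine.

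For part (3): take $i\neq i'$ with $j(i,1)<j(i,2)$ but $j(i',1)>j(i',2)$ — so $P_i$ spirals one way around the bundle and $P_{i'}$ the other way. Let $w$ be the first spiral of $P_i$ (between $b_{j(i,1)}$ and $b_{j(i,2)}$) with spiral cut $\gamma$, and let $w'$ be the first spiral of $P_{i'}$ with spiral cut $\gamma'$. By Lemma~\ref{lem:spiral-split}, $P_{i'}$ is vertex-disjoint from $P_i$, so $P_{i'}$ meets $\gamma$ at most once, and any such meeting is an arc of $B$ between $b_{j(i,1)}$ and $b_{j(i,2)}$ traversed in the \emph{same} direction as $P_i$ there. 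Since $P_{i'}$ spirals the opposite way, its crossing of the arcs of $B$ that lie between $b_{j(i,1)}$ and $b_{j(i,2)}$ happens with the reverse orientation — so $P_{i'}$ cannot use any arc of $B$ strictly between $b_{j(i,1)}$ and $b_{j(i,2)}$, hence every arc of $B$ used by $P_{i'}$ has index either all below $j(i,1)$ or (by connectivity of $P_{i'}$ and monotonicity from part (2)) all above $j(i,2)$; and since $j(i,1)<j(i,2)\le j(i,\iota)$ for all $\iota$ (resp.\ $\ge$), one concludes that all indices $j(i',\iota')$ are on one side of \emph{all} indices $j(i,\iota)$. The remaining bookkeeping — that "below $j(i,1)$" vs.\ "above $j(i,J(i))$" is consistent across all $\iota,\iota'$ — follows from part (2) applied to both $P_i$ and $P_{i'}$ and from the fact that $\gamma$ separates the two ends of $B$'s dual path.

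The main obstacle I anticipate is making the "signed crossing number is zero, hence any nonzero-magnitude crossing pattern is a contradiction" argument fully rigorous: it relies on the spiral cut being a genuine (if degenerate) closed Jordan-type curve so that a path with both endpoints outside crosses it a net zero number of times, combined with the one-directionality of the arcs of $B$ forcing all crossings to have the same sign. I would isolate this once as a small sub-claim ("a path vertex-disjoint from the interior of a spiral, whose endpoints are not enclosed by the spiral cut, does not cross the spiral cut at all"), prove it from Lemma~\ref{lem:spiral-split} plus a parity count like Lemma~\ref{lem:pathparity0}, and then reuse it verbatim in parts (1), (2), and (3). Everything else is routine nesting-of-spiral-cuts planarity bookkeeping.
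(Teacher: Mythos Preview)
Your high-level plan --- use spiral cuts and Lemma~\ref{lem:spiral-split} --- is exactly right and matches the paper. But the execution has real gaps in parts (1) and (3), and part (2) is needlessly complicated.

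\textbf{Part (1).} You abandoned the correct argument one step too early. Once you have the inner spiral $w'$ (on some bundle $B'\neq B$) with spiral cut $\gamma'$, Corollary~\ref{cor:bwrep} (which is just the ``in particular'' clause of Lemma~\ref{lem:spiral-split} unpacked) says that the part of $\bundleword(P_i)$ before $w'$ uses only bundles on one side of $\gamma'$ and the part after $w'$ uses only bundles on the other side. Since $B$ appears both before $w'$ (at $b_{j(i,\iota)}$) and after $w'$ (at $b_{j(i,\iota+1)}$), $B$ would have to lie on both sides --- but $B\notin w'$, so $\gamma'$ does not meet $B$ at all and $B$ lies entirely on one side. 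That \emph{is} the contradiction; there is nothing more to do. Your replacement crossing-count argument is broken: the subpath of $P_i$ realising $w'$ lies \emph{on} $\gamma'$, not transverse to it, so ``$P_i$ crosses $\gamma'$ an even number of times'' is not a well-defined statement here, and the sentence ``the arcs of $P_i$ on $\gamma'$ all belong to $B'$'' is simply false (all of $w'$ is on $\gamma'$).

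\textbf{Part (2).} You do not need two spiral cuts and a nesting argument. From (1), the spiral cut $\gamma$ between $b_{j(i,\iota)}$ and $b_{j(i,\iota+1)}$ separates the arcs of $B$ into a low-index side and a high-index side; by Lemma~\ref{lem:spiral-split}, the part of $P_i$ after the spiral never re-crosses $\gamma$, so all later $B$-arcs of $P_i$ are on the same side as $b_{j(i,\iota+1)}$. Monotonicity follows immediately.

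\textbf{Part (3).} The sentence ``Since $P_{i'}$ spirals the opposite way, its crossing of the arcs of $B$ \ldots\ happens with the reverse orientation'' is wrong. All arcs of $B$ point the same way, and any directed path traversing one of them crosses the spiral cut in the \emph{same} direction as $P_i$; this has nothing to do with whether $j(i',\cdot)$ is increasing or decreasing. The correct deduction is: if $P_{i'}$ uses some $b_m$ with $m$ strictly between two consecutive $j(i,\iota),j(i,\iota+1)$, then $P_{i'}$ crosses that spiral cut once, going from the low-index side to the high-index side (same as $P_i$). Hence all $B$-arcs of $P_{i'}$ before this crossing have indices below $j(i,\iota)$ and all after have indices above $j(i,\iota+1)$ --- so $P_{i'}$'s sequence is increasing at that point, contradicting the hypothesis. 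Therefore no $j(i',\iota')$ lies between any consecutive pair of $j(i,\cdot)$'s, and monotonicity of both sequences finishes it.
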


\begin{proof}
Let $1 \leq i \leq k$ be such an index that $J(i) > 1$
and let $w$ be a subword of $\bundleword(P_i)$ that starts
and ends with $B$, but contains the symbol $B$ only twice
(i.e., $w$ corresponds to a part of the path $P_i$ between
 two consecutive arcs on $B$). We first prove that $B$
is a spiral. Assume otherwise: as $w$ does not contain $B$
except for the first and last symbol, $w$ needs to contain
a different bundle twice, thus $w$ needs to contain a spiral $w'$.
Moreover, this spiral does not contain $B$.
However, $\bundleword(P_i)$ contains $B$ both before the spiral
$w'$ and after it, a contradiction to Corollary \ref{cor:bwrep}.

We infer that, by Lemma \ref{lem:spiral-split},
the path $P_i$ does not intersect the spiral cut
associated with $w$ except for the spiral $w$ itself, and this
spiral cut separates the arcs of $B$ that lie on $P_i$ after $w$
from the arcs of $B$ that lie before $w$. This proves that
$j(i,1), j(i,2), \ldots, j(i,J(i))$ are ordered monotonically.

Take any other path $P_{i'}$, $i' \neq i$, and assume that
$P_{i'}$ contains an arc of $B$ that lies on $B$ between the arcs
of $B$ on the spiral $w$. In this situation, $P_{i'}$ intersects
the spiral cut associated with $w$. By Lemma \ref{lem:spiral-split},
$P_{i'}$ contains the arcs of $B$ in the same order as $P_i$,
and the lemma is proven.
\end{proof}

\begin{corollary}\label{cor:bprep}
Let $(G,\decomp,\bundleset)$ be a bundled instance, let $B=(b_1,b_2,\ldots,b_s)$ be a bundle in $G$
and let $(P_i)_{i=1}^k$ be a solution to \probshort{} in $G$. 
Let $w$ be a spiral pack in $\bundleprof(B)$ and assume $\bundleprof(B) = u_1 w u_2$.
Then if a symbol $\iota \in \{1,2,\ldots,k\}$ appears both in $u_1$ and in $u_2$, then
$\iota$ appears in $w$.
\end{corollary}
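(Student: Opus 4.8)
The plan is to lift, to the level of bundle profiles, exactly the argument that Corollary~\ref{cor:bwrep} makes for bundle words, using Lemma~\ref{lem:bporder} and Lemma~\ref{lem:spiral-split} in place of the more elementary observations about bundle words.

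First I would fix notation. Since $w$ is a contiguous subword of $\bundleprof(B)$ that begins and ends with the same symbol $i_0$ and contains no other occurrence of $i_0$, its two endpoints are two \emph{consecutive} occurrences of $i_0$ in $\bundleprof(B)$; as $\bundleprof(B)$ reads off the arcs of $B$ in the fixed order $b_1,\dots,b_s$, these correspond to two arcs $b_{p_1},b_{p_2}$ of $B$ with $p_1<p_2$, both lying on $P_{i_0}$, with no arc $b_r$, $p_1\le r\le p_2$, on $P_{i_0}$ besides these two. By Lemma~\ref{lem:bporder}(2) (monotonicity of the $B$-arc indices along $P_{i_0}$), $b_{p_1}$ and $b_{p_2}$ are in fact \emph{consecutive} arcs of $B$ on $P_{i_0}$, since reversing a monotone list preserves adjacency. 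Reading $\bundleprof(B)=u_1 w u_2$ off the order $b_1,\dots,b_s$, a symbol $\iota$ occurs in $u_1$ iff some $b_p$ with $p<p_1$ lies on $P_\iota$; in $u_2$ iff some $b_q$ with $q>p_2$ lies on $P_\iota$; and in $w$ iff some $b_r$ with $p_1<r<p_2$ lies on $P_\iota$. The case $\iota=i_0$ is trivial, so from now on assume $\iota\neq i_0$ occurs in both $u_1$ and $u_2$ but not in $w$; I will derive a contradiction.

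Applying Lemma~\ref{lem:bporder}(2) to $P_\iota$: the $B$-arc indices along $P_\iota$ form a monotone sequence which contains an index $<p_1$, contains an index $>p_2$, but avoids $[p_1,p_2]$ entirely. Hence this sequence has a consecutive pair that ``jumps over'' $[p_1,p_2]$, i.e.\ there are arcs $b_{r_1},b_{r_2}$ of $B$, consecutive along $P_\iota$, with $r_1<p_1\le p_2<r_2$. By Lemma~\ref{lem:bporder}(1) the portion of $\bundleword(P_\iota)$ between $b_{r_1}$ and $b_{r_2}$ is a spiral $\sigma$ on $P_\iota$; let $\gamma$ be the spiral cut associated with $\sigma$. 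On one hand, $P_{i_0}$ is vertex-disjoint from $P_\iota$, hence from every inner vertex of $\sigma$, so Lemma~\ref{lem:spiral-split} guarantees that $P_{i_0}$ meets $\gamma$ at most once. On the other hand, by Lemma~\ref{lem:bundle-dual} the dual edges of $B$ form a single directed path, so the dual-path part of $\gamma$ — which joins $b_{r_1}$ to $b_{r_2}$ using only arcs of $B$ — is forced to be the subpath of that dual path between them, and therefore crosses each of $b_{r_1+1},\dots,b_{r_2-1}$ once, at an interior point. Since $r_1<p_1\le p_2<r_2$, both $b_{p_1}$ and $b_{p_2}$ are among these arcs, so $\gamma$ crosses both; as $P_{i_0}$ traverses the whole of $b_{p_1}$ and of $b_{p_2}$ and these are distinct arcs, $P_{i_0}$ meets $\gamma$ in at least two distinct points — contradicting Lemma~\ref{lem:spiral-split}. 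Thus $\iota$ must occur in $w$, which together with the trivial case $\iota=i_0$ proves the corollary.

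The one step I expect to require the most care is the topological claim that the dual-path part of the spiral cut $\gamma$ crosses \emph{all} of the arcs $b_{r_1+1},\dots,b_{r_2-1}$. I would justify it by observing that, once the two ends of this dual path near $b_{r_1}$ and $b_{r_2}$ are fixed, Lemma~\ref{lem:bundle-dual} leaves no choice but to follow the sub-dual-path of $B$ between them, which uses precisely the dual edges $b_{r_1+1}^{*},\dots,b_{r_2-1}^{*}$; this is also the only routing compatible with the property ``$(\gamma\cap G)\setminus P_\iota$ is contained in the arcs of $B$ between $b_{r_1}$ and $b_{r_2}$'' already recorded in (the proof of) Lemma~\ref{lem:spiral-split}. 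Everything else is bookkeeping that is carried out inside the proofs of Lemma~\ref{lem:bporder} and Lemma~\ref{lem:spiral-split}, so I would cite those rather than repeat the topology.
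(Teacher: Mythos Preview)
Your proof is correct, but you take the mirror image of the paper's argument. The paper applies Lemma~\ref{lem:bporder}(1) to $P_{i_0}$ (the first/last symbol of $w$) to get a spiral between the two arcs $b_{p_1},b_{p_2}$, and then observes via Lemma~\ref{lem:spiral-split} that if $\iota\neq i_0$ does not appear in $w$, then $P_\iota$ has no arc of $B$ between $b_{p_1}$ and $b_{p_2}$, hence never meets this spiral cut, hence lies entirely on one side of it; consequently all $B$-arcs of $P_\iota$ are on one side, so $\iota$ cannot occur in both $u_1$ and $u_2$. You instead locate a spiral on $P_\iota$ whose spiral cut is crossed twice by $P_{i_0}$.

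Both routes use exactly the same two lemmas and are of comparable length; the paper's version is marginally cleaner because it avoids having to find the ``jump'' pair $b_{r_1},b_{r_2}$ on $P_\iota$ and the attendant discussion of which arcs the dual part of $\gamma$ crosses. Your last paragraph correctly flags the one delicate point in your version (that the dual-path part of $\gamma$ really runs through the arcs strictly between $b_{r_1}$ and $b_{r_2}$); note that Lemma~\ref{lem:bundle-dual} allows $B$ to be a cycle in the dual, so your appeal to the sentence ``$(\gamma\cap G)\setminus P$ is contained in the arcs of $B$ between $b^1$ and $b^2$'' from the proof of Lemma~\ref{lem:spiral-split} is the right way to close this, rather than Lemma~\ref{lem:bundle-dual} alone.
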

\begin{proof}
By Lemma \ref{lem:bporder}, if $i$ is the first and last symbol
of $w$, then the subpath of $P_i$ that connects the arcs
that correspond to the first and the last symbol of $w$
is a spiral. If a symbol $i' \neq i$ does not appear in $w$,
the path $P_i'$ does not intersect the spiral cut associated
with the aforementioned spiral on $P_i$ and cannot contain
arcs of $B$ both inside and outside this spiral.
\end{proof}

We now investigate a situation where the same spiral
appears on $\bundleword(P)$ several times in a row.
\begin{defin}[spiraling ring]
Let $P$ be a directed path in a bundled instance $(G,\decomp,\bundleset)$
and let $u$ be a word over $\bundleset$, such 
that $u$ contains each symbol of $\bundleset$ at most once,
$u$ contains only bundles with both endpoints in disc components,
and $\bundleword(P)$ contains $u^rB$ as a subword for some $r \geq 2$,
where $B$ is the first symbol of $u$.
Let $w_1$ and $w_2$ be the first and the last of the $r$
spirals $uB$ of $\bundleword(P)$ that appear in the subword $u^rB$
and let $\gamma_1$ and $\gamma_2$ be the spiral cuts associated
with these spirals; in case $r=2$ we slightly modify $\gamma_1$ and $\gamma_2$
around the midpoint of the second arc of $P \cap B$ so that they do not intersect.
The closed area of the plane contained between $\gamma_1$
and $\gamma_2$ is called a {\em{spiraling ring}}.
The curve $\gamma_1$ is the {\em{input border}} of the spiraling
ring, and the curve $\gamma_2$ is the {\em{output border}}.
\end{defin}
Note that, by Lemma \ref{lem:spiral-split},
the input and output borders do not intersect each other,
and any spiraling ring is homeomorphic to a closed ring.

\begin{lemma}\label{lem:bwrep:spiraling-ring}
Let $P$ be a directed path in a bundled instance $(G,\decomp,\bundleset)$
and let $u$ be a word over $\bundleset$, such 
that $u$ contains each symbol of $\bundleset$ at most once,
$u$ contains only bundles with both endpoints in disc components,
and $\bundleword(P)$ contains $u^rB$ as a subword for some $r \geq 2$,
where $B$ is the first symbol of $u$.
Let $A_R$ be the spiraling ring associated with $u^rB$ and
let $\gamma_1$ and $\gamma_2$ be its input and output borders.
Let $P'$ be a directed path in $G$ that is contained in $A_R$,
except for possibly the first and last arc that may belong to the bundle $B$
 intersecting $\gamma_1$ and $\gamma_2$, respectively.
Then
\begin{enumerate}
\item $\bundleword(P')$ is a subword of $u^{r'}$ for some integer $r'$;
\item if
$P'$ starts and ends with an arc of $B$ intersecting
$\gamma_1$ and $\gamma_2$, respectively, then $\bundleword(P')=u^{r'}B$
for some integer $r' \geq 1$;
\item if, additionally to the previous point, we assume that
$P'$ is vertex-disjoint with $P$, then $r'=r-1$, i.e,
  $\bundleword(P') = u^{r-1}B$.
\end{enumerate}
\end{lemma}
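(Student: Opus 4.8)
The plan is to establish the three parts in order. For part~(1), the starting point is the elementary remark that, since $u$ has pairwise distinct letters, every word over the set of letters of $u$ is a subword of $u^{r'}$ for $r'$ large enough; hence it suffices to prove that every bundle traversed by $P'$ occurs in $u$. As $P'$ lies inside $A_R$ apart possibly from its first and last arc --- which, if they leave $A_R$, are arcs of $B$, the first letter of $u$ --- it is enough to show that every bundle with an arc strictly between the input border $\gamma_1$ and the output border $\gamma_2$ occurs in $u$. Let $w_1$ and $w_r$ denote the first and last of the $r$ spirals $uB$ in the subword $u^rB$ of $\bundleword(P)$, so that $\gamma_1,\gamma_2$ are the spiral cuts of $w_1,w_r$. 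Since the word $u^rB$ uses only bundles occurring in $u$, any bundle $B^*\notin u$ occurring in $\bundleword(P)$ occurs before $w_1$ or after $w_r$; Corollary~\ref{cor:bwrep} applied to $P$ and $w_1$ shows that such a $B^*$ before $w_1$ lies entirely outside $\gamma_1$, and applied to $P$ and $w_r$ shows that such a $B^*$ after $w_r$ lies entirely inside $\gamma_2$, so in either case $B^*$ has no arc in $A_R$. For a bundle $B^*\notin u$ not occurring on $P$ at all, one instead invokes the fact (valid here since $u$ consists of bundles with disc endpoints) that being ``inside'' or ``outside'' the spiral $uB$ is determined by the word $uB$ alone; this side is therefore the same for $w_1$ and $w_r$, and combined with $\gamma_2$ being enclosed by $\gamma_1$ this again forces $B^*$ to have no arc in $A_R$.

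For part~(2), part~(1) already gives that every letter of $\bundleword(P')$ is a bundle of $u$, and by hypothesis $P'$ runs across the annulus $A_R$ from its first arc, an arc of $B$ crossing $\gamma_1$, to its last arc, an arc of $B$ crossing $\gamma_2$, without otherwise leaving $A_R$. Because the only bundles with arcs in $A_R$ are those of $u$, inside $A_R$ each component met by $P'$ is joined by bundle arcs only to its neighbours in the cyclic sequence $u$; consequently, between traversing one arc of $B$ and the next, $P'$ has no choice but to read the whole word $u$ (the cyclic order of $u$ cannot be shortcut). Hence $\bundleword(P')=u^{r'}B$ for some $r'\ge 0$, the trailing $B$ being the last arc. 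Finally $r'\ge 1$: by Lemma~\ref{lem:bporder} the arcs of $B$ used by $P$ appear along $B$ in monotone order, so the arcs of $B$ crossing $\gamma_1$ and those crossing $\gamma_2$ form disjoint intervals along $B$ and $P'$ cannot be a single arc of $B$.

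For part~(3), the extra hypothesis that $P'$ is vertex-disjoint from $P$ makes $P'$ disjoint from the spiral cut $\delta_j$ of the $j$-th spiral $uB$ of $u^rB$ for every $j=1,\dots,r$ (all vertices of $\delta_j$ lie on $P$), where $\delta_1=\gamma_1$ and $\delta_r=\gamma_2$. These are nested Jordan curves with $\delta_{j+1}$ enclosed by $\delta_j$, and by Lemma~\ref{lem:spiral-split} $P'$ crosses each of them at most once and always through an arc of $B$ in the direction in which $P$ crosses it. Since $P'$ starts on $\delta_1$ just inside it (hence outside $\delta_2,\dots,\delta_r$) and ends on $\delta_r$, it must cross every $\delta_j$ exactly once; between the crossings of $\delta_j$ and $\delta_{j+1}$ the path stays in the sub-annulus bounded by these two curves, which contains only the $j$-th copy of the $u$-structure and is traversed monotonically, reading exactly one copy of the part of $u$ after its leading $B$. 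Together with the $r$ arcs of $B$ at the crossings this gives $\bundleword(P')=u^{r-1}B$, i.e.\ $r'=r-1$.

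I expect the main obstacle to be the structural input used throughout parts~(1) and~(2): that the open annulus between $\gamma_1$ and $\gamma_2$ contains exactly the layered copies of the $u$-structure and nothing else, and that a path crossing it cannot reverse the cyclic order of $u$ within one layer. Pinning down the position of bundles that do not occur on $P$ (in part~(1)) and ruling out backtracking inside a single layer (in part~(2)) are the delicate points; both should follow by combining Corollary~\ref{cor:bwrep}, Lemma~\ref{lem:spiral-split}, Lemma~\ref{lem:bporder}, and planarity of the non-standard embedding of the bundle multigraph, but they need to be spelled out carefully.
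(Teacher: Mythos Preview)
Your argument for part~(1) has a genuine error: the ``elementary remark'' that every word over the letters of $u$ is a subword of some $u^{r'}$ is false. If $u=AB$ with $A\ne B$, then $AA$ is a word over $\{A,B\}$ but is not a contiguous subword of $(AB)^{r'}$ for any $r'$. Consequently, showing that $P'$ uses only bundles occurring in $u$ is not enough to conclude part~(1); you must also show that the bundles appear in the cyclic order dictated by $u$. Your proof of part~(1) does not address this at all.

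The same gap resurfaces in part~(2). Your claim that ``each component met by $P'$ is joined by bundle arcs only to its neighbours in the cyclic sequence $u$'' tacitly assumes that the components visited by $u$ are pairwise distinct. They need not be: the bundles $B_1,\dots,B_s$ in $u$ are distinct, but a single disc component $\anycomp$ may appear as the endpoint of several of them, so inside $\anycomp\cap A_R$ a path could in principle ``jump'' from the $j$-th to the $j'$-th position of the cycle. Nothing in your argument rules this out.

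The paper closes both gaps simultaneously with one structural observation. Writing $u=B_1\cdots B_s$ and letting $H_j$ be the component in which the arcs of $B_j$ end, it defines $G_j$ to be the portion of $H_j$ lying between the two subpaths of $P$ (on $\gamma_1$ and on $\gamma_2$) that traverse $H_j$. Because the $B_j$ are pairwise distinct and all endpoints are disc components, the pieces $G_1,\dots,G_s$ are pairwise disjoint even when the $H_j$ are not, and the subgraph of $G$ inside $A_R$ consists exactly of the $G_j$'s together with the bundle arcs of $B_{j+1}$ connecting $G_j$ to $G_{j+1}$. Any directed path in $A_R$ is then forced to read a subword of $u^\infty$; parts~(1) and~(2) are immediate. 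Your part~(3) argument via the nested spiral cuts $\delta_1,\dots,\delta_r$ is fine once part~(2) is in hand (and indeed is essentially equivalent to the paper's application of part~(2) to the $r-1$ subwords $u^2B$), but it cannot stand on its own without the structural input above.
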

\begin{proof}
Let $u = B_1B_2\ldots B_s$, where $B_1=B$.
For $1 \leq j \leq s$ let $H_j$ be the component of $\decomp$
that contains the ending points of arcs in the bundle $B_j$.
Note that, although the bundles $B_j$ are pairwise distinct,
the components $H_j$ may be equal for different indices $j$.
For $1 \leq j \leq s$, denote by $G_j$
the subgraph of $H_j$ contained
between the subpath of $P$
between arcs on $B_j$ and $B_{j+1}$ (with $B_{s+1}=B_1=B$) that
is contained in $\gamma_1$, and the subpath of $P$
between arcs on $B_j$ and $B_{j+1}$
that is contained in $\gamma_2$ (including paths and vertices
on these paths as well). As the bundles $B_j$ are pairwise distinct,
the graphs $G_j$ are pairwise disjoint. Moreover, the subgraph
of $G$ contained in the spiraling ring $A_R$ consists of the
graphs $G_j$, $1 \leq j \leq s$, and parts of bundles $B_j$,
$1 \leq j \leq s$ that connect $G_j$ with $G_{j+1}$; note that here we used
the fact that all components $H_j$ are disc components.

Therefore, if some path, that is contained in the subgraph of $G$
contained in $A_R$, starts in $G_j$ and ends in $G_{j'}$, then
its bundleword equals $B_jB_{j+1}\ldots B_s u^{r'} B_1\ldots B_{j'-1}$
for some integer $r'$ (or $B_jB_{j+1}\ldots B_{j'-1}$
provided that $j \leq j'$). This proves the first two claims
of the lemma. For the third claim, we apply the second claim
for $r-1$ subwords $u^2B$ of the subword $u^rB$ of $\bundleword(P)$.
\end{proof}

\begin{corollary}\label{cor:bwrep:equalrep}
Let $P$ be a directed path in a bundled instance $(G,\decomp,\bundleset)$
and let $u$ be a word over $\bundleset$, such that $u$ contains each symbol
of $\bundleset$ at most once,
$u$ contains only bundles with both endpoints in disc components,
and $\bundleword(P)$ contains $u^r$ as a subword, for
some $r \geq 2$. Let $B$ be the first symbol of $u$ and let $w=uB$ be any
of the $r-1$ spirals $w$ contained in the subword $u^r$ of $\bundleword(P)$.
Then for any directed path $P'$ in $G$ that contains arcs from bundles
both outside and inside $w$ and
is vertex-disjoint with $P$, the bundle word of $P'$ contains $u^{r-2}B$
as a subword.
\end{corollary}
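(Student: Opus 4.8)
The plan is to reduce Corollary~\ref{cor:bwrep:equalrep} to Lemma~\ref{lem:bwrep:spiraling-ring}(3), once the right spiraling ring has been identified. Write $u=B_1B_2\cdots B_s$ with $B_1=B$; since $u$ contains $B$ exactly once (as its first letter) and all other letters of $u$ are pairwise distinct and different from $B$, the letter $B$ occurs exactly $r$ times inside the fixed subword $u^r$ of $\bundleword(P)$, namely as the first letter of each of the $r$ copies of $u$, and the $r-1$ spirals $uB$ contained in $u^r$ are precisely the pieces $w_1,\dots,w_{r-1}$ of $P$ between consecutive such occurrences. Let $\gamma_1,\dots,\gamma_{r-1}$ be the associated spiral cuts; by Lemma~\ref{lem:spiral-split} and their construction these are pairwise non-crossing curves forming a nested family with $\gamma_1$ outermost and $\gamma_{r-1}$ innermost. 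First I would dispose of the degenerate case $r=2$: then there is a single spiral cut $\gamma_1=\gamma_j$, and since $P'$ uses a bundle arc inside $w$ and one outside $w$ (``inside/outside $w$'' is well defined because $u$ uses only disc-component bundles, as recalled before Corollary~\ref{cor:bwrep}), $P'$ must cross $\gamma_1$; as $P'$ is vertex-disjoint from $P$, Lemma~\ref{lem:spiral-split} says the crossing uses an arc of $B$, so $\bundleword(P')$ contains $B=u^{r-2}B$.

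So assume $r\ge 3$. The first step is to observe that $u^{r-1}B$ is a subword of $u^r$ (it is the prefix of $u^r$ of length $(r-1)s+1$), so $\bundleword(P)$ contains $u^{r-1}B$, and hence there is a spiraling ring $A_R$ associated with this occurrence, with input border $\gamma_1$ and output border $\gamma_{r-1}$ (the spirals of $A_R$ are exactly $w_1,\dots,w_{r-1}$); moreover $r-1\ge 2$, so Lemma~\ref{lem:bwrep:spiraling-ring} applies to $A_R$. Recall that $A_R$ is homeomorphic to a closed ring, that its complement in the plane is the disjoint union of the region strictly outside $\gamma_1$ and the region strictly inside $\gamma_{r-1}$, and that $\gamma_j$ lies within $A_R$ (so anything outside $\gamma_1$ is outside $\gamma_j$ and anything inside $\gamma_{r-1}$ is inside $\gamma_j$).

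The crux is to pin down where the two witness arcs of $P'$ can live. Let $a^{\mathrm{out}}$ be an arc of $P'$ lying in a bundle outside $w$ and $a^{\mathrm{in}}$ an arc of $P'$ lying in a bundle inside $w$; these exist by hypothesis, and since the spiral cut $\gamma_j$ in $G$ realizes the closed walk $w$ in the bundle multigraph, $a^{\mathrm{out}}$ lies outside $\gamma_j$ and $a^{\mathrm{in}}$ lies inside $\gamma_j$. By Lemma~\ref{lem:bwrep:spiraling-ring}(1) every directed path drawn inside $A_R$ has bundle word over the alphabet $\{B_1,\dots,B_s\}$; in particular the subgraph of $G$ inside $A_R$ contains no bundle arc of any bundle other than $B_1,\dots,B_s$. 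Since the bundles of $a^{\mathrm{out}}$ and $a^{\mathrm{in}}$ are \emph{not} among $B_1,\dots,B_s$ — these are exactly the letters of $w=uB$ — neither arc lies in $A_R$; as $P'$ is vertex-disjoint from $P$ and $\gamma_1,\gamma_{r-1}$ meet $G$ only along $P$ and along arcs of $B$, each of $a^{\mathrm{out}},a^{\mathrm{in}}$ is a single arc disjoint from $\gamma_1\cup\gamma_{r-1}$, hence lies entirely outside $\gamma_1$ or entirely inside $\gamma_{r-1}$. If $a^{\mathrm{out}}$ were inside $\gamma_{r-1}$ it would be inside $\gamma_j$, a contradiction; so $a^{\mathrm{out}}$ lies strictly outside $\gamma_1$. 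Symmetrically $a^{\mathrm{in}}$ lies strictly inside $\gamma_{r-1}$.

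Finally I would assemble the required subpath. Since $P'$ is a directed path containing both $a^{\mathrm{out}}$ (outside $\gamma_1$, hence outside $\gamma_{r-1}$) and $a^{\mathrm{in}}$ (inside $\gamma_{r-1}$), and since by Lemma~\ref{lem:spiral-split} $P'$ crosses $\gamma_{r-1}$ at most once and $\gamma_1$ at most once with all crossings of a given cut in the same direction, $a^{\mathrm{out}}$ precedes $a^{\mathrm{in}}$ along $P'$, and between them $P'$ crosses first $\gamma_1$ (at an arc $a_1\in B$) and then $\gamma_{r-1}$ (at an arc $a_2\in B$), staying inside $\gamma_1$ and outside $\gamma_{r-1}$ in between. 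Let $P'''$ be the subpath of $P'$ from $a_1$ to $a_2$ (both inclusive): it is a contiguous directed subpath of $P'$, contained in $A_R$ except for its first and last arcs $a_1,a_2\in B$, which intersect the input and output borders respectively, and it is vertex-disjoint from $P$. Lemma~\ref{lem:bwrep:spiraling-ring}(3), applied with its parameter ``$r$'' set to $r-1$, then gives $\bundleword(P''')=u^{(r-1)-1}B=u^{r-2}B$, and since $P'''$ is a contiguous directed subpath of $P'$ this is a subword of $\bundleword(P')$, as required. I expect the main obstacle to be the topological bookkeeping of the third paragraph — correctly matching ``inside/outside $w$ in the bundle multigraph'' with ``inside/outside the spiral cut $\gamma_j$ in $G$'' and excluding that a witness arc hides somewhere inside $A_R$ — but this becomes routine once the structural description of $A_R$ underlying Lemma~\ref{lem:bwrep:spiraling-ring} is invoked.
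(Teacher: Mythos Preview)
Your proof is correct and follows the same approach as the paper: apply Lemma~\ref{lem:bwrep:spiraling-ring}(3) to the spiraling ring associated with the subword $u^{r-1}B$ of $\bundleword(P)$. The paper's proof is a one-line pointer to that lemma; you have filled in the details it leaves implicit, including the separate treatment of $r=2$ (where $u^{r-1}B=uB$ does not define a spiraling ring and the conclusion $u^{r-2}B=B$ follows directly from Lemma~\ref{lem:spiral-split}) and the argument placing the two witness arcs on opposite sides of $A_R$.
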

\begin{proof}
We apply Lemma \ref{lem:bwrep:spiraling-ring}
for the spiraling ring that corresponds to the subword $u^{r-1}B$
of $\bundleword(P)$.
\end{proof}

\subsection{Structure of bundle words and bundle profiles}

Corollaries \ref{cor:bwrep} and \ref{cor:bprep} imply some structure on bundle words and bundle profiles.
To recognize it, we need the following observation.
\begin{defin}[spiral decomposition of a word]
Let $u$ be a word over an alphabet $\Sigma$.
Let $s$ and $(r_j)_{j=1}^s$ be positive integers and let $(u_j)_{j=1}^s$ be words over
$\Sigma$. We call $u_1^{r_1} u_2^{r_2} \ldots u_s^{r_s}$ a {\em{spiral decomposition}}
of $u$ if:
\begin{enumerate}
\item $u = u_1^{r_1} u_2^{r_2} \ldots u_s^{r_s}$;
\item each word $u_j$ contains each symbol of $\Sigma$ at most once;
\item for each $1 \leq j < s$, there exists a symbol of $\Sigma$ that appears
in exactly one of the two words $u_j$ and $u_{j+1}$;
\item for each letter $\sigma \in \Sigma$ that appears in $u$, 
  there exists $j(\sigma,1)$, $j(\sigma,2)$ such that $\sigma$ appears
  in $u_j$ if and only if $j(\sigma,1) \leq j \leq j(\sigma,2)$.
\end{enumerate}
\end{defin}

\begin{lemma}\label{lem:rep-len}
Let $u$ be a word over an alphabet $\Sigma$
and let $u_1^{r_1} u_2^{r_2} \ldots u_s^{r_2}$ be a spiral decomposition
of $u$. Then $s \leq 2|\Sigma|$.
\end{lemma}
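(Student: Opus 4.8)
The plan is to track, for each block index $j\in\{1,\dots,s\}$, the set $\Sigma_j\subseteq\Sigma$ of symbols occurring in $u_j$. By property (2) the word $u_j$ contains each symbol at most once, so $\Sigma_j$ determines $u_j$ up to the order of its letters; by property (4), for every symbol $\sigma$ occurring in $u$ the set of indices $j$ with $\sigma\in\Sigma_j$ is a nonempty contiguous interval, namely $\{j: j(\sigma,1)\le j\le j(\sigma,2)\}$; and by property (3) we have $\Sigma_j\neq\Sigma_{j+1}$ for every $j<s$. First I would note that we may assume every $u_j$ is nonempty — this is implicit in the way spiral decompositions arise (and property (3) already forbids two consecutive empty blocks) — so in particular $\Sigma_1\neq\emptyset$ and $\Sigma_s\neq\emptyset$.

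The core of the argument is to count the $s-1$ ``transitions'' $j\mapsto j+1$. For each $j<s$, property (3) lets me pick a symbol $\sigma_j\in\Sigma_j\triangle\Sigma_{j+1}$; since $\sigma_j$ occurs in $u_j$ or in $u_{j+1}$ it occurs in $u$, so the interval $\{j(\sigma_j,1),\dots,j(\sigma_j,2)\}$ is defined, and the fact that $\sigma_j$ lies in exactly one of $\Sigma_j,\Sigma_{j+1}$ forces $j$ to be one of the two ``boundary indices'' $j(\sigma_j,1)-1$ or $j(\sigma_j,2)$ of that interval. Hence the map $j\mapsto\sigma_j$ from $\{1,\dots,s-1\}$ into the set $\Sigma'\subseteq\Sigma$ of symbols occurring in $u$ is at most $2$-to-$1$ (each $\sigma$ can be hit only at $j=j(\sigma,1)-1$ and at $j=j(\sigma,2)$, which are distinct). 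This already gives $s-1\le 2|\Sigma'|\le 2|\Sigma|$, i.e. $s\le 2|\Sigma|+1$.

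To remove the spurious $+1$ I would use the endpoints. Fix $a\in\Sigma_1$ and $b\in\Sigma_s$ (they exist by nonemptiness). Since $j(a,1)=1$, the index $j(a,1)-1=0$ is out of range, so $a$ is the image of at most one transition; symmetrically, since $j(b,2)=s$ is out of range, $b$ is the image of at most one transition. If $a\neq b$ this yields $s-1\le 1+1+2(|\Sigma'|-2)=2|\Sigma'|-2$; if $a=b$ then that symbol has $j(a,1)=1$ and $j(a,2)=s$, so both its boundary indices are out of range and it is the image of no transition at all, giving $s-1\le 2(|\Sigma'|-1)=2|\Sigma'|-2$. In both cases $s\le 2|\Sigma'|-1\le 2|\Sigma|$, which is the claim.

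The main obstacle here is bookkeeping rather than any real mathematical difficulty: getting the off-by-one in the ``$2$-to-$1$'' statement right (a transition at index $j$ ``touches'' the interval of $\sigma$ precisely when $j\in\{j(\sigma,1)-1,\ j(\sigma,2)\}$), and then doing the endpoint accounting carefully enough to push the bound from $2|\Sigma|+1$ down to $2|\Sigma|$. One should also double-check that the nonemptiness assumption on the blocks $u_j$ is legitimate, since without it the bound fails in trivial degenerate cases (e.g. writing $u=a$ as an empty block, then $a$, then an empty block would formally satisfy all four conditions with $s=3>2=2|\Sigma|$).
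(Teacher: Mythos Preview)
Your argument is correct and rests on the same idea as the paper's one-line proof --- counting endpoints of the intervals $[j(\sigma,1),j(\sigma,2)]$ --- but your transition-based bookkeeping is actually more careful: the paper asserts that each $u_i$ is the first or last word containing some symbol, which fails for instance when $\Sigma_1=\{a,b\}$, $\Sigma_2=\{a\}$, $\Sigma_3=\{a,c\}$ (there $u_2$ is neither first nor last for any symbol), whereas your accounting via the $s-1$ transitions together with the endpoint trim sidesteps this. Your caveat that nonemptiness of each $u_j$ is being tacitly assumed is also well taken.
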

\begin{proof}
The claim follows from the fact that, for each $1 \leq i \leq s$,
there exists a symbol $\sigma \in \Sigma$, such that $u_i$ is either the first
word that contains $\sigma$, or the last one.
\end{proof}

\begin{defin}[nonnested word]
A word $u$ over an alphabet $|\Sigma|$ is called {\em{nonnested}}
if it satisfies the following property: for any letters $a,b \in \Sigma$, $a \neq b$,
and any (possibly empty) words $w_1,w_2,w_3$ such that $u = w_1 a w_2 a w_3$, 
if $b$ appears in $w_1$ and $w_3$, then $b$ appears in $w_2$ as well. 
\end{defin}

\begin{lemma}\label{lem:rep}
Let $u$ be a nonnested word over an alphabet $|\Sigma|$.
Then one can in polynomial time compute a spiral decomposition $u_1^{r_1} u_2^{r_2} \ldots u_s^{r_s}$ of $u$.
\end{lemma}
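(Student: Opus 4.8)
The plan is to extract the spiral decomposition of a nonnested word $u$ greedily from left to right, repeatedly peeling off a maximal block of the form $w^r$ where $w$ contains each symbol of $\Sigma$ at most once. First I would observe that, since $u$ is nonnested and finite, either $u$ itself contains some symbol twice or $u$ already contains every symbol at most once (in which case $u=u_1^1$ is trivially a spiral decomposition with $s=1$). In the former case, let $B$ be the first symbol of $u$, and look at the prefix of $u$ up to and including the next occurrence of $B$; the word $w$ lying strictly between these two occurrences, together with $B$ in front, is a candidate for the repeated block --- but we must first verify that $w$ can be chosen to contain each symbol at most once. This is where nonnestedness enters: if some symbol $b$ appeared twice in the segment between the first two $B$'s, then $B$ appears before and after both copies of $b$, so by nonnestedness $B$ would appear between the two $b$'s, contradicting that we took consecutive occurrences of $B$. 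Hence the block $w_1 := B\cdot(\text{segment up to but not including the next }B)$ is square-free in the sense required, i.e.\ each symbol appears at most once in $w_1$.

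Next I would let $r_1$ be the largest integer such that $u$ begins with $w_1^{r_1}$, write $u = w_1^{r_1} u'$, and recurse on $u'$ to obtain $u' = u_2^{r_2}\cdots u_s^{r_s}$; set $u_1 := w_1$. Properties (1) and (2) of a spiral decomposition are immediate from the construction. Property (3) --- that consecutive blocks differ in at least one symbol --- needs a short argument: if $u_j$ and $u_{j+1}$ used exactly the same symbol set, then by Lemma~\ref{lem:rep-len}-style reasoning (or directly) one could argue maximality of $r_j$ was violated, or the two blocks could be merged; more carefully, if $u_j$ and $u_{j+1}$ have the same symbol set and start with the same symbol $B$, then $u_j^{r_j}u_{j+1}$ would already be a power of a square-free word extending the $j$-th block, contradicting maximality of $r_j$; if they start with different symbols one has to use that $u_{j+1}$ also arose as a "prefix up to the repeat of its first symbol," together with nonnestedness, to derive a contradiction. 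I expect this to require a little care but no deep idea. The polynomial running time is clear: each peeling step scans $u$ a constant number of times and removes a nonempty prefix.

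The main obstacle I anticipate is property (4): the requirement that for every symbol $\sigma$ the set of indices $j$ with $\sigma \in u_j$ is a contiguous interval $[j(\sigma,1),j(\sigma,2)]$. This is precisely a reformulation of nonnestedness at the level of the decomposition, and the natural proof is by contradiction: if $\sigma$ appears in $u_{j_1}$ and $u_{j_3}$ but not in $u_{j_2}$ for $j_1<j_2<j_3$, then pick the first symbol $B$ of $u_{j_2}$; $B$ occurs in $u_{j_2}^{r_{j_2}}$ and, because the blocks before and after also must reach back to a common repeated symbol, one locates two occurrences of $\sigma$ straddling a full copy of $u_{j_2}$ with no $\sigma$ in between, and an occurrence of $B$ both before the earlier $\sigma$ and after the later $\sigma$ (or symmetrically with roles of $\sigma$ and $B$ swapped), directly violating the nonnested condition. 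Making this bookkeeping precise --- i.e.\ matching up which symbol plays the role of "$a$" and which plays "$b$" in the definition of nonnested, and ensuring the relevant occurrences really do exist given how the blocks were peeled --- is the delicate part, and I would spend most of the write-up there. Everything else is routine induction on the length of $u$.
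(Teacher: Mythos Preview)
Your greedy idea is right in spirit and close to the paper's, but there is a concrete gap in how you define the first block. You set $B$ to be the first symbol of $u$ and then look for the next occurrence of $B$; however, the first symbol need not be the one that repeats. The word $u = ABCBC$ is nonnested (check all splittings), yet $A$ occurs only once, so your construction of $w_1$ is undefined there. Your nonnestedness argument that the segment between the two $B$'s is repetition-free is perfectly valid when $B$ does repeat, but it simply does not apply in this case.

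The paper sidesteps this by using a slightly different greedy rule: let $v_j$ be the \emph{longest prefix of the remaining word containing each letter at most once}. Property~(2) then holds by definition, and the construction guarantees that the first letter of $v_{j+1}$ already occurs in $v_j$; this single observation drives clean proofs of both (3) and (4) via nonnestedness. Only at the very end does the paper collapse runs of identical $v_j$'s into powers $u_j^{r_j}$. By contrast, your scheme extracts a maximal power $w_1^{r_1}$ immediately and recurses, which can lead to a different decomposition (e.g.\ on $u=ABABCABC$ the paper gets $(AB)^1(ABC)^2$ while your rule gives $(AB)^2(CAB)^1(C)^1$); both are valid spiral decompositions, but with your extraction the first letter of $u_{j+1}$ is no longer automatically in $u_j$, which makes your sketches for (3) and (4) a bit more work than you suggest. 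They can still be carried out (for (4): if $r_{j_2}\ge 2$ any letter of $u_{j_2}$ repeats inside $u_{j_2}^{r_{j_2}}$ with no $\sigma$ in between; if $r_{j_2}=1$ the first letter of $u_{j_2+1}$ plays that role), but the paper's ordering of steps makes the write-up shorter.
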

\begin{proof}
Consider the following greedy algorithm that decomposes
$u$ as $v_1v_2\ldots v_q$: construct $v_j$ consecutively, taking $v_j$ to be the longest
possible word that does not contain the same letter of $\Sigma$ twice.
By construction, the words $v_j$ contain each letter of $\Sigma$ at most once
and $v_{j+1}$ starts with a letter that appears in $v_j$ for each $1 \leq j < q$.

Now assume that $\sigma \in \Sigma$ appears both in $v_{i_1}$ and $v_{i_3}$, but does not appear in $v_{i_2}$ for some $i_1 < i_2 < i_3$. Let $\sigma'$ be the first letter of $v_{i_2+1}$;
by the properties of the words $v_i$, $\sigma'$ appears in $v_{i_2}$ and $\sigma' \neq \sigma$.
However, this contradicts the assumptions on the word $u$ for $a=\sigma'$ and $b=\sigma$.
Therefore, for each letter $\sigma \in \Sigma$ that appears in $u$, there exist
$i(\sigma,1)$, $i(\sigma,2)$ such that $\sigma$ appears in $v_i$ iff $i(\sigma,1) \leq i \leq i(\sigma,2)$.

Now assume that, for some $1 \leq i < q$, $v_i$ and $v_{i+1}$ consist of the same
letters of $\Sigma$, but in different order. Let $v_i = v \sigma_i v_i'$, $v_{i+1} = v \sigma_{i+1} v_{i+1}'$ where $\sigma_i \neq \sigma_{i+1}$, i.e., the first position
on which $v_i$ and $v_{i+1}$ differ is $|v|+1$. However, in this case $u$ contains
$v_i' v \sigma_{i+1}$ as a subword that does contain the letter $\sigma_{i+1}$ twice
and does not contain $\sigma_i$, whereas both $v \sigma_i$ and $v_{i+1}'$ contain $\sigma_i$,
a contradiction with the properties of $u$.

Therefore, if we define $u_1,u_2,\ldots, u_s$ as the sequence of pairwise
distinct words of the sequence $v_1,v_2,\ldots,v_q$, and $r_j$ as the number
of consecutive occurrences of $u_j$ in the sequence $v_1,v_2,\ldots,v_q$, we obtain
a correct spiral decomposition of $u$.
\end{proof}

By Corollaries \ref{cor:bwrep} and \ref{cor:bprep}, both bundle words and bundle profiles are nonnested.
We infer the following decomposition statements.

\begin{corollary}[Bundle word decomposition]\label{cor:bwdecomp}
Let $(G,\decomp,\bundleset)$ be a bundled instance and let $P$ be a path in $G$.
Then there exists a spiral decomposition 
$\bundleword(P) = u_1^{r_1} u_2^{r_2} \ldots u_s^{r_s}$ where $s \leq 2|\bundleset|$.
Moreover, if $u_j$ is a cyclic shift of $u_{j'}$ for some $1 \leq j < j' \leq s$ and $r_j,r_{j'} > 1$,
  then $u_j$ contains a bundle incident to a ring component.
\end{corollary}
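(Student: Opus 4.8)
The plan is to combine Lemma~\ref{lem:rep} and Lemma~\ref{lem:rep-len} for the existence of a spiral decomposition with $s \le 2|\bundleset|$, and then prove the additional claim about cyclic shifts by a direct topological argument using the spiral cuts of Lemma~\ref{lem:spiral-split} and Corollary~\ref{cor:bwrep}. First I would apply Corollary~\ref{cor:bwrep} to conclude that $\bundleword(P)$ is a nonnested word over $\bundleset$: indeed, for any two occurrences of a bundle $B'$ in $\bundleword(P)$ that are "far apart" one gets a spiral $w$ with first and last symbol $B'$, and any other bundle $B''$ appearing on $P$ both before and after $w$ must appear inside $w$ as well — which is exactly the nonnested property with $a = B'$, $b = B''$ (the degenerate situation where $w$ contains a ring-adjacent subword $B_1 B_2$ is harmless here because the nonnested property only speaks about which bundles \emph{occur}, not about inside/outside). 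Then Lemma~\ref{lem:rep} produces a spiral decomposition $\bundleword(P) = u_1^{r_1} u_2^{r_2} \cdots u_s^{r_s}$ in polynomial time, and Lemma~\ref{lem:rep-len} gives $s \le 2|\bundleset|$.

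For the "moreover" part, suppose toward a contradiction that $u_j$ is a cyclic shift of $u_{j'}$ with $1 \le j < j' \le s$, both $r_j, r_{j'} > 1$, and every bundle appearing in $u_j$ (equivalently in $u_{j'}$, since they are cyclic shifts of each other) has both endpoints on disc components. The two blocks $u_j^{r_j}$ and $u_{j'}^{r_{j'}}$ both contain at least one full spiral of the form $u_j B_j$ (up to cyclic shift), where $B_j$ is the first symbol of $u_j$, and of the form $u_{j'} B_{j'}$; since $u_{j'}$ is a cyclic shift of $u_j$, the \emph{set} of symbols is the same, call it $\Sigma_0 = \{B_j, \ldots\}$. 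The key geometric point, via Lemma~\ref{lem:spiral-split}, is that a spiral cut $\gamma$ associated with a spiral $u_j B_j$ in the block $u_j^{r_j}$ separates the plane into an inside and an outside, and $P$ does not cross $\gamma$ except along the spiral itself. Consequently every bundle symbol appearing in $\bundleword(P)$ strictly before the block $u_j^{r_j}$ that is \emph{not} in $\Sigma_0$ lies on one fixed side of $\gamma$, and every symbol appearing strictly after it and not in $\Sigma_0$ lies on the other side (this is the content of Corollary~\ref{cor:bwrep}, using that $u_j$ consists only of disc-adjacent bundles so "inside/outside $w$" is well defined). The same is true for a spiral cut $\gamma'$ coming from the block $u_{j'}^{r_{j'}}$.

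Now I would derive the contradiction from the structure of a spiral decomposition: by property~(4) of a spiral decomposition, each symbol occurs in a contiguous range of blocks $u_{i(\sigma,1)}, \ldots, u_{i(\sigma,2)}$, and by property~(3) consecutive blocks differ in their symbol sets; since $u_j$ and $u_{j'}$ have the same symbol set $\Sigma_0$ but $j < j'$, there must be some intermediate block $u_m$ with $j < m < j'$ (or else $j' = j+1$ and $u_{j+1}$ must differ from $u_j$ in its symbol set, contradicting that they are cyclic shifts) — so in fact $j' \ge j+2$ and there is a symbol $\sigma^\star$ appearing in $u_j$ and $u_{j'}$ but \emph{not} in some block $u_m$ strictly between them, which already contradicts property~(4) (contiguity of the occurrence range). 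This means two genuinely distinct blocks of a spiral decomposition can never have the same symbol set unless they are equal as words up to the ring-adjacency escape clause; I would then check that the only way $u_j$ and $u_{j'}$ can be cyclic shifts of each other with the same symbol set while being \emph{different words} is if the spiral $u_j B_j$ passes through a ring component — because for disc-only bundles, Lemma~\ref{lem:spiral-split} forces the unambiguous inside/outside labeling, hence forces $P$ to traverse $\Sigma_0$ in the same cyclic order each time it winds, so $u_j = u_{j'}$ as words, contradicting that they are distinct blocks of the spiral decomposition (which by construction collapses equal consecutive words). The main obstacle will be making the "same cyclic order" argument precise: one must carefully track, for each bundle in $\Sigma_0$, which of the finitely many arcs of that bundle $P$ uses on each turn, and argue via planarity (Lemma~\ref{lem:bporder}-style monotonicity applied within the spiraling ring of Lemma~\ref{lem:bwrep:spiraling-ring}) that the cyclic order of bundle symbols along each turn is forced to be identical — this is where the disc-adjacency hypothesis is essential and where the ring-component exception genuinely arises.
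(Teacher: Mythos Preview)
Your first part is correct and matches the paper: Corollary~\ref{cor:bwrep} shows $\bundleword(P)$ is nonnested, then Lemmas~\ref{lem:rep} and~\ref{lem:rep-len} give the decomposition with $s \le 2|\bundleset|$.

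The ``moreover'' part has a genuine error. You write: ``there is a symbol $\sigma^\star$ appearing in $u_j$ and $u_{j'}$ but \emph{not} in some block $u_m$ strictly between them, which already contradicts property~(4).'' This is backwards. Property~(4) says precisely that any symbol appearing in both $u_j$ and $u_{j'}$ \emph{must} appear in every intermediate $u_m$; so no such $\sigma^\star$ can exist, and nothing is contradicted at this point. The spiral-decomposition axioms alone do not forbid two blocks with the same symbol set. The correct combinatorial observation is the \emph{opposite} direction: since $u_j$ and $u_{j'}$ are cyclic shifts they share the symbol set $\Sigma_0$, so property~(4) forces $\Sigma_0 \subseteq u_m$ for every $j \le m \le j'$; then property~(3) at $m=j+1$ (together with $j' > j+1$) forces $u_{j+1}$ to contain some bundle $B \notin \Sigma_0$, hence $B \notin u_j$ and $B \notin u_{j'}$.

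With this $B$ in hand, the paper finishes in one stroke. Because $u_{j'}$ is a cyclic shift of $u_j$ and $r_j,r_{j'}\ge 2$, the \emph{same} spiral word $w=u_jB_j$ (with $B_j$ the first letter of $u_j$) occurs as a subword of both $u_j^{r_j}$ and $u_{j'}^{r_{j'}}$. Thus $\bundleword(P)$ contains two occurrences of $w$ with the extraneous bundle $B\notin w$ appearing strictly between them. Applying Lemma~\ref{lem:spiral-split}/Corollary~\ref{cor:bwrep} (in the disc-only form, where inside/outside of $w$ is intrinsic) to both occurrences forces $B$ to lie on opposite sides of $w$, a contradiction. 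Your final paragraph's plan --- arguing that disc-adjacency forces the same cyclic order on each turn and hence $u_j=u_{j'}$ as words --- is both unnecessary (the corollary does not assume $u_j\neq u_{j'}$) and not how the contradiction arises.
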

\begin{proof}
The first part of the corollary is straightforward.
For the second part, first note that,
as $u_j$ and $u_{j+1}$ need to differ on at least one symbol, we have $j' > j+1$.
As $u_j$ and $u_{j'}$ are cyclic shifts, $u_{j+1}$ contains a bundle $B$
that does not appear in $u_j$ nor in $u_{j'}$.
However, if $r_j,r_{j'} > 1$ then the terms
$u_j^{r_j}$ and $u_{j'}^{r_{j'}}$ contain the same spiral, a contradiction
to Lemma \ref{lem:spiral-split} unless $u_j$ contains a bundle incident to a ring component.
\end{proof}

\begin{corollary}[Bundle profile decomposition]\label{cor:bpdecomp}
Let $(G,\decomp,\bundleset)$ be a bundled instance, let $B$ be a bundle in $G$ and let $(P_i)_{i=1}^k$ be a solution to \probshort{} in $G$.
Then there exists a spiral decomposition
$\bundleprof(P) = u_1^{r_1} u_2^{r_2} \ldots u_s^{r_s}$ where $s \leq 2|\bundleset|$.
\end{corollary}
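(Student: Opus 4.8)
The plan is to follow the template of the proof of Corollary~\ref{cor:bwdecomp}, adapting it from bundle words to bundle profiles. There are three steps: first show that $\bundleprof(B)$ is a nonnested word over the alphabet $\Sigma := \{1,2,\ldots,k\}$; then apply Lemma~\ref{lem:rep} to obtain a spiral decomposition $\bundleprof(B) = u_1^{r_1} u_2^{r_2} \cdots u_s^{r_s}$; and finally bound $s$ by Lemma~\ref{lem:rep-len}. All the content is in the first step, since Corollary~\ref{cor:bprep} already records the relevant combinatorial property of $\bundleprof(B)$ --- it is just phrased for \emph{spiral packs} rather than for arbitrary nested pairs of occurrences, so a short reduction is needed.

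The reduction I would use is a minimal-counterexample argument. Suppose $\bundleprof(B)$ is not nonnested, and among all witnesses $\bundleprof(B) = w_1\, a\, w_2\, a\, w_3$ with $a \ne b$, $b$ occurring in both $w_1$ and $w_3$ but not in $w_2$, pick one minimizing $|w_2|$. Re-choosing the left occurrence of $a$ to be the last occurrence of $a$ in $\bundleprof(B)$ that precedes the right occurrence, we may assume these two occurrences of $a$ are consecutive; this only shortens $w_2$ and keeps $b$ before the first $a$, after the second $a$, and absent from $w_2$, so in particular $w_2$ now contains no $a$. If some letter $c$ repeats inside $w_2$, write $w_2 = v_1\, c\, v_2\, c\, v_3$ with $v_2$ free of $c$; then $\bundleprof(B) = (w_1 a v_1)\, c\, v_2\, c\, (v_3 a w_3)$ is a witness for the pair $(c,a)$ --- $a$ appears in $w_1 a v_1$ and in $v_3 a w_3$ but not in $v_2 \subseteq w_2$ --- with strictly shorter gap $v_2$, contradicting minimality. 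Hence $w_2$ has pairwise distinct letters, so $a\, w_2\, a$ is a spiral pack; Corollary~\ref{cor:bprep} applied to this spiral pack with prefix $w_1$ and suffix $w_3$ then forces $b$ to occur in $a\, w_2\, a$, i.e.\ $b \in w_2$ or $b = a$, a contradiction. Thus $\bundleprof(B)$ is nonnested.

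Given nonnestedness, Lemma~\ref{lem:rep} produces a spiral decomposition in polynomial time and Lemma~\ref{lem:rep-len} gives $s \le 2|\Sigma| = 2k$. To match the stated bound I would note $k \le |\bundleset|$ by a short counting argument: each of the $2k$ single-vertex terminal components is incident (through the unique arc at the corresponding terminal, which is a bundle arc) to at least one bundle, while each bundle is incident to at most two components, so $|\bundleset| \ge k$ and hence $s \le 2k \le 2|\bundleset|$. The main obstacle is the first step, namely promoting the spiral-pack statement of Corollary~\ref{cor:bprep} to the full nonnested property; once the consecutive-occurrence reduction and the minimality of the gap are in place this is routine, and the remainder is a direct appeal to Lemmas~\ref{lem:rep} and~\ref{lem:rep-len}.
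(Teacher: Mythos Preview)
Your proposal is correct and matches the paper's approach: the paper simply asserts (just before the two corollaries) that ``By Corollaries~\ref{cor:bwrep} and~\ref{cor:bprep}, both bundle words and bundle profiles are nonnested'' and then immediately states Corollary~\ref{cor:bpdecomp} as a consequence of Lemmas~\ref{lem:rep} and~\ref{lem:rep-len}. You correctly identified and filled in the one detail the paper leaves implicit --- that Corollary~\ref{cor:bprep} is stated only for spiral packs, so a short minimal-counterexample reduction is needed to upgrade it to the full nonnested property --- and your reduction is clean and correct. Note also that Lemma~\ref{lem:rep-len} directly gives $s \le 2k$, which is the bound the paper actually uses downstream (e.g.\ in the proof of Theorem~\ref{thm:isolation-passages}); the $2|\bundleset|$ in the corollary statement is just the weaker bound carried over from the bundle-word case, and your justification $k \le |\bundleset|$ (equivalently, the paper's $2k \le |\decomp| \le |\bundleset|+1$) bridges the two.
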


We also note the following.
\begin{lemma}\label{lem:rep-must}
Let $w$ be a word over an alphabet $\Sigma$ and assume $w$ contains
a subword $v^q\sigma$, where $v$ contains each letter of $\Sigma$ at most once,
  $\sigma$ is the first letter of $v$ and $q \geq 3$.
  Then any spiral decomposition of $w$ needs to contain
a term $u^r$, where $u$ is a cyclic shift of $v$ and $r \geq q-1$.
\end{lemma}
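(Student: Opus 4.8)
The plan is to exploit the greedy construction from the proof of Lemma~\ref{lem:rep} together with the structural rigidity of spiral decompositions. Recall that every spiral decomposition $u_1^{r_1}\cdots u_s^{r_s}$ of $w$ can be refined to the canonical greedy decomposition $v_1 v_2 \cdots v_q$ (take $v_j$ maximal without a repeated letter), and conversely each $u_j^{r_j}$ is a block of consecutive equal $v$'s; so it suffices to understand what the greedy decomposition does on the distinguished occurrence of $v^q\sigma$ inside $w$. First I would fix an occurrence of $v^q\sigma$ in $w$, write $v = \sigma \sigma_2 \cdots \sigma_\ell$ (all distinct), and track where the greedy cut points fall inside this occurrence. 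The key observation is that $v$ is already maximal without a repeated letter (appending $\sigma$ would repeat $\sigma$), so the greedy algorithm, once it starts a new word at the first $\sigma$ of some copy of $v$, is \emph{forced} to cut exactly after reading $v$: it cannot stop earlier because every proper prefix of $v$ has no repeated letter and can be extended, and it cannot continue past $v$ because the next letter is $\sigma$ again, which is already present. Hence, if a greedy word starts at the beginning of the $i$-th copy of $v$ (for $i$ in the ``interior'' of the run), the corresponding $v_j$ equals $v$ exactly, and the next word again starts at the beginning of the $(i{+}1)$-th copy.

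The second step is to handle the boundary effects. The greedy decomposition, when it reaches the region $v^q\sigma$, may have a ``ragged'' first cut: the word $v_{j_0}$ containing the first $\sigma$ of $v^q$ might start somewhere before this occurrence, so that this $\sigma$ is not at the start of $v_{j_0}$. But then $v_{j_0}$ already contains $\sigma$, so $v_{j_0}$ must end strictly before the next $\sigma$ (the start of the second copy of $v$), i.e.\ within the first copy of $v$. After that, by the forcing argument above, we get $q-2$ full copies $v, v, \ldots, v$ as greedy words $v_{j_0+1}, \ldots, v_{j_0+q-2}$ aligned exactly with copies $2,\ldots,q-1$ of $v$, and then possibly a ragged last word absorbing the $q$-th copy of $v$ together with the trailing $\sigma$ and whatever follows. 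Counting carefully: we obtain at least $q-2$ consecutive greedy words all equal to $v$; depending on whether copies $1$ and $q$ happen to be cut cleanly we may gain one or two more. This already yields a run of $\geq q-2$ equal words in the greedy decomposition. To squeeze out the claimed bound $r\geq q-1$, I would argue one more alignment: the word $v_{j_0}$, though possibly not starting at the first $\sigma$, still ends inside the first copy of $v$ and therefore its suffix is a prefix of $v$; combined with the next full $v$ this shows that actually the run of full copies of $v$ starts no later than with copy $1$ (after re-associating the greedy parse locally), giving $q-1$ copies; a symmetric observation at the right end is not needed. Finally, since any spiral decomposition groups consecutive equal greedy words into maximal blocks $u_j^{r_j}$, the block containing these $q-1$ consecutive copies of $v$ has exponent $r_j \geq q-1$, and its base word $u_j$ is some cyclic shift of $v$ --- a cyclic shift rather than $v$ itself, because a spiral decomposition is free to choose a different starting point of the period, exactly as in the non-uniqueness of writing a periodic word $v^m$ as $(v')^m$ for a rotation $v'$ of $v$.

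The main obstacle I anticipate is the bookkeeping at the two boundaries of the run $v^q\sigma$: making rigorous that the ``ragged'' first (and last) greedy word can steal at most one copy of $v$ each, and that the $\sigma$ at the very end of $v^q\sigma$ forces a clean cut right before it, so that one does not accidentally lose a second copy and end up only with $q-2$. This is where one must be careful about whether $\sigma$ coincides with the first letter of $v$ (it does, by hypothesis) and use that fact to pin down the cut. The reduction from an arbitrary spiral decomposition to the greedy one, and the cyclic-shift conclusion, are then routine: the greedy decomposition refines every spiral decomposition (each $u_j^{r_j}$ is a concatenation of greedy words, and since $u_j$ is repetition-free of length $\le|\Sigma|$ while greedy words are maximal repetition-free, a length/maximality argument forces $u_j^{r_j}$ to be a sequence of $r_j$ copies of a single rotation of each other), so a run of $\geq q-1$ equal greedy words forces some block $u_j^{r_j}$ with $r_j\geq q-1$ whose base is a cyclic shift of $v$.
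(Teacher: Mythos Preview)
Your approach has a genuine gap: the claim that ``every spiral decomposition can be refined to the canonical greedy decomposition'' (equivalently, that each block $u_j^{r_j}$ of an arbitrary spiral decomposition is a concatenation of greedy words) is false. Take $w=abcabcabca$, so $v=abc$, $q=3$, $\sigma=a$. The greedy decomposition is $(abc)^3\cdot a$, but $ab\cdot(cab)^2\cdot ca$ is also a valid spiral decomposition (check properties 3 and 4), and none of its blocks is a union of greedy blocks; the cut points simply do not align. So there is no refinement relation between an arbitrary spiral decomposition and the greedy one, and your final reduction step collapses. A secondary issue is that Lemma~\ref{lem:rep} (the greedy construction) assumes $w$ is nonnested, which Lemma~\ref{lem:rep-must} does not; you would first have to show that any word admitting a spiral decomposition is nonnested.

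The paper's proof avoids this entirely by arguing directly about an \emph{arbitrary} spiral decomposition $u_1^{r_1}\cdots u_s^{r_s}$. For any letter $\tau$ in the middle $v^{q-2}\sigma$ of the occurrence of $v^q\sigma$, look at the copy of $u_j$ containing $\tau$. If $|u_j|>|v|$ then this copy, being a substring of $v^q\sigma$ of length exceeding the period, contains a repeated letter, contradicting property~2. If $|u_j|<|v|$ then some letter of $v$ is missing from $u_j$, yet it occurs in $v^q\sigma$ both before and after this copy, contradicting property~4. Hence $|u_j|=|v|$ and $u_j$ is a cyclic shift of $v$; since this holds for every $\tau$ in the middle and property~3 forbids two adjacent $u_j$'s with the same letter set, all these letters belong to the same block $u_j^{r_j}$, forcing $r_j\ge q-1$. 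The example $ab\cdot(cab)^2\cdot ca$ above shows this bound is tight and also shows that the cyclic shift in the conclusion is essential.
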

\begin{proof}
Consider a spiral decomposition $w = u_1^{r_1} u_2^{r_2} \ldots u_s^{r_s}$.
Let $\tau$ be any letter of the middle $v^{q-2}\sigma$ subword of $v^q\sigma$ in $w$.
Assume that this occurrence of $\tau$ appears
in $u_j$ in the spiral decomposition. Then, if $|u_j| > |v|$, $u_j$ contains
the same symbol twice, and if $|u_j| < |v|$, $u_j$ is a subword of $v^q\sigma$
and the word $v^q\sigma$ contains the same symbol on both sides of the subword $u_j$,
and this symbol does not appear in $u_j$. In both cases we have reached a contradiction,
so $|u_j| = |v|$ and $u_j$ is a cyclic shift of $v$. As the choice of $\tau$
was arbitrary, we infer that $r_j \geq q-1$.
\end{proof}

\begin{lemma}\label{lem:rep-perm}
Let $(G,\decomp,\bundleset)$ be a bundled instance. 
Let $P$ be a path in $G$ with $\bundleword(P) = u^3B$ where $u$ contains each bundle at most once,
   $u$ contains only bundles with both endpoints in disc components,
 and $B$ is the first symbol of $u$.
Moreover, let $Q$ be a path in $G$ with $\bundleword(Q) = vB'$ where $v$ is a permutation of $u$ that is not a cyclic shift of $u$ and $B'$ is the first symbol of $v$.
Then $P$ and $Q$ share an internal vertex.
\end{lemma}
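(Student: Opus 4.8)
The plan is to argue by contradiction: suppose $P$ and $Q$ have no common vertex (hence, a fortiori, no common internal vertex), and deduce that $v$ must in fact be a cyclic shift of $u$. Write $u=Bx_2x_3\cdots x_s$ with $x_1:=B$; since no permutation of a word of length $\le 2$ fails to be a cyclic shift of it, we may assume $s\ge 3$. From $\bundleword(P)=u^3B$ one first extracts the combinatorial skeleton of the spiral: there are disc components $H_1,\dots,H_s$ of $\decomp$ such that every arc of the bundle $x_j$ leads from $H_{j-1}$ to $H_j$ for $2\le j\le s$, and every arc of $B$ leads from $H_s$ to $H_1$ (all arcs of a bundle share their orientation). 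Were the $H_j$ pairwise distinct, the bundles $B,x_2,\dots,x_s$ would form, as edges of $\bundlegraph(G,\decomp,\bundleset)$, a genuine directed $s$-cycle $H_1\to H_2\to\cdots\to H_s\to H_1$; the walk in $\bundlegraph$ realizing the spiral $vB'$ of $Q$ (it is a spiral, since $v$ is a permutation of the letters of $u$ and $B'$ its first letter) would then be forced to be one full traversal of this cycle followed by a single extra step along $B'$, and reading off its edge sequence would show that $v$ is a cyclic shift of $u$. The hypothesis $\bundleword(P)=u^3B$ is there precisely to handle the remaining case, where $P$'s spiral passes through some component more than once and this skeleton is not a simple cycle.

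To unfold this, let $w_1,w_2,w_3$ be the three occurrences of the spiral $uB$ inside $\bundleword(P)=u^3B$, in the order they appear along $P$, and let $\gamma_1,\gamma_2,\gamma_3$ be the associated spiral cuts. By Lemma~\ref{lem:spiral-split}, $P$ meets $\gamma_i$ only along the subpath realizing $w_i$, so $\gamma_1,\gamma_2,\gamma_3$ are pairwise disjoint concentric Jordan curves; assume $\gamma_1$ encloses $\gamma_2$, which encloses $\gamma_3$, and let $A_R$ be the spiraling ring of $u^3B$, i.e.\ the closed annulus between $\gamma_1$ and $\gamma_3$. By (the proof of) Lemma~\ref{lem:bwrep:spiraling-ring}, the part of $G$ inside $A_R$ is a cyclic chain of pieces $G_1,\dots,G_s$, with $G_j$ a subgraph of $H_j$, joined by arcs of $x_2,\dots,x_s,B$ exactly as the skeleton prescribes; crucially the $G_j$ \emph{are} pairwise distinct, being the successive slabs carved out of the components by the three spirals of $P$, so the bundle multigraph induced inside $A_R$ really is a directed $s$-cycle. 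Moreover, by the third part of Lemma~\ref{lem:bwrep:spiraling-ring}, any path vertex-disjoint from $P$ that traverses $A_R$ from $\gamma_1$ to $\gamma_3$ has bundle word $u^{2}B$, of length $2s+1$. Since $\bundleword(Q)=vB'$ has length $s+1<2s+1$, $Q$ cannot cross $A_R$ all the way through; and by Lemma~\ref{lem:spiral-split} applied to $w_1$ and $w_3$, $Q$ crosses each of $\gamma_1,\gamma_3$ at most once, and only on an arc of $B$. Hence $Q$ lies entirely in one of the three regions (outside $\gamma_1$, inside $A_R$, inside $\gamma_3$), except possibly for one fringe excursion outside $\gamma_1$ and/or one outside $\gamma_3$ attached to a portion of $Q$ inside $A_R$.

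If $Q$, or its portion inside $A_R$, uses all of the bundles $B,x_2,\dots,x_s$, then that portion is a spiral inside $A_R$ realizing a permutation of the letters of $u$, and the directed-$s$-cycle combinatorics of the first paragraph — now legitimate, as the pieces $G_j$ are distinct — force $v$ to be a cyclic shift of $u$, a contradiction. Thus the crux is to exclude the remaining possibility, that $Q$ uses some bundle $x_j$ (or $B$) on an arc lying strictly outside $\overline{A_R}$; this is the step I expect to be the main obstacle. The plan there is to exploit that $P$ uses each of $x_2,\dots,x_s,B$ exactly three times, all on $\gamma_1\cup\gamma_2\cup\gamma_3\subseteq\overline{A_R}$, so that the region outside $\gamma_1$ (resp.\ inside $\gamma_3$) contains only the fringes of these bundles beyond $P$'s outermost (resp.\ innermost) use; then, applying Lemma~\ref{lem:spiral-split} once more — this time to the spiral cut of $Q$ itself, to localize the arcs of $P$ relative to $Q$ — together with the fact that $Q$ may cross each of $\gamma_1,\gamma_3$ only once and only along $B$, one argues that $Q$ cannot return from such a fringe to close up its turn without re-crossing one of $\gamma_1,\gamma_2,\gamma_3$ a second time. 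This forces the relevant portion of $Q$ to lie genuinely inside $A_R$, and the previous sentence applies.

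Finally, once the contradiction is reached we know $P$ and $Q$ share a vertex. Since no arc incident to a terminal occurs in a spiral and terminals are the only degree-one vertices of $G$, this vertex lies strictly inside both the spiral $uB$ of $P$ and the spiral $vB'$ of $Q$; in particular it is an internal vertex of $P$ and of $Q$, as claimed. To summarize, the only nontrivial work is the fringe-exclusion step of the third paragraph; everything else is a direct appeal to Lemma~\ref{lem:spiral-split}, Lemma~\ref{lem:bwrep:spiraling-ring}, and elementary combinatorics of walks in a directed cycle.
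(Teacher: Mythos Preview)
Your approach differs from the paper's, and the step you yourself flag as the main obstacle --- the ``fringe-exclusion'' --- has a real gap. You correctly observe that $Q$ is too short to traverse $A_R$, but you have not ruled out that $Q$ lies \emph{entirely} outside $\gamma_1$ (or entirely inside $\gamma_3$). In that case there is no ``portion of $Q$ inside $A_R$'' at all, so your re-crossing argument is vacuous: $Q$ crosses none of $\gamma_1,\gamma_2,\gamma_3$, and there is nothing to re-cross. Nothing you have written prevents $Q$ from realizing the spiral $vB'$ wholly in the region outside $\gamma_1$, where the components $H_j$ may coincide and the directed-$s$-cycle combinatorics of your first paragraph need not hold. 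Your hint to ``localize the arcs of $P$ relative to $\gamma_Q$'' via Lemma~\ref{lem:spiral-split} does constrain $P$ to cross $\gamma_Q$ at most once, along $B'$; but that alone yields no contradiction --- $P$ could simply sit on one side of $\gamma_Q$, or cross once, with all three of its $B'$-arcs consistent with either scenario.

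The paper avoids this case split entirely. It keeps only the \emph{middle} spiral cut $\gamma_R$ (between $b_2$ and $b_3$) and shows directly, using the spiraling rings on each side of $R$, that $\gamma_R\cap\gamma_Q=\emptyset$. The real point then is purely topological: since $v$ is a permutation of $u$ but not a cyclic shift, there exist three distinct bundles $B_1,B_R,B_Q$ and a disc component $D$ such that $\gamma_R$ passes through $B_1$, then $D$, then $B_R$, whereas $\gamma_Q$ passes through $B_1$, then $D$, then $B_Q$. A short Jordan-curve argument (assembling an auxiliary closed curve $\gamma$ from a piece of $\gamma_Q$, a piece of $\gamma_R$, and dual paths along $B_1$ and $B_R$) shows that two disjoint simple closed curves cannot ``fork'' at $D$ in this way: $\gamma_R$ would leave $\gamma$ to the same side at $B_1$ and at $B_R$, yet its traverse of $B_Q$ would lie on the other side. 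This argument is indifferent to whether $\gamma_Q$ lies inside or outside $\gamma_R$, so no confinement of $Q$ to $A_R$ is needed. If you want to complete your route, the missing ingredient is precisely this fork argument --- and once you have it, the detour through $A_R$ becomes unnecessary.
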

\begin{proof}
Assume the contrary: $P$ and $Q$ do not intersect, except possibly for the endpoints.

Let $b_1,b_2,b_3,b_4$ be the four arcs of $P \cap B$
in the order of their appearance on $P$
and let $R$ be the subpath of $P$ that starts with $b_2$ and ends with $b_3$;
note that $\bundleword(R) = uB$.
Let $\gamma_R$ and $\gamma_Q$ be spiral cuts corresponding to $R$ and $Q$, respectively.

We claim that $\gamma_Q$ and $\gamma_R$ do not intersect.
Assume the contrary; let $p \in \gamma_Q \cap \gamma_R$.
As $Q$ and $R$ are vertex-disjoint except for possibly the endpoints,
the point $p$ must belong to $\gamma_Q \setminus Q$ or $\gamma_R \setminus R$
(i.e., to the part of $\gamma_Q$ inside $B'$ or the part of $\gamma_R$ inside $B$).
If $p \in \gamma_Q \setminus Q$ then, as $\gamma_R$ visits each bundle of $u$ exactly once,
the midpoints of the two arcs of $Q \cap B'$ lie on different sides of $\gamma_R$,
    and $Q$ needs to contain an arc of $B$ that lies between $b_2$ and $b_3$.
Otherwise, if $p \in (\gamma_R \setminus R) \cap Q$, then, as $\gamma_R \setminus R$ connects $b_2$ with $b_3$ inside $B$,
$Q$ again contains an arc of $B$ that lies between $b_2$ and $b_3$.

If $Q$ is contained in the spiraling ring $A_P$ associated with the path $P$ (recall $\bundleword(P) = u^3B$)
then, by the first claim of Lemma \ref{lem:bwrep:spiraling-ring}, $\bundleword(Q)$ is a subword of $u^{r'}$ for some $r'$, a contradiction
to the assumption that $v$ is not a cyclic shift of $u$. Otherwise, $Q$ contains an arc $b' \in B$, contained either
between $b_1$ and $b_2$ or between $b_3$ and $b_4$. Consider the first case; the second one is symmetrical.
Let $P'$ be the subpath of $P$ between $b_1$ and $b_3$; note that $\bundleword(P') = u^2B$.
Let $A_R'$ be the spiraling ring associated with $P'$ and let $Q'$ be the subpath of $Q$ contained in $A_R'$ that contains the aforementioned
arcs of $B$  between $b_1$ and $b_2$ as well as between $b_2$ and $b_3$.
By the third claim of Lemma \ref{lem:bwrep:spiraling-ring}, applied to the spiraling ring $A_R'$ and the path $Q'$, $\bundleword(Q') = uB$, a contradiction.
Hence, $Q$ cannot contain an arc $b \in B$ that lies between $b_2$ and $b_3$, and $\gamma_Q$ and $\gamma_R$ do not intersect.

As $v$ is a permutation of $u$, but not a cyclic shift of $u$, there exist three pairwise distinct
bundles $B_1,B_R,B_Q$ and a component $D$ such that $\gamma_R$ first goes through
the part of the plane occupied by $B_1$, then through $D$ and then through the part of the plane occupied by $B_R$, whereas $\gamma_Q$ goes thought the
part of the plane occupied by $B_1$, then through $D$ and then through $B_Q$ (see Figure \ref{fig:lem:rep-perm}).
Let $\gamma_Q'$ be the subcurve of $\gamma_Q$ between the midpoint of the traverse of $B_1$
and the midpoint of the traverse of $B_R$ that includes the traverse of $B_Q$. Obtain a curve $\gamma$ by closing $\gamma_Q'$ as follows:
from the midpoint of the traverse of $B_R$ go along the bundle $B_R$ in the dual of $G$ to the curve $\gamma_R$, follow $\gamma_R$
backwards through $G$ to the midpoint of the traverse of $\gamma_R$ though $B_1$ and then go along the bundle $B_1$ in the dual of $G$
to the start of $\gamma_Q'$.

Note that the only intersection of $\gamma$ with $\gamma_R$ is the part of $\gamma_R$
between the midpoint of its traverse of $B_1$ and the midpoint of its traverse of $B_R$, containing the traverse of $D$.
As $\gamma_Q$ lies on the same side of $\gamma_R$ both in $B_1$ and in $B_R$, $\gamma_R$ leaves $\gamma$ to the same side
in $B_1$ and in $B_R$. That is,
the start and the end of the part of curve $\gamma_R$ that traverses $B_1$, $D$ and $B_R$ lies on one of the two sides of $\gamma$.
However, since in $B_1$, $B_Q$ and $B_R$ the curve $\gamma_Q$ needs to lie on the same side of $\gamma_R$,
the part of the curve $\gamma_R$ that traverses $B_Q$ lies on the other side of $\gamma$, a contradiction.
This finishes the proof of the lemma.
\end{proof}

\begin{figure}
\begin{center}
\includegraphics{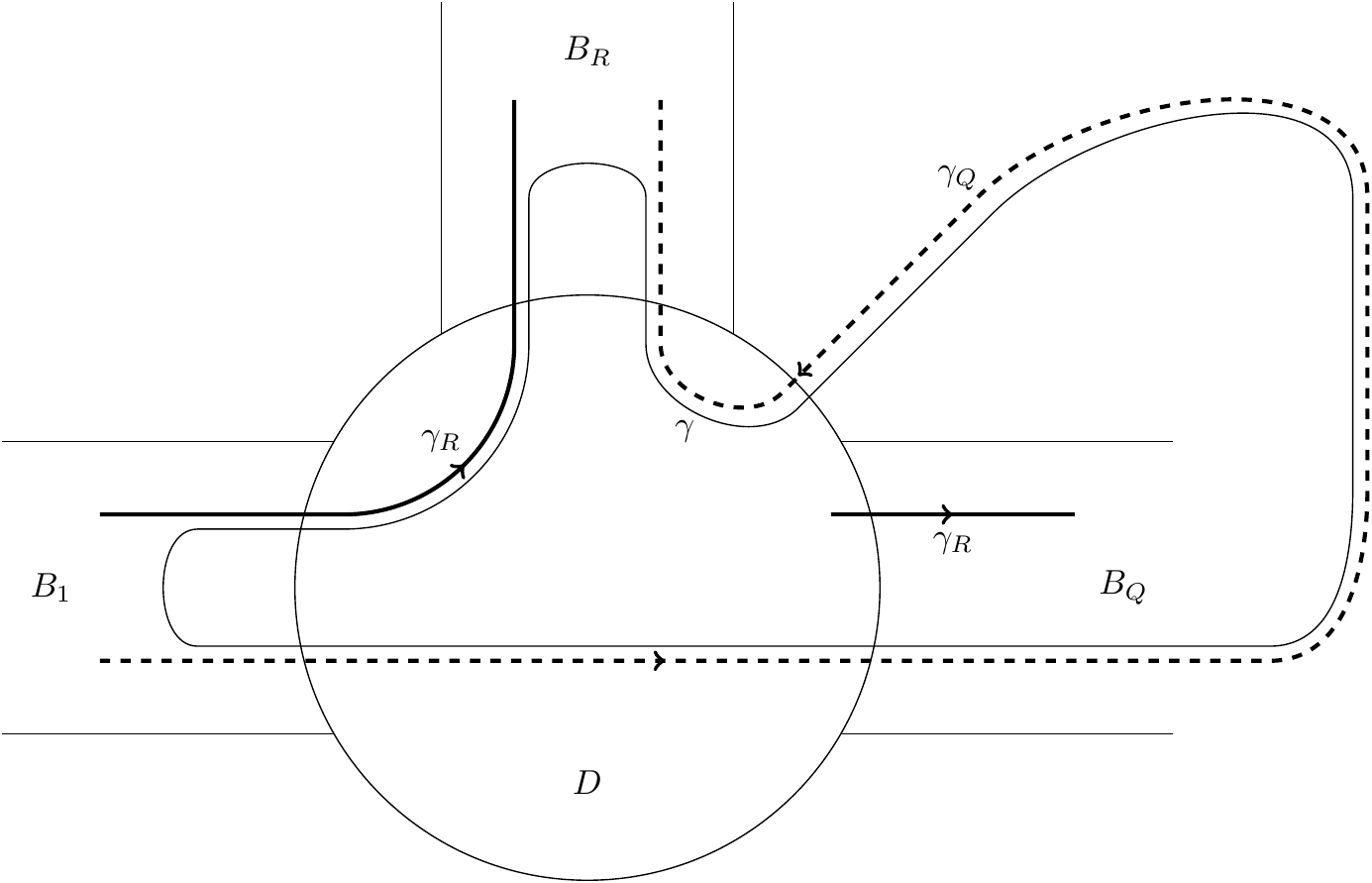}
\caption{An illustration of the proof of Lemma \ref{lem:rep-perm}.}
  \label{fig:lem:rep-perm}
  \end{center}
  \end{figure}

\begin{corollary}\label{cor:copy-word}
Let $\mathcal{P}$ be a family of pairwise vertex-disjoint paths in a 
bundled instance $(G,\decomp,\bundleset)$ with fixed
bundle word decompositions for each $P \in \mathcal{P}$.
Assume there exists a path $P_0 \in \mathcal{P}$ and a subword $u^rB$ of $\bundleword(P_0)$
such that $u$ contains each bundle at most once,
$u$ contains only bundles with both endpoints in disc components,
$B$ is the first symbol of $u$ and $r \geq 4$.
Moreover, assume that any path $P$ starts and ends with an arc that belongs to a bundle
that is not contained in $u$.
Then, for any $P \in \mathcal{P}$
either
\begin{enumerate}
\item the first and the last arc $P$ lie on the same side of the spiral $uB$ and
there does not exist any term $v^q$ in the bundle word decomposition of $\bundleword(P)$
such that $q > 1$ and $v$ is a permutation of $u$;
\item the first and the last arc of $P$ lie on different sides of the spiral $uB$
and there exists exactly one term $v^q$ in the bundle word decomposition of $\bundleword(P)$
where $q>1$ and $v$ is a permutation of $u$; moreover,
in this case $v$ is a cyclic shift of $u$ and $q \geq r-2$.
\end{enumerate}
\end{corollary}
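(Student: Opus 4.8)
The plan is to study a fixed $P\in\mathcal{P}$ through the geometry of the spiraling ring of $u^rB$ on $P_0$. First I would fix notation: let $\gamma_1,\dots,\gamma_r$ be the spiral cuts of the $r$ consecutive spirals $uB$ occurring inside the subword $u^rB$ of $\bundleword(P_0)$, in the order they appear along $P_0$, so that $\gamma_1$ and $\gamma_r$ are the input and output borders of the spiraling ring $A_R$, the curves satisfy $\gamma_1\supset\gamma_2\supset\dots\supset\gamma_r$, and $A_R$ is the annulus between $\gamma_1$ and $\gamma_r$. Since $u$ uses only bundles with both endpoints in disc components, every $\gamma_i$ meets $G$ only along $P_0$ and along arcs of $B$, hence is disjoint from every bundle arc not in $u$; moreover by Lemma~\ref{lem:bwrep:spiraling-ring} the interior of $A_R$ contains no bundle arc outside $u$. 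Consequently every bundle arc not in $u$ lies either outside all of $\gamma_1,\dots,\gamma_r$ or inside all of them, and these are precisely the two \emph{sides of the spiral $uB$}; in particular the first and last arcs of every path in $\mathcal{P}$ each have a well-defined side. By Lemma~\ref{lem:spiral-split}, $P$ crosses each $\gamma_i$ at most once, transversally through an arc of $B$ (for $P=P_0$ use the last sentence of that lemma). Hence if the first and last arc of $P$ lie on the same side of $uB$, then $P$ crosses neither $\gamma_1$ nor $\gamma_r$ (one crossing would strand it on the wrong side), so $P\cap A_R=\emptyset$; and if they lie on different sides, then $P$ passes from the outer to the inner region, crossing $\gamma_1,\dots,\gamma_r$ in this order exactly once each, so the portion $\hat P$ of $P$ between its $\gamma_1$- and $\gamma_r$-crossings lies in $A_R$ and begins and ends with an arc of $B$. (Note $P_0$ itself always falls in the ``different sides'' case, since non-$u$ bundle letters of $\bundleword(P_0)$ precede and follow $u^rB$.)

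I would then establish a sub-claim: if the spiral decomposition of $\bundleword(P)$ has a term $v^q$ with $q\ge 2$ and $v$ a permutation of $u$, then $v$ is a cyclic shift of $u$. Otherwise, take a subpath $Q$ of $P$ with $\bundleword(Q)=vB'$ ($B'$ the first letter of $v$) inside $v^q$, and a subpath $Q_0$ of $P_0$ with $\bundleword(Q_0)=u^3B$ inside $u^rB$ (possible as $r\ge 4$); Lemma~\ref{lem:rep-perm} forces $Q$ and $Q_0$ to share an internal vertex, contradicting pairwise vertex-disjointness of $\mathcal{P}$ when $P\ne P_0$, and when $P=P_0$ contradicting that $P_0$ is a simple path, once one checks that $Q$ and $Q_0$ can be placed in non-overlapping parts of $\bundleword(P_0)$ — which holds because a $\ge 2$-power of a non-cyclic-shift permutation of $u$ overlaps the subword $u^rB$ in fewer than $|u|$ letters. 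Now suppose $P$ is in the ``different sides'' case. Then $\bundleword(P)$ contains $u^{r-1}B$ as a subword: immediately when $P=P_0$, and when $P\ne P_0$ because $\hat P$ is vertex-disjoint from $P_0$, so Lemma~\ref{lem:bwrep:spiraling-ring}(3) gives $\bundleword(\hat P)=u^{r-1}B$. Since $r-1\ge 3$, Lemma~\ref{lem:rep-must} (base $u$, exponent $r-1$, last letter $B$) shows every spiral decomposition of $\bundleword(P)$ has a term $v^q$ with $v$ a cyclic shift of $u$ and $q\ge r-2\ge 2$. For uniqueness, two distinct decomposition terms that are $\ge 2$-powers of permutations of $u$ would, by the sub-claim, both be cyclic shifts of $u$, hence of one another, and then Corollary~\ref{cor:bwdecomp} would force one of them to contain a bundle incident to a ring component, impossible since all bundles of $u$ are disc-to-disc. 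This proves conclusion (2).

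Finally, suppose $P$ is in the ``same side'' case but, for contradiction, the decomposition of $\bundleword(P)$ has a term $v^q$ with $q\ge 2$ and $v$ a permutation of $u$. By the sub-claim $v$ is a cyclic shift of $u$, so writing $u=c_1\cdots c_n$ and $v=c_{j+1}\cdots c_nc_1\cdots c_j$ one has $v^q=(c_{j+1}\cdots c_n)\,u^{q-1}\,(c_1\cdots c_j)$; in particular $\bundleword(P)$ contains a subword $uB$, which — all of whose bundles are disc-to-disc — is a spiral on $P$. Since $v^q$ is a decomposition term, $\bundleword(P)$ has non-$u$ bundle letters both before $v^q$ (the very first letter of $\bundleword(P)$) and after it (the last), so by Corollary~\ref{cor:bwrep} the first and last arcs of $P$ lie on opposite sides of the spiral cut $\delta$ of this occurrence of $uB$ on $P$. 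It then remains to see that $\delta$ and $\gamma_1$ separate the bundle arcs outside $u$ identically: both are spiral cuts realising the same closed walk $uB$ in the bundle graph, hence run through the same disc components $D_i$ in the same cyclic order and through the same bundles of $u$; inside each $D_i$ the relevant subpaths of $P$ and $P_0$ are vertex-disjoint chords joining the same two bundle-regions of $\partial D_i$, so they induce the same bipartition of $\partial D_i$ and of the attachment points of all non-$u$ bundle arcs at $D_i$, while the parts of $\delta$ and $\gamma_1$ running inside bundles of $u$ miss every non-$u$ bundle arc. Thus the first and last arcs of $P$ lie on opposite sides of $\gamma_1$, contradicting that they are on the same side of $uB$; hence no such term exists, proving conclusion (1).

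The hard part will be the topological comparison in the last paragraph — verifying that two spiral cuts realising the same closed walk $uB$ in the bundle graph classify every bundle arc outside $u$ the same way — together with pinning down, once and for all, a single notion of ``side of the spiral $uB$'' so that all the statements above genuinely refer to the same bipartition of the non-$u$ bundle arcs. The case $P=P_0$ in the sub-claim (locating the two subwords in disjoint parts of $\bundleword(P_0)$) is a secondary, essentially routine, complication.
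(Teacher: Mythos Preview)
Your approach is essentially the paper's: the same sub-claim via Lemma~\ref{lem:rep-perm}, the same use of Lemma~\ref{lem:bwrep:spiraling-ring}(3) and Lemma~\ref{lem:rep-must} for the ``different sides'' case, and uniqueness via Corollary~\ref{cor:bwdecomp}.

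The one place where you diverge is the final paragraph. You spend effort on a topological comparison of the two spiral cuts $\delta$ (on $P$) and $\gamma_1$ (on $P_0$), arguing they induce the same bipartition of the non-$u$ bundle arcs. This is unnecessary: the paper already records, right after Lemma~\ref{lem:spiral-split} and in the last sentence of Corollary~\ref{cor:bwrep}, that when the spiral word $uB$ uses only disc-to-disc bundles, the notions ``inside $uB$'' and ``outside $uB$'' are determined by the \emph{word} $uB$ alone, not by the particular path realising it. Hence once you observe that $\bundleword(P)$ contains $uB$ as a subword (which it does, since $v^q$ with $v$ a cyclic shift of $u$ and $q\ge 2$ forces a $uB$ subword), Corollary~\ref{cor:bwrep} immediately places the first and last bundles of $P$ on opposite sides of $uB$ --- the \emph{same} bipartition as the one coming from $P_0$. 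The paper's proof of this direction is literally one sentence. Your worry that ``pinning down a single notion of side'' is the hard part is misplaced; that work was done earlier and you can just cite it.
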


\begin{proof}
First, observe that by Corollary~\ref{cor:bwdecomp} 
there is at most one $v^q$ in the bundle word decomposition of $\bundleword(P)$,
where $v$ is a cyclic shift of $u$ and $q > 1$.
Let us prove that for any $P \in \mathcal{P}$ the bundle word $\bundleword(P)$ does not contain
a subword $v^2$ where $v$ is a permutation of $u$ that is not a cyclic shift of $u$.
This is clearly true for $P \neq P_0$, due to Lemma \ref{lem:rep-perm} and the assumption of the subword $u^rB$ in $\bundleword(P_0)$.
Moreover, any such subword $v^2$ of $\bundleword(P_0)$ would be disjoint with one of the two subwords $u^{r-1}B$ of $u^rB$, which is
in turn a subword of $\bundleword(P_0)$. As $r \geq 4$, the claim follows from Lemma \ref{lem:rep-perm}, as we have
two disjoint (except for possibly the endpoints) subpaths of $P_0$ with bundlewords $v^2$ and $u^3B$.

Now let us consider two cases.
First assume that $P = P_0$. Then the endpoints of $P_0$ lie on different sides of the spiral $uB$, and the bundle word decomposition of $\bundleword(P_0)$ obviously contains $u^r$.

Finally, assume that $P \neq P_0$.
Note that if the endpoints of $P \in \mathcal{P}$ lie on different sides of the spiral $uB$ then, by the third claim of Lemma \ref{lem:bwrep:spiraling-ring}
combined with Lemma \ref{lem:rep-must}, the bundle word decomposition of $P$ needs to contain the term $v^q$ for $v$ being a cyclic shift of $u$
and $q \geq r-2 > 1$. In the reverse direction, if $\bundleword(P)$ contains a subword $v^2$ where $v$ is a cyclic shift of $u$, then
$\bundleword(P)$ contains a subword $uB$ and, by Corollary \ref{cor:bwrep}, $P$ has endpoints on different
sides of the spiral $uB$. 
\end{proof}

\subsection{Ring components: basics}

The ring components have more complicated topology structure than the disc ones,
but, thanks to the isolation property, we are able to prove that, if we are given a YES-instance to \probshort{},
there exists a solution with the interaction with ring components that is somehow bounded.
In this section we give some basic notions towards proving this statement.

Assume we are given a bundled instance $(G,\decomp,\bundleset)$, such that $\decomp$ has isolation $(\Lambda,d)$, $\Lambda,d > 0$.
Consider a ring component $\ringcomp$ with isolation components
$\disccomp_{IN,\lambda}$ and $\disccomp_{OUT,\lambda}$, $1 \leq \lambda \leq \Lambda$.
Fix a choice of the $d$ concentric cycles in the subgraph induced by the vertices
of $\disccomp_{IN,\lambda}$ promised by the definition of an isolating component
and let $C_{IN,\lambda}(\ringcomp)$ be the innermost of these cycles. Similarly,
let $C_{OUT,\lambda}(\ringcomp)$ be the outermost of a fixed choice of a alternating
sequence of cycles in $\disccomp_{OUT,\lambda}$.
\begin{defin}[closure of a ring component]
Let $(G,\decomp,\bundleset)$ be a bundled instance, such that $\decomp$ has isolation $(\Lambda,d)$, $\Lambda,d > 0$.
Let $\ringcomp$ be a ring component in $\decomp$ and let $1 \leq \lambda \leq \Lambda$.
Then $\ringcl^\lambda(\ringcomp)$, the {\em{level-$\lambda$ closure of $\ringcomp$}}, is the subgraph of $G$ that contains
$\ringcomp \cup \bigcup_{1 \leq \lambda' \leq \lambda} \disccomp_{IN,\lambda'} \cup \disccomp_{OUT,\lambda'}$ as well as all
bundle arcs of $G$ that are contained between $C_{IN,\lambda}(\ringcomp)$
and $C_{OUT,\lambda}(\ringcomp)$ or on one of these cycles.
Moreover, we define the closure of $\ringcomp$, $\ringcl(\ringcomp)$ as the $\Lambda$-level closure of $\ringcomp$,
  and the $0$-level closure of $\ringcomp$, $\ringcl^0(\ringcomp)$, as $\ringcomp$ itself.
\end{defin}
Note that, since each disc component may serve as an isolation only to one ring component, closures are pairwise disjoint.
By splitting some bundles incident to the isolation components of $\ringcomp$ into smaller bundles, we may assume
that either none or all arcs of a single bundle are contained in the closure of a ring component.
By Lemma \ref{lem:spiral-split}, a cycle in $G$ may contain at most one arc from a single bundle.
Therefore, the aforementioned splitting operation partitions each bundle with arcs with both endpoints in an
isolation component into at most three parts, and, consequently, at most triples the number of such bundles.

\begin{defin}[normal, ring and isolation bundles]
Let $(G,\decomp,\bundleset)$ be a bundled instance, such that $\decomp$ has isolation $(\Lambda,d)$, $\Lambda,d>0$.
A bundle $B \in \bundleset$ is a {\em{ring bundle}} if it is contained in a closure of some ring component,
and a {\em{normal bundle}} otherwise. 
A {\em{level}} of a ring bundle $B$ contained in the closure of a ring component $\ringcomp$,
  is the minimum nonnegative integer $\lambda$, such that
$B$ contains arcs incident to $\ringcl^\lambda(\ringcomp)$.
A ring bundle is an {\em{isolation bundle}} if it has both endpoints
in the same isolating disc component.
\end{defin}

\begin{defin}[ring part]
Let $P$ be a directed path in a bundled instance $(G,\decomp,\bundleset)$ with isolation $(\Lambda,d)$, $\Lambda,d > 0$.
Then a maximal subword of $\bundleword(P)$ that consists of only ring bundles
and contains at least one level-$\lambda$ ring bundle for some $\lambda < \Lambda$
is called a {\em{ring part}} of $\bundleword(P)$.
The normal bundles preceding and succeeding a ring part of $\bundleword(P)$ are called the
{\em{predecessor}} and {\em{successor}} of the ring part, respectively.

For a ring part $w$ of $\bundleword(P)$, a {\em{ring part}} of $P$ is maximal possible subpath
of $P$ that corresponds to $w$. The {\em{predecessor}} and the {\em{successor}} of a ring part of $P$
is an arc that precedes (resp. succeeds) the ring part on the path $P$.
\end{defin}
Note that the predecessor or successor of a ring part may not be defined if $P$ ends or starts in the closure of the ring component;
however, they are always well-defined for paths connecting a terminal pair.

Note also that ring parts of a bundle word are pairwise disjoint and
there is at least one normal bundle between any ring parts
in the bundle word $\bundleword(P)$.
Moreover, as each disc component may serve as an isolation component to at most one ring component
and closures of ring components are pairwise disjoint, each ring part of a bundle word
contains bundles from a closure of a single ring component and the corresponding ring part of a path is contained
in the closure of this ring component.

\begin{defin}[isolation passage]
Let $P$ be a directed path in a bundled instance $(G,\decomp,\bundleset)$ of isolation $(\Lambda,d)$, $\Lambda,d>0$,
let $\ringcomp$ be a ring component in $\decomp$ and let $\disccomp$ be any isolation component of
$\ringcomp$ such that $P$ does not start nor end in $\disccomp$.
Then an {\em{isolation passage}} of $P$ through $\disccomp$ is
any maximal subpath of $P$ that is contained in $\ringcl(\ringcomp)[V(\disccomp)]$ (i.e., $\disccomp$
    with arcs of its isolation bundles), but the arc preceding and succeeding the subpath on $P$
lie on different faces of $\ringcl(\ringcomp)[V(\disccomp)]$.
The level of the isolation passage is the level of $\disccomp$.
\end{defin}
Note that $P$ starts and ends outside $\disccomp$ for example, if $P$ connects a terminal pair in $G$.
Moreover, an isolation passage of $P$ though $\disccomp$ is preceded and succeeded by an arc connecting $\disccomp$ with 
a neighbouring isolation component, a normal bundle (possible if $\disccomp$ is of level $\Lambda$)
or $\ringcomp$ (possible if $\disccomp$ is of level $1$).

\begin{defin}[ring passage, ring visitor]
Let $P$ be a directed path in a bundled instance $(G,\decomp,\bundleset)$.
Let $w$ be a ring part in $\bundleword(P)$ in the closure of the ring component $\ringcomp$, for which the predecessor and the successor are well-defined.
Then $w$ is a {\em{ring passage}} if the predecessor and the successor of $w$ lie on different faces of the closure of $\ringcomp$ (equivalently, lie on different
faces of $\ringcomp$ itself), and {\em{ring visitor}} otherwise.
\end{defin}

\begin{defin}[level-$\lambda$ ring passage]
Let $P$ be a directed path in a bundled instance $(G,\decomp,\bundleset)$ with isolation $(\Lambda,d)$, $\Lambda,d > 0$.
Let $0 \leq \lambda \leq \Lambda$, let $w$ be a ring part in $\bundleword(P)$ in the closure of the ring component $\ringcomp$
and assume that $P$ starts and ends outside $\ringcl^\lambda(\ringcomp)$.
Then any maximal subpath of a ring part $w$ of $P$ that is contained in $\ringcl^\lambda(\ringcomp)$
that has a preceding and succeeding arc on $P$ contained in different faces of $\ringcl^\lambda(\ringcomp)$ is called a {\em{level-$\lambda$ ring passage}}.
\end{defin}
Note that a level-$\Lambda$ ring passage is simply a ring passage.

We also note the following.
\begin{lemma}\label{lem:spirals-in-rings}
Let $(G,\decomp,\bundleset)$ be a bundled instance with positive isolation and let
$P$ be an arbitrary path connecting a terminal pair in $G$.
Then any spiral in $\bundleword(P)$ contains at least one normal bundle or is contained in a ring passage of $\bundleword(P)$.
In particular, any ring visitor of $\bundleword(P)$ consists of pairwise distinct bundles.
\end{lemma}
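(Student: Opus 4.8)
The plan is to establish the contrapositive of the displayed statement: if $w$ is a spiral in $\bundleword(P)$ all of whose letters are ring bundles, then the maximal ring-bundle subword of $\bundleword(P)$ containing $w$ is a ring passage. First I would pin down where $w$ lives. The subpath $P_w$ of $P$ carrying $w$ traverses the bundle arcs of $w$ consecutively, and between two consecutive such arcs it stays inside one component of $\decomp$; since closures of distinct ring components are pairwise disjoint, two ring bundles lying in different closures cannot be incident to a common component, so by connectivity of $P_w$ all letters of $w$ lie in the closure $\ringcl(\ringcomp)$ of a single ring component $\ringcomp$. Hence $P_w$ --- and therefore the spiral cut $\gamma$ associated with $w$ on $P$, which consists of a portion of $P_w$ together with a dual path through arcs of the first/last bundle $B$ of $w$ --- is drawn inside the annular region of $\ringcl(\ringcomp)$ bounded by $C_{IN,\Lambda}(\ringcomp)$ and $C_{OUT,\Lambda}(\ringcomp)$.

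The next step is to show that $\gamma$ is non-contractible in this annulus. Suppose not; then $\gamma$ is a Jordan curve bounding a disc $D_\gamma$ contained in the annulus. By the convention that either all or no arcs of a bundle lie in a given closure, every bundle with an arc drawn in the annulus is a ring bundle, so every bundle lying inside $D_\gamma$ is a ring bundle. Writing $\bundleword(P)=u_1 w u_2$ and invoking Corollary~\ref{cor:bwrep}, one of $u_1,u_2$ uses only bundles of $w$ or bundles on the inside of $\gamma$, hence only ring bundles; together with the boundary letter $B$ of $w$ this would realize a prefix or a suffix of $\bundleword(P)$ made up entirely of ring bundles, contradicting that $P$ begins and ends at terminals, which lie in normal single-vertex components. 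The same contractibility argument, applied to the disc region occupied by a single isolating disc component, shows that $w$ cannot consist solely of level-$\Lambda$ ring bundles: such bundles are exactly the self-loops of $\disccomp_{IN,\Lambda}$ or $\disccomp_{OUT,\Lambda}$, and a connected ring-bundle walk using only these is confined to the disc region of one such component. Therefore $w$ contains a ring bundle of level $<\Lambda$, and so the unique maximal ring-bundle subword $w'$ of $\bundleword(P)$ containing $w$ is a ring part.

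Finally, since $\gamma$ is non-contractible in the annulus it separates the plane into a bounded region $R_{in}$ containing the inner face of $\ringcl(\ringcomp)$ and an unbounded region $R_{out}$ containing the outer face; every normal bundle, not being drawn in the annulus, lies entirely in one of $R_{in},R_{out}$. By Corollary~\ref{cor:bwrep}, $u_1$ and $u_2$ draw the bundles not belonging to $w$ from opposite sides of $\gamma$. The predecessor arc of the ring part $w'$ is a normal bundle arc traversed within the $u_1$-portion of $P$ and incident to $\ringcl(\ringcomp)$; it therefore lies on one definite side of $\gamma$ and, not being in the annulus, must be drawn in the corresponding face of $\ringcl(\ringcomp)$, and symmetrically the successor arc of $w'$ is drawn in the other face. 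Since the predecessor and successor of $w'$ lie on different faces of the closure of $\ringcomp$, $w'$ is a ring passage containing $w$, proving the main statement. The ``in particular'' clause follows by contraposition: if a ring visitor $w'$ had a repeated bundle, choosing a closest repeated pair of letters exhibits a spiral $w\subseteq w'$ using only ring bundles; by the main statement $w$ lies in a ring passage, but the maximal ring-bundle subword containing $w$ is $w'$ itself, so $w'$ would be a ring passage --- contradicting that it is a ring visitor.

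I expect the main difficulty to be the topological bookkeeping of the last paragraph: correctly matching the two sides of the simple but degenerate spiral cut $\gamma$ inside the annulus of $\ringcl(\ringcomp)$ with the inner and outer faces of the closure, and verifying that ``the predecessor (or successor) of $w'$ lies on a prescribed side of $\gamma$'' genuinely forces which face of the closure that arc is drawn in. The confinement of $w$ to a single closure and the contractibility contradiction via Corollary~\ref{cor:bwrep} should be comparatively routine once the relevant curves and regions have been named.
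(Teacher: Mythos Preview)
Your approach mirrors the paper's: confine the spiral cut $\gamma$ to one closure $\ringcl(\ringcomp)$, split on whether $\gamma$ separates the inner and outer faces of that closure, and in the contractible case derive a contradiction from the absence of terminals inside $\gamma$ (the paper phrases this directly via Lemma~\ref{lem:spiral-split}, you via Corollary~\ref{cor:bwrep}). The structure and the ``in particular'' derivation are the same.

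One step is not right as written. Your argument that $w$ cannot consist solely of level-$\Lambda$ ring bundles asserts that a single level-$\Lambda$ isolating component together with its self-loop ring bundles occupies a \emph{disc} region, forcing $\gamma$ to be contractible. But that region is annular: an isolating component $\disccomp_{IN,\Lambda}$ (or $\disccomp_{OUT,\Lambda}$), once its self-loop bundle arcs in the closure are added back, contains $d$ concentric cycles separating its inner arcs from its outer arcs, with $C_{IN,\Lambda}$ as one boundary of the annulus. A spiral cut built from level-$\Lambda$ self-loop bundles can therefore be non-contractible in $\ringcl(\ringcomp)$, so this step does not go through.

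Fortunately you do not need it. Once you know that $\gamma$ separates the two faces of $\ringcl(\ringcomp)$ and (via Corollary~\ref{cor:bwrep}) that the predecessor and successor of the maximal ring-bundle subword $w'$ lie in different faces of the closure, the subpath of $P$ carrying $w'$ runs from a vertex of $\disccomp_{OUT,\Lambda}$ to a vertex of $\disccomp_{IN,\Lambda}$ (or vice versa) and hence traverses the entire closure; in particular it uses a bundle of level $0<\Lambda$, so $w'$ is a ring part automatically. Drop the claim about $w$ itself and fold this observation into your final paragraph.
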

\begin{proof}
Let $w$ be a spiral in $\bundleword(P)$ that contains only ring bundles and let $\gamma$ be the corresponding spiral cut.
As $w$ contains only ring bundles, the curve $\gamma$ is completely contained in $\ringcl(\ringcomp)$ for one ring component $\ringcomp$.
If the curve $\gamma$ separates the inner face of $\ringcl(\ringcomp)$ from the outer face,
Lemma \ref{lem:spiral-split} implies that the predecessor and the successor of $w$ lie in different faces of $\ringcl(\ringcomp)$.
In other case, the graph enclosed inside $\gamma$ is contained in $\ringcl(\ringcomp)$.
This contradicts Lemma \ref{lem:spiral-split}, as $P$ connects a terminal pair in $G$ and there is no terminal contained inside $\gamma$.
\end{proof}

By the definition of ring isolation, all bundle arcs incident to at least one vertex
of $\ringcl^\lambda(\ringcomp)$ (for some $0 \leq \lambda \leq \Lambda$ and some ring component $\ringcomp$),
   but not belonging to $\ringcl^\lambda(\ringcomp)$,
lie either in the outer face of $\ringcl^\lambda(\ringcomp)$, or in one other face
of $\ringcl^\lambda(\ringcomp)$ inside $C_{IN,\lambda}(\ringcomp)$ (or $\ringcomp$ if $\lambda=0$). 
\begin{lemma}\label{lem:refcurve-choice}
Let $(G,\decomp,\bundleset)$ be a bundled instance
of isolation $(\Lambda,d)$, $\Lambda,d>0$.
Then in polynomial time we may compute a set of reference curves
$\refcurve^\lambda(\ringcomp)$ for all $0 \leq \lambda \leq \Lambda$
and $\ringcomp$ being a ring component of $\decomp$ in such a manner that:
\begin{enumerate}
\item $\refcurve^\lambda(\ringcomp)$ connects the outer face
of $\ringcl^\lambda(\ringcomp)$ with the inner face;
\item for each bundle $B$, either $\refcurve^\lambda(\ringcomp)$ does not intersect $B$, or contains a subcurve that intersects each arc of $B$ and nothing else,
\item $\refcurve^\lambda(\ringcomp)$ is a subcurve of $\refcurve^{\lambda+1}(\ringcomp)$ for $0 \leq \lambda < \Lambda$
and $\refcurve^\lambda(\ringcomp) \cap \ringcl^\lambda(\ringcomp) = \refcurve^{\lambda+1}(\ringcomp) \cap \ringcl^\lambda(\ringcomp)$;
\item $\refcurve^\lambda(\ringcomp) \cap \disccomp = \emptyset$
for any disc component $\disccomp$ that isolates $\ringcomp$.
\end{enumerate}
\end{lemma}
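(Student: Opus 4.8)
The plan is to construct the reference curves from the "inside out," building $\refcurve^0(\ringcomp)$ first for each ring component $\ringcomp$ and then extending it layer by layer through the isolation components. For the base case, $\refcurve^0(\ringcomp)$ should be a curve connecting the inner and outer faces of $\ringcomp$ itself; since $\ringcomp$ is weakly connected, such a curve exists, and a simple breadth-first-search-type argument on the dual produces one that crosses $G$ in finitely many points. To get property~(2) — that the curve crosses bundles "cleanly" — I would route the curve so that whenever it needs to cross from the outer to the inner face of $\ringcomp$ through a region occupied by some bundle $B$ incident to $\ringcomp$, it slides across all arcs of $B$ at once (possible because, by Lemma~\ref{lem:bundle-dual}, $B$ forms a directed path in the dual, so the arcs of $B$ bound a chain of faces and a single subcurve can traverse exactly these arcs). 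Since the bundle arcs between two fixed components occupy a connected strip of the plane, the curve can be made to cross each incident bundle either zero times or in one clean sweep.

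The inductive step is to obtain $\refcurve^{\lambda+1}(\ringcomp)$ from $\refcurve^\lambda(\ringcomp)$ by extending both ends of the latter through the next pair of isolating components $\disccomp_{IN,\lambda+1}$ and $\disccomp_{OUT,\lambda+1}$. Property~(3) forces $\refcurve^{\lambda+1}(\ringcomp)$ to agree with $\refcurve^\lambda(\ringcomp)$ on $\ringcl^\lambda(\ringcomp)$, so I only need to prepend/append new subcurves lying in the annular region between $C_{OUT,\lambda}$ and $C_{OUT,\lambda+1}$ (respectively between $C_{IN,\lambda+1}$ and $C_{IN,\lambda}$), which by the ring isolation definition contains only the component $\disccomp_{OUT,\lambda+1}$ (resp. $\disccomp_{IN,\lambda+1}$) together with its isolation bundles and the bundles linking it to the neighbouring layers. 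Here property~(4) is the constraint that the new piece must avoid $\disccomp_{IN/OUT,\lambda+1}$ entirely: I would route the extension through the faces of that isolating disc component rather than through its vertices, which is possible precisely because the disc component does not fill up the annulus — there is always a face-to-face route around it. As in the base case, whenever the extension must cross a bundle, it crosses all of its arcs cleanly, giving property~(2) for the new bundles as well. The level-$(\lambda{+}1)$ ring bundles that are crossed for the first time by this extension are exactly the ones linking $\disccomp_{IN/OUT,\lambda}$ to $\disccomp_{IN/OUT,\lambda+1}$, and these lie in the clean strips described above.

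The main obstacle I anticipate is juggling the topological bookkeeping so that all four properties hold \emph{simultaneously and consistently} across all ring components: the curves for different rings must not interfere, but since closures of distinct ring components are pairwise disjoint (each isolating disc component serves only one ring, by Definition~\ref{def:isolation}), the constructions are independent and this is mostly a matter of careful phrasing. The more delicate point is reconciling property~(2) with property~(4): the clean bundle-crossing requirement forces the curve into specific positions, while avoiding the isolating disc component forces it elsewhere, and one must check that the "clean strip" of each incident bundle actually meets the outer/inner face of the isolating disc component so that the curve can enter it without touching a vertex of $\disccomp$. This follows from the structure guaranteed by the ring isolation definition — the outer arcs of $\disccomp_{IN,\lambda+1}$ land in $\disccomp_{IN,\lambda}$ and the inner arcs land in $\ringcl^\lambda$ or the next layer — so the bundles form a "ladder" between consecutive layers and the reference curve can thread through the rungs. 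I would present the construction as an explicit algorithm (dual BFS plus local rerouting to make crossings clean), note that each step is polynomial, and verify the four properties by inspection of the annular regions, invoking Lemma~\ref{lem:bundle-dual} for the bundle-as-dual-path fact and the ring isolation definition for the containment of bundle arcs.
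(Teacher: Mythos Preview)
Your proposal is correct and essentially the same as the paper's approach: the paper constructs $\refcurve^\Lambda(\ringcomp)$ directly as a concatenation of undirected paths in the dual of $G$ through each isolation layer using \emph{only bundle arcs} (which simultaneously yields properties~(2) and~(4), since such a path never touches a vertex or edge of $\disccomp$), linked by pieces of the dual cycles formed by the bundles between consecutive layers, and then defines each $\refcurve^\lambda(\ringcomp)$ as the appropriate subcurve. Your inside-out extension is the same construction read in the opposite order; the one imprecision is the phrase ``route around it''---the isolating component together with its isolation bundles \emph{does} separate the annulus (that is what the $d$ concentric cycles guarantee), so the curve does not circumvent $\disccomp$ but rather stays in its unique outer face while crossing the isolation bundle arcs drawn there.
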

\begin{proof}
For each ring component $\ringcomp$, $\Gamma \in \{IN,OUT\}$
and $1 \leq \lambda \leq \Lambda$, choose an undirected path
in the dual of $G$ that connects the outer and inner face of
$\ringcl(\ringcomp)[V(\disccomp_{\Gamma,\lambda})]$
and contains bundle arcs only.
Connect these paths with parts of (again undirected) cycles in the dual
of $G$ that contain bundle arcs connecting $\disccomp_{\Gamma,\lambda}$
with $\disccomp_{\Gamma,\lambda-1}$ (or $\ringcomp$ if $\lambda=1$).
The drawings of these undirected paths in the dual of $G$
yield the desired paths $\refcurve^\Lambda(\ringcomp)$;
the paths $\refcurve^\lambda(\ringcomp)$ are their appropriate
subcurves.
\end{proof}

In this manner, the subgraph $\ringcl^\lambda(\ringcomp)$ becomes a rooted ring, and we may
use the notion of a winding number with respect to $\refcurve^\lambda(\ringcomp)$.
In particular, given a path $P$ that connects a terminal pair in $G$,
and a level-$\lambda$ ring passage $w$ of $\bundleword(P)$, we define the winding number
$\refcurve^\lambda(\ringcomp)(w)$ of the level-$\lambda$ ring passage $w$ as the winding number of the corresponding
subpath of the path $P$.

\subsection{Bundle words with holes}

In this section we introduce a modification of the definition
of bundle words that takes care also of ring components.
Informally speaking, if we want to apply Schijver's cohomology algorithm,
we need not only to know bundle words of the paths, but also the number
of turns a path makes when it makes a ring passage.

\begin{defin}[bundle word with ring holes]
Let $(G,\decomp,\bundleset)$ be a bundled instance of isolation $(\Lambda,d)$ for $\Lambda,d>0$. Let $0 \leq \lambda < \Lambda$.
A pair $((p_j)_{j=0}^q, (w_j)_{j=1}^q)$ is called a {\em{bundle word with level-$\lambda$ ring holes}} if:
\begin{enumerate}
\item each $p_j$, $0 \leq j \leq q$ is a bundle word in $(G,\decomp,\bundleset)$ 
that does not contain two bundles that lie on different sides of
$\ringcl^\lambda(\ringcomp)$ for any ring component \ringcomp;
\item each $w_j$, $1 \leq j \leq q$ is an integer;
\item for each $1 \leq j \leq q$, there exists a ring component $\ringcomp_j \in \decomp$, such that
 \begin{enumerate}
 \item the last bundle of $p_{j-1}$ is a
 bundle that contains arcs with ending points in $\ringcl^\lambda(\ringcomp_j)$
 but starting points not in $\ringcl^\lambda(\ringcomp_j)$;
 \item the first bundle of $p_j$ is a a bundle that contains arcs with starting points in $\ringcl^\lambda(\ringcomp_j)$
 but ending points not in $\ringcl^\lambda(\ringcomp_j)$;
 \item the two aforementioned bundles lie in different faces of $\ringcl^\lambda(\ringcomp_j)$
   (i.e., one lies in the outer face, and one in the inner face).
 \end{enumerate}
\end{enumerate}

We say that a path $P$ is {\em{consistent}} with a bundle word with ring holes $((p_j)_{j=0}^q, (w_j)_{j=1}^q)$ if 
$\bundleword(P) = p_0 r_1 p_1 r_2 p_2 \ldots r_q p_q$, where $r_1, r_2, \ldots, r_q$ are exactly the level-$\lambda$
ring passages of $\bundleword(P)$,
  and for each $1 \leq j \leq q$, the subpath of $P$ corresponding to the ring passage $r_j$ has winding number $w_j$
  in the level-$\lambda$ closure of the appropriate ring component.
\end{defin}

We first note that, knowing a bundle word with ring holes for some level,
we in fact know the bundle words with ring holes for all higher levels.

\begin{defin}[projection of bundle words with ring holes]
Let $(G,\decomp,\bundleset)$ be a bundled instance of isolation $(\Lambda,d)$ for $\Lambda,d>0$
and let $((p_j)_{j=0}^q, (w_j)_{j=1}^q)$ be a bundle word with level-$\lambda$ ring holes for some $0 \leq \lambda < \Lambda-1$ such that $p_0$ starts with a normal bundle and $p_q$
ends with a normal bundle.
Let $\lambda < \lambda' < \Lambda$.
A {\em{level-$\lambda'$ projection}} of $((p_j)_{j=0}^q, (w_j)_{j=1}^q)$ is
a bundle word with level-$\lambda'$ ring holes $((p_j')_{j=0}^{q'}, (w_j')_{j=1}^{q'})$ defined as follows.
Let $p = p_0 w_1 p_1 w_2 p_2 \ldots w_q p_q$ be a word over alphabet $\bundleset \cup \Z$ and let
$p = p_0' x_1 p_1' x_2 \ldots x_{q'} p_q'$ be such that each $x_j$, $1 \leq j \leq q'$ is a maximal subword
of $p$ that contains at least one integer $w_\iota$,
and all the bundles it contains are in $\ringcl^{\lambda'}(\ringcomp)$ for some ring component $\ringcomp$.
Let $w_{j'}'$ for $1 \leq j' \leq q'$ be equal to the sum of integers $w_j$ contained in $x_{j'}$ plus $+1$ or $-1$
for each bundle in $x_{j'}$ that crosses $\refcurve^{\lambda'}(\ringcomp)$; the sign depends on the direction of the crossing.
\end{defin}

\begin{lemma}\label{lem:bwholes-projection}
Let $(G,\decomp,\bundleset)$ be a bundled instance of isolation $(\Lambda,d)$ for $\Lambda\geq 2, d > 0$.
Let $\bwholes$ a bundle word with level-$\lambda$ ring holes for some $1 \leq \lambda < \Lambda$.
and let $P$ be a path consistent with $\bwholes$. Assume that $\bwholes$ starts and ends
with a normal bundle.
Let $\lambda \leq \lambda' \leq \Lambda$ and let $\bwholes'$ be the level-$\lambda'$ projection of $\bwholes$.
Then $P$ is consistent with $\bwholes'$ as well.
\end{lemma}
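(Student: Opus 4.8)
The plan is to verify directly the two clauses in the definition of consistency for the level-$\lambda'$ projection $\bwholes'=((p_j')_{j=0}^{q'},(w_j')_{j=1}^{q'})$ of $\bwholes=((p_j)_{j=0}^{q},(w_j)_{j=1}^{q})$: that $\bundleword(P)=p_0'r_1'p_1'\cdots r_{q'}'p_{q'}'$ with $r_1',\dots,r_{q'}'$ exactly the level-$\lambda'$ ring passages of $\bundleword(P)$, and that the subpath of $P$ realizing $r_{j'}'$ has winding number $w_{j'}'$ in $\ringcl^{\lambda'}$ of the attached ring component. The extreme cases $\lambda'=\lambda$ and $\lambda'=\Lambda$ carry no new difficulty, so I would focus on $\lambda<\lambda'<\Lambda$. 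The standing facts I would use throughout are: $\ringcl^{\lambda}(\ringcomp)\subseteq\ringcl^{\lambda'}(\ringcomp)$, with the difference consisting of the isolating components $\disccomp_{IN,\mu},\disccomp_{OUT,\mu}$ for $\lambda<\mu\le\lambda'$ together with the bundle arcs between consecutive ones, which form two nested ``collars'' whose concentric-alternating-cycle gateway structure forces any subpath of $P$ to pass through them level by level; that closures of distinct ring components are pairwise disjoint and that any ring part of $P$ lies inside a single such closure; and the reference-curve properties of Lemma~\ref{lem:refcurve-choice}, in particular that $\refcurve^{\lambda}(\ringcomp)$ is a subcurve of $\refcurve^{\lambda'}(\ringcomp)$ agreeing with it on $\ringcl^{\lambda}(\ringcomp)$, and that a reference curve meets $G$ only inside bundles, crossing every arc of each bundle it touches.

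First I would match the syntactic blocks of the projection with the geometric level-$\lambda'$ ring passages of $P$. Writing $\bundleword(P)=p_0r_1p_1\cdots r_qp_q$ with $r_1,\dots,r_q$ the level-$\lambda$ ring passages (consistency of $P$ with $\bwholes$), the projection forms $p=p_0w_1p_1\cdots w_qp_q$ over $\bundleset\cup\Z$ and factors it as $p_0'x_1p_1'\cdots x_{q'}p_{q'}'$, each $x_{j'}$ a maximal subword that contains at least one marker $w_\iota$ and whose explicit bundle letters all lie in $\ringcl^{\lambda'}$ of one ring component; by disjointness of closures this component must be $\ringcl^{\lambda'}(\ringcomp_{j'})$, where $\ringcomp_{j'}$ is the common ring component of every $r_\iota$ occurring in $x_{j'}$ (each such $r_\iota$ lies in $\ringcl^{\lambda}(\ringcomp_\iota)\subseteq\ringcl^{\lambda'}(\ringcomp_\iota)$). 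Substituting the ring-passage subwords back for their markers turns $x_{j'}$ into a subword $\hat x_{j'}$ of $\bundleword(P)$, and I would show that the corresponding subpath of $P$ is a level-$\lambda'$ ring passage through $\ringcomp_{j'}$: it is contained in $\ringcl^{\lambda'}(\ringcomp_{j'})$ (its explicit ring bundles, the component pieces between them, and the absorbed level-$\lambda$ passages all lie there); it is maximal, since the letters flanking $x_{j'}$ in the factorisation are by construction not absorbable, hence are normal bundles or ring bundles of another closure, so the flanking arcs lie on a face of $\ringcl^{\lambda'}(\ringcomp_{j'})$; and these two flanking arcs lie on \emph{different} faces of $\ringcl^{\lambda'}(\ringcomp_{j'})$, which I would obtain by observing that $\hat x_{j'}$ contains a level-$\lambda$ ring passage $r_\iota$ whose flanking arcs lie on different faces of $\ringcl^{\lambda}(\ringcomp_{j'})$ (the side data recorded in $\bwholes$) and propagating this ``side'' information outward through the nested collars, using that a bundle arc leaving $\ringcl^{\mu-1}(\ringcomp_{j'})$ enters $\ringcl^{\mu}(\ringcomp_{j'})$ on the matching face. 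Conversely, every level-$\lambda'$ ring passage of $P$ crosses $\ringcl^{\lambda}(\ringcomp)$ from one face to the other, hence contains a level-$\lambda$ ring passage and therefore a marker, so it coincides with some $\hat x_{j'}$. This establishes the bundle-word decomposition, identifies $\ringcomp_{j'}$ as the ring component attached to $r_{j'}'$, and along the way shows $\bwholes'$ is a legitimate level-$\lambda'$ bundle word with ring holes.

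It then remains to compute the winding number of the subpath $\hat x_{j'}$ with respect to $\refcurve^{\lambda'}(\ringcomp_{j'})$. Decomposing it as the concatenation of the absorbed level-$\lambda$ ring passages $r_\iota$ and the intermediate component pieces lying in the collars, only bundle arcs meet a reference curve (Lemma~\ref{lem:refcurve-choice}(2)), so the collar pieces contribute, summed, exactly one signed unit per bundle of $x_{j'}$ crossed by $\refcurve^{\lambda'}(\ringcomp_{j'})$ — matching the second sum in the definition of $w_{j'}'$ — while each $r_\iota$, being inside $\ringcl^{\lambda}(\ringcomp_{j'})$ where $\refcurve^{\lambda'}(\ringcomp_{j'})$ and $\refcurve^{\lambda}(\ringcomp_{j'})$ coincide (Lemma~\ref{lem:refcurve-choice}(3)) and all of whose crossed bundle arcs lie in $\ringcl^{\lambda}(\ringcomp_{j'})$, contributes its recorded winding number $w_\iota$; the two sums add up to $w_{j'}'$. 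The main obstacle I anticipate is exactly the point at which the recorded numbers $w_\iota$ must combine with a consistent sign: one must show — using the nested concentric-alternating-cycle gateway structure of the isolating components (and the fact that the region between $\ringcl^{\lambda}(\ringcomp)$ and either boundary of $\ringcl^{\lambda'}(\ringcomp)$ cannot be traversed without crossing the core) — that a single level-$\lambda'$ ring passage of $P$ contains exactly one level-$\lambda$ ring passage, so that $\sum_{w_\iota\in x_{j'}}w_\iota$ is a single term with the correct orientation. Once this structural fact is secured, both clauses of consistency follow from the bookkeeping above.
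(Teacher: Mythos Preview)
Your approach coincides with the paper's: verify the two consistency clauses directly, using the nesting properties of the reference curves from Lemma~\ref{lem:refcurve-choice}. The paper's proof is four sentences and simply asserts that $w_{j'}'$ equals the winding number of $r_{j'}'$ because $\refcurve^{\lambda'}(\ringcomp)$ extends $\refcurve^{\lambda}(\ringcomp)$, agrees with it on $\ringcl^{\lambda}(\ringcomp)$, and avoids the isolating components; your write-up is considerably more careful and uses the same ingredients.

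You are right to isolate the sign/orientation issue as the crux, and it is a point the paper's proof glosses over. But your proposed resolution --- that the nested isolation structure forces each level-$\lambda'$ ring passage to contain \emph{exactly one} level-$\lambda$ ring passage --- does not hold for an arbitrary path $P$. Nothing in the isolation structure prevents $P$ from crossing $\ringcl^{\lambda}(\ringcomp)$ outer-to-inner, retreating into the inner collar while remaining inside $\ringcl^{\lambda'}(\ringcomp)$, then crossing $\ringcl^{\lambda}(\ringcomp)$ back inner-to-outer, and so on; each isolating disc component carries concentric cycles around which a path may wind, so excursions within a single collar are permitted. When this happens the contained level-$\lambda$ passages alternate direction, and their contributions to the level-$\lambda'$ winding number (computed by traversing outer-to-inner of $\ringcl^{\lambda'}$) come with alternating signs $w_{\iota_1}-w_{\iota_2}+w_{\iota_3}-\cdots$, not the uniform sum $\sum w_\iota$ in the projection definition. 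So the structural fact you plan to secure is false in this generality, and the bookkeeping you set up does not close the gap without it. In the paper's downstream uses the issue is harmless because the relevant paths come from minimal solutions, where Theorem~\ref{thm:oscillators} forbids exactly this oscillation; as stated for arbitrary $P$, both your plan and the paper's terse argument leave this point open.
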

\begin{proof}
Let $\bwholes = ((p_j)_{j=0}^q, (w_j)_{j=1}^q)$ and let $q'$, $p_j'$, $x_j$ and $w_j'$ be as in the definition
of the level-$\lambda'$ projection; in particular $\bwholes' = ((p_j')_{j=0}^{q'}, (w_j')_{j=1}^{q'})$.
Let $\bundleword(P) = p_0 r_1 p_1 r_2 \ldots r_q p_q$ and let $r_1,r_2,\ldots,r_q$ be the level-$\lambda$ ring passages of $P$.
Note that, as $p_0$ starts with a normal bundle and $p_q$ ends with a normal bundle,
by the definition of level-$\lambda'$ ring passages, $\bundleword(P) = p_0' r_1' p_1' \ldots r_{q'} p_{q'}$ where
$r_1', r_2', \ldots, r_{q'}'$ are exactly the level-$\lambda'$ ring passages of $P$.
Moreover, as the reference curve $\refcurve^{\lambda'}(\ringcomp)$ contains $\refcurve^\lambda(\ringcomp)$,
does not contain any point of $\ringcl^\lambda(\ringcomp)$ outside $\refcurve^\lambda(\ringcomp)$
and is disjoint with isolating components of $\ringcomp$, $w_j'$ equals the winding number of $r_j'$.
\end{proof}

Finally, we note an easy fact that any path in a bundled instance $(G,\decomp,\bundleset)$
yields a unique bundle word with holes for each fixed level $\lambda$.

\begin{lemma}\label{lem:path-to-bwholes}
Let $(G,\decomp,\bundleset)$ be a bundled instance of isolation $(\Lambda,d)$ for $\Lambda\geq 2, d > 0$.
Let $P$ be a path in $G$, whose both starting and ending points are not contained
in any closure of a ring component. Then, for any $0 \leq \lambda < \Lambda$, there
exists a unique level-$\lambda$ bundle word with ring holes that is consistent with
$P$.
\end{lemma}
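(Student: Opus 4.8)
The plan is to construct the bundle word with level-$\lambda$ ring holes from $P$ in the only way the definition permits, and then observe that every choice made along the way is forced. First I would look at $\bundleword(P)$ and identify the maximal subwords of $\bundleword(P)$ consisting only of ring bundles and containing at least one level-$\lambda'$ ring bundle for some $\lambda' < \Lambda$ --- these are exactly the ring parts of $\bundleword(P)$ in the sense of the earlier definition. Among these ring parts, I would single out those that are level-$\lambda$ ring passages: that is, those maximal subpaths of a ring part of $P$ that are contained in $\ringcl^\lambda(\ringcomp)$ and whose preceding and succeeding arcs on $P$ lie in different faces of $\ringcl^\lambda(\ringcomp)$. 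Since the endpoints of $P$ are not contained in any closure of a ring component, every such subpath indeed has a well-defined preceding and succeeding arc, so the notion of level-$\lambda$ ring passage applies without ambiguity. Writing $r_1,\dots,r_q$ for these level-$\lambda$ ring passages in the order they occur on $P$, and $p_0,\dots,p_q$ for the (possibly empty) stretches of $\bundleword(P)$ between and around them, we obtain $\bundleword(P)=p_0 r_1 p_1 \cdots r_q p_q$.

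Next I would verify that the pair $((p_j)_{j=0}^q,(w_j)_{j=1}^q)$, where $w_j$ is defined to be the winding number of the subpath of $P$ realizing $r_j$ in the level-$\lambda$ closure of the corresponding ring component (using the reference curve $\refcurve^\lambda(\ringcomp)$ of Lemma~\ref{lem:refcurve-choice}), actually satisfies the three conditions in the definition of a bundle word with level-$\lambda$ ring holes. Condition~(2) is immediate. For condition~(1), I would argue that no $p_j$ contains two bundles lying on different sides of $\ringcl^\lambda(\ringcomp)$ for any ring component: if it did, then the subpath of $P$ corresponding to $p_j$ would have to cross from one face of $\ringcl^\lambda(\ringcomp)$ to the other, hence would contain a level-$\lambda$ ring passage, contradicting maximality in the choice of the $r_j$'s. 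For condition~(3), the fact that each $r_j$ is a level-$\lambda$ ring passage means precisely that the arc of $P$ entering $r_j$ (the last bundle arc of $p_{j-1}$) lies on one face of $\ringcl^\lambda(\ringcomp_j)$ and the arc leaving $r_j$ (the first bundle arc of $p_j$) lies on the other face, and that these arcs have exactly one endpoint inside $\ringcl^\lambda(\ringcomp_j)$; this is exactly conditions~3(a)--(c). Finally, $P$ is consistent with this pair by construction, since the $r_j$'s were chosen to be exactly the level-$\lambda$ ring passages of $\bundleword(P)$ and the $w_j$'s are their winding numbers.

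For uniqueness, I would note that consistency pins down every component of the pair. If $((p_j')_{j=0}^{q'},(w_j')_{j=1}^{q'})$ is any bundle word with level-$\lambda$ ring holes consistent with $P$, then by definition $\bundleword(P)=p_0' r_1' p_2' \cdots r_{q'}' p_{q'}'$ where $r_1',\dots,r_{q'}'$ are \emph{exactly} the level-$\lambda$ ring passages of $\bundleword(P)$. But the set of level-$\lambda$ ring passages of $\bundleword(P)$ is determined by $P$ alone, so $q'=q$ and $r_j'=r_j$ for all $j$; consequently the factorization of $\bundleword(P)$ into the $p_j'$'s is forced, giving $p_j'=p_j$, and the requirement that $w_j'$ equal the winding number of the subpath of $P$ realizing $r_j$ forces $w_j'=w_j$. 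Hence the two pairs coincide.

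The only mildly delicate point, and the one I would spell out most carefully, is the argument for condition~(1): that the maximality used to define the $r_j$'s genuinely prevents any $p_j$ from straddling $\ringcl^\lambda(\ringcomp)$. This rests on the topological fact --- already implicit in the definition of ring passages and in Lemma~\ref{lem:spiral-split} --- that a bundle arc incident to $\ringcl^\lambda(\ringcomp)$ but not contained in it lies either in the outer face or in the one designated inner face of $\ringcl^\lambda(\ringcomp)$, so that crossing between these two faces along $P$ necessarily happens inside a maximal subpath contained in $\ringcl^\lambda(\ringcomp)$, i.e., inside a level-$\lambda$ ring passage. Everything else is bookkeeping.
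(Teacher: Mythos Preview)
Your proof is correct and the uniqueness argument is identical to the paper's. The construction differs slightly in route: the paper first builds the level-$0$ bundle word with ring holes (replacing each level-$0$ ring passage of $\bundleword(P)$ by its winding number) and then applies the projection lemma (Lemma~\ref{lem:bwholes-projection}) to obtain the level-$\lambda$ object, whereas you construct directly at level $\lambda$ and verify the three defining conditions by hand. Both approaches do the same underlying work; yours is more self-contained (it does not need the projection machinery), while the paper's is terser because it reuses a lemma already in place. Your careful verification of condition~(1)---that no $p_j$ can straddle $\ringcl^\lambda(\ringcomp)$ without containing a level-$\lambda$ ring passage---is exactly the point the paper leaves implicit in its level-$0$ construction.
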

\begin{proof}
First, let us construct a level-$0$ bundle word with ring holes for $P$.
We start with the bundle word $\bundleword(P)$.
For any level-$0$ ring passage $P^\ast$ of $P$, let $b_1$ and $b_2$
be the arcs on $P$ preceding and succeeding $P^\ast$ and let $B_1,B_2 \in \bundleset$
be such that $b_1 \in B_1$, $b_2 \in B_2$;
take the subword $B_1 \bundleword(P^\ast) B_2$ of $\bundleword(P)$ that corresponds
to the subpath $P^\ast$ with arcs $b_1$ and $b_2$ and replace $\bundleword(P^\ast)$
with the winding number of $P^\ast$ in its ring component $\ringcomp$
(note that $P^\ast$ may contain arcs of bundles with both endpoints in $\ringcomp$).
Performing this operation for each level-$0$ ring passage of $P$, we obtain
a bundle word with level-$0$ holes that is consistent with $P$ and, moreover,
  starts and ends with a normal bundle.
By Lemma \ref{lem:bwholes-projection}, a level-$\lambda$ projection of this bundle word
with level-$0$ ring holes is a bundle word with level-$\lambda$ ring holes consistent with
$P$. The uniqueness follows from the fact that in the consistency definition
we require that the integers $w_j$ correspond exactly to the level-$\lambda$ ring
passages of $P$.
\end{proof}

We also define the following property of bundle words with ring holes that will be
  useful in the future.

\begin{defin}[flat bundle word with ring holes]
Let $(G,\decomp,\bundleset)$ be a bundled instance of isolation $(\Lambda,d)$ for $\Lambda\geq 2$, $d>0$.
Let $\bwholes$ a bundle word with level-$\lambda$ ring holes for some $0 \leq \lambda < \Lambda-1$ that starts and ends with a normal bundle. We say that $\bwholes$ is {\em{flat}} if
the level-$(\lambda+1)$ projection of $\bwholes$ does not contain
any bundles of level $\lambda$ or lower.
\end{defin}

\subsection{Minimal solution}

We now define the notion of a {\em{minimal}} solution to a bundled
instance $(G,\decomp,\bundleset)$. 
Assume that $\decomp$ has ring isolation $(\Lambda,d)$.

\begin{defin}[minimal solution]
A solution $(P_i)_{i=1}^k$ to \probshort{} on a bundled instance
$(G,\decomp,\bundleset)$ is called {\em{minimal}} if:
\begin{enumerate}
\item the words $(\bundleword(P_i))_{i=1}^k$ have minimal possible
  total number of bundles that are not isolation bundles;
\item satisfying the above,
 the words $(\bundleword(P_i))_{i=1}^k$ have minimal possible
  total length.
\end{enumerate}
\end{defin}

One of the ways we use minimality of a solution is the following.
\begin{lemma}\label{lem:minsol-spiral}
Let $(G,\decomp,\bundleset)$ be a bundled instance
and let $(P_i)_{i=1}^k$ be a minimal solution to \probshort{} in $G$.
Assume that for some $1 \leq \iota \leq k$, $\bundleword(P_\iota)$ contains a subword $u^r$,
where $u$ contains each bundle at most once,
$u$ contains only bundles with both endpoints in disc components,
and $r \geq 10$.
Let $A_R$ be the spiraling ring (as a closed subspace of a plane)
associated with the subword $u^r$ in $\bundleword(P_\iota)$ and borders
$\gamma_1$ and $\gamma_2$. Let $G_R$ be the subgraph of $G$ 
consisting of those vertices and arcs of $G$ that lie in $A_R \setminus \gamma_2$.
Let $f_1$ be the face of $G_R$ that contains the part of the plane separated
from $A_R$ by $\gamma_1$ and let $f_2$ be defined analogously for $\gamma_2$.
Let $I \subseteq \{1,2,\ldots,k\}$ be the set of indices of those
paths that intersect $G_R$.
Then, for any set of vertex-disjoint paths $(Q_i)_{i \in I}$
such that for each $i \in I$, $Q_i$ starts in a vertex on $f_1$ and ends in a vertex on $f_2$,
the bundle word of each $Q_i$ contains $u^{r-10}$ as a subword. 
\end{lemma}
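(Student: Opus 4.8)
\textbf{Proof plan for Lemma~\ref{lem:minsol-spiral}.}
The plan is to argue by contradiction: suppose there is a set of vertex-disjoint paths $(Q_i)_{i\in I}$ with $Q_i$ from $f_1$ to $f_2$ such that some $Q_{i_0}$ has bundle word not containing $u^{r-10}$ as a subword, and use it to build a solution with a shorter total bundle word (or fewer non-isolation bundles), contradicting minimality. First I would set up the topology precisely: the spiraling ring $A_R$ is homeomorphic to a closed ring by Lemma~\ref{lem:spiral-split}, its two borders $\gamma_1,\gamma_2$ are spiral cuts associated with the first and last spiral $uB$ in $u^r$, and by Lemma~\ref{lem:bwrep:spiraling-ring}(1) every path contained in $G_R$ (up to its first/last arc in $B$) has bundle word a subword of $u^{r'}$ for the appropriate $r'$. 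In particular, the parts of the paths $(P_i)_{i\in I}$ inside $G_R$ are exactly the portions of the solution meeting $G_R$ — here I would invoke that $B$ consists of arcs in a single direction, so no path can ``turn around'' inside the ring, exactly as in the proof of Lemma~\ref{lem:bporder}. This means $(P_i|_{G_R})_{i\in I}$ is a family of vertex-disjoint paths connecting $f_1$ to $f_2$, each (by Lemma~\ref{lem:bwrep:spiraling-ring}(3) applied to consecutive spirals) with bundle word roughly $u^{r-O(1)}$, and the family $(Q_i)_{i\in I}$ is an alternative such family.

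Next I would apply the One-way Spiral Lemma (Lemma~\ref{lem:onewayspiral}) to the rooted ring $G_R$ — using as reference curve the dual directed path through the bundle $B$, which is indeed one-directional, so that for any path crossing $A_R$ the winding number with respect to this curve equals the number of times its bundle word uses $B$, i.e.\ essentially the exponent of $u$. The hypotheses of Lemma~\ref{lem:onewayspiral} require that the endpoints of $(P_i|_{G_R})$ on $f_1$ appear in the same cyclic order as a corresponding reordering of the endpoints of $(Q_i)$, and that matched paths go in the same direction; the direction-matching is automatic because $B$ is one-directional, and I would argue the order condition holds after relabeling the $Q_i$ by matching them to the $P_i$ via their endpoints on $f_1$ (each such endpoint lies in a uniquely determined component/bundle on the boundary of $A_R$, and vertex-disjointness forces the cyclic order to be preserved). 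Lemma~\ref{lem:onewayspiral} then produces paths $P'_i$ inside $G_R$ with the same endpoints as $P_i|_{G_R}$ but with $|E(P'_i)\cap E(W^*)| $ within $6$ of $|E(Q_i)\cap E(W^*)|$. Replacing $P_i|_{G_R}$ by $P'_i$ for all $i\in I$ and keeping the rest of the solution intact yields a new solution $(P'_i)_{i=1}^k$ (the endpoints on $f_1,f_2$ are preserved, so the pieces glue, and vertex-disjointness is preserved because nothing outside $G_R$ changed and the $P'_i$ are confined to $A_R$).

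Finally, I would derive the contradiction by counting. Since $Q_{i_0}$'s bundle word does not contain $u^{r-10}$, its exponent of $u$ (equivalently its $W^*$-crossing count) is strictly less than $r-10$; hence after rerouting, $P'_{i_0}$ uses $B$ fewer than $r-10+6 = r-4$ times, whereas originally $P_{i_0}|_{G_R}$ used $B$ at least $r-O(1)$ times (precisely: the subpath of $P_\iota$ inside $A_R$ uses $B$ exactly $r$ times up to its boundary arcs, and if $i_0\ne\iota$ then by Lemma~\ref{lem:bwrep:spiraling-ring}(3) at least $r-1$ times). So the total number of occurrences of $B$ among the bundle words of the solution strictly decreases, while no bundle outside $A_R$ is affected and — crucially — $B$ is not an isolation bundle since $u$ contains only bundles with both endpoints in disc components; thus either the count of non-isolation bundles drops, or (if $B$-free accounting is balanced by other shifts) the total length of the bundle words drops. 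In either case this contradicts the minimality of $(P_i)_{i=1}^k$. The main obstacle I anticipate is the bookkeeping in this last step: the One-way Spiral Lemma only controls each $|E(P'_i)\cap E(W^*)|$ up to an additive $6$, and the other bundles inside $u$ (besides $B$) also get rerouted, so I must make sure the \emph{net} change in the minimality measure is strictly negative — this is why the slack $10$ appears ($6$ from Lemma~\ref{lem:onewayspiral}, plus a few more for the $O(1)$ losses from Lemma~\ref{lem:bwrep:spiraling-ring} and for replacing ``contains $u^{r'}$'' bounds by exact exponent counts), and verifying that $10$ is actually enough is the delicate part.
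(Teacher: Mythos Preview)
Your approach is essentially the paper's: set up $G_R$ as a rooted ring with the dual of $B$ as one-way reference curve, apply Lemma~\ref{lem:onewayspiral} to reroute the $P_i|_{G_R}$ towards the winding numbers of the $Q_i$, and contradict minimality. Two points where you are less sharp than the paper:

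\emph{The ``delicate part'' is not delicate.} You correctly flag the counting as the main obstacle, but you miss the clean resolution. The paper observes (via Observation~\ref{obs:pm1}) that the $(Q_i)_{i\in I}$ are vertex-disjoint paths crossing a ring, so \emph{all} their winding numbers differ by at most~$1$. Hence if one $Q_{i'}$ has winding number $\leq r-10$, then every $Q_i$ has winding number $\leq r-9$, and after the $+6$ from Lemma~\ref{lem:onewayspiral} every rerouted $P'_{i,\ast}$ has winding number $\leq r-3 < r-2$. Since by Lemma~\ref{lem:bwrep:spiraling-ring}(1) each $\bundleword(P'_{i,\ast})$ is a subword of some power of $u$, and it uses $B$ strictly fewer times than $P_{i,\ast}$ did, $\bundleword(P'_i)$ is a \emph{proper subsequence} of $\bundleword(P_i)$ for every $i\in I$. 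This kills both levels of the lexicographic minimality at once: the non-isolation count cannot increase, and the total length strictly drops. There is no need to track a single $i_0$ or worry about whether $B$ is an isolation bundle.

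\emph{The order condition is a non-issue.} Lemma~\ref{lem:onewayspiral} requires the $p$-endpoints and the $q$-endpoints to each be listed in clockwise order, but it does not require any correspondence between the two lists --- the conclusion pairs $P'_i$ with $Q_i$ by index only. Since all paths here go in the same direction (forced by $B$), you can simply index both families in clockwise order on $f_1$ and apply the lemma directly; no matching argument is needed.
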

\begin{proof}
First note that, by Lemma \ref{lem:spiral-split},
the intersection of a path $P_i$ with $G_R$ is either empty
  or is a single path connecting a vertex on $f_1$ with a vertex on $f_2$.

Let $B$ be the first bundle on $u$. We treat $G_R$ as a rooted ring
with faces $f_1$ and $f_2$, containing the parts of the plane separated
by the curves $\gamma_1$ and $\gamma_2$ from $A_R$,
and with a reference curve being the bundle $B$ in the dual of $G$.

Assume the contrary: let $(Q_i)_{i \in I}$ be such that $\bundleword(Q_{i'})$
does not contain $u^{r-10}$ as a subword.
By Lemma \ref{lem:bwrep:spiraling-ring}, the winding number of $Q_{i'}$
is at most $r-10$. By Observation \ref{obs:pm1}, the winding number
of any $Q_i$ for $i \in I$ is at most $r-9$.

By Lemmata \ref{lem:spiral-split} and \ref{lem:bwrep:spiraling-ring},
for each $i \in I$, the intersection of $P_i$ with $G_R$
is a path $P_{i,\ast}$ that connects a vertex on $f_1$ with a vertex on $f_2$
such that $B\,\bundleword(P_{i,\ast}) = u^{r-1}$ and, consequently,
the winding number of $P_{i,\ast}$ is $r-2$.
Note that for any $i \notin I$, $P_i$ does not intersect $G_R$.

By Lemma \ref{lem:onewayspiral}, there exists a sequence
of vertex-disjoint paths $(P_{i,\ast}')_{i \in I}$ in $G_R$ such that
$P_{i,\ast}'$ connects the same pair of vertices as $P_{i,\ast}$,
but has winding number at most $r-3$. Therefore
$B\,\bundleword(P_{i,\ast}') = u_{r(i)}$ for some $r(i) < r-1$.

For each $1 \leq i \leq k$, construct a path $P_i'$ from $P_i$ by replacing
the subpath $P_{i,\ast}$ with $P_{i,\ast}'$. As $P_{i,\ast}$ and $P_{i,\ast}'$
have the same endpoints, $P_i'$ connects the $i$-th terminal pair.
As $P_{i,\ast}'$ were pairwise disjoint, and the intersection of $(P_i)_{i=1}^k$
with $G_R$ is exactly $(P_{i,\ast})_{i \in I}$, the sequence $(P_i')_{i=1}^k$
is a solution to \probshort{} in $G$. However, $\bundleword(P_i) = \bundleword(P_i')$
for $i \notin I$ and $\bundleword(P_i')$ is a proper subsequence of $\bundleword(P_i)$
  for $i \in I$. As $\iota \in I$, this contradicts the minimality of $(P_i)_{i=1}^k$.
\end{proof}

\subsection{Ring components: bound on the interaction}\label{ss:rings:bounds}

The goal of this section is to show that a minimal solution
has a limited interaction with ring components.
As a tool, we use the following routing argument.
\begin{lemma}[one-directional routing]\label{lem:flow-routing}
Let $H$ be a connected graph, embedded in a plane. Let $\ell > 0$ be an integer and
let $(p_j)_{j=1}^{2\ell}$, $(q_j)_{j=1}^{2\ell}$ be pairwise distinct vertices
that lie on the outer face of $H$ in the order $p_1,p_2,\ldots,p_{2\ell},q_{2\ell}, q_{2\ell-1}, \ldots, q_1$.
Moreover, let $(a_j)_{j=1}^\ell$, $(b_j)_{j=1}^\ell$ be pairwise
distinct vertices that lie on the outer face of $H$ in the order $b_1,b_2,\ldots,b_\ell,a_\ell, a_{\ell-1}, \ldots, a_1$
and such that $a_1,a_2,\ldots,a_\ell$ lie on the outer face of $H$ between $p_{2\ell}$ and $q_{2\ell}$ (possibly $p_{2\ell} = a_1$ or $q_{2\ell}=a_\ell$)
and $b_1,b_2,\ldots,b_\ell$ lie on the outer face of $H$ between $q_1$ and $p_1$ (possibly $p_1 = b_1$ or $q_1=b_\ell$).
Assume that there exists a set of $2\ell$ vertex-disjoint paths $(P_j)_{j=1}^{2\ell}$ in $H$ such that
for $1 \leq j \leq 2\ell$, $P_j$ starts in $p_j$ and ends in $q_j$,
and a set of $\ell$ vertex-disjoint paths $(C_j)_{j=1}^\ell$ in $H$ such that
for $1 \leq j \leq \ell$, $C_j$ starts in $a_j$ and ends in $b_j$.

Then there exist a sequence $(R_j)_{j=1}^\ell$ of vertex-disjoint paths in $H$ such that for $1 \leq j \leq \ell$, the path $R_j$
starts in $p_j$ and ends in $q_{j+\ell}$.
\end{lemma}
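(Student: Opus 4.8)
The plan is to prove Lemma~\ref{lem:flow-routing} by a topological flow-type argument: we want to find $\ell$ vertex-disjoint paths joining $p_j$ to $q_{j+\ell}$, and the obstacle to routing disjoint paths with prescribed endpoints on the outer face of a plane graph is a vertex cut that separates some $p_j$ from its target $q_{j+\ell}$ while respecting the cyclic order. So first I would set up the statement as a Menger-type / cut-condition problem: by planarity and the noncrossing structure of the desired linkage (the pairs $(p_1,q_{\ell+1}),\dots,(p_\ell,q_{2\ell})$ are noncrossing on the outer face in the given cyclic order $p_1,\dots,p_{2\ell},q_{2\ell},\dots,q_1$), the paths $R_1,\dots,R_\ell$ exist if and only if there is no ``short'' vertex separator: no set $Z$ of fewer than $\ell$ vertices whose removal places some $p_j$ and $q_{j+\ell}$ in different components of $H\setminus Z$ in a way consistent with the planar embedding. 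This is the classical fact that for noncrossing terminal pairs on the outer boundary of a planar graph, disjoint paths exist iff the obvious cut condition holds; I would cite or reprove it via Menger applied to an auxiliary cut curve.

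Next, assume for contradiction such a separator $Z$ with $|Z|\le \ell-1$ exists. Realize $Z$ by a curve $\gamma$ through the plane, running from a point on the outer face to another point on the outer face, passing only through vertices of $Z$, that topologically separates the offending pair $p_j,q_{j+\ell}$. The endpoints of $\gamma$ on the outer boundary partition the remaining outer vertices into two arcs; because of the cyclic order, one side of $\gamma$ must contain $p_j$ and the other side $q_{j+\ell}$. Now I bring in the two given families of disjoint paths. The key counting step: $\gamma$ must be crossed by enough of the $P$'s and $C$'s to force $|Z|\ge\ell$. Concretely, the $2\ell$ paths $P_1,\dots,P_{2\ell}$ connect $p_1,\dots,p_{2\ell}$ to $q_{2\ell},\dots,q_1$; depending on where the endpoints of $\gamma$ fall relative to the arc $p_1,\dots,p_{2\ell}$ and the arc $q_1,\dots,q_{2\ell}$, a certain number of these $P_j$ have their two endpoints on opposite sides of $\gamma$ and hence each must pass through a vertex of $Z$, and since the $P_j$ are vertex-disjoint these are distinct vertices of $Z$. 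Similarly the $C_j$'s, which connect the arc $a_1,\dots,a_\ell$ (sitting between $p_{2\ell}$ and $q_{2\ell}$) to the arc $b_1,\dots,b_\ell$ (sitting between $q_1$ and $p_1$), are forced to cross $\gamma$ whenever $\gamma$'s endpoints straddle these special intervals. I would do a short case analysis on which of the four boundary intervals (the $p$-arc, the $a$-arc, the $q$-arc, the $b$-arc) contain the two endpoints of $\gamma$, and in each case exhibit $\ell$ vertex-disjoint paths among $\{P_j\}\cup\{C_j\}$ each crossing $\gamma$, yielding $\ell$ distinct vertices on $\gamma$, contradicting $|Z|\le\ell-1$.

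The main obstacle I expect is making the case analysis clean: the curve $\gamma$ has two endpoints on the outer cycle, and one has to argue that in \emph{every} placement, the combined supply of $P$-paths and $C$-paths that are ``separated'' by $\gamma$ is at least $\ell$. The delicate part is that the $P$-family has $2\ell$ paths going ``one way'' (left-to-right, say $p$-arc to $q$-arc) while the $C$-family has $\ell$ paths going ``transversally'' (the $a$-arc near one end of the $p$/$q$ bundles to the $b$-arc near the other end); because all the $a_j,b_j$ sit on the short arcs flanking the $P$-bundle, whenever $\gamma$ separates a $p_j$ from a $q_{j+\ell}$ it must either cut deep into the $P$-bundle (contributing $\ge\ell$ of the $P_j$'s) or cut across so that it separates the $a$-arc from the $b$-arc (contributing all $\ell$ of the $C_j$'s), or a mix that still sums to $\ell$. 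I would verify this by orienting the outer cycle and tracking, as a function of the two crossing points, the quantity $(\#\text{separated }P_j) + (\#\text{separated }C_j)$, showing it is always $\ge\ell$; the monotonicity of how many noncrossing chords of a fixed family a chord $\gamma$ must cross makes this a finite, checkable computation. The rest — converting ``crosses $\gamma$'' into ``uses a vertex of $Z$'' and invoking vertex-disjointness to get distinctness — is routine.
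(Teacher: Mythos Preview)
Your approach is valid but substantially more involved than the paper's, and it takes a genuinely different route. The paper does not realize the cut as a curve and does no case analysis on the endpoints of~$\gamma$. Instead, after applying directed Menger to the sets $\{p_1,\dots,p_\ell\}$ and $\{q_{\ell+1},\dots,q_{2\ell}\}$ (and observing, as you do, that planarity forces the matching $p_j\leftrightarrow q_{\ell+j}$), it uses pure pigeonhole: since the $P_j$ are pairwise vertex-disjoint and $|X|<\ell$, among the $\ell$ index pairs $\{\iota,\ell+\iota\}$ some pair has both $P_\iota$ and $P_{\ell+\iota}$ disjoint from $X$; likewise some $C_{\iota'}$ avoids $X$. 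The finish is a single topological observation rather than a case split: $C_{\iota'}$ joins the $a$-arc to the $b$-arc and therefore separates all $p$'s from all $q$'s in the disc, so it meets both $P_\iota$ and $P_{\ell+\iota}$, and by the nesting of $P_{\ell+\iota}$ inside $P_\iota$ one can concatenate a prefix of $P_\iota$, a segment of $C_{\iota'}$, and a suffix of $P_{\ell+\iota}$ into a directed walk from $p_\iota$ to $q_{\ell+\iota}$ in $H\setminus X$, contradicting the cut. So the paper never needs to locate the cut geometrically; it only needs three specific paths that miss it.

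What your route buys is modularity (the curve-and-crossing count adapts to other terminal patterns), at the price of the case analysis you anticipate and of justifying that a directed vertex cut can be realized by a planar curve---a step you wave at but which is not automatic for directed Menger and really needs something like the Ding--Schrijver--Seymour cut condition (Theorem~\ref{th:discrerouting}) rather than bare Menger. One caution that affects both your argument and the paper's: the contribution of the $C_j$'s is directional. In your counting, a $C_j$ that crosses $\gamma$ only certifies a vertex with the required in/out pattern if it crosses from the $p$-side to the $q$-side; in the paper's explicit-path argument, the concatenation only goes through if $C_{\iota'}$ meets $P_\iota$ before $P_{\ell+\iota}$ along its own direction. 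In the paper's application the $C_j$'s come from alternating cycles and are taken ``in one of the directions'', so this orientation can be (and is) chosen correctly; just be aware that the orientation of the $C_j$'s is doing real work in both proofs.
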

\begin{proof}
We first note that, by planarity, any set of $\ell$ vertex-disjoint paths that connects 
$\{p_1,p_2,\ldots,p_\ell\}$ with $q_{\ell+1},q_{\ell+2},\ldots,q_{2\ell}$ needs to connect
$p_\iota$ with $q_{\ell+\iota}$ for $1 \leq \iota \leq \ell$. Therefore, by Menger's theorem
if the desired paths $(R_j)_{j=1}^\ell$ does not exist, there exists a set $X \subseteq V(H)$, $|X| < \ell$,
such that in $H \setminus X$ no vertex $q_{\ell+\iota'}$ is reachable from $p_\iota$ for any $1 \leq \iota,\iota' \leq \ell$,
However, as $|X| < \ell$, there exists $1 \leq \iota,\iota' \leq \ell$ such that $X$ does not contain any vertex of
$P_\iota$, $P_{\ell+\iota}$ and $C_{\iota'}$. Due to the order of the vertices $p_j$, $q_j$, $a_j$ and $b_j$ on the outer face of $H$,
the union of these three paths contain a path from $p_\iota$ to $q_{\ell+\iota}$, a contradiction.
\end{proof}

We now use the isolation components $\disccomp_{OUT,\lambda}$ and $\disccomp_{IN,\lambda}$
to show that there is a bounded number of ring parts in a minimal solution.

\begin{theorem}[bound on the number of isolation passages]\label{thm:isolation-passages}
Let $(G,\decomp,\bundleset)$ be a bundled instance, such that $\decomp$ has isolation $(\Lambda,d)$ where $\Lambda > 0$ and $d \geq 2k$.
Assume that $G$ is a yes instance to \probshort{} and let $(P_i)_{i=1}^k$ be a minimal solution.
Then the paths $P_i$, in total, contain at most $4|\bundleset|^2 k^2$ isolation passages.
\end{theorem}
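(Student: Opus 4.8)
The goal is to bound the total number of isolation passages in a minimal solution by $4|\bundleset|^2 k^2$. First I would observe that an isolation passage through a disc component $\disccomp$ (say $\disccomp_{OUT,\lambda}$ for some ring component $\ringcomp$ and level $\lambda$) is, by definition, a maximal subpath of some $P_i$ contained in $\disccomp$ together with its isolation bundles, whose preceding and succeeding arcs lie on different faces of $\disccomp$ — i.e.\ the passage goes ``through'' the disc. Since $\disccomp$ is a $d$-isolating component with $d \ge 2k$ alternating concentric cycles, such a passage must cross all $d$ of these cycles. The idea is that too many isolation passages through a single $\disccomp$ would let us build the configuration needed for Lemma~\ref{lem:flow-routing} (one-directional routing) inside the closure of $\ringcomp$, and then use the two levels of isolation on either side of $\ringcomp$ to reroute the solution shortening its bundle words, contradicting minimality.

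**Key steps.** (1) Fix a bundle $B$ and a ring component $\ringcomp$; I would count isolation passages ``of type $B$'' through the isolating discs of $\ringcomp$, i.e.\ passages whose preceding (or succeeding) arc lies in $B$. The claim to target is that for each of the $O(|\bundleset|)$ relevant bundles and each of the $k$ paths, there are $O(|\bundleset| k)$ such passages, giving the total $4|\bundleset|^2 k^2$; more crudely, one shows that for a fixed pair of ``adjacent'' isolation levels only $O(k)$ passages of each bundle-type can occur, summed appropriately. (2) Suppose toward a contradiction that some $\disccomp_{OUT,\lambda}$ (the argument for $\disccomp_{IN,\lambda}$ is symmetric) is crossed by many isolation passages of the solution of the same bundle-type. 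Because $\decomp$ has two consecutive isolation levels, there is a disc component $\disccomp_{OUT,\lambda+1}$ (or $\disccomp_{OUT,\lambda-1}$, or $\ringcomp$ itself) on each side, whose $d$ alternating concentric cycles act as ``chords.'' Many parallel passages crossing these cycles together, inside a region containing no terminal (as the interior between the innermost and outermost cycle of an isolating component contains only arcs of that component and its isolation bundles), is precisely the situation of the disc-rerouting argument — here the cleaner flow-based version of Lemma~\ref{lem:flow-routing}, exploiting that the directions are controlled. (3) Apply Lemma~\ref{lem:flow-routing} with the $2\ell$ passage-paths playing the role of $(P_j)$ and the relevant alternating cycles of the neighbouring isolating discs playing the role of the connector paths $(C_j)$; this yields rerouted paths $R_j$ connecting the passage-entry points ``across'' in pairs. (4) Check that splicing the $R_j$ back into the $P_i$ yields a valid solution — disjointness follows because the region is occupied only by these passages, and the endpoint-matching is exactly the planar pairing forced by Lemma~\ref{lem:flow-routing} — and that each rerouted path has a strictly shorter bundle word (it no longer spirals through the isolating disc the same number of times). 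Conclude the contradiction with minimality (first criterion: fewer non-isolation bundles, since a crossing through an isolating disc contributes non-isolation bundle symbols at its two ends; or the second criterion, total length).

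**Main obstacle.** The technical heart is step (2)–(4): precisely setting up the planar configuration so that Lemma~\ref{lem:flow-routing} applies, i.e.\ identifying which vertices on the outer face of the relevant subgraph $H$ (a piece of $\ringcl(\ringcomp)$ between two isolation levels) play the roles of $p_j,q_j,a_j,b_j$, and verifying the cyclic order conditions. One must be careful that the passages genuinely enter and exit on opposite faces and that one can extract a large \emph{monotone} (nested, or uncrossed) subfamily of them — this is where the alternating orientation of the concentric cycles and Observation~\ref{obs:pm1}-type direction control replace the more delicate undirected argument of Adler et al.\ (as the footnote in the overview promises). A secondary subtlety is bookkeeping: translating ``many passages through some isolating disc'' into the stated bound $4|\bundleset|^2 k^2$ requires summing over bundle-types (at most $|\bundleset|$ choices for the entering bundle), over paths $P_i$ ($k$ choices), and over the ``slots'' of each path's bundle word, using a pigeonhole that forces a spiral-like repetition once a single path makes more than $O(|\bundleset| k)$ isolation passages through the discs isolating one ring component.
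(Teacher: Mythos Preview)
Your overall architecture is right --- many isolation passages through a fixed isolating disc $\disccomp$, plus the alternating cycles inside it, feed Lemma~\ref{lem:flow-routing}, and the resulting reroute contradicts minimality. But there is a genuine gap in step~(4), and a related confusion in the counting.

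The issue is the \emph{pairing} that makes the splice legal. Lemma~\ref{lem:flow-routing} outputs paths $R_\iota$ from $x_\iota$ (the start of the $\iota$-th passage) to $y_{\iota+\ell}$ (the end of the $(\iota+\ell)$-th passage). For the splice to produce a solution, $x_\iota$ and $y_{\iota+\ell}$ must lie on the \emph{same} solution path $P_i$, and the region between the $2\ell$ passages must contain no other pieces of the solution. Nothing in your outline guarantees this: if you simply take many passages of a single path $P_i$, other paths' passages may be interleaved between them (and would be destroyed by the reroute); if you take many passages of mixed paths, $x_\iota$ and $y_{\iota+\ell}$ need not be on the same $P_i$ at all. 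Your sentence ``the endpoint-matching is exactly the planar pairing forced by Lemma~\ref{lem:flow-routing}'' assumes precisely what has to be proved.

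The paper closes this gap as follows. Fix a pair $(B^1,B^2)$ of bundles (entering/leaving $\disccomp$); list the passages of that type as $Q_1,\dots,Q_r$ in their order along $B^1$, and record the word $i_1 i_2 \cdots i_r$ of path-indices over the alphabet $\{1,\dots,k\}$. This is a \emph{bundle profile}, so by Corollary~\ref{cor:bpdecomp} it has a spiral decomposition with at most $2k$ terms, each of length at most $k$; hence if $r>4k^2$ (after a factor-$2$ loss from a cyclic shift at $B^2$) some term has exponent $\ge 2$, yielding a subword $u^2$ with $|u|=\ell$ and $u$ repetition-free. Now the $2\ell$ consecutive passages $Q_j,\dots,Q_{j+2\ell-1}$ satisfy $i_\iota=i_{\iota+\ell}$ for every $\iota$, so $x_\iota$ and $y_{\iota+\ell}$ \emph{do} lie on the same $P_{i_\iota}$; Lemma~\ref{lem:bporder} further gives that $Q_\iota$ precedes $Q_{\iota+\ell}$ on $P_{i_\iota}$. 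Only these $2\ell$ passages meet the region between $Q_j$ and $Q_{j+2\ell-1}$, so the splice is legal, and each rerouted $P_{i_\iota}'$ drops the non-isolation arc $\mathrm{prec}(Q_{\iota+\ell})\in B^1$, contradicting minimality.

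Two smaller corrections. First, the connector paths $(C_j)$ in Lemma~\ref{lem:flow-routing} come from the $d\ge 2k$ alternating cycles of $\disccomp$ \emph{itself} (restricted to the region between $Q_j$ and $Q_{j+2\ell-1}$), not from a neighbouring isolating disc; you need at most $\ell\le k$ of them in one direction, and $d\ge 2k$ supplies that. Second, the counting is simply: at most $|\bundleset|^2$ choices for $(B^1,B^2)$, and at most $4k^2$ passages per choice, giving $4|\bundleset|^2 k^2$. Your per-path pigeonhole with bound $O(|\bundleset|k)$ is not the right slicing; the argument is over \emph{all} passages of a fixed $(B^1,B^2)$-type simultaneously.
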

\begin{proof}
Consider one ring component $\ringcomp \in \decomp$ and one its isolation component $\disccomp$. Fix one bundle $B^1$ that contains
arcs with ending points contained in $\disccomp$ fix one bundle $B^2$ that contains arcs with starting points in $\disccomp$.
Consider isolation passages of paths $(P_i)_{i=1}^k$ such that
that $B^1$ contains the arc preceding the isolation part and $B^2$ contains the arc succeeding it.
Note that between $B^1$ and $B^2$ the path $P_i$ may contain arcs only in bundles
that are isolation bundles of component $\disccomp$.

To prove the lemma, we need to show that in a minimal solution there are at most $4k^2$ isolation passages.
for a fixed choice of $\ringcomp$, $\disccomp$, $B^1$ and $B^2$. Note that the choice of $B^1$ determines $\ringcomp$ and $\disccomp$,
thus there are less than $|\bundleset|^2$ choices.

Consider a choice of $\ringcomp$, $\disccomp$, $B^1$ and $B^2$ where there are more than $4k^2$ such isolation passages.
Denote all such passages, in the order of their appearance on $B^1$, as $Q_1, Q_2, \ldots, Q_r$.
Assume $Q_j$ belongs to a path $P_{i_j}$.
Let $\mathrm{prec(Q_j)} \in B^1$ be an arc preceding $Q_j$ on $P_{i_j}$ and $\mathrm{succ}(Q_j) \in B^2$ be an arc succeeding $Q_j$ on $P_{i_j}$.

Let $H = \ringcl(\ringcomp)[\disccomp]$.
By the definition of isolation, all arcs of $G$ that are not contained in $H$, but are incident to at least one vertex of $H$,
are contained in one of the two faces of $H$: the one containing $\ringcomp$ and $B^2$, and the one containing $B^1$.
Thus, by planarity, the paths $Q_1, Q_2, \ldots, Q_r$ arrive at bundle $B^2$ in order $Q_{q+1}, Q_{q+2}, \ldots, Q_r, Q_1, Q_2, \ldots, Q_q$
for some $1 \leq q \leq r$. We assume $q \geq r/2$; the second case is symmetrical. Note that $q \geq r/2$ implies $q > 2k^2$.

Consider an area $A_0$ of the plane enclosed by:
\begin{itemize}
\item the path $Q_1$, together with parts of arcs of $\mathrm{prec}(Q_1)$ and $\mathrm{succ}(Q_1)$;
\item a drawing of a path in the dual of $G$, connecting the arc $\mathrm{succ}(Q_1)$ and the arc $\mathrm{succ}(Q_q)$, that intersects only the arcs of $B^2$ between
$\mathrm{succ}(Q_1)$ and $\mathrm{succ}(Q_q)$;
\item the path $Q_q$, together with parts of arcs of $\mathrm{prec}(Q_q)$ and $\mathrm{succ}(Q_q)$;
\item a drawing of a path in the dual of $G$, connecting the arc $\mathrm{prec}(Q_1)$ and the arc $\mathrm{prec}(Q_q)$, that intersects only the arcs of $B^1$ between
$\mathrm{prec}(Q_1)$ and $\mathrm{prec}(Q_q)$.
\end{itemize}
Note that any path may enter $A_0$ only via an arc of $B^1$ between $\mathrm{prec}(Q_1)$ and $\mathrm{prec}(Q_q)$
and leave $A_0$ only via an arc of $B^2$ between $\mathrm{succ}(Q_1)$ and $\mathrm{succ}(Q_q)$.
Moreover, $A_0$ does not contain any terminal, as it contains only elements of $H$. Therefore the paths $Q_1,Q_2,\ldots,Q_q$ (with preceding and succeeding arcs)
are the only intersections of the solution $(P_i)_{i=1}^k$ with the closure of the area $A_0$.

By Corollary \ref{cor:bpdecomp}, the word $i_1i_2\ldots i_q$ over alphabet $\{1,2,\ldots,k\}$ can be decomposed into powers of at most $2k$ words, such that
each of these words contains each symbol at most once. As $q > 2k^2$, there exist a subword $w$ or $i_1i_2\ldots i_q$ of the form $u^2$, where $u$ contains
each symbol at most once. Assume $|u| = \ell$ and $w = i_j i_{j+1} \ldots, i_{j+2\ell-1}$, $i_\iota=i_{\iota+\ell}$ for $j \leq \iota < j+\ell$.

Consider now a subarea $A$ of $A_0$, defined similarly as $A_0$, but with paths $Q_j$ and $Q_{j+2\ell-1}$ as borders. 
The intersection of the solution $(P_i)_{i=1}^k$ with the closure of $A$ is exactly the set of paths $Q_j, Q_{j+1}, \ldots, Q_{j+2\ell-1}$, together
with parts of their preceding and succeeding arcs of $B^N$ and $B^R$.
For $j \leq \iota < j+2\ell$, let $x_\iota$ and $y_\iota$ be the first and the last point of $Q_\iota$, respectively.
For $j \leq \iota < j+\ell$, the paths $Q_\iota$ and $Q_{\iota+\ell}$ belong to $P_{i_\iota}$ and $i_\iota$ are pairwise distinct for $j \leq \iota < j+\ell$.

Let $H_A$ be a subgraph of $H$ consisting of vertices and edges that are contained entirely in the closure of $A$.
Assume that $Q_j$ appears earlier on the path $P_{i_j}$ than $Q_{j+\ell}$; in the opposite case, we may consider
a mirror image of $H_A$.
By Lemma \ref{lem:bporder}, for any $j \leq \iota < j+\ell$, the path $Q_\iota$ appears on $P_{i_\iota}$ earlier than $Q_{\iota+\ell}$.

We now note that we may apply Lemma \ref{lem:flow-routing} to the graph $H_A$ with paths $Q_\iota$ connecting $(x_\iota)_{\iota=j}^{j+2\ell-1}$
with $(y_\iota)_{\iota=j}^{j+2\ell-1}$. Indeed, recall that $d\geq 2k$, thus $H$ contains $2k$ alternating cycles and each of these cycles
intersects all paths $(Q_\iota)_{\iota=j}^{j+2\ell-1}$. Therefore $k$ of these cycles (in one of the directions) yield
the promised paths $(C_j)_{j=1}^\ell$. We infer that in $H_A$ there exist a sequence of vertex-disjoint paths $(R_\iota)_{\iota=j}^{j+\ell-1}$
such that $R_\iota$ starts in $x_\iota$ and ends in $y_{\iota+\ell}$.

Consider the following set of paths $(P_i')_{i=1}^k$. For each $j \leq \iota < j+\ell$, we remove from $P_{i_\iota}$
a subpath starting from $x_\iota$ and ending at $y_{\iota+\ell}$ and replace it with $R_\iota$. Since the intersection of $H_A$
with the solution $(P_i)_{i=1}^k$ consists of the paths $\{Q_\iota: j \leq \iota < j+2\ell\}$ only, $(P_i')_{i=1}^k$ is a solution to \probshort{} on $G$.
Moreover, each path $P_{i_\iota}'$ contains strictly less arcs of non-isolating bundles than $P_{i_\iota}$, as we removed from $P_{i_\iota}$ an arc $\mathrm{prec}(Q_{\iota+\ell})$, and the paths $R_\iota$ contain
only isolation arcs of $\disccomp$.
This contradicts the minimality of $(P_i)_{i=1}^k$ and concludes the proof of theorem.

\end{proof}

We now show that a minimal solution cannot oscillate between isolation layers.
\begin{theorem}[no oscillators in the ring]\label{thm:oscillators}
Let $(G,\decomp,\bundleset)$ be a bundled instance, such that $\decomp$ has isolation
$(\Lambda,d)$ where $\Lambda \geq 1$ and $d \geq f(k,k) + 4$,
where $f(k,t)=2^{O(kt)}$ is the bound on the type-$t$ bend promised by Lemma \ref{lem:bendbound}.
Let $\disccomp$ be an isolation component of a ring component $\ringcomp$,
and let $f_1$ and $f_2$ be the two faces of $\ringcl(\ringcomp)[V(\disccomp)]$ that contain
vertices and edges of $G \setminus \disccomp$.
Let $(P_i)_{i=1}^k$ be a minimal solution to \probshort{} on $G$ and assume that, for some $1 \leq \iota \leq k$,
the path $P_\iota$ contains a subpath $P$ that starts with an arc $e_1$ with ending point in $\disccomp$, 
ends with an arc $e_2$ with starting point in $\disccomp$, is contained in $\ringcl(\ringcomp)$ and both $e_1$ and $e_2$ lie in $f_1$.
Then $P$ does not contain any arc that lies in $f_2$.
\end{theorem}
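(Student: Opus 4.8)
The plan is to argue by contradiction: assume the subpath $P$ contains an arc drawn in $f_2$, and produce from the minimal solution $(P_i)_{i=1}^k$ a new solution using strictly fewer non-isolation bundle arcs, contradicting the first minimality condition.

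First I would fix the geometry. Since $\disccomp$ is $d$-isolating, fix a sequence $\hat C_1,\dots,\hat C_d$ of $d$ alternating concentric cycles inside the subgraph induced by $V(\disccomp)$, ordered so that $\hat C_1$ bounds the side of $f_1$ and $\hat C_d$ the side of $f_2$ (recall that $f_1$ and $f_2$ are exactly the two faces of $\ringcl(\ringcomp)[V(\disccomp)]$ containing vertices or edges of $G\setminus\disccomp$, and by the $d$-isolation these are the faces on the two sides of $\hat C_1,\dots,\hat C_d$). By the remark following the definition of an isolating component, the closed region $R$ enclosed between $\hat C_1$ and $\hat C_d$ contains no terminal, and every arc and vertex drawn inside $R$ belongs to $\disccomp$ or to a bundle with both endpoints in $\disccomp$; in particular, no arc drawn in $R$ is a non-isolation bundle arc. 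Since $e_1,e_2$ lie in $f_1$ while $P$ contains an arc lying in $f_2$, the path $P_\iota$ enters $R$ through the $\hat C_1$-side, passes $\hat C_d$ into $f_2$, and later leaves $R$ again on the $\hat C_1$-side; so $P_\iota$ contains two subpaths $S^{\mathrm{in}},S^{\mathrm{out}}$ lying in $R$, each joining $\hat C_1$ to $\hat C_d$, with the portion of $P_\iota$ between them escaping $R$ on the $f_2$-side.

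The crux is then to replace the portion of $P_\iota$ from the $\hat C_1$-endpoint of $S^{\mathrm{in}}$ to the $\hat C_1$-endpoint of $S^{\mathrm{out}}$ by a single directed path contained in $R$, vertex-disjoint from the rest of the solution. Here the bend machinery of Section~\ref{sec:irr} supplies the directed connectivity coming from the $d\ge f(k,k)+4$ alternating concentric cycles of $R$: after contracting the arcs shared by the solution and $\bigcup_i\hat C_i$ exactly as in the proof of Theorem~\ref{th:irrelevant}, so that enough of the cycles become free, one extracts from $P_\iota$ a bend of type at most $k$ whose chords are subpaths of the $\hat C_i$'s, and which is terminal-free (no terminal lies in $R$) and has $d-O(1)>f(k,k)$ chords. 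The proof of Lemma~\ref{lem:bendbound} uses uniqueness only to turn its reroutings into \emph{different} solutions; but every such reroute substitutes for a piece of a path a chord lying in $R$ --- hence using no non-isolation bundle arc --- while the replaced piece of $P_\iota$ contains $e_1$ and the two edges of $\incarcs{\disccomp}$ through which $P$ reached and left $f_2$, which \emph{are} non-isolation bundle arcs. Thus the reroute yields a solution with strictly fewer non-isolation bundle arcs, so minimality of $(P_i)_{i=1}^k$ plays the role of uniqueness, and the bound $f(k,k)$ on terminal-free bends of type at most $k$ contradicts the $d-O(1)$ chords produced.

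The step I expect to be the main obstacle is exactly this reroute, because the cycles $\hat C_i$ may be traversed by the solution and the spliced-in path must be directed, simple, and disjoint from the (further) passages of the other $P_j$'s through $R$. I would handle the shared-arc issue by the contraction-and-uncontraction device of the proof of Theorem~\ref{th:irrelevant}, with the delicate point being to confirm that, after uncontracting, the new solution still uses strictly fewer non-isolation bundle arcs; this follows because the contracted arcs and the chords used all lie in $R$ and so are never non-isolation bundle arcs. Both orientations ($f_1$ on the inside or on the outside) are symmetric, so it suffices to treat one.
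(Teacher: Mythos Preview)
Your overall strategy matches the paper's: assume $P$ reaches $f_2$, use the $d$ alternating cycles in the isolating component to build a large bend after contracting arcs shared by the solution and the cycles, and then invoke the bend bound of Lemma~\ref{lem:bendbound} to reroute. You also correctly identify that bundle-word minimality must stand in for uniqueness, and that the contraction/uncontraction device of Theorem~\ref{th:irrelevant} is the way to make the cycles free. So the plan is right.

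The gap is at the point where you assert that a reroute coming out of Lemma~\ref{lem:bendbound}'s proof necessarily drops a non-isolation bundle arc. After contraction you are in a graph whose only non-chord arcs are images of the original solution's arcs; any alternative solution there uses a subset of those arcs together with chord arcs. Chord arcs lie in $R$ and hence contribute no non-isolation bundle arcs, so the alternative solution uses \emph{at most} as many non-isolation bundle arcs as the original. But you need \emph{strictly fewer}, and nothing in the reroute machinery forces the dropped arc to be the $f_2$-crossing arc $e$ rather than some arc of a $P_j$ lying inside $R$. In particular, the reroute from Lemma~\ref{lem:nestedrerouting} operates on nested segments deep inside the bend and may not touch the $f_2$ side at all; ``different solution'' is weaker than ``solution omitting $e$''.

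The paper closes this gap with an extra minimization layer. It passes to $H^1=(\text{solution paths})\cup(\text{chords})$, takes a solution $(Q_i^\circ)$ in $H^1$ minimizing the number of off-chord arcs, and proves a separate claim that $(Q_i^\circ)$ avoids the specific non-isolation bundle arc $e$. That claim is itself proved by building yet another pair of auxiliary graphs $H^2,H^3$ (closing the chords into full alternating cycles and repeating the contraction trick) and applying Lemma~\ref{lem:bendbound} \emph{there}; only then does pulling $(Q_i^\circ)$ back to $G$ give a solution with strictly fewer non-isolation bundle arcs. To make your sketch complete you need this intermediate ``minimize off-chord arcs, then show $e$ is avoided'' step, or some equivalent mechanism that pins down which arc gets dropped.
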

\begin{proof}
Assume the contrary, and the path $P$ contains some arc in $f_2$;
in particular, $P$ contains a bundle arc $e$ of bundle $B$ leading from $\disccomp$ to another
disc or ring component of $\ringcl(\ringcomp)$ that lies in $f_2$.
Note that $B$ is not an isolating component.
Our goal is to apply the bound from Lemma \ref{lem:bendbound} to reroute
$(P_i)_{i=1}^k$ so that it does not use the arc $e$. In this way we would strictly
decrease the number of non-isolation bundles of the solution, a contradiction
to its minimality. However, to reuse Lemma \ref{lem:bendbound},
we need to make arguments similar to the proof of Theorem \ref{th:irrelevant}
using Lemma \ref{lem:unique}.

Let $C_0^\ast, C_1^\ast, \ldots, C_{d-1}^\ast$ be a fixed choice of alternating cycles in $\ringcl(\ringcomp)[V(\disccomp)]$, separating
$f_1$ from $f_2$.
Without loss of generality assume that $f_1$ is the infinite face
of $\ringcl(\ringcomp)[V(\disccomp)]$,
and $C_i^\ast$ encloses $C_j^\ast$ whenever $i < j$ (i.e., the cycle $C_0^\ast$ is close to $f_1$
and $C_{d-1}^\ast$ is close to $f_2$).
Let $H^\ast$ be a subgraph of $G$ consisting of the solution $(P_i)_{i=1}^k$
and the cycles $(C_j^\ast)_{j=0}^{d-1}$. Let $H$ be a graph obtained from
$H^\ast$ by contracting any arc that belongs both to some path $P_i$ and to some cycle
$C_j^\ast$. Each path $P_i$ projects to a path $Q_i$ in $H$, $P$ projects to $Q$,
a subpath
of $Q_\iota$, and
each cycle $C_j^\ast$ projects to a cycle $C_j$; note that $C_j$ may consist of a single
arc (self-loop at some vertex), but does not disappear completely.
Note that $(Q_i)_{i=1}^k$ is a solution to \probshort{} on $H$
and the cycles $(C_j)_{j=0}^{d-1}$ are free.

Let $x_0$ and $y_0$ be the two points of $Q \cap C_0$ closest (on $Q$)
to the chosen edge $e$.
As $P$ is contained in $\ringcl(\ringcomp)$, for one of the two subpaths of $C_0$
between $x_0$ and $y_0$, denote it $C_0'$, we have that the cycle $C_0' \cup Q[x_0,y_0]$
does not enclose any point of $f_2$ and, therefore, does not enclose any terminal.
By Lemma \ref{lem:concentricbend2}, there exists subpaths $C_i'$ of $C_i$ with endpoints
$x_i$ and $y_i$ for $1 \leq i < d-1$,
such that $(Q[x_0,y_0],C_0',C_1',\ldots,C_{d-2}')$ is a $(d-2)$-bend.

Let $H^1$ be a subgraph of $H$ that consists of the paths $(Q_i)_{i=1}^k$
and the chords $(C_j')_{j=0}^{d-2}$. Clearly, $(Q_i)_{i=1}^k$ is a solution
to \probshort{} on $H^1$ as well, and
$(R,C_0',C_1',\ldots,C_{d-2}')$ is a $(d-2)$-bend in $H^1$.
Let $(Q_i^\circ)_{i=1}^k$ be a solution
to \probshort{} on $H^1$ that uses minimum possible number of edges that do
not lie on the chords $(C_j')_{j=0}^{d-2}$. We claim the following.

\begin{claim}\label{cl:no-isolation-bend}
The paths $(Q_i^\circ)_{i=1}^k$ do not use the arc $e$.
\end{claim}
\begin{proof}
Construct a graph $H^2$ from $H^1$ as follows: first, for any $0 \leq j \leq d-2$,
connect $x_j$ with $y_j$ outside the bend $(R,C_0',C_1',\ldots,C_{d-2}')$
with an arc $a_j$ in a direction such that $C_j^2 := C_j' \cup \{a_j\}$ is a directed cycle.
Note that this can be done such that $H^2$ is planar
and without changing the embedding of $H^1$, as the vertices $x_j$ and $y_j$
lie in a good order along the path $Q[x_0,y_0]$:
simply draw the arc $a_j$ parallely to the path $Q[x_0,y_0]$.
In this manner we obtain that $(C_j^2)_{j=0}^{d-2}$ is a set of
alternating concentric cycles not enclosing any terminal.

Now construct a graph $H^3$ from $H^2$ by first removing
any arc that does not belong to any cycle $C_j^2$ nor any
path $Q_i^\circ$ and then by contracting any arc that belongs
both to some path $Q_i^\circ$ and some cycle $C_j^2$. Note that
$a_j$ does not become contracted in this manner, and the cycle $C_j^2$
projects to a cycle $C_j^3$ in $H^3$. Moreover, the solution
$(Q_i^\circ)_{i=1}^k$ projects to a solution $(Q_i^3)_{i=1}^k$ to \probshort{} in $H^3$
and the cycles $C_j^3$ are free with respect to this solution.

Assume that $e \in Q_\eta^\circ$ for some $1 \leq \eta \leq k$; as $e$ does not lie on any cycle $C_j'$,
$e \in Q_\eta^3$ as well.
Note that $C_0'$ separates $e$ from the terminals in $H^1$, and thus $C_0^3$ separates $e$ from the terminals in $H^3$.
Thus, there exist vertices $x_0^3, y_0^3 \in Q_\eta^3 \cap C_0^3$
that are closest to $e$ on $Q_\eta^3$;
let $C_0^{3+}$ be the subpath of $C_0^3$ between $x_0^3$ and $y_0^3$ that
does not contain $a_0$. Note that the cycle $C_0^{3+} \cup Q_\eta^3[x_0^3,y_0^3]$
does not enclose
any terminal nor any arc $a_j$, contains a vertex of $C_{d-2}^3$.
By Lemma \ref{lem:concentricbend2}, there is a $(d-4)$-bend on $Q_\eta^3$
that does not enclose any arc $a_j$, $0 \leq j \leq d-2$.

As $d \geq f(k,k) + 4$, by Lemma \ref{lem:bendbound}, $(Q_i^3)_{i=1}^k$
is not a minimal solution to \probshort{} in $H^3 \setminus \{a_j:0 \leq j \leq d-2\}$.
Let $(R_i^3)_{i=1}^k$ be a different solution.
Let $E_R = \bigcup_{i=1}^k E(R_i^3)$, $E_Q = \bigcup_{i=1}^k E(Q_i^3)$
and $E_C = \bigcup_{j=0}^{d-2} E(C_j^3)$. Note that $E(H^3) = E_Q \cup E_C$,
and $E_Q$ is a set of disjoint paths. Hence $E_R \setminus E_C$ is a proper subset of $E_Q$.

Similarly as in the proof of Theorem \ref{th:irrelevant}, the solution $(R_i^3)_{i=1}^k$
yields a solution $(R_i^\circ)_{i=1}^k$ in $H^1$. Indeed, $H^3$ is created
from $H^1$ by removing some arcs, adding arcs $a_j$ (not used by any path $R_i^3$)
and contracting some arcs: however, as $H^2$ consisted
of cycles $(C_j^2)_{j=0}^{d-2}$ and paths $(Q_i^\circ)_{i=1}^k$,
if we uncontract an arc $xy$ the arc $xy$ is the only arc leaving $x$ and entering $y$,
and any path going through the image of $xy$ in $H^3$ can be redirected via the arc $xy$ in $H^2$.
However, the solution $(R_i^\circ)_{i=1}^k$ uses strictly less arcs outside the
chords $(C_j')_{j=0}^{d-2}$ than $(Q_i^\circ)_{i=1}^k$, a contradiction to the choice
of $(Q_i^\circ)_{i=1}^k$. This finishes the proof of the claim.
\end{proof}

Clearly, $(Q_i^\circ)_{i=1}^k$ is a solution to \probshort{} on both $H$ and $H^1$
that does not use the arc $e$.
Again, by the construction of the graph $H$ from $H^\ast$,
the solution $(Q_i^\circ)_{i=1}^k$ in $H$ yields a solution $(P_i^\circ)_{i=1}^k$
in $H^\ast$ that does not use the arc $e$.
Hence, by the construction of $H^\ast$,
the bundle arcs of non-isolation bundles of the solution $(P_i^\circ)$
is a proper subset of the bundle arcs of non-isolation bundles of the solution
$(P_i)_{i=1}^k$, a contradiction to the minimality of $(P_i)_{i=1}^k$.
\end{proof}

\begin{corollary}[no ring visitors]\label{cor:no-visitors}
Let $(G,\decomp,\bundleset)$ be a bundled instance, such that $\decomp$ has isolation
$(\Lambda,d)$ where $\Lambda \geq 1$ and $d \geq f(k,k)+4$,
where $f(k,t)=2^{O(kt)}$ is the bound on the type-$t$ bend promised by Lemma \ref{lem:bendbound}.
Assume that $G$ is a YES-instance to \probshort{} and let $(P_i)_{i=1}^k$ be a minimal solution.
Then there are no ring visitors on any path $P_i$.
\end{corollary}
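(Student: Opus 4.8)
The plan is to reduce the statement to Theorem~\ref{thm:oscillators} by a purely topological argument about where a ring visitor can enter and leave the closure of a ring component. Suppose, for contradiction, that some path $P_\iota$ of the minimal solution carries a ring visitor; let $w$ be the corresponding ring part of $\bundleword(P_\iota)$, living in the closure $\ringcl(\ringcomp)$ of a ring component $\ringcomp\in\decomp$, and let $R$ be the corresponding subpath of $P_\iota$ (so $R\subseteq\ringcl(\ringcomp)$). Since $P_\iota$ connects a terminal pair, its endpoints lie in no ring closure, so the predecessor $e_1$ and successor $e_2$ of $R$ are well defined; by definition they are arcs of normal bundles, and their endpoints on $R$ — the first and last vertices of $R$ — are vertices of $\ringcl(\ringcomp)$. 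Because a normal bundle does not belong to $\ringcl(\ringcomp)$, the isolation property (every bundle arc incident to $\ringcl(\ringcomp)$ but not in it lies in the outer or the inner face of $\ringcl(\ringcomp)$) forces $e_1$ and $e_2$ into one of these two faces; the outer face borders $\disccomp_{OUT,\Lambda}$, the inner face borders $\disccomp_{IN,\Lambda}$. As $w$ is a ring \emph{visitor}, $e_1$ and $e_2$ lie in the \emph{same} face of $\ringcl(\ringcomp)$; after renaming $IN\leftrightarrow OUT$ if necessary, I may assume both border $\disccomp:=\disccomp_{OUT,\Lambda}$. Let $f_1$ denote that face of $\ringcl(\ringcomp)[V(\disccomp)]$ and $f_2$ the other one, i.e.\ the face containing $\disccomp_{OUT,\Lambda-1}$ (or $\ringcomp$ itself when $\Lambda=1$).

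Next I would show that $R$ contains an arc lying in $f_2$. By definition of a ring part, $w$ contains a bundle $B$ of level $\lambda<\Lambda$, so $B$ has an arc incident to $\ringcl^{\lambda}(\ringcomp)\subseteq\ringcl^{\Lambda-1}(\ringcomp)$. Since $\disccomp=\disccomp_{OUT,\Lambda}$ is disjoint from $\ringcl^{\Lambda-1}(\ringcomp)$, that arc — indeed every arc of $B$, as the component of $\ringcl(\ringcomp)$ that $B$ reaches is entirely contained in $\ringcl^{\Lambda-1}(\ringcomp)$ — has an endpoint in $\ringcl^{\Lambda-1}(\ringcomp)$ outside $\disccomp$, hence $R$ visits a vertex of $\ringcl^{\Lambda-1}(\ringcomp)$. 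Such a vertex lies in the region $f_2$ (the part of $\ringcl(\ringcomp)$ enclosed by $\disccomp$ contains all of $\ringcl^{\Lambda-1}(\ringcomp)$). As $R$ starts and ends at vertices of $\disccomp$ and its only neighbour within $\ringcl(\ringcomp)$ is $\disccomp_{OUT,\Lambda-1}$ (or $\ringcomp$ when $\Lambda=1$), on the $f_2$ side, $R$ must leave $\disccomp$ through a bundle arc to that component, and this arc lies in $f_2$.

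Now I would invoke Theorem~\ref{thm:oscillators} with the ring component $\ringcomp$, the isolation component $\disccomp=\disccomp_{OUT,\Lambda}$, the faces $f_1,f_2$, the index $\iota$, and the subpath $R$ together with its entering arc $e_1$ (ending point in $\disccomp$, lying in $f_1$) and leaving arc $e_2$ (starting point in $\disccomp$, lying in $f_1$). Its hypotheses on the instance are exactly those assumed here — in particular $d\ge f(k,k)+4$ — and minimality of $(P_i)_{i=1}^k$ is used inside that theorem. The conclusion is that $R$ contains no arc lying in $f_2$, contradicting the previous paragraph. Hence no $P_i$ carries a ring visitor.

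The step I expect to need the most care is matching the precise hypotheses of Theorem~\ref{thm:oscillators}: it asks for $e_1,e_2$ with an endpoint in $\disccomp$ lying in $f_1$ and for the subpath to be contained in $\ringcl(\ringcomp)$, whereas the entering/leaving arcs of a ring visitor are normal bundle arcs drawn in the outer face of $\ringcl(\ringcomp)$ rather than arcs of $\ringcl(\ringcomp)$ itself. I would resolve this by checking that the proof of Theorem~\ref{thm:oscillators} uses $e_1,e_2$ only to certify that the subpath enters and leaves $\disccomp$ on the $f_1$ side, so it applies verbatim; alternatively, whenever $R$ penetrates at least two isolation layers one may instead apply Theorem~\ref{thm:oscillators} to the layer $\disccomp_{OUT,\mu+1}$ just outside the deepest layer $\disccomp_{OUT,\mu}$ reached by $R$, where the bundle arcs entering and leaving $\disccomp_{OUT,\mu+1}$ on the $f_1$ side genuinely lie in $\ringcl(\ringcomp)$ and no massaging is required, so that only the borderline case of penetration depth one (and $\Lambda=1$) relies on the former observation.
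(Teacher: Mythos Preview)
Your proposal is correct and takes essentially the same approach as the paper: apply Theorem~\ref{thm:oscillators} to the outermost isolation layer $\disccomp_{\Gamma,\Lambda}$ on the side where the visitor enters and leaves. The paper's proof is a single sentence to this effect; your version spells out why the hypotheses are met and correctly flags (and adequately resolves) the mild sloppiness in the containment hypothesis on $e_1,e_2$.
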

\begin{proof}
We apply Theorem \ref{thm:oscillators} for $\disccomp = \disccomp_{\Gamma,\Lambda}$
for $\Gamma \in \{IN,OUT\}$ and $\disccomp_{\Gamma,\Lambda}$ be an isolating component
 of a ring component $\ringcomp$.
\end{proof}

\begin{corollary}[bundle words with ring holes for paths in a minimal solution]\label{cor:bwholes-minsol}
Let $(G,\decomp,\bundleset)$ be a bundled instance, such that $\decomp$ has isolation
$(\Lambda,d)$ where $\Lambda \geq 2$ and $d \geq f(k,k)+4$,
where $f(k,t)=2^{O(kt)}$ is the bound on the type-$t$ bend promised by Lemma \ref{lem:bendbound}.
Assume that $G$ is a YES-instance to \probshort{} and let $(P_i)_{i=1}^k$ be a minimal solution.
Then, for any $0 \leq \lambda < \Lambda-1$ and $1 \leq i \leq k$,
the unique bundle word with level-$\lambda$ ring holes that is consistent with
$P_i$ is flat, that is, the its level-$(\lambda+1)$ projection
does not contain any arcs of bundles of level $\lambda$ or lower.
\end{corollary}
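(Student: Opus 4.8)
The plan is to reduce the claim to Theorem~\ref{thm:oscillators}: if flatness failed, then a path of the minimal solution would have to dive below $\ringcl^{\lambda}(\ringcomp)$ and re-emerge on the same side of $\ringcl^{\lambda+1}(\ringcomp)$, which is exactly the kind of oscillation inside an isolation component that Theorem~\ref{thm:oscillators} rules out for a minimal solution.

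Fix $\lambda$ and $i$. By Lemma~\ref{lem:path-to-bwholes} let $\bwholes$ be the unique level-$\lambda$ bundle word with ring holes consistent with $P_i$. Since the endpoints of $P_i$ are terminals --- which live in single-vertex disc components, hence outside every ring closure and with only normal incident bundles --- the word $\bwholes$ starts and ends with a normal bundle, so Lemma~\ref{lem:bwholes-projection} applies and its level-$(\lambda+1)$ projection $\bwholes'=((p'_j)_{j=0}^{q'},(w'_j)_{j=1}^{q'})$ is consistent with $P_i$; that is, $\bundleword(P_i)=p'_0 r'_1 p'_1\cdots r'_{q'} p'_{q'}$ where $r'_1,\dots,r'_{q'}$ are exactly the level-$(\lambda+1)$ ring passages of $P_i$. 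Suppose $\bwholes$ is not flat. Then some $p'_{j_0}$ contains a bundle $B$ of level at most $\lambda$, lying in $\ringcl(\ringcomp)$ for some ring component $\ringcomp$, and the corresponding arc $e\in B$ of $P_i$ lies in the part of $P_i$ matching $p'_{j_0}$, hence in no level-$(\lambda+1)$ ring passage of $P_i$.

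Now I would localize the usage of $e$. A direct inspection of the ring-isolation structure shows that every bundle of level at most $\lambda$ has both endpoints in $\ringcl^{\lambda+1}(\ringcomp)$ and at least one endpoint in $\ringcl^{\lambda}(\ringcomp)$. Let $P$ be the maximal subpath of $P_i$ all of whose vertices lie in $\ringcl^{\lambda+1}(\ringcomp)$ and which contains $e$; since no terminal lies in $\ringcl^{\lambda+1}(\ringcomp)$, $P$ has a predecessor arc $\tilde e_1$ and a successor arc $\tilde e_2$ on $P_i$, each lying in a face of $\ringcl^{\lambda+1}(\ringcomp)$. If $\tilde e_1$ and $\tilde e_2$ lie in different faces of $\ringcl^{\lambda+1}(\ringcomp)$, then $P$ is a level-$(\lambda+1)$ ring passage of $P_i$ (it is a maximal subpath of $P_i$ contained in $\ringcl^{\lambda+1}(\ringcomp)$, its bundle word uses only ring bundles of $\ringcl(\ringcomp)$, at least one of which --- namely $B$ --- has level below $\Lambda$, so it lies inside a ring part of $\bundleword(P_i)$), whence $e$ lies in a level-$(\lambda+1)$ ring passage --- a contradiction. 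So $\tilde e_1$ and $\tilde e_2$ lie in the same face of $\ringcl^{\lambda+1}(\ringcomp)$; by the symmetry between the inner and outer isolation layers we may assume this is the outer face.

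It remains to reach a contradiction via Theorem~\ref{thm:oscillators} applied to the isolation component $\disccomp=\disccomp_{OUT,\lambda+1}$. The outer face of $\ringcl^{\lambda+1}(\ringcomp)$ touches $\ringcl^{\lambda+1}(\ringcomp)$ only along $\disccomp_{OUT,\lambda+1}$, so the first and last vertices of $P$ lie in $\disccomp_{OUT,\lambda+1}$, the arcs $\tilde e_1,\tilde e_2$ are outer arcs of $\disccomp_{OUT,\lambda+1}$ (and hence lie in the face $f_1$ of $\ringcl(\ringcomp)[V(\disccomp_{OUT,\lambda+1})]$ outside its outermost isolating cycle), and $\tilde e_1\cdot P\cdot\tilde e_2$ is contained in $\ringcl(\ringcomp)$ (here I use $\lambda+2\le\Lambda$, so that the far endpoints of $\tilde e_1,\tilde e_2$ lie in $\disccomp_{OUT,\lambda+2}\subseteq\ringcl(\ringcomp)$). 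On the other hand $P$ visits a vertex of $\ringcl^{\lambda}(\ringcomp)$, which is enclosed by the innermost isolating cycle of $\disccomp_{OUT,\lambda+1}$; inside $\ringcl^{\lambda+1}(\ringcomp)$ the only arcs out of $\disccomp_{OUT,\lambda+1}$ that reach $\ringcl^{\lambda}(\ringcomp)$ are the inner arcs of $\disccomp_{OUT,\lambda+1}$, which lie in the face $f_2$ of $\ringcl(\ringcomp)[V(\disccomp_{OUT,\lambda+1})]$ inside its innermost isolating cycle; hence $P$ --- and therefore $\tilde e_1\cdot P\cdot\tilde e_2$ --- uses an arc in $f_2$. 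This contradicts Theorem~\ref{thm:oscillators}, whose hypotheses ($\Lambda\ge 1$, $d\ge f(k,k)+4$, and minimality of $(P_i)_{i=1}^k$) are all in force. I expect the main difficulty to be precisely this last step's topological bookkeeping --- confirming that the predecessor and successor arcs of $P$ genuinely lie in the $f_1$-face of $\disccomp_{OUT,\lambda+1}$, and that any route from $\disccomp_{OUT,\lambda+1}$ down to $\ringcl^{\lambda}(\ringcomp)$ within $\ringcl^{\lambda+1}(\ringcomp)$ must use an $f_2$-arc --- together with the routine but fiddly verification that $P$ matches the definition of a level-$(\lambda+1)$ ring passage in the earlier case.
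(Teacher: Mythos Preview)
Your proof is correct and takes essentially the same approach as the paper: the paper's one-sentence argument simply observes that any level-$\lambda$ (or lower) ring bundle appearing in the level-$(\lambda+1)$ projection must correspond to an arc lying on a subpath of the form forbidden by Theorem~\ref{thm:oscillators}, and you have carefully unpacked exactly this, including the case distinction on whether the maximal subpath in $\ringcl^{\lambda+1}(\ringcomp)$ is or is not a level-$(\lambda+1)$ ring passage and the verification that $\tilde e_1\cdot P\cdot\tilde e_2$ meets the hypotheses of Theorem~\ref{thm:oscillators} for $\disccomp_{OUT,\lambda+1}$.
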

\begin{proof}
As a bundle word with level-$(\lambda+1)$ ring holes does not contain
any symbol corresponding to an arc that belongs to a level-$(\lambda+1)$ ring passage,
any level-$\lambda$ ring bundle in such a bundle word with ring holes
needs to correspond to an arc on a structure forbidden by Theorem \ref{thm:oscillators}.
\end{proof}

\section{From bundle words to disjoint paths}\label{s:words-to-paths}

In this section we prove the following theorem that will be the main tool both for measuring the good guesses for winding numbers, as well as for seeking the ultimate solution.

\begin{theorem}\label{thm:words-to-paths}
For any bundled instance $(G,\decomp,\bundleset)$ of ring isolation $(\Lambda,d)$, $\Lambda,d>0$, any $1 \leq \lambda <\Lambda$
and any sequence $(\bwholes_i)_{i=1}^k$ of bundle words with level-$\lambda$ ring holes that do not contain any level-$0$ bundles,
one may in polynomial time either:
\begin{enumerate}
\item correctly conclude that there is no solution $(P_i)_{i=1}^k$ to \probshort{} on $G$ such that $P_i$ is consistent with $\bwholes_i$ for each $1 \leq i \leq k$; or
\item compute a solution $(P_i)_{i=1}^k$ to \probshort{} on $G$ with the following property: for any $1 \leq i \leq k$ and any bundle $B$
that is a normal bundle or has at least one endpoint in an isolation bundle of level higher than $\lambda$, the number of appearances of $B$ in $\bundleword(P_i)$
is not greater than the number of appearances of $B$ in $\bwholes_i$.
\end{enumerate}
\end{theorem}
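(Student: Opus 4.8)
The plan is to encode the existence of a solution with the prescribed bundle words with ring holes as an instance of the cohomology feasibility problem of Schrijver (Theorem~\ref{thm:coh-alg}), in the spirit of the construction used in Lemma~\ref{lem:coh-formulation} and the original algorithm of Schrijver~\cite{schrijver:xp}. The key point is that a bundle word with level-$\lambda$ ring holes carries exactly the information that is missing when one only knows the homotopy class ``up to ring turns'': namely, the sequence of bundles visited (which fixes the homotopy class within the disc-like parts) together with the winding numbers inside each level-$\lambda$ closure of a ring component (which fixes the remaining degrees of freedom). So the first step is to take the standard dual-graph encoding of disjoint paths as a cohomology feasibility instance, with the free group having one generator per path, and then to force, via the downward-closed sets $H(a)$ and the labeling $\phi$, that path $P_i$ traverses the bundles in the order and with the multiplicities dictated by $\bwholes_i$, and that each level-$\lambda$ ring passage has the prescribed winding number. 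Because the sets $H(a)$ may be given as polynomial-time oracles, we can afford the sets to be ``downward closures'' of the admissible finite patterns, which is what Theorem~\ref{thm:coh-alg} requires.

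Concretely, I would first cut each closure $\ringcl^\lambda(\ringcomp)$ along the reference curve $\refcurve^\lambda(\ringcomp)$ from Lemma~\ref{lem:refcurve-choice}, turning the ``ring-like'' part of the graph into a disc-like piece in which a path with prescribed winding number becomes a path with a prescribed, bounded ``pattern'' of crossings; this reduces the problem to the purely disc-like setting where the bundle word alone determines the homotopy class. In this cut-open graph, one builds the extended dual $G^+$ as in Lemma~\ref{lem:coh-formulation}, takes $\group$ free on $k$ generators $g_1,\dots,g_k$, and sets up $\phi$ and $H$ so that a feasible $\psi$ cohomologous to $\phi$ corresponds precisely to a system of $k$ vertex-disjoint directed paths whose dual labelings realize the sequence of bundle crossings recorded by $(\bwholes_i)_{i=1}^k$. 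The ``consistency'' requirement from the definition of bundle words with ring holes translates into the statement that the multiset of bundle symbols crossed by $P_i$, read in order, is a shuffle of a word obtained from $\bwholes_i$ by reinserting, in each ring hole, any word whose winding number around the corresponding $\refcurve$ equals $w_j$. Since Schrijver's algorithm may fail to respect the guidelines tightly, the weaker guarantee in conclusion~(2) — that each normal bundle or high-level isolation bundle appears at most as often as in $\bwholes_i$ — is exactly the kind of ``subword as a multiset'' conclusion one gets from this encoding, as already noted in the overview (Section~\ref{ss:view:bundles}) and in the paragraph there about what Schrijver's algorithm returns.

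I would then run the algorithm of Theorem~\ref{thm:coh-alg}: if it reports infeasibility we output conclusion~(1), and the soundness of this is immediate because any genuine solution consistent with all $\bwholes_i$ would, after the cutting operation, yield a feasible $\psi$. If it returns a feasible $\psi$, we extract from it, as in the proof of Lemma~\ref{lem:coh-formulation}, vertex-disjoint directed paths, uncut the reference curves, and check that the bundle-counting inequality of conclusion~(2) holds; the fact that $\bwholes_i$ does not contain any level-$0$ bundles is used here to guarantee that the cutting along the inner reference curves can be done cleanly (no path is forced to re-enter the innermost part of a ring in an uncontrolled way), and the isolation property together with Corollary~\ref{cor:no-visitors}-type reasoning is what makes the crossing patterns of ring passages bounded. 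The main obstacle I expect is precisely this bookkeeping in the ring components: one must verify that, after cutting along $\refcurve^\lambda(\ringcomp)$ and encoding winding numbers as crossing patterns, the downward-closed sets $H(a)$ still correctly and tightly capture ``winding number equals $w_j$'' without accidentally permitting extra full turns — and that the resulting path extracted from $\psi$ has winding numbers in the ring passages matching those of $\bwholes_i$ up to the small additive slack that the later guessing arguments (Section~\ref{s:word-guessing}) can absorb. Handling the interplay of the several isolation levels — i.e. ensuring that fixing level-$\lambda$ ring holes together with projecting as in Lemma~\ref{lem:bwholes-projection} is consistent with the cut-open encoding at every higher level simultaneously — is the delicate part, and is where the bulk of the technical verification will go.
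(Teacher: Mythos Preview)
Your high-level plan to reduce to Schrijver's cohomology feasibility algorithm (Theorem~\ref{thm:coh-alg}) is exactly right, and the ``at most'' bundle-count conclusion you aim for in case~(2) is indeed what one obtains. However, your proposed technical route --- cutting each $\ringcl^\lambda(\ringcomp)$ open along $\refcurve^\lambda(\ringcomp)$ and encoding winding numbers as ``crossing patterns'' via elaborate downward-closed sets $H(a)$ --- has precisely the gap you yourself flag. After cutting, the ring becomes a disc and the homotopy class becomes trivial, so there is no cohomological information left in which to store $w_j$; you would have to reintroduce the winding constraint by hand, and downward-closed $H(a)$ cannot enforce an \emph{equality} on the number of turns. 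More fundamentally, in Schrijver's framework $H(a)$ is meant to be the simple set $\{1,g_1,\dots,g_k\}$ (with inverses on added arcs); the entire homotopy information is carried by the cohomology class of $\phi$, not by $H(a)$.

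The paper's construction avoids this by \emph{not} cutting the rings. It passes to a coarsened decomposition $\decomp'$ in which each $\ringcl^\lambda(\ringcomp)$ is a single ring component, picks for each component an anchor vertex, for each bundle a representative arc, and for each ring a cycle $C(\ringcomp)$ of homotopy~$1$. From $\bwholes_i$ it then builds an explicit undirected \emph{model path} $Q_i$ that threads through the chosen representative arcs in the prescribed order and, at each ring hole with winding number $w_j$, winds $C(\ringcomp)^{w'}$ the appropriate number of times. After resolving (or rejecting) a consistent left--right order of the $Q_i$'s on shared arcs, $\phi$ is read off directly from this overlay; its cohomology class now encodes the full homotopy type, including the winding numbers, with $H(a)$ kept simple. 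The remaining key device --- which your sketch does not mention --- is to force $F=1_\group$ on all \emph{end-faces} of $\decomp'$ (the set $S$ in Theorem~\ref{thm:coh-alg}). This is what pins down, for each shallow bundle $B$, the product of $\psi$ along the dual path through $B$ to equal (or be conjugate to) that of $\phi$; projecting via the exponent homomorphism $h_\iota:\group\to\Z$ then gives the bundle-count inequality of conclusion~(2). Finally, your appeal to Corollary~\ref{cor:no-visitors} is misplaced here: that statement concerns minimal solutions only and is not used in this theorem.
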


The rest of this Section is devoted to the proof of Theorem~\ref{thm:words-to-paths}. Intuitively, the argument is based on designing an instance of cohomology problem in a natural manner, and then running the algorithm of Theorem~\ref{thm:coh-alg}. However, formal construction of the instance and proof that the output of the algorithm of Theorem~\ref{thm:coh-alg} is meaningful requires technical effort. In particular, we need to check that bundle words with level-$\lambda$ ring holes $(\bwholes_i)_{i=1}^k$ contain already enough information to deduce the topology of path network in the graph.

Apart from the decomposition $\decomp$, we will be working as well on a modified decomposition $\decomp'$ where for every ring component $\ringcomp$, all the isolation components of $\ringcomp$ up to level $\lambda$ are included into $\ringcomp$. In other words, for every ring component $\ringcomp$, we identify component $\ringcomp$ together with first $\lambda$ layers of IN- and OUT- isolation into one ring component $\ringcl^\lambda(\ringcomp)$. We also define $\bundleset'$ to be the set of these bundles $B \in \bundleset$ that are normal bundles
or have at least one endpoint in a isolation component of level higher than $\lambda$.

In decomposition $\decomp'$, for every component $\anycomp$ choose an arbitrary vertex $v(\anycomp)\in V(\anycomp)$, and for every bundle $B \in \bundleset'$ choose an arbitrary arc $a(B)\in B$. Moreover, if $B$ connects components $\anycomp_1$ and $\anycomp_2$, let us choose arbitrary undirected paths $P(B,\anycomp_1)$ and $P(B,\anycomp_2)$ in $\anycomp_1,\anycomp_2$ that connect $v(\anycomp_1),v(\anycomp_2)$ with $a(B)$, respectively. Note that this is possible as every component is weakly connected. Moreover, we can choose the paths so that they do not cross each other in the planar embedding (though may use the same arcs). In addition, for every ring component $\ringcomp \in \decomp'$ choose an arbitrary undirected cycle $C(\ringcomp)$ of homotopy $1$ that passes through $v(\ringcomp)$, again in such a manner that this cycle does not cross any path $P(B,\ringcomp)$ for any bundle $B$ adjacent to $\ringcomp$.

Now, given a sequence $(\bwholes_i)_{i=1}^k$ of bundle words with level-$\lambda$ ring holes in decomposition $\decomp$, we construct undirected paths $(Q_i)_{i=1}^k$ connecting respective terminals. Intuitively, paths $Q_i$ are going to model seeken paths $P_i$ in the solution. Consider one bundle word with level-$\lambda$ holes $\bwholes_i=((p_{i,j})_{j=0}^{q(i)}, (w_{i,j})_{j=1}^{q(i)})$. We say that a bundle is {\emph{deep}} if both its endpoints are either in ring components or in isolation components of level lower or equal to $\lambda$, i.e., it does not appear in $\bundleset'$; otherwise the bundle is {\emph{shallow}}. Note that by our assumption on $(\bwholes_i)_{i=1}^k$, every deep bundle of each $\bwholes_i$ has both endpoints in an isolation component of level lower or equal to $\lambda$. In each bundle word $p_{i,j}$, distinguish maximal subwords consisting of deep bundles only. Then (note that in the following we refer to components in decomposition $\decomp$):
\begin{itemize}
\item For every two consecutive shallow bundles $B_1,B_2$ on $p_{i,j}$, such that $B_1$ is directed toward a disc component $\disccomp$ while $B_2$ is directed from $\disccomp$, take concatenation of paths $P(B_1,\disccomp)$ and $P(B_2,\disccomp)$ (note here that $\disccomp\in \decomp'$).
\item Consider now any two shallow bundles $B_1,B_2$ on $p_{i,j}$ such that the subword $v$ between $B_1,B_2$ consists of deep bundles only. Then bundles $B_1,B_2$ are on the same side of $\ringcomp_*$, and the subword $v$ between them can visit only isolation components of $\ringcomp$ of levels smaller or equal than $\lambda$ on the same side as $B_1,B_2$. Examine word $v$ and compute the signed number $w$ of how many times bundles from $v$ cross the reference curve of $\ringcl^\lambda(\ringcomp)$. Observe that, by the construction of the reference curve of $\ringcl^\lambda(\ringcomp)$, any path consistent with bundle word $B_1vB_2$ will have winding number $w$ in $\ringcl^\lambda(\ringcomp)$. Note that any such path leads from one face of $\ringcl^\lambda(\ringcomp)$ to the same face of $\ringcl^\lambda(\ringcomp)$, so its winding number with respect to the reference curve of $\ringcl^\lambda(\ringcomp)$ must belong to the set $\{-1,0,1\}$. Hence, if $w\notin \{-1,0,1\}$, then we can conclude that there cannot be any solution consistent with bundle words with level-$\lambda$ ring holes $(\bwholes_i)_{i=1}^k$, and we terminate the whole algorithm returning the negative answer. Otherwise, we take the concatenation of paths $P(B_1,\ringcl^\lambda(\ringcomp))$, $C(\ringcl^\lambda(\ringcomp))^{w'}$ and $P(B_2,\ringcl^\lambda(\ringcomp))$, where $w'$ is chosen such that the resulting path has winding number $w$ in $\ringcl^\lambda(\ringcomp)$. Note here that of course $B_1$ and $B_2$ are adjacent to $\ringcl^\lambda(\ringcomp)$ in $\decomp'$.
\item For every two bundles $B_1,B_2$ such that $B_1$ is the last bundle of $p_{i,j}$ and $B_2$ is the first bundle of $p_{i,j+1}$, perform the following construction. Let $\ringcomp$ be the ring component such that $B_1,B_2$ are on different sides of $\ringcl^\lambda(\ringcomp)$ and let $w:=w_{i,j+1}$ be the winding number between them. Take concatenation of $P(B_1,\ringcl^\lambda(\ringcomp))$, $C(\ringcl^\lambda(\ringcomp))^{w'}$ and $P(B_2,\ringcl^\lambda(\ringcomp))$, where $w'$ is chosen such that the resulting path has winding number $w$.
\end{itemize}
Finally, obtain $Q_i$ by concatenating constructed undirected paths in a natural order imposed by $(\bwholes_i)_{i=1}^k$, and gluing them in between by arcs $a(B)$ for $B\in \bwholes_i$. If some arc $a(B)$ is traversed in the constructed path in a wrong direction, we may immediately terminate the computation providing a negative answer: the corresponding path of the solution would also need to traverse the same bundle in the wrong direction, which is impossible.

We have already constructed models $Q_i$ for paths $P_i$, but to finish the construction of the instance of the cohomology problem, we need to choose consistently the order of paths on shared arcs.

Consider two paths $Q_i,Q_j$ (possibly $i=j$) and let $M_i,M_j$ be some (possibly consisting of a single vertex) maximal fragments of $Q_i,Q_j$ that are common, i.e., $M_i$ and $M_j$ are the same path, but the arcs $p_i,p_j$ preceding $M_i$ and $M_j$ on $Q_i,Q_j$, respectively, are different, as well as arcs $s_i,s_j$ succeeding $M_i$ and $M_j$ on $Q_i,Q_j$ are different. Contract fragments $M_i,M_j$ to one vertex $v$ and consider the positioning of arcs $p_i,p_j,s_i,s_j$ around $v$. Observe that if there is a solution $(P_i)_{i=1}^k$ consistent with $(\bwholes_i)_{i=1}^k$, then $p_i$ and $s_i$ cannot separate $p_j$ from $s_j$: otherwise, the parts of paths $P_i,P_j$ between the last bundles preceding $M_i,M_j$ and the first bundles succeeding $P_i,P_j$ would need to cross.
%TODO journal: I guess this needs to be more formalized
Therefore, if we think of paths $Q_i,Q_j$ as curves on the plane, it is possible to spread them slightly along the common path $M_i,M_j$ so that the curves do not intersect along this path. The same reasoning can be performed when paths $Q_i,Q_j$ traverse the common path in different directions, i.e., $M_i,M_j$ are two maximal subpaths such that $M_i$ is $M_j$ reversed, the arc preceding $M_i$ on $Q_i$ is different than the arc succeeding $M_j$ on $Q_j$, and the arc preceding $M_j$ on $Q_j$ is different than the arc succeeding $M_i$ on $Q_i$.

We perform this reasoning for every pair of indices $i,j$, every two maximal common fragments, and both directions of traversal; if for any pair of fragments we obtain inconsistency, we terminate the algorithm providing an answer that no solution can be found. As a result, for every arc that is traversed by some path at least two times, we obtain a constraint between every two traversals, which traversal should be spread towards the face on one side of the arc, and which should be spread to the other side. It can be easily seen that this constraints are transitive, i.e., if traversal $a$ must be left to traversal $b$ and $b$ must be to the left to traversal $c$, then traversal $a$ must be to the left to traversal $c$; if this is not the case, we can again terminate the algorithm providing a negative answer. Hence, we may order the traversals of every arc in a linear order, so that slightly spreading all the paths $(Q_i)_{i=1}^k$ according to these orders on arcs yields a family of pairwise non-intersecting curves, denoted $(\tilde{Q})_{i=1}^k$.

We now consider a free group $\group$ on $k$ generators $g_1,g_2,\ldots,g_k$, corresponding to paths $P_1,P_2,\ldots,P_k$. Define a function $\phi:E(G)\to \group$ by putting for every arc $a$ the product of generators or their inverses corresponding to traversals of this arc: if $P_i$ traverses $a$ respecting direction of $a$ then we take $g_i$, if disrespecting then we put $g_i^{-1}$, and we multiply these group elements according to the linear order of traversals on this arc, found as in the previous paragraph. As the family $(\tilde{Q})_{i=1}^k$ is pairwise non-intersecting, it follows that for every non-terminal vertex $v$, if we compute the product of group elements on the arcs incident to $v$ in the order imposed by the planar embedding, where the value of every arc is taken as its inverse if the arc is incoming to $v$ and as the value itself otherwise, then we obtain $1_\Lambda$.

Let $G^*$ be the dual of $G$ and let $G^+$ be the extended dual of $G$. Clearly, as $E(G)=E(G^*)$, we may consider $\phi$ also as a function from arcs of $G^*$ to $\group$. Moreover, we may naturally extend $\phi$ from $E(G^*)$ to $\phi^+$ defined on $E(G^+)$, as in the proof of Theorem~\ref{lem:dualaltcut}.

We are now going to define an instance of cohomology feasibility problem similarly as in the proof of Theorem~\ref{lem:dualaltcut}. For every arc $a\in E(G^+)$, let $H(a)=\{1,g_1,g_2,\ldots,g_k\}$ if $a\in E(G)$, and let $H(a)=\{1,g_1,g_2,\ldots,g_k,g_1^{-1},g_2^{-1},\ldots,g_k^{-1}\}$ if $a\in E(G^+)\setminus E(G)$, i.e., $a$ was added in the extension. Moreover, let $S$ be the set of end-faces in decomposition $\decomp'$; we would like to stress that we consider decomposition $\decomp'$, hence we do not put any restrictions on faces that are contained in level-$\lambda$ closures of ring components. Thus we have defined an instance of cohomology feasibility problem $(G^+,\group,\phi,H,S)$, to which we can apply the algorithm of Theorem~\ref{thm:coh-alg}. The rest of the proof of Theorem~\ref{thm:words-to-paths} follows from the two lemmata that show equivalence of the cohomology instance and the statement of Theorem~\ref{thm:words-to-paths}.

\begin{lemma}
If there is a sequence of disjoint paths $(P_i)_{i=1}^k$ consistent with $(\bwholes_i)_{i=1}^k$, then instance $(G^+,\group,\phi,H,S)$ has some solution $\psi$.
\end{lemma}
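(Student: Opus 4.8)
The plan is to read $\psi$ off directly from the disjoint paths $(P_i)_{i=1}^k$, exactly as in the ``if'' direction of Lemma~\ref{lem:coh-formulation}, and then build the witnessing function $F$ by sweeping the plane one path at a time. For an arc $a$ of $G$ used by some $P_i$, set $\psi(a^*)=g_i$ (each $P_i$ is a directed path, hence traverses $a$ forward, and by vertex-disjointness at most one path uses $a$), and $\psi(a^*)=1$ otherwise; for an added arc $a^+$ joining faces $F_1,F_2$ around a common vertex $v$, put $\psi(a^+)=\psi(R)$ for $R$ a path from $F_1$ to $F_2$ along the boundary of the dual face of $v$. First I would check that this is well defined and lands in $H$: the two choices of $R$ differ by the loop around $v$, along which the $\psi$-contribution of the unique path through $v$, if any, telescopes to $1_\group$ because that path enters $v$ on one incident arc and leaves on one; the same computation gives $\psi(a^+)\in\{1,g_i,g_i^{-1}\}\subseteq H(a^+)$, while $\psi(a^*)\in\{1,g_1,\dots,g_k\}=H(a^*)$ is immediate. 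So the $H$-membership constraints hold.

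The substance of the proof is exhibiting $F\colon V(G^+)\to\group$ with $\psi(uv)=F(u)^{-1}\phi(uv)F(v)$ and $F(s)=1_\group$ for every end-face $s\in S$ of $\decomp'$, and here the hypothesis that $P_i$ is \emph{consistent} with $\bwholes_i$ enters essentially. Let $\tilde P_1,\dots,\tilde P_k$ be curves obtained by pushing $P_1,\dots,P_k$ slightly off the arcs of $G$ so that they become pairwise non-crossing, and recall that $\phi$ was defined from the analogous non-crossing push-offs $\tilde Q_1,\dots,\tilde Q_k$ of the model paths; both families join the same terminal pairs $s_i,t_i$, since terminals are degree-one vertices forming their own singleton components. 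For each $i$, let $Z_i$ be the closed curve traversing $\tilde P_i$ and then $\tilde Q_i$ backwards. I claim each $Z_i$ is null-homologous in the plane punctured at the interiors of all end-faces of $\decomp'$, i.e.\ bounds a two-chain of faces none of which is an end-face; granting this, one obtains $F$ by the standard sweep: process $i=1,\dots,k$ and at step $i$ modify the current labeling by the $g_i$-valued coboundary carried by that two-chain for $Z_i$, replacing on every arc the $\tilde Q_i$-contribution by the $\tilde P_i$-contribution. After all $k$ steps every arc $a$ carries exactly $\psi(a^*)$, and the accumulated $F$ is at each face a product of powers of $g_1,\dots,g_k$ which is trivial at end-faces because each intermediate two-chain is; so $\psi$ is cohomologous to $\phi$ with $F|_S\equiv 1_\group$, as required. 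To prove the claim about $Z_i$, I would decompose $Z_i$ along the components of $\decomp'$ and the bundles of $\bundleset'$ it meets. Inside a disc component the relevant arc of $Z_i$ bounds trivially; inside the level-$\lambda$ closure of a ring component the corresponding subpaths of $P_i$ and $Q_i$ have equal winding numbers with respect to $\refcurve^\lambda(\ringcomp)$ — which is precisely what ``consistent with $\bwholes_i$'' asserts about the ring holes — so that arc of $Z_i$ closes up without encircling the ring; and across bundles $P_i$ and $Q_i$ visit the same bundles in the same cyclic order, by the definition of the bundle word with ring holes and the construction of $Q_i$, so nothing outside the components is encircled. Assembling these local pieces yields a two-chain for $Z_i$ avoiding all end-faces.

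I expect this assembly step to be the main obstacle. The local null-homotopies inside ring closures exist only because the winding numbers coincide, and one must check that the bundle-level routing of $Q_i$ — through the fixed representative arcs $a(B)$, the fixed connecting paths $P(B,\anycomp)$ inside components, and the fixed cycles $C(\ringcomp)^{w'}$ inside ring closures — glues these local pictures into a single two-chain consistently; this relies on planarity of the bundle multigraph and on the fact that the end-faces of $\decomp'$ are exactly the faces not absorbed into components or bundles. A secondary point, needed so that the final per-arc identity is an honest equality rather than an equality up to conjugacy, is that the linear orders of path-traversals on shared arcs fixed while constructing $\phi$ agree with those induced by $(\tilde P_i)$; but this was arranged in the ``spreading'' step, which is legitimate precisely because a genuine solution $(P_i)$ cannot realise the forbidden separation pattern of the arcs preceding and succeeding a contracted common subpath. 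Modulo these two checks, the construction of $\psi$ and $F$ is routine.
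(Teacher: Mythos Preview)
Your definition of $\psi$ and the verification that $\psi(a)\in H(a)$ are fine and match the paper. The problem is in how you build the witnessing cohomology $F$.

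Processing the paths one at a time with a $g_i$-valued coboundary does not work in the \emph{free} group $\group$. When you apply the coboundary $F_1$ (valued in $\langle g_1\rangle$) carried by the two-chain of $Z_1$, every arc $b^*$ whose label involves some $g_j$ with $j\neq 1$ and whose two incident faces $u,v$ lie in that two-chain with the same coefficient $c$ gets sent to $g_1^{-c}\,\phi(b^*)\,g_1^{c}$, not to $\phi(b^*)$. A concrete instance: take an arc $b$ crossed only by $Q_2=P_2$, lying in the region swept between $\tilde Q_1$ and $\tilde P_1$. Then $\phi(b^*)=g_2$, the desired $\psi(b^*)=g_2$, but after your step~1 the label becomes $g_1^{-1}g_2g_1$; since $F_2$ is trivial (as $Z_2$ is null), nothing later repairs this, and your composite coboundary sends $\phi(b^*)$ to $g_1^{-1}g_2g_1\neq g_2$. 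The null-homology of each $Z_i$ separately is an abelian datum; in a non-abelian group you need the whole family $(\tilde Q_i)$ to be simultaneously isotopic to $(\tilde P_i)$, and you need $F$ to record at each face a \emph{word} in all the $g_i$'s (which paths sweep over it, and in which order), not a product of independently chosen powers $g_1^{c_1}\cdots g_k^{c_k}$.

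This is exactly what the paper does, but organised component-by-component rather than path-by-path: because $P_i$ and $Q_i$ visit the same bundles of $\decomp'$ in the same order and, inside each ring closure, with the same winding number, for every component $\anycomp$ there is a single continuous self-map of the interior of $\anycomp$ carrying all $Q_i$-traversals to all $P_i$-traversals at once while fixing the end-faces of adjacent bundles. That simultaneous shift defines $F$ as the (ordered) word of generators swept over each face; gluing across components is consistent on mortar faces, and $F\equiv 1_\group$ on end-faces by construction. Your null-homology claim for each $Z_i$ is essentially the right topological input, but the cohomology has to be assembled simultaneously over all $i$, not sequentially.
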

\begin{proof}
Assume that we have a sequence of disjoint paths $(P_i)_{i=1}^k$ consistent with $(\bwholes_i)_{i=1}^k$. Similarly to paths $(Q_i)_{i=1}^k$ and function $\phi$, sequence $(P_i)_{i=1}^k$ also naturally defines a function $\psi:E(G^+)\to \group$, as in the proof of Theorem~\ref{lem:dualaltcut}. Moreover, $\psi$ satisfies all the constraints imposed by $H$. It remains to prove that $\psi$ is cohomologous to $\phi$ via a function $F$ that assigns $1_\group$ to all the end-faces in decomposition $\decomp'$.

For every path $P_i$, decompose it into sequence $(b_{i,0},P_{i,1},b_{i,1},P_{i,2},\ldots,b_{i,t},P_{i,t},b_{i,t+1})$, where $P_{i,j}$ are maximal subpaths contained in decomposition $\decomp'$ and $b_{i,j}$ are bundle arcs from bundles of decomposition $\decomp'$ that appear on $\bwholes_i$ (note that not every bundle appearing in $\bwholes_i$ is represented by some arc $b_{i,j}$, as we omit the bundles that are not in $\bundleset'$). As both $P_i$ and $Q_i$ are consistent with $\bwholes_i$ and $\bwholes_i$ is a level-$\lambda$ bundle word with holes, path $Q_i$ can be similarly decomposed into sequence $(b_{i,0}',Q_{i,1},b_{i,1}',Q_{i,2},\ldots,b_{i,t}',Q_{i,t},b_{i,t+1}')$, where arcs $b_{i,j}$ and $b_{i,j}'$ belong to the same bundle for every $j$, while $P_{i,j}$ and $Q_{i,j}$ traverse the same component from the same bundle to the same bundle. Moreover, if $P_{i,j}$ and $Q_{i,j}$ are connecting sides of the same ring component $\ringcomp$ (regardless whether the same or different), then $P_{i,j}$ and $Q_{i,j}$ have the same winding number in $\ringcomp$. 

It follows that for every component $\anycomp$, all the traversals of paths $(Q_i)_{i=1}^k$ via $\anycomp$ can be simultaneously shifted to $(P_i)_{i=1}^k$ using a by a continuous shift of the interior of $\anycomp$ (formally, by a continuous function $f: U(\anycomp)\times [0,1]\to U(\anycomp)$ such that $f(\cdot,0)$ is identity and in $f(\cdot,1)$ paths $Q_i$ are mapped to respective paths $P_i$). Note that we may choose this shift so that end-faces of bundles adjacent to $\anycomp$ stay invariant, as ends of paths in a particular bundle must be mapped to ends of paths in the same bundle. Now observe that this shift can be modelled by a function $F$ defined at vertices of $G^+$, which measures which paths are shifted over a given face and in which order. If we define this function $F$ for each component separately, then $F$ will be defined consistently in the same manner on the mortar faces of $\decomp'$. Moreover, invariance of the end-faces in the homeomorphisms means that end-faces will be assigned $1_\group$. Hence $\phi$ and $\psi$ are cohomologous via a cohomology that fixes end-faces.
\end{proof}

\begin{lemma}
If the instance $(G^+,\group,\phi,H,S)$ has some solution $\psi$, then there is a solution $(P_i)_{i=1}^k$ to \probshort{} on $G$ with property (2) defined in the statement of Theorem~\ref{thm:words-to-paths}.
\end{lemma}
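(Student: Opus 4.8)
The plan is to prove the converse direction: given a solution $\psi$ to the cohomology instance $(G^+,\group,\phi,H,S)$, reconstruct an actual solution $(P_i)_{i=1}^k$ to \probshort{} on $G$ with the required property~(2). The argument closely mirrors the reconstruction in the proof of Lemma~\ref{lem:coh-formulation}, but now the $k$ generators $g_1,\dots,g_k$ correspond to the $k$ paths of the solution rather than to an alternating join, and we have to track how the bundle words behave.

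First I would look at an arbitrary vertex $v$ of $G$ and examine the arcs of $G^*$ incident to the dual vertices around $v$; exactly as in Lemma~\ref{lem:coh-formulation}, the constraints imposed by $H$ together with the added arcs of the extended dual $G^+$ force that, for each $i$, the arcs $a$ of $G$ with $\psi(a^*)=g_i^{\pm 1}$ through $v$ form a ``coherent'' pattern: they can be oriented so that they look like a single path fragment of a path $P_i$ passing through $v$. More precisely, define $G_i$ to be the subgraph of $G$ consisting of arcs $a$ with $\psi(a^*)\in\{g_i,g_i^{-1}\}$; the added-arc constraints guarantee that the $G_i$ are pairwise vertex-disjoint, and that inside each $G_i$ consecutive arcs (in the rotation around a shared vertex) carry the same generator, so $G_i$ decomposes into edge-disjoint directed paths and cycles, where the paths have their endpoints at terminals of the $i$-th pair (this is where we use that $S$ contains the end-faces and hence $F$ is fixed on them: it pins down which faces a path of $P_i$ may be shifted over, preventing the $G_i$ from ``leaking'' across end-faces). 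Discarding the cycles — which we may, since removing a directed cycle from $G_i$ keeps the path connecting the terminals intact — we extract from each $G_i$ a directed path $P_i$ from $s_i$ to $t_i$, and the $P_i$ are pairwise vertex-disjoint because the $G_i$ are. This already gives a solution to \probshort{} on $G$.

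Next I would verify property~(2): for every $i$ and every bundle $B\in\bundleset'$ (normal, or with an endpoint in an isolation bundle of level $>\lambda$), the number of occurrences of $B$ in $\bundleword(P_i)$ is at most its number of occurrences in $\bwholes_i$. The mechanism is that $\psi$ is cohomologous to $\phi$, and $\phi$ was built from the model paths $(Q_i)_{i=1}^k$, which are consistent with $(\bwholes_i)_{i=1}^k$; so for any closed dual walk $D$ (in particular, a walk that ``counts crossings of the bundle $B$''), $\psi(D)$ is conjugate to $\phi(D)$. Taking $D$ to be the closed curve in $G^+$ that follows the directed dual path of a bundle $B$ (Lemma~\ref{lem:bundle-dual}) and closes up, $\phi(D)$ records exactly the multiset of (signed) traversals of $B$ by the $Q_i$, i.e.\ the number of times $g_i^{\pm1}$ appears is the number of occurrences of $B$ in $\bwholes_i$. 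Since $\psi(D)$ is a conjugate of $\phi(D)$ in a free group, its reduced length — hence the number of surviving $g_i^{\pm1}$ letters — is at most that of $\phi(D)$; and the number of $g_i$-letters in $\psi(D)$ is precisely the number of arcs of $B$ used by $P_i$. Care is needed because the bundle might be a dual \emph{cycle} rather than a dual path, and because cancellation in the free group could in principle mix generators; but $\psi$ assigns to each original dual arc only a single generator or $1$ (by the $H$-constraint on original arcs), so along $D$ the word $\psi(D)$ is already ``almost reduced'' and the only cancellations are of the form $g_i g_i^{-1}$, which can only \emph{decrease} the count of each generator separately. This is the step I expect to be the main obstacle: making the conjugacy-to-subword argument fully rigorous, handling the cycle case of bundles, and correctly relating ``letters of $\psi(D)$'' to ``arcs of $B$ on $P_i$'' once the spurious directed cycles have been deleted from $G_i$ (deleting a cycle from $G_i$ only removes arcs, so it cannot increase the count — which is exactly what we need).

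Finally I would assemble these pieces: the extracted $(P_i)_{i=1}^k$ is a solution, and the bundle-count inequality for every $B\in\bundleset'$ is property~(2). I would also note that all steps — solving the cohomology instance via Theorem~\ref{thm:coh-alg}, then reading off $G_i$, stripping cycles, and extracting $P_i$ — are polynomial time, matching the running-time claim of Theorem~\ref{thm:words-to-paths}. Together with the preceding lemma (solution $\Rightarrow$ feasible cohomology instance) and the explicit termination cases built into the construction of the model paths $(Q_i)$, this completes the proof of Theorem~\ref{thm:words-to-paths}.
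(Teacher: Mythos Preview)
Your overall architecture is right and matches the paper: define $G_i$ from the arcs labelled by $g_i$, observe the $G_i$ are vertex-disjoint with balanced in/out degrees except at $s_i,t_i$, extract a simple $s_i$--$t_i$ path from each (the paper does this via an Eulerian tour and shortcutting, which is the same as your ``discard cycles'' step). One small slip: for original dual arcs the constraint set $H(a)$ contains only the generators, not their inverses, so $G_i$ should be defined by $\psi(a^*)=g_i$, not $\psi(a^*)\in\{g_i,g_i^{-1}\}$.

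The genuine gap is in your argument for property~(2). You try to compare $\psi(D)$ and $\phi(D)$ via conjugacy in the free group and then bound ``reduced length'', but conjugacy does \emph{not} bound reduced length in either direction (e.g.\ $g_2$ is conjugate to $g_1g_2g_1^{-1}$). Your fallback that ``only cancellations $g_ig_i^{-1}$ occur'' does not repair this, because the conjugating element $F(f_0)$ can be arbitrary. The paper's fix is both simpler and correct: apply the abelianization homomorphism $h_\iota:\group\to\mathbb Z$ sending $g_\iota\mapsto 1$ and $g_{\iota'}\mapsto 0$ for $\iota'\neq\iota$. This is automatically invariant under conjugation. Crucially, for a \emph{shallow} bundle $B$ the dual path $P^*_B$ has both its end-faces in $S$ (this is exactly why $S$ was defined as the set of end-faces of $\decomp'$), so $F(f_0)=F(f_r)=1_\group$ and in fact $\prod\psi(a_i^*)=\prod\phi(a_i^*)$ outright; in the residual cycle case one gets conjugacy and then uses $h_\iota$. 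Finally, because both $\psi$ on original dual arcs and $\phi$ on bundle arcs take values only in $\{1,g_1,\dots,g_k\}$ (the case of $Q_\iota$ traversing a bundle backwards having been excluded earlier), applying $h_\iota$ to each product literally counts occurrences of $g_\iota$. This gives that $G_\iota$ uses exactly as many arcs of $B$ as there are occurrences of $B$ in $\bwholes_\iota$, and since $P_\iota\subseteq G_\iota$ the desired inequality follows.
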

\begin{proof}
Assume that $\psi:E(G^+)\to \group$ is a function that is consistent with $H$ and is cohomologous to $\psi$ via a function $F$ that assigns $1_\group$ to all the end-faces of $\decomp'$. For $i=1,2,\ldots,k$, let $G_i$ be the subgraph of $G$ consisting of all the arcs $a$ of $G$ such that $\psi(a^*)=g_i$. Note that conditions imposed in the cohomology instance imply that:
\begin{itemize}
\item subgraphs $G_i$ are vertex-disjoint;
\item for every $i=1,2,\ldots,k$, all the vertices of $G_i$ have the same in- and outdegrees, apart from $s_i$ and $t_i$ that have exactly one outgoing arc in $G_i$ and exactly one incoming arc in $G_i$, respectively.
\end{itemize}
By the standard degree counting argument, we infer that for each $i=1,2,\ldots,k$ vertices $s_i$ and $t_i$ must be in the same weakly connected component of $G_i$. Moreover, this weakly connected component must contain an eulerian tour from $s_i$ to $t_i$. By shortcutting this eulerian tour appropriately, we obtain a simple path $P_i$ in $G_i$ leading from $s_i$ to $t_i$. As subgraphs $G_i$ are vertex-disjoint, so do paths $P_i$. It remains to prove that paths $P_i$ satisfy the property that for each shallow bundle $B$, the number of occurrences of $B$ in $\bundleword(P_i)$ is not larger than the number of occurrences of $B$ on $\bwholes_i$. To this end, we will prove that the subgraph $G_i$ in total contains exactly the same number of arcs of $B$ as the number of occurrences of $B$ on $\bwholes_i$; as $P_i$ is a subgraph of $G_i$, the lemma statement follows.

Consider any component $\anycomp$ of $\decomp'$ and any bundle $B \in \bundleset'$ adjacent to $\anycomp$. Let $P^*_B$ be the directed path or directed cycle in $G^+$ traversing the bundle $B$, whose existence is guaranteed by Lemma~\ref{lem:bundle-dual}. Let $f_0,f_1,\ldots,f_r$ are the consecutive faces of $G$ visited by $P^*_B$ and $a_1,a_2,\ldots,a_r$ are consecutive arcs. Suppose for a moment that $f_0$ and $f_r$ are end-faces. We claim that then the products of elements assigned by $\psi$ and by $\phi$ to the arcs $a_1^*,a_2^*,\ldots,a_r^*$ are equal. Indeed, we have that
\begin{eqnarray*}
\prod_{i=1}^r \psi(a_i^*) = \prod_{i=1}^r F^{-1}(f_{i-1})\phi(a_i^*)F(f_i)=F^{-1}(f_0)\left(\prod_{i=1}^r \phi(a_i^*)\right)F(f_r)=\prod_{i=1}^r \phi(a_i^*),
\end{eqnarray*}
where the last equality follows from the fact that $F$ assigns $1_\group$ to all the end-faces. If now $f_0$ and $f_r$ are not end-faces, then $f_0=f_r$, $B$ is the only bundle adjacent to $\anycomp$, and $\prod_{i=1}^r \psi(a_i^*)$ and $\prod_{i=1}^r \phi(a_i^*)$ are conjugate using $F(f_0)=F(f_r)$. 
%TODO journal: is it possible? a bundle going around something? can we kill this possibility earlier?

For $\iota=1,2,\ldots,k$, consider a homomorphism $h_\iota: \Lambda\to \mathbb{Z}$ defined on generators by $h_\iota(g_\iota)=1$ and $h_\iota(g_{\iota'})=0$ for $\iota'\neq \iota$; this homomorphism just counts the number of $g_\iota$-s. It follows that in both cases $h_\iota\left(\prod_{i=1}^r \psi(a_i^*)\right)=h_\iota\left(\prod_{i=1}^r \phi(a_i^*)\right)$: either simply $\prod_{i=1}^r \psi(a_i^*)=\prod_{i=1}^r \phi(a_i^*)$, or
\begin{eqnarray*}
h_\iota\left(\prod_{i=1}^r \psi(a_i^*)\right) & = & h_\iota\left(F^{-1}(f_0)\left(\prod_{i=1}^r \phi(a_i^*)\right)F(f_r)\right) \\
& = & h_\iota\left(\prod_{i=1}^r \phi(a_i^*)\right)+h_\iota(F(f_r))-h_\iota(F(f_0))=h_\iota\left(\prod_{i=1}^r \phi(a_i^*)\right).
\end{eqnarray*}

Note that all the elements $\psi(a_i^*)$ are simply generators, as $\psi$ respects constraints imposed by $H$, and $\phi(a_i)$ are only multiplications of positive powers of generators, as we excluded the cases where paths $Q_i$ traverse bundles in the wrong direction. Hence, $h_\iota\left(\prod_{i=1}^r \psi(a_i^*)\right)$ is the number of arcs of $G_\iota$ contained in the bundle $B$, and $h_\iota\left(\prod_{i=1}^r \phi(a_i^*)\right)$ is the number of passages of path $Q_\iota$ via arc $a(B)$, which, by the construction of $Q_\iota$, is equal to the number of occurrences of $B$ on $\bwholes_\iota$.
\end{proof}

\newcommand{\unknown}{?}
\newcommand{\passset}{\mathcal{P}}
\newcommand{\smallexp}{4}
\newcommand{\zbundleset}{\hat{\bundleset}}
\newcommand{\zdecomp}{\hat{\decomp}}

\section{Guessing bundle words and their winding numbers}\label{s:word-guessing}

The aim of this section is to prove that, for a given bundled instance $(G,\decomp,\bundleset)$ with sufficient isolation,
the number of reasonable bundle words with holes for a minimal solution on $G$ is bounded by a function of $|\bundleset|$, $|\decomp|$ and $k$.
Note that, as each terminal lies in its own component of $\decomp$ and
$G$ is weakly connected, we may assume $2k \leq |\decomp| \leq |\bundleset|+1$.

Formally, we prove the following theorem.

\begin{theorem}\label{thm:word-guessing}
Let $(G,\decomp,\bundleset)$ be a bundled instance of isolation $(\Lambda, d)$, where $\Lambda \geq 3$, $d \geq \max(2k,f(k,k)+4)$, and $f(k,t)=2^{O(kt)}$ is the bound on the type-$t$ bend promised by Lemma \ref{lem:bendbound}.
Then in $2^{O(k^2|\bundleset|^2 \log |\bundleset|)} |G|^{O(1)}$
time one can compute a family of at most
$2^{O(k^2|\bundleset|^2 \log |\bundleset|)}$
sequences $(\bwholes_i)_{i=1}^k$ of bundle words with level-$2$ ring
holes, not containing any level-$0$ bundle,
such that if $(G,\decomp,\bundleset)$ is a YES-instance to \probshort{},
  then there exists a solution $(P_i)_{i=1}^k$ to \probshort{} on $(G,\decomp,\bundleset)$
and a sequence $(\bwholes_i)_{i=1}^k$ in the generated set such that
$P_i$ is consistent with $\bwholes_i$ for each $1 \leq i \leq k$.
\end{theorem}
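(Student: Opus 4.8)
The plan is to fix an arbitrary minimal solution $(P_i)_{i=1}^k$ (in the sense of the definition of a minimal solution) and to show that the unique sequence of bundle words with level-$2$ ring holes consistent with it (as given by Lemma~\ref{lem:path-to-bwholes} and Corollary~\ref{cor:bwholes-minsol}) lies in a family of size $2^{O(k^2|\bundleset|^2\log|\bundleset|)}$ that can be enumerated in the stated time. First I would invoke the structural results already available for a minimal solution: by Corollary~\ref{cor:no-visitors} there are no ring visitors (here one uses $d\ge f(k,k)+4$), by Theorem~\ref{thm:isolation-passages} the paths contain only $O(|\bundleset|^2k^2)$ isolation passages, hence only boundedly many ring passages, and by Corollary~\ref{cor:bwholes-minsol} each level-$\lambda$ bundle word with ring holes consistent with $P_i$ is flat. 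Consequently each $\bwholes_i$ is a word over $\bundleset'\cup\Z$ of a very restricted shape: by Corollary~\ref{cor:bwdecomp} its normal parts between ring holes admit spiral decompositions $u_1^{r_1}\cdots u_s^{r_s}$ with $s\le 2|\bundleset|$ (Lemma~\ref{lem:rep-len}), and the $O(|\bundleset|^2k^2)$ ring holes carry single integer winding numbers.

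Next I would split the guessing into two parts: (a) the \emph{combinatorial shape} --- the ordered list of the words $u_j$, the number, positions and owning ring components of the ring holes, the direction patterns of the passages, and all \emph{small} exponents and \emph{small} winding numbers (say below some fixed constant); and (b) the \emph{large} exponents $r_j$ together with the \emph{large} winding numbers. Part (a) has only $2^{O(k^2|\bundleset|^2\log|\bundleset|)}$ possibilities: each $u_j$ is a length-$\le|\bundleset|$ word over $\bundleset$ with no repeated letter, so one of $\le|\bundleset|!$ options; there are $s\le 2|\bundleset|$ of them per normal part, $k$ paths, and $O(|\bundleset|^2k^2)$ ring holes, which multiplies out within the bound. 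So I would branch over all choices for (a), and for each fixed choice run a deterministic procedure (described below) that pins down the remaining large parameters up to an $O(1)$ ambiguity, over which we branch as well; since there are $O(k|\bundleset|)$ large exponents and $O(k^2|\bundleset|^2)$ large winding numbers, this extra branching stays within the claimed size.

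For part (b) the engine is the spiraling-ring machinery together with Theorem~\ref{thm:words-to-paths}. A subword $u^r$ with $r$ large in $\bundleword(P_i)$ (with $u$ over disc-component bundles, as flatness guarantees) carves out a spiraling ring in $G$, and by Lemma~\ref{lem:minsol-spiral} combined with Lemma~\ref{lem:bwrep:spiraling-ring} minimality forces $r$ to be within $O(1)$ of the minimum number of times any family of vertex-disjoint paths can traverse the first bundle of $u$ through that ring. To detect this minimum, I would cut $G$ along a spiral cut and along the bundles not occurring in $u$, attach auxiliary terminals on the newly exposed boundaries consistently with the already-fixed shape (a), and for successive candidates $r^\ast=0,1,2,\dots$ run the polynomial-time algorithm of Theorem~\ref{thm:words-to-paths} (hence of Theorem~\ref{thm:coh-alg}) on the modified instance with the bundle words with holes obtained by substituting $r^\ast$ into the shape; the least $r^\ast$ for which it does not reject yields $r$ up to an additive constant, and the solution it returns cannot secretly spiral much more, because we always query the least feasible value and Lemma~\ref{lem:bwrep:spiraling-ring} controls feasibility up to $O(1)$. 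The processing order is forced by Corollary~\ref{cor:copy-word} (illustrated by Figure~\ref{fig:view:nontrivial-spiral}): one resolves spirals $u^r$ in \emph{increasing} order of $|u|$, since a larger spiral can only contain \emph{proper}-length permutations of its subwords (already resolved), while every path meeting the big spiraling ring either stays entirely on one side or, by Lemma~\ref{lem:bwrep:spiraling-ring} and Lemma~\ref{lem:rep-must}, follows it with exponent $\ge r-2$. Large winding numbers in ring passages are handled in the same spirit: by Theorem~\ref{thm:isolation-passages} there are boundedly many of them, and by the rerouting Lemma~\ref{lem:ringhomotopy} the winding number of each ring passage of a minimal solution may be assumed within $6$ of a canonical value depending only on the ring component; this canonical value is found by again running Theorem~\ref{thm:words-to-paths} on the (peeled, ring-visitor-free) ring component for successive candidate winding numbers.

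\textbf{Main obstacle.} I expect the hardest part to be the bookkeeping that glues these steps together: proving that, once a correct shape (a) is fixed, the iterative resolution of large exponents and winding numbers --- performed in the forced increasing-length order, each via one application of Theorem~\ref{thm:words-to-paths} on a carefully cut auxiliary instance --- actually reconstructs a bundle word with level-$2$ ring holes that is consistent with \emph{some} minimal solution, and that each rerouting/modification step either preserves minimality or at least keeps the instance a YES-instance with a bundle word still in the enumerated family. In particular one must verify that the spiral cuts used to isolate a single unknown exponent yield a legitimate sub-instance with well-defined auxiliary terminals (using Lemma~\ref{lem:spiral-split} and Corollary~\ref{cor:bwrep}), that the nested-spiral situation of Figure~\ref{fig:view:nontrivial-spiral} is fully accounted for by the strictness in Corollary~\ref{cor:copy-word}, and that interleaving the exponent-guessing with the ring-passage winding-number guessing (which requires temporarily treating a thin peeled layer of each ring component as a disc component, harmless since a minimal solution has no ring visitors) does not create inconsistencies. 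Modulo this technical effort, counting the branches and the polynomial cost of each invocation of Theorem~\ref{thm:words-to-paths} gives the claimed bound $2^{O(k^2|\bundleset|^2\log|\bundleset|)}\cdot|G|^{O(1)}$.
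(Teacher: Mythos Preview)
Your plan matches the paper's proof: first enumerate partial bundle-word shapes (Lemma~\ref{lem:bpwguess}), then resolve the unknown large exponents in increasing order of $|u|$ via auxiliary instances solved by Theorem~\ref{thm:words-to-paths} (the paper formalizes your ``cut along bundles not in $u$ and attach terminals'' step as the \emph{zoom} machinery of Section~\ref{sec:zooms}), and finally guess winding numbers ring by ring using Theorem~\ref{thm:words-to-paths} together with Lemma~\ref{lem:ringhomotopy}; you also correctly flag the bookkeeping around nested spirals (handled by the ``relevant pairs'' device in Lemmas~\ref{lem:bw-guess:known-exp}--\ref{lem:bw-guess:equiv2}) as the hard part.

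Two small calibrations. First, the additive slack on winding numbers is not~$6$: the paths returned by Theorem~\ref{thm:words-to-paths} only obey the prescribed bundle words in the relaxed multiset sense, so one must run a budget argument (Claim~\ref{clm:tight-budget} and Claims~\ref{cl:Rwinding}--\ref{cl:Qwinding}) on top of Lemma~\ref{lem:ringhomotopy}, yielding a bound of order $|\bundleset|^2$; this is harmless for the final count but your ``within~$6$'' is too optimistic. Second, for winding numbers the paper does not iterate over successive candidates to find a minimum; it takes \emph{any} $w(\ringcomp)$ for which Theorem~\ref{thm:words-to-paths} succeeds, reads off the winding numbers $w^\ast_\tau$ of the returned paths, and proves that the minimal solution can be rerouted (inside the ring minus the parts blocked by the solution itself, the graph $G^\sharp$) to have winding numbers within $O(|\bundleset|^2)$ of these $w^\ast_\tau$. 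Your alternative of searching successive candidates would also work, but the paper's version avoids having to argue that the minimum feasible winding number is close to the solution's.
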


The proof of Theorem \ref{thm:word-guessing} consists of two steps.
In Section \ref{ss:guessing:words} we show that there is only a bounded number
of reasonable word parts of the promised bundle words with ring holes.
Then, in Section \ref{ss:guessing:windings}, we show that we may pick winding numbers
from a bounded set of candidates.

\subsection{Zooms}\label{sec:zooms}

In this section we introduce a toolbox for measuring spiraling properties of 
parts of the graph $G$. Informally speaking, Lemma \ref{lem:minsol-spiral}
implies that in a minimal solution to \probshort{}, the length of any spiral
is (close to) minimum possible in $G$ --- otherwise, we should be able to reroute
the solution to a shorter spiral, contradicting the minimality of the solution.
We now introduce some gadgets and auxiliary graphs that allow us to measure
this minimum possible size of spirals in $G$.

\begin{defin}[zoom]
Let $(G,\decomp,\bundleset)$ be a bundled instance with isolation $(\Lambda,d)$.
A {\em{zoom}} is a pair $(\zdecomp,\zbundleset)$ such that
$\zbundleset$ is a subset of $\bundleset$ that does not contain any bundle
with an arc incident to a terminal, $\zdecomp \subseteq \decomp$ and
$\anycomp \in \zdecomp$ if and only if there exists $B \in \zbundleset$
whose arcs start or end in $\anycomp$.

We say that a zoom $(\zdecomp,\zbundleset)$
is {\em{level-$\lambda$}} safe for some $0 \leq \lambda \leq \Lambda$,
if for any ring component $\ringcomp \in \decomp$, either the set
$\zbundleset$ contains all bundles with arcs with both endpoints
in $\ringcl^\lambda(\ringcomp)$ or $\zdecomp$ is disjoint with the set
of components of $\ringcl^\lambda(\ringcomp)$.
\end{defin}

\begin{defin}[zoom pass, pack of zoom passes]
Let $(\zdecomp,\zbundleset)$ be a zoom in a bundled instance $(G,\decomp,\bundleset)$
with isolation $(\Lambda,d)$
that is $\lambda$-safe for some $0 \leq \lambda < \Lambda$.
A {\em{level-$\lambda$ zoom pass}} in $(\zdecomp,\zbundleset)$ is a bundle word with level-$\lambda$ ring holes $\bwholes = ((p_j)_{j=0}^q, (w_j)_{j=1}^q)$ in 
$(G,\decomp,\bundleset)$ such that:
\begin{enumerate}
\item $q \geq 1$ or $|p_0| \geq 2$, i.e., the bundle words of $\bwholes$
 contain at least two symbols in total;
\item the first bundle of $p_0$ (called {\em{the first bundle of $\bwholes$}})
  does not belong to $\zbundleset$, but its arcs end
in a component belonging to $\zdecomp$;
\item the last bundle of $p_q$ (called {\em{the last bundle of $\bwholes$}})
  does not belong to $\zbundleset$, but its arcs start
in a component belonging to $\zdecomp$;
\item all other bundles of $p_j$, $0 \leq j \leq q$, belong to $\zbundleset$.
\end{enumerate}.
A {\em{pack of level-$\lambda$ zoom passes}} in $(\zdecomp,\zbundleset)$ is a pair
$((\bwholes_\tau)_{\tau \in I}, (\psi_{B,\alpha})_{B \in \bundleset, 1 \leq \alpha \leq 2})$ where
  \begin{enumerate}
  \item $(\bwholes_\tau)_{\tau \in I}$ is a sequence of level-$\lambda$ zoom passes for some index set $I$;
  \item $\psi_{B,1}$ is a permutation of those indices $\tau \in I$ for which $\bwholes_\tau$ starts
  with $B$, and $\psi_{B,2}$ is a permutation of those indices $\tau \in I$
  for which $\bwholes_\tau$ ends with $B$.
  \end{enumerate}
\end{defin}

Informally speaking, a zoom is a part of a graph $G$ which we investigate, and
zoom passes are bundle words with ring holes
of chosen parts of a solution that pass through a zoom.
The permutations $\psi_{B,1}$ and $\psi_{B,2}$ are intented to correspond to the order of the 
chosen parts on bundle $B$ when entering and leaving the zoom, respectively.
Note that we do not require $(\zdecomp,\zbundleset)$ to be an induced subgraph of
$(\decomp,\bundleset)$ and, consequently, we allow a zoom pass $p$ to start or end with a bundle with arcs
with both endpoints in components of $\zdecomp$ (but this bundle cannot belong to $\zbundleset$).

We now note that the number of choices for $\psi_{B,\alpha}$ is bounded.
\begin{lemma}\label{lem:psi-bound}
Let $(\zdecomp,\zbundleset)$ be a zoom in a bundled instance $(G,\decomp,\bundleset)$ with isolation $(\Lambda,d)$
and let $(\bwholes_\tau)_{\tau \in I}$ be a sequence of level-$\lambda$ zoom passes 
in $(\zdecomp,\zbundleset)$ for some $0 \leq \lambda < \Lambda$.
Then there exist at most $(|I|!)^2$ sequences
$(\psi_{B,\alpha})_{B \in \bundleset, 1 \leq \alpha \leq 2}$
such that
$((\bwholes_\tau)_{\tau \in I}, (\psi_{B,\alpha})_{B \in \bundleset, 1 \leq \alpha \leq 2})$
is a pack of zoom passes in $(\zdecomp,\zbundleset)$.
\end{lemma}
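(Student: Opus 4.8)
\textbf{Proof plan for Lemma~\ref{lem:psi-bound}.}
The claim is essentially a counting statement, so the plan is to set up the right bijection and bound the number of combinatorial choices directly. First I would unpack the definition: a pack of level-$\lambda$ zoom passes is a pair $((\bwholes_\tau)_{\tau \in I}, (\psi_{B,\alpha})_{B \in \bundleset, 1 \leq \alpha \leq 2})$ where for each bundle $B$, $\psi_{B,1}$ is a permutation of the set $I_{B,1} := \{\tau \in I : \bwholes_\tau \text{ starts with } B\}$ and $\psi_{B,2}$ is a permutation of $I_{B,2} := \{\tau \in I : \bwholes_\tau \text{ ends with } B\}$. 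Since the family $(\bwholes_\tau)_{\tau\in I}$ is already fixed, the sets $I_{B,1}$ and $I_{B,2}$ are determined, so the only freedom is in choosing the two families of permutations.

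The key observation is that each index $\tau \in I$ has a uniquely determined first bundle (the first bundle of $\bwholes_\tau$, which by definition exists since $\bwholes_\tau$ contains at least two symbols) and a uniquely determined last bundle; hence the sets $\{I_{B,1}\}_{B\in\bundleset}$ form a partition of $I$, and likewise $\{I_{B,2}\}_{B\in\bundleset}$ form a partition of $I$. Therefore the number of ways to choose all the permutations $(\psi_{B,1})_{B\in\bundleset}$ is exactly $\prod_{B\in\bundleset} |I_{B,1}|!$, and since $\sum_{B\in\bundleset}|I_{B,1}| = |I|$ with all terms nonnegative, we have $\prod_{B\in\bundleset}|I_{B,1}|! \leq (|I|!)$ (a product of factorials of parts of a composition of $n$ is at most $n!$, which is a standard and trivial inequality: $|I_{B,1}|!$ divides the multinomial coefficient, or simply each ordering of the $I_{B,1}$-blocks embeds into an ordering of $I$). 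The same bound $\prod_{B\in\bundleset}|I_{B,2}|! \leq |I|!$ holds for the second family. Multiplying the two independent choices gives at most $(|I|!)^2$ sequences $(\psi_{B,\alpha})_{B\in\bundleset, 1\le\alpha\le 2}$, which is exactly the claimed bound.

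There is essentially no obstacle here; the only point requiring a sentence of care is verifying that every such choice of permutations actually yields a valid pack of zoom passes (so that the count is an upper bound on a nonempty-or-empty set, as claimed), which is immediate from the definition since the definition of a pack imposes no constraint on $\psi_{B,\alpha}$ beyond being a permutation of the appropriate index set. I would state the elementary inequality $\prod_i a_i! \le (\sum_i a_i)!$ explicitly (it follows by induction on the number of parts, using $a!\,b! \le (a+b)!$), and conclude.
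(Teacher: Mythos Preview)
Your proposal is correct and follows essentially the same approach as the paper: define $I_{B,\alpha}$, observe these partition $I$ so that $\sum_B |I_{B,\alpha}| = |I|$, and bound $\prod_B |I_{B,\alpha}|! \le |I|!$ for each $\alpha$. The paper phrases the final inequality as ``convexity of the factorial function'' while you use the multinomial/divisibility argument, but the content is identical.
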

\begin{proof}
For $B \in \bundleset$, let $I_{B,1}$ be a set of indices $\tau \in I$ such that
$\bwholes_\tau$ starts with $B$, and $I_{B,2}$ be a set of indices $\tau \in I$ such that
$\bwholes_\tau$ ends with $B$. By definition, $\psi_{B,\alpha}$
is a permutation of $I_{B,\alpha}$ for any $B \in \bundleset$, $1 \leq \alpha \leq 2$.
The lemma follows from the convexity of the factorial function and the fact that
$$\sum_{B \in \bundleset} |I_{B,1}| = \sum_{B \in \bundleset} |I_{B,2}| = |I|.$$
\end{proof}

We now define how a zoom with a pack of zoom passes defines a sub-instance of
the original \probshort{} bundled instance $(G, \decomp,\bundleset)$.

\begin{defin}[Zoom pass starting and ending gadget]
Consider a bundled instance $(G,\decomp,\bundleset)$ and a bundle $B \in \bundleset$.
Let $I$ be an index set and let $\psi$ be a permutation of $I$.

The {\em{zoom pass starting gadget}} is constructed as follows.
First, take a bi-directional grid of size $|I| \times |B|$, that is,
take a set of vertices $\{v_{\tau,b} : \tau \in I, b \in B\}$ and 
connect $v_{\tau,b}$ and $v_{\tau,b'}$ with arcs in both directions
for $\tau \in I$ and $b$, $b'$ being two consecutive arcs of $B$,
 as well as $v_{\tau,b}$ and $v_{\tau',b}$ for $b \in B$ and $\tau$, $\tau'$ being
 two consecutive indices in the permutation $\psi$ of $I$.
Let $\tau_1 \in I$ be the first element of $I$ in the permutation $\psi$
and $b_1$ be the first arc of $B$. For each $b \in B$, make an arc
from $v_{\tau_1,b}$ to the endpoint of $b$ in $G$.
For each $\tau \in I$, create a new source terminal $s_\tau$ and connect it
with $v_{\tau,b_1}$. See Figure \ref{fig:zoom-gadget} for an illustration.

The {\em{zoom pass ending gadget}} is constructed similarly as
the zoom pass starting gadget, except for three differences:
the arcs are reversed, the constructed terminals are sink terminals,
and the gadget is attached to the starting points of the arcs of $B$.
\end{defin}

\begin{figure}
\begin{center}
\includegraphics{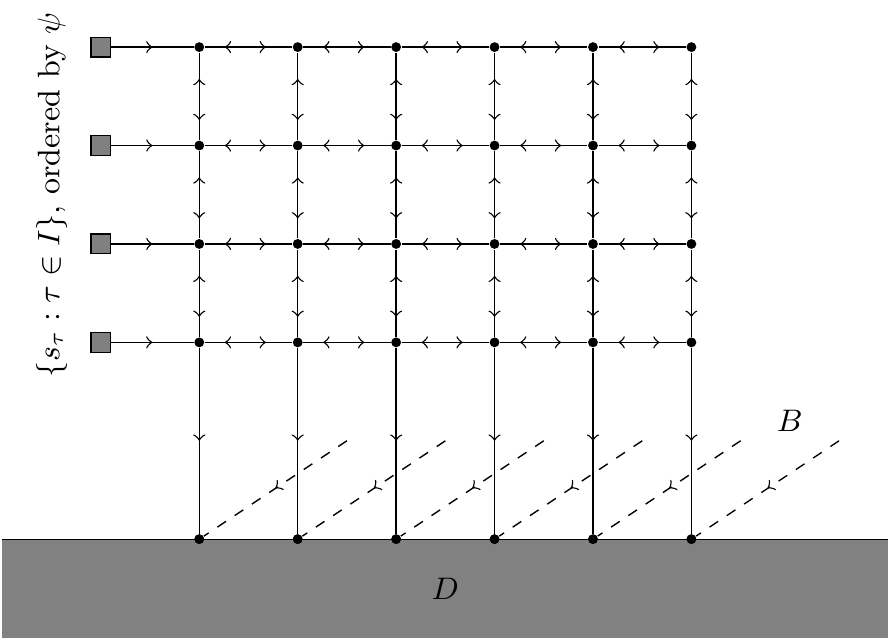}
\caption{An illustration of the zoom pass starting gadget, attached to bundle $B$ with endpoints in component $D$.
The gray squares represent the new terminals, and the dashed lines represent the arcs of the bundle $B$.}
  \label{fig:zoom-gadget}
  \end{center}
  \end{figure}

\begin{defin}[zoom auxiliary graph]
Let $(\zdecomp,\zbundleset)$ be a zoom in a bundled instance $(G,\decomp,\bundleset)$
and let $((\bwholes_\tau)_{\tau \in I}, (\psi_{B,\alpha})_{B \in \bundleset, 1 \leq \alpha \leq 2})$
be a pack of zoom passes in $(\zdecomp,\zbundleset)$.

The {\em{zoom auxiliary graph}} is constructed as follows.
First, we take a subgraph of $G$ consisting of all components
of $\zdecomp$ and all arcs of bundles of $\zdecomp$.
Then, for each bundle $B \in \bundleset$ that is the first
symbol of some bundle word with ring holes $\bwholes_\tau$, we define $I_{B,1}$
to be the set of those indices $\tau \in I$ for which $\bwholes_\tau$ starts with $B$,
and we construct a zoom pass starting gadget
for bundle $B$, index set $I_{B,1}$ and permutation $\psi_{B,1}$.
Similarly, for each bundle $B \in \bundleset$ that is the last
symbol of some $\bwholes_\tau$, we define $I_{B,2}$
to be the set of those indices $\tau \in I$ for which $\bwholes_\tau$ ends with $B$,
and we construct a zoom pass ending gadget
for bundle $B$, index set $I_{B,2}$ and permutation $\psi_{B,2}$.
\end{defin}

We note that a zoom auxiliary graph is always planar: any bundle $B$ 
for which we construct a starting or ending gadget does not belong to $\zbundleset$,
and we can embed these gadgets in the plane in the space occupied by $B$
in the embedding of the graph $G$. Note that this claim remains true also if a bundle $B$
appears both as the first bundle of some zoom pass and the last bundle of some
(possibly other) zoom pass.

Moreover, note that the aforementioned embedding imposes a natural decomposition of 
the zoom auxiliary graph. Formally, as a set of components we take $\zdecomp$,
    the set of bundles $\zbundleset$ and we add the following components and bundles:
\begin{enumerate}
\item for each zoom pass starting gadget for a bundle $B$, we take a disc component
$\disccomp_{B,1}$ that contains the bidirectional grid, and a bundle $B_{B,1}$
that contains arcs connecting $\disccomp_{B,1}$ with the endpoints
of the bundle $B$;
\item similarly, for each zoom pass ending gadget we define disc component
$\disccomp_{B,2}$ and a bundle $B_{B,2}$;
\item each introduced terminal is embedded in its own disc component,
  and its incident arc is embedded in its own bundle.
\end{enumerate}
In this manner we define at most $4|I|$ new components and bundles,
and we create a bundled instance with $|I|$ terminal pairs indexed by $I$.
Let us call this instance {\em{a zoom auxiliary instance}}.

Let $(\zdecomp,\zbundleset)$ be a level-$\lambda$ safe
zoom in a bundled instance $(G,\decomp,\bundleset)$ with isolation $(\Lambda,d)$ for
some $0 \leq \lambda < \Lambda$.
Let $(P_i)_{i=1}^k$ be a solution to \probshort{} on $G$ and
assume $\zdecomp$ does not contain any component with a terminal.
For one path $P_i$, we say that its subpath $P_\tau$ is a {\em{zoom incident}} (with respect to the zoom $(\zdecomp,\zbundleset)$)
  if $P_\tau$:
\begin{enumerate}
\item starts with an arc $b_{\tau,1}$ that belongs to a bundle $B_{\tau,1} \notin \zbundleset$,
  but the ending point of $b_{\tau,1}$ belongs to a component of $\zdecomp$;
\item ends with an arc $b_{\tau,2}$ that belongs to a bundle $B_{\tau,2} \notin \zbundleset$,
  but the starting point of $b_{\tau,2}$ belongs to a component of $\zdecomp$;
\item all other arcs of $P_{\tau}$ either belong to a bundle of $\zbundleset$
  or a component of $\zdecomp$.
\end{enumerate}
Note that all zoom incidents of a path $P_i$ are pairwise disjoint, except for possibly overlapping the first and last arcs; let $P_{i,1},P_{i,2},\ldots,P_{i,q(i)}$ be the sequence
of all zoom incidents on $P_i$ (with respect to $(\zdecomp,\zbundleset)$) in the order of they appearance on $P_i$.
Let $I = \{(i,j): 1 \leq i \leq k, 1 \leq j \leq q(i)\}$.

For $\tau=(i,j) \in I$, let $\bwholes_\tau$ be a bundle word with level-$\lambda$ ring holes constructed as follows:
we take $\bundleword(P_\tau)$ and for each level-$\lambda$ ring passage of $P_\tau$
through $\ringcl^\lambda(\ringcomp)$ for some $\ringcomp \in \zdecomp$, we replace
the subword corresponding to this ring passage with its winding number.
The fact that $\bwholes_\tau$ satisfies all properties of a bundle word with level-$\lambda$
ring holes follows from the assumption that the zoom $(\zdecomp,\zbundleset)$ is level-$\lambda$ safe. 

For any $B \in \bundleset$ and $1 \leq \alpha \leq 2$, let $I_{B,\alpha} = \{\tau \in I: b_{\tau,\alpha} \in B\}$ and $\psi_{B,\alpha}$ be the order of the indices $\tau \in I_{B,\alpha}$ in which arcs $b_{\tau,\alpha}$ appear on $B$.
Observe the following.
\begin{observation}\label{obs:zoom-solution}
The pair
$((\bwholes_{\tau})_{\tau \in I}, (\psi_{B,\alpha})_{B \in \bundleset, 1 \leq \alpha \leq 2})$ defined above is a pack of zoom passes in $(\zdecomp,\zbundleset)$. Moreover,
  the zoom auxiliary instance constructed for this pack of zoom passes
has a solution $(P_\tau^\ast)_{\tau \in I}$, where $P_\tau^\ast$
is constructed from $P_\tau$ by modifying its first and last arcs:
we replace $b_{\tau,1}$ with a path from $s_\tau$ through vertices $v_{\tau,b}$ up to index
$b_{\tau,1}$ and then through vertices $v_{\tau,b_{\tau,1}}$ up to $v_{\tau_1,b_{\tau,1}}$
and to the ending point of $b_{\tau,1}$; a similar replacement is made for $b_{\tau,2}$.
Moreover, the bundle word with level-$\lambda$ ring holes for $P_\tau^\ast$ in the created
instance is equal to $\bwholes_\tau^\ast$, except for the prefix and suffix that correspond to
the passage in the zoom pass starting and ending gadgets.
\end{observation}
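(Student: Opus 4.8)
The plan is to handle the three assertions separately: the first and third reduce to a careful match of the definitions, while the second needs a small grid-routing argument, which I expect to be the only delicate point.

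For the first assertion I would check that each $\bwholes_\tau$, $\tau=(i,j)$, is a level-$\lambda$ zoom pass by confronting the four conditions of that definition with the three conditions defining a zoom incident. Conditions (2) and (3) --- the first, resp.\ last, bundle of $\bwholes_\tau$ lies outside $\zbundleset$ but is incident to a component of $\zdecomp$ --- are exactly items~1 and~2 of the zoom-incident definition for $B_{\tau,1}$ and $B_{\tau,2}$. Condition (4) --- every other bundle of $\bwholes_\tau$ lies in $\zbundleset$ --- follows from item~3: arcs of $P_{i,j}$ inside components of $\zdecomp$ contribute no letter, and a level-$\lambda$ ring passage of $P_{i,j}$ through $\ringcl^\lambda(\ringcomp)$ forces the components of $\ringcl^\lambda(\ringcomp)$ into $\zdecomp$, so level-$\lambda$ safety of the zoom guarantees that all bundles with both endpoints in $\ringcl^\lambda(\ringcomp)$ --- precisely the ones collapsed into a winding number --- belong to $\zbundleset$. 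Condition (1) is the remark that a zoom incident carries its distinguished first and last arcs; were they equal, that arc would have both endpoints in $\zdecomp$ and, by maximality of the $P_{i,j}$, a flanking arc on $P_i$ would extend $P_{i,j}$, a contradiction, so $\bwholes_\tau$ has at least two letters. Finally, for fixed $B,\alpha$ the arcs $b_{\tau,\alpha}$, $\tau\in I_{B,\alpha}$, are pairwise distinct arcs of $B$ (the solution uses each arc once), so listing them along $B$ gives a genuine permutation $\psi_{B,\alpha}$; hence the pair is a pack of zoom passes.

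For the second assertion I would write $P_\tau^\ast$ out explicitly: it leaves $s_\tau$ into the starting gadget of $B_{\tau,1}$, runs along row $\tau$ of the bidirectional grid to column $b_{\tau,1}$, then along that column to the distinguished first row $\tau_1$, exits via $v_{\tau_1,b_{\tau,1}}\to\mathrm{head}(b_{\tau,1})$, follows the middle portion of $P_{i,j}$ (all of $P_{i,j}$ but its first and last arcs, which by construction lies in the components of $\zdecomp$ together with arcs of $\zbundleset$), enters the ending gadget of $B_{\tau,2}$ at $\mathrm{tail}(b_{\tau,2})$, runs along a column and then along row $\tau$, and ends at $t_\tau$; all arcs are directed admissibly since the grids are bidirectional and the source/sink and gadget-to-$G$ arcs have a forced orientation. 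Pairwise vertex-disjointness then splits into three blocks. Inside $\zdecomp$ together with $\zbundleset$ the $P_\tau^\ast$ coincide with the middle portions of the zoom incidents, which are pairwise disjoint because distinct zoom incidents of the solution meet only in their first/last arcs --- exactly the arcs we removed --- and the interface vertices $\mathrm{head}(b_{\tau,1}),\mathrm{tail}(b_{\tau,2})$ are endpoints of these portions. Inside a gadget for fixed $B,\alpha$, two paths $P_\tau^\ast,P_{\tau'}^\ast$ with $\tau\ne\tau'$ in $I_{B,\alpha}$ use row-portions on distinct rows and column-portions on distinct columns; a row-portion of one meets a column-portion of the other only at $v_{\tau,b_{\tau',\alpha}}$, which forces both that $b_{\tau',\alpha}$ precede $b_{\tau,\alpha}$ on $B$ (so $v_{\tau,b_{\tau',\alpha}}$ lies on row $\tau$'s traversal, which runs from the first arc of $B$ up to $b_{\tau,\alpha}$) and that row $\tau$ lie between the first row $\tau_1$ and row $\tau'$ in $\psi_{B,\alpha}$; since $\psi_{B,\alpha}$ orders rows by position of $b_{\cdot,\alpha}$ on $B$, the latter says $b_{\tau,\alpha}$ precedes $b_{\tau',\alpha}$, a contradiction. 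This monotone grid bookkeeping is the main obstacle; the rest is routine.

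For the third assertion I would note that $P_\tau^\ast$ is the concatenation of a prefix inside the starting gadget, the middle portion of $P_{i,j}$, and a suffix inside the ending gadget; in the decomposition carried by the zoom auxiliary instance these pieces contribute the gadget prefix word, the word $\bundleword(P_{i,j})$ with its first letter $B_{\tau,1}$ and last letter $B_{\tau,2}$ removed (both outside $\zbundleset$), and the gadget suffix word. The gadgets contain no ring component, so the level-$\lambda$ ring passages of $P_\tau^\ast$ are precisely those of $P_{i,j}$, and their winding numbers are unchanged because each $\refcurve^\lambda(\ringcomp)$ lies inside $\ringcl^\lambda(\ringcomp)$, which is wholly inside the zoom. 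Replacing those subwords of the middle piece by their winding numbers reproduces $\bwholes_\tau$ with its first and last letters deleted, so the bundle word with level-$\lambda$ ring holes of $P_\tau^\ast$ equals $\bwholes_\tau$ up to the gadget prefix and suffix, as claimed.
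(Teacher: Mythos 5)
Your verification is correct and matches the intended (and in the paper, entirely omitted) argument: the statement is presented as an Observation following "immediately" from the definitions of zoom incidents, zoom passes, and the gadgets, and your definition-chase for assertions one and three plus the monotone row/column disjointness argument inside the bidirectional grids for assertion two are exactly the details being suppressed. The only point worth flagging is that the single-arc zoom incident edge case you patch with a maximality argument is one the paper's definitions do not explicitly rule out either, so your treatment is at least as careful as the source.
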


\subsection{Deducing bundle words}\label{ss:guessing:words}

In this section we present the first half of the algorithm of Theorem \ref{thm:word-guessing}:
we aim to guess the bundle word part of the required bundle words with holes; the winding number part is guessed in the next subsection.

Consider a minimal solution $(P_i)_{i=1}^k$ to the bundled instance $(G,\decomp,\bundleset)$.
By Corollary \ref{cor:bwdecomp}, the bundle word $\bundleword(P_i)$ can be decomposed into $u_{i,1}^{r_{i,1}} u_{i,2}^{r_{i,2}} \ldots u_{i,s(i)}^{r_{i,s(i)}}$
for some $s(i) \leq 2|\bundleset|$. Note that there is only a bounded number of choices for the words $u_{i,j}$; the difficult
part is to guess the exponents $r_{i,j}$ for those $j$ where $r_{i,j}$ is large.
In this section we guess only exponents $r_{i,j}$ for which $u_{i,j}$ contains only bundles with both endpoints
in disc components (i.e., does not contain any level-$0$ ring bundles); in the next section we `hide' the unknown
exponents by going to the level-$1$ projection of the bundle words and using results of Section \ref{ss:rings:bounds}.

However, if some path $P_\iota$ has subpath with bundle word $u^r$ for some large integer $r$ and $u$ not containing any level-$0$ ring bundles,
we have a spiraling ring $A$ and we may use Lemma \ref{lem:minsol-spiral}: we are not able
to route the same paths through the spiraling ring using less than $r-10$ turns.
Lemma \ref{lem:spiral-split} gives us an answer how many times the paths
$(P_i)_{i=1}^k$ cross the spiraling ring $A$ and in which direction.
Our approach is to create a zoom containing the spiral $u$, attach to it all paths
that want to cross the spiraling ring $A$ and measure the minimal number of turns
they need to make along the spiral $u$.

However, there is a significant difficulty with this approach --- it is unclear where to
attach the paths crossing $A$ to the zoomed spiral $u$.
To create a zoom instance, we need that each zoom pass starts and ends in a bundle
that does not appear in the zoom (in our case, does not appear on the spiral $u$).
However, any path $P_i$, before spiraling on the spiral $u$, may go along
a bundle word $v^q$, where $v$ contains a proper subset of the symbols of $u$.
Thus, we need to include such spirals as well in our zoom instance. Luckily,
we can assume that the exponent $q$ is already known to us, if we assume that
we guess exponents $r_{i,j}$ in the order of increasing lengths $|u_{i,j}|$.

The approach from the previous paragraph, although resolves the issue of
where to attach the zoom passes in the created zoom instance, creates a new problem.
If we apply Theorem \ref{thm:words-to-paths} to the created zoom instance,
the returned paths follow our guessed bundle words in a very relaxed manner.
As we have included in our zoom instance all parts of the solution that
cross the spiraling ring $A$, by Lemma \ref{lem:minsol-spiral}
in the returned solution the path $P_\iota$ needs to spiral at least $r-10$
times in this ring. However, it is no longer true for other terms $v^q$
where $v$ contains a proper subset of the symbols of $u$: the solution
returned by Theorem \ref{thm:words-to-paths} may ``borrow'' some bundles from such
terms, in order to spiral along $u$ significantly {\em{more}} times than $r$.

To circumvent this problem, we add even more passes to our zoom instance:
for each path $P_i$ we add a zoom pass that corresponds to a maximal subpath of
$P_i$ that does not contain arcs from bundles that do not appear in $u$,
and contains a subword $v_0^2$ for some $v_0$. Corollary \ref{cor:copy-word}
ensures that the only unknown exponent in our zoom instance is the number 
of turns along the spiral $u$, while the presence of all these zoom passes
allow us to use Lemma \ref{lem:minsol-spiral} for any term $v^q$ in $\bundleword(P_\iota)$,
not only to $u^r$. Hence, in the solution returned by Theorem \ref{thm:words-to-paths}
the simulated part of the path $P_\iota$ needs to behave very similarly as
in the original solution  $(P_i)_{i=1}^k$, and we
are able to guess the exponent $r$.

Let us now proceed to the formal argumentation. We perform multiple branching steps,
one for each unknown exponent $r_{i,j}$.
We start with the following set of definitions that formalizes the state of this branching procedure.

\begin{defin}[potential spiral]
Let $(G,\decomp,\bundleset)$ be a bundled instance. 
Let $u$ be a nonempty word over alphabet $\bundleset$ that contains each letter of $\bundleset$ at most once.
Then $u$ is a {\em{potential spiral}} if, for each two consecutive bundles $B_1B_2$ on $u$, 
the components where the arcs of $B_1$ end and the arcs of $B_2$ start are equal.
Moreover, $u$ is a {\em{potential closed spiral}} if the components where the arcs
of the last bundle of $u$ end and the arcs of the first bundle of $u$ start are equal.
\end{defin}
\begin{defin}[potential long spiral]
Let $(G,\decomp,\bundleset)$ be a bundled instance of isolation $(\Lambda,d)$ where $\Lambda \geq 2$, $d \geq \max(2k,f(k,k)+4)$, and $f(k,t)=2^{O(kt)}$ is the bound on the type-$t$ bend promised by Lemma \ref{lem:bendbound}.
A potential closed spiral $u$ is a {\em{potential long spiral}} if either:
\begin{enumerate}
\item $u$ contains at least one normal bundle and does not contain any ring bundle of level $\Lambda-1$ or lower; or
\item $u$ contains only ring bundles of $\ringcl(\ringcomp)$ for some ring component $\ringcomp$, and does not contain
non-isolation bundles of different levels.
\end{enumerate}
\end{defin}
\begin{defin}[partial bundle word]
Let $(G,\decomp,\bundleset)$ be a bundled instance of isolation $(\Lambda,d)$ where $\Lambda \geq 2$, $d \geq \max(2k,f(k,k)+4)$, and $f(k,t)=2^{O(kt)}$ is the bound on the type-$t$ bend promised by Lemma \ref{lem:bendbound}.
A {\em{partial bundle word}} is a (formal) string $\pi = u_1^{\rho_1} u_2^{\rho_2} \ldots u_s^{\rho_s}$, where
\begin{itemize}
\item $s \leq 2|\bundleset|$;
\item each $u_j$ is a potential spiral;
\item for each $1 \leq j < s$, there exists a symbol of $\bundleset$ that appears
in exactly one of the two words $u_j$ and $u_{j+1}$;
\item if a bundle $B \in \bundleset$, if $B$ appears in $u_{j_1}$ and in $u_{j_2}$, then $B$ appears in all words $u_j$ for $j_1 \leq j \leq j_2$;
\item each $\rho_j$ is either a positive integer or a symbol $\unknown$; moreover, if $\rho_j = \unknown$ then $u_j$ is a potential long spiral and if $\rho_j \neq 1$ then
$u_j$ is a potential closed spiral.
\end{itemize}
\end{defin}
\begin{defin}[consistent with partial bundle word]
A bundle word $u$ is {\em{consistent}} with a partial bundle word $u_1^{\rho_1} u_2^{\rho_2} \ldots u_s^{\rho_s}$
if there exists positive integers $r_1,r_2,\ldots,r_s$ such that $u = u_1^{r_1} u_2^{r_2} \ldots u_s^{r_s}$ and,
   for each $1 \leq j \leq s$, either $r_j = \rho_j$, or $\rho_j = \unknown$;
in the second case we require that $r_j > \smallexp$.
\end{defin}
\begin{defin}[semi-complete partial bundle word]
A partial bundle word $u_1^{\rho_1} u_2^{\rho_2} \ldots u_s^{\rho_s}$
is {\em{semi-complete}} if $\rho_j = \unknown$ implies that $u_j$ contains at least one level-0 ring bundle.
\end{defin}

Note that, if $\rho_j = \unknown$ in some semi-complete partial bundle word, then (as $(G,\decomp,\bundleset)$ has isolation $(\Lambda,d)$ with $\Lambda \geq 2$) the word $u_j$ contains only level-0, level-1 and level-2 ring bundles and does not contain any normal bundle.

We now note the following about a minimal solution to \probshort{}.
\begin{lemma}\label{lem:sol-to-pbw}
Let $(G,\decomp,\bundleset)$ be a bundled instance of isolation $(\Lambda,d)$ where $\Lambda \geq 2$, $d \geq \max(2k,f(k,k)+4)$, and $f(k,t)=2^{O(kt)}$ is the bound on the type-$t$ bend promised by Lemma \ref{lem:bendbound}.
Assume $(G,\decomp,\bundleset)$ is a YES-instance to \probshort{} and let $(P_i)_{i=1}^k$ be a minimal solution.
For each $1 \leq i \leq k$, let $u_{i,1}^{r_{i,1}} u_{i,2}^{r_{i,2}}, \ldots, u_{i,s(i)}^{r_{i,s(i)}}$
be a decomposition of $\bundleword(P_i)$ given by Corollary \ref{cor:bwdecomp}.
Let $\rho_{i,j} = r_{i,j}$ if $r_{i,j} \leq \smallexp$ or $u_{i,j}$ is not a potential long spiral,
and $\rho_{i,j} = \unknown$ otherwise.
Moreover, let $\rho'_{i,j} = \unknown$ if $\rho_{i,j} = \unknown$ and $u_{i,j}$
contains at least one level-0 ring bundle, and $\rho'_{i,j} = r_{i,j}$ otherwise.
Then, for each $1 \leq i \leq k$,:
\begin{enumerate}
\item $u_{i,1}^{\rho_{i,1}} u_{i,2}^{\rho_{i,2}} \ldots u_{i,s(i)}^{\rho_{i,s(i)}}$
is a partial bundle word consistent with $\bundleword(P_i)$;\label{cl:sol-to-pbw:rho}
\item $u_{i,1}^{\rho'_{i,1}} u_{i,2}^{\rho'_{i,2}} \ldots u_{i,s(i)}^{\rho'_{i,s(i)}}$
is a semi-complete partial bundle word consistent with $\bundleword(P_i)$;\label{cl:sol-to-pbw:rho2}
\item for each $1 \leq j \leq s(i)$, if $\rho_{i,j} \neq \unknown$
then $\rho'_{i,j} = \rho_{i,j} = r_{i,j}$;\label{cl:sol-to-pbw:rho-rho2}
\item for each $1 \leq j \leq s(i)$, if $\rho_{i,j} \neq \unknown$
then $\rho_{i,j} \leq 4|\bundleset|^2 k^2 + 1$.\label{cl:sol-to-pbw:rhobound}
\end{enumerate}
\end{lemma}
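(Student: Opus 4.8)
The plan is to verify the four assertions more or less directly, relying heavily on the structural results established earlier in the section. For item (\ref{cl:sol-to-pbw:rho}), I would start from the spiral decomposition $u_{i,1}^{r_{i,1}}\cdots u_{i,s(i)}^{r_{i,s(i)}}$ of $\bundleword(P_i)$ provided by Corollary~\ref{cor:bwdecomp}, which already guarantees $s(i)\le 2|\bundleset|$, that each $u_{i,j}$ contains every symbol at most once, that consecutive factors differ in at least one symbol, and the ``interval'' property on where each bundle appears. All the conditions of a partial bundle word are then inherited, except that I must check: (a) each $u_{i,j}$ is a \emph{potential spiral} --- this is immediate, since $u_{i,j}$ is a subword of $\bundleword(P_i)$ corresponding to a genuine subpath, so consecutive bundles must meet in a common component; (b) if $\rho_{i,j}=r_{i,j}\ne 1$ (i.e.\ the factor genuinely repeats), then $u_{i,j}$ is a potential \emph{closed} spiral --- again because the subpath corresponding to $u_{i,j}^2$ forces the last bundle of $u_{i,j}$ and its first bundle to meet in a common component; (c) whenever I set $\rho_{i,j}=\unknown$ I have ensured by definition that $u_{i,j}$ is a potential long spiral. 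For (c), note $\rho_{i,j}=\unknown$ only when $r_{i,j}>\smallexp$ and $u_{i,j}$ \emph{is} a potential long spiral (this is exactly my defining clause), so there is nothing further to check. Consistency with $\bundleword(P_i)$ holds by construction: take the witnessing integers to be $r_{i,1},\ldots,r_{i,s(i)}$ themselves, and observe that whenever $\rho_{i,j}=\unknown$ we have $r_{i,j}>\smallexp$ as required.

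Item (\ref{cl:sol-to-pbw:rho2}) is analogous. The string $u_{i,1}^{\rho'_{i,1}}\cdots u_{i,s(i)}^{\rho'_{i,s(i)}}$ is obtained from the partial bundle word of item (\ref{cl:sol-to-pbw:rho}) by replacing each $\unknown$ whose factor carries \emph{no} level-0 ring bundle by the true exponent $r_{i,j}$. This only strengthens the word (fewer $\unknown$'s), so it remains a partial bundle word, and it is semi-complete by the very definition of $\rho'_{i,j}$: the only surviving $\unknown$'s sit on factors containing a level-0 ring bundle. Consistency is again witnessed by $(r_{i,j})_j$. Item (\ref{cl:sol-to-pbw:rho-rho2}) is then a bookkeeping statement: if $\rho_{i,j}\ne\unknown$, then either $r_{i,j}\le\smallexp$ or $u_{i,j}$ is not a potential long spiral; in both cases the clause defining $\rho'_{i,j}$ (which asks for $\rho_{i,j}=\unknown$) fails, so $\rho'_{i,j}=r_{i,j}$, and of course $\rho_{i,j}=r_{i,j}$ by definition. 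Hence all three agree.

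The substantive part is item (\ref{cl:sol-to-pbw:rhobound}): a bounded true exponent must actually be bounded by $4|\bundleset|^2k^2+1$. Here the plan is to combine Lemma~\ref{lem:rep-must} with the bound on isolation passages from Theorem~\ref{thm:isolation-passages}. Suppose $\rho_{i,j}\ne\unknown$ but $r_{i,j}$ is large, say $r_{i,j}>4|\bundleset|^2k^2+1$. Since $\rho_{i,j}=r_{i,j}$, we are in the case where $r_{i,j}\le\smallexp$ (which contradicts $r_{i,j}$ being large) \emph{or} $u_{i,j}$ is not a potential long spiral. So $u_{i,j}$ is not a potential long spiral. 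Examining the definition of potential long spiral, failing it means: $u_{i,j}$ is a potential closed spiral (which it must be, since $r_{i,j}\ge 2$) that either (i) contains a normal bundle together with a ring bundle of level $\le\Lambda-1$, or (ii) contains only ring bundles of some $\ringcl(\ringcomp)$ but non-isolation bundles of different levels, or (iii) contains a normal bundle and no ring bundles at all\,---\,wait, case (iii) \emph{is} a potential long spiral. So the real alternatives are (i) and (ii), plus the degenerate option that $u_{i,j}$ contains only isolation bundles of a single level. In all these surviving cases $u_{i,j}$ must contain an \emph{isolation} bundle (or an arc entering/leaving an isolation component): in case (i) the subword forces a ring-part crossing that enters an isolation component; in case (ii) the mixed levels force an isolation passage. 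Thus the subpath of $P_i$ realizing $u_{i,j}^{r_{i,j}-1}$ traverses a fixed isolation bundle $r_{i,j}-1$ times, producing $\ge r_{i,j}-1$ isolation passages of $P_i$ through that isolation component. Summing over all $i,j$ gives more than $4|\bundleset|^2k^2$ isolation passages in total, contradicting Theorem~\ref{thm:isolation-passages} (whose hypothesis $d\ge 2k$ is met). The main obstacle I anticipate is exactly this case analysis of ``not a potential long spiral'': I must carefully enumerate the ways the definition can fail and, in each, pin down which isolation bundle (and which isolation component of which ring component) is visited $r_{i,j}-1$ times, so that Theorem~\ref{thm:isolation-passages} can be invoked cleanly. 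Once that correspondence between a repeated non-long-spiral factor and a bundle of isolation passages is nailed down, the counting is immediate.
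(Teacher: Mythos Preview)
For Claims~\ref{cl:sol-to-pbw:rho}--\ref{cl:sol-to-pbw:rho-rho2} your argument matches the paper's: all three follow directly from Corollary~\ref{cor:bwdecomp} and the definitions of $\rho_{i,j}$ and $\rho'_{i,j}$.

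For Claim~\ref{cl:sol-to-pbw:rhobound} your route and the paper's diverge. Both start by observing that $\rho_{i,j}\ne\unknown$ together with $r_{i,j}>\smallexp$ force $u_{i,j}$ to fail the definition of a potential long spiral, leaving two cases: (A) $u_{i,j}$ contains a normal bundle and a ring bundle of level at most $\Lambda-1$, or (B) $u_{i,j}$ contains only ring bundles of one $\ringcl(\ringcomp)$ with non-isolation bundles of two different levels. (Your ``degenerate option'' does not survive: if $u_{i,j}$ has only isolation bundles, condition~2 of the definition holds vacuously and $u_{i,j}$ \emph{is} a potential long spiral.) From here the paper treats (A) by noting that each of the $r_{i,j}-1$ ring parts is a ring visitor---excluded by Corollary~\ref{cor:no-visitors}---or contributes an isolation passage bounded by Theorem~\ref{thm:isolation-passages}; and it disposes of (B) outright via Theorem~\ref{thm:oscillators}, showing this case simply cannot occur in a minimal solution. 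Your plan is more uniform: in both cases, argue that every copy of $u_{i,j}$ forces an isolation passage and invoke only Theorem~\ref{thm:isolation-passages}. This is viable, but two points in your write-up need correction. First, the assertion that ``$u_{i,j}$ must contain an isolation bundle'' is false---in case (A), for instance, $u_{i,j}$ may consist of a normal bundle and a single non-isolation level-$(\Lambda-1)$ bundle and nothing else. What you actually need is that the corresponding \emph{subpath} of $P_i$ has an isolation passage (a property of the path, not the word); this holds because the subpath travels from outside $\ringcl(\ringcomp)$ to inside $\ringcl^{\Lambda-1}(\ringcomp)$ and must therefore cross the level-$\Lambda$ isolation component. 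Second, in case (B) the existence of an isolation passage is not immediate: you must argue that between a non-isolation bundle of level $\lambda_1$ and one of level $\lambda_2>\lambda_1$ the subpath crosses $\disccomp_{?,\lambda_2}$ from one face to the other. The paper's appeal to Theorem~\ref{thm:oscillators} sidesteps this topological check entirely.
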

\begin{proof}
Each word $u_{i,j}$ is a potential spiral due to the properties promised
by Corollary \ref{cor:bwdecomp} and the fact that they are subwords
of bundle words of paths in $G$. Moreover, Corollary \ref{cor:bwdecomp}
as well as the definitions of $\rho_{i,j}$ and $\rho'_{i,j}$ directly
imply Claims \ref{cl:sol-to-pbw:rho}, \ref{cl:sol-to-pbw:rho2}.
Claim \ref{cl:sol-to-pbw:rho-rho2} follows directly from the definitions.
What remains is to show Claim \ref{cl:sol-to-pbw:rhobound}.

To this end, fix a choice of $1 \leq i \leq k$ and $1 \leq j \leq s(i)$
and assume that $\rho_{i,j} = r_{i,j}$. If $r_{i,j} \leq \smallexp$, the claim
is obvious, so let $r_{i,j} > \smallexp$ which, by the definition of $\rho_{i,j}$,
implies that $u_{i,j}$ is not a potential long spiral. 
By the definition of a potential long spiral, $u_{i,j}$ contains at least
one arc of bundle of level $\Lambda-1$ or lower. 
If $u_{i,j}$ contains at least one normal bundle, then the word
$u_{i,j}^{r_{i,j}}$ (which is a subword of $\bundleword(P_i)$)
  contains at least $r_{i,j}-1$ pairwise equal disjoint subwords, each corresponding
 to an isolation passage or a ring visitor. By Theorem \ref{thm:isolation-passages}
 and Corollary \ref{cor:no-visitors}, $r_{i,j} \leq 4|\bundleset|^2 k^2 + 1$, as desired.

In the other case, assume that all bundles of $u_{i,j}$ are ring bundles of 
some ring component $\ringcomp$, and the subpath of $P_i$ that corresponds
to the bundle word $u_{i,j}^{r_{i,j}}$ is completely contained in
$\ringcl(\ringcomp)$. If $u_{i,j}$ contains bundle arcs from non-isolation bundles of different levels,
then a situation forbidden by Theorem \ref{thm:oscillators}
appears on the path $P_i$. This completes the proof of Claim \ref{cl:sol-to-pbw:rhobound}.
\end{proof}

Lemma \ref{lem:sol-to-pbw} allows us to perform the following branching step.
\begin{lemma}\label{lem:bpwguess}
Let $(G,\decomp,\bundleset)$ be a bundled instance of isolation $(\Lambda,d)$ where $\Lambda \geq 2$ and $d \geq \max(2k,f(k,k)+4)$,
where $f(k,t)=2^{O(kt)}$ is the bound on the type-$t$ bend promised by Lemma \ref{lem:bendbound}.
Then in $O(2^{O(k |\bundleset|^2 \log |\bundleset|)} |G|^{O(1)})$ time one can enumerate a set of at most
$2^{O(k|\bundleset|^2 \log |\bundleset|)}$ sequences $(\pi_i)_{i=1}^k$ of partial bundle words, such that for any minimal solution $(P_i)_{i=1}^k$ to \probshort{}
there exists a generated sequence $(\pi_i)_{i=1}^k$ such that $\bundleword(P_i)$
is consistent with 
$\pi_i$ for all $1 \leq i \leq k$.
\end{lemma}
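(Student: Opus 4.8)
The plan is to enumerate partial bundle words by guessing, for each path $P_i$ of a minimal solution, the structure promised by Corollary~\ref{cor:bwdecomp}: the length $s(i) \leq 2|\bundleset|$, the potential spirals $u_{i,1}, \ldots, u_{i,s(i)}$, and for each exponent $r_{i,j}$ either its exact value (if it is small, i.e.\ at most $\smallexp$, or if $u_{i,j}$ is not a potential long spiral) or the symbol $\unknown$ (otherwise). By Lemma~\ref{lem:sol-to-pbw}, this data yields a partial bundle word $\pi_i = u_{i,1}^{\rho_{i,1}} \cdots u_{i,s(i)}^{\rho_{i,s(i)}}$ consistent with $\bundleword(P_i)$, so we only need to argue that the number of choices is bounded and that each can be enumerated in the claimed time.

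First I would bound the number of choices for a single $\pi_i$. There are at most $2|\bundleset|$ factors; each potential spiral $u_{i,j}$ is a word over $\bundleset$ with no repeated letter, so there are at most $|\bundleset|! \le |\bundleset|^{|\bundleset|} = 2^{O(|\bundleset|\log|\bundleset|)}$ choices for it (and one can enumerate potential spirals directly by checking the component-adjacency condition of the definition). For each exponent $\rho_{i,j}$ we first decide whether it is $\unknown$; if not, by Claim~\ref{cl:sol-to-pbw:rhobound} of Lemma~\ref{lem:sol-to-pbw} it is a positive integer bounded by $4|\bundleset|^2 k^2 + 1$, so there are $O(|\bundleset|^2 k^2)$ choices. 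Multiplying over the $\le 2|\bundleset|$ factors gives $\left(2^{O(|\bundleset|\log|\bundleset|)} \cdot O(|\bundleset|^2 k^2)\right)^{2|\bundleset|} = 2^{O(|\bundleset|^2 \log|\bundleset|)}$ choices for one $\pi_i$ (absorbing the $\log(|\bundleset|^2 k^2) = O(\log|\bundleset|)$ term, using $2k \le |\bundleset|+1$). Taking the product over $i = 1, \ldots, k$ yields $2^{O(k|\bundleset|^2\log|\bundleset|)}$ sequences $(\pi_i)_{i=1}^k$, matching the bound. The enumeration clearly runs in time proportional to the output size times a polynomial in $|G|$ (we need $|G|^{O(1)}$ to list potential spirals and verify the structural conditions of a partial bundle word).

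Second, I would verify correctness: given any minimal solution $(P_i)_{i=1}^k$, one of the enumerated sequences works. Apply Corollary~\ref{cor:bwdecomp} to each $\bundleword(P_i)$ to obtain $u_{i,1}^{r_{i,1}} \cdots u_{i,s(i)}^{r_{i,s(i)}}$ with $s(i) \le 2|\bundleset|$, and define $\rho_{i,j}$ exactly as in Lemma~\ref{lem:sol-to-pbw}. By Claim~\ref{cl:sol-to-pbw:rho} of that lemma, $u_{i,1}^{\rho_{i,1}} \cdots u_{i,s(i)}^{\rho_{i,s(i)}}$ is a partial bundle word consistent with $\bundleword(P_i)$. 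It remains to observe that this particular partial bundle word is among the enumerated ones: its length is $\le 2|\bundleset|$; each $u_{i,j}$ is a potential spiral (it has no repeated letter by Corollary~\ref{cor:bwdecomp} and satisfies the adjacency condition because it is a subword of an actual path's bundle word), hence appears in our enumeration of potential spirals; each $\rho_{i,j}$ is either $\unknown$ (a valid guess, and by definition $u_{i,j}$ is then a potential long spiral, consistent with the definition of partial bundle word) or a positive integer which by Claim~\ref{cl:sol-to-pbw:rhobound} lies in the range $\{1, \ldots, 4|\bundleset|^2k^2+1\}$ that we enumerate. The nesting and ``appears-in-all-intermediate-factors'' conditions of a partial bundle word hold because they hold for the decomposition from Corollary~\ref{cor:bwdecomp} (they are part of the spiral decomposition definition). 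So the sequence $(\pi_i)_{i=1}^k$ with $\pi_i = u_{i,1}^{\rho_{i,1}} \cdots u_{i,s(i)}^{\rho_{i,s(i)}}$ is enumerated and $\bundleword(P_i)$ is consistent with $\pi_i$ for all $i$.

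The main obstacle here is not the counting, which is mechanical, but making sure the counting argument actually invokes the right a~priori bounds: the fact that a ``known'' exponent is polynomially bounded relies crucially on Claim~\ref{cl:sol-to-pbw:rhobound} of Lemma~\ref{lem:sol-to-pbw}, whose proof in turn rests on Theorem~\ref{thm:isolation-passages} (bound on isolation passages), Corollary~\ref{cor:no-visitors} (no ring visitors), and Theorem~\ref{thm:oscillators} (no oscillators) — all of which require the hypothesis $d \ge \max(2k, f(k,k)+4)$ that is carried in the statement. One should double-check that every potential long spiral really does get assigned $\unknown$ only when it genuinely is a potential long spiral (so that $\pi_i$ is a legal partial bundle word), and conversely that the branching over ``is this exponent $\unknown$ or not'' is exhaustive — i.e.\ we do try the guess matching the solution. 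Both follow directly from the way $\rho_{i,j}$ is defined in Lemma~\ref{lem:sol-to-pbw}, so the proof is a short bookkeeping argument once that lemma is in hand.
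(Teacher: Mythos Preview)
Your proof is correct and follows essentially the same approach as the paper: invoke Lemma~\ref{lem:sol-to-pbw} to reduce to a counting problem, then bound the number of partial bundle words by enumerating the at most $2|\bundleset|$ repetition-free words $u_j$ (each with $\le |\bundleset|\cdot|\bundleset|!$ choices) and the exponents $\rho_j$ (each either $\unknown$ or bounded via Claim~\ref{cl:sol-to-pbw:rhobound}). Your write-up is in fact slightly more explicit than the paper's in verifying that the partial bundle word axioms hold for the enumerated data.
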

\begin{proof}
By Lemma \ref{lem:sol-to-pbw}, it is sufficient to prove
that for any $\beta > 1$
there are at most $2^{O(|\bundleset|^2 \log |\bundleset| + |\bundleset|\log \beta)}$
partial bundle words $u_1^{\rho_1} u_2^{\rho_2} \ldots u_s^{\rho_s}$
with $s \leq 2|\bundleset|$ and $\rho_j \leq \beta$ whenever $\rho_j \neq \unknown$
(recall that $2k \leq |\bundleset|-1$) and they can be generated
in $O(2^{O(|\bundleset|^2 \log |\bundleset| + |\bundleset|\log \beta))} |G|^{O(1)})$ time.

To this end, note that
there are less than $|\bundleset| \cdot |\bundleset|!$ words $u$ over alphabet $\bundleset$ that do not contain a symbol twice, and can be enumerated with polynomial delay.
As we require $s \leq 2|\bundleset|$,
there are only $2^{O(|\bundleset|^2 \log |\bundleset|)}$ choices for the values of $s$
and the strings $u_j$, $1 \leq j \leq s$.
Moreover, there are at most $\beta+2$ choices of each value $\rho_j$.
Note that one can verify in polynomial time if a given choice of $s$, $u_j$s and $\rho_j$s yields indeed a partial bundle word.
\end{proof}

Lemma \ref{lem:bpwguess} enables us to guess partial bundle words that are consistent with bundle words
of a minimal solution to the given bundled instance $(G,\decomp,\bundleset)$.
Thus, henceforth we assume that, apart from a bundled instance $(G,\decomp,\bundleset)$, we are given a set $(\pi_i)_{i=1}^k$ of partial bundle words
and we look for a minimal solution $(P_i)_{i=1}^k$ such that $\bundleword(P_i)$ is consistent with $\pi_i$ for each $1 \leq i \leq k$.
By Lemma \ref{lem:bpwguess}, there are at most $2^{O(k|\bundleset|^2 \log |\bundleset|)}$ subcases to consider.

Our goal now is to show a branching procedure that yields a bounded
in $|\decomp|$, $k$ and $|\bundleset|$ number of subcases of evaluating all values $\rho_{i,j}$
for which $\rho_{i,j} = \unknown$ but $u_{i,j}$ does not contain any
level-0 ring bundle.
In other words, we aim to produce a bounded number of semi-complete partial bundle words
and seek for minimal solutions consistent with one of them.

To achieve this goal, we analyze exponents $\rho_{i,j}=\unknown$ one by one, and prove that there is only a bounded number of choices for $r_{i,j}$, given the guesses made so far.

Formally, let $\pi_i = u_{i,1}^{\rho_{i,1}} u_{i,2}^{\rho_{i,2}} \ldots u_{i,s(i)}^{\rho_{i,s(i)}}$. Assume that $1 \leq \iota \leq k$ and $1 \leq \eta \leq s(\iota)$ are chosen
that $\rho_{\iota,\eta} = \unknown$, $u_{\iota,\eta}$ does not contain any level-0 ring bundle, but for any $i$ and $j$ such that $|u_{i,j}| < |u_{\iota,\eta}|$, if the exponent $\rho_{i,j} = \unknown$ then $u_{i,j}$ contains least one level-0 ring bundle.
In other words, we guess the exponents $\rho_{\iota,\eta}$, starting from the shorter strings $u_{\iota,\eta}$.

In one step, we identify a set of possible values of for the exponent $r_{\iota,\eta}$, whose size is bounded as a function of $|\decomp|$, $|\bundleset|$ and $k$,
and branch into a number of subcases, replacing the value of $\rho_{\iota,\eta}$ with one of the elements of the identified set.

Note that a bundle word of a path $P_i$ needs to start with the bundle that consists of the arc incident to the first terminal of the $i$-th terminal pair,
and ends with a bundle that consists of the arc incident to the second terminal of the $i$-th terminal pair. Moreover, these arcs appear only once in the bundle
word of $P_i$. Therefore, we may assume that for each $i$, $\rho_{i,1} = \rho_{i,s(i)} = 1$ and the corresponding words $u_{i,1}$ and $u_{i,s(i)}$ start and end respectively
with the appropriate bundle; if that is not the case, we may terminate the current branch.

Let $(P_i)_{i=1}^k$ be a (hypothetical) solution to \probshort{} on $G$, such that $\bundleword(P_i)$ is consistent with $\pi_i$ for all $1 \leq i \leq k$,
i.e., let $\bundleword(P_i) = u_{i,1}^{r_{i,1}} u_{i,2}^{r_{i,2}} \ldots u_{i,s(i)}^{r_{i,s(i)}}$.
As $\rho_{\iota,\eta} = \unknown$, $r_{\iota,\eta} > \smallexp$ and $u_{\iota,\eta}^{r_{\iota,\eta}}$ contains a spiral $u_{\iota,\eta} B$, where $B$ is the first symbol of $u_{\iota,\eta}$.
Note that $u_{\iota,\eta}$ needs to be a potential long spiral in the bundle graph
(as $\rho_{\iota,\eta} = \unknown$), and it does not contain any level-$0$ ring bundles. In particular, we may use Corollary \ref{cor:bwrep}
and speak of bundles inside and outside $u_{\iota,\eta}$.

Directly from Corollary \ref{cor:bwrep} we have the following.
\begin{lemma}\label{lem:bw-guess:spiral}
Consider a term $u_{\iota,\zeta}^{r_{\iota,\zeta}}$ in $\bundleword(P_\iota)$ for some $1 \leq \zeta \leq s(\iota)$ where $u_{\iota,\zeta}$ does not contain any level-$0$ ring bundles
and $r_{\iota,\zeta} > \smallexp$.
Then $u_{\iota,\zeta}$ is a potential long spiral and the terminals of the $\iota$-th pair lie on different sides
of the closed walk $u_{\iota,\zeta}$ in the bundle graph $(\decomp,\bundleset)$.
Moreover, for any $1 \leq i \leq k$ either:
\begin{enumerate}
\item both terminals of the $i$-th pair
and all bundles of $\bundleword(P_i)$ lie inside the closed walk $u_{\iota,\zeta}$;
\item both terminals of the $i$-th pair
and all bundles of $\bundleword(P_i)$ lie outside the closed walk $u_{\iota,\zeta}$;
\item the terminals of the $i$-th pair lie on different sides of the closed
walk $u_{\iota,\zeta}$ and
the starting terminal of the $i$-th pair lies inside $u_{\iota,\zeta}$
if and only if the starting terminal of the $\iota$-th pair does.
\end{enumerate}
\end{lemma}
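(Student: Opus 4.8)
The plan is to apply Corollary~\ref{cor:bwrep} repeatedly, once for the path $P_\iota$ itself and once for each other path $P_i$, keeping track of which side of the spiral cut each relevant object lies on. First I would unpack the hypothesis: since $u_{\iota,\zeta}^{r_{\iota,\zeta}}$ with $r_{\iota,\zeta} > \smallexp \ge 1$ is a subword of $\bundleword(P_\iota)$, the word $u_{\iota,\zeta}^{r_{\iota,\zeta}}$ contains $u_{\iota,\zeta} B$ as a subword, where $B$ is the first symbol of $u_{\iota,\zeta}$; thus $u_{\iota,\zeta}$ is a spiral in $\bundleword(P_\iota)$. Because $\rho_{\iota,\zeta} = \unknown$ in a partial bundle word forces $u_{\iota,\zeta}$ to be a potential long spiral (and by assumption it contains no level-$0$ ring bundle), the associated spiral cut $\gamma$ has the property that the notions of ``inside $u_{\iota,\zeta}$'' and ``outside $u_{\iota,\zeta}$'' are well-defined for every bundle not on $u_{\iota,\zeta}$, exactly the setting in which the last sentence of Corollary~\ref{cor:bwrep} applies. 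Write $\bundleword(P_\iota) = x\, u_{\iota,\zeta} B\, y$ using this spiral. By Corollary~\ref{cor:bwrep}, $x$ uses only bundles of $u_{\iota,\zeta}$ or on one fixed side, and $y$ only bundles of $u_{\iota,\zeta}$ or on the other side; hence the first terminal of the $\iota$-th pair (which is the very first bundle of $\bundleword(P_\iota)$, a normal single-arc bundle not on $u_{\iota,\zeta}$ by the reduction that $\rho_{\iota,1}=1$ with $u_{\iota,1}$ starting at the terminal arc) lies strictly on one side and the second terminal strictly on the other. That gives the ``terminals of the $\iota$-th pair lie on different sides'' claim.

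Next I would handle an arbitrary index $i$. Fix the spiral cut $\gamma$ of $u_{\iota,\zeta} B$ on $P_\iota$ as above; since $(P_i)_{i=1}^k$ is a solution, $P_i$ is vertex-disjoint from $P_\iota$, so by Lemma~\ref{lem:spiral-split} $P_i$ crosses $\gamma$ at most once, and if it does, it crosses in the same direction as $P_\iota$. Apply the ``moreover'' part of Corollary~\ref{cor:bwrep} to this disjoint path $P_i$: there is a partition $\bundleword(P_i) = u_1' u_2'$ (one part possibly empty) with $u_1'$ using only bundles of $u_{\iota,\zeta}$ or on the ``outside'' side and $u_2'$ using only bundles of $u_{\iota,\zeta}$ or on the ``inside'' side, consistent with the partition of $x$ and $y$. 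If $P_i$ does not cross $\gamma$ at all, then $\bundleword(P_i)$ lies entirely on one side (the one $u_1'$ or $u_2'$ is non-trivial on) together with possibly bundles of $u_{\iota,\zeta}$ — but a path lying on $u_{\iota,\zeta}$ entirely would be a closed walk, impossible since it connects a terminal pair — so in fact both of its terminal bundles, hence both terminals, lie strictly on the same side: this is case (1) or case (2). If $P_i$ does cross $\gamma$ exactly once, then $u_1'$ and $u_2'$ are both nonempty, its two endpoint bundles lie on opposite sides, so the terminals of the $i$-th pair lie on different sides: this is case (3). Finally, for the orientation assertion in case (3), I would use the ``same direction'' clause of Lemma~\ref{lem:spiral-split}: $P_i$ traverses $\gamma$ from inside to outside (resp. outside to inside) iff $P_\iota$ does, so the starting endpoint of $P_i$ lies inside iff the starting endpoint of $P_\iota$ lies inside. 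One has to be a bit careful that ``inside/outside $u_{\iota,\zeta}$'' as used in the statement refers to the combinatorial sides determined by the closed walk in $(\decomp,\bundleset)$, which by the last line of Corollary~\ref{cor:bwrep} coincide with the two sides of the spiral cut $\gamma$ precisely because $u_{\iota,\zeta}$ has no ring bundles — this equivalence is where I should spell things out.

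The main obstacle I anticipate is the bookkeeping between the three ``side'' notions — inside/outside of the spiral cut $\gamma$ as a curve in the plane, inside/outside of the spiral $u_{\iota,\zeta}B$ read purely from bundle words (Corollary~\ref{cor:bwrep}), and inside/outside of the closed walk $u_{\iota,\zeta}$ in the bundle graph — and making sure the identifications are canonical so that ``same side'' for different paths is meaningful. The clean way is: (i) note $u_{\iota,\zeta}$ without ring bundles means $\gamma$ (which may pass through the interiors of the disc components $u_{\iota,\zeta}$ visits and along the bundle $B$) partitions the plane into exactly two regions, one containing the inner face and one the outer face in the obvious sense; (ii) invoke the last sentence of Corollary~\ref{cor:bwrep} to replace ``inside/outside $\gamma$'' by ``inside/outside $u_{\iota,\zeta}$'' uniformly for $P_\iota$ and all $P_i$; (iii) observe the bundle-graph closed walk $u_{\iota,\zeta}$ is embedded (by the non-standard embedding of $\bundlegraph(G,\decomp,\bundleset)$) as a closed curve whose two sides are exactly these two regions, so the word-combinatorial sides and the bundle-graph sides agree. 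After this identification the three cases and the orientation statement drop out of Corollary~\ref{cor:bwrep} and Lemma~\ref{lem:spiral-split} with no further work. The secondary point to check is that the very first and very last symbols of $\bundleword(P_i)$ really are bundles \emph{not} lying on $u_{\iota,\zeta}$ — this holds because those bundles are single-arc bundles incident to degree-one terminals, which by definition of a bundled instance sit in their own singleton disc components and cannot appear inside a spiral (no spiral uses a terminal-incident arc, as noted right after the definition of spiral in a bundle word). With that, each terminal is unambiguously assigned a strict side.
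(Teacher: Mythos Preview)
Your approach is correct and is precisely what the paper does: it states the lemma with the one-line justification ``Directly from Corollary~\ref{cor:bwrep} we have the following,'' and your write-up simply unpacks that reference together with Lemma~\ref{lem:spiral-split} in the expected way. One small remark: you justify ``$u_{\iota,\zeta}$ is a potential long spiral'' via $\rho_{\iota,\zeta}=\unknown$, but the lemma's hypothesis is only $r_{\iota,\zeta}>\smallexp$; in the paper this conclusion is really only used (and only cleanly valid) for the specific index $\zeta=\eta$ under discussion, where indeed $\rho_{\iota,\eta}=\unknown$ --- the trichotomy itself, which is what is actually needed downstream, requires only that $u_{\iota,\zeta}$ has no level-$0$ ring bundle (hence all its bundles have disc-component endpoints), and your argument for that part stands on its own.
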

Let $I^\to_\zeta$ be the set of those indices $i$ for which $P_i$ satisfies the last option in Lemma \ref{lem:bw-guess:spiral}
for the term $u_{\iota,\zeta}^{r_{\iota,\zeta}}$.

Let $u_{i,j}^{r_{i,j}}$ be a term of the bundle word decomposition of $P_i$
where $r_{i,j} \geq 2$ and $u_{i,j}$ does not contain any level-$0$ ring bundles. Let $B_{i,j,1}$ be the last bundle
on $\bundleword(P_i)$ that lies on the same side of $u_{i,j}$ as the starting terminal
of the $i$-th pair, and $B_{i,j,2}$ be the first bundle of $\bundleword(P_i)$
that lies on the same side of $u_{i,j}$ as the ending terminal of the $i$-th pair.
Let $u_{i,\eta(i,j,1)}$ be the word that contains the last occurrence of $B_{i,j,1}$
on $\bundleword(P_i)$ and $u_{i,j,\eta(i,2)}$ be the word that contains
the first occurrence of $B_{i,j,2}$ on $\bundleword(P_{i})$.
Let $P_{i,j}$ be the subpath of $P_i$ between the ending point of the arc corresponding
to the last occurrence of $B_{i,j,1}$ and the starting point of the arc corresponding
to the first occurrence of $B_{i,j,2}$ in $\bundleword(P_i)$.

\begin{figure}[h!]
\begin{center}
\includegraphics[width=0.8\textwidth]{fig-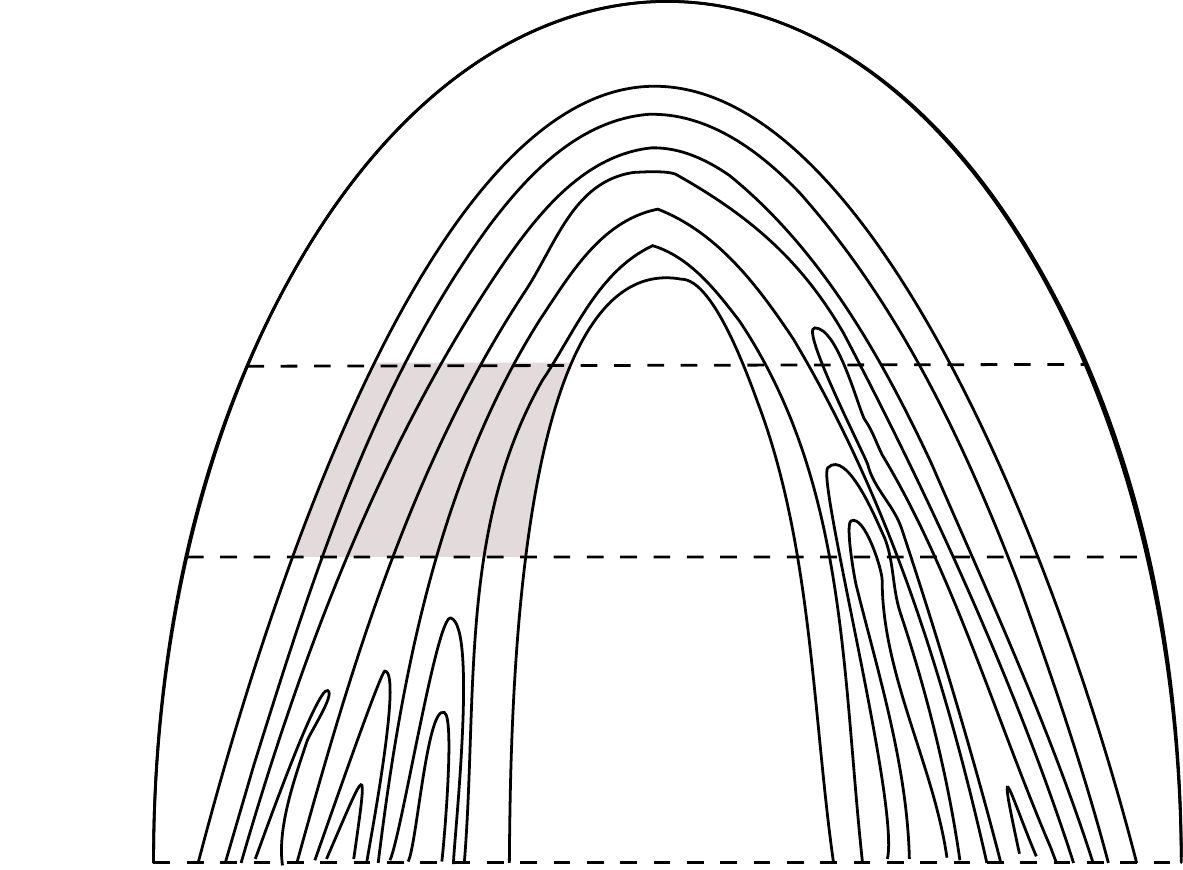}
\caption{A nontrivial situation around spiral $u_{i,j}^{r_{i,j}}$ on $P_i$. Path $P_i$ first spirals two times using four bundles (red spiral), then makes the $u_{i,j}^{r_{i,j}}$ spiral by spiraling two times using six bundles (pink spiral), and finally makes one turn of a spiral using three bundles (blue spiral). Note that the set of bundles used in the red spiral and in the blue spiral is a subset of the set of bundles used in the pink spiral, which we aim to measure. The crucial observation of this section is that these sets of bundles must be in fact {\bf{proper}} subsets of the set used by the pink spiral.}
\label{fig:nontrivial-measuring}
\end{center}
\end{figure}

\begin{lemma}\label{lem:bw-guess:etas}
Let $u_{i,j}^{r_{i,j}}$ be a term of the bundle word decomposition of $P_i$
where $r_{i,j} \geq 2$ and $u_{i,j}$ does not contain any level-$0$ ring bundles.
Then $B_{i,j,1}$ occurs on $\bundleword(P_i)$ before $B_{i,j,2}$
and $\eta(i,j,1) < j < \eta(i,j,2)$.
Moreover, for any $j' \neq j$, $\eta(i,j,1) < j' < \eta(i,j,2)$
we have $|u_{i,j'}| < |u_{i,j}|$.
\end{lemma}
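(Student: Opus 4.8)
\textbf{Proof plan for Lemma~\ref{lem:bw-guess:etas}.}

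The plan is to analyze the spiral cut $\gamma$ associated with the spiral $u_{i,j}B$ (where $B$ is the first symbol of $u_{i,j}$) and the corresponding subpath of $P_i$, and to argue topologically about where the various bundles of $\bundleword(P_i)$ can live. Since $u_{i,j}$ contains no level-$0$ ring bundle and $r_{i,j}\ge 2$, by Lemma~\ref{lem:bw-guess:spiral} the word $u_{i,j}$ is a potential long spiral, the terminals of the $i$-th pair lie on opposite sides of the closed walk $u_{i,j}$, and (using Corollary~\ref{cor:bwrep}) we may speak consistently of bundles ``inside'' and ``outside'' $u_{i,j}$. First I would invoke Corollary~\ref{cor:bwrep} to the spiral $u_{i,j}B$: writing $\bundleword(P_i)=x\,(u_{i,j}B)\,y$ for an appropriate occurrence, the prefix $x$ uses only bundles from $u_{i,j}$ or from one fixed side, and the suffix $y$ uses only bundles from $u_{i,j}$ or from the other side. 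By definition, $B_{i,j,1}$ is the last bundle of $\bundleword(P_i)$ on the start-terminal side, hence it lies in $x$, and $B_{i,j,2}$ is the first bundle on the end-terminal side, hence it lies in $y$; this already gives that $B_{i,j,1}$ occurs before $B_{i,j,2}$ and that $\eta(i,j,1)\le j \le \eta(i,j,2)$. To get strictness $\eta(i,j,1)<j<\eta(i,j,2)$, I would note that $B_{i,j,1}$ and $B_{i,j,2}$ cannot themselves occur inside the term $u_{i,j}^{r_{i,j}}$: a bundle appearing in $u_{i,j}$ lies ``on the spiral'' rather than strictly on one side, so it cannot be the last (resp.\ first) bundle on a fixed side unless it appears again outside the spiral, and in any case the occurrence witnessing $\eta(i,j,1)$ (the last occurrence of $B_{i,j,1}$) must be strictly before the block $u_{i,j}^{r_{i,j}}$ and the occurrence witnessing $\eta(i,j,2)$ strictly after it; hence $\eta(i,j,1)<j<\eta(i,j,2)$.

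For the second assertion — that every intermediate term $u_{i,j'}$ with $\eta(i,j,1)<j'<\eta(i,j,2)$, $j'\neq j$, satisfies $|u_{i,j'}|<|u_{i,j}|$ — the key step is a parity/winding argument combined with Lemma~\ref{lem:rep-perm} and Lemma~\ref{lem:bwrep:spiraling-ring}. Any such intermediate term $u_{i,j'}^{r_{i,j'}}$ is a subpath of $P_i$ that is ``trapped'' between the last occurrence of $B_{i,j,1}$ and the first occurrence of $B_{i,j,2}$, i.e.\ on the portion of $P_i$ that traverses the spiraling ring $A_R$ associated with $u_{i,j}^{r_{i,j}}$ (with first/last arcs possibly on $B$). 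By Lemma~\ref{lem:bwrep:spiraling-ring}(1), the bundle word of any such contained subpath is a subword of $u_{i,j}^{r'}$ for some $r'$; in particular every bundle occurring in $u_{i,j'}$ already occurs in $u_{i,j}$, so $u_{i,j'}$ is built from a sub-multiset of the letters of $u_{i,j}$, giving $|u_{i,j'}|\le|u_{i,j}|$. I would then rule out equality: if $|u_{i,j'}|=|u_{i,j}|$ then $u_{i,j'}$ is a permutation of $u_{i,j}$, and since $\bundleword(P_i)$ is nonnested (Corollary~\ref{cor:bwrep}) and admits a \emph{valid} spiral decomposition, the adjacent distinct factors force $u_{i,j'}$ to be a cyclic shift of $u_{i,j}$ with $r_{i,j'}\ge 1$; but if $r_{i,j'}>1$ this contradicts Corollary~\ref{cor:bwdecomp} (two non-unit powers of cyclic shifts of a non-ring spiral cannot both appear), while if $r_{i,j'}=1$ then $u_{i,j'}$ and $u_{i,j}$ are a single cyclic shift apart inside a contained subword of $u_{i,j}^{r'}$ — which, when we expand $u_{i,j}^{r'}$, either merges with $u_{i,j}$ (contradicting that it is a separate factor of the decomposition) or forces a letter of $u_{i,j}$ to appear on both sides of $u_{i,j'}$ but not in $u_{i,j'}$, contradicting the nonnestedness. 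Hence $|u_{i,j'}|<|u_{i,j}|$.

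I expect the main obstacle to be the last piece: carefully justifying that a factor $u_{i,j'}$ squeezed inside the expansion of $u_{i,j}^{r_{i,j}}$ cannot be a permutation of $u_{i,j}$ that is not a proper ``sub-spiral.'' The clean way is probably to avoid re-deriving word combinatorics and instead argue topologically via spiral cuts: the spiral cut $\gamma'$ of the subpath with bundle word $u_{i,j'}B'$ (if $r_{i,j'}\ge 2$) or simply the position of $u_{i,j'}$ relative to the nested family of spiral cuts $\gamma$ of the $r_{i,j}$ copies of $u_{i,j}B$. Since $P_i$ itself does not cross its own spiral cuts except along the spiral (Lemma~\ref{lem:spiral-split}), the subpath realizing $u_{i,j'}$ must live strictly between two consecutive copies of the $u_{i,j}$-spiral, and Lemma~\ref{lem:bwrep:spiraling-ring} applied to that thin ring forces $u_{i,j'}$ to use each of the bundles of $u_{i,j}$ \emph{at most} as often as going once around, with a strict deficit unless it is the full loop $u_{i,j}$ up to cyclic shift — and the latter is excluded because $u_{i,j'}$ is, by definition of the spiral decomposition, a factor \emph{distinct} in its letter-support from the neighboring $u_{i,j}$ blocks (property~(3) in the definition of spiral decomposition / partial bundle word). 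Once this topological squeeze is set up, the length strict inequality drops out, and the first assertion $\eta(i,j,1)<j<\eta(i,j,2)$ is then immediate since $u_{i,\eta(i,j,1)}$ contains $B_{i,j,1}$ which does \emph{not} appear in $u_{i,j}^{r_{i,j}}$ (it is strictly on one side), and symmetrically for $\eta(i,j,2)$.
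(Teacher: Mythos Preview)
Your treatment of the first claim is fine and matches the paper: Corollary~\ref{cor:bwrep} applied to the spiral $u_{i,j}B$ gives immediately that the last occurrence of $B_{i,j,1}$ precedes the block $u_{i,j}^{r_{i,j}}$ and the first occurrence of $B_{i,j,2}$ follows it, so $\eta(i,j,1)<j<\eta(i,j,2)$.

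For the second claim, however, you are working much too hard and your argument has gaps. Invoking Lemma~\ref{lem:bwrep:spiraling-ring}, Lemma~\ref{lem:rep-perm} and Corollary~\ref{cor:bwdecomp} is overkill, and several steps do not go through as stated: Lemma~\ref{lem:rep-perm} needs a subpath with bundle word $u^3B$ (you only have $r_{i,j}\ge 2$), and your handling of the case $r_{i,j'}=1$ (``either merges with $u_{i,j}$ \ldots\ or forces a letter on both sides \ldots'') is hand-wavy and only addresses $j'$ adjacent to $j$, not arbitrary $j'$ in the range. You also never cleanly exclude the case that $u_{i,j'}$ is a permutation of $u_{i,j}$ that is \emph{not} a cyclic shift.

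The paper's proof avoids all of this by using nothing more than the defining combinatorics of a spiral decomposition. You already have (from Corollary~\ref{cor:bwrep} and the definitions of $B_{i,j,\alpha}$, $\eta(i,j,\alpha)$) that every symbol of $u_{i,j'}$ for $\eta(i,j,1)<j'<\eta(i,j,2)$ lies in the symbol set of $u_{i,j}$. Now use property~(4) of a spiral decomposition (each symbol's support is an interval of indices): for $\eta(i,j,1)<j'<j$, any symbol of $u_{i,j'}$ also occurs in $u_{i,j}$, hence in every $u_{i,j''}$ with $j'\le j''\le j$; in particular the symbol sets satisfy $\mathrm{sym}(u_{i,\eta(i,j,1)+1})\subseteq\cdots\subseteq\mathrm{sym}(u_{i,j})$. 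Property~(3) (consecutive factors differ in at least one symbol) then forces each inclusion to be strict, so $|u_{i,j'}|<|u_{i,j}|$ for $\eta(i,j,1)<j'<j$. The symmetric chain on the other side of $j$ handles $j<j'<\eta(i,j,2)$. No topology, no spiraling rings, no case analysis on $r_{i,j'}$ is needed.
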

\begin{proof}
The first claim follows directly from Corollary \ref{cor:bwrep}.
As for the second claim, by the properties of the bundle word decomposition,
if $\eta(i,j,1) + 1 < j$, there exists a symbol
$\bar{B}_{i,j,1}$ that appears in $u_{i,j}$ but not in $u_{i,j'}$
for any $\eta(i,j,1) < j' < j$. Thus, as for each such $j'$ the set of symbols of $u_{i,j'}$ is not only a subset of the set of symbols of $u_{i,j'}$, but also a proper subset, and consequently $|u_{i,j'}| < |u_{i,j}|$.
A symmetrical argument holds for the case $j' > j$
and a symbol $\bar{B}_{i,j,2}$ is missing from all words $u_{i,j'}$ for $j < j' < \eta(i,j,2)$.
\end{proof}

\begin{lemma}\label{lem:bw-guess:equalspiral}
Consider a term $u_{\iota,\zeta}^{r_{\iota,\zeta}}$ in $\bundleword(P_\iota)$ for some $1 \leq \zeta \leq s(\iota)$ where $u_{\iota,\zeta}$ does not contain any level-$0$ ring bundles
and $r_{\iota,\zeta} > \smallexp$. Let $B_{\iota,\zeta}$ be the first symbol of 
$u_{\iota,\zeta}$.
Then, for any $i \in I^\to_\zeta$
  \begin{enumerate}
  \item there exists a unique index $1 \leq j(i,\zeta) \leq s(i)$
  such that $r_{i,j(i,\zeta)} \geq 2$ and $u_{i,j(i,\zeta)}$ is a cyclic shift of
  $u_{\iota,\zeta}$;
  \item $|r_{i,j(i,\zeta)} - r_{\iota,\zeta}| \leq 3$.
  \end{enumerate}
\end{lemma}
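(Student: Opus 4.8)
\textbf{Proof plan for Lemma~\ref{lem:bw-guess:equalspiral}.}
The plan is to derive both claims from the spiraling-ring machinery of Section~\ref{sec:bundles-and-bundle-words}, together with the minimality of the solution via Lemma~\ref{lem:minsol-spiral} and the one-way spiral rerouting of Lemma~\ref{lem:onewayspiral}. Fix $i\in I^\to_\zeta$. By definition of $I^\to_\zeta$ (Lemma~\ref{lem:bw-guess:spiral}) the terminals of the $i$-th pair lie on opposite sides of the closed walk $u_{\iota,\zeta}$ in the bundle graph, so $\bundleword(P_i)$ must contain an occurrence of the spiral $u_{\iota,\zeta}B_{\iota,\zeta}$ (this is exactly Corollary~\ref{cor:bwrep} applied to a spiral on $P_\iota$: a path vertex-disjoint from $P_\iota$ whose endpoints straddle the spiral cut must cross it, hence its bundle word has bundles on both sides and therefore contains the connecting spiral). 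Since $r_{\iota,\zeta}>\smallexp=4\ge 4$, we may apply Corollary~\ref{cor:copy-word} to the family $\mathcal P=(P_1,\dots,P_k)$ with $P_0:=P_\iota$ and the subword $u_{\iota,\zeta}^{r_{\iota,\zeta}}B_{\iota,\zeta}$: it tells us that $P_i$, having endpoints on different sides of the spiral $u_{\iota,\zeta}B_{\iota,\zeta}$, falls in case~(2) of that corollary, so the bundle word decomposition of $\bundleword(P_i)$ contains exactly one term $v^q$ with $v$ a permutation of $u_{\iota,\zeta}$ and $q>1$, and moreover $v$ is a cyclic shift of $u_{\iota,\zeta}$ with $q\ge r_{\iota,\zeta}-2$. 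This term is $u_{i,j(i,\zeta)}^{r_{i,j(i,\zeta)}}$, giving the first claim and the lower bound $r_{i,j(i,\zeta)}\ge r_{\iota,\zeta}-2$. (To invoke Corollary~\ref{cor:copy-word} we need that each path of $\mathcal P$ starts and ends with an arc in a bundle not on $u_{\iota,\zeta}$; this holds because $u_{\iota,\zeta}$ contains no bundle incident to a terminal — it is part of a potential long spiral and by definition zooms/spirals exclude terminal-incident bundles — and the first/last arcs of each $P_i$ lie in such bundles.)

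For the upper bound $r_{i,j(i,\zeta)}\le r_{\iota,\zeta}+3$, the plan is a rerouting argument in the spiraling ring determined by $u_{\iota,\zeta}^{r_{\iota,\zeta}}$ on $P_\iota$. Take the spiraling ring $A_R$ associated with the subword $u_{\iota,\zeta}^{r_{\iota,\zeta}}$ of $\bundleword(P_\iota)$, with input and output borders $\gamma_1,\gamma_2$, and let $G_R$, $f_1$, $f_2$, $I\subseteq\{1,\dots,k\}$ be as in the statement of Lemma~\ref{lem:minsol-spiral} (so $I$ is the set of indices of paths meeting $G_R$). By Lemma~\ref{lem:spiral-split} every path $P_i$ with $i\in I$ crosses $A_R$ exactly once and in the same direction as $P_\iota$, and by Lemma~\ref{lem:bwrep:spiraling-ring}(3) the subpath $P_{i,*}:=P_i\cap G_R$ has bundle word $u_{\iota,\zeta}^{r_{\iota,\zeta}-1}B_{\iota,\zeta}$, hence winding number $r_{\iota,\zeta}-2$ with respect to the reference curve $B_{\iota,\zeta}$ in the dual (here we use that $u_{\iota,\zeta}$ contains no level-$0$ ring bundle, so all the components it visits are disc components and $A_R$ is genuinely a ring). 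Now suppose towards a contradiction that $r_{i,j(i,\zeta)}\ge r_{\iota,\zeta}+4$ for some $i\in I^\to_\zeta$; note $i\in I$ since $P_i$ enters $A_R$. The term $u_{i,j(i,\zeta)}^{r_{i,j(i,\zeta)}}$ of $\bundleword(P_i)$ corresponds to a subpath of $P_i$ spiraling $\ge r_{\iota,\zeta}+4$ times; its intersection with $G_R$ is $P_{i,*}$, but $P_{i,*}$ only spirals $r_{\iota,\zeta}-1$ times along $u_{\iota,\zeta}B_{\iota,\zeta}$, so the ``extra'' spiraling of $P_i$ happens outside $A_R$ — which means that when we look at the whole solution intersected with $A_R$, the path $P_\iota$ itself spirals $r_{\iota,\zeta}$ times while we could ask for fewer. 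Concretely, apply Lemma~\ref{lem:minsol-spiral} with $\iota$ as given: it asserts that for \emph{any} family of vertex-disjoint paths $(Q_i)_{i\in I}$ with $Q_i$ running from $f_1$ to $f_2$, each $\bundleword(Q_i)$ contains $u_{\iota,\zeta}^{r_{\iota,\zeta}-10}$ as a subword. We will instead use the contrapositive direction supplied by Lemma~\ref{lem:onewayspiral}: the paths $(P_{i,*})_{i\in I}$ form such a family with common (up to $\pm1$) winding number $r_{\iota,\zeta}-2$, and if some $i\in I$ had $r_{i,j(i,\zeta)}\ge r_{\iota,\zeta}+4$ we could build a competing family $(Q_i)_{i\in I}$ with the same endpoints but strictly smaller winding numbers, which by re-inserting into the solution would contradict the minimality clause used in the proof of Lemma~\ref{lem:minsol-spiral}. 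Since the roles of $u_{i,j(i,\zeta)}$ and $u_{\iota,\zeta}$ are symmetric — one could equally well run the spiraling-ring argument with the long spiral $u_{i,j(i,\zeta)}^{r_{i,j(i,\zeta)}}$ on $P_i$, noting $i\in I^\to_\zeta$ forces $\iota$ to straddle \emph{that} spiral too — we obtain $r_{\iota,\zeta}\le r_{i,j(i,\zeta)}+3$ as well, and together with the lower bound this yields $|r_{i,j(i,\zeta)}-r_{\iota,\zeta}|\le 3$.

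The main obstacle I expect is making the symmetric comparison fully rigorous: we must be careful that $r_{i,j(i,\zeta)}$ is large enough ($>\smallexp$, and indeed $\ge 10$) to legitimately invoke Lemma~\ref{lem:minsol-spiral} with the spiral $u_{i,j(i,\zeta)}$ playing the role of $u$, and to check that $u_{i,j(i,\zeta)}$, being a cyclic shift of $u_{\iota,\zeta}$, inherits the property of containing only disc-component bundles so that its spiraling ring is well-defined. If $r_{i,j(i,\zeta)}$ turns out to be small (say $\le 10$), then since $r_{\iota,\zeta}>\smallexp$ the lower bound $r_{i,j(i,\zeta)}\ge r_{\iota,\zeta}-2$ already pins $r_{\iota,\zeta}$ to be small, and the bound $|r_{i,j(i,\zeta)}-r_{\iota,\zeta}|\le 3$ follows from a direct comparison of the two spiraling rings (each contained in the other up to an $O(1)$ slack, by Lemma~\ref{lem:bwrep:spiraling-ring}). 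The other point requiring care is the winding-number bookkeeping: the reference curve used for $A_R$ is the bundle $B_{\iota,\zeta}$ in the dual, and $|E(P)\cap E(W^*)|$ for a path $P$ equals the number of arcs of $B_{\iota,\zeta}$ on $P$, which is exactly the number of occurrences of $B_{\iota,\zeta}$ in $\bundleword(P)$; translating between ``number of spirals'' ($r$), ``occurrences of $B_{\iota,\zeta}$'' (roughly $r$), and ``winding number'' (roughly $r$) costs only additive constants, which is why the final constant in the lemma is $3$ rather than $0$. Once these translations are fixed, the rerouting via Lemma~\ref{lem:onewayspiral} (which gives a $\pm6$ slack) combined with the $r-1$ versus $r$ offset from Lemma~\ref{lem:bwrep:spiraling-ring}(3) and Observation~\ref{obs:pm1} yields the stated bound; one should simply double-check that the slack does not exceed $3$ after the bookkeeping, and if it does, it is harmless to relax the constant in the lemma statement or, more cleanly, to route the argument through Lemma~\ref{lem:minsol-spiral}'s built-in $-10$ as above, which was designed with exactly this application in mind.
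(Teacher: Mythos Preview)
Your proof of claim~(1) and the lower bound $r_{i,j(i,\zeta)}\ge r_{\iota,\zeta}-2$ via Corollary~\ref{cor:copy-word} with $P_0=P_\iota$ is correct and matches the paper. The problem is your route to the upper bound. The rerouting argument inside the spiraling ring $A_R$ of $P_\iota$ cannot work as you sketch it: you yourself observe that the ``extra'' spiraling of $P_i$ lies \emph{outside} $A_R$, so nothing you do inside $A_R$ constrains it; Lemma~\ref{lem:minsol-spiral} gives only a \emph{lower} bound on winding numbers of crossing families, and Lemma~\ref{lem:onewayspiral} lets you reroute paths within a fixed ring to match a target winding number, but neither produces an upper bound on $r_{i,j(i,\zeta)}$ from information confined to $A_R$. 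Minimality of the solution is in fact not needed at all for this lemma.

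The correct argument is exactly the symmetric one you mention at the end, but you have the resulting inequality backwards. Apply Corollary~\ref{cor:copy-word} a second time with $P_0=P_i$ and the subword $u_{i,j(i,\zeta)}^{r_{i,j(i,\zeta)}}B$; this is legitimate because, under the contradiction hypothesis $r_{i,j(i,\zeta)}>r_{\iota,\zeta}+3$, we have $r_{i,j(i,\zeta)}\ge r_{\iota,\zeta}+4\ge 9\ge 4$, and $u_{i,j(i,\zeta)}$ inherits the disc-only property as a cyclic shift of $u_{\iota,\zeta}$. Since $\iota$'s terminals straddle $u_{\iota,\zeta}$ (Lemma~\ref{lem:bw-guess:spiral}) and hence $u_{i,j(i,\zeta)}$ as well, case~(2) forces $\bundleword(P_\iota)$ to contain a term $v^q$ with $v$ a cyclic shift of $u_{i,j(i,\zeta)}$ and $q\ge r_{i,j(i,\zeta)}-2$; by Corollary~\ref{cor:bwdecomp} this term must be $u_{\iota,\zeta}^{r_{\iota,\zeta}}$, so $r_{\iota,\zeta}\ge r_{i,j(i,\zeta)}-2$, i.e.\ $r_{i,j(i,\zeta)}\le r_{\iota,\zeta}+2$, contradiction. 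This is the paper's proof verbatim: two applications of Corollary~\ref{cor:copy-word}, one in each direction, with no rerouting and no appeal to minimality. Your worries about the constant $3$ versus $6$ or $10$ dissolve once you drop the rerouting detour.
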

\begin{proof}
The first claim, as well as inequality $r_{i,j(i,\zeta)} \geq r_{\iota,\zeta} - 3$
follows from Corollary \ref{cor:copy-word} applied for $P_0 = P_\iota$ and $\mathcal{P}= \{P_0, P_i\}$.
On the other hand, assume for the sake of contradiction that $r_{\iota,\zeta}<r_{i,j(i,\zeta)}-3$. If we apply Corollary \ref{cor:copy-word}
to $P_0 = P_i$ and $\mathcal{P} = \{P_i,P_\iota\}$ (note that we can do it as $r_{i,j(i,\zeta)}\geq 5$), we infer that $\bundleword(P_\iota)$ must contain a term $v^q$ where $v$ is a cyclic shift of $u_{i,j(i,\zeta)}$ and $q\geq r_{i,j(i,\zeta)}-3$. As $\bundleword(P_\iota)$ already contains $u_{\iota,\zeta}^{r_{\iota,\zeta}}$, this implies that $v=u_{\iota,\zeta}$, $q=r_{\iota,\zeta}$, and, consequently $r_{\iota,\zeta} \geq r_{i,j(i,\zeta)}-3$. This is a contradiction.
\end{proof}

Note that the values of $I^\to_\zeta$, $j(i,\zeta)$ as well as
$B_{i,j,1}$, $B_{i,j,2}$, $\eta(i,j,1)$ and $\eta(i,j,2)$ for
valid values of $i$, $j$ and $\zeta$, are known,
given the partial bundle words $(\pi_i)_{i=1}^k$: the values of
$r_{i,j}$ for $\rho_{i,j} = \unknown$ are not necessary to compute these values.
If there is some inconsistency in the current branch (say, there is no unique candidate
for $j(i,\zeta)$), we terminate the current branch.

Moreover, $\bundleword(P_{i,j})$ is consistent with the partial bundle word
$$\hat{u}_{i,\eta(i,j,1)} u_{i,\eta(i,j,1)+1}^{\rho_{i,\eta(i,j,1)+1}} \ldots u_{i,\eta(i,j,2)-1}^{\rho_{i,\eta(i,j,2)-1}} \hat{u}_{i,\eta(i,j,2)},$$
where $\hat{u}_{i,\eta(i,j,1)}$ is the maximal suffix of $u_{i,\eta(i,j,1)}$ that does not contain
$B_{i,j,1}$ and $\hat{u}_{i,\eta(i,j,2)}$ is the maximal prefix of $u_{i,\eta(i,j,2)}$ that
does not contain $B_{i,j,2}$ (note that any of these two words may be empty).

Recall that we aim to guess $r_{\iota,\eta}$.
We claim that, if there exists a minimal solution $(P_i)_{i=1}^k$
such that $P_i$ is consistent with $\pi_i$ for each $1 \leq i \leq k$, then
for any $i \in I_\eta^\to$ and any $\eta(i,j(i,\eta),1) < j < \eta(i,j(i,\eta),2)$
the exponent $\rho_{i,j}$ may be equal to $\unknown$ only for $j = j(i,\eta)$.
Indeed, take any such $j$. As $u_{\iota,\eta}$ does not contain any level-0
ring bundles, $u_{i,j}$ does not contain as well. 
By Lemma \ref{lem:bw-guess:etas}, $|u_{i,j}| < |u_{\iota,\eta}|$ unless
$j = j(i,\eta)$. The claim follows from our chosen order of guessing of the
exponents $\rho_{i,j}$. Therefore we may safely terminate branches 
where the claim is not satisfied.

Moreover, by Lemma \ref{lem:bw-guess:equalspiral}, there exist integers
$(\alpha_i)_{i \in I_\eta^\to}$, $-3 \leq \alpha_i \leq 3$, $\alpha_\iota=0$,
and a single integer $\smallexp < \aleph \leq n$ such that
$\rho_{i,j(i,\eta)} = \aleph + \alpha_i$ for any $i \in I_\eta^\to$.
We branch into at most $7^{k-1}$ options, guessing the values of $\alpha_i$
for $i \in I_\eta^\to$. If for any $i \in I_\eta^\to$, the value $\rho_{i,j(i,\eta)}$ does not
equal $\unknown$, the value of $\rho_{\iota,\eta}$ is determined.
Thus, henceforth we assume that this is not the case.

To choose a good value for $\rho_{\iota,\eta}$, we construct a zoom and a pack of zoom passes.
Let $\zbundleset$ be the set of bundles that appear in $u_{\iota,\eta}$,
and let $\zdecomp$ be the set of components that contain endpoints of arcs of bundles
of $\zbundleset$. Clearly, $(\zdecomp,\zbundleset)$ is a zoom. As $u_{\iota,\eta}$
does not contain any level-$0$ ring bundles, $\zdecomp$ does not contain
any ring components, and hence is level-$0$ safe.

Fix $\smallexp < \aleph \leq n$. We are to construct a pack of zoom passes, parameterized
by $\aleph$.
Intuitively, we want to reproduce what happens in all the spirals
$u_{i,\zeta}^{r_{i,\zeta}}$ for $\eta(i,\eta,1) < \zeta < \eta(i,\eta,2)$,
so that the equivalent of the path $P_{\iota,\eta}$ will behave in our zoom instance
in a very similar way to the original (unknown to us) path $P_{\iota,\eta}$.

To this end, for each partial bundle word $\pi_i$ we say that a
pair $(a,b)$, $1 \leq a \leq b \leq s(i)$ is {\em{relevant}} if 
\begin{enumerate}
\item each $u_{i,j}$, $a \leq j \leq b$ contains only symbols that appear in $u_{\iota,\eta}$;\item at least one exponent $\rho_{i,j}$, $a \leq j \leq b$, does not equal $1$.
\end{enumerate}
A pair $(a,b)$ is a {\em{maximal relevant}} pair if neither $(a-1,b)$ nor $(a,b+1)$
is a relevant pair. Recall that $u_{i,1}$ and $u_{i,s(i)}$ contains bundles incident
to terminals, and thus $a > 1$ and $b < s(i)$ for each relevant pair $(a,b)$ and,
consequently, for any relevant pair $(a,b)$ there exists a unique maximal relevant pair
$(a',b')$ with $a' \leq a \leq b \leq b'$.

By definition, for any $i \in I_\eta^\to$, the pair
$(\eta(i,j(i,\eta),1)+1, \eta(i,j(i,\eta),2)-1)$ is a maximal relevant pair
in $\pi_i$.
We now note that, in the parts of the partial bundle words $\pi_i$
that correspond to relevant pairs, almost every exponent $\rho_{i,j}$ is known.

\begin{figure}[h!]
\begin{center}
\includegraphics[width=0.8\textwidth]{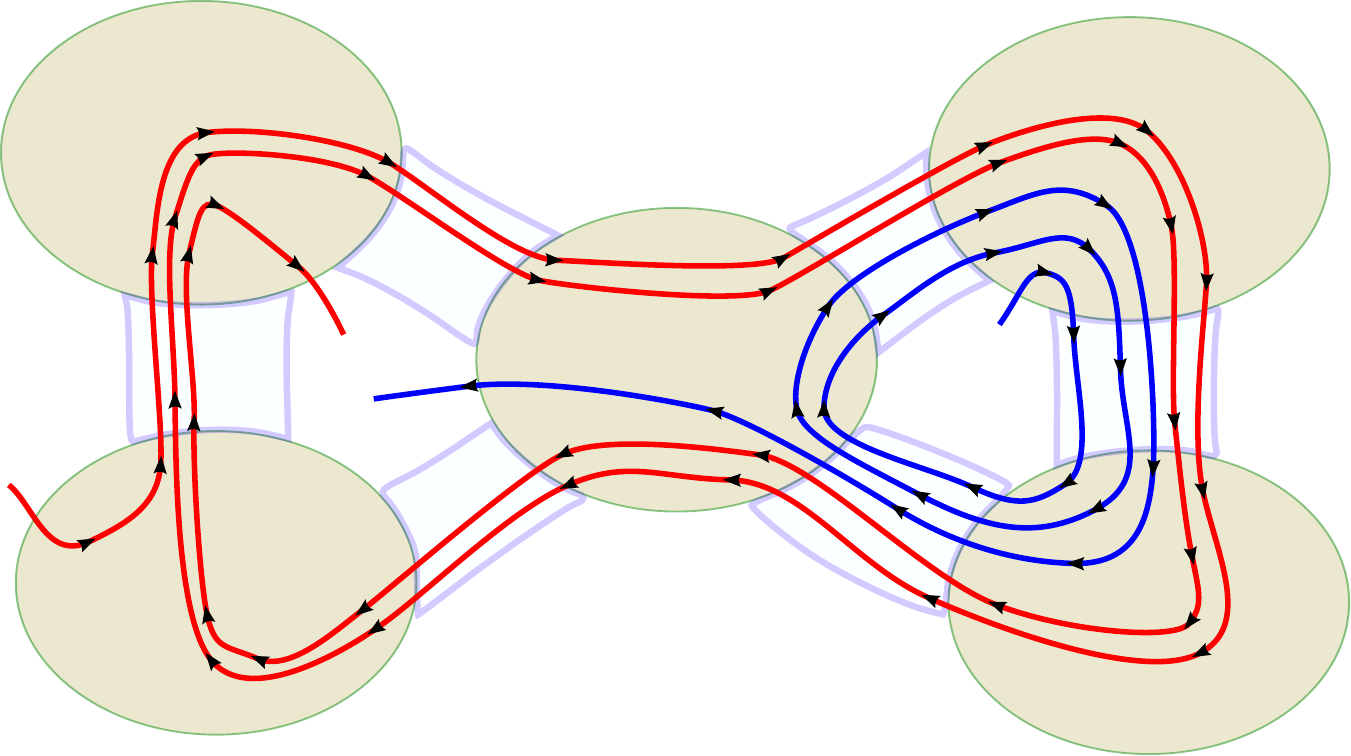}
\caption{An example showing that there may be relevant pairs that satisfy the second option in Lemma~\ref{lem:bw-guess:known-exp}. The path $P_1$ spirals multiple times using five bundles (the red spiral), while path $P_2$ spirals using a subset of these bundles (blue spiral) creating a relevant pair. Note that this happens even though both of the terminals of $P_2$ are enclosed by the red spiral.}
\label{fig:relevant}
\end{center}
\end{figure}

\begin{lemma}\label{lem:bw-guess:known-exp}
Let $(a,b)$ be a maximal relevant pair in $\pi_i$.
Then exactly one of the following holds:
\begin{enumerate}
\item $i \in I_\eta^\to$, $a = \eta(i,j(i,\eta),1)+1$, $b = \eta(i,j(i,\eta),2)-1$
(in particular, $\rho_{i,j} = \unknown$ for $a \leq j \leq b$
if and only if $j = j(i,\eta)$); or
\item for any $a \leq j \leq b$, $|u_{i,j}| < |u_{\iota,\eta}|$ or $\rho_{i,j}=1$
(in particular, $\rho_{i,j} \neq \unknown$ for $a \leq j \leq b$).
\end{enumerate}
\end{lemma}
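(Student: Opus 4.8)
The plan is to prove Lemma~\ref{lem:bw-guess:known-exp} by a careful case analysis on the structure of a maximal relevant pair $(a,b)$ in $\pi_i$, using the earlier structural results about spirals and bundle word decompositions. First I would observe that, by the definition of a relevant pair, all words $u_{i,j}$ for $a\le j\le b$ use only symbols appearing in $u_{\iota,\eta}$, and at least one exponent $\rho_{i,j}$ in that range is not $1$. Fix such a $j_0\in[a,b]$ with $\rho_{i,j_0}\neq 1$, and let $q_0 := r_{i,j_0}\ge 2$ be the corresponding (possibly unknown) true exponent in a hypothetical consistent solution $(P_i)_{i=1}^k$. Since $u_{i,j_0}$ contains no level-$0$ ring bundle (as its symbols are among those of $u_{\iota,\eta}$, which contains none), the word $u_{i,j_0}^{q_0}$ contains a spiral $u_{i,j_0}B$, and Corollary~\ref{cor:bwrep} lets us speak of the two sides of this spiral. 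By Lemma~\ref{lem:bw-guess:spiral} (applied with $\zeta$ being the relevant coordinate, i.e. to the spiral $u_{i,j_0}$) the terminals of the $i$-th pair lie on different sides of the closed walk $u_{i,j_0}$, or both lie on one side; I would split according to which of these holds.

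The first main branch is when the two terminals of the $i$-th pair lie on \emph{different} sides of the spiral $u_{i,j_0}$. Here I want to conclude we are in case~(1) of the lemma, i.e. $i\in I_\eta^\to$, $a=\eta(i,j(i,\eta),1)+1$, $b=\eta(i,j(i,\eta),2)-1$. The key is Lemma~\ref{lem:bw-guess:equalspiral}: if the $i$-th pair's terminals straddle $u_{i,j_0}$, then (after checking that $u_{i,j_0}$ must in fact be a cyclic shift of $u_{\iota,\eta}$ — this follows from Corollary~\ref{cor:copy-word} applied to $\mathcal P=\{P_\iota,P_i\}$ using the guaranteed long spiral $u_{\iota,\eta}^{r_{\iota,\eta}}$ on $P_\iota$, since the alternative would force two incompatible long spirals on one of the two paths) we get $i\in I_\eta^\to$ and $j_0=j(i,\eta)$ by uniqueness. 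Then by Lemma~\ref{lem:bw-guess:etas} the bundles $B_{i,j(i,\eta),1}$ and $B_{i,j(i,\eta),2}$ delimit the maximal stretch of coordinates around $j(i,\eta)$ on which all $u_{i,j}$ use only symbols of $u_{i,j(i,\eta)}$ (equivalently of $u_{\iota,\eta}$), and that stretch is exactly $(\eta(i,j(i,\eta),1)+1,\eta(i,j(i,\eta),2)-1)$; maximality of $(a,b)$ forces it to coincide with this pair. Finally the in-particular clause ($\rho_{i,j}=\unknown$ only for $j=j(i,\eta)$) is precisely the claim established in the paragraph preceding the lemma, via Lemma~\ref{lem:bw-guess:etas} ($|u_{i,j}|<|u_{\iota,\eta}|$ for $j\neq j(i,\eta)$ in range) together with the chosen order of guessing exponents.

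The second main branch is when both terminals of the $i$-th pair lie on the \emph{same} side of \emph{every} spiral $u_{i,j}$, $a\le j\le b$, with $\rho_{i,j}\neq 1$. I claim then we are in case~(2): for each $a\le j\le b$ either $\rho_{i,j}=1$ or $|u_{i,j}|<|u_{\iota,\eta}|$. Suppose not, so some $a\le j_1\le b$ has $\rho_{i,j_1}\neq 1$ and $|u_{i,j_1}|\ge |u_{\iota,\eta}|$; since the symbols of $u_{i,j_1}$ are among those of $u_{\iota,\eta}$, this forces $u_{i,j_1}$ to be a permutation of $u_{\iota,\eta}$. If $u_{i,j_1}$ is a cyclic shift of $u_{\iota,\eta}$, then $u_{i,j_1}^{r_{i,j_1}}$ contains the spiral $u_{\iota,\eta}B$ (up to cyclic shift), and Corollary~\ref{cor:bwrep} would place the two terminals of the $i$-th pair on different sides of that spiral — contradicting our branch assumption (after matching the two spirals' sides, which are the same set partition of $\bundleset$). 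If $u_{i,j_1}$ is a permutation of $u_{\iota,\eta}$ that is not a cyclic shift, then $P_i$ carries a subpath with bundle word $v^2$ for $v$ a non-cyclic-shift permutation of $u_{\iota,\eta}$, while $P_\iota$ carries the disjoint subpath with bundle word $u_{\iota,\eta}^3B$ (available since $r_{\iota,\eta}>\smallexp\ge 4$), so Lemma~\ref{lem:rep-perm} gives that $P_i$ and $P_\iota$ share an internal vertex — contradicting disjointness of a solution. Hence case~(2) holds, and the in-particular clause ($\rho_{i,j}\neq\unknown$ throughout) follows because $\rho_{i,j}=\unknown$ would force $|u_{i,j}|\ge|u_{\iota,\eta}|$ to have already been guessed in an earlier step. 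I expect the main obstacle to be the bookkeeping in the first branch: carefully justifying that the maximal relevant pair $(a,b)$ coincides with $(\eta(i,j(i,\eta),1)+1,\eta(i,j(i,\eta),2)-1)$ — in particular ruling out that $(a,b)$ could extend beyond these delimiters or, conversely, that it is strictly smaller because some intermediate $\rho_{i,j}$ equals $1$ — which requires combining the delimiter definitions, Lemma~\ref{lem:bw-guess:etas}, and the consistency of $\bundleword(P_{i,j(i,\eta)})$ with the displayed partial bundle word; the "exactly one of" disjointness is then immediate since cases~(1) and~(2) correspond to the two mutually exclusive side-configurations of the $i$-th terminal pair.
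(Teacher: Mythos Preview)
Your case split is the wrong one, and as written the proof does not go through. You fix an arbitrary $j_0\in[a,b]$ with $\rho_{i,j_0}\neq 1$ and then branch on whether the two terminals of the $i$-th pair lie on different sides of the spiral $u_{i,j_0}$. But $u_{i,j_0}^{r_{i,j_0}}$ with $r_{i,j_0}\ge 2$ is a subword of $\bundleword(P_i)$ itself; by Corollary~\ref{cor:bwrep} applied to the spiral $u_{i,j_0}B$ on $P_i$, the part of $\bundleword(P_i)$ before the spiral contains only bundles on one side and the part after only bundles on the other side, so the starting and ending terminals of $P_i$ are \emph{always} on different sides of $u_{i,j_0}$. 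Your second branch is therefore empty, and you end up claiming case~(1) in all situations, which is false (Figure~\ref{fig:relevant} gives a concrete picture where case~(2) occurs). Relatedly, your appeal to Lemma~\ref{lem:bw-guess:spiral} is off: that lemma is stated for terms of $\bundleword(P_\iota)$ with exponent larger than $\smallexp$, and its trichotomy concerns the terminals of \emph{other} paths, not the one carrying the spiral.

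The step inside your first branch then fails for the same reason. You write that ``$u_{i,j_0}$ must in fact be a cyclic shift of $u_{\iota,\eta}$'' via Corollary~\ref{cor:copy-word}, but that corollary only constrains terms $v^q$ where $v$ is a \emph{permutation} of $u_{\iota,\eta}$; it says nothing about $u_{i,j_0}$ when $|u_{i,j_0}|<|u_{\iota,\eta}|$, which is exactly what happens in case~(2). So from the mere facts $\rho_{i,j_0}\neq 1$ and ``symbols of $u_{i,j_0}$ are among those of $u_{\iota,\eta}$'' you cannot conclude $u_{i,j_0}$ is a cyclic shift of $u_{\iota,\eta}$.

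The paper avoids this by not branching on terminal sides at all. It simply assumes case~(2) fails: then there is some $j\in[a,b]$ with $\rho_{i,j}\neq 1$ and $|u_{i,j}|\ge |u_{\iota,\eta}|$; since all symbols of $u_{i,j}$ lie in $u_{\iota,\eta}$ by relevance, in fact $|u_{i,j}|=|u_{\iota,\eta}|$ and $u_{i,j}$ is a permutation of $u_{\iota,\eta}$. Now Corollary~\ref{cor:copy-word} (applied to the long spiral $u_{\iota,\eta}^{r_{\iota,\eta}}$ on $P_\iota$) forces $u_{i,j}$ to be a cyclic shift, $i\in I_\eta^\to$, and by uniqueness $j=j(i,\eta)$; the equality $(a,b)=(\eta(i,j(i,\eta),1)+1,\eta(i,j(i,\eta),2)-1)$ then follows from the uniqueness of the maximal relevant pair containing a given index (stated just before the lemma). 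Your second-branch subarguments (the cyclic-shift and Lemma~\ref{lem:rep-perm} cases) are essentially the content of that single application of Corollary~\ref{cor:copy-word}; what is missing in your write-up is choosing the right $j$ (one with $|u_{i,j}|=|u_{\iota,\eta}|$) rather than an arbitrary $j_0$ with $\rho_{i,j_0}\neq 1$.
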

\begin{proof}
Note that if the first option is satisfied, then there exists $j=j(i,\eta)$, for which $\rho_{i,j}=\unknown$ and $|u_{i,j}|=|u_{\iota,\eta}|$, so the second option is not satisfied. We are left with proving that if the second option is not satisfied, then the first is.

Let $(a,b)$ be a maximal relevant pair in $\pi_i$ that does not satisfy the second
option from the statement of the lemma. That is, there exists
$j$, $a \leq j \leq b$, such that $\rho_{i,j} \neq 1$ (and, consequently,
$r_{i,j} > 1$) but $|u_{i,j}| = |u_{\iota,\eta}|$.
Thus, $u_{i,j}$ is a permutation of $u_{\iota,\eta}$.
By Corollary \ref{cor:copy-word}, $u_{i,j}$ is a cyclic shift of $u_{\iota,\eta}$,
$i \in I_\eta^\to$ and, by the uniqueness of $j(i,\eta)$, $j=j(i,\eta)$.
\end{proof}

For a fixed value of $\smallexp < \aleph \leq n$,
for any $\pi_i$ and any maximal relevant pair $(a,b)$ in $\pi_i$,
we define the following bundle word:
$$q_{i,(a,b)}(\aleph) = B_{i,(a,b),1}' \hat{u}_{i,(a,b),1} u_{i,a}^{\rho'_{i,a}}
u_{i,a+1}^{\rho'_{i,a+1}} \ldots u_{i,b}^{\rho'_{i,b}} \hat{u}_{i,(a,b),2} B_{i,(a,b),2}',$$
where:
\begin{enumerate}
\item $\rho'_{i,j(i,\eta)} = \aleph + \alpha_i$ and 
  $\rho'_{i,j} = \rho_{i,j}$ for $j \neq j(i,\eta)$;
\item $\hat{u}_{i,(a,b),1}$ is the maximal suffix of $u_{i,a-1}$
that contains only symbols that appear in $u_{\iota,\eta}$ and
$B_{i,(a,b),1}'$ is the symbol of $u_{i,a-1}$ immediately preceding $\hat{u}_{i,(a,b),1}$;
\item symmetrically, 
$\hat{u}_{i,(a,b),2}$ is the maximal prefix of $u_{i,b+1}$
that contains only symbols that appear in $u_{\iota,\eta}$ and
$B_{i,(a,b),2}'$ is the symbol of $u_{i,b+1}$ immediately succeeding $\hat{u}_{i,(a,b),2}$.
\end{enumerate}
Note that both $\hat{u}_{i,(a,b),1}$ and $\hat{u}_{i,(a,b),2}$ may be empty.

We observe that, by Lemma \ref{lem:bw-guess:known-exp},
   $q_{i,(a,b)}(\aleph)$ is a bundle word: all exponents
are integers. As $\zdecomp$ does not contain any ring components,
$q_{i,(a,b)}(\aleph)$ is also a level-$0$ zoom pass in $(\zdecomp,\zbundleset)$.
Let $I$ be the set of pairs $(i,(a,b))$ where $1 \leq i \leq k$ and $(a,b)$
is a maximal relevant pair in $\pi_i$. As $s(i) \leq 2|\bundleset|$ for
each $i$, we infer that $|I| \leq 2|\bundleset|k$.

Using Lemma \ref{lem:psi-bound}, branch into at most $(|I|!)^2 \leq ((2|\bundleset|k)!)^2$
subcases,
guessing the permutations $(\psi_{B,\alpha})_{B \in \bundleset, 1 \leq \alpha \leq 2}$
for which $((q_\tau(\aleph))_{\tau \in I}, (\psi_{B,\alpha})_{B \in \bundleset, 1 \leq \alpha \leq 2})$
is a pack of level-$0$ zoom passes in $(\zdecomp,\zbundleset)$ for any $\smallexp < \aleph \leq n$ (note that the set of possible options for permutations $\psi_{B,\alpha}$ does
not depend on $\aleph$).
For $B \in \bundleset$ and $1 \leq \alpha \leq 2$, define 
$I_{B,\alpha} = \{\tau \in I: B_{\tau,\alpha} = B\}$; 
the permutation $\psi_{B,\alpha}$ permutes $I_{B,\alpha}$.

Construct the zoom auxiliary graph and instance for this pack of zoom passes
in $(\zdecomp,\zbundleset)$ and denote it $(H,\zdecomp_H,\zbundleset_H)$.

The discussion in Section \ref{sec:zooms} concluded with Observation \ref{obs:zoom-solution}
immediately yields the following.
\begin{lemma}\label{lem:bw-guess:equiv1}
If there exists a solution $(P_i)_{i=1}^k$ to the bundled instance
$(G,\decomp,\bundleset)$ such that
\begin{enumerate}
\item for each $1 \leq i \leq k$, $\bundleword(P_i)$ is consistent with $\pi_i$,
    $\bundleword(P_i) = u_{i,1}^{r_{i,1}} u_{i,2}^{r_{i,2}} \ldots u_{i,s(i)}^{r_{i,s(i)}}$;
\item for each $i \in I$, $r_{i,j(\eta)} = \alpha_i + r_{\iota,\eta}$;
\item if we denote for $\tau = (i,(a,b)) \in I$ by $Q_\tau$ the subpath of $P_i$
that corresponds to the subword $q_\tau(r_{\iota,\eta})$ of $\bundleword(P_i)$,
     then for each $B \in \bundleset$ the permutation $\psi_{B,1}$
     is equal to the order of first arcs of paths $Q_\tau$ on $B$ for $\tau \in I_{B,1}$
     and the permutation $\psi_{B,2}$ is equal to the order of the last
     arcs of paths $Q_\tau$ on $B$ for $\tau \in I_{B,2}$;
\end{enumerate}
then there exists a solution $(Q_\tau')_{\tau \in I}$
to the constructed zoom auxiliary instance
such that $\bundleword(Q_\tau')$ equals $q_\tau(r_{\iota,\eta})$
up to the prefix and suffix that corresponds to the part of the path
contained in the zoom starting and ending gadgets.
\end{lemma}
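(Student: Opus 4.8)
The plan is to observe that Lemma~\ref{lem:bw-guess:equiv1} is essentially a bookkeeping statement: it asserts that a global solution $(P_i)_{i=1}^k$ for $(G,\decomp,\bundleset)$ with the prescribed spiral exponents, spiral orderings on bundles, and consistency with the partial bundle words $(\pi_i)$, restricts (after the cosmetic gadget-insertion surgery) to a solution of the zoom auxiliary instance $(H,\zdecomp_H,\zbundleset_H)$. So the proof is a direct application of the construction carried out in Section~\ref{sec:zooms}, specialized to the zoom $(\zdecomp,\zbundleset)$ built from $u_{\iota,\eta}$ and to the particular family of subpaths $(Q_\tau)_{\tau\in I}$.

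First I would fix a solution $(P_i)_{i=1}^k$ satisfying hypotheses (1)--(3). For each $\tau=(i,(a,b))\in I$ I isolate the subpath $Q_\tau$ of $P_i$ corresponding to the subword $q_\tau(r_{\iota,\eta})$ of $\bundleword(P_i)$; hypotheses (1) and (2) guarantee that this subword really does occur in $\bundleword(P_i)$, because $\rho'_{i,j(i,\eta)}=\aleph+\alpha_i$ was set to $r_{\iota,\eta}+\alpha_i=r_{i,j(i,\eta)}$ and all other exponents $\rho'_{i,j}=\rho_{i,j}$ are the genuine $r_{i,j}$ by Lemma~\ref{lem:bw-guess:known-exp}. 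The next step is to verify that the $Q_\tau$ are pairwise vertex-disjoint zoom incidents with respect to $(\zdecomp,\zbundleset)$: by construction $q_\tau(r_{\iota,\eta})$ starts with $B_{i,(a,b),1}'\notin\zbundleset$ whose arc ends in a component of $\zdecomp$ and ends with $B_{i,(a,b),2}'\notin\zbundleset$ whose arc starts in $\zdecomp$, and all interior bundles appear in $u_{\iota,\eta}$, hence lie in $\zbundleset$; disjointness is inherited from disjointness of the $P_i$ and from the fact that distinct maximal relevant pairs in a single $\pi_i$ yield disjoint subpaths (two relevant intervals in one $\pi_i$ are nested or disjoint, and maximality forces disjointness). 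Hypothesis (3) is exactly the statement that the endpoint orderings of the $Q_\tau$ on each bundle $B$ realize the guessed permutations $\psi_{B,1},\psi_{B,2}$, so $((q_\tau(r_{\iota,\eta}))_{\tau\in I},(\psi_{B,\alpha}))$ is the pack of zoom passes arising from this solution in the sense of Observation~\ref{obs:zoom-solution}.

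Having matched the data, I apply Observation~\ref{obs:zoom-solution} verbatim: it produces from $(Q_\tau)_{\tau\in I}$ a solution $(Q_\tau')_{\tau\in I}$ of the zoom auxiliary instance, where $Q_\tau'$ is obtained from $Q_\tau$ by replacing its first arc (in $B_{\tau,1}$) with the routing through the zoom pass starting gadget from the new source terminal $s_\tau$ via the bidirectional grid vertices $v_{\tau,b}$, and symmetrically its last arc via the ending gadget; planarity is preserved because these gadgets are embedded in the space occupied by $B_{\tau,1}$ and $B_{\tau,2}$, which are disjoint from $\zbundleset$. The bundle word of $Q_\tau'$ in $(\zdecomp_H,\zbundleset_H)$ then differs from $q_\tau(r_{\iota,\eta})$ only in the prefix/suffix corresponding to the gadget traversal, since the interior of $Q_\tau$ is unmodified and its bundles are precisely those of $\zbundleset$, which persist in $\zbundleset_H$. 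That is the claimed conclusion.

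The only genuinely non-routine point — and the one I would write out most carefully — is the reduction of "$\bundleword(P_i)$ is consistent with $\pi_i$ and $r_{i,j(i,\eta)}=\aleph+\alpha_i$" to "$q_\tau(r_{\iota,\eta})$ is literally a subword of $\bundleword(P_i)$ whose interior is a $\zbundleset$-zoom pass''. This needs Lemma~\ref{lem:bw-guess:known-exp} to know that every exponent appearing inside $q_\tau$ other than the distinguished one is already determined (no stray $\unknown$), needs Lemma~\ref{lem:bw-guess:etas} to locate the boundary bundles $B_{i,(a,b),1}',B_{i,(a,b),2}'$ correctly outside $\zbundleset$, and needs Corollary~\ref{cor:copy-word} (via the choice of $\alpha_i$) to ensure the distinguished exponent on $P_i$ really equals $r_{\iota,\eta}+\alpha_i$. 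Once these are in hand, the rest is the mechanical transport through Observation~\ref{obs:zoom-solution}, so I expect the write-up to be short: a paragraph setting up the $Q_\tau$, a paragraph checking the zoom-pass hypotheses, and a one-line appeal to Observation~\ref{obs:zoom-solution}.
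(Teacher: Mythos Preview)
Your proposal is correct and matches the paper's approach exactly: the paper states that Lemma~\ref{lem:bw-guess:equiv1} is immediately yielded by the discussion in Section~\ref{sec:zooms} concluded with Observation~\ref{obs:zoom-solution}, and your write-up is precisely a careful unpacking of why that observation applies to the subpaths $(Q_\tau)_{\tau\in I}$ arising from the maximal relevant pairs. Your identification of the one non-routine step (that hypotheses (1)--(2) together with Lemma~\ref{lem:bw-guess:known-exp} force $q_\tau(r_{\iota,\eta})$ to be a genuine subword of $\bundleword(P_i)$) is exactly the content the paper leaves implicit.
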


Let us solve the constructed zoom auxiliary instance
$(H,\zdecomp_H,\zbundleset_H)$ using Theorem \ref{thm:words-to-paths},
(note that $\zdecomp$ and $\zdecomp_H$ do not contain any ring components).
For a fixed choice of the permutations $\psi_{B,\alpha}$, $B \in \bundleset$,
    $1 \leq \alpha \leq 2$, we find a minimum $\aleph_0$
 such that Theorem \ref{thm:words-to-paths} returns
 a solution for bundled instance $(H,\decomp_H,\bundleset_H)$ and
 bundle words $(q_\tau(\aleph_0))_{\tau \in I}$.
If there is no such $\aleph_0$, by Lemma \ref{lem:bw-guess:equiv1},
we may terminate the current branch. Otherwise we note the following.

\begin{lemma}\label{lem:bw-guess:equiv2}
Let $(P_i)_{i=1}^k$ be a solution to the bundled instance $(G,\decomp,\bundleset)$
  as in Lemma \ref{lem:bw-guess:equiv1}, and suppose that $(P_i)_{i=1}^k$ is minimal.
Then $\aleph_0 \leq r_{\iota,\eta} \leq \aleph_0+32|\bundleset|$.
\end{lemma}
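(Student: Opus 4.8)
\textbf{Proof plan for Lemma \ref{lem:bw-guess:equiv2}.}

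The plan is to establish the two inequalities separately. The lower bound $\aleph_0 \leq r_{\iota,\eta}$ is immediate: Lemma \ref{lem:bw-guess:equiv1} shows that a minimal solution $(P_i)_{i=1}^k$ as in its hypothesis gives rise to a solution $(Q_\tau')_{\tau \in I}$ of the zoom auxiliary instance with bundle words $q_\tau(r_{\iota,\eta})$ (up to the gadget prefixes and suffixes). Hence Theorem \ref{thm:words-to-paths} must return \emph{some} solution when invoked with bundle words $(q_\tau(r_{\iota,\eta}))_{\tau \in I}$, so the minimum good value $\aleph_0$ satisfies $\aleph_0 \leq r_{\iota,\eta}$.

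For the upper bound, the plan is to run the argument in the reverse direction. By minimality of $\aleph_0$, Theorem \ref{thm:words-to-paths} returns a solution $(Q_\tau^\ast)_{\tau \in I}$ to the zoom auxiliary instance for the bundle words $(q_\tau(\aleph_0))_{\tau \in I}$, and by the second outcome of that theorem each normal bundle of $\zbundleset_H$ (in particular every bundle of $\zbundleset$, and every gadget bundle) occurs in $\bundleword(Q_\tau^\ast)$ at most as often as in $q_\tau(\aleph_0)$. The core of the argument is that the solution $(P_i)_{i=1}^k$ restricted to the zoom's subgraph of $G$ coincides exactly with the parts described by the zoom passes $(Q_\tau)_{\tau \in I}$ — this is exactly the setup of Section \ref{sec:zooms} and Observation \ref{obs:zoom-solution}, using that $\zdecomp$ has no terminal-bearing component and no ring component. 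Therefore we may cut out these parts of $(P_i)_{i=1}^k$ and splice in the corresponding subpaths of $(Q_\tau^\ast)_{\tau \in I}$ (after removing the gadget portions), obtaining a new solution $(P_i')_{i=1}^k$ to $(G,\decomp,\bundleset)$; planarity and disjointness are preserved because, by construction of the gadgets, the permutations $\psi_{B,\alpha}$ force $(Q_\tau^\ast)$ to enter and leave the zoom in the same cyclic order on each bundle as $(P_i)$ did. Now I would compare the two solutions: for every bundle outside $u_{\iota,\eta}$ (i.e.\ not in $\zbundleset$), the count in each $\bundleword(P_i')$ is at most that in $\bundleword(P_i)$, since those bundles appear only in the unchanged portions of the paths and in the gadget bundles whose usage is controlled. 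Hence $(P_i')_{i=1}^k$ uses no more arcs of non-isolation bundles outside $u_{\iota,\eta}$ than $(P_i)$. By minimality of $(P_i)$, the new solution cannot be strictly better, so it cannot spiral \emph{fewer} times than $r^\ast := \aleph_0$ along $u_{\iota,\eta}$ up to a bounded additive slack: this is where Lemma \ref{lem:minsol-spiral} enters. Concretely, take the spiraling ring $A_R$ associated with $u_{\iota,\eta}^{r_{\iota,\eta}}$ in $\bundleword(P_\iota)$; the set $I^\to_\eta$ identifies exactly which paths cross it and in which direction (Lemma \ref{lem:spiral-split}), and $(P_i')$ routes this same set of paths through $A_R$. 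The winding number of $P_\iota'$ within $A_R$ is $r_{\iota,\eta}-2$ in the original, whereas after re-routing via $(Q_\tau^\ast)$ the winding becomes governed by $\aleph_0+\alpha_\iota = \aleph_0$; applying Lemma \ref{lem:minsol-spiral} to $(P_i)_{i=1}^k$ with the rerouted paths $(P_i')$ playing the role of the $Q_i$'s of that lemma forces $\aleph_0 \geq r_{\iota,\eta} - 10$ minus a further bounded correction coming from the $\pm 1$ slack of Observation \ref{obs:pm1} over the $|I^\to_\eta|\le k$ paths and from the at most $|\bundleset|$ bundles of $u_{\iota,\eta}$ over which winding and bundle-word-length differ. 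Tracking these constants — roughly $O(1)$ per bundle of $u_{\iota,\eta}$ and $O(1)$ per affected path — yields a bound of the form $r_{\iota,\eta} \leq \aleph_0 + 32|\bundleset|$; the explicit constant $32$ is the slack one gets by being generous in each of these steps.

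The main obstacle I anticipate is the bookkeeping around the gadget prefixes/suffixes and the permutations. Theorem \ref{thm:words-to-paths} only guarantees that $\bundleword(Q_\tau^\ast)$ is \emph{componentwise dominated} by $q_\tau(\aleph_0)$ on normal and high-level isolation bundles, not that it equals it, and it says nothing about bundles \emph{inside} $u_{\iota,\eta}$; so I must argue carefully that the rerouted paths still spiral along $u_{\iota,\eta}$ at least $\aleph_0 - O(|\bundleset|)$ times. This is precisely what the maximal-relevant-pair decomposition and Lemma \ref{lem:bw-guess:known-exp} are designed for: every part of a path $P_i$ that could ``borrow'' extra turns from a spiral using a proper subset of the symbols of $u_{\iota,\eta}$ lies inside a relevant pair, all of whose exponents (except the single copy-of-$u_{\iota,\eta}$ term) are already fixed in $\pi_i$ and hence not enlarged by the re-routing. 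Once that is in place, the winding-number arithmetic via Lemmata \ref{lem:onewayspiral}, \ref{lem:bwrep:spiraling-ring}, and \ref{lem:minsol-spiral} is routine. I would therefore devote the bulk of the written proof to the splicing step and the verification that the new solution is legitimate and no worse, and keep the final constant-chasing brief.
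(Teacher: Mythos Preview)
Your lower bound is fine and matches the paper. The upper-bound plan has two genuine gaps.

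\textbf{The splicing step does not work and is unnecessary.} After stripping the gadget prefixes and suffixes, a path $Q_\tau^\ast$ returned by Theorem~\ref{thm:words-to-paths} need not begin or end at the same vertices as the original zoom incident $Q_\tau$: the theorem only guarantees domination of bundle multiplicities, not that the first and last arcs of the zoom portion coincide with $b_{\tau,1}$ and $b_{\tau,2}$. The gadget grids ensure the \emph{cyclic order} along each boundary bundle matches $\psi_{B,\alpha}$, but not the specific arcs used, so the spliced object is not in general a family of vertex-disjoint paths in $G$. The paper sidesteps this entirely: it never builds a new solution in $G$. Instead it applies Lemma~\ref{lem:minsol-spiral} directly to the paths $(Q_\tau')_{\tau\in I}$ inside the zoom subgraph of $G$. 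The spiraling rings $A_\zeta$ live inside this subgraph, and for each $\zeta$ the subpaths of the $Q_\tau'$ that cross $A_\zeta$ serve as the paths $(Q_i)_{i\in I}$ in Lemma~\ref{lem:minsol-spiral}; no global splicing is needed.

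\textbf{One spiraling ring is not enough.} You only invoke Lemma~\ref{lem:minsol-spiral} for $A_\eta$. This gives a lower bound $(r_{\iota,\eta}-10)|u_{\iota,\eta}|$ on $|\bundleword(Q_t')|$ for $t=(\iota,(\eta(\iota,\eta,1)+1,\eta(\iota,\eta,2)-1))$, but the upper bound $|q_t(\aleph_0)|$ also contains the terms $\rho_{\iota,\zeta}|u_{\iota,\zeta}|$ for all other $\zeta$ in the relevant range, and those $\rho_{\iota,\zeta}$, while already guessed, may depend on $n$. The resulting inequality does not bound $r_{\iota,\eta}-\aleph_0$ by a function of $|\bundleset|$. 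The paper's fix is to apply Lemma~\ref{lem:minsol-spiral} to \emph{every} spiraling ring $A_\zeta$ with $r_{\iota,\zeta}>10$ and $\eta(\iota,\eta,1)<\zeta<\eta(\iota,\eta,2)$: since these rings are pairwise disjoint, $\bundleword(Q_t')$ contains the words $u_{\iota,\zeta}^{\,r_{\iota,\zeta}-10}$ as pairwise disjoint subwords, yielding
\[
\sum_\zeta (r_{\iota,\zeta}-10)|u_{\iota,\zeta}| \;\le\; |\bundleword(Q_t')| \;\le\; |q_t(\aleph_0)| \;=\; 2+|\hat u_{t,1}|+|\hat u_{t,2}|+\sum_\zeta \rho'_{\iota,\zeta}|u_{\iota,\zeta}|.
\]
Now the terms with $\zeta\neq\eta$ cancel up to $10|u_{\iota,\zeta}|\le 10|u_{\iota,\eta}|$ each, and since there are at most $2|\bundleset|$ of them one obtains $(r_{\iota,\eta}-10-\aleph_0)|u_{\iota,\eta}|\le (2+20|\bundleset|)|u_{\iota,\eta}|+2$, hence $r_{\iota,\eta}\le\aleph_0+32|\bundleset|$. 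This all-$\zeta$ accounting is the step your plan is missing; your remark about ``proper subsets'' and Lemma~\ref{lem:bw-guess:known-exp} is relevant for knowing the $\rho_{\iota,\zeta}$ with $\zeta\neq\eta$ are already determined, but it does not by itself control the budget.
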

\begin{proof}
The inequality $\aleph_0 \leq r_{\iota,\eta}$ is straightforward by
the choice of $\aleph_0$ and Lemma \ref{lem:bw-guess:equiv1}.

Let $(Q_\tau')_{\tau \in I}$ be the family of paths returned
by Theorem \ref{thm:words-to-paths} for bundle words
$(q_\tau(\aleph_0))_{\tau \in I}$.
Let $t = (\iota,(\eta(\iota,\eta,1)+1,\eta(\iota,\eta,2)-1)) \in I$.
We claim that for each $\eta(\iota,\eta,1) < \zeta < \eta(\iota,\eta,2)$
such that $r_{\iota,\zeta} > 10$,
the bundle word $\bundleword(Q_t')$ contains $u_{\iota,\zeta}^{r_{\iota,\zeta}-10}$
as a subword and those subwords are pairwise disjoint for different choices
of $\zeta$.

Indeed, consider the subword $u_{\iota,\zeta}^{r_{\iota,\zeta}}$ of
$\bundleword(P_\iota)$ and the corresponding spiraling ring $A_\zeta$
associated with the subpath of $P_\iota$ corresponding
to $u_{\iota,\zeta}^{r_{\iota,\zeta}}$ with borders $\gamma_{\zeta,1}$
and $\gamma_{\zeta,2}$ and faces $f_{\zeta,1}$ and $f_{\zeta,2}$.
As $r_{\iota,\zeta} > 10$, for any $i \in I_\zeta^\to$ 
the index $j(i,\zeta)$ is defined,
$u_{i,j(i,\zeta)}$ is a cyclic shift of $u_{\iota,\zeta}$
and $r_{i,j(i,\zeta)} \geq r_{\iota,\zeta}-3 > 7$.
Therefore there exists an element $\tau_i = (i,(a_i,b_i)) \in I$
such that $a_i \leq j(i,\zeta) \leq b_i$, the terminals of the 
pair $\tau_i$ in the bundled instance $(H,\zdecomp_H,\zbundleset_H)$
lie on different sides of the spiral $u_{\iota,\zeta}$
and the path $Q_{\tau_i}'$ contains a subpath $R_{\tau_i}'$ that starts
in a vertex on $\gamma_{\zeta,1}$ and ends in a vertex on $\gamma_{\zeta,2}$.
As $(P_i)_{i=1}^k$ is a minimal solution, by Lemma \ref{lem:minsol-spiral}
the bundle word of each path $R_{\tau_i'}$
contains $u_{\iota,\zeta}^{r_{\iota,\zeta}-10}$ as a subword. 
Since the spiraling rings $A_\zeta$ are disjoint for different 
choices of $\zeta$, the paths $R_{\tau_i'}$ are edge-disjoint for different choices
of $\zeta$. As $t \in I_\zeta^\to$ for any choice of $\zeta$, the claim is proven.

We infer that
$$|\bundleword(Q_t')| \geq \sum_{\zeta=\eta(\iota,\eta,1)+1}^{\eta(\iota,\eta,2)-1} (r_{\iota,\zeta}-10)|u_{\iota,\zeta}|.$$
On the other hand, as $\bundleword(Q_t')$ contains a subset (as a multiset)
of the symbols of $q_t(\aleph_0)$, we have that
$$|\bundleword(Q_t')| \leq |q_t(\aleph_0)| = 2 + |\hat{u}_{t,1}| + |\hat{u}_{t,2}| + \sum_{\zeta=\eta(\iota,\eta,1)+1}^{\eta(\iota,\eta,2)-1} \rho'_{\iota,\zeta} |u_{\iota,\zeta}|.$$
Recall $\rho'_{\iota,\zeta} = r_{\iota,\zeta}$ for $\zeta \neq \eta$ 
and $\rho'_{\iota,\eta} = \aleph_0$.
Moreover, $|u_{\iota,\zeta}| \leq |u_{\iota,\eta}|$ and $|\hat{u}_{t,\alpha}| \leq |u_{\iota,\eta}|$ for $\alpha = 1,2$. We infer that
$$(r_{\iota,\eta}-10-\aleph_0)|u_{\iota,\eta}| \leq 2+10 \cdot (2+\eta(\iota,\eta,2)-\eta(\iota,\eta,1)-2)|u_{\iota,\eta}|.$$
As $\eta(\iota,\eta,2)-\eta(\iota,\eta,1) < s(\iota) \leq 2|\bundleset|$,
we have $r_{\iota,\eta} \leq \aleph_0 + 32|\bundleset|$, as desired.
\end{proof}

Lemma \ref{lem:bw-guess:equiv2} allows us to conclude with the following lemma that
summarizes the branching steps made in this section.

\begin{lemma}\label{lem:word-guessing}
Let $(G,\decomp,\bundleset)$ be a bundled instance of isolation $(\Lambda, d)$
where $\Lambda \geq 2$ and $d \geq \max(2k,f(k,k)+4)$,
where $f(k,t)=2^{O(kt)}$ is the bound on the type-$t$ bend promised by Lemma \ref{lem:bendbound}.
Then in $2^{O(k^2|\bundleset|^2 \log |\bundleset|)} |G|^{O(1)}$
time one can compute a family of at most
$2^{O(k^2|\bundleset|^2 \log |\bundleset|)}$ semi-complete sequences of partial bundle words
$(\pi_i)_{i=1}^k$
such that
for any minimal solution $(P_i)_{i=1}^k$ to \probshort{} on $(G,\decomp,\bundleset)$,
there exists a generated sequence $(\pi_i)_{i=1}^k$ in the set such that
$P_i$ is consistent with $\pi_i$ for each $1 \leq i \leq k$.
\end{lemma}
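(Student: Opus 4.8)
\textbf{Proof plan for Lemma~\ref{lem:word-guessing}.}
The plan is to combine the two branching procedures developed in this section into a single algorithm whose total branching factor stays within $2^{O(k^2|\bundleset|^2\log|\bundleset|)}$. First I would invoke Lemma~\ref{lem:bpwguess} to enumerate a family $\mathcal{F}_0$ of at most $2^{O(k|\bundleset|^2\log|\bundleset|)}$ sequences $(\pi_i)_{i=1}^k$ of partial bundle words, with the guarantee that for every minimal solution $(P_i)_{i=1}^k$ some member of $\mathcal{F}_0$ is consistent with it. This handles the ``global shape'' of the spirals (the words $u_{i,j}$ and all the short exponents), and also fixes every exponent $\rho_{i,j}$ whose spiral $u_{i,j}$ contains a level-$0$ ring bundle (those stay $\unknown$ in the partial bundle word, since semi-completeness allows it). What remains in each such $(\pi_i)_{i=1}^k$ is to resolve all exponents $\rho_{\iota,\eta}=\unknown$ for which $u_{\iota,\eta}$ contains no level-$0$ ring bundle.

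Next I would iterate over these unknown exponents in order of nondecreasing spiral length $|u_{\iota,\eta}|$, exactly as set up in the discussion preceding Lemma~\ref{lem:bw-guess:equiv2}. For each such $(\iota,\eta)$, first terminate branches with inconsistent structure (bad $\rho_{i,1},\rho_{i,s(i)}$, no unique $j(i,\eta)$, or a violation of the claim that $\rho_{i,j}=\unknown$ forces $j=j(i,\eta)$ for $\eta(i,j(i,\eta),1)<j<\eta(i,j(i,\eta),2)$; all justified by Lemmata~\ref{lem:bw-guess:spiral}, \ref{lem:bw-guess:etas}, \ref{lem:bw-guess:equalspiral} and the chosen guessing order). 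Then branch into $7^{k-1}$ subcases for the offsets $(\alpha_i)_{i\in I_\eta^\to}$ with $\alpha_\iota=0$, branch into at most $((2|\bundleset|k)!)^2$ subcases for the zoom-pass permutations $(\psi_{B,\alpha})$ via Lemma~\ref{lem:psi-bound}, build the level-$0$ zoom $(\zdecomp,\zbundleset)$ and the zoom auxiliary instance $(H,\zdecomp_H,\zbundleset_H)$ parameterized by $\aleph$, and run Theorem~\ref{thm:words-to-paths} on the bundle words $(q_\tau(\aleph))_{\tau\in I}$ for increasing $\aleph>\smallexp$ to find the minimum $\aleph_0$ admitting a solution (if none exists, kill the branch). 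Finally, by Lemma~\ref{lem:bw-guess:equiv2}, for a minimal solution the true value satisfies $\aleph_0\le r_{\iota,\eta}\le\aleph_0+32|\bundleset|$, so I branch into the $32|\bundleset|+1$ candidate values $\rho_{\iota,\eta}\in\{\aleph_0,\dots,\aleph_0+32|\bundleset|\}$. Each such branch replaces the symbol $\unknown$ at position $(\iota,\eta)$ by a concrete integer.

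The correctness argument is then bookkeeping: starting from a minimal solution $(P_i)_{i=1}^k$, Lemma~\ref{lem:bpwguess} produces a consistent $(\pi_i)_{i=1}^k\in\mathcal{F}_0$; then, inductively over the processed unknown exponents, the branch that guesses the correct $(\alpha_i)$ and the correct permutations $(\psi_{B,\alpha})$ (which exist and match the actual orders of the subpaths $Q_\tau$ of $P_i$ on the bundles, by the zoom construction of Section~\ref{sec:zooms} and Observation~\ref{obs:zoom-solution}) satisfies the hypotheses of Lemma~\ref{lem:bw-guess:equiv1}, hence the zoom instance is solvable, so the minimum $\aleph_0$ exists and $r_{\iota,\eta}$ falls in the window of Lemma~\ref{lem:bw-guess:equiv2}; thus one of the $32|\bundleset|+1$ candidate branches records the correct value. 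After all unknown exponents of the second kind are resolved, the resulting partial bundle word is semi-complete (the only remaining $\unknown$'s are those whose spirals contain a level-$0$ ring bundle) and is consistent with $(P_i)_{i=1}^k$.

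For the size and time bounds: there are at most $2k\cdot 2|\bundleset|=O(k|\bundleset|)$ unknown exponents to process, and each processing step multiplies the number of branches by at most $7^{k-1}\cdot((2|\bundleset|k)!)^2\cdot(32|\bundleset|+1)=2^{O(k|\bundleset|\log|\bundleset|)}$; combined with the initial $2^{O(k|\bundleset|^2\log|\bundleset|)}$ from Lemma~\ref{lem:bpwguess} and using $k\le|\bundleset|$, the total is $2^{O(k^2|\bundleset|^2\log|\bundleset|)}$, and every per-branch computation (consistency checks, zoom construction, the at most $|G|$ calls to the polynomial-time algorithm of Theorem~\ref{thm:words-to-paths} while searching for $\aleph_0$) runs in $|G|^{O(1)}$ time. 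The main obstacle I expect is purely organizational rather than mathematical: making sure the branching order over unknown exponents (shortest spiral first) is honored so that, when processing $(\iota,\eta)$, every ``nested'' unknown exponent $\rho_{i,j}$ with $|u_{i,j}|<|u_{\iota,\eta}|$ has already been replaced by a concrete integer — this is exactly what makes the zoom passes $q_\tau(\aleph)$ genuine bundle words (Lemma~\ref{lem:bw-guess:known-exp}) and is the one place where the argument would collapse if done carelessly.
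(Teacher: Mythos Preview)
Your proposal is correct and follows essentially the same approach as the paper: invoke Lemma~\ref{lem:bpwguess} for the initial family, then process the remaining unknown exponents $\rho_{\iota,\eta}$ (with $u_{\iota,\eta}$ free of level-$0$ ring bundles) in order of increasing $|u_{\iota,\eta}|$, branching on the offsets $\alpha_i$, the zoom-pass permutations, and the window $[\aleph_0,\aleph_0+32|\bundleset|]$ from Lemma~\ref{lem:bw-guess:equiv2}, and multiply out the branching factors over at most $2k|\bundleset|$ exponents. Your write-up is in fact more explicit than the paper's own summary proof, and your identification of the key organizational point (shortest-spiral-first ordering to ensure the zoom passes $q_\tau(\aleph)$ are genuine bundle words via Lemma~\ref{lem:bw-guess:known-exp}) is exactly right.
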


\begin{proof}
We first branch into $2^{O(k |\bundleset|^2 \log |\bundleset|)}$ subcases,
guessing the initial partial bundle word $\pi_i$ for each $1 \leq i \leq k$, using Lemma \ref{lem:bpwguess}.
Then, for each unknown exponent $\rho_{\iota,\eta}$, in the order
of increasing lengths of $|u_{\iota,\eta}|$, we guess the value of $\rho_{\iota,\eta}$.
Recall that this includes guessing the values $\alpha_i$ (at most $7^{k-1}$ options)
permutations $(\psi_{B,\alpha})_{B \in \bundleset, 1 \leq \alpha \leq 2}$
(at most $(|I|!)^2 \leq ((2|\bundleset|k)!)^2$ options) and a value $r_{\iota,\eta}$ between $\aleph_0$
and $\aleph_0+32|\bundleset|$. Therefore we have at most $2^{O(k |\bundleset|\log |\bundleset|)}$ subcases
for each exponent $\rho_{\iota,\eta}$ to guess.

Recall that in each $\pi_i$ we have $s(i) \leq 2|\bundleset|$. Therefore,
we perform the aforementioned guessing step at most $2|\bundleset| k $ times.
The promised bound follows.

Finally, note that if $\rho_{i,j} = \unknown$ implies that $u_{i,j}$
contains a level-$0$ ring bundle for any $1 \leq i \leq k$, $1 \leq j \leq s(i)$,
then $(\pi_i)_{i=1}^k$ are semi-complete by the definition.
\end{proof}

\subsection{Ring components: deducing winding numbers}\label{ss:guessing:windings}

In the previous section we have shown that there is a bounded number
of semi-complete partial bundle words to consider.
Here our goal is to change this semi-complete partial bundle words
into bundle words with ring holes. The main difficulty is to find
a set of good candidates for paths' winding numbers in the ring components.
To cope with this, we use Lemma \ref{lem:ringhomotopy}: if we know which parts of paths
traverse a ring component, and we find one way to route them through a ring component,
there exists a solution that winds in the ring component similarly as the way we have
found.

However, there are two main technical problems with this approach. First,
the paths may visit an isolation of a ring component, but do not traverse
the ring component itself (i.e., there are ring visitors). These visitors
block space for rerouting: we cannot use Lemma \ref{lem:ringhomotopy}
directly to a ring component or some fixed closure of it. Here the rescue comes
from results developed in Section \ref{ss:rings:bounds} that help us
control the behaviour of a minimal solution in the closure of a ring component.

A second problem is that, if we ask Theorem \ref{thm:words-to-paths}
to provide us with some canonical way to route ring passages through (a closure of)
a ring component, the returned solution follows our guidelines (i.e., bundle words
with ring holes) in a quite relaxed way. To cope with that, we employ
a similar line of reasoning as in the previous subsection:
if in a minimal solution a ring passage spirals along a bundle word $u^r$, for some
large $r$, then Lemma \ref{lem:minsol-spiral} forces any canonical way found
by Theorem \ref{thm:words-to-paths} to spiral at least $r-10$ times
(i.e., to contain $u^{r-10}$ in its bundle word). Hence, the solution
returned by Theorem \ref{thm:words-to-paths} can differ from the minimal solution only by a limited number bundles, which implies that their winding numbers also do not differ much.

In this section we assume that the isolation of our decomposition
is $(\Lambda,d)$ for $\Lambda \geq 3$ and $d \geq \max(2k,f(k,k)+4)$.
The assumption $d \geq \max(2k,f(k,k)+4)$ allows us to use the results
of Section \ref{ss:rings:bounds}. The assumed $3$ layers of isolation gives
us space to carefully extract the ring on which Lemma \ref{lem:ringhomotopy}
is applied. It is worth noticing that all essential argumentation happens
in layers 1 and 2; the last layer are added only for the sake of clarity
of the presentation (for example, we do not need to care about normal bundles
with both endpoints in the same level-$\Lambda$ isolation component etc.).

Let us now proceed with a formal argumentation.
We first note that a semi-complete partial bundle word contains more
information than the bundle word part of a bundle word with level-$1$ ring holes.

\begin{lemma}\label{lem:pbw-to-holes}
Let $(G,\decomp,\bundleset)$ be a bundled instance with isolation $(\Lambda,d)$
where $\Lambda \geq 2$ and $d \geq \max(2k,f(k,k)+4)$ and
$f(k,t)=2^{O(kt)}$ is the bound on the type-$t$ bend promised by Lemma \ref{lem:bendbound}.
Let $\pi$ be a semi-complete partial bundle word in $(G,\decomp,\bundleset)$.
Then there exists a unique sequence of bundle words $(p_j)_{j=0}^h$ 
such that the following holds:
$(p_j)_{j=0}^h$ does not contain any level-$0$ bundles and
for any path $P$ that connects a terminal pair, is consistent with $\pi$ and its unique
bundle word with level-$1$ ring holes does not contain any level-$0$ ring bundle,
there exists a choice of integers $(w_j)_{j=1}^h$ such that $((p_j)_{j=0}^h, (w_j)_{j=1}^h)$
is a bundle word with level-$1$ ring holes consistent with $P$.
Moreover, the sequence $(p_j)_{j=0}^h$ can be computed in polynomial time, given $\pi$.
\end{lemma}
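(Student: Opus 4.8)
The plan is to make the passage from a semi-complete partial bundle word $\pi$ to the sequence $(p_j)_{j=0}^h$ explicit and deterministic. Recall that $\pi = u_1^{\rho_1} u_2^{\rho_2} \dots u_s^{\rho_s}$, and a path $P$ consistent with $\pi$ has $\bundleword(P) = u_1^{r_1} u_2^{r_2} \dots u_s^{r_s}$ where $r_j = \rho_j$ whenever $\rho_j \neq \unknown$. The symbols $\unknown$ occur only in terms $u_j^{\rho_j}$ for which (by semi-completeness) $u_j$ contains at least one level-$0$ ring bundle; as noted right after the definition of semi-complete, such $u_j$ then consists solely of level-$0$, level-$1$, and level-$2$ ring bundles of a single ring component $\ringcomp_j$ and no normal bundle. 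The key point is that such a term $u_j^{r_j}$ lies entirely inside $\ringcl^{2}(\ringcomp_j)$ and, since $P$ connects a terminal pair whose both endpoints are outside any ring closure, each occurrence of $u_j$ constitutes part of a level-$1$ ring passage of $P$: it is a spiral inside $\ringcl(\ringcomp_j)$ that does not enclose a terminal, so by Lemma~\ref{lem:spirals-in-rings} it is contained in a ring passage, and moreover its bundles are all of level $\le 2 < \Lambda-1$, hence they are not level-$1$ ring passage boundary bundles.

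First I would pass through $\pi$ and identify the maximal \emph{ring blocks}: maximal subwords of the word $u_1^{\rho_1} \dots u_s^{\rho_s}$ (or rather of the ``un-exponentiated'' formal sequence of potential-spiral symbols) all of whose bundles are ring bundles of one common ring component $\ringcomp$ and which contain at least one bundle of level $\le \Lambda-2$, i.e.\ which meet the defining condition of a level-$1$ ring part. Each such ring block $x_j$, together with the normal bundle preceding it and the normal bundle following it, corresponds exactly to a level-$1$ ring passage $r_j$ of any consistent $P$ (this uses the assumption in the lemma statement that the level-$1$ bundle word with ring holes of $P$ contains no level-$0$ ring bundle, which forces these ring parts to be genuine ring passages rather than ring visitors — in fact under minimality Corollary~\ref{cor:no-visitors} would give this for free, but here we only postulate it). I would then define $(p_j)_{j=0}^h$ as the bundle words obtained from $\pi$ by deleting each ring block $x_j$ and everything strictly between its surrounding normal bundles, i.e.\ $\pi$ read as $p_0\, x_1\, p_1\, x_2\, \dots\, x_h\, p_h$ with each $x_j$ a ring block; the $p_j$'s are then honest bundle words containing no level-$0$ ring bundles, because every level-$0$ ring bundle of $\pi$ sits inside some $u_j$ of the above type, which is inside a ring block. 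Because the occurrences of $\unknown$ in $\pi$ are all confined to ring blocks, the words $p_j$ are completely determined by $\pi$ with no ambiguity, which gives both existence and uniqueness, and the whole parse is a single linear scan of $\pi$, hence polynomial time.

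For the ``moreover'' part I would verify that $((p_j)_{j=0}^h, (w_j)_{j=1}^h)$ is a valid bundle word with level-$1$ ring holes consistent with $P$ for a suitable choice of integers $w_j$: take $w_j$ to be the winding number of the subpath of $P$ corresponding to the $j$-th ring passage inside $\ringcl^1(\ringcomp_j)$. By construction $\bundleword(P) = p_0 r_1 p_1 \dots r_h p_h$ where $r_1,\dots,r_h$ are exactly the level-$1$ ring passages of $\bundleword(P)$ — this is where I check that no ring part got missed (a ring block containing only level $\Lambda-1$ or higher bundles is not a ring part, but such a block has no bundle of level $\le \Lambda-2$, hence is not a ring passage; the definition matches) and that no spurious level-$1$ ring passage arises elsewhere (a level-$1$ ring passage must start and end with bundles of different faces of $\ringcl^1(\ringcomp)$ and must consist of ring bundles, so it lies inside a ring block). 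The conditions (3a)–(3c) in the definition of bundle word with level-$\lambda$ ring holes are then immediate from the fact that the last bundle of $p_{j-1}$ and the first bundle of $p_j$ are the normal predecessor/successor bundles of the ring passage $r_j$ and these lie in different faces of $\ringcl^1(\ringcomp_j)$. Condition (1) — that no $p_j$ contains two bundles on different sides of $\ringcl^1(\ringcomp)$ for any $\ringcomp$ — holds because such a $p_j$ would contain a ring passage, contradicting that all ring passages were extracted into the $x_j$'s.

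The main obstacle I expect is the bookkeeping around the \emph{boundaries} of the ring blocks: I must make sure that when $\pi$ contains, next to a level-$0$ ring spiral $u_j^{\unknown}$, some further ring bundles of level $1$ or $2$ that belong to the \emph{same} ring passage but might at first glance look like a separate ring part, they get absorbed into the same block $x_j$; and conversely that two ring blocks separated by a normal bundle stay separate. Concretely the definition of ``ring part'' (maximal subword of only ring bundles containing at least one level-$<\Lambda$ bundle) combined with the fact that any normal bundle between two ring blocks survives into some $p_j$ handles this, but one has to argue carefully that the maximality in $\pi$ matches the maximality that defines ring passages of $\bundleword(P)$ — this requires noting that exponentiating $u_j \mapsto u_j^{r_j}$ cannot merge a ring block with an adjacent non-ring symbol, since $u_j$ is a single potential spiral and the symbol immediately before/after it in $\pi$ differs (by the definition of partial bundle word there is a symbol distinguishing consecutive $u_j, u_{j+1}$, and the flanking normal bundles are not ring bundles). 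Once this matching of the two notions of maximality is pinned down, the rest is a routine verification against the definitions in Sections~\ref{sec:bundles-and-bundle-words}.
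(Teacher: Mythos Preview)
Your central insight is the same as the paper's and is correct: every $\unknown$-exponent sits on a term $u_j$ that, by semi-completeness and the potential-long-spiral conditions, lies entirely inside $\ringcl^{1}(\ringcomp)$ for some ring component $\ringcomp$; hence varying $r_j$ only changes the content of a level-$1$ ring passage and cannot affect the surrounding pieces $(p_j)$.

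The gap is in your extraction step. Your ``ring blocks'' are defined as maximal subwords consisting of ring bundles (of any level) with at least one bundle of level $\le \Lambda-2$. This is essentially the definition of a \emph{ring part}, not of a \emph{level-$1$ ring passage}; there is no notion of a ``level-$1$ ring part'' in the framework. A level-$1$ ring passage is a maximal subpath contained in $\ringcl^{1}(\ringcomp)$, so it can only carry level-$0$ and level-$1$ bundles. The level-$2,\ldots,\Lambda$ ring bundles that precede and follow a level-$1$ ring passage inside the same ring part must remain in the words $p_j$ of a level-$1$ bundle word with holes (this is exactly what is exploited later, e.g.\ in Observation~\ref{obs:bij-props}, where $B_{i,j,1},B_{i,j,2}$ are level-$1$ bundles connecting level-$2$ to level-$1$ isolation components and sit at the ends of $p_{i,j-1},p_{i,j}$). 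Your procedure would excise the whole ring part, producing the word part of a level-$\Lambda$ bundle word with holes instead of the required level-$1$ one, so the output sequence $(p_j)$ is wrong.

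The paper sidesteps this by a simpler device: replace each $\unknown$ in $\pi$ by an arbitrary fixed positive integer to obtain a concrete bundle word $p$, and then compute the decomposition $p=p_0 r_1 p_1\cdots r_h p_h$ where the $r_j$ are exactly the level-$1$ ring passages of $p$, using the existing definition verbatim. Your insight above is precisely what shows this is well-defined: since each $u_j^{\unknown}$ is contained in $\ringcl^{1}(\ringcomp_j)$, changing the evaluated integer changes only some $r_j$, never the $p_j$'s. If you want to keep a direct parse of $\pi$, the fix is to define a ring block as a maximal subword all of whose bundles lie in $\ringcl^{1}(\ringcomp)$ for a single $\ringcomp$ (equivalently, all bundles of level $\le 1$) and whose preceding and succeeding bundles lie in different faces of $\ringcl^{1}(\ringcomp)$.
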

\begin{proof}
Let $\pi = u_1^{\rho_1} u_2^{\rho_2} \ldots u_s^{\rho_s}$.
Assume $P$ is consistent with $\pi$
and let $\bundleword(P) = u_1^{r_1} u_2^{r_2} \ldots u_s^{r_s}$.
Note that, as $P$ connects a terminal pair, $r_1=r_s=1$, $u_1$ starts with a bundle
that contains the arc incident to the starting terminal of $P$ and $u_s$ ends
with a bundle that contains the arc incident to the ending terminal of $P$.

Consider now an index $j$ for which $\rho_j = \unknown$. As $\pi$ is semi-complete,
$u_j$ contains at least one level-$0$ ring bundle, is a potential long spiral and,
consequently, does not contain non-isolation bundles of level different than $0$.
From the assumption that the bundle word with level-$1$ ring holes of $P$
does not contain any level-$0$ ring bundle, we infer that the subpath of $P$
that corresponds to the subword $u_j^{r_j}$ is a part of a level-$1$ ring passage
of $P$. As the choice of $j$ is arbitrary, we infer that the bundle word
part of the bundle word with level-$1$ ring holes of $P$ does not depend on the choice
of $P$, but only on $\pi$.

Moreover, the aforementioned argument yields a polynomial-time
algorithm to compute the bundle words $(p_j)_{j=0}^h$ from $\pi$.
We compute $p$ defined as a bundle word created from $\pi$ by evaluating each $\rho_j = \unknown$ to a fixed positive integer. Then we compute the decomposition
$p = p_0 r_1 p_1 r_2 p_2 \ldots r_h p_h$, where $(r_j)_{j=1}^h$ are all
level-$1$ ring passages in $p$, and output the sequence $(p_j)_{j=0}^h$.
\end{proof}

\begin{lemma}\label{lem:holes-vs-sol}
Let $(G,\decomp,\bundleset)$ be a bundled instance with isolation $(\Lambda,d)$
where $\Lambda \geq 2$ and $d \geq \max(2k,f(k,k)+4)$,
$f(k,t)=2^{O(kt)}$ is the bound on the type-$t$ bend promised by Lemma \ref{lem:bendbound}.
Let $(P_i)_{i=1}^k$ be a minimal solution to \probshort{} on $G$ such that
$P_i$ is consistent with $\pi_i$ for each $1 \leq i \leq k$. 
Let $(p_{i,j})_{j=0}^{h(i)}$ be a sequence computed by Lemma \ref{lem:pbw-to-holes}
for $\pi_i$. Then $\sum_{i=1}^k h(i) \leq 4|\bundleset|^2 k^2$
and there exist integers $(w_{i,j})_{1 \leq i \leq k, 1 \leq j \leq h(i)}$
such that for each $1 \leq i \leq k$ the pair $((p_{i,j})_{j=0}^{h(i)}, (w_{i,j})_{j=1}^{h(i)})$
is a bundle word with level-$1$ ring holes consistent with $P_i$.
\end{lemma}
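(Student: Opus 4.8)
\textbf{Proof plan for Lemma~\ref{lem:holes-vs-sol}.} The strategy is to show that each sequence $(p_{i,j})_{j=0}^{h(i)}$ is exactly the level-$1$ ring-holes bundle word part of $P_i$, and then invoke the structural bounds of Section~\ref{ss:rings:bounds} to bound $\sum_i h(i)$. First I would verify the hypothesis needed to apply Lemma~\ref{lem:pbw-to-holes} to each $\pi_i$, namely that the unique level-$1$ bundle word with ring holes of $P_i$ contains no level-$0$ ring bundle. This is exactly the content of Corollary~\ref{cor:bwholes-minsol} (flatness of the level-$\lambda$ bundle word with ring holes of a path in a minimal solution, applied with $\lambda=0$): since $d \ge f(k,k)+4$ and $\Lambda \ge 2$, the level-$1$ projection of the level-$0$ bundle word with ring holes for $P_i$ contains no level-$0$ bundles. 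Combined with Lemma~\ref{lem:path-to-bwholes} (existence and uniqueness of a level-$1$ bundle word with ring holes for $P_i$, using that terminals lie in their own single-vertex disc components, hence not in any ring closure), this gives exactly the object whose bundle-word part must, by the uniqueness claim in Lemma~\ref{lem:pbw-to-holes}, coincide with $(p_{i,j})_{j=0}^{h(i)}$. Taking $(w_{i,j})_{j=1}^{h(i)}$ to be the winding numbers of the corresponding level-$1$ ring passages of $P_i$ then produces the required consistent bundle word with level-$1$ ring holes for each $P_i$.

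It remains to bound $\sum_{i=1}^k h(i)$. Here I would observe that $h(i)$ counts the number of level-$1$ ring passages appearing in the bundle word with level-$1$ ring holes of $P_i$ (equivalently, the number of subwords $r_{i,j}$ in the decomposition $p = p_{i,0} r_{i,1} p_{i,1} \cdots r_{i,h(i)} p_{i,h(i)}$). Each such level-$1$ ring passage is, by definition, a maximal subpath of a ring part of $P_i$ contained in $\ringcl^1(\ringcomp)$ whose preceding and succeeding arcs on $P_i$ lie in different faces of $\ringcl^1(\ringcomp)$; in particular it enters and leaves the closure of some ring component through the level-$1$ isolation components $\disccomp_{IN,1}$ and $\disccomp_{OUT,1}$. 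Thus each level-$1$ ring passage contains at least one isolation passage through a level-$1$ isolation component. By Theorem~\ref{thm:isolation-passages}, which applies since $\Lambda>0$ and $d \ge 2k$, the total number of isolation passages over all $P_i$ is at most $4|\bundleset|^2 k^2$, and hence $\sum_{i=1}^k h(i) \le 4|\bundleset|^2 k^2$.

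The main obstacle I anticipate is the careful bookkeeping in the second paragraph: one must be precise that distinct level-$1$ ring passages of the solution correspond to \emph{distinct} isolation passages through level-$1$ isolation components, so that the count in Theorem~\ref{thm:isolation-passages} indeed dominates $\sum_i h(i)$ without any constant-factor loss. This requires using that a level-$1$ ring passage, by virtue of its preceding and succeeding arcs lying in different faces of $\ringcl^1(\ringcomp)$, must cross from the outer face to the inner face (or vice versa) of the closure, and therefore traverses both $\disccomp_{IN,1}$ and $\disccomp_{OUT,1}$ (by the structure of ring isolation, Definition~\ref{def:isolation}), giving in fact at least two isolation passages per level-$1$ ring passage — more than enough for the stated bound. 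A minor secondary point is to confirm that the uniqueness assertion of Lemma~\ref{lem:pbw-to-holes} genuinely forces $(p_{i,j})_j$ to equal the bundle-word part extracted from $P_i$'s level-$1$ bundle word with ring holes, rather than merely being compatible with it; this follows because Lemma~\ref{lem:pbw-to-holes} states that the sequence is the \emph{unique} one with the stated universal property, and $P_i$ itself witnesses that property.
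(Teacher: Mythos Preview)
Your proposal is correct and follows essentially the same approach as the paper: invoke Corollary~\ref{cor:bwholes-minsol} to verify that $P_i$'s level-$1$ bundle word with ring holes contains no level-$0$ bundle (so Lemma~\ref{lem:pbw-to-holes} applies and yields the integers $w_{i,j}$), and invoke Theorem~\ref{thm:isolation-passages} for the bound on $\sum_i h(i)$. The paper's own proof is a terse two-sentence citation of exactly these two results; you have simply unpacked the connecting logic (in particular, that each level-$1$ ring passage forces at least one isolation passage through a level-$1$ isolation component, so the count is dominated by the bound of Theorem~\ref{thm:isolation-passages}).
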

\begin{proof}
The bound on $\sum_{i=1}^k h(i)$ follows from Theorem \ref{thm:isolation-passages}.
As for the second claim, note that by Corollary \ref{cor:bwholes-minsol}, for any $1 \leq i \leq k$,
the bundle word with level-$1$ ring holes consistent with $P_i$ does not contain
any bundle of level $0$.
\end{proof}

From this point, we assume that the isolation of 
the bundled instance $(G,\decomp,\bundleset)$ satisfies
$\Lambda \geq 3$ and $d \geq (2k,f(k,k)+4)$, as in
the assumptions of Theorem \ref{thm:word-guessing}.

Fix $1 \leq i \leq k$ and $1 \leq j \leq h(i)$. Let
$B_{i,j,1}$ be the last bundle of $p_{i,j-1}$
and $B_{i,j,2}$ be the first bundle of $p_{i,j}$. 
Directly from Lemma \ref{lem:holes-vs-sol} we obtain the following observation.
\begin{observation}\label{obs:bij-props}
If there exists a minimal solution $(P_i)_{i=1}^k$
such that $P_i$ is consistent with $\pi_i$ for each $1 \leq i \leq k$,
then for each $1 \leq i \leq k$ and $1 \leq j \leq h(i)$
there exists a ring component $\ringcomp_{i,j}$ such that 
\begin{enumerate}
\item $B_{i,j,1}$ contains arcs leading from a level-$2$ isolation component of $\ringcomp_{i,j}$ to a level-$1$ isolation component;
\item $B_{i,j,2}$ contains arcs leading from a level-$1$ isolation component of $\ringcomp_{i,j}$ to a level-$2$ isolation component;
\item $B_{i,j,1}$ and $B_{i,j,2}$ lie on different sides of $\ringcomp_{i,j}$.
\end{enumerate}
\end{observation}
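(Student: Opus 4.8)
\textbf{Proof plan for Observation~\ref{obs:bij-props}.}
The plan is to unpack the construction of the sequence $(p_{i,j})_{j=0}^{h(i)}$ coming from Lemma~\ref{lem:pbw-to-holes} together with the consistency data supplied by Lemma~\ref{lem:holes-vs-sol}, and then read off the three asserted properties directly from the definition of a bundle word with level-$1$ ring holes. First I would fix a minimal solution $(P_i)_{i=1}^k$ with $P_i$ consistent with $\pi_i$ for all $i$; such a solution exists by hypothesis. By Lemma~\ref{lem:holes-vs-sol} there are integers $(w_{i,j})$ so that $((p_{i,j})_{j=0}^{h(i)},(w_{i,j})_{j=1}^{h(i)})$ is a bundle word with level-$1$ ring holes consistent with $P_i$; by Corollary~\ref{cor:bwholes-minsol} (which applies since $\Lambda\ge 3\ge 2$ and $d\ge f(k,k)+4$) this bundle word with ring holes is flat, and in particular the $(p_{i,j})$ contain no level-$0$ ring bundles and no level-$0$ isolation passages hidden inside them --- the level-$1$ ring passages $r_{i,j}$ really are all the places where $P_i$ crosses $\ringcl^1(\ringcomp)$ for a ring component $\ringcomp$.

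Next, for a fixed pair $(i,j)$ with $1\le j\le h(i)$, I would apply clause~(3) in the definition of a bundle word with level-$\lambda$ ring holes (with $\lambda=1$): there is a ring component $\ringcomp_{i,j}$ such that the last bundle of $p_{i,j-1}$ --- which is $B_{i,j,1}$ by our naming convention --- contains arcs whose ending points lie in $\ringcl^1(\ringcomp_{i,j})$ but whose starting points do not, the first bundle of $p_{i,j}$ --- namely $B_{i,j,2}$ --- contains arcs whose starting points lie in $\ringcl^1(\ringcomp_{i,j})$ but whose ending points do not, and these two bundles lie in different faces of $\ringcl^1(\ringcomp_{i,j})$. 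The claim ``different sides of $\ringcomp_{i,j}$'' in property~(3) of the observation is just the translation of ``different faces of $\ringcl^1(\ringcomp_{i,j})$'' into the two faces of the ring component itself, which is valid because the level-$1$ closure is obtained by gluing isolation annuli onto $\ringcomp_{i,j}$ without changing which of the two sides each face lies on.

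The remaining work is to upgrade ``ending points in $\ringcl^1(\ringcomp_{i,j})$, starting points not'' to the sharper statement that $B_{i,j,1}$ leads \emph{from a level-$2$ isolation component of $\ringcomp_{i,j}$ to a level-$1$ isolation component} (and symmetrically for $B_{i,j,2}$). Here I would use two facts: (a) the sequence $(p_{i,j})_{j=0}^{h(i)}$ contains no level-$0$ bundles (Lemma~\ref{lem:pbw-to-holes}); and (b) since $B_{i,j,1}$ has its head in $\ringcl^1(\ringcomp_{i,j})$ but its tail outside, its head lies in a component of $\ringcl^1(\ringcomp_{i,j})$ that is incident to an arc leaving $\ringcl^1(\ringcomp_{i,j})$ --- by the definition of ring isolation the only such components are the outermost IN- and OUT- isolation components of level $1$, i.e.\ the level-$1$ isolation components, and the bundle arc reaching them from outside must originate in the adjacent level-$2$ isolation component (again by the nesting structure of Definition~\ref{def:isolation}, where each outer arc of $\disccomp_{IN,1}$ / inner arc of $\disccomp_{OUT,1}$ has its other endpoint in $\disccomp_{IN,2}$ / $\disccomp_{OUT,2}$, after the bundle-splitting of the ``closure of a ring component'' definition that keeps whole bundles on one side). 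A symmetric argument handles $B_{i,j,2}$.

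I expect the only genuinely delicate point to be this last step: being careful that the ``level-$\lambda$ closure'' definition (which may split bundles incident to isolation components into at most three pieces) does not let $B_{i,j,1}$ straddle the boundary of $\ringcl^1(\ringcomp_{i,j})$ in some degenerate way, and that ``level'' of a ring bundle (the minimum $\lambda$ with an incident arc in $\ringcl^\lambda$) matches up with the layer indices used in Definition~\ref{def:isolation}. Everything else is bookkeeping: the bound $\sum_i h(i)\le 4|\bundleset|^2k^2$ is already in Lemma~\ref{lem:holes-vs-sol}, and the existence of $(w_{i,j})$ and the ring component $\ringcomp_{i,j}$ are handed to us by the definitions. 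So the proof is essentially: invoke Lemma~\ref{lem:holes-vs-sol}, then quote clause~(3) of the definition of a level-$1$ bundle word with ring holes, then refine the endpoint statements using the isolation structure.
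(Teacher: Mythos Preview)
Your proposal is correct and follows essentially the same approach as the paper, which simply states ``Directly from Lemma~\ref{lem:holes-vs-sol} we obtain the following observation'' with no further argument. You have correctly unpacked what ``directly'' means here: invoke Lemma~\ref{lem:holes-vs-sol} to get a level-$1$ bundle word with ring holes consistent with $P_i$, read off clause~(3) of the definition, and then use the isolation structure (Definition~\ref{def:isolation}) to pin down the exact levels of the endpoints. Your concern about bundle-splitting at the closure boundary is not a real issue: since a bundle connects two fixed components, a bundle crossing into $\ringcl^1(\ringcomp)$ must have one endpoint in $\{\disccomp_{IN,1},\disccomp_{OUT,1}\}$ and the other outside, and the isolation definition forces the outside endpoint to be the adjacent level-$2$ isolation component.
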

Thus, if this is not the case, we may terminate the current branch.

For each ring component $\ringcomp \in \decomp$, we define
$I(\ringcomp) = \{(i,j): \ringcomp = \ringcomp_{i,j}\}$. 
Note the following, due to Lemma \ref{lem:holes-vs-sol}.
\begin{observation}\label{obs:I-bound}
If there exists a minimal solution
$(P_i)_{i=1}^k$ to \probshort{} on $G$ such that $P_i$ is consistent with $\pi_i$
for each $1 \leq i \leq k$, we have
$$\sum_{\ringcomp \in \decomp} |I(\ringcomp)| = \sum_{i=1}^k h(i) \leq 4|\bundleset|^2 k^2.$$
\end{observation}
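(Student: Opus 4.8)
The plan is to recognise that the sum $\sum_{\ringcomp\in\decomp}|I(\ringcomp)|$ is merely a regrouping of the index set of all pairs $(i,j)$ with $1\le i\le k$ and $1\le j\le h(i)$, and then to read off the numerical bound from Lemma~\ref{lem:holes-vs-sol}, which has already done the hard work.

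First I would invoke the hypothesis to fix a minimal solution $(P_i)_{i=1}^k$ such that $P_i$ is consistent with $\pi_i$ for each $1\le i\le k$; everything below is conditional on the existence of such a solution, and if none exists the statement is vacuous. Under this hypothesis, Observation~\ref{obs:bij-props} guarantees that for every $1\le i\le k$ and every $1\le j\le h(i)$ the ring component $\ringcomp_{i,j}$ is well defined (with $B_{i,j,1}$ and $B_{i,j,2}$ having the stated incidence and ``different sides'' properties). Hence the assignment $(i,j)\mapsto\ringcomp_{i,j}$ puts each pair $(i,j)$ in exactly one of the sets $I(\ringcomp)=\{(i,j):\ringcomp=\ringcomp_{i,j}\}$, so the family $\{I(\ringcomp)\}_{\ringcomp\in\decomp}$ is a partition of $\{(i,j):1\le i\le k,\ 1\le j\le h(i)\}$ into blocks (many of which may be empty), one per ring component. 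Summing cardinalities over this partition yields the first equality $\sum_{\ringcomp\in\decomp}|I(\ringcomp)|=\sum_{i=1}^k h(i)$.

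The inequality is then immediate: Lemma~\ref{lem:holes-vs-sol}, applied to the same minimal solution $(P_i)_{i=1}^k$ and to the sequences $(p_{i,j})_{j=0}^{h(i)}$ produced by Lemma~\ref{lem:pbw-to-holes} from $\pi_i$, already states $\sum_{i=1}^k h(i)\le 4|\bundleset|^2 k^2$. Combining the two displayed facts gives the claim.

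I do not expect any genuine obstacle here; the one point requiring a moment's care is checking that $\ringcomp_{i,j}$ really is defined over the full range of indices, which is exactly what Observation~\ref{obs:bij-props} provides (that observation in turn rests on Lemma~\ref{lem:holes-vs-sol}, via Corollary~\ref{cor:bwholes-minsol} and Theorem~\ref{thm:isolation-passages}). Since we are already under the assumption that a consistent minimal solution exists, the hypotheses of both Observation~\ref{obs:bij-props} and Lemma~\ref{lem:holes-vs-sol} are met, so both may be applied and the argument closes.
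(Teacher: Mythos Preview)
Your proposal is correct and matches the paper's approach: the paper derives this observation directly from Lemma~\ref{lem:holes-vs-sol}, and your argument simply spells out why the equality holds (the sets $I(\ringcomp)$ partition the index pairs) before invoking that lemma for the bound. There is nothing to add.
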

Again, if this is not the case, we terminate the current branch.

Observe that, by Theorem \ref{thm:oscillators}, we obtain the following.
\begin{observation}\label{obs:extend-passages}
If there exists a minimal solution
$(P_i)_{i=1}^k$ to \probshort{} on $G$ such that $P_i$ is consistent with $\pi_i$,
  then for each $1 \leq i \leq k$ and $0 \leq j \leq h(i)$, the bundle word $p_{i,j}$
  contains at least one normal bundle.
\end{observation}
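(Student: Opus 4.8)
The plan is to fix a minimal solution $(P_i)_{i=1}^k$ with $P_i$ consistent with $\pi_i$ (this is the hypothesis), and to argue each $p_{i,j}$ contains a normal bundle by contradiction, the contradiction coming from Theorem~\ref{thm:oscillators}. First I would dispose of the boundary indices $j\in\{0,h(i)\}$. By the construction in Lemma~\ref{lem:pbw-to-holes}, $p_{i,0}$ begins with the bundle containing the arc incident to the starting terminal of the $i$-th pair and $p_{i,h(i)}$ ends with the bundle incident to the corresponding sink terminal. Since a terminal occupies its own single-vertex disc component, that component cannot be an isolating component (an isolating component carries $d\geq 2k\geq 2$ concentric cycles), hence it lies in no closure of a ring component; therefore the bundle incident to the terminal is not a ring bundle, i.e.\ it is normal. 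This handles $j\in\{0,h(i)\}$.

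For the interior indices $1\le j\le h(i)-1$, suppose for contradiction that $p_{i,j}$ contains no normal bundle. By Lemma~\ref{lem:pbw-to-holes} it also contains no level-$0$ bundle, so every bundle of $p_{i,j}$ is a ring bundle of some closure, of level at least $1$. Since a bundle with one endpoint outside a closure $\ringcl(\ringcomp)$ is not contained in it, and closures are pairwise disjoint, leaving a closure or passing between distinct closures requires a normal bundle; consequently the subpath of $P_i$ corresponding to $p_{i,j}$ stays inside a single $\ringcl(\ringcomp)$, so $\ringcomp_{i,j}=\ringcomp=\ringcomp_{i,j+1}$, and, as no level-$0$ ring bundle is used, it cannot reach $\ringcomp$ itself and hence stays on one side $\Gamma\in\{IN,OUT\}$, among the isolation components $\disccomp_{\Gamma,1},\dots,\disccomp_{\Gamma,\Lambda}$. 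By Observation~\ref{obs:bij-props}, the first bundle $B_{i,j,2}$ of $p_{i,j}$ runs from $\disccomp_{\Gamma,1}$ to $\disccomp_{\Gamma,2}$ and its last bundle $B_{i,j+1,1}$ runs from $\disccomp_{\Gamma,2}$ back to $\disccomp_{\Gamma,1}$; combined with the ``different faces of $\ringcl^1(\ringcomp)$'' requirement in the definition of a bundle word with level-$1$ ring holes, this forces $B_{i,j,1}$ (preceding the ring passage $r_j$) to lie on the opposite side $\bar\Gamma$, running from $\disccomp_{\bar\Gamma,2}$ to $\disccomp_{\bar\Gamma,1}$, and $B_{i,j+1,2}$ (succeeding $r_{j+1}$) to run from $\disccomp_{\bar\Gamma,1}$ to $\disccomp_{\bar\Gamma,2}$.

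Then I would form the subpath $\tilde P$ of $P_i$ corresponding to $B_{i,j,1}\,r_j\,p_{i,j}\,r_{j+1}\,B_{i,j+1,2}$. Each piece lies in $\ringcl(\ringcomp)$ (the ring passages in $\ringcl^1(\ringcomp)$, the $p_{i,j}$-part in the $\Gamma$-side isolation, the bundles $B_{i,j,1},B_{i,j+1,2}$ between $\disccomp_{\bar\Gamma,1}$ and $\disccomp_{\bar\Gamma,2}$), so $\tilde P\subseteq\ringcl(\ringcomp)$; it starts with an arc whose head is in $\disccomp_{\bar\Gamma,1}$ and ends with an arc whose tail is in $\disccomp_{\bar\Gamma,1}$, both arcs lying in the face of $\ringcl(\ringcomp)[V(\disccomp_{\bar\Gamma,1})]$ adjacent to $\disccomp_{\bar\Gamma,2}$. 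Applying Theorem~\ref{thm:oscillators} with $\disccomp=\disccomp_{\bar\Gamma,1}$ shows $\tilde P$ uses no arc of the other face, the one adjacent to $\ringcomp$. But $r_j$ is a level-$1$ ring passage whose predecessor and successor lie in different faces of $\ringcl^1(\ringcomp)$, so it crosses $\ringcomp$; in particular it leaves $\disccomp_{\bar\Gamma,1}$ along an outer arc of $\disccomp_{\bar\Gamma,1}$, which is a level-$0$ ring bundle arc to $\ringcomp$ and lies exactly in the forbidden face. This contradiction finishes the interior case, and with the boundary cases, the proof.

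The hard part will be the bookkeeping of sides and faces in the middle paragraph: turning ``$p_{i,j}$ has no normal bundle'' into the precise statement that $\ringcomp$, the side $\Gamma$, and the orientations of $B_{i,j,1},B_{i,j,2},B_{i,j+1,1},B_{i,j+1,2}$ are all pinned down, so that $\tilde P$ exactly matches the hypotheses of Theorem~\ref{thm:oscillators} for $\disccomp=\disccomp_{\bar\Gamma,1}$ while $r_j$ provides the forbidden $f_2$-arc. Everything else is a direct assembly of Observation~\ref{obs:bij-props}, the definitions of ring and normal bundles and of bundle words with ring holes, and the isolation structure from Definition~\ref{def:isolation}.
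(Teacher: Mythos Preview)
Your proof is correct and follows essentially the same approach as the paper: the boundary cases $j\in\{0,h(i)\}$ are handled by the terminal bundle being normal, and the interior case is reduced to Theorem~\ref{thm:oscillators} applied to the subpath of $P_i$ spanning the two adjacent level-$1$ ring passages $r_j,r_{j+1}$ together with the arcs of $B_{i,j,1}$ and $B_{i,j+1,2}$. The paper's proof is a two-line sketch that merely asserts this subpath ``contains the structure forbidden by Theorem~\ref{thm:oscillators}''; your version supplies the bookkeeping (pinning down $\ringcomp$, the side $\Gamma$, and the choice $\disccomp=\disccomp_{\bar\Gamma,1}$) that the paper leaves implicit.
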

\begin{proof}
The claim is obvious for $j=0$ or $j=h(i)$, as then $p_{i,j}$ contains
a bundle with an arc incident to a terminal. Assume that the claim is not true for some
$1 \leq i \leq k$ and $0 < j < h(i)$. Then the subpath of $P_i$ between
the arcs corresponding to the last symbol of $p_{i,j-1}$ and the first symbol
$p_{i,j+1}$ contains the structure forbidden by Theorem \ref{thm:oscillators}.
\end{proof}
Again, if this is not the case, we terminate the current branch.

Recall that $\Lambda \geq 3$. For each $1 \leq i \leq k$ and $1 \leq j \leq h(i)$,
we define $B^\circ_{i,j,1}$ to be the last bundle of $p_{i,j-1}$ that contains
arcs leading from the level-$3$ isolation component of $\ringcomp_{i,j}$
to level-$2$ one, and $B^\circ_{i,j,2}$ to be the first bundle of $p_{i,j}$
that contains arcs leading from the level-$2$ isolation component of $\ringcomp_{i,j}$
to level-$3$ one.
Let $p^\circ_{i,j,1}$ be the suffix of $p_{i,j-1}$ starting with $B^\circ_{i,j,1}$
and $p^\circ_{i,j,2}$ be the prefix of $p_{i,j}$ ending with $B^\circ_{i,j,2}$.
Moreover, for $1 \leq i \leq k$, $0 \leq j \leq h(i)$ let
$q_{i,j}$ be the subword of $p_{i,j}$ between $B^\circ_{i,j,2}$ and $B^\circ_{i,j+1,1}$,
  where $B_{i,0,2}^\circ$ is the first symbol of $p_{i,0}$ and $B^\circ_{i,h(i)+1,1}$ is
  the last symbol of $p_{i,h(i)}$.

\begin{figure}[h!]
\begin{center}
\includegraphics[width=0.8\textwidth]{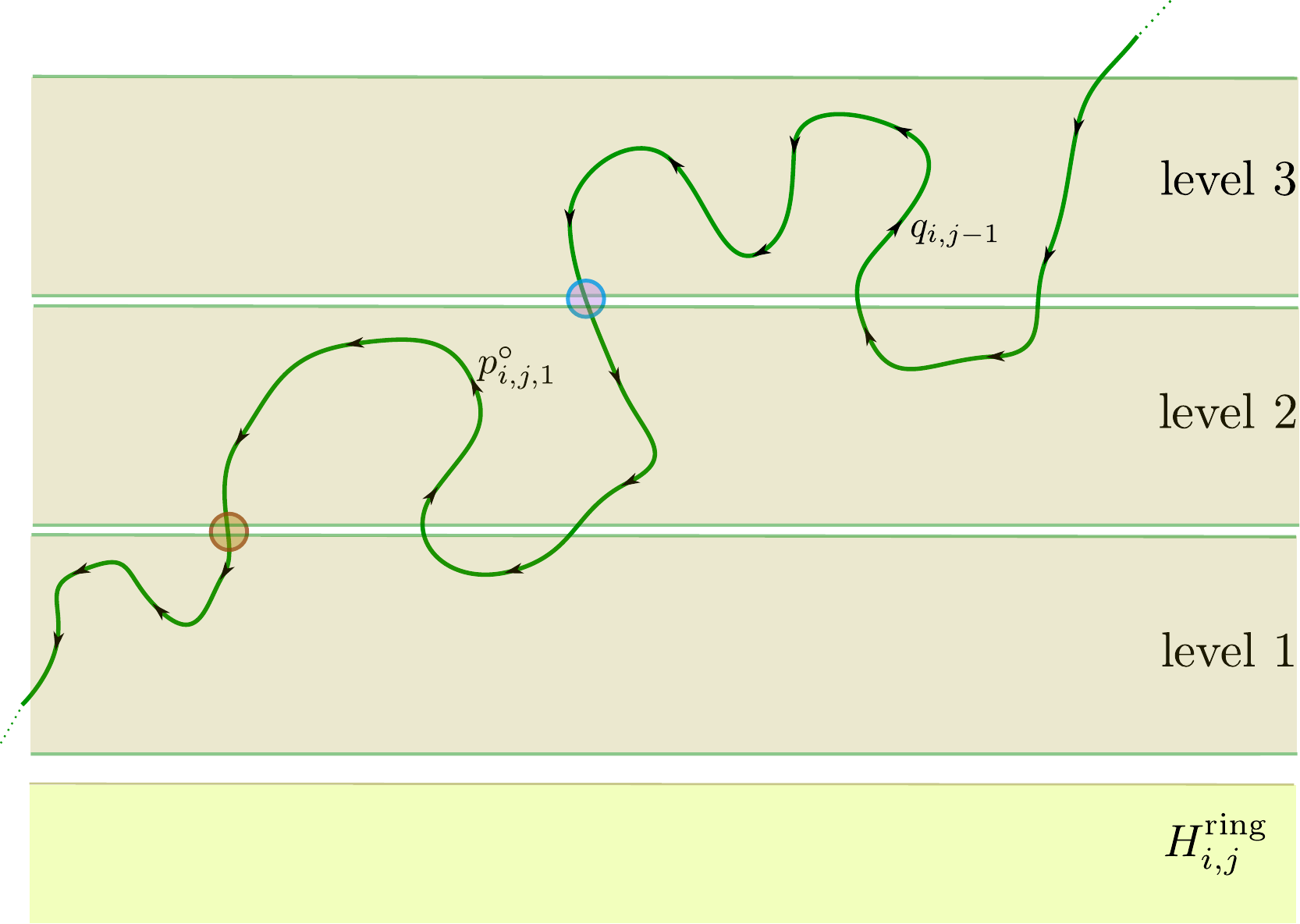}
\caption{An exemplary beginning of the $j$-th ring passage on path $P_i$. The blue circle depicts bundle $B_{i,j,1}^\circ$, where $q_{i,j-1}$ ends and $p^\circ_{i,j,1}$ starts. The orange circle depicts the last bundle of $p_{i,j}$ (thus also of $p_{i,j,1}^\circ$), where the level-$1$ hole starts.}
\label{fig:ring-passage}
\end{center}
\end{figure}

By Observation \ref{obs:extend-passages} and again
Theorem \ref{thm:oscillators} we have the following.
\begin{observation}\label{obs:levelcirc}
If there exists a minimal solution
$(P_i)_{i=1}^k$ to \probshort{} on $G$ such that $P_i$ is consistent with $\pi_i$,
then the bundles $B^\circ_{i,j,1}$ and $B^\circ_{i,j,2}$ are well defined and lie
on the opposite sides of $\ringcomp_{i,j}$.
Moreover, for $1 \leq \alpha \leq 2$, the bundle word $p_{i,j,\alpha}^\circ$, except for the symbol $B_{i,j,\alpha}$,
contains only bundles with arcs with both endpoints in level-$\lambda$, $1 \leq \lambda \leq 2$ isolation components of $\ringcomp_{i,j}$ that lie on the same side of $\ringcomp_{i,j}$
as $B^\circ_{i,j,\alpha}$. 
\end{observation}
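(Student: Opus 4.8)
Looking at Observation~\ref{obs:levelcirc}, the claim is essentially a structural statement about how a minimal solution interacts with the nested isolation layers of a ring component. The plan is to derive both parts — well-definedness of $B^\circ_{i,j,1}, B^\circ_{i,j,2}$ (plus their lying on opposite sides of $\ringcomp_{i,j}$), and the restriction on the bundles in $p^\circ_{i,j,\alpha}$ — from Theorem~\ref{thm:oscillators} (the ``no oscillators'' statement) applied at the appropriate levels, combined with Observation~\ref{obs:extend-passages} which already guarantees each $p_{i,j}$ contains a normal bundle.

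First I would fix a minimal solution $(P_i)_{i=1}^k$ consistent with the $\pi_i$'s, and focus on one index $(i,j)$ with $1 \le j \le h(i)$. By Observation~\ref{obs:bij-props} we already have the ring component $\ringcomp_{i,j}$ together with the fact that $B_{i,j,1}$ leads from a level-$2$ isolation component of $\ringcomp_{i,j}$ inward to a level-$1$ one, $B_{i,j,2}$ leads outward from level-$1$ to level-$2$, and these two bundles lie on opposite sides of $\ringcomp_{i,j}$. Now consider the subpath $R$ of $P_i$ that realizes the $j$-th level-$1$ ring passage. Since $B_{i,j,1}$ and $B_{i,j,2}$ lie on opposite sides of the ring, $R$ genuinely crosses from the inner face of $\ringcl^1(\ringcomp_{i,j})$ to the outer one (or vice versa). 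To define $B^\circ_{i,j,1}$, I would trace $p_{i,j-1}$ backwards from $B_{i,j,1}$: as one walks back from $B_{i,j,1}$, by Observation~\ref{obs:extend-passages} one eventually reaches a normal bundle, so along the way the path must leave $\ringcl^3(\ringcomp_{i,j})$; the last bundle on $p_{i,j-1}$ crossing from the level-$3$ isolation component of $\ringcomp_{i,j}$ into the level-$2$ one is exactly $B^\circ_{i,j,1}$, and it exists because the path does reach the level-$2$ isolation (it contains $B_{i,j,1}$) but also exits (it contains a normal bundle further back). The analogous forward argument on $p_{i,j}$ gives $B^\circ_{i,j,2}$.

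The key obstacle — and the heart of the argument — is showing that $B^\circ_{i,j,1}$ and $B^\circ_{i,j,2}$ lie on \emph{opposite} sides of $\ringcomp_{i,j}$, and that the interpolating bundle words $p^\circ_{i,j,\alpha}$ (apart from their first symbol $B_{i,j,\alpha}$) stay confined to level-$1$ and level-$2$ isolation components \emph{on the same side} as $B^\circ_{i,j,\alpha}$. Both are ruled out by Theorem~\ref{thm:oscillators}: if $B^\circ_{i,j,1}$ and $B_{i,j,1}$ were on the same side of $\ringcomp_{i,j}$, then the subpath of $P_i$ from the arc of $B^\circ_{i,j,1}$ to the arc of $B_{i,j,1}$ would start and end with arcs in the same face of the appropriate isolation component (the level-$1$ one, say $\disccomp = \disccomp_{\Gamma,1}$), yet would reach an arc of $B_{i,j,2}$, which lies in the \emph{other} face of $\ringcl(\ringcomp_{i,j})$ — this is precisely the configuration forbidden by Theorem~\ref{thm:oscillators} applied with $\disccomp$ an isolation component of $\ringcomp_{i,j}$. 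Similarly, any bundle in $p^\circ_{i,j,\alpha}$ that either lies on the opposite side of $\ringcomp_{i,j}$ or sits at isolation level $3$ or higher would force the corresponding subpath of $P_i$ (which starts and ends on the same face of a level-$2$, resp.\ level-$3$, isolation component) to touch the opposite face — again contradicting Theorem~\ref{thm:oscillators} or the definition of $B^\circ_{i,j,\alpha}$ as the \emph{last}/\emph{first} such crossing bundle.

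In carrying this out I would be careful about which isolation component and which level the instance of Theorem~\ref{thm:oscillators} is applied to: for the ``opposite sides'' part one applies it to the level-$1$ component $\disccomp_{\Gamma,1}$ with $f_1, f_2$ its two external faces, and for the confinement of $p^\circ_{i,j,\alpha}$ one applies it to $\disccomp_{\Gamma,2}$ (and, to rule out level-$3$-or-higher bundles, uses the maximality in the choice of $B^\circ_{i,j,\alpha}$ together with a further application at $\disccomp_{\Gamma,3}$). The remaining bookkeeping — that the predecessor/successor arcs of each relevant subpath indeed lie in the correct faces, so that the hypotheses of Theorem~\ref{thm:oscillators} are literally met, and that the minimality of $(P_i)_{i=1}^k$ is exactly what Theorem~\ref{thm:oscillators} consumes — is routine given Observations~\ref{obs:bij-props} and~\ref{obs:extend-passages}. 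The only genuinely delicate point is ensuring that when we ``trace back along $p_{i,j-1}$'' we never exit and re-enter $\ringcl^2(\ringcomp_{i,j})$ before hitting $B^\circ_{i,j,1}$; this too follows from Theorem~\ref{thm:oscillators} applied at level $2$, so the whole statement reduces to a handful of invocations of the no-oscillators theorem at levels $1$, $2$, and $3$.
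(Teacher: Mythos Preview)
Your overall plan matches the paper's: the observation is justified there in a single line ``By Observation~\ref{obs:extend-passages} and again Theorem~\ref{thm:oscillators}'', and your sketch correctly identifies these as the only two ingredients and applies them at the right isolation levels.

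There is one garbled step in your execution. In the ``opposite sides'' argument you write ``if $B^\circ_{i,j,1}$ and $B_{i,j,1}$ were on the same side of $\ringcomp_{i,j}$\ldots'': this is the \emph{desired} configuration, not the bad one, since $B_{i,j,1}$ and $B_{i,j,2}$ are already known to be on opposite sides. The case to rule out is $B^\circ_{i,j,1}$ and $B_{i,j,1}$ lying on \emph{opposite} sides. Moreover, the subpath from the arc in $B^\circ_{i,j,1}$ to the arc in $B_{i,j,1}$ lies entirely in $p_{i,j-1}$ and does \emph{not} reach $B_{i,j,2}$ (which is the first symbol of $p_{i,j}$), so the contradiction you state cannot be obtained this way. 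The correct argument is: if $B^\circ_{i,j,1}$ were on the side opposite to $B_{i,j,1}$, then the subpath between them stays in $\ringcl^2(\ringcomp_{i,j})$ (no later level-$3$-to-level-$2$ bundle, by maximality of $B^\circ_{i,j,1}$) yet connects the two sides; this forces a level-$1$ ring passage inside $p_{i,j-1}$, contradicting the construction of the words $p_{i,j}$ (equivalently, it yields the forbidden oscillation of Theorem~\ref{thm:oscillators} at the level-$2$ isolation component). With this correction your sketch goes through.
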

Again, if this is not the case, we terminate the current branch.

Now note that for ring passages in the same ring component $\ringcomp$, by Observation \ref{obs:pm1}, the winding numbers cannot differ too much.
\begin{observation}\label{obs:wij-diff}
Let $(P_i)_{i=1}^k$ be a minimal solution to \probshort{} on $G$ such that
$P_i$ is consistent with $\pi_i$ for each $1 \leq i \leq k$.
For $1 \leq i \leq k$, let $((p_{i,j})_{j=0}^{h(i)}, (w_{i,j})_{j=1}^{h(i)})$ be the
bundle word with level-$1$ ring holes of $P_i$.
Then, if for some $(i,j)$ and $(i',j')$ we have $\ringcomp_{i,j} = \ringcomp_{i',j'}$,
then $|w_{i,j} - w_{i',j'}| \leq 1$.
\end{observation}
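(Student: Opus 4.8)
\textbf{Proof plan for Observation~\ref{obs:wij-diff}.} The plan is to reduce the statement to the single-ring observation already available, namely Observation~\ref{obs:pm1}. The starting point is to interpret the two ring passages in question as two vertex-disjoint directed paths through the (level-$1$) closure of the common ring component $\ringcomp := \ringcomp_{i,j} = \ringcomp_{i',j'}$. Concretely, let $R$ be the subpath of $P_i$ corresponding to the level-$1$ ring passage with bundle word hole $r_j$ (so that $w_{i,j}$ is its winding number in $\ringcl^1(\ringcomp)$ with respect to $\refcurve^1(\ringcomp)$), and let $R'$ be the corresponding subpath of $P_{i'}$. Both $R$ and $R'$ connect the outer face of $\ringcl^1(\ringcomp)$ with its inner face, since they are level-$1$ ring passages; they are vertex-disjoint because $P_i$ and $P_{i'}$ are vertex-disjoint in the solution (if $i=i'$, then $R$ and $R'$ are two distinct ring passages of the same path, hence internally vertex-disjoint, and in any case share no internal vertex). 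Thus we have precisely the configuration of Observation~\ref{obs:pm1}: two vertex-disjoint directed paths crossing a rooted ring.

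First I would verify that $R$ and $R'$ go in the same direction across $\ringcl^1(\ringcomp)$, i.e.\ that the compatibility hypothesis of Observation~\ref{obs:pm1} holds with $k=2$ and a suitable sign sequence $\Cc = (c,c)$. This is where Observation~\ref{obs:bij-props} enters: the bundles $B_{i,j,1}$ and $B_{i,j,2}$ (the predecessor and successor bundles of the $j$-th ring hole on $P_i$) lie on opposite sides of $\ringcomp$, and similarly for $B_{i',j',1}$, $B_{i',j',2}$. In fact, since all these bundles lead between level-$2$ and level-$1$ isolation components of $\ringcomp$ with a fixed orientation coming from the isolation structure (Definition~\ref{def:isolation}), the direction in which a ring passage traverses $\ringcl^1(\ringcomp)$ is forced: a passage entering from the outer side via $B_{\cdot,\cdot,1}$ must leave on the inner side via $B_{\cdot,\cdot,2}$, and the orientation of the isolating bundles determines whether this is ``outward to inward'' or the reverse. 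Hence $R$ and $R'$ both cross in the same direction. (If the guessing procedure produced passages crossing in incompatible directions, the current branch would already have been terminated by the reasoning following Observation~\ref{obs:levelcirc}; so we may assume consistency.)

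The remaining ingredient is to extend $R$ and $R'$ to genuine $C_1$-to-$C_2$ paths in a ring so that Observation~\ref{obs:pm1} literally applies, with endpoints lying in a consistent cyclic order on the two boundary cycles. This is a routine matter: the endpoints of $R$ and $R'$ lie on the bundles $B_{\cdot,\cdot,1}$ (outer side) and $B_{\cdot,\cdot,2}$ (inner side) of $\ringcl^1(\ringcomp)$, and since $P_i$ and $P_{i'}$ are disjoint these endpoints occur in matching orders on the two sides. One then appends short segments inside the outer and inner faces of $\ringcl^1(\ringcomp)$ (exactly as in the ``core'' plus ``extensions'' construction used in the proof of Theorem~\ref{thm:spiral}) to turn $R, R'$ into paths from the outer boundary cycle $C_1$ of the rooted ring $\ringcl^1(\ringcomp)$ to its inner boundary cycle $C_2$, without changing winding numbers with respect to $\refcurve^1(\ringcomp)$ and without creating intersections. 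Applying Observation~\ref{obs:pm1} to this pair of paths yields $|w_{i,j} - w_{i',j'}| = |W(R) - W(R')| \le 1$, which is the claim. The only mildly delicate point — and the one I would expect to require the most care — is checking that the two extended paths respect the cyclic-order hypothesis of Observation~\ref{obs:pm1}; but planarity of $G$ together with vertex-disjointness of $P_i$ and $P_{i'}$ makes this automatic, since two disjoint arcs crossing a ring cannot ``braid'' around each other.
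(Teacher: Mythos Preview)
Your approach is the same as the paper's: the observation follows directly from Observation~\ref{obs:pm1} applied to the collection of level-$1$ ring passages through $\ringcl^1(\ringcomp)$, which are pairwise vertex-disjoint paths crossing a rooted ring. The paper in fact treats this as essentially immediate, giving only a one-line pointer to Observation~\ref{obs:pm1}.

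One small simplification: your paragraph verifying that $R$ and $R'$ cross in the \emph{same} direction is unnecessary. Observation~\ref{obs:pm1} allows an arbitrary sign sequence $\Cc=(c_1,\dots,c_k)$, not just a constant one, so the conclusion $|W(P_i)-W(P_{i'})|\le 1$ holds regardless of whether the two passages traverse the ring in the same or in opposite directions. Your remaining checks (vertex-disjointness even when $i=i'$, matching cyclic orders on the two boundaries) are correct and, as you note, automatic from planarity and simplicity of the solution paths.
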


Observation \ref{obs:wij-diff} motivates us to the following branch.
For each $1 \leq i \leq k$ and for each $1 \leq j \leq h(i)$
we branch into three subcases, picking an integer $-1 \leq \alpha_{i,j} \leq 1$.
By Observation \ref{obs:I-bound}, this step leads to at most $3^{4|\bundleset|^2 k^2}$
subcases.
We say that a solution $(P_i)_{i=1}^k$ is {\em{consistent with the current branch}}
for each ring component $\ringcomp$ there exists an integer $w(\ringcomp)$ such that
for each $1 \leq i \leq k$, the bundle word with level-$1$ ring holes of $P_i$
equals 
$$((p_{i,j})_{j=0}^{h(i)}, (w(\ringcomp_{i,j}) + \alpha_{i,j})_{j=1}^{h(i)}).$$
By Observation \ref{obs:wij-diff} we obtain the following.
\begin{observation}\label{obs:good-alpha}
If there exists a minimal solution $(P_i)_{i=1}^k$ to \probshort{} on $G$ such that
$P_i$ is consistent with $\pi_i$ for each $1 \leq i \leq k$,
then there exists a subcase with a choice of integers $\alpha_{i,j}$
such that $(P_i)_{i=1}^k$ is consistent with this branch.
\end{observation}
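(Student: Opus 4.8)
The final statement to prove is Observation~\ref{obs:good-alpha}, which asserts that if a minimal solution consistent with the guessed partial bundle words exists, then one of the branches (a choice of small offsets $\alpha_{i,j} \in \{-1,0,1\}$) is consistent with it. The proof plan is essentially a direct invocation of the structural facts accumulated in the preceding observations, so I do not expect any real obstacle; the work is in assembling them correctly.

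First I would take the hypothesized minimal solution $(P_i)_{i=1}^k$ consistent with $(\pi_i)_{i=1}^k$ and apply Lemma~\ref{lem:holes-vs-sol} to obtain, for each $1 \le i \le k$, integers $(w_{i,j})_{j=1}^{h(i)}$ such that $((p_{i,j})_{j=0}^{h(i)}, (w_{i,j})_{j=1}^{h(i)})$ is a bundle word with level-$1$ ring holes consistent with $P_i$; here $(p_{i,j})_{j=0}^{h(i)}$ is exactly the sequence produced by Lemma~\ref{lem:pbw-to-holes} from $\pi_i$, so the ``$p$-part'' of the bundle word with ring holes is forced by the branch and does not depend on the solution. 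Next, for each ring component $\ringcomp$ of $\decomp$, I would pick a reference value $w(\ringcomp)$: concretely, choose some pair $(i_0,j_0) \in I(\ringcomp)$ (if $I(\ringcomp) = \emptyset$ then no ring passage of the solution uses $\ringcomp$ and the choice of $w(\ringcomp)$ is irrelevant, so set it to $0$), and let $w(\ringcomp) := w_{i_0,j_0}$.

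The key step is then to verify, for every $(i,j)$ with $1 \le i \le k$ and $1 \le j \le h(i)$, that $w_{i,j} - w(\ringcomp_{i,j}) \in \{-1,0,1\}$, so that setting $\alpha_{i,j} := w_{i,j} - w(\ringcomp_{i,j})$ gives a legal branch choice and makes $(P_i)_{i=1}^k$ consistent with it in the sense defined just before the observation. This is immediate from Observation~\ref{obs:wij-diff}: since $\ringcomp_{i,j} = \ringcomp_{i_0,j_0} = \ringcomp$ where $(i_0,j_0)$ is the pair used to define $w(\ringcomp)$, we have $|w_{i,j} - w(\ringcomp)| = |w_{i,j} - w_{i_0,j_0}| \le 1$. (Observation~\ref{obs:wij-diff} in turn rests on Observation~\ref{obs:pm1}, which bounds the spread of winding numbers of a compatible family of disjoint paths crossing a ring by one.) The definition of ``consistent with the current branch'' requires exactly that the bundle word with level-$1$ ring holes of $P_i$ equal $((p_{i,j})_{j=0}^{h(i)}, (w(\ringcomp_{i,j}) + \alpha_{i,j})_{j=1}^{h(i)})$, and by construction $w(\ringcomp_{i,j}) + \alpha_{i,j} = w_{i,j}$, while the $p$-part matches by Lemma~\ref{lem:pbw-to-holes} and the uniqueness of the bundle word with level-$1$ ring holes consistent with $P_i$ (Lemma~\ref{lem:path-to-bwholes}).

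Finally I would note that the branching described before the observation enumerates all choices of integers $\alpha_{i,j} \in \{-1,0,1\}$ over all $1 \le i \le k$, $1 \le j \le h(i)$; since the values $h(i)$ are determined by the $(p_{i,j})$ sequences (hence by the branch already fixed) and satisfy $\sum_i h(i) \le 4|\bundleset|^2 k^2$ by Observation~\ref{obs:I-bound} (an application of Theorem~\ref{thm:isolation-passages}), the particular tuple $(\alpha_{i,j})$ constructed above occurs among the at most $3^{4|\bundleset|^2 k^2}$ generated subcases. That subcase is then consistent with the minimal solution $(P_i)_{i=1}^k$, which is exactly the assertion of Observation~\ref{obs:good-alpha}. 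The only mild subtlety to be careful about is that the reference value $w(\ringcomp)$ must be chosen \emph{once} per ring component and used uniformly for all $(i,j)$ with $\ringcomp_{i,j} = \ringcomp$ — this is what makes the single integer $w(\ringcomp)$ in the definition of consistency well-defined — but Observation~\ref{obs:wij-diff} precisely guarantees this is possible with offsets in $\{-1,0,1\}$.
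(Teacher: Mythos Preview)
Your proposal is correct and follows exactly the paper's approach: the paper derives Observation~\ref{obs:good-alpha} directly from Observation~\ref{obs:wij-diff}, and your writeup just spells out the (obvious) details of choosing a reference winding number $w(\ringcomp)$ per ring component and defining $\alpha_{i,j} := w_{i,j} - w(\ringcomp_{i,j})$.
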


Our goal now is, for a fixed branch with integers $\alpha_{i,j}$ and
for a fixed ring component $\ringcomp \in \decomp$ such that $I(\ringcomp) \neq \emptyset$,
to compute a set of bounded size of possible candidates for $w(\ringcomp)$.

For each ring component $\ringcomp \in \decomp$ such that $I(\ringcomp) \neq \emptyset$
and for each integer $-n \leq w \leq n$
we construct an zoom auxiliary instance $(H(\ringcomp,w), \zdecomp_H(\ringcomp,w), \bundleset_H(\ringcomp,w))$ as follows. First we take a zoom $(\zdecomp(\ringcomp),\zbundleset(\ringcomp))$ that includes all components in the level-$2$ closure of $\ringcomp$ and bundles with arcs with both endpoints in these components. Then, for each $(i,j) \in I(\ringcomp)$ we construct bundle word with level-$2$ ring holes $\bwholes^\circ_{i,j} = ((p^\circ_{i,j,1}, p^\circ_{i,j,2}), w+\alpha_{i,j})$;
note that, by the definition of $p^\circ_{i,j,1}$ and $p^\circ_{i,j,2}$, this pair is indeed
a bundle word with level-$2$ ring holes and a level-$2$ zoom pass in $(\zdecomp(\ringcomp), \zbundleset(\ringcomp))$ as well. 
Finally, using Lemma \ref{lem:psi-bound}, we branch into at most $(|I(\ringcomp)|!)^2$
subcases, guessing, for each $B \in \bundleset$ and $1 \leq \alpha \leq 2$, a permutation $\psi_{B,\alpha}(\ringcomp)$ of those indices $(i,j) \in I(\ringcomp)$ for which $B_{i,j,\alpha} = B$.
Thus, $((\bwholes^\circ_{i,j})_{(i,j) \in I(\ringcomp)}, (\psi_{B,\alpha}(\ringcomp))_{B \in \bundleset, 1 \leq \alpha \leq 2})$
is a pack of zoom passes in $(\zdecomp(\ringcomp), \zbundleset(\ringcomp))$.
The zoom auxiliary instance $(H(\ringcomp,w), \zdecomp_H(\ringcomp,w), \bundleset_H(\ringcomp,w))$ is defined as the zoom auxiliary instance for this pack of zoom passes.

The discussion in Section \ref{sec:zooms} concluded with Observation \ref{obs:zoom-solution}
immediately yields the following.
\begin{lemma}\label{lem:ring-guessing-equiv1}
If there exists a minimal solution $(P_i)_{i=1}^k$ to \probshort{} on $G$
such that
\begin{enumerate}
\item for each $1 \leq i \leq k$, $P_i$ is consistent with $\pi_i$;
\item for each ring component $\ringcomp$ where $I(\ringcomp) \neq \emptyset$ there exists an integer $w(\ringcomp)$
  such that for each $1 \leq i \leq k$ and $1 \leq j \leq h(i)$ the level-$1$ ring passage
  that corresponds to the part of the bundle word of $P_i$ between $p_{i,j-1}$ and $p_{i,j}$
  has winding number $w(\ringcomp) + \alpha_{i,j}$;
\item for each ring component $\ringcomp$, $B \in \bundleset$ and $1 \leq \alpha \leq 2$,
  the order of the arcs $b_{i,j,\alpha}$ for $(i,j) \in I(\ringcomp)$ and
  $b_{i,j,\alpha} \in B$ is equal to $\psi_{B,\alpha}(\ringcomp)$,
\end{enumerate}
then, for each $\ringcomp$ where $I(\ringcomp) \neq \emptyset$ the zoom auxiliary
instance constructed for the pair $(\ringcomp, w(\ringcomp))$ has a solution
$(P_\tau^\ast)_{\tau \in I(\ringcomp)}$, where the bundle word with level-$1$
ring holes of $P_\tau^\ast$ equals $\bwholes^\circ_\tau$ up to a prefix and suffix
that corresponds to the subpath in the zoom starting and ending gadgets.
\end{lemma}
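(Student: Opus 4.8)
The plan is to deduce this lemma directly from Observation~\ref{obs:zoom-solution}, applied to the minimal solution $(P_i)_{i=1}^k$ and, for a fixed ring component $\ringcomp$ with $I(\ringcomp)\neq\emptyset$, to the zoom $(\zdecomp(\ringcomp),\zbundleset(\ringcomp))$ out of which the zoom auxiliary instance $(H(\ringcomp,w(\ringcomp)),\zdecomp_H(\ringcomp,w(\ringcomp)),\bundleset_H(\ringcomp,w(\ringcomp)))$ is built. The first step is to verify that this zoom is level-$2$ safe: since $\zdecomp(\ringcomp)$ consists precisely of the components of $\ringcl^2(\ringcomp)$ and $\zbundleset(\ringcomp)$ of all bundles with both endpoints in those components, and since the level-$2$ closures of distinct ring components are pairwise disjoint, the first alternative in the definition of level-$2$ safety holds for $\ringcomp$ itself and the second one holds for every other ring component. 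Hence Observation~\ref{obs:zoom-solution} is applicable and returns a pack of level-$2$ zoom passes, together with a solution of the associated zoom auxiliary instance.

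The core of the argument is then to check that the pack of zoom passes that Observation~\ref{obs:zoom-solution} extracts from $(P_i)_{i=1}^k$ is exactly the pack $((\bwholes^\circ_{i,j})_{(i,j)\in I(\ringcomp)},(\psi_{B,\alpha}(\ringcomp))_{B\in\bundleset,\,1\le\alpha\le2})$ used to construct $H(\ringcomp,w(\ringcomp))$. I would first show that the zoom incidents of $(P_i)_{i=1}^k$ with respect to $(\zdecomp(\ringcomp),\zbundleset(\ringcomp))$ are precisely the subpaths of the $P_i$'s delimited by the arcs on $B^\circ_{i,j,1}$ and $B^\circ_{i,j,2}$ for $(i,j)\in I(\ringcomp)$. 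That each such subpath is a zoom incident is read off Observation~\ref{obs:levelcirc}: $B^\circ_{i,j,1}$ and $B^\circ_{i,j,2}$ leave $\ringcl^2(\ringcomp)$ towards the level-$3$ isolation, so they are not in $\zbundleset(\ringcomp)$ but each has an endpoint in a component of $\zdecomp(\ringcomp)$, whereas the bundles occurring in $p^\circ_{i,j,1}$, $p^\circ_{i,j,2}$ and in the level-$2$ ring passage between them lie inside $\ringcl^2(\ringcomp)$, i.e.\ in $\zbundleset(\ringcomp)$ or in components of $\zdecomp(\ringcomp)$. That there are no further zoom incidents uses Theorem~\ref{thm:oscillators}/Corollary~\ref{cor:bwholes-minsol}: a path of a minimal solution cannot enter and leave $\ringcl^2(\ringcomp)$ along its isolation in any way that is not already one of the prescribed passages (flatness of the level-$1$ bundle word with ring holes of $P_i$ forbids a level-$1$ ring passage from touching $\ringcl^2(\ringcomp)$ more than once, and Theorem~\ref{thm:oscillators} forbids a ``visitor'' entering and leaving $\ringcl^2(\ringcomp)$ on the same side).

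Having identified the zoom incidents, the remaining matching is routine. The bundle-word part of the level-$2$ bundle word with ring holes of the incident delimited by $B^\circ_{i,j,1},B^\circ_{i,j,2}$ is $(p^\circ_{i,j,1},p^\circ_{i,j,2})$ by the very definition of $p^\circ_{i,j,1}$ and $p^\circ_{i,j,2}$; its single hole is the winding number of the level-$2$ ring passage it contains, which equals $w(\ringcomp)+\alpha_{i,j}$ by assumption~(2) together with Lemma~\ref{lem:bwholes-projection}, using that the reference curves of Lemma~\ref{lem:refcurve-choice} are nested and avoid the isolating components (so the winding number of the level-$2$ ring passage coincides with that of its core level-$1$ ring passage). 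Thus this bundle word with ring holes is exactly $\bwholes^\circ_{i,j}$. Finally, assumption~(3) identifies, for each $B\in\bundleset$ and $\alpha\in\{1,2\}$, the order in which the delimiting arcs $b_{i,j,\alpha}$ occur on $B$ with the permutation $\psi_{B,\alpha}(\ringcomp)$. Consequently the pack of zoom passes produced by Observation~\ref{obs:zoom-solution} is the one used to build $H(\ringcomp,w(\ringcomp))$, and the observation hands us the desired solution $(P_\tau^\ast)_{\tau\in I(\ringcomp)}$ whose bundle word with ring holes agrees with $\bwholes^\circ_\tau$ up to the prefix and suffix contained in the zoom starting and ending gadgets.

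I expect the only genuinely nontrivial point to be the bookkeeping in the second paragraph — pinning down that the zoom incidents of the minimal solution are exactly the subpaths the construction anticipates. All the structural inputs for this (Observations~\ref{obs:bij-props}, \ref{obs:levelcirc}, \ref{obs:extend-passages}, Theorem~\ref{thm:oscillators}, Corollary~\ref{cor:bwholes-minsol}) have already been recorded for minimal solutions earlier in this section, so the proof reduces to assembling them and invoking Observation~\ref{obs:zoom-solution}, which is why the statement ``immediately yields''.
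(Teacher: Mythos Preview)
Your proposal is correct and matches the paper's approach exactly: the paper states only that the discussion in Section~\ref{sec:zooms} concluded with Observation~\ref{obs:zoom-solution} ``immediately yields'' this lemma, and you have unpacked precisely what that invocation entails. One small simplification: since $(\zdecomp(\ringcomp),\zbundleset(\ringcomp))$ is level-$1$ safe as well as level-$2$ safe, you may apply Observation~\ref{obs:zoom-solution} directly with $\lambda=1$, in which case the hole of the extracted zoom pass is the winding number of the level-$1$ ring passage itself, and assumption~(2) gives it on the nose without the detour through Lemma~\ref{lem:bwholes-projection}.
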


For each ring component $\ringcomp$ where $I(\ringcomp) \neq \emptyset$ and for each
$-n \leq w \leq n$ we apply Theorem \ref{thm:words-to-paths} to 
the zoom auxiliary instance $(H(\ringcomp,w), \zdecomp_H(\ringcomp,w), \bundleset_H(\ringcomp,w))$. For each $\ringcomp$, let $w(\ringcomp)$ be an integer for which Theorem \ref{thm:words-to-paths} returned a solution; if such an integer does not exist, by Lemma \ref{lem:ring-guessing-equiv1} we may safely terminate the current branch.
Let $(P_\tau^\ast)_{\tau \in I(\ringcomp)}$ be the solution returned
by Theorem \ref{thm:words-to-paths} and let $w^\ast_\tau$ be the winding number
of $P_\tau^\ast$ in the level-$2$ closure of $\ringcomp$ (which is a subgraph
of both $G$ and $H(\ringcomp,w(\ringcomp))$).

We now prove the following crucial claim.

\begin{lemma}\label{lem:ring-guessing-equiv2}
If there exists a minimal solution $(P_i)_{i=1}^k$ to \probshort{} on $G$
such that
\begin{enumerate}
\item for each $1 \leq i \leq k$, $P_i$ is consistent with $\pi_i$, and
\item for each ring component $\ringcomp$, $B \in \bundleset$ and $1 \leq \alpha \leq 2$,
  the order of the arcs $b_{i,j,\alpha}$ for $(i,j) \in I(\ringcomp)$ and
  $b_{i,j,\alpha} \in B$ is equal to $\psi_{B,\alpha}(\ringcomp)$,
\end{enumerate}
then, there exists integers $(x_{i,j})_{1 \leq i \leq k, 1 \leq j \leq h(i)}$ such that
$|x_{i,j} - w^\ast_{i,j}| \leq 40|\bundleset|^2 +2|\bundleset|+8$
for each $1 \leq i \leq k$, $1 \leq j \leq h(i)$ and a solution $(P_i')_{i=1}^k$ to \probshort{} on $G$ (not necessarily minimal)
such that $P_i'$ is consistent with bundle word with level-$2$ ring holes
$((q_{i,j})_{j=0}^{h(i)}, (x_{i,j})_{j=1}^{h(i)})$ for each $1 \leq i \leq k$.
\end{lemma}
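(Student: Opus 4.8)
The plan is to combine three ingredients established earlier: the rerouting lemma for rings (Lemma~\ref{lem:ringhomotopy}), the self-improvement argument for minimal solutions in a spiraling ring (Lemma~\ref{lem:minsol-spiral}), and the one-directional routing bound (Lemma~\ref{lem:flow-routing}), applied to the zoom auxiliary instances $(H(\ringcomp,w(\ringcomp)),\ldots)$ just constructed. First I would take a minimal solution $(P_i)_{i=1}^k$ consistent with $(\pi_i)_{i=1}^k$ and with the guessed permutations $\psi_{B,\alpha}(\ringcomp)$; by Lemma~\ref{lem:holes-vs-sol} and Corollary~\ref{cor:bwholes-minsol} it has a level-$1$ bundle word with ring holes $((p_{i,j})_{j=0}^{h(i)},(w_{i,j})_{j=1}^{h(i)})$ with no level-$0$ ring bundles. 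Fix a ring component $\ringcomp$ with $I(\ringcomp)\ne\emptyset$. Consider the subpaths $R_{i,j}$ of $P_i$ corresponding to the bundle words $p^\circ_{i,j,1}q_{i,j-1}\ldots$ — more precisely the maximal subpaths contained in the level-$2$ closure of $\ringcomp$ that traverse it from one side to the other, $(i,j)\in I(\ringcomp)$. By Observations~\ref{obs:bij-props}, \ref{obs:levelcirc} and Corollary~\ref{cor:no-visitors}, these are exactly the intersections of the solution with $\ringcl^2(\ringcomp)$, they all cross, and their endpoints lie on the bundles $B^\circ_{i,j,\alpha}$ in the order $\psi_{B,\alpha}(\ringcomp)$, which is the same order as for the solution $(P^\ast_\tau)_{\tau\in I(\ringcomp)}$ returned by Theorem~\ref{thm:words-to-paths}.

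Next I would apply Lemma~\ref{lem:ringhomotopy} inside $\ringcl^2(\ringcomp)$, treated as a rooted ring with reference curve $\refcurve^2(\ringcomp)$, to the two families $\{R_{i,j}\}$ (playing the role of $P_\bullet$) and $\{P^\ast_\tau\}$ (playing the role of $Q_\bullet$): since both families have the same endpoint pattern and directions, there is a rerouting $\{R'_{i,j}\}$ with the same endpoints whose winding numbers differ from those of the corresponding $P^\ast_\tau$ by at most $6$. Replacing $R_{i,j}$ by $R'_{i,j}$ in $P_i$ (which is legal because the $R_{i,j}$ are the only parts of the solution meeting $\ringcl^2(\ringcomp)$) gives a new solution $(P_i')_{i=1}^k$; setting $x_{i,j}$ to be the winding number of the $j$-th level-$2$ ring passage of $P_i'$, we get $|x_{i,j}-w^\ast_{i,j}|\le 6$, and $P_i'$ is consistent with $((q_{i,j})_{j=0}^{h(i)},(x_{i,j})_{j=1}^{h(i)})$ since the bundle-word portions $q_{i,j}$ outside the ring closures are untouched and the $p^\circ_{i,j,\alpha}$ prefixes/suffixes only live inside the level-$2$ isolation, which Lemma~\ref{lem:ringhomotopy} preserves as endpoints.

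The subtlety — and the main obstacle — is that this naive argument only controls $|x_{i,j}-w^\ast_{i,j}|$ if $w^\ast_\tau$ is already close to the winding number $w_{i,j}$ of the minimal solution; a priori Theorem~\ref{thm:words-to-paths} could return $P^\ast_\tau$ spiraling far more than the minimal solution does, because it follows the bundle words with ring holes only in the loose ``subword'' sense. This is where I need the minimality argument as in Lemma~\ref{lem:bw-guess:equiv2}: if $\bundleword(P_i)$ restricted to $\ringcl^2(\ringcomp)$ contains a large term $u^r$ (and by Lemma~\ref{lem:spirals-in-rings} any long spiral inside the ring closure lives inside a ring passage, so it genuinely spirals), then the associated spiraling ring $A$ is a subspace of $\ringcl^2(\ringcomp)$, all of $(P^\ast_\tau)$ crossing it must, by Lemma~\ref{lem:minsol-spiral}, contain $u^{r-10}$ in their bundle words; comparing $|\bundleword(P^\ast_\tau)|$ with the length of the input word $\bwholes^\circ_\tau$ (whose spiral exponents outside $u$ are all bounded by $4|\bundleset|^2k^2+1$ via Theorem~\ref{thm:isolation-passages}, so are small except the one we measure) forces $r\le w^\ast+O(|\bundleset|)$, hence $|w_{i,j}-w^\ast_{i,j}|=O(|\bundleset|^2)$. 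Combining with the $6$ from the rerouting step, and accounting for the constant $4$-alternation and $+1$ losses in passing between level-$1$, level-$2$ reference curves (Lemma~\ref{lem:bwholes-projection} and the definition of $\refcurve^\lambda$), yields the claimed bound $40|\bundleset|^2+2|\bundleset|+8$. I would carry out the minimality/length-counting step carefully first (it is the load-bearing computation), then the rerouting step, then bookkeep the additive constants.
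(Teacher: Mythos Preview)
Your second paragraph contains a genuine error: the claim that the level-$2$ ring passages $R_{i,j}$ are ``exactly the intersections of the solution with $\ringcl^2(\ringcomp)$'' is false, and Corollary~\ref{cor:no-visitors} does not give it. That corollary only forbids ring visitors at the level of the \emph{full} closure $\ringcl(\ringcomp)=\ringcl^\Lambda(\ringcomp)$. A minimal solution can (and in general will) have subpaths that enter the level-$2$ closure and return to the same side without ever reaching level~$1$; Theorem~\ref{thm:oscillators} only guarantees such subpaths stay inside the level-$2$ isolation layer, not that they are absent. Consequently, when you apply Lemma~\ref{lem:ringhomotopy} inside $\ringcl^2(\ringcomp)$ and replace $R_{i,j}$ by $R'_{i,j}$, the rerouted paths may collide with these shallow visitors, and $(P_i')_{i=1}^k$ need not be a solution at all.

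The paper handles this by working not in $\ringcl^2(\ringcomp)$ but in a subgraph $G^\sharp$ obtained by carving away everything cut off by those shallow visitors (so that the $P_\tau$ really are the only intersections of the solution with $G^\sharp$). The price is that the paths $P_\tau^\ast$ returned by Theorem~\ref{thm:words-to-paths} need not lie in $G^\sharp$; one extracts a first crossing subpath $Q_\tau^\ast\subseteq P_\tau^\ast$ that does. The real work---and this is where your use of Lemma~\ref{lem:minsol-spiral} was pointed at the wrong target---is to bound the winding-number discrepancy between $P_\tau^\ast$ and $Q_\tau^\ast$: one counts how many level-$2$ bundle arcs $P_\tau^\ast$ can spend outside $G^\sharp$, and Lemma~\ref{lem:minsol-spiral} (applied to each long spiral $u^r$ of the \emph{minimal} solution inside $G^\sharp$) forces $Q_\tau^\ast$ to absorb all but $O(|\bundleset|^2)$ of the shallow bundle arcs that Theorem~\ref{thm:words-to-paths} allows $P_\tau^\ast$. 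Only then is Lemma~\ref{lem:ringhomotopy} applied, in $G^\sharp$, to $(P_\tau)$ and $(Q_\tau^\ast)$. A second bookkeeping step (at most $|\bundleset|$) accounts for the restriction of $\refcurve^2(\ringcomp)$ to a proper reference curve of $G^\sharp$. Your third paragraph's attempt to bound $|w_{i,j}-w^\ast_{i,j}|$ directly does not repair this: even if it succeeded it would only let you take $P_i'=P_i$, but the length-count you sketch compares against $|\bwholes^\circ_\tau|$, which encodes the \emph{input} winding number $w(\ringcomp)+\alpha_{i,j}$, not the output $w^\ast_\tau$, and Theorem~\ref{thm:words-to-paths} gives no control relating the two.
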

\begin{proof}
Fix $\ringcomp \in \decomp$ for which $I(\ringcomp) \neq \emptyset$.
Our goal is to modify $(P_i)_{i=1}^k$ in the level-$2$ closure of $\ringcomp$
so that the winding numbers of passages of $\ringcomp$, indexed with $\tau \in I(\ringcomp)$,
do not differ from $w^\ast_\tau$ much.

First, we slightly modify the graph $\ringcl^2(\ringcomp)$, so that further topological
arguments become cleaner. For each bundle $B = (b_1,b_2,\ldots,b_s)$ contained in $\ringcl^2(\ringcomp)$
we first subdivide each arc $b_j$ twice, introducing vertices $v_{j,1}$ and $v_{j,2}$, and then,
for each $1 \leq j < s$ and $1 \leq \alpha \leq 2$, we add a vertex $z_{j,\alpha}$ and arcs
$(v_{j,\alpha},z_{j,\alpha})$ and $(v_{j+1,\alpha},z_{j,\alpha})$ inside the face of $G$ between
$b_j$ and $b_{j+1}$. We do it in such a manner that, if the reference curve $\refcurve^2(\ringcomp)$
crosses $B$, it crosses arcs $(v_{j,1},v_{j,2})$ for each $1 \leq j \leq s$, i.e., is contained
between the undirected paths $v_{1,\alpha}z_{1,\alpha}v_{2,\alpha}z_{2,\alpha}\ldots v_{s,\alpha}$, $1 \leq \alpha \leq 2$.
Note that this operation does not change the answer to \probshort{} on any supergraph of $\ringcl^2(\ringcomp)$,
as the added arcs are useless from the point of view of the directed paths (vertices $z_{j,\alpha}$ are sinks).
In the new graph the bundle $B$ consists of arcs $(v_{j,1},v_{j,2})$ for $1 \leq j \leq s$, the
vertices $v_{j,1}$, $z_{j,1}$ belong to the component where the arcs of $B$ originally start,
and the vertices $v_{j,2}$, $z_{j,2}$ belong to the component where the arcs of $B$ originally end.

Now we define the graph $G^\sharp$. We start with the subgraph of $G$ induced by the vertices
of the level-$2$ closure of $\ringcomp$. Then we repeatedly take maximal subpaths
of paths $(P_i)_{i=1}^k$ that go through vertices of $\ringcl^2(\ringcomp)$ and, if such a path
starts and ends on the same side of $\ringcl^2(\ringcomp)$, we remove from $G^\sharp$
all arcs and vertices that lie on the subpath or on 
the different side of the chosen subpath than $\ringcomp$.
As $(P_i)_{i=1}^k$ is a minimal solution, by Theorem \ref{thm:oscillators},
any such path contains only vertices of level $2$, and $G^\sharp$ contains
the subgraph of $G$ induced by the vertices of $\ringcl^1(\ringcomp)$.

We identify two faces $f_1$ and $f_2$ of $G^\sharp$ that contain the outer and inner face of
$\ringcl(\ringcomp)$, respectively. 
We want to choose a subcurve of $\refcurve^2(\ringcomp)$ to be a reference curve in $G^\sharp$.
Let $f_1'$ be the last face crossed by $\refcurve^2(\ringcomp)$, contained in $f_1$,
and let $f_2'$ be the first face crossed by $\refcurve^2(\ringcomp)$.
By the construction of $\refcurve^2(\ringcomp)$ (Lemma \ref{lem:refcurve-choice}),
$f_1'$ appears on $\refcurve^2(\ringcomp)$ earlier than $f_2'$. We choose $\refcurve$ to be subcurve of
$\refcurve^2(\ringcomp)$ between leaving $f_1'$ and entering $f_2'$. Note that $\refcurve$ is a reference curve in $G^\sharp$, as
it travels from the boundary of $f_1$ to the boundary of $f_2$.

In the rest of the proof, we often measure winding numbers of different paths in $G^\sharp$ with respect to either $\refcurve$
or $\refcurve^2(\ringcomp)$; note that, although the latter may not necessarily be a proper
reference curve in $G^\sharp$ (as it may visit $f_1$ and $f_2$ several times), the notion
of winding number is properly defined. However, Lemma \ref{lem:ringhomotopy}
requires us to use a proper reference curve $\refcurve$, for this reason we need to
translate winding numbers between these two curves.

For each $\tau \in I(\ringcomp)$, the path $P_\tau^\ast$ contains a subpath in $G^\sharp$
connecting a vertex on $f_1$ with a vertex on $f_2$ (in one of the directions).
Denote the first such path as $Q_\tau^\ast$. Let $P_\tau^{\ast,1}$ be the subpath of $P_\tau^\ast$
from the start of $P_\tau^\ast$ up to the beginning of $Q_\tau^\ast$ and $P_\tau^{\ast,2}$
be the subpath of $P_\tau^\ast$ from the end of $Q_\tau^\ast$ to the end of $P_\tau^\ast$.

\begin{figure}[h!]
\begin{center}
\includegraphics[width=0.8\textwidth]{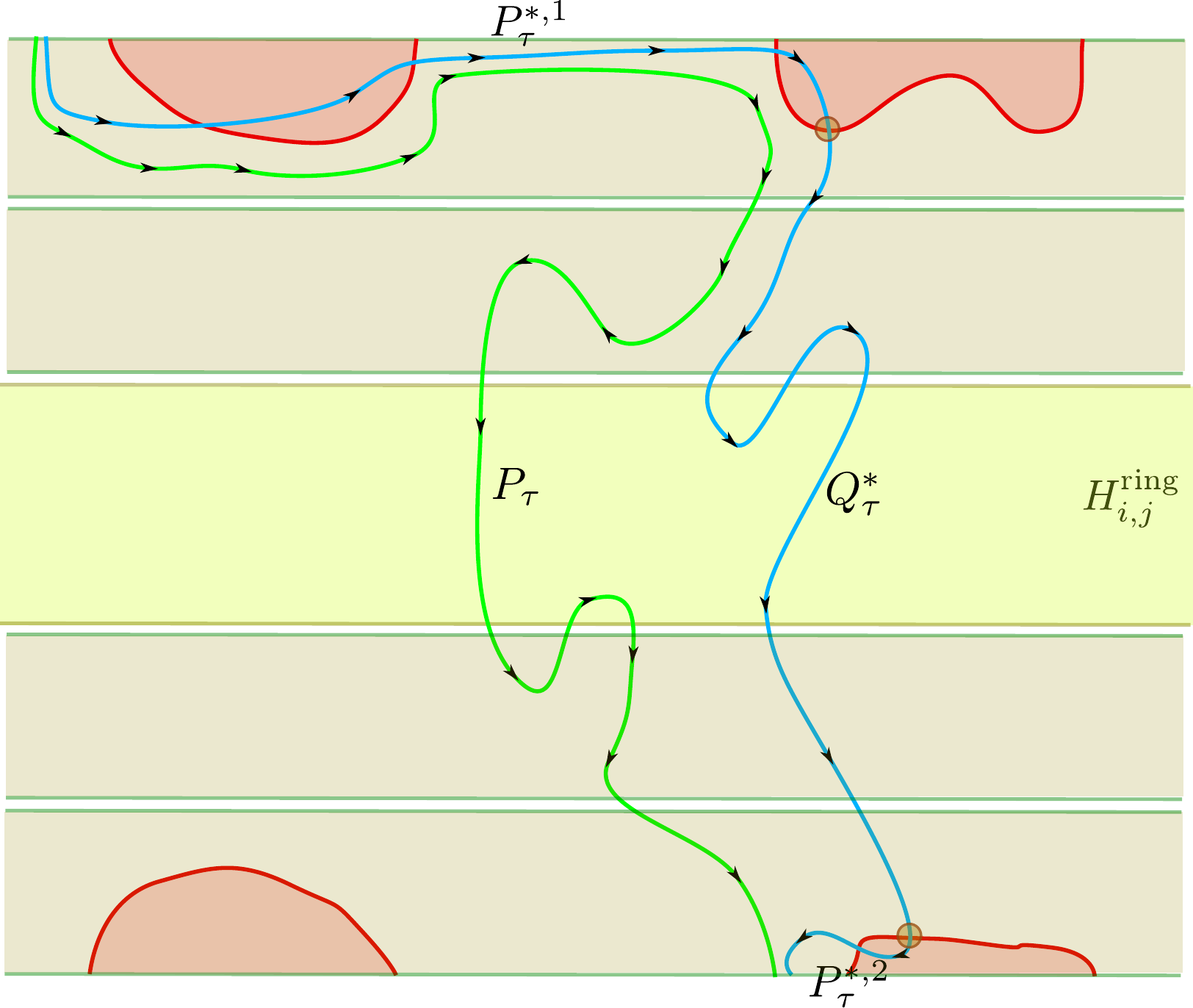}
\caption{Situation in the proof of Lemma~\ref{lem:ring-guessing-equiv2} for one passage index $\tau$. The red parts depict parts of the graph removed while constructing $G^\sharp$, that is, parts separated from the ring component by paths from the minimal solution. The blue path is the path $P^*_\tau$ found by Theorem~\ref{thm:words-to-paths}, while the green path is the passage induced by the minimal solution. The orange circles depict points where $P^*_\tau$ is split into its three parts: $P^{*,1}_\tau$, $Q^{*}_\tau$, and $P^{*,2}_\tau$. The reader may verify that this picture can be realized so that in level $2$, the blue path uses a subset of bundles used by the green path.}
\label{fig:ring-passage2}
\end{center}
\end{figure}

For $\alpha=1,2$, let $\bundleset^\circ_\alpha \subseteq \zbundleset(\ringcomp)$
be the set of bundles consisting of all bundles with at least one endpoint in level-$2$ isolation
component of $\ringcomp$ that lie on the same side of $\ringcomp$ as $f_\alpha$. Moreover, let $\bundleset^\circ = \bundleset^\circ_1 \cup \bundleset^\circ_2$.

For a path $R$ and a set $S \subseteq \bundleset$, by $|\bundleword(R) \cap S|$ we denote the number of appearances of a bundle from $S$
in $\bundleword(R)$. We claim the following.
\begin{claim}\label{clm:tight-budget}
$$|\bundleword(P_\tau^{\ast,1}) \cap \bundleset^\circ| + |\bundleword(P_\tau^{\ast,2}) \cap \bundleset^\circ| \leq 20|\bundleset|^2.$$
\end{claim}
\begin{proof}
Consider two cases.
First, assume that there are two paths $P_{\tau_1}^\ast$ and $P_{\tau_2}^\ast$, $\tau_1,\tau_2 \in I$, that go in
different direction, that is, $P_{\tau_1}^\ast$ starts on $f_1$ and ends on $f_2$
and $P_{\tau_2}^\ast$ starts on $f_2$ and ends on $f_1$. 
Then any path $P_\tau^\ast$ for $\tau \in I(\ringcomp)$ cannot contain two 
arcs of the same bundle, as otherwise its bundle word would contain a spiral
and the existence of both paths $P_{\tau_1}^\ast$ and $P_{\tau_2}^\ast$ would
contradict Lemma \ref{lem:spiral-split}.
Therefore, $P_\tau^\ast$ contains at most $|\bundleset|$ arcs that are bundle arcs
and the claim is proven.

In the other case, all paths $P_\tau^\ast$, $\tau \in \ringcomp$ go in the same
direction; without loss of generality, let us assume that they start in a vertex
on $f_1$ and end in a vertex on $f_2$. Note that, by the choice of $Q_\tau^\ast$,
any path $Q_\tau^\ast$ also starts in a vertex on $f_1$ and ends in a vertex on $f_2$.
Fix $\tau = (i,j) \in I(\ringcomp)$ and let $P_\tau$ be a level-$2$
ring passage of $\ringcomp$
of the path $P_i$ (from the solution $(P_i)_{i=1}^k$)
that corresponds to the bundle word with level-$1$ ring holes $\bwholes_\tau^\circ$.
Let $B \in \bundleset^\circ$ and assume $B$ appears in $\bundleword(P_\tau)$.
Then, as $\bwholes_\tau^\circ$ is a bundle word with level-$1$ ring holes,
$B$ appears in some term $u^r$ of a bundle word decomposition of $\bwholes_\tau^\circ$. 
As $\bwholes_\tau^\circ$ does not contain any level-$0$ bundle, neither does $u$.

%By Corollary \ref{cor:bwholes-minsol}

Assume for a moment that $r > 10$.
Note that, by the definition of the ring $G^\sharp$, the spiraling ring $A$ associated
with the term $u^r$ in the path $P_i$ is contained in $G^\sharp$.
By Lemma \ref{lem:spiral-split}, each path $Q_{\tau'}^\ast$ for $\tau' \in I(\ringcomp)$ traverses $A$. By Lemma \ref{lem:minsol-spiral},
$\bundleword(Q_\tau^\ast)$ needs to contain $u^{r-10}$ as a subword. Moreover,
as the spiraling rings $A$ are disjoint for different terms $u^r$, the subwords $u^{r-10}$ of $\bundleword(Q_\tau^\ast)$
are pairwise disjoint for different terms $u^r$.

By the properties of the solution
$(P_\tau^\ast)_{\tau \in I(\ringcomp)}$ returned by Theorem \ref{thm:words-to-paths},
the number of appearances of $B$ in the bundle word of the path $P_\tau^\ast$
is not greater than the number of appearances of $B$ in $\bwholes_\tau^\circ$.
There are at most $2|\bundleset|$ terms in a bundle word decomposition of $\bwholes_\tau^\circ$,
and for each term $u^r$, a subword $u^{\min(0,r-10)}$ appears in $\bundleword(Q_\tau^\ast)$,
and these subwords are pairwise disjoint for different terms $u^r$.
We infer that $B$ appears at most $2|\bundleset| \cdot 10$ times in
$\bundleword(P_\tau^{\ast,1})$ and $\bundleword(P_\tau^{\ast,2})$ in total.
This finishes the proof of the claim.
\end{proof}

We also need the following observations.
\begin{claim}\label{cl:Rwinding}
For any $\alpha=1,2$ and for any maximal subpath $R$ of $P_\tau^{\ast,\alpha}$ that does not contain any bundles of $\bundleset^\circ_\alpha$
the winding number of $R$ with respect to $\refcurve^2(\ringcomp)$ equals $-1$, $0$ or $+1$.
\end{claim}
\begin{proof}
Recall that the reference curve $\refcurve^2(\ringcomp)$ has properties promised by Lemma \ref{lem:refcurve-choice}.
If $R$ is contained in the level-$2$ isolation
component of $\ringcomp$ that lies on the same side of $\ringcomp$
as $B_{\tau,\alpha}$, then $R$ does not cross the reference curve and its winding number is $0$.

Otherwise, recall that both endpoints of $Q_\tau^\ast$, and thus
$P_\tau^{\ast,\alpha}$ as well, lie outside $\ringcl^1(\ringcomp)$,
since $\ringcl^1(\ringcomp)$ is contained in $G^\sharp$. Therefore
the endpoints of $R$ lie on the inside or outside face of $\ringcl^1(\ringcomp)$, and, as we exclude only bundles from $\bundleset^\circ_\alpha$
for either $\alpha=1$ or $\alpha=2$, they lie on the same side
of $R$.
Obtain a closed curve $\gamma$ from $R$ by
connecting the endpoints of $R$
using parts of arcs that precede and succeed $R$ on $P_\tau^{\ast,\alpha}$
and the arcs connecting the level-$1$ isolation and level-$2$ isolation
component of $\ringcomp$ on the same side as $B_{\tau,\alpha}$
in the dual of $G$, in such a manner
that $\gamma$ does not separate the sides of $\ringcl(\ringcomp)$.
Note that this implies that the winding number of $\gamma$ is $0$,
whereas, by the properties of $\refcurve(\ringcomp)$,
$\gamma \setminus R$ has winding number $+1$, $0$ or $-1$.
The claim follows.
\end{proof}
\begin{claim}\label{cl:curve-change}
Let $P$ be a path in $G^\sharp$ connecting $f_1$ with $f_2$. Then the winding number of $P$ with respect to curve $\refcurve$
differs by at most $|\bundleset|$ from the winding number of $P$ with respect to $\refcurve^2(\ringcomp)$.
\end{claim}
\begin{proof}
Let $f^1, f^2, \ldots, f^s = f_1'$ be faces of $\ringcl^2(\ringcomp)$ crossed by $\refcurve^2(\ringcomp)$, contained in $f_1$,
in the order of their appearance on $\refcurve^2(\ringcomp)$. For $1 \leq j < s$, let $\refcurve^{R,j}$ be the subcurve of $\refcurve^2(\ringcomp)$ between $f^j$ and $f^{j+1}$.

Let $B$ be a bundle crossed by $\refcurve^2(\ringcomp)$. We claim that at most one curve $\refcurve^{R,j}$ may intersect arcs of $B$.
Recall the construction of $G^\sharp$; let $P$ be one of the maximal subpaths of a solution $(P_i)_{i=1}^k$
that goes though the vertices of $\ringcl^2(\ringcomp)$ and has both endpoints on the same side of $\ringcomp$ as the face $f_1$.
Let $B=(b_1,b_2,\ldots,b_s)$. Due to the subdivision of bundles we performed at the beginning of the proof,
any maximal subpath of $P$ that consists of vertices and edges incident to the union of faces between arcs $b_j$, $b_{j+1}$, $1 \leq j < s$,
in fact consists of a single arc $b_\eta$ for some $1 \leq \eta \leq s$.
However, due to Lemma \ref{lem:spiral-split}, $P$ does not traverse $B$ twice, and arcs $b_j$ for $j \geq \eta$ or $j \leq \eta$ are removed
from $G^\sharp$.
Hence, the number of curves $\refcurve^{R,j}$, $s-1$, is not larger than the number of bundles that lie on the same side of $\ringcomp$ as $f_1$.

For each $1 \leq j < s$, close the curve $\refcurve^{R,j}$ inside $f_1$ to obtain a closed curve. The winding number of $P$ with regards
to the closed curve $\refcurve^{R,j}$ is $0$, $+1$ or $-1$. 
By performing the same analysis on the side of $\ringcomp$ that contains $f_2$, the claim follows.
\end{proof}

We may now conclude with the following statement.
\begin{claim}\label{cl:Qwinding}
The winding number of $Q_\tau^\ast$ with respect to $\refcurve$ differs from $w^\ast_\tau$ by at most $40|\bundleset|^2 + |\bundleset|+2$.
\end{claim}
\begin{proof}
By Claim \ref{cl:Rwinding}, we infer that the winding number of $P_\tau^{\ast,1}$, with respect to $\refcurve^2(\ringcomp)$,
is at most $2|\bundleword(P_\tau^{\ast,1}) \cap \bundleset^\circ_1|+1$,
and a similar claim holds for $P_\tau^{\ast,2}$.
Hence, by Claim \ref{clm:tight-budget}, the winding number of $Q_\tau^\ast$ with respect to $\refcurve^2(\ringcomp)$
does not differ from $w^\ast_\tau$ by more than $40|\bundleset|^2+2$.
By Claim \ref{cl:curve-change}, the winding numbers of $Q_\tau^\ast$ with respect to $\refcurve^2(\ringcomp)$
and $\refcurve$ differ by at most $|\bundleset|$.
The claim follows by pipelining the above three bounds.
\end{proof}

Recall that, by the definition of the graph $G^\sharp$, the intersection of the solution
$(P_i)_{i=1}^k$ with $G^\sharp$ is a set of paths $(P_\tau)_{\tau \in I(\ringcomp)}$;
each such path is a level-$2$ ring passage consistent with $\bwholes^\circ_\tau$,
with the first and last bundle removed.
Moreover, note the following.
\begin{claim}\label{cl:PQorder}
The (circular) orders of the starting and ending vertices of $(P_\tau)_{\tau \in \ringcomp}$ on the faces $f_1$ and $f_2$ of $G_R$ is exactly the same as the order of the starting and ending vertices of the paths $(Q_\tau^\ast)_{\tau \in \ringcomp}$.
\end{claim}
\begin{proof}
Recall that for each ring component $\ringcomp$, $B \in \bundleset$ and $1 \leq \alpha \leq 2$,
the order of the arcs $b_{i,j,\alpha}$ for $(i,j) \in I(\ringcomp)$ and
$b_{i,j,\alpha} \in B$ is equal to $\psi_{B,\alpha}(\ringcomp)$.
By the construction of the zoom auxiliary graph $H(\ringcomp, w(\ringcomp))$,
in $\ringcl^2(\ringcomp)$, the orders of the starting and ending vertices of $(P_\tau)_{\tau \in \ringcomp}$
and the paths $(P_\tau^\ast)_{\tau \in \ringcomp}$ are equal.

Consider now the following graph $G^{\sharp/2}$, constructed similarly as $G^\sharp$, but the removing procedure is performed only on the side of $f_2$.
We start with the subgraph of $G$ induced by the vertices
of the level-$2$ closure of $\ringcomp$. Then we repeatedly take maximal subpaths
of paths $(P_i)_{i=1}^k$ that go through vertices of $\ringcl^2(\ringcomp)$ and, if such a path
starts and ends on the same side of $\ringcl^2(\ringcomp)$ {\em{as the face $f_2$}}, we remove from $G^{\sharp/2}$
all arcs and vertices that lie on the subpath or on 
the different side of the chosen subpath than $\ringcomp$.
Note that $G^{\sharp/2}$ is a supergraph of $G^\sharp$ and a subgraph of $\ringcl^2(\ringcomp)$.
One of its faces is $f_2$, and the other face is one of the faces of $\ringcl^2(\ringcomp)$ that is contained in $f_1$; let us denote it $f_1'$.

Note that
for any $\tau \in I(\ringcomp)$, since $Q_\tau^\ast$ is the {\em{first}} subpath of $P_\tau^\ast$
that connects $f_1$ with $f_2$, $P_\tau^{\ast,1} \cup Q_\tau^\ast$ connects $f_1'$ with $f_2$ inside $G^{\sharp/2}$.
Hence, the order of the starting vertices of the paths $P_\tau^\ast$ on $f_1'$ is equal to the order
of the ending vertices of the paths $Q_\tau^\ast$ on $f_2$. On the other hand, if we look at $G^\sharp$,
the order of the starting vertices of the paths $Q_\tau^\ast$ on $f_1$ is equal to the order
of the ending vertices of $Q_\tau^\ast$ on $f_2$. This concludes the proof of the claim. 
\end{proof}

Claim \ref{cl:PQorder} allows us to
apply Lemma \ref{lem:ringhomotopy} for paths $(P_\tau)_{\tau \in I(\ringcomp)}$
and $(Q_\tau^\ast)_{\tau \in I(\ringcomp)}$ in the rooted ring $G_R$ with reference curve $\refcurve$,
obtaining a sequence of vertex-disjoint paths $(P_\tau')_{\tau \in I(\ringcomp)}$,
such that for each $\tau \in I(\ringcomp)$ the path $P_\tau'$ has the same
starting and ending vertex as $P_\tau$, but the winding numbers of $P_\tau'$
and $Q_\tau^\ast$ with respect to $\refcurve$ differ by at most $6$. Let $x_\tau$ be the winding number
of $P_\tau'$ with respect to $\refcurve^2(\ringcomp)$.
\begin{claim}\label{cl:final-wind-diff}
For each $\tau \in I(\ringcomp)$ we have $|x_\tau - w^\ast_\tau| \leq 40|\bundleset|^2 + 2|\bundleset| + 8$.
\end{claim}
\begin{proof}
By Claim \ref{cl:Qwinding}, the winding number of $Q_\tau^\ast$ with respect to $\refcurve$ and $w^\ast_\tau$,  differ by
at most $40|\bundleset|^2+|\bundleset|+2$.
By Lemma \ref{lem:ringhomotopy}, the winding numbers of $Q_\tau^\ast$ and $P_\tau'$ with respect to $\refcurve$ differ by at most $6$.
By Claim \ref{cl:curve-change}, the winding number of $P_\tau'$ with respect to $\refcurve$ and $x_\tau$ differ by at most $|\bundleset|$.
The claim follows by pipelining the above three bounds.
\end{proof}

Recall that the paths $(P_\tau')_{\tau \in I(\ringcomp)}$
are vertex-disjoint, are contained in $G_R$, and $(P_\tau)_{\tau \in I(\ringcomp)}$ are
the only parts of $(P_i)_{i=1}^k$ in $G_R$.
  Thus, if we conduct the same argument for each ring component $\ringcomp$
  with $I(\ringcomp) \neq \emptyset$ and 
  replace in the solution $(P_i)_{i=1}^k$ each subpath $P_{i,j}$ with $P_{i,j}'$
  for $1 \leq i \leq k$, $1 \leq j \leq h(i)$, we obtain another solution $(P_i')_{i=1}^k$
  to \probshort{} on $G$.
Moreover, as we modified only subpaths $P_{i,j}$ for $1 \leq i \leq k$, $1 \leq j \leq h(i)$,
each path $P_i'$ for $1\leq i \leq k$ is consistent with the
  bundle word with level-$2$ ring holes
$((q_{i,j})_{j=0}^{h(i)}, (x_{i,j})_{j=1}^{h(i)})$, as desired.
This completes the proof of the lemma.
\end{proof}

We can now summarize with the following lemma.
\begin{lemma}\label{lem:holes-guessing}
Let $(G,\decomp,\bundleset)$ be a bundled instance of isolation $(\Lambda, d)$
where $\Lambda \geq 3$, $d \geq \max(2k,f(k,k)+4)$, and $f(k,t)=2^{O(kt)}$ is the bound on the type-$t$ bend promised by Lemma \ref{lem:bendbound}.
Assume we are given a sequence $(\pi_i)_{i=1}^k$ of semi-complete partial bundle words.
Then in $O(2^{O(k^2|\bundleset|^2 \log |\bundleset|)} |G|^{O(1)})$ time
one can compute a family of at most 
$2^{O(k^2|\bundleset|^2 \log |\bundleset|)}$ sequences $(\bwholes_i)_{i=1}^k$
of bundle words with level-$2$ ring holes
such that if there exists a minimal solution $(P_i)_{i=1}^k$ to \probshort{}
on $G$ such that $P_i$ is consistent with $\pi_i$ for each $1 \leq i \leq k$,
then there exists a solution $(P_i')_{i=1}^k$ to \probshort{}
on $G$ and generated sequence $(\bwholes_i)_{i=1}^k$ such that $P_i$
  is consistent with $\bwholes_i$ for each $1 \leq i \leq k$.
\end{lemma}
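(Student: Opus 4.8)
The plan is to assemble the auxiliary results of Section~\ref{ss:guessing:windings} into the final branching algorithm. First I would run Lemma~\ref{lem:pbw-to-holes} on each given partial bundle word $\pi_i$ to extract, in polynomial time, the sequence $(p_{i,j})_{j=0}^{h(i)}$ of level-$1$-hole word parts, which carry no level-$0$ bundle. I would immediately discard any instance with $\sum_i h(i) > 4|\bundleset|^2k^2$, which is safe because by Lemma~\ref{lem:holes-vs-sol} a minimal solution consistent with $(\pi_i)$ cannot exceed that bound. Then, purely from the $p_{i,j}$'s and the combinatorics of $(G,\decomp,\bundleset)$ (never referring to an unknown solution), I would compute the bundles $B_{i,j,\alpha}$ and $B^\circ_{i,j,\alpha}$, the ring components $\ringcomp_{i,j}$, the words $p^\circ_{i,j,\alpha}$ and $q_{i,j}$, and the index sets $I(\ringcomp)$; whenever the structural requirements of Observations~\ref{obs:bij-props}, \ref{obs:I-bound}, \ref{obs:extend-passages}, \ref{obs:levelcirc} fail I would drop the branch, again justified because a minimal solution consistent with $(\pi_i)$ satisfies all of them.

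Next I would perform the three nested branchings described in the section. First, guess the small offsets $(\alpha_{i,j})_{i,j}$ with $-1\le\alpha_{i,j}\le 1$ (at most $3^{4|\bundleset|^2k^2}$ choices); by Observations~\ref{obs:wij-diff} and~\ref{obs:good-alpha} the sought solution is consistent with one of these branches. Second, for each ring component $\ringcomp$ with $I(\ringcomp)\neq\emptyset$, use Lemma~\ref{lem:psi-bound} to guess the arc-order permutations $\psi_{B,\alpha}(\ringcomp)$; over all $\ringcomp$ this is at most $((4|\bundleset|^2k^2)!)^2 = 2^{O(k^2|\bundleset|^2\log|\bundleset|)}$ subcases, using $2k\le|\bundleset|+1$. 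Fixing such a subcase, for each such $\ringcomp$ and each $-n\le w\le n$ I would build the zoom auxiliary instance $(H(\ringcomp,w),\zdecomp_H(\ringcomp,w),\bundleset_H(\ringcomp,w))$ on the level-$2$ closure of $\ringcomp$ with level-$2$ zoom passes $\bwholes^\circ_{i,j}=((p^\circ_{i,j,1},p^\circ_{i,j,2}),\,w+\alpha_{i,j})$ for $(i,j)\in I(\ringcomp)$, and run Theorem~\ref{thm:words-to-paths}, taking $w(\ringcomp)$ to be the least $w$ for which it succeeds (discarding the branch if none exists, which is safe by Lemma~\ref{lem:ring-guessing-equiv1}), and record the winding numbers $w^\ast_{i,j}$ of the returned paths in the level-$2$ closure of $\ringcomp_{i,j}$. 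Third, guess integers $x_{i,j}$ with $|x_{i,j}-w^\ast_{i,j}|\le 40|\bundleset|^2+2|\bundleset|+8$ (at most $(O(|\bundleset|^2))^{4|\bundleset|^2k^2}=2^{O(k^2|\bundleset|^2\log|\bundleset|)}$ choices), and for each choice output $\bwholes_i := ((q_{i,j})_{j=0}^{h(i)},(x_{i,j})_{j=1}^{h(i)})$, keeping it only if it is a genuine bundle word with level-$2$ ring holes (a polynomial-time check). Multiplying the branching factors gives $2^{O(k^2|\bundleset|^2\log|\bundleset|)}$ output sequences, computed within the stated time since every subcase costs polynomial time plus $O(n)$ calls to Theorem~\ref{thm:words-to-paths}.

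For correctness I would take a minimal solution $(P_i)_{i=1}^k$ with each $P_i$ consistent with $\pi_i$, show it lives in a branch surviving every discard rule — the right $\alpha_{i,j}$ exist by Observation~\ref{obs:good-alpha}, and the true arc orders give a valid permutation guess — then invoke Lemma~\ref{lem:ring-guessing-equiv1} to certify that each $w(\ringcomp)$ is defined in that branch, and Lemma~\ref{lem:ring-guessing-equiv2} to obtain a (possibly non-minimal) rerouted solution $(P_i')_{i=1}^k$ together with winding numbers $x_{i,j}$ within $40|\bundleset|^2+2|\bundleset|+8$ of $w^\ast_{i,j}$ such that $P_i'$ is consistent with $((q_{i,j})_{j=0}^{h(i)},(x_{i,j})_{j=1}^{h(i)})$; since those $x_{i,j}$ fall in the guessing range, the corresponding $\bwholes_i$'s are generated and, being consistent with $P_i'$, pass the final validity check. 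The main obstacle is not any new geometric argument — Lemmas~\ref{lem:ring-guessing-equiv1} and~\ref{lem:ring-guessing-equiv2} carry the weight — but the careful bookkeeping that in the branch hosting the minimal solution every hypothesis of those two lemmas is met (correct permutation guesses, the uniform winding offset $w(\ringcomp)+\alpha_{i,j}$ coming from Observation~\ref{obs:wij-diff}) and no discard rule fires, together with checking that the emitted pair $((q_{i,j}),(x_{i,j}))$ genuinely meets the defining conditions of a level-$2$ bundle word with ring holes.
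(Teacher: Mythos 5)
Your proposal is correct and follows essentially the same route as the paper's proof: the paper likewise counts the $3^{4|\bundleset|^2k^2}$ choices of $(\alpha_{i,j})$, the $((4|\bundleset|^2k^2)!)^2$ permutation guesses via Lemma~\ref{lem:psi-bound}, and the $80|\bundleset|^2+4|\bundleset|+17$ candidates per $x_{i,j}$, and then delegates correctness entirely to Lemmata~\ref{lem:ring-guessing-equiv1} and~\ref{lem:ring-guessing-equiv2}. Your additional bookkeeping (discard rules from the Observations, the final validity check on the emitted $\bwholes_i$) is consistent with, and slightly more explicit than, what the paper writes.
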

\begin{proof}
First recall that we branch into at most $3^{4|\bundleset|^2 k^2}$ subcases
guessing the values $\alpha_{i,j}$ and further 
at most
$$\left(\left(\sum_{i=1}^k h(i)\right)!\right)^2 \leq ((4|\bundleset|^2k^2)!)^2 = 2^{O(|\bundleset|^2 k^2 \log |\bundleset|)}$$
subcases when we choose permutations $(\psi_{B,\alpha}(\ringcomp))_{B \in \bundleset, 1 \leq \alpha \leq 2}$ for each $\ringcomp \in \decomp$ with $I(\ringcomp) \neq \emptyset$. 
Finally, once we compute integers $(w_{i,j})_{1 \leq i \leq k, 1 \leq j \leq h(i)}$,
  we may guess the values $(x_{i,j})_{1 \leq i \leq k, 1 \leq j \leq h(i)}$;
by Lemma \ref{lem:ring-guessing-equiv2} for each $1 \leq i \leq k$, $1 \leq j \leq h(i)$
there are $80|\bundleset|^2+4|\bundleset|+17$ possible values for $x_{i,j}$.
As $\sum_{i=1}^k h(i) \leq 4|\bundleset|^2 k^2$, we end up with 
the promised running time and number of subcases.
Correctness follows from Lemmata \ref{lem:ring-guessing-equiv1} and \ref{lem:ring-guessing-equiv2}.
\end{proof}

By pipelining Lemma \ref{lem:word-guessing} with Lemma \ref{lem:holes-guessing}
we finish the proof of Theorem \ref{thm:word-guessing}.
Note that the output bundle words with level-$2$ ring holes does not contain
any bundle of level $0$ due to the application of Lemma \ref{lem:pbw-to-holes}.

\section{Summary: proof of the main result}

We are now ready to summarize
the results of the previous sections by formally proving Theorem \ref{th:main}.

\begin{proof}[Proof of Theorem \ref{th:main}]
Given an instance $G$ of \probshort{} with $k$ terminal pairs,
we first apply the Decomposition Theorem (Theorem \ref{th:finddecomp})
on $G$ with constants $\Lambda = 3$, $d = \max(2k,f(k,k)+4)$
and $r=d(k)+1$, where $f(k,t) = 2^{O(kt)}$ is the bound on the type-$t$ bend
of Lemma \ref{lem:bendbound} and $d(k) = 2^{O(k^2)}$ is the bound on the number
of concentric cycles of Theorem \ref{th:irrelevant}.
If Theorem \ref{th:finddecomp} returns a set of $r$ concentric cycles,
we delete any vertex of the innermost cycle and restart the algorithm.
The correctness follows from Theorem \ref{th:irrelevant},
and the algorithm is restarted at most $|V(G)|$ times.

Otherwise, the algorithm of Theorem \ref{th:finddecomp},
 in time $O(2^{O(\Lambda(d+r)k^2)} |G|^{O(1)}) = O(2^{2^{O(k^2)}} |G|^{O(1)})$
 returns a set of $2^{O(\Lambda(d+r)k^2)} = 2^{2^{O(k^2)}}$
pairs $(G_i, \decomp_i)$; by Theorem \ref{th:finddecomp}
it suffices to check if any graph $G_i$ is a YES-instance to \probshort{}.
Thus, from this point we investigate one graph $G_i$ with decomposition
$\decomp_i$. Note that $|\decomp_i| = O(k^2)$ and the alternation
of $\decomp_i$ is $O(\Lambda(d+r)k^2) = 2^{O(k^2)}$.

We first apply the bundle recognition algorithm of Lemma \ref{lem:bundle-recognition}
to obtain a bundled instance $(G_i,\decomp_i,\bundleset_i)$ with
$|\bundleset_i| = 2^{O(k^2)}$.
Then we apply the algorithm of Theorem \ref{thm:word-guessing};
note that the values of $d$ and $\Lambda$ are large enough to allow this step.
We obtain a family $\mathcal{F}_i$ of sequences $(\bwholes_i)_{i=1}^k$
of bundle words with level-$2$ ring holes with a promise that,
if $G_i$ is a YES-instance to \probshort{}, then there exists a solution
consistent with one of the sequences.
As $|\bundleset_i| = 2^{O(k^2)}$, the size of the family
$\mathcal{F}_i$ is bounded by $2^{2^{O(k^2)}}$ and the running
time of the algorithm of Theorem \ref{thm:word-guessing}
is bounded by $O(2^{2^{O(k^2)}} |G|^{O(1)})$.

Moreover, Theorem \ref{thm:word-guessing} promises us that the bundle
words with level-$2$ ring holes of $\mathcal{F}_i$ do not contain any level-$0$ ring bundles.
Thus, we may apply the algorithm of Theorem \ref{thm:words-to-paths}
to each element of $\mathcal{F}_i$. 
If it outputs a solution for some sequence $(\bwholes_i)_{i=1}^k$,
we know that $G_i$ is a YES-instance to \probshort{}.
Otherwise, by Theorem \ref{thm:words-to-paths}, there does not exist
a solution to \probshort{} on $G_i$ consistent with an element of $\mathcal{F}_i$,
and, by Theorem \ref{thm:word-guessing}, $G_i$ is a NO-instance.
This finishes the proof of Theorem \ref{th:main}.
\end{proof}

\LetLtxMacro{\section}{\oldsection}

\bibliographystyle{abbrv}
\bibliography{dirplanarkpath}

\immediate\closeout\tempfile

\clearpage
\end{document}